\documentclass[10pt,a4paper]{article}
\pdfoutput=1
\usepackage[lmargin=1.2in,rmargin=1.2in,tmargin=1.3in,bmargin=1.9in]{geometry}
\usepackage{amsthm,amsfonts,amssymb,amsmath,comment,slashed,mathtools,bbold}
\usepackage[T1]{fontenc} %
\usepackage[utf8]{inputenc}
\usepackage{bbm}
\usepackage[affil-it]{authblk} %
\usepackage{hyperref}    
\hypersetup{hidelinks}
\usepackage{graphicx} 
\usepackage{subcaption}
\usepackage{thm-restate} %
\usepackage{xargs}          
\usepackage{here}
\usepackage{mathrsfs}
\usepackage{xcolor}  %
\usepackage{here} %
\usepackage[capitalize]{cleveref} %
\usepackage{tikz-cd} %
\usepackage{color}
\usepackage{transparent}
\usepackage{bm}
\usepackage[backend=biber,bibencoding=ascii,giveninits=true,doi=false,isbn=false,url=false,sorting=nyt,maxbibnames=99]{biblatex} 
\renewbibmacro{in:}{} %
\bibliography{kerr_ads_interior.bib}
\pdfsuppresswarningpagegroup=1 %

\hyphenation{Schr\"o-dinger Schwarz-schild Reiss-ner Nord-str\"om}

\theoremstyle{plain} 
\newtheorem{definition}{Definition} 
\newtheorem{prop}{Proposition}

\newtheorem{lemma}{Lemma}
\newtheorem{cor}{Corollary}
\newtheorem{rmk}{Remark}

\newtheorem{conjecture}{Conjecture}

\newtheorem{thmrough}{Theorem}
\declaretheorem[name=Theorem]{theorem}
\renewcommand{\d}{\mathrm{d}}

\newcommand{\Hp}{\mathcal{H}}

\newcommand{\Ch}{\mathcal{CH}}

\newcommand{\dvol}{\mathrm{dvol}}

\newcommand{\uhl}{{u_{\mathcal{H}_L}}}
\newcommand{\uhr}{{u_{\mathcal{H}_R}}}
\newcommand{\uchl}{{u_{\mathcal{CH}_L}}}
\newcommand{\uchr}{{u_{\mathcal{CH}_R}}}

\newcommand{\uhplus}{	u_{\mathcal{H}^+} }
\newcommand{\uhminus}{	u_{\mathcal{H}^-} }
\newcommand{\Ai}{{\mathrm{Ai}}}
\newcommand{\Bi}{{\mathrm{Bi}}}
\newcommand{\ecut}{{\epsilon_{\textup{cut}}} }
\newcommand{\emphb}[1]{\textbf{\emph{#1}}}

\makeatletter
\renewcommand{\paragraph}[1]{%
	\par %
	\addvspace{\medskipamount}%
	\textbf{\textit{#1\@addpunct{.}}}\enspace\ignorespaces
}
\makeatother
\numberwithin{equation}{section}
\numberwithin{prop}{section}
\numberwithin{rmk}{section}
\numberwithin{lemma}{section}
\numberwithin{definition}{section}
%\numberwithin{cor}{section}

%\input{automatic_penrose}

\setcounter{tocdepth}{2}
\title{Diophantine approximation as  Cosmic Censor \\
	for Kerr--AdS black holes}
\author[1,2]{Christoph Kehle\thanks{christoph.kehle@eth-its.ethz.ch}}
\affil[1]{\small  Department of Pure Mathematics and Mathematical Statistics, University~of~Cambridge,~Wilberforce~Road,~Cambridge CB3 0WB, United Kingdom \vskip.1pc \ }
\affil[2]{Institute for Theoretical Studies, ETH Zürich, Clausiusstrasse 47, 8092 Zürich, Switzerland \vskip.1pc\ }

\date{June 30, 2021}
\begin{document}
	
\maketitle
\thispagestyle{empty}
\begin{abstract}
The purpose of this paper is to show an unexpected connection between Diophantine approximation and the behavior of waves on black hole interiors with negative cosmological constant $\Lambda<0$ and explore the consequences of this for the Strong Cosmic Censorship conjecture in general relativity.

We study linear scalar perturbations $\psi$ of Kerr--AdS solving $\Box_g\psi-\frac{2}{3}\Lambda \psi=0$ with reflecting boundary conditions imposed at infinity. Understanding the behavior of $\psi$ at the Cauchy horizon corresponds to a linear analog of the problem of Strong Cosmic Censorship. Our main result shows that if the dimensionless black hole parameters mass $\mathfrak m = M \sqrt{-\Lambda}$ and angular momentum $\mathfrak a = a \sqrt{-\Lambda}$ satisfy a certain non-Diophantine condition, then  perturbations $\psi$ arising from generic smooth initial data blow up $|\psi|\to+\infty$ at the Cauchy horizon. The proof crucially relies on a novel resonance phenomenon between stable trapping on the black hole exterior and the poles of the interior scattering operator that gives rise to a small divisors problem. 
Our result is in stark contrast to the result on Reissner--Nordstr\"om--AdS \cite{CKehle2019} as well as to previous work on the analogous problem for $\Lambda \geq 0$---in both cases such linear scalar perturbations were shown to remain bounded.

As a result of the non-Diophantine condition, the set of parameters $\mathfrak m, \mathfrak a$ for which we show blow-up forms a Baire-generic but Lebesgue-exceptional subset of all parameters below the Hawking--Reall bound. On the other hand, we conjecture that for a set of parameters $\mathfrak m, \mathfrak a $ which is Baire-exceptional but Lebesgue-generic, all linear scalar perturbations remain bounded at the Cauchy horizon $|\psi|\leq C$. This suggests that the validity of the $C^0$-formulation of Strong Cosmic Censorship for $\Lambda <0$ may change in a spectacular way according to the notion of genericity imposed.
\end{abstract}
\newpage
\thispagestyle{empty}
\tableofcontents
\thispagestyle{empty}
\newpage
% !TeX spellcheck = en_US
\section{Introduction}
The Kerr--Anti-de~Sitter (Kerr--AdS) black hole spacetimes $(\mathcal M, g)$ constitute a 2-parameter family of solutions to the celebrated Einstein equations 
\begin{align}
\mathrm{Ric}_{\mu\nu}(g) - \frac 12\mathrm{R} g_{\mu\nu} +\Lambda g_{\mu\nu} = 8 \pi T_{\mu\nu}
\label{eq:Einstein}
\end{align} 
in vacuum $(T_{\mu\nu} =0)$ and with negative cosmological constant $\Lambda <0$.
The family (see already \eqref{eq:kerradsmetric} for the metric) is parameterized by the black hole mass $M>0$, and specific angular momentum $a\neq 0$.
The Kerr--AdS black holes posses a smooth Cauchy horizon beyond which the spacetime has infinitely many smooth extensions---thus violating determinism. Regular Cauchy horizons are thought, however, to be generically unstable, which is the content of the Strong Cosmic Censorship conjecture due to Roger Penrose \cite{penrose1974gravitational}. Its strongest formulation, the $C^0$-formulation~\cite{chr1} (see already \cref{conj:c0}), states that for generic initial data for \eqref{eq:Einstein},  the metric cannot be continuously extended beyond a Cauchy horizon,  in this sense saving determinism within classical general relativity.
Unfortunately, for $\Lambda =0$ and $\Lambda >0$, this formulation was disproved by Dafermos--Luk \cite{dafermos2017interior}. However, a weaker formulation put forward by Christodoulou is still expected to be true  (see already \cref{conj:2}). Refer to \cref{sec:intscc} for a more detailed discussion. For $\Lambda <0$, the question of the validity of the $C^0$-formulation of Strong Cosmic Censorship has until today remained open. 
%is still open.

Motivated by the above, we study linear scalar perturbations $\psi$ of subextremal Kerr--AdS black holes solving the conformal scalar wave equation
\begin{align}
\Box_{g}\psi - \frac{2}{3}  \Lambda \psi =0 \label{eq:wavekerr}
\end{align}
which arise from smooth and compactly supported initial data posed on a spacelike hypersurface and which satisfy reflecting boundary conditions at infinity. 
We further assume that the black hole parameters satisfy the Hawking--Reall bound \cite{hawking_reall}, see already \eqref{eq:hawkingreallbound}. One can view \eqref{eq:wavekerr} as a linear scalar analog of \eqref{eq:Einstein}, and so the linear scalar analog of the $C^0$-formulation of Strong Cosmic Censorship  is the statement that for \underline{generic} black hole parameters, linear scalar perturbations  $\psi$, arising from \underline{generic} initial data for \eqref{eq:wavekerr}, \emph{fail} to  be continuous  at the Cauchy horizon (see already \cref{conj:linearana}).

Our main result \cref{thm:rough} shows that if the dimensionless Kerr--AdS parameters mass $\mathfrak m = M \sqrt{-\Lambda}$ and angular momentum $\mathfrak a = a \sqrt{-\Lambda}$   satisfy a certain \emphb{non-Diophantine condition}, then linear scalar perturbations  $\psi$ solving  \eqref{eq:wavekerr} and arising from generic initial data \emphb{blow up}
\begin{align}
|\psi| \to +\infty
\label{eq:blow-up}
\end{align}   
at the Cauchy horizon. We show that the set of such parameters is \emphb{Baire-generic} (but \emphb{Lebesgue-exceptional}). 
 
Hence,  our main result provides an---unexpected---positive resolution of  the linear scalar analog of the $C^0$-formulation of the Strong Cosmic Censorship conjecture for $\Lambda<0$, provided that the genericity of the set of parameters is taken in the Baire-generic sense.
 
\cref{thm:rough} is in sharp contrast to the result  on   Reissner--Nordström--AdS black holes \cite{CKehle2019} and to previous work on Strong Cosmic Censorship for $\Lambda \geq 0$---in both cases such perturbations  $\psi$ were shown to remain bounded and to extend continuously across the Cauchy horizon. %for all subextremal black hole parameters. 

The instability result \eqref{eq:blow-up} of \cref{thm:rough} is not associated to superradiance (since the parameters satisfy the Hawking--Reall bound) and, more surprisingly, is also not a consequence of the well-known blue-shift instability \cite{Penrose:1968ar} at the Cauchy horizon.  Instead, \cref{thm:rough} is a manifestation of the occurrence of \emphb{small divisors} originating from a new resonance phenomenon between, on the one hand, high frequencies associated to \emphb{stable trapping} on the exterior  \cite{gustav,quasimodes} and, on the other hand, the \emphb{poles} of the interior scattering operator which are characteristic frequencies with respect to the Killing generator of the Cauchy horizon  \cite{kehle2018scattering}. For this, it is fundamental that Kerr--AdS is rotating, as it is only in this case that stably trapped high frequency waves can, at the same time, be characteristic frequency waves with respect to the Killing generator of the Cauchy horizon. If now $\mathfrak m$, $\mathfrak a$ satisfy the non-Diophantine condition, then the resonance will be sufficiently strong (and the occurring divisors will be sufficiently small) so as to cause the instability \eqref{eq:blow-up}. 
 
\emph{Thus, in the case $\Lambda <0$, surprisingly, Diophantine approximation may turn out to be the elusive ``Cosmic Censor'' which Penrose was searching for in order to protect determinism in general relativity \cite{penrose1974gravitational}.}
 
The story, however, has yet another level of complexity. We also conjecture that, if the dimensionless black hole parameters  $\mathfrak m= M \sqrt{-\Lambda}$ and $\mathfrak a = a \sqrt{-\Lambda}$ satisfy a Diophantine condition, then linear scalar perturbations $\psi$ remain \emphb{bounded} $|\psi|\leq C$ at the Cauchy horizon. This would then hold for \emphb{Lebesgue-generic} but \emphb{Baire-exceptional} black hole parameters. If true, this would provide a negative resolution of the linear scalar analog of the $C^0$-formulation of Strong Cosmic Censorship provided that genericity of the parameters is now taken in the Lebesgue-generic sense. 
  
Returning to the fully nonlinear $C^0$-formulation of Strong Cosmic Censorship, the black hole parameters are themselves dynamic in evolution under \eqref{eq:Einstein}. Thus, the above competing notions of genericity \emph{for the parameters} may now be reflected in different formulations of the genericity assumption \emph{imposed on initial data}  in the statement of the conjecture. This could mean that the validity of Strong Cosmic Censorship is not only sensitive to the regularity of the extension but may also become highly sensitive to the precise notion of genericity imposed on the initial data.

\subsubsection*{Outline of the introduction}
We begin in \cref{sec:intscc} with a presentation of the  $C^0$-formulation (\cref{conj:c0}) and Christo\-doulou's reformulation (\cref{conj:2})  of the Strong Cosmic Censorship conjecture. We also introduce  their respective linear scalar analogs \cref{conj:linearana} and \cref{conj:linearanachristo} and review the relevant previous work and difficulties for $\Lambda \geq 0$ and $\Lambda <0$.
Then, turning to the Kerr--AdS case ($\Lambda <0)$, we will first outline  in \cref{sec:introex} the behavior of linear scalar perturbations on the black hole exterior before we focus on the interior in \cref{sec:introin}, see \cref{fig:intextprop}.
In \cref{sec:introdioph} we put both insights together and we will see, at least on a heuristic level, how small divisors and Diophantine approximation arise. This will lead to a new expectation that transcends \cref{conj:linearana} and \cref{conj:linearanachristo} and which we formulate in  \cref{sec:introconjectures} as \cref{con:2a} and \cref{con:2b}. In \cref{sec:intromainthm} we state our main result \cref{thm:rough}, which resolves \cref{con:2a} in the affirmative.  Then, in \cref{sec:outlookonres} we give an outlook on  \cref{con:2b}. In \cref{sec:breifdesc} we  describe how we turn our heuristics of \cref{sec:introdioph} into a proof of \cref{thm:rough}.  Finally, we give a brief outline of the paper in \cref{sec:outlineofc3}.
\begin{figure}[!h]
	\centering
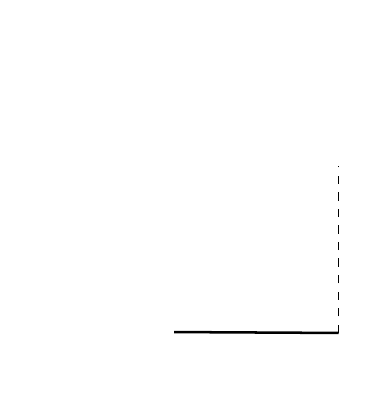
	\caption{(I): Exterior propagation, 
		(II): Interior propagation}
	\label{fig:intextprop}
\end{figure}

\subsection{Strong Cosmic Censorship: Conjectures 1--4}
\label{sec:intscc}
Recall from our previous discussion that our main motivation for studying linear perturbations on black hole interiors is to shed light on one of the most fundamental problems in general   relativity: the existence of smooth Cauchy horizons.

In general, a  Cauchy horizon $\mathcal{CH}$ defines the boundary beyond which initial data on a spacelike hypersurface (together with boundary conditions at infinity in the asymptotically AdS case)  no longer uniquely determine the spacetime as a solution of the Einstein equations  \eqref{eq:Einstein}. 
 The Kerr(--de~Sitter or --Anti-de~Sitter) black holes share the property that they indeed posses a smooth Cauchy horizon $\mathcal{CH}$ in their interiors.
In particular, these spacetimes admit infinitely many smooth extensions beyond their Cauchy horizons solving \eqref{eq:Einstein}, and in this sense violating determinism and the predictability of the theory. From a PDE point of view, this corresponds to a lack of global uniqueness for \eqref{eq:Einstein}. However, the existence of regular Cauchy horizons is conjectured to be an artifact of the high degree of symmetry in those explicit spacetimes and generically  it is expected that some sort of singularity ought to form at or before a Cauchy horizon. The original mechanism which was invoked to support this expectation is a blue-shift instability associated to Cauchy horizons \cite{Penrose:1968ar}. The emergence of such a  singularity at or before a Cauchy horizon is paradoxically ``good'' because---if sufficiently strong---it can be argued that this restores determinism, as the fate of any observer approaching the singularity, though bleak, is uniquely determined.  Making this precise gives rise to various formulations of what is known as the \emph{Strong Cosmic Censorship} (SCC) conjecture \cite{penrose1974gravitational,christo}.

We begin with the $C^0$-formulation of the SCC conjecture which can be seen as the strongest, most desirable, inextendibility statement in this context. 
\begin{conjecture}[$C^0$-formulation of Strong Cosmic Censorship]
	\label{conj:c0}
	For generic compact, asymptotically flat or asymptotically Anti-de Sitter vacuum initial data, the maximal Cauchy development of \eqref{eq:Einstein} is inextendible as a Lorentzian manifold with $C^0$ metric.
\end{conjecture}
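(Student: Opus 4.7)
The plan for attacking \cref{conj:c0} is necessarily staged and, for the negatively curved case which motivates the present paper, highly speculative: this is a celebrated open problem rather than a technical lemma with a one-step argument. A first observation is that one cannot hope to prove \cref{conj:c0} as stated for all signs of $\Lambda$, since for $\Lambda=0$ and $\Lambda>0$ the work of Dafermos--Luk on the $C^0$-stability of the Kerr (and Kerr--de~Sitter) Cauchy horizon already refutes it. The meaningful target is therefore the asymptotically Anti-de~Sitter case $\Lambda<0$, and the natural strategy is to reduce the nonlinear statement to a precise understanding of linear perturbations on subextremal Kerr--AdS interiors and then to promote this to a nonlinear result.

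Concretely, the first step of the plan is to identify a robust \emph{linear instability mechanism} at the Cauchy horizon. The present paper is designed to supply exactly this input for a Baire-generic set of subextremal Kerr--AdS parameters: for $(\mathfrak m,\mathfrak a)$ satisfying the non-Diophantine condition, the resonance between stably trapped modes on the exterior and the poles of the interior scattering operator generates small divisors that drive the linear scalar blow-up. The second step is to upgrade from the conformal scalar wave \eqref{eq:wavekerr} to the full linearized Einstein system on Kerr--AdS, identifying gauge-invariant Teukolsky-type curvature quantities and showing that the same stably trapped frequencies force them to blow up at $\mathcal{CH}_L\cup\mathcal{CH}_R$. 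The third step is to close the nonlinear problem by coupling back: set up a characteristic initial value formulation in the interior emanating from data along $\mathcal{H}_L\cup\mathcal{H}_R$ produced by a scattering analysis of the exterior, propagate weighted energies that already incorporate the linear amplification, and argue that the nonlinearities cannot cancel it. A Baire-generic set of parameters should, after this coupling, pull back to a Baire-generic set of admissible initial data for which no $C^0$ extension of the metric exists.

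The main obstacle, by an enormous margin, is this \emph{last} step. It presupposes a satisfactory nonlinear stability (or at least scattering) theory for the Kerr--AdS exterior, which is not available: stable trapping together with reflecting boundary conditions at $\mathcal{I}$ is conjectured to produce at best inverse-logarithmic decay and possibly a genuine nonlinear instability of Kerr--AdS itself, and either scenario badly obstructs the clean characteristic data on $\mathcal{H}_L\cup\mathcal{H}_R$ that the interior argument requires as input. A second, more conceptual, obstacle is that \cref{thm:rough} isolates a Baire-generic but Lebesgue-null set of parameters, whereas \cref{con:2b} suggests the complementary Lebesgue-generic regime behaves oppositely; the nonlinear conjecture must accordingly be refined by specifying the notion of genericity imposed on the initial data, and the sharp dependence of the answer on this choice is perhaps the most striking phenomenon the present work brings to light. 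For these reasons we deliberately restrict attention here to the linear scalar analog, and \cref{conj:c0} itself should be regarded as a long-term goal to which the results of this paper contribute only the essential first ingredient.
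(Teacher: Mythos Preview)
The statement is \cref{conj:c0}, a \emph{conjecture}, and the paper does not prove it; there is no ``paper's own proof'' to compare against. Your proposal correctly recognizes this and offers not a proof but a programmatic discussion of how one might approach the $\Lambda<0$ case, together with an honest inventory of obstacles. That is the right posture.

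Your discussion is broadly consistent with the paper's own framing in the introduction: the paper likewise emphasizes that \cref{conj:c0} is already disproved for $\Lambda\geq 0$ by Dafermos--Luk, that the present contribution is the linear scalar analog (\cref{thm:rough} resolving \cref{con:2a}) for a Baire-generic set of Kerr--AdS parameters, that \cref{con:2b} predicts the opposite behavior on a Lebesgue-generic set, and that the possible nonlinear instability of the Kerr--AdS exterior could undermine the relevance of the linear analysis altogether. One point worth sharpening: you describe the third step as arguing that nonlinearities ``cannot cancel'' the linear amplification, but the paper is more cautious still---it explicitly flags (end of \S1.1) that if the exterior is nonlinearly unstable then the linear interior analysis may simply fail to model \eqref{eq:Einstein}, so even the direction of the implication is unclear. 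Otherwise your summary of the state of affairs is accurate, and there is no mathematical gap to identify because no proof is being claimed.
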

This formulation is related to the statement that observers  are torn apart by infinite tidal deformations before they have the chance to cross a Cauchy horizon \cite{ori91,dafermos2017interior}.

Surprisingly, the $C^0$-formulation (\cref{conj:c0}) was recently proved to be false for both cases $\Lambda =0$ and $\Lambda>0$ \cite{dafermos2017interior} (see discussion later). The reason is that it turns out that the blue-shift instability is not sufficiently strong to destroy the metric itself, only derivatives of the metric. However, the following weaker, yet still well-motivated, formulation introduced by Christodoulou in \cite{christo} is still expected to hold true (though for the $\Lambda >0$ case see the discussion later). 
\begin{conjecture}[Christodoulou's reformulation of Strong Cosmic Censorship]\label{conj:2}
	For generic compact, asymptotically flat or asymptotically Anti-de Sitter vacuum initial data, the maximal Cauchy development of \eqref{eq:Einstein} is inextendible as a Lorentzian manifold with $C^0$ metric and locally square integrable Christoffel symbols.
\end{conjecture}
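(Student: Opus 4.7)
The plan is to attack Conjecture~\ref{conj:2} in the Kerr--AdS setting ($\Lambda<0$) by first establishing a sharp linear scalar analog for \eqref{eq:wavekerr} and then transferring the linear mechanism to the fully nonlinear Einstein vacuum system via a bootstrap / stability argument in the spirit of Dafermos--Luk. The key observation is that, for Christodoulou's formulation, one does not need pointwise blow-up of the perturbation; it suffices to rule out the $H^1_{\mathrm{loc}}$-regularity of the relevant fields across $\mathcal{CH}$, which is a significantly weaker conclusion and is exactly what one expects from the blue-shift heuristic. The two ingredients are therefore (a) a linear-level energy blow-up statement for $\psi$ at $\mathcal{CH}_L\cup\mathcal{CH}_R$ for generic initial data, and (b) a nonlinear upgrade showing that the same mechanism forces $\Gamma\notin L^2_{\mathrm{loc}}$ for a generic set of compact Cauchy data.

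For step~(a), I would follow the two-stage exterior/interior decomposition suggested by Figure~\ref{fig:intextprop}. In region~(I), one establishes the behavior of $\psi$ on the Kerr--AdS exterior with reflecting boundary conditions at $\mathcal{I}$; the essential input is the existence of stably trapped quasimodes near the ergoregion, which produce a discrete family of high-frequency packets whose decay along $\mathcal{H}_R$ is extremely slow. In region~(II), one analyses the interior scattering map from $\mathcal{H}_R$ to $\mathcal{CH}_R$ in each azimuthal mode; here the poles of the scattering operator correspond to characteristic frequencies of the Killing generator of $\mathcal{CH}$. Writing $\psi$ on $\mathcal{H}_R$ as a mode sum and pushing it through the interior scattering operator, the coupling between exterior quasimode frequencies and interior scattering poles produces a denominator of the form ``(quasimode frequency) minus (pole)''. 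Controlling these denominators is a small-divisors problem, and a non-Diophantine condition on the dimensionless parameters $\mathfrak m,\mathfrak a$ produces arbitrarily small divisors, which in turn forces the mode sum to diverge in an appropriate transverse energy norm. The generic-data step is then a standard Baire-category construction, projecting generic smooth data onto the resonant modes so that cancellations cannot occur.

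For step~(b), one would adapt the characteristic initial value formulation for \eqref{eq:Einstein} in a neighborhood of $\mathcal{CH}$ used in the asymptotically flat setting, imposing reflecting boundary conditions at $\mathcal{I}$ and closing the estimates in a weighted $H^1$-norm compatible with the linear instability above. The nonlinear error terms must be shown to be subdominant with respect to the resonant linear source, which is plausible because the resonance is driven by high frequencies whose linear amplitude is polynomially large while nonlinear errors are quadratic.

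The hard part, and what I expect to be the main obstacle, is precisely this nonlinear upgrade, and it is fundamentally hard for two independent reasons. First, a fully nonlinear stability theory for the Kerr--AdS \emph{exterior} with reflecting boundary conditions is itself open and plausibly false because of the very stable trapping that drives the interior instability; one therefore cannot even start region~(II) analysis without making strong conditional assumptions on the exterior dynamics. Second, the parameters $\mathfrak m,\mathfrak a$ evolve dynamically under \eqref{eq:Einstein}, so the non-Diophantine condition that governs the linear blow-up must be reinterpreted as a condition on Cauchy data, and it is not clear whether ``Baire-generic'' or ``Lebesgue-generic'' data are the correct notion. In fact, the dichotomy flagged in the introduction between \cref{con:2a} and \cref{con:2b} strongly suggests that the very statement of genericity in Conjecture~\ref{conj:2} must be re-examined for $\Lambda<0$ before a proof can even be attempted.
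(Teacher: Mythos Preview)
The statement you are addressing is \emph{Conjecture~\ref{conj:2}}, and the paper does not prove it; it is stated as an open problem. There is therefore no ``paper's own proof'' to compare against. What the paper actually proves is \cref{thm:rough}, which resolves the much weaker linear scalar analog \cref{con:2a} (itself a refinement of \cref{conj:linearana}) for a Baire-generic set of Kerr--AdS parameters. Your proposal is not a proof but a research programme, and you yourself identify in the final paragraph precisely why it does not close: the nonlinear exterior stability of Kerr--AdS with reflecting conditions is open (and may well be false, cf.\ the discussion of Moschidis's instability results in the paper), and the dynamical nature of the parameters under \eqref{eq:Einstein} means the Diophantine/non-Diophantine dichotomy cannot currently be translated into a genericity condition on Cauchy data.

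Beyond this structural issue, your step~(a) conflates two distinct linear statements. The small-divisors mechanism you describe is exactly what the paper exploits, but it produces $L^\infty$ (and hence $H^1_{\mathrm{loc}}$) blow-up only on the Baire-generic, Lebesgue-null set $\mathscr{P}_{\textup{Blow-up}}$. For the Lebesgue-generic complement, the paper \emph{conjectures} (\cref{con:2b}(A)) that $|\psi|$ stays bounded, so the resonance mechanism alone cannot give $H^1_{\mathrm{loc}}$ blow-up for all parameters; one would need a separate argument (e.g.\ via slowly decaying quasinormal modes and the classical blue-shift, as sketched in \cref{sec:outlookonres}) to cover that regime. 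Thus even the linear version of Christodoulou's formulation for \emph{all} subextremal Kerr--AdS parameters is not established here.
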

Unlike the $C^0$-formulation in \cref{conj:c0}, the statement of \cref{conj:2} does not guarantee  the complete destruction of observers approaching Cauchy horizons. However, it restores determinism in the sense that even just weak solutions must break down at  Cauchy horizons. Nonetheless, one may remain uneasy as to whether the standard notion of weak solution to \eqref{eq:Einstein} is finally the correct one \cite{penrosewaves,gravwaves,gravwaves2}. In this sense it is a pity that \cref{conj:c0} turned out to be false in the $\Lambda \geq 0$ cases,  as it would have provided a much more definitive resolution of the spirit of the Strong Cosmic Censorship conjecture. Hence, it is of interest to know whether the situation is better in the $\Lambda <0$ case!
\subsubsection*{Linear scalar analog of the Strong Cosmic Censorship conjecture} The aforementioned formulations of SCC have linear scalar analogs on the level of \eqref{eq:wavekerr}. Indeed, under the identification $\psi \sim g$, the linear scalar wave equation \eqref{eq:wavekerr} can be seen as a naive linearization of the Einstein equations \eqref{eq:Einstein}  after neglecting the nonlinearities and the tensorial structure. Moreover, many phenomena and difficulties for the full Einstein equations \eqref{eq:Einstein}  are already present at the level of \eqref{eq:wavekerr}.

The linear scalar analog of \cref{conj:c0} in a neighborhood of Kerr and Kerr--(Anti-)de~Sitter corresponds to the statement  that for generic black hole parameters, linear scalar perturbations $\psi$ arising from generic data on a spacelike hypersurface solving \eqref{eq:wavekerr} blow up in amplitude at the Cauchy horizon. 
\begin{conjecture}[Linear scalar analog of the $C^0$-formulation of SCC (\cref{conj:c0})]For \underline{generic} Kerr--(dS/AdS) black hole parameters, linear scalar perturbations $\psi$ solving \eqref{eq:wavekerr}, arising from \underline{generic} initial data, blow up in amplitude
\begin{align}|\psi|\to +\infty \end{align}
at the Cauchy horizon.
\label{conj:linearana}
\end{conjecture}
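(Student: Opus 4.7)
The plan is to attack the conjecture by analyzing \eqref{eq:wavekerr} in the frequency domain via the separation of variables adapted to Kerr--AdS, reducing the problem to a one-dimensional scattering problem parameterized by the azimuthal mode $m$, the angular eigenvalue (indexed by $\ell$), and the time frequency $\omega$. For the linear scalar analog stated in \cref{conj:linearana}, the positive direction (Baire-generic blow-up) is what I would aim for; the strategy is to exhibit a sequence of nearly resonant frequencies $(\omega_n, m_n, \ell_n)$ such that the frequency-localized solution is amplified through both the exterior and the interior, with enough quantitative control to conclude that pointwise blow-up of $|\psi|$ occurs at $\mathcal{CH}$ for generic initial data.

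First, I would carry out the exterior analysis (region (I) in \cref{fig:intextprop}). Using the stable trapping mechanism established in \cite{gustav,quasimodes}, I would construct a sequence of quasimodes: real frequencies $\omega_n$ accumulating at a trapping frequency, along with high angular parameters $(m_n,\ell_n)$, such that there are approximate solutions of the radial ODE that are exponentially localized in a potential well on the exterior and have superpolynomially small tunneling loss out to the horizon. Concretely, this yields, for data essentially supported near these quasimode frequencies and satisfying reflecting conditions at $\mathcal I$, a transmitted outgoing flux along $\mathcal H_R$ of size $\sim 1$ modulated at frequency $\omega_n$, with a quantitative lower bound despite the very slow excitation. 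The upshot is that the exterior produces, to high accuracy, a monochromatic boundary profile on $\mathcal H_R$ at each $\omega_n$.

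Next, I would analyze the interior (region (II)) by constructing the interior scattering operator as in \cite{kehle2018scattering}, which maps data on $\mathcal H_R\cup\mathcal H_L$ to data on $\mathcal{CH}_R\cup\mathcal{CH}_L$ in the frequency-localized picture. The essential observation is that this scattering map has \emphb{poles} at the characteristic frequencies $\omega = m \,\omega_-$ associated with the Killing generator of $\mathcal{CH}$, i.e.\ at resonances of the interior radial operator. Near such a pole, a monochromatic boundary contribution on $\mathcal H_R$ of amplitude $1$ gets amplified by a factor $\sim |\omega - m \omega_-|^{-1}$ (or a suitable power) on the Cauchy horizon side. The key Kerr--AdS-specific phenomenon to exploit is that the trapping frequency on the exterior and the resonance condition $\omega = m\omega_-$ on the interior \emph{can simultaneously be approached} in the high-$m$ regime, something impossible in the nonrotating Reissner--Nordstr\"om--AdS case of \cite{CKehle2019}.

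Putting these two pieces together produces a small divisors problem: the pointwise value of the $(\omega_n, m_n, \ell_n)$ mode at $\mathcal{CH}$ is bounded below by a (positive) power of $|\omega_n - m_n \omega_-|^{-1}$ times the exterior amplification, while its energy contribution from generic initial data is controlled from below by a non-degenerate projection. Blow-up then follows provided one can show that for a Baire-generic set of dimensionless parameters $(\mathfrak m,\mathfrak a)$ below the Hawking--Reall bound, the spacing $|\omega_n - m_n \omega_-|$ can be made super-polynomially small in $m_n$; this is precisely a \emphb{non-Diophantine} statement on the pair $(\omega_{\text{trap}}(\mathfrak m,\mathfrak a),\omega_-(\mathfrak m,\mathfrak a))$, and Baire-genericity of such pairs is a standard fact from metric number theory. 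Finally, one reduces pointwise blow-up for \emph{generic} smooth data to the divergence of a suitable series of frequency-localized norms via a Baire category argument in the data space. \textbf{The main obstacle} I anticipate is obtaining a sharp enough two-sided quantitative match between the exponentially small tunneling rate from the exterior trapping region and the polynomial strength of the interior pole --- only a precise matching, refined by a detailed analysis of the angular eigenvalues and of the monodromy of the interior radial ODE, will allow one to convert a non-Diophantine estimate on $(\omega_{\text{trap}},\omega_-)$ into the actual divergence of $|\psi|$ at $\mathcal{CH}$.
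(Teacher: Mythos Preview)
Your proposal matches the paper's strategy almost exactly: exterior quasimodes from stable trapping, interior scattering poles at $\omega=m\omega_-$, and a non-Diophantine condition on the parameters yielding Baire-generic blow-up. Two sharpenings are worth noting. First, superpolynomial smallness of $|\omega_n-m_n\omega_-|$ is not enough: because the horizon trace of a trapped quasimode is \emph{exponentially} damped by the tunneling factor $e^{-c\ell}$ while the interior pole only gives a first-order gain $|\omega-m\omega_-|^{-1}$, you need the divisor to be exponentially small infinitely often---this is exactly the obstacle you anticipate, and the paper resolves it by working on the real axis at $\omega=\omega_-m$ and placing the small divisor in the \emph{Wronskian} $\mathfrak W[u_{\mathcal H^+},u_\infty](\omega_-m,m,\ell)$ rather than in $\omega_n-m_n\omega_-$ (the roles of the two denominators are swapped relative to your heuristic). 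Second, what is actually proved is blow-up of $\|\psi(u,r)\|_{L^2(\mathbb S^2)}$ as $r\to r_-$, which gives $L^\infty(\mathbb S^2)$ blow-up but not pointwise blow-up at every $(\theta,\phi)$; the paper flags this as an open refinement.
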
 

 The reformulation due to Christodoulou (\cref{conj:2}) finds its linear scalar analog in the $H^1_\textup{loc}$ blow up of $\psi$ at the Cauchy horizon in view of the identification $\partial \psi \sim \Gamma$. 
\begin{conjecture}[Linear scalar analog of Christodoulou's reformulation of SCC (\cref{conj:2})]
For \underline{generic} Kerr--(dS/AdS) black hole parameters, linear scalar perturbations $\psi$ solving \eqref{eq:wavekerr}, arising from \underline{generic} initial data, blow up in local energy
	\begin{align}\|\psi\|_{H^1_\textup{loc}} = +\infty \end{align}
	at the Cauchy horizon.\label{conj:linearanachristo}
\end{conjecture}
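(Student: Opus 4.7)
The plan is to treat the Diophantine and non-Diophantine parameter regimes separately, exploiting different mechanisms in each. For the Baire-generic non-Diophantine parameters already covered by \cref{thm:rough}, I would show that the same resonance analysis producing amplitude blow-up also produces $H^1_{\textup{loc}}$ blow-up: heuristically, the divergent series of mode amplitudes responsible for $|\psi|\to\infty$ is dominated term-by-term, in the corresponding $L^2_v$ norm of $\partial_v\psi$, by that same series weighted by the squared Cauchy-horizon frequency $(\omega-m\Omega_{\Ch})^2$. This step should amount to tracking derivative norms through the proof of \cref{thm:rough} rather than introducing a new mechanism.

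For the Diophantine (Lebesgue-generic) regime, where I expect amplitude boundedness, the mechanism must instead be the classical \emph{blue-shift} at $\Ch$. After the azimuthal and Fourier--spheroidal decomposition adapted to the reflecting boundary at $\mathcal I$, each bound state $(\omega_{n,m,\ell},m,\ell)$ propagates across the black hole interior to give, near $\Ch_R$ in a null coordinate $v$ regular across it, a tail of the form $A_{n,m,\ell}\, e^{-i(\omega_{n,m,\ell}-m\Omega_{\Ch})v}$. Transversal differentiation brings down a factor $(\omega_{n,m,\ell}-m\Omega_{\Ch})$, so the strategy is to prove
\[
\int^{+\infty} |\partial_v \psi|^2(u_0,v,\theta,\phi)\,\mathrm{d} v \;\gtrsim\; \sum_{n,m,\ell}\bigl(\omega_{n,m,\ell}-m\Omega_{\Ch}\bigr)^2 |A_{n,m,\ell}|^2 \;=\; +\infty
\]
for generic data, with a lower bound on the amplitudes $A_{n,m,\ell}$ reaching $\Hp_R$ supplied by the exterior stable-trapping/quasimode construction of \cite{gustav,quasimodes}, and their propagation across the interior handled by the scattering theory of \cite{kehle2018scattering}.

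The hard part will be the Diophantine regime, precisely because the small-divisor mechanism of \cref{thm:rough} is switched off there: one must replace a single-mode enhancement by a cumulative divergence over infinitely many non-resonant modes. This requires (i) a sharp lower bound, for generic smooth initial data, on the number of high-frequency stably-trapped modes reaching $\Hp_R$ with non-negligible amplitude---presumably established by a Baire-category argument in the initial-data Fr\'echet space---and (ii) an at-most-approximate Plancherel identity in $v$, complicated by the coupling through the $\omega$-dependent spheroidal harmonics $S_{m\ell}$ and the $\omega$-dependence of the separated radial ODE. A potentially cleaner alternative would be a direct physical-space energy identity using a blue-shift multiplier aligned with the Killing generator $T+\Omega_{\Ch}\Phi$ of $\Ch$, aiming to show that the flux through a hypersurface transverse to $\Ch_R$ is forced to diverge from the quasimode lower bound alone, bypassing the need for exact mode-by-mode orthogonality.
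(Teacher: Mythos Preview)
This statement is a \emph{conjecture}, not a theorem; the paper contains no proof of it. What the paper does say is: (i) for $\Lambda=0$ the conjecture is already established in the literature \cite{Lukreissner,timetranslation,luk2016kerr}; (ii) for $\Lambda<0$ the authors remark in passing that their methods ``in principle also show \cref{conj:linearanachristo}'' and, more concretely, in \cref{sec:outlookonres} they observe that the $L^2(\mathbb S^2)$ blow-up of \cref{thm:rough} is strictly stronger than $H^1_{\textup{loc}}$ blow-up (so the non-Diophantine case is covered), and they sketch the expected mechanism for the remaining parameters: a \emph{single} quasinormal mode whose exponential decay rate $|\operatorname{Im}\omega|$ is slower than the surface gravity $\kappa_-$ of the Cauchy horizon will already have infinite local energy there. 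No details are given.

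Your non-Diophantine half is fine and matches the paper's remark. Your Diophantine half, however, has a conceptual gap. You write that the mode has a tail $A\,e^{-i(\omega-m\omega_-)v}$ in a coordinate $v$ ``regular across'' $\mathcal{CH}_R$ and that ``transversal differentiation brings down a factor $(\omega-m\omega_-)$''. But a null coordinate regular across $\mathcal{CH}_R$ is \emph{tangential} to it, so this derivative is not the one that sees the blue-shift; your displayed inequality is essentially a Plancherel identity for the tangential energy and would be \emph{finite} whenever $\psi$ is bounded. The blue-shift lives in the transverse Kruskal-type coordinate $V\sim e^{-\kappa_- v}$, where a mode behaves like $V^{\,i(\omega-m\omega_-)/\kappa_-}$ and $\partial_V\psi$ fails to be square-integrable precisely when $|\operatorname{Im}\omega|<\kappa_-$. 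This is why the paper's suggested route is a single suitably slowly decaying quasinormal mode, not a cumulative divergence over many non-resonant modes; your programme of summing contributions and invoking an approximate Plancherel identity is aimed at the wrong derivative and would not detect the instability.
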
 

The word \emph{generic} appears twice in the above formulations, both in the context of the parameters and in the context of the perturbation. This is because in the fully nonlinear \cref{conj:c0} and \cref{conj:2}, the background parameters are themselves dynamic in evolution under \eqref{eq:Einstein} and thus both would be encompassed in the genericity of the initial data. 

\paragraph{Genericity of the black hole parameters}
As we will show in the present paper, for the Kerr--AdS case, the validity of \cref{conj:linearana} and \cref{conj:linearanachristo} will depend in a crucial way on the notion of genericity (Baire-generic or Lebesgue-generic) imposed on the parameters. This will eventually lead us to refine the above statements of \cref{conj:linearana} and \cref{conj:linearanachristo} (see already \cref{sec:introconjectures}).

\paragraph{Genericity of the initial data}
We will assume that the initial data lie in the class of smooth functions of compact support. 
Regarding  genericity within that class, note that just finding one single solution for which the blow-up statement is true already yields a natural notion of genericity.
 Indeed, since \eqref{eq:wavekerr} is linear, it would then follow that data for which the arising solution does not blow up satisfy a co-dimension 1 property (see already \cref{rmk:linearisenough}) and thus, would be exceptional. It is this notion of genericity of the initial data which we will consider later in \cref{sec:introconjectures}. Note that we will also consider a more refined notion of genericity of initial data in \cref{rmk:othernotionsofgenericity}.

\par 
\addvspace{\medskipamount}
In the above discussion, one may also consider the Reissner--Nordström(--dS/AdS) spacetimes (see e.g.~\cite{kehle2020diophantine}) which are spherically symmetric electrovacuum solutions to \eqref{eq:Einstein}. Reissner--Nord\-ström(--dS/AdS) spacetimes  are often studied as a toy model for Kerr(--dS/AdS) and the above \cref{conj:linearana} and \cref{conj:linearanachristo} can also be formulated replacing Kerr(--dS/AdS)  with Reissner--Nordström(--dS/AdS).

Before we bring our discussion of SCC to asymptotically AdS black holes ($\Lambda<0$),
we will first review the state of the art of the SCC conjecture for the cases $\Lambda=0$ and $\Lambda >0$.

\subsubsection*{SCC for $\Lambda=0$ and $\Lambda >0$} 
\indent \paragraph{Linear level}
The definitive negative resolution of the fully nonlinear \cref{conj:c0} in \cite{dafermos2017interior} for both $\Lambda=0$ and $\Lambda >0$ was preceded by the negative resolution of the linear \cref{conj:linearana} in \cite{anneboundedness,annekerr,hintz_dS} for $\Lambda =0$ and in \cite{hintz_vasy,annedesitter} for $\Lambda >0$. It was shown that solutions of \eqref{eq:wavekerr} arising from   regular and localized data on a spacelike hypersurface remain continuous and uniformly bounded $|\psi|\leq C$ at the Cauchy horizon  for all subextremal Kerr black hole interiors $(\Lambda =0)$, and very slowly rotating subextremal Kerr--dS black hole interiors---hence disproving \cref{conj:linearana} for $\Lambda=0$ and $\Lambda >0$. (For the extremal case see \cite{gajic1,gajic2} and for the Schwarzschild case see \cite{interiorschwarzschild}.) The key ingredient in showing boundedness at the Cauchy horizon is a sufficiently fast decay (polynomial with rate $v^{-p}$ with $p>1$ for $\Lambda =0$ and exponential for $\Lambda >0$) of linear scalar perturbations along the event horizon. Using suitable energy estimates associated to the red-shift vector field  introduced in \cite{redshift} and the vector field $S=|u|^p \partial_u + |v|^p \partial_v$, this decay is then propagated into the black hole all the way up to the Cauchy horizon $\mathcal{CH}$, where it is sufficient to conclude uniform boundedness. We remark already that this method manifestly fails for asymptotically AdS black holes, where linear scalar perturbations decay merely at a logarithmic rate along the event horizon \cite{gustav,quasimodes}.

While \cref{conj:linearana} is false for $\Lambda =0$, as remarked above, at least the weaker formulation  \cref{conj:linearanachristo} holds true: It was proved that the (non-degenerate) local energy at the Cauchy horizon blows up, $\|\psi\|_{H^1_{\textup{loc}}} = +\infty$, for a generic set of solutions $\psi$ on Reissner--Nordström \cite{Lukreissner} and Kerr \cite{timetranslation} black holes in the full subextremal range of parameters.
 A similar blow-up behavior was obtained for  Kerr  in \cite{luk2016kerr} assuming lower bounds (which were shown later in \cite{hintz2020sharp} to indeed hold generically) on the energy decay rate of a solution along the event horizon.  
These results thus also support the validity of the fully nonlinear \cref{conj:2} for $\Lambda =0$.

On the other hand, in the $\Lambda >0$ case, the exponential convergence of perturbations along the event horizon of a Kerr--de~Sitter black hole is in direct competition with the exponential blue-shift instability near the Cauchy horizon. Thus, the question of the validity of \cref{conj:linearanachristo} becomes even more subtle for $\Lambda >0$ and has received lots of attention in the recent literature. We refer to the conjecture in \cite{nospacelike},    the survey article \cite{reallsurvey},  the recent results \cite{dafermos2018rough,dias2018strong,dias2018strong2,MR3882684,annedesitter} and the works  \cite{Hollands_2020,hollands2020quantum} taking also quantum effects into account.

Another related  result, which will turn out to be important for the paper at hand, is proved in work of the author and Shlapentokh-Rothman \cite{kehle2018scattering}: The main theorem establishes a finite energy scattering theory for solutions of \eqref{eq:wavekerr}  on the interior of Reissner--Nordström. In this scattering theory, a linear isomorphism between the degenerate energy spaces (associated to the Killing field $T$) corresponding to the event and Cauchy horizon is established. The problem reduces to showing uniform bounds for the transmission and reflection coefficients $\mathfrak T(\omega,\ell)$ and $\mathfrak R(\omega,\ell)$ for fixed frequency solutions. Formally, for an incoming wave at the right event horizon $\mathcal{H}_R$, the transmission and reflection coefficients correspond to the amount of $T$-energy scattered to the left and right Cauchy horizon $\Ch_L$ and $\Ch_R$, respectively. Indeed, the theory also carries over to non-zero cosmological constant $\Lambda \neq 0$ \emph{except} for the characteristic frequency ($\omega=0$) associated to $T$, thought of now as the generator of the Cauchy horizon. (Note that these results are compatible with the blow-up of the local energy at the Cauchy horizon \cite{Lukreissner} because of the degeneracy of the $T$-energy.) These insights will turn out to be important for the interior analysis of the present paper, see already \cref{sec:introin}.

\paragraph{Nonlinear level} 
Turning to the nonlinear problem  of  \eqref{eq:Einstein},  Dafermos--Luk proved the full nonlinear $C^0$-stability of the Kerr Cauchy horizon in \cite{dafermos2017interior}. Their work definitively disproves \cref{conj:c0} for $\Lambda=0$ (subject only to the completion of a proof of the nonlinear stability of the Kerr exterior).
Mutatis mutandis, their proof of   $C^0$-stability  also applies to Kerr--de~Sitter Cauchy horizons, where the exterior has been shown to be nonlinearly stable in the very slowly rotating case \cite{hintz2016global}. This unconditionally disproves \cref{conj:c0} for $\Lambda >0$.

Nonlinear inextendibility results at   Cauchy horizons have been proved only in spherical symmetry: For the Einstein--Maxwell-scalar field system, the  Cauchy horizon  is shown to be $C^2$ unstable \cite{dafermos_cauchy_horizon,luk2017strong,luk2017strong2}  for a generic set of spherically symmetric initial data. See also the pioneering work in \cite{internal90,ori91} and the more general results on the Einstein--Maxwell--charged scalar field system in \cite{maxime,van2019breakdown,van2020mass,kehle2021strong}.  This proves the $C^2$-formulation of SCC, and by very recent work \cite{sbierski_c01}, the $C^{0,1}$-formulation  (but not yet \cref{conj:2}) in spherical symmetry. For work in the $\Lambda>0$ case see \cite{costa2,costa3}. The question of any type of nonlinear instability of the Cauchy horizon without symmetry assumptions and the validity of \cref{conj:2} (even restricted to  a neighborhood of Kerr) have yet to be understood.

We shall also mention that for $T^3$-Gowdy spacetimes the $C^2$-formulation of SCC has been shown in \cite{ringa,ringb}. Further, in the context of  Bianchi systems  \cite{bianchi} (which can be formulated as   finite dimensional dynamical systems \cite{bianchia,bianchib}), a $C^2$-formulation of SCC has been shown for  generic  data in \cite{ringstroem1,ringstroem2} for Bianchi A and  in \cite{rademacher1} for Bianchi B systems. In the dynamical system formulation, Baire-genericity has been crucial to the argument, see e.g.\ \cite[Section~1.4]{rademacher1}.
% data for which $C^2$-inextendibility hold  are shown to be both, Baire- and Lebesgue-generic. 

\subsubsection*{SCC for asymptotically AdS spacetimes \texorpdfstring{$\Lambda <0$}{lambda<0}}

As we shall see in the present paper, the situation for asymptotically AdS black holes with $\Lambda<0$ will turn out to be radically different.

\begin{figure}[h]
	\centering
	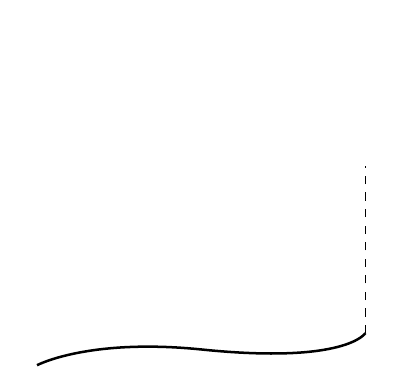
	\caption{Penrose diagram of the maximal Cauchy development of  Kerr--AdS  data on a spacelike surface $\Sigma$ with Dirichlet (reflecting) boundary conditions prescribed on null infinity $\mathcal{I}$.}
	\label{fig:adsintro}
\end{figure} 

First, in view of the lack of global hyperbolicity of asymptotically AdS spacetimes, one needs to specify additional boundary conditions at infinity (at $\mathcal I$) to guarantee well-posedness of \eqref{eq:Einstein} and \eqref{eq:wavekerr}, see \cite{FRIEDRICH1995125,fournodavlos2019initial,wellposed,warnick_massive_wave,gannot_wrochna_2020}. The most natural in this context are reflecting (Dirichlet) boundary conditions \cite{FRIEDRICH1995125}. In what follows we will  assume such Dirichlet boundary conditions. (Refer to \cref{sec:introconjectures} and \cref{sec:intromainthm} later for remarks on more general boundary conditions.)

We first discuss  linear scalar perturbations solving \eqref{eq:wavekerr} arising from data posed on a spacelike hypersurface on  asymptotically AdS black holes. In contrast to $\Lambda \geq 0$, where linear scalar perturbations $\psi$ decay at a polynomial ($\Lambda=0$) and exponential ($\Lambda>0$) rate,  linear scalar perturbations $\psi$ of Kerr--AdS (and Reissner--Nordström--AdS) decay merely at a logarithmic rate on the exterior as proved in \cite{gustav,quasimodes}.\footnote{Recall that we restrict our attention to Kerr black holes below the Hawking--Reall bound \cite{hawking_reall} as otherwise growing modes are shown to exist \cite{dold}.} The origin of this slow decay is a stable trapping phenomenon of high-frequency waves traveling along stably trapped null geodesics which repeatedly bounce off null infinity $\mathcal I$. (Contrast this with the work \cite{dias2019btz,pandya2020rotating,bhattacharjee2020mass} in 2+1 dimensions and refer to \cite{ori2} for a discussion of the Ori model for Reissner--Nordström--AdS.) For 5D asymptotically flat black holes, a similar $\log$-decay result was shown in \cite{benomio}, which also relies on the existence of stably trapped null geodesics.

With the logarithmic decay on the exterior in hand, we first recall from the discussion above that in the $\Lambda \geq 0$ cases  \cref{conj:linearana} is false (and, in fact, so is the fully nonlinear \cref{conj:c0}), yet at least in the $\Lambda =0$ case \cref{conj:linearanachristo} is true (and, hopefully, \cref{conj:2} as well). Indeed, our methods in principle also show \cref{conj:linearanachristo}   for $\Lambda <0$. However, in view of the slower decay in the case $\Lambda <0$, one might suspect a stronger instability  at the Cauchy horizon in this case. This raises the attractive possibility that \cref{conj:c0} and   \cref{conj:linearana} might actually be \emph{true} for $\Lambda <0$, which would give a more definitive resolution to the issue of Strong Cosmic Censorship than the weaker \cref{conj:2} and   \cref{conj:linearanachristo}.

For the Reissner--Nordström--AdS spacetime, which is often considered as a toy model of Kerr--AdS, this question was first taken up in \cite{CKehle2019}. For that case, however, the hopes expressed in the above paragraph were not fulfilled! It was shown in \cite{CKehle2019} that,  \textit{despite the slow decay on the exterior}, all linear scalar perturbations $\psi$ on Reissner--Nordström--AdS (in the full subextremal range) remain  uniformly bounded, ${|\psi|\leq C}$, on the interior and  extend continuously across the Cauchy horizon. Thus, the Reissner--Nordström analog of \cref{conj:linearana} is false. To understand the additional phenomenon which was exploited to prove boundedness, let us decompose a linear scalar perturbation $\psi$ into frequencies $\omega, m,\ell$ associated to the separation of variables. On the exterior, it is the high frequency part (i.e.~$|\omega|,|m|,\ell$ large) of $\psi$ which is exposed to stable trapping and decays slowly, whereas the low frequency part ($|\omega|,|m|,\ell$ small) decays superpolynomially.  
In the interior, however, the main obstruction to boundedness is the interior scattering pole which is located at the characteristic frequency $\omega=0$ with respect to $T$, now thought of as the Killing generator of the Cauchy horizon. (Refer also to the discussion in \cite[Section~3.6]{kehle2018scattering}.)
Thus, for Reissner--Nordström--AdS, the slowly decaying part of $\psi$ is decoupled in frequency space from the part susceptible to the interior scattering pole at $\omega =0$.
(See already \cref{fig:kerradscoupling}.)
The above result on Reissner--Nordström--AdS may suggest that, just as in the cases of $\Lambda \geq 0$, \cref{conj:linearana} is false for $\Lambda <0$, albeit for different reasons.

The present paper on Kerr--AdS, however, provides an unexpected positive resolution of \cref{conj:linearana} for $\Lambda<0$.
We show in \cref{thm:rough} that there exists a set $\mathscr P_{\textup{Blow-up}}$ of dimensionless Kerr--AdS parameters $\mathfrak m := M\sqrt{-\Lambda}$ and  $\mathfrak a:= a\sqrt{-\Lambda}$ which is  \emph{Baire-generic} but \emph{Lebesgue-exceptional}, such that on all Kerr--AdS black hole whose parameters lie in $\mathscr P_{\textup{Blow-up}}$, generic linear scalar perturbations $\psi$ blow up ${|\psi|\to + \infty}$ at the Cauchy horizon. \emphb{Thus, our main result \cref{thm:rough} shows that \cref{conj:linearana} is \underline{true} if Baire-genericity is imposed on the  Kerr--AdS  parameters.}

This set of parameters is defined through a non-Diophantine condition. This condition arises from small divisors originating from a resonance phenomenon  between, on the one hand, specific high frequencies associated to stable trapping on the exterior and, on the other hand,  poles of the interior scattering operator  which are characteristic frequencies with respect to the Killing generator of the Cauchy horizon. 
This resonance phenomenon is possible because the characteristic frequencies of the Cauchy horizon are now the frequencies 
\begin{align}\label{eq:characteristicfrequency}
\omega- \omega_- m =0,\end{align}
 where \begin{align}\omega_-= \frac{a(1+a^2\Lambda/3) }{r_-^2 a^2}\end{align} is the frequency at which the Cauchy horizon rotates.  In contrast to Reissner--Nordström, \eqref{eq:characteristicfrequency} can now be satisfied for frequencies $|\omega|, |m|,\ell$ which are large. It is not all high frequencies, however, but only specific high frequencies, so-called quasimode (on the real axis) or quasinormal mode (in the complex plane) frequencies $(\omega_{n}, m_n, \ell_n)_{n\in \mathbb N}$ which are responsible for the slow decay on the exterior.  (See already \cref{sec:introex}.)
%Once propagated into the interior, these perturbations governed by the specific high frequencies $(\omega_{n}, m_n, \ell_n)_{n\in \mathbb N}$ will be in resonance with the poles of the interior scattering operator. 
 This resonance phenomenon will lead to small divisors of the form $\frac{1}{\omega_{n} - \omega_- m_n}$. Now, if the specific quasinormal mode frequencies approximate the characteristic frequencies $\omega = \omega_- m$ sufficiently well, i.e.~if $|\omega_n - \omega_- m_n|$ is sufficiently small for infinitely many $n \in \mathbb N$, then we will show that generic linear scalar perturbations $\psi$ of Kerr--AdS blow-up $|\psi|\to \infty$ at the Cauchy horizon. This naturally leads to a non-Diophantine condition on the black hole parameters $\mathfrak m, \mathfrak a$ which, as we will show, holds true for a set of parameters $\mathfrak m, \mathfrak a$ which is  Baire-generic but Lebesgue-exceptional.

The above result is not the last word on \cref{conj:linearana} on Kerr--AdS black holes. We also complement our main result with the conjecture that if the parameters $\mathfrak m, \mathfrak a$ satisfy a complementary Diophantine condition, then the resonance phenomenon outlined above is ``weak'' and linear scalar perturbations $\psi$ remain bounded $|\psi|\leq C $ at the Cauchy horizon. 
 This would then hold for black hole parameters which lie in a set $\mathscr{P}_{\textup{Bounded}}$ which is \emph{Baire-exceptional} but \emph{Lebesgue-generic}.
\emphb{Thus, we expect \cref{conj:linearana} to be \underline{false} if Lebesgue-genericity is imposed on the  Kerr--AdS parameters.}

Since the parameters are dynamic in the full nonlinear \eqref{eq:Einstein}, this suggests that for $\Lambda <0$ the validity of the $C^0$-formulation of Strong Cosmic Censorship (\cref{conj:c0})  may change in a spectacular way according to the notion of genericity imposed.
\paragraph{Instability of asymptotically AdS spacetimes?}
If we accept to interpret the above results as supporting \cref{conj:c0}, they leave determinism  in better shape for $\Lambda <0$ compared to  the $\Lambda \geq 0$ cases. 
However, turning to the fully nonlinear dynamics governed by \eqref{eq:Einstein}, there is yet another scenario which could happen. While Minkowski space ($\Lambda =0$) and de~Sitter space ($\Lambda >0$) have been proved to be nonlinearly stable \cite{Friedrich1986,christodoulou2014global}, Anti-de~Sitter space ($\Lambda <0$) is expected to be nonlinearly unstable with Dirichlet conditions imposed at infinity. This was recently proved by Moschidis \cite{moschidis2017einstein,moschidis2017proof,moschidis2,moschidis3} for appropriate matter models. See also the original conjecture in \cite{dafermosholzegelconj} and the numerical results in \cite{bizon}. Similarly, for Kerr--AdS (or Reissner--Nordström--AdS), the slow logarithmic decay on the linear level proved in \cite{quasimodes} could in fact give rise to nonlinear instabilities in the exterior. (Note that in contrast, nonlinear stability for spherically symmetric perturbations of Schwarzschild--AdS was shown for Einstein--Klein--Gordon systems \cite{schwarzschild_ads_stable}.) If indeed the exterior of Kerr--AdS was nonlinearly unstable, the linear analysis on the level of \eqref{eq:wavekerr} could not serve as a model for \eqref{eq:Einstein} and the question of the validity of Strong Cosmic Censorship would be thrown even more open!

\subsection{Exterior: \texorpdfstring{$\log$}{log}-decay, quasi(normal) modes and semi-classical analysis}\label{sec:introex}
We recall   the result of Holzegel--Smulevici \cite{gustav,quasimodes} that linear scalar perturbations $\psi$ solving \eqref{eq:wavekerr} decay at a sharp inverse logarithmic rate \begin{align}\label{eq:decayofpsi}
|\psi|\leq \frac{C}{\log(t)}
\end{align}
on the Kerr--AdS exterior. (For smooth initial data, the decay in \eqref{eq:decayofpsi} can be slightly improved to $|\psi|\leq \frac{C_n}{\log^n(t)}$ for $n \in \mathbb N$.)
The reason for the slow decay is the stable trapping phenomenon near infinity discussed earlier. One manifestation of this phenomenon is the existence of so-called \emph{quasimodes} and \emph{quasinormal modes} which are ``converging exponentially fast'' to the real axis.
Note already that in the proof of \cref{thm:rough} we will work with quasimode frequencies but we will not make use of a quasinormal mode construction or decomposition. However, quasinormal modes provide perhaps the simplest route to obtain some intuition---paired with the interior analysis in \cref{sec:introin}---for how the relation to Diophantine approximation arises. 
   Our discussion of quasi(normal) modes starts with the   property that \eqref{eq:wavekerr} is formally separable \cite{carter}.
\paragraph{Separation of Variables}
With the fixed-frequency ansatz
\begin{align}\label{eq:qnms}
	\psi = \frac{u(r)}{\sqrt{r^2+a^2}} S_{m\ell}(a\omega, \cos \theta) e^{i m \phi} e^{-i \omega t},
\end{align}
the wave equation \eqref{eq:wavekerr} reduces to a coupled system of o.d.e's (see already \eqref{eq:radial}). The radial o.d.e.\ reads
\begin{align}\label{eq:radialintro}
&	-u''(r^\ast) + \tilde V(r^\ast,\omega,\lambda_{m\ell}) u = 0 
\end{align}
for a rescaled radial variable $r^\ast \in (-\infty, \frac{\pi}{2}l)$ with $r^\ast (r=r_+) = -\infty$, $r^\ast (r=+\infty) = \frac{\pi}{2}l$. The radial o.d.e\ \eqref{eq:radialintro} couples to the angular o.d.e.\ through the potential $\tilde V$ which depends on the eigenvalues $\lambda_{m\ell}(a \omega)$ of the angular o.d.e.\
\begin{align}\label{eq:angintro}	P(a \omega) S_{m\ell}(a\omega, \cos \theta ) = \lambda_{m\ell}(a\omega) S_{m\ell}(a \omega,\cos\theta),
\end{align}
where $P(a\omega)$ is a self-adjoint Sturm--Liouville operator. The radial o.d.e.~\eqref{eq:radialintro} is equipped with suitable boundary conditions at $r^\ast = -\infty$ and $r^\ast = \frac{\pi}{2}l $  which stem from imposing regularity for $\psi$ at the event horizon and Dirichlet boundary conditions at infinity. 
This leads to the concept of a \emph{mode solution} $\psi$ of \eqref{eq:wavekerr} defined to be of the form \eqref{eq:qnms} such that $u$ solves \eqref{eq:radialintro} and $S_{m\ell}$ solves \eqref{eq:angintro} with the appropriate boundary conditions imposed. If such a solution $\psi$ were to exist for $\omega \in \mathbb R$, this would correspond to a time-periodic solution. Such solutions are however incompatible with the fact that all admissible solutions  decay. Nevertheless, there  exist ``almost solutions'' which are time-periodic. This leads us to the concept of
\paragraph{Quasimodes}
In \cite{quasimodes} it was shown that there exists a set of real   frequencies $(\omega_n,  m_n = 0, \ell_n)_{n\in \mathbb N}$ such that the corresponding functions $\psi_n$ ``almost'' solve \eqref{eq:wavekerr} in the sense that they satisfy $\Box_g \psi_n + \frac{2}{3}\Lambda \psi_n  = F_n $ with $|F_n|\lesssim \exp(- c \ell_n)$. These almost-solutions are called \emph{quasimodes} and their existence actually implies that the logarithmic decay of \cite{gustav} is sharp as shown in \cite{quasimodes}. These  quasimode frequencies are equivalently characterized by the condition that the  Wronskian $\mathfrak W[u_{\mathcal H^+}, u_{\infty}]$ of solutions $u_{\mathcal H^+}, u_{\infty}$ of \eqref{eq:radialintro} adapted to the boundary conditions satisfies 
\begin{align}\label{eq:decayofwronskian}
|\mathfrak W[u_{\mathcal H^+}, u_{\infty}] (\omega_n, m_n,\ell_n) | \lesssim e^{-c \ell_n}.
\end{align}
The reason why there exist such quasimodes is that in the high frequency limit, the potential in \eqref{eq:radialintro} admits a region of stable trapping, see already \cref{fig:potential_intro}. Alternatively and intimately related to the above, the existence of  quasimodes can be seen as a consequence of the existence of stably trapped null geodesics on the exterior of asymptotically AdS black holes.

\paragraph{Quasinormal modes}
The Wronskian $\mathfrak W[u_{\mathcal H^+}, u_{\infty}]$ has no real zeros,  $\mathfrak W[u_{\mathcal H^+}, u_{\infty}]\neq0$, however, it might very well have zeros in the lower half-plane  with \mbox{$\operatorname{Im}(\omega)<0$.} These zeros correspond to so-called \emph{quasinormal modes}  i.e.\  solutions of the form \eqref{eq:qnms} which decay in time at an exponential rate. Note that quasinormal modes do not have finite energy on $\{t =\textup{const.}\}$-slices (in particular they have infinite energy on $\Sigma_0= \{ t=0\}$). However, they have finite energy for $\{t^\ast= \textup{const.} \}$-slices, where $t^\ast$ is a suitable  time coordinate which extends regularly to  the event horizon $\mathcal H_R$, see already \eqref{eq:kerrstar}. For a more precise definition, construction and a more detailed discussion of quasinormal modes in general we refer to   \cite{gajic2019quasinormal}. Turning back to Kerr--AdS, we note that the bound \eqref{eq:decayofwronskian}  implies the existence of zeros of $\mathfrak W[u_{\mathcal H^+}, u_{\infty}]$ exponentially close to the real axis as shown in \cite{quasinormal_gannot}, see also \cite{quasi_warnick}. More precisely, it was shown that there exist axisymmetric quasinormal modes  with frequencies $m=0$ and $(\omega,\ell) = (\omega_n,\ell_n)_{n\in \mathbb N}$ satisfying
\begin{align}
 & c \ell_n \leq |\operatorname{Re}(\omega_n)| \leq C \ell_n,
\\&  0< - \operatorname{Im}(\omega_n) \leq  C  \exp(- c \ell_n).
\end{align}
While the previous results were proved in axisymmetry to simplify the analysis, in principle, they also extend to non-axisymmetric solutions as remarked in \cite{quasinormal_gannot}. 
\paragraph{Semi-classical heuristics for distribution of quasimodes and quasinormal modes}
We first turn to the heuristic distribution of the quasimode frequencies in the semi-classical (high frequency) limit. For   large $|m|$,  $m \in \mathbb Z$, $\ell \geq |m|$, we  expect a quasimode with frequencies $m,\ell, \omega$ to exist, if the potential $\tilde V(r^\ast, \omega, m, \lambda_{m\ell}(a \omega))$ appearing in the radial o.d.e.\ \eqref{eq:radialintro} satisfies (see \cref{fig:potential_intro})
\begin{itemize}
\item $\tilde V(r^\ast, \omega, m,  \lambda_{m\ell}(a \omega) ) >0 $ for $r_1^\ast < r^\ast < r_2^\ast$,
\item $\tilde V(r^\ast,\omega, m, \lambda_{m\ell}(a \omega)<0$ for $ r_2^\ast< r^\ast\leq \frac{\pi}{2}l$.
\end{itemize}
 \begin{figure}[!ht]
\centering
%% Creator: Inkscape inkscape 0.92.5, www.inkscape.org
%% PDF/EPS/PS + LaTeX output extension by Johan Engelen, 2010
%% Accompanies image file '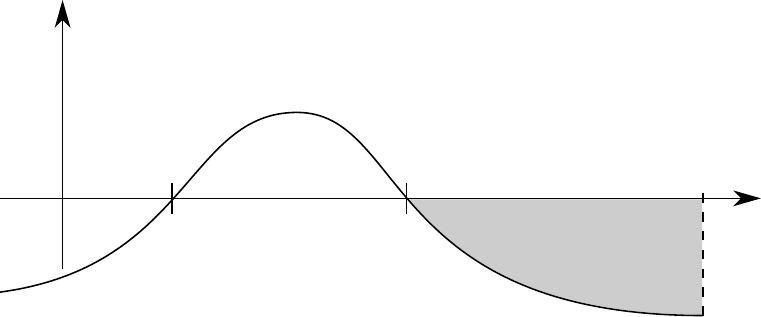' (pdf, eps, ps)
%%
%% To include the image in your LaTeX document, write
%%   \input{<filename>.pdf_tex}
%%  instead of
%%   \includegraphics{<filename>.pdf}
%% To scale the image, write
%%   \def\svgwidth{<desired width>}
%%   \input{<filename>.pdf_tex}
%%  instead of
%%   \includegraphics[width=<desired width>]{<filename>.pdf}
%%
%% Images with a different path to the parent latex file can
%% be accessed with the `import' package (which may need to be
%% installed) using
%%   \usepackage{import}
%% in the preamble, and then including the image with
%%   \import{<path to file>}{<filename>.pdf_tex}
%% Alternatively, one can specify
%%   \graphicspath{{<path to file>/}}
%% 
%% For more information, please see info/svg-inkscape on CTAN:
%%   http://tug.ctan.org/tex-archive/info/svg-inkscape
%%
\begingroup%
  \makeatletter%
  \providecommand\color[2][]{%
    \errmessage{(Inkscape) Color is used for the text in Inkscape, but the package 'color.sty' is not loaded}%
    \renewcommand\color[2][]{}%
  }%
  \providecommand\transparent[1]{%
    \errmessage{(Inkscape) Transparency is used (non-zero) for the text in Inkscape, but the package 'transparent.sty' is not loaded}%
    \renewcommand\transparent[1]{}%
  }%
  \providecommand\rotatebox[2]{#2}%
  \newcommand*\fsize{\dimexpr\f@size pt\relax}%
  \newcommand*\lineheight[1]{\fontsize{\fsize}{#1\fsize}\selectfont}%
  \ifx\svgwidth\undefined%
    \setlength{\unitlength}{365.25610054bp}%
    \ifx\svgscale\undefined%
      \relax%
    \else%
      \setlength{\unitlength}{\unitlength * \real{\svgscale}}%
    \fi%
  \else%
    \setlength{\unitlength}{\svgwidth}%
  \fi%
  \global\let\svgwidth\undefined%
  \global\let\svgscale\undefined%
  \makeatother%
  \begin{picture}(1,0.41569001)%
    \lineheight{1}%
    \setlength\tabcolsep{0pt}%
    \put(0,0){\includegraphics[width=\unitlength,page=1]{potential_stable_trapping_intro.pdf}}%
    \put(0.5169779,0.10998776){\color[rgb]{0,0,0}\makebox(0,0)[lt]{\lineheight{1.25}\smash{\begin{tabular}[t]{l}$r_2^\ast$\end{tabular}}}}%
    \put(0.20895997,0.11012886){\color[rgb]{0,0,0}\makebox(0,0)[lt]{\lineheight{1.25}\smash{\begin{tabular}[t]{l}$r_1^\ast$\end{tabular}}}}%
    \put(0.27959915,0.28973286){\color[rgb]{0,0,0}\makebox(0,0)[lt]{\lineheight{1.25}\smash{\begin{tabular}[t]{l}$\tilde V(r^\ast,\omega,m,\lambda_{m\ell}(a \omega))$\end{tabular}}}}%
    \put(0.91151512,0.18038846){\color[rgb]{0,0,0}\makebox(0,0)[lt]{\lineheight{1.25}\smash{\begin{tabular}[t]{l}$\frac \pi 2l$\end{tabular}}}}%
  \end{picture}%
\endgroup%

\caption{Potential $\tilde V$ with frequency $\omega,m, \ell$ for which we expect quasimodes. The gray area is a suitable  projection of the phase space volume.}
\label{fig:potential_intro}
\end{figure}
Note that the conditions above are satisfied for a range of $\omega$ of the form $c \ell <  |\omega|< C \ell$. In addition, for a quasimode to exist, the potential has to satisfy the Bohr--Sommerfeld quantization condition (see e.g.\  \cite[Chapter VII]{MR0093319}). In our case this means that the phase space volume
\begin{align}\frac{1}{2\pi}\operatorname{vol}\left\{ (r^\ast,\xi)\colon \xi^2  +  \tilde  V(r^\ast, \omega, m, \lambda_{m\ell}(a \omega) ) < 0 , r^\ast > r_2^\ast\right\}\end{align} should be an integer multiple modulo the Maslov index up to an exponentially small error.

Thus, at least heuristically, we expect that for given but large $|m|, \ell \geq |m|$, there exist $N(m,\ell)\sim \ell $ intervals  of quasimodes with midpoint $\omega \sim \ell$ and length $e^{-c\ell}$.  While quasimode frequencies are defined through an open condition (c.f.\ \eqref{eq:decayofwronskian}), quasinormal mode frequencies will be discrete and in an exponentially small neighborhood of quasimodes. Thus, we expect the quasinormal mode frequencies to be distributed as 
\begin{align}
\begin{split}
& c \ell\leq	|\operatorname{Re}(\omega_{m\ell n})| \leq C \ell, \\
& 0 < - \operatorname{Im}(\omega_{m \ell n}) \leq C \exp(- c  \ell). 
\end{split}
\label{eq:decayinexpl}
\end{align}
Refer to \cref{fig:distributionofqnms} for a visualization of the expected distribution of quasimodes and quasinormal modes.
\begin{figure}[h!]
%\includesvg{qm_qnms_distribution}
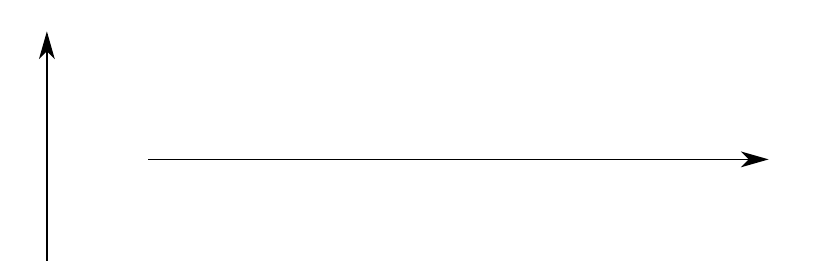
\caption{Quasimodes and quasinormal modes frequencies for large $|m|\sim \ell$.}
\label{fig:distributionofqnms}
\end{figure}

For our heuristic analysis we will now {consider} a solution $\psi$ of \eqref{eq:wavekerr} which consists of an infinite sum of weighted quasinormal modes   (Warning: A general solution cannot be written as a sum of quasinormal modes.)
\begin{align}\label{eq:intropsisum}
\psi(r,t,\theta,\phi) = \sum_{m \in \mathbb Z} \sum_{\ell \geq |m|} \sum_{n=1}^{N(m,\ell)} \tilde a (m,\ell,n) \frac{u(r,\omega_{m\ell n},m,\ell)}{\sqrt{r^2+a^2}} e^{-i\omega_{m \ell n} t} S_{m\ell}(a \omega_{m \ell n}, \cos \theta) e^{i m  \phi},
\end{align}
where we require that the weights $\tilde a(m,\ell,n)$ have   superpolynomial decay. This ensures that the initial data (posed on a $\{ t^\ast = \textup{const.}\}$-slice) are smooth where we assume that each individual quasinormal mode is suitably normalized.\footnote{By a domain of dependence argument one can then produce a solution arising from smooth data on $\Sigma_0$.} 
%(If we assume that our data lie in a suitable Sobolev class, we would have to also allow polynomial decay for $\tilde a(m,\ell,n)$.)
Restricting this solution   $\psi$  to the event horizon yields
\begin{align}\label{eq:intropsihorizon}
\psi\restriction_{\mathcal H}(v,\theta,\tilde \phi_+) = \sum_{m \in \mathbb Z} \sum_{\ell \geq |m|} \sum_{n=1}^{N(m,\ell)} a(m,\ell,n) e^{-i\omega_{m \ell n} v} S_{m\ell}(a \omega_{m \ell n}, \cos \theta) e^{i m \tilde \phi_+} 
\end{align}
for new coefficients  $a(m,\ell,n)$ which satisfy $|a(m,\ell,n)| \sim  {|\tilde a(m,\ell,n) u(r_+,\omega_{m\ell n},m,\ell)|}$. Now, note that the radial part of the quasinormal mode $|u(r,\omega_{m\ell n},m,\ell)|$ will be localized in the region of stable trapping, i.e.\ in the region $\{r^\ast \geq r_2^\ast \}$ of \cref{fig:potential_intro}. From semi-classical heuristics, we expect that only an exponentially damped proportion ``tunnels'' from the region of stable trapping through the barrier to event horizon at $r=r_+$. More precisely, the damping factor of the exponent of $|u(r_+,\omega_{m\ell n},m,\ell)|$ is expected to be proportional to
\begin{align}\int_{r_1^\ast}^{r_2^\ast} \sqrt{\tilde  V(r^\ast,  \omega_{R}, m, \lambda_{m\ell}(a  \omega_{R}) )} \d r^\ast \sim \ell. \end{align} 
Now, for any choice of superpolynomially decaying (or polynomially decaying)  weights $\tilde a (m,\ell,n)$, the new coefficients $a(m,\ell,n)$ decay exponentially  
\begin{align}\label{eq:weightsamln}
|a(m,\ell,n)| \lesssim  \exp(- C  \ell).
\end{align}
Thus, choosing coefficients $\tilde a(m,\ell,n)$ now corresponds to choosing coefficients $a(m,\ell,n)$ satisfying \eqref{eq:weightsamln} and vice versa.
 In view of this, instead of choosing $\tilde a(m,\ell,n)$, we will go forward in our heuristic discussion by choosing  coefficients $a(m,\ell,n)$ satisfying \eqref{eq:weightsamln}. The goal is to choose such coefficients such that $\psi$ blows up at the Cauchy horizon!

\subsection{Interior: Scattering from event to Cauchy horizon}
\label{sec:introin}
\begin{figure}[h]
	\centering
	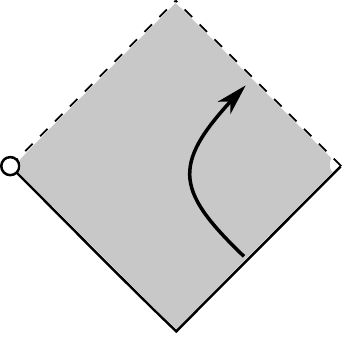
	\caption{Interior scattering $\mathfrak S_{\mathcal{H}_R \to \mathcal{CH}_R}$ from event horizon $\mathcal H_R$ to Cauchy horizon $\mathcal{CH}_R$}
	\label{fig:intscattering}
\end{figure}
We now turn to the interior problem.
We will view some aspects of the propagation of $\psi$ from the event horizon to the Cauchy horizon as a scattering problem as visualized in \cref{fig:intscattering}.  We refer to \cite{kehle2018scattering} for a detailed discussion of the scattering problem on black hole interiors. Unlike in \cite{kehle2018scattering}, we will not develop a full scattering theory for Kerr--AdS, but rather make use of  a key insight from \cite{kehle2018scattering} adapted to our context.
Recall from  \cite[Proposition 6.2]{kehle2018scattering} that on  Reissner--Nordström--AdS, the scattering operator $\mathfrak{S}_{\mathcal H_R \to \mathcal{CH}_R}$ in the interior   has a pole at the frequency $\omega=0$, which is the characteristic   frequency associated to the Killing generator of the Cauchy horizon $T$.  In the present case for Kerr--AdS, it is the vector field $K_- := T + \omega_- \Phi$ which generates the Cauchy horizon and thus the characteristic frequencies are those satisfying $\omega - \omega_- m =0$. For fixed frequency scattering, this means that the reflection coefficient $\mathfrak R$ (i.e.\ the fixed frequency scattering operator from $\mathcal H_R$ to $\mathcal{CH}_R$) has a pole at $\omega - \omega_- m =0$ such that  $\mathfrak R$ is of the form
\begin{align}\label{eq:reflectioncoefficient}
	\mathfrak R = \frac{\mathfrak r(\omega,m,\ell)}{\omega - \omega_- m },
\end{align}
where $\mathfrak r(\omega=\omega_- m,m,\ell) \neq 0$. 

There is a natural solution $\psi$ defined in the black hole interior by continuing each quasinormal mode appearing in \eqref{eq:intropsisum} into the interior. This solution is again smooth across $\mathcal H_R$ and thus can be view as a solution arising from smooth data on a spacelike hypersurface which coincides with  $\{ t^\ast =0\}$ on the exterior.
Let us assume for a moment that the fixed frequency scattering theory also carries over to complex frequencies and that we can analytically continue the reflection coefficient  $\mathfrak R$ to the complex plane. We then expect that the continued solution $\psi$ at the Cauchy horizon can be obtained by   multiplying each individual coefficient   $\psi\restriction_{\mathcal H}$ as in \eqref{eq:intropsihorizon} with the  reflection coefficient $\mathfrak R(\omega_{m\ell n}, m,\ell)$. Moreover, neglecting $\mathfrak r(\omega_{m\ell n},m,\ell)$ which is expected to be suitably bounded from below and above, and taking the $L^2(\mathbb S^2)$-norm of the $\{ u =\text{const} \}$-spheres on the Cauchy horizon $\mathcal{CH}_R$,   formally yields 
\begin{align}\label{eq:diphantineintro}
\|	\psi\restriction_{\mathcal{CH}} \|^2_{L^2(\mathbb S^2)} \sim  \sum_{m \in \mathbb Z} \sum_{\ell \geq |m|}\sum_{n=1}^{N(m,\ell)}   \frac{|a(m,\ell,n)|^2}{|\omega_{m\ell n} - \omega_- m|^2},
\end{align}
where we recall that $a(m,\ell,n)$ decay exponentially as in \eqref{eq:weightsamln}. 
\begin{figure}[!h]
	\centering
	%\hspace{-2cm}\scalebox{.9}{\includesvg{rn_ads}}	
	\hspace{-2cm}\scalebox{.9}{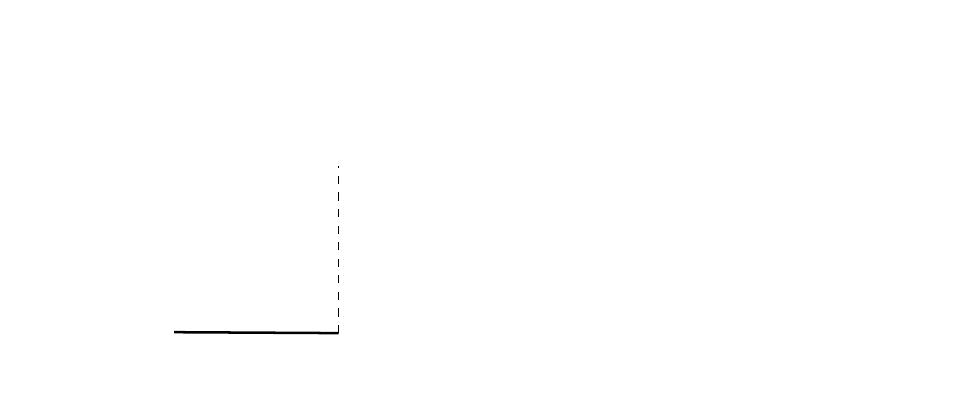}	
	
	\vspace{1cm}
	%\hspace{-2cm}\scalebox{.9}{\includesvg{Kerr_ads_coupling}}
	\hspace{-2cm}\scalebox{.9}{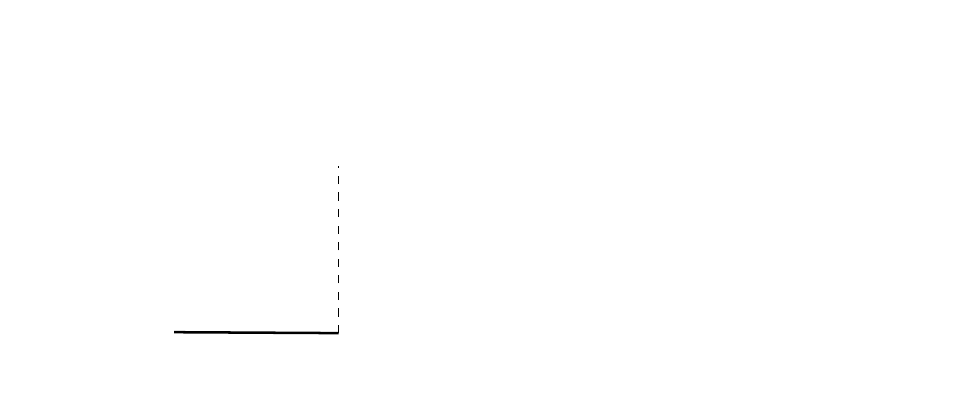}
	\caption{Reissner--Nordström--AdS (top): High frequency stably trapped perturbations are \emphb{decoupled} in frequency space from interior scattering pole at characteristic frequency   $\omega =0$ (w.r.t.\ $T$).\\ Kerr--AdS (bottom): High frequency stably trapped perturbations  \emphb{couple} in frequency space to interior scattering poles at characteristic frequency    $\omega - \omega_- m=0$ (w.r.t.\ $K_-$). }
	\label{fig:kerradscoupling}
\end{figure}
In order to resolve \cref{conj:linearana}, we have to determine whether for all coefficients $a(m,\ell,n)$ satisfying \eqref{eq:weightsamln}, the sum \eqref{eq:diphantineintro} remains uniformly bounded, or whether, for some choice of $a(m,\ell,n)$  satisfying \eqref{eq:weightsamln}, this sum is infinite. Before we address this issue in the next paragraph, we refer to \cref{fig:kerradscoupling} for an illustration of the main difference between the behavior of linear scalar perturbations on Reissner--Nordström--AdS and Kerr--AdS.

\subsection{Small divisors and relation to Diophantine approximation}\label{sec:introdioph}

The convergence of \eqref{eq:diphantineintro} is an example \emph{par excellence} of  a \emphb{small divisors problem}.
Indeed, if $|\omega_{m\ell n} - \omega_- m|$  is   exponentially small in $m,\ell,n$, the sum in \eqref{eq:diphantineintro} is infinite for suitable (in fact generic) $  a(m,\ell,n)$ satisfying \eqref{eq:weightsamln}. More precisely, for the sum in \eqref{eq:diphantineintro} to be infinite for some choice of $  a(m,\ell,n)$, in view of \eqref{eq:weightsamln}, it suffices that there exist infinitely many $(m,\ell,n)$ such that $|\omega_{m\ell n} - \omega_- m|$ decays exponentially. Thus, we conjecture blow-up if 
\begin{align}\label{eq:cond1}
		|\omega_{m\ell n} - \omega_- m| \leq  c' \exp(- C \ell) \text{ for infinitely many admissible } (m,\ell,n)  ,
\end{align}
where $(m,\ell,n)$ are admissible if  $m \in \mathbb Z , \ell \geq |m |,n = 1, \dots, N(m,\ell)$.

Conditions like \eqref{eq:cond1} lie at the heart of \emphb{Diophantine approximation}. Indeed, semi-classical heuristics as in \eqref{eq:decayinexpl} suggest that  $\operatorname{Re}(\omega_{m\ell n})$ are  uniformly distributed  and we assume for a moment that $\operatorname{Re}(\omega_{m\ell n}) = \tilde c(\ell + \frac{n}{\ell})$ for $n = 0, 1, \dots , \ell$ for a constant $\tilde c=\tilde c(M,a,\Lambda)$ and that $|\operatorname{Im}(\omega_{m\ell n})|\lesssim e^{-c\ell}$. For the sake of the purely heuristic argument assume also for a moment that the dimensionless constants $C$    as well as $c'/\tilde c$  are actually $C=c'/\tilde c=1$. Then, the ratio $r(\mathfrak m, \mathfrak a):= \frac{\omega_-}{\tilde c}$, which is dimensionless and only depends on the dimensionless black hole parameters $(\mathfrak m = M \sqrt{-\Lambda}, \mathfrak a= a \sqrt{-\Lambda})$, has to satisfy the non-Diophantine condition
\begin{align}
\label{eq:diophantine}
r( \mathfrak m, \mathfrak a)\in\mathscr{R} :=  	\left\{ x \in \mathbb R \colon \left| \frac{\ell + \frac{n}{\ell}}{m}  - x\right| < \exp(-   \ell) \text{ for } \infty\text{-many  admissible } (m,\ell,n) \right\}.
 \end{align}
Thus, from our heuristic derivation, it is natural to conjecture that linear perturbations   \emphb{blow up} at the Cauchy horizon of Kerr--AdS with mass $M = \mathfrak m/\sqrt{-\Lambda}$ and angular momentum $a=\mathfrak a/\sqrt{-\Lambda}$ if the ratio $r=r(\mathfrak m, \mathfrak a)$ satisfies the non-Diophantine condition \eqref{eq:diophantine}.  At this point it worth emphasizing that the above arguments are \emph{merely} heuristics and by no means can be turned into a proof easily. In particular, our proof does not use a quasinormal modes approach as the previous heuristics and the  non-Diophantine condition (see already \cref{sec:pblowup}) is significantly more technical (refer also to the discussion later in \cref{sec:breifdesc}).

\paragraph{The set \texorpdfstring{$\mathscr R$}{R} is Baire-generic and Lebesgue-exceptional}
The set $\mathscr{R}$ can be written as a $\limsup$ set as \begin{align}	
	\mathscr{R} = \bigcap_{m_0\in \mathbb N} \bigcup_{|m|\geq m_0} \bigcup_{\ell \geq |m|} \bigcup_{0\leq n \leq \ell} 	\left\{ x\in \mathbb R \colon \left| \frac{\ell + \frac{n}{\ell}}{m}  - x\right| < \exp(-\ell)   \right\}.
\end{align}
It is a countable intersection of open and dense sets such that $\mathscr{R}$ is of second category in view of Baire's theorem \cite{zbMATH02668286}. Thus, the set $\mathscr{R}$ is generic from a topological point of view, which we refer to as   \emphb{Baire-generic}.
On the other hand, from a measure-theoretical point of view, the set $\mathscr{R}$ is  exceptional. Indeed,  an application of the Borel--Cantelli lemma shows that the Lebesgue measure of $\mathscr{R}$ vanishes. This is the easy part of the famous theorem by Khintchine \cite{khintchine} stating that for a decreasing function $\phi$, the set
\begin{align}\label{eq:wphi}
 W[\phi]:=	\left\{ x\in \mathbb R\colon \left|x- \frac pq \right| <\frac{ \phi(q)}{q} \text{ for } \infty\text{-many rationals } \frac{p}{q} \right\}
\end{align} 
has full Lebesgue measure if and only if the sum $\sum_q \phi(q)$ diverges.  Thus, $\mathscr{R}$ is \emphb{Lebesgue-exceptional}.

\paragraph{More refined measure: The Hausdorff and packing measures}
 This naturally leads us to consider the more refined versions of measure, the so-called \emphb{Hausdorff}  and \emphb{packing measures} $H^f$, $P^f$ together with their associated \emphb{dimensions} $\dim_H$, $\dim_P$ (see \cref{sec:fractalmeas}). The Hausdorff and packing measure generalize the Lebesgue measure to non-integers. In a certain sense, they can be considered to be dual to each-other: The Hausdorff measure approximates and measures sets by a most economical covering, whereas the packing measure packs as many disjoint balls with centers inside the set. While for all sufficiently nice sets these notions agree, they indeed turn out to give different results in our context. 

We first consider the Hausdorff dimension. A version of the Borell--Cantelli lemma (more precisely the Hausdorff--Cantelli lemma) and using the natural cover for $\mathscr{R}$ shows that the set $\mathscr{R}$ is of  \emphb{Hausdorff dimension zero}. This again can be seen as a consequence of a theorem   going back to Jarn{\'i}k \cite{jarnik} and Besicovitch \cite{besi} which states the set $W[\phi]$ as in \eqref{eq:wphi} has Hausdorff measure   
\begin{align}
	H^s(W[\phi]) = \begin{cases}
	 0 & \text{ if } \sum_{q} q^{1-s} \psi^s(q) <  \infty \\
	 + \infty & \text{ if } \sum_{q} q^{1-s} \psi^s(q) = \infty 
	\end{cases}
\end{align}for  $s\in (0,1)$. However, measuring also logarithmic scales, i.e.\ considering the Hausdorff measure $H^f$ for $f = \log^t(r)$ for some $t>0$, it follows that the set $\mathscr{R}$ is of \emphb{logarithmic generalized Hausdorff dimension}.
On the other hand, using the dual notion of packing dimension, it turns out that $\mathscr{R}$ has \emphb{full packing dimension}, a consequence of the fact that it is a set of second category (Baire-generic) \cite{MR3236784}.
\paragraph{Summary of properties of \texorpdfstring{$\mathscr R$}{R}}
 To summarize, we obtain that
\begin{itemize}
	\item $\mathscr{R}$ is Baire-generic,
		\item $\mathscr{R}$ is Lebesgue-exceptional,
			\item $\mathscr{R}$ has zero Hausdorff dimension $\dim_H (\mathscr{R}) = 0$,
			\item $\mathscr{R}$ is of logarithmic generalized Hausdorff dimension,
				\item $\mathscr{R}$ has full packing dimension $\dim_P (\mathscr{R}) = 1$.
\end{itemize}
The above heuristics will enter in our revised conjectures, \cref{con:2a} and \cref{con:2b}, which transcend \cref{conj:linearana} and \cref{conj:linearanachristo} for $\Lambda <0$. Before we  turn to that in  \cref{sec:introconjectures},  we briefly discuss other aspects of PDEs and dynamical systems for which Diophantine approximation plays a crucial role.
\paragraph{Small divisors problems and Diophantine approximation in dynamical systems and PDEs} 
 Most prominently, Diophantine approximation and the small divisors problem are intimately tied to the problem of the stability of the solar system \cite{moser} and more generally, the stability of  Hamiltonian systems in classical mechanics. 
This   stability problem was partially resolved with the celebrated KAM theorem \cite{kolmogorov,arnold,moser2} which roughly states that Lebesgue-generic perturbations of integrable Hamiltonian systems  lead to quasiperiodic orbits.  
The small divisors problem and Diophantine approximation are ubiquitous in modern mathematics and  arise naturally in many other aspects of PDEs and dynamical systems. We refer to \cite{peroidicwave,peroidicwave2}   for a connection to   wave equations with periodic boundary conditions and to the more general results in \cite{MR1355946} as well as the monograph \cite{illposed}. There is also a vast recent literature on the construction of \mbox{(quasi-)}periodic orbits to nonlinear wave equations; we refer to \cite{waterwave,wangnlw,MR2967117}, the overview article \cite{berti} and the monograph \cite{berti2019quasi} and references therein for further details.  Similar results have been obtained for the Schrödinger equation on the torus in \cite{MR1995764,quantumparticletime,MR3910065,MR3486416}. Further applications of Diophantine approximation include the characterization of homeomorphisms on $\mathbb S^1$ by the Diophantine properties of their rotation numbers or analyzing the Lyapunov stability of vector fields, see the discussion in \cite{MR2109001}.  

\subsection{\texorpdfstring{\cref{con:2a}}{Conjecture 5} and \texorpdfstring{\cref{con:2b}}{Conjecture 6} replace \texorpdfstring{\cref{conj:linearana}}{Conjecture 3} and \texorpdfstring{\cref{conj:linearanachristo}}{Conjecture 4} for Kerr--AdS}
\label{sec:introconjectures}
With the above heuristics in hand, we now transcend \cref{conj:linearana} and \cref{conj:linearanachristo}  for subextremal Kerr--AdS black holes with parameters below the Hawking--Reall bound  in terms of the following two conjectures.  We denote the set of all such parameters with $\mathscr P$, see already \eqref{eq:setofallparameters}.
\begin{conjecture}\label{con:2a}
	There exists a set $\mathscr{P}_{\textup{Blow-up}} \subset \mathscr P$ of dimensionless Kerr--AdS parameters  mass    $\mathfrak m = {M}{\sqrt{-\Lambda}}$ and angular momentum $\mathfrak a =  {a}{\sqrt{-\Lambda}}$ with the following properties
	\begin{itemize}
		\item $\mathscr P_{\textup{Blow-up}}$ is \textbf{Baire-generic} (of second category),
		\item $\mathscr P_{\textup{Blow-up}}$ is \textbf{Lebesgue-exceptional} (zero Lebesgue measure),
	\end{itemize}
and such that for every   Kerr--AdS black hole with mass $M = \mathfrak m / \sqrt{-\Lambda}$ and specific angular momentum $ a = \mathfrak a /\sqrt{-\Lambda}$, where  $(\mathfrak m, \mathfrak a) \in \mathscr{P}_{\textup{Blow-up}}$, there exists a solution $\psi$ to \eqref{eq:wavekerr}, which arises from smooth and compactly supported initial data $(\psi_0,\psi_1)$ on a suitable spacelike hypersurface with Dirichlet boundary conditions at infinity, and which \textbf{blows up}   \begin{align} \label{eq:blowupl2}
	\| \psi\|_{L^2(\mathbb S^2)}(u,r)  \xrightarrow[r\to r_-]{} +\infty
	\end{align} 	at the Cauchy horizon for every $u\in \mathbb R$. 
	\end{conjecture}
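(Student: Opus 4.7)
The plan is to construct, for each parameter pair $(\mathfrak m, \mathfrak a)$ lying in a suitably chosen non-Diophantine set $\mathscr P_{\textup{Blow-up}} \subset \mathscr P$, a single explicit solution $\psi$ of \eqref{eq:wavekerr} which blows up in $L^2(\mathbb S^2)$ along the Cauchy horizon. By linearity it is enough to find one such solution with smooth, compactly supported data on a spacelike hypersurface $\Sigma$. First, using the semi-classical analysis of the radial o.d.e.\ \eqref{eq:radialintro} in the stable-trapping regime sketched in \cref{sec:introex}, I would extract a countable family of admissible triples $(\omega_n, m_n, \ell_n)$ for which the Wronskian $|\mathfrak W[u_{\mathcal H^+}, u_\infty]|$ satisfies the exponential bound \eqref{eq:decayofwronskian} (quasimode frequencies), with $\omega_n \sim \ell_n$ and $|m_n|\sim \ell_n$. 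I would then prescribe data on $\mathcal H_R$ of the form $\sum_n a_n e^{-i\omega_n v} S_{m_n \ell_n}(a\omega_n, \cos\theta) e^{i m_n \tilde\phi_+}$ with coefficients $a_n$ saturating the tunneling bound \eqref{eq:weightsamln}, and realize this horizon profile as the trace of an actual smooth solution arising from compactly supported data on $\Sigma$ via a finite-speed-of-propagation / truncation argument.

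Second, I would adapt the fixed-frequency interior scattering theory of \cite{kehle2018scattering} to the Kerr--AdS interior in order to transport each frequency component across to $\mathcal{CH}_R$. The expected output is that the reflection map is meromorphic in $\omega$ with a simple pole along the characteristic surface $\omega = \omega_- m$ of the Killing generator $K_- = T + \omega_- \Phi$ of $\mathcal{CH}_R$, taking the form \eqref{eq:reflectioncoefficient} with a residue $\mathfrak r(\omega, m, \ell)$ that is bounded from below and above in the relevant semi-classical regime. Summing the reflected pieces as in \eqref{eq:diphantineintro} and estimating from below by one resonant term per admissible triple would yield
\begin{equation*}
\|\psi\|_{L^2(\mathbb S^2)}^2(u, r) \gtrsim \sum_n \frac{|a_n|^2}{|\omega_n - \omega_- m_n|^2}
\quad \text{as } r \to r_-,
\end{equation*}
which diverges once infinitely many denominators $|\omega_n - \omega_- m_n|$ are super-exponentially small compared to $|a_n|$.

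Third, I would identify $\mathscr P_{\textup{Blow-up}}$ as the preimage under a smooth map $(\mathfrak m, \mathfrak a) \mapsto r(\mathfrak m, \mathfrak a) = \omega_-/c$ of the $\limsup$ set $\mathscr R$ introduced in \eqref{eq:diophantine}. Writing $\mathscr R$ as the countable intersection over $m_0$ of the open dense unions of intervals centered on $(\ell + n/\ell)/m$ gives Baire-genericity via the Baire category theorem, while a direct Borel--Cantelli estimate against the convergent series $\sum_{m,\ell,n} \exp(-\ell)$ gives Lebesgue nullity. A transversality / implicit function argument for $r$ as a function of $(\mathfrak m, \mathfrak a)$ below the Hawking--Reall bound then transfers both properties to $\mathscr P_{\textup{Blow-up}}$.

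The hardest step is the interior analysis: one must prove the simple-pole structure of $\mathfrak R$ and a uniform non-degeneracy lower bound on $|\mathfrak r(\omega_n, m_n, \ell_n)|$ precisely in the regime where $\omega_n$ is exponentially close to $\omega_- m_n$, without access to a global scattering theory across $\omega = \omega_- m$. A related obstacle is that quasimodes are not genuine solutions of \eqref{eq:wavekerr}: the error $F_n$ with $|F_n| \lesssim e^{-n}$ has to be controlled in the interior so that its propagation to $\mathcal{CH}_R$ does not cancel the small-divisor blow-up, which will require separating the resonant contribution from the background via a careful phase-space / Wronskian argument rather than a quasinormal mode decomposition. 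Finally, one must handle the coupling between the radial and angular problems through $\lambda_{m\ell}(a\omega)$ in the semi-classical limit $|m|, \ell \to \infty$ to ensure that the heuristic distribution $\operatorname{Re}(\omega_{m\ell n}) = c(\ell + n/\ell)$ is stable enough to drive the Diophantine construction.
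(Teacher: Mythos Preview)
Your outline follows the \emph{heuristic} route laid out in \cref{sec:introex}--\cref{sec:introdioph}, but the paper's actual proof deliberately departs from it in one crucial respect that you should note. You propose to fix quasimode frequencies $\omega_n$ (near-zeros of $\mathfrak W[u_{\mathcal H^+},u_\infty]$) and let the small divisors be $|\omega_n-\omega_- m_n|$; the paper instead fixes the characteristic frequency $\omega=\omega_- m$ and lets the small divisor be $|\mathfrak W[u_{\mathcal H^+},u_\infty](\omega_- m,m,\ell)|$ itself. This swap matters: it keeps the entire analysis on the real axis, avoids any quasinormal-mode construction or analytic continuation of $\mathfrak R$, and gives a non-Diophantine condition (\cref{eq:defnpblowup}) phrased directly and intrinsically in terms of the Wronskian rather than via an auxiliary ratio $r(\mathfrak m,\mathfrak a)=\omega_-/c$. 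Your definition of $\mathscr P_{\textup{Blow-up}}$ as a preimage of the set $\mathscr R$ under such a ratio map is the weakest point of your plan: the ``constant'' $c$ in the distribution $\operatorname{Re}(\omega_{m\ell n})=c(\ell+n/\ell)$ is only a cartoon, and making a genuine Diophantine set out of the true Bohr--Sommerfeld spectrum while preserving Baire-genericity under pullback would essentially force you to redo the semi-classical analysis of \cref{sec:radialext} anyway.

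A second structural difference: you want to prescribe data on $\mathcal H_R$ and then realize them from a spacelike slice, whereas the paper constructs explicit compactly supported data on $\Sigma_0=\{t=0\}$ (\cref{defn:initialdata}) and computes the (truncated) horizon Fourier transform via a representation formula (\cref{prop:rep}). This sidesteps the exponential tunneling loss you describe in \eqref{eq:weightsamln}: the coefficient $a_{\mathcal H}(\omega_- m_i,m_i,\ell_i)$ ends up \emph{exponentially large} because the Wronskian in the denominator is exponentially small, rather than requiring data that already decay at the tunneling rate. The interior step is then handled not by a full scattering theory but by the single quantitative formula of \cref{prop:mainpropinmainthm}, whose proof needs the resolvent-type bounds on $\partial_\omega S_{m\ell}$ (\cref{eq:smlbounds}) and on $\partial_\omega\mathfrak t,\partial_\omega\mathfrak r$ near $\omega=\omega_- m$---these are precisely the ``uniform non-degeneracy'' and ``angular coupling'' obstacles you flag at the end, and they are where the real work lies.
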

\begin{rmk}\label{rmk:linearisenough}
If there exist initial data $(\psi_0,\psi_1)$ leading to a solution  $\psi$ which blows up as in \eqref{eq:blowupl2}, this then shows that initial data $(\tilde \psi_0, \tilde \psi_1)$ for which the arising solution does not blow up are exceptional in the sense that they obey the following  co-dimension 1 property: The  solution arising from the perturbed data $(\tilde \psi_0+ c  \psi_0, \tilde \psi_1 + c  \psi_1)$ blows up for each $c\in \mathbb R\setminus \{0\}$. This is analogous to the notion of genericity used by Christodoulou in his proof of weak cosmic censorship for the spherically symmetric Einstein-scalar-field system \cite{chr2,chr1}. Thus, \cref{con:2a} gives a formulation of \cref{conj:linearana}.  We note already that we will actually formulate in \cref{rmk:othernotionsofgenericity} another more refined genericity condition for the set of initial data leading to solutions which blow up as in \eqref{eq:blowupl2}.  
\end{rmk}
\begin{rmk}\label{rmk:blowupinamplitue}
Note that in \cref{con:2a}   we have replaced the statement of blow-up in amplitude from \cref{conj:linearana} with a statement about the blow-up of the $L^2(\mathbb S^2)$-norm on the sphere. Indeed, the blow-up of the $L^2(\mathbb S^2)$-norm in \cref{con:2a}, if true, implies that $\|\psi\|_{L^\infty(\mathbb S^2)}(u,r)\to +\infty$ as $r\to r_-$.   In this sense, if \cref{con:2a} is true, the amplitude  also blows up. It is however an interesting and open question whether one may actually replace the $L^\infty(\mathbb S^2)$ blow-up statement in  \eqref{eq:blowupl2} with the pointwise blow-up \begin{align}\lim_{r\to r_-}|\psi(u,r,\theta,\phi^\ast_-)|\to +\infty\end{align} for every $(\theta, \phi^\ast_-) \in \mathbb S^2$. One may even speculate about the geometry of the set of  $(\theta,  \phi_-^\ast) \in \mathbb S^2$ for which  pointwise blow-up holds.
It appears that ultimately one has to quantitatively understand the nodal domains associated to the generalized spheroidal harmonics $S_{m\ell}(a \omega_- m, \cos \theta)$ at the interior scattering poles.
\end{rmk}
\begin{rmk}\label{rmk:conjhdpacking}
Moreover, we  conjecture that the set $\mathscr P_{\textup{Blow-up}}$ has \begin{itemize}
	\item Hausdorff dimension $\dim_H (\mathscr P_{\textup{Blow-up}}) = 1 $,
\item generalized Hausdorff dimension $\dim_{gH} (\mathscr P_{\textup{Blow-up}}) = 1+\log $,
\item  full packing dimension $\dim_P (\mathscr P_{\textup{Blow-up}}) = 2$.
\end{itemize}   
\end{rmk}
Moreover, in view of our discussion we additionally conjecture
 \begin{conjecture}\label{con:2b}\hspace{-4pt}\textup{\textbf{(A)}}
There exists a set $\mathscr{P}_{\textup{Bounded}} \subset \mathscr P$ of dimensionless Kerr--AdS parameters  mass    $\mathfrak m = {M}{\sqrt{-\Lambda}}$ and angular momentum $\mathfrak a =  {a}{\sqrt{-\Lambda}}$ with the following properties
\begin{itemize}
	\item $\mathscr P_{\textup{Bounded}}$ is \textbf{Baire-exceptional} (of first category),
	\item $\mathscr P_{\textup{Bounded}}$ is \textbf{Lebesgue-generic} (full Lebesgue measure),
\end{itemize}
and such that for every  Kerr--AdS black hole with mass $M = \mathfrak m / \sqrt{-\Lambda}$ and specific angular momentum $ a = \mathfrak a /\sqrt{-\Lambda}$, where  $(\mathfrak m, \mathfrak a) \in \mathscr{P}_{\textup{Bounded}}$, all solutions $\psi$ to \eqref{eq:wavekerr}, which arise from smooth and compactly supported initial data $(\psi_0,\psi_1)$ on a spacelike hypersurfaces with Dirichlet boundary conditions at infinity,  \textbf{remain uniformly bounded}   
\begin{align}\label{eq:constantinpbounded}
 	|\psi| \leq C(\mathfrak m, \mathfrak a)  D(\psi_0,\psi_1)
 	\end{align}
 at the Cauchy horizon. Here, $D(\psi_0,\psi_1)$ is a   (higher-order) energy of the initial data and $C(\mathfrak m, \mathfrak a)$ is a constant depending on $\mathfrak m$ and $\mathfrak a$.
 
\noindent  \textup{\textbf{(B)}}
For \textbf{all}  Kerr--AdS black holes with parameters in $\mathscr P$, there exists a solution $\psi$ to \eqref{eq:wavekerr}, which arises from smooth and compactly supported initial data on a spacelike hypersurfaces with Dirichlet boundary conditions at infinity and which \textbf{blows up in energy} 
	\begin{align}
	\| \psi\|_{H^1_\textup{loc}} = + \infty
	\end{align}	at the Cauchy horizon.
\end{conjecture}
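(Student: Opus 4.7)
The plan is to treat the two parts (A) and (B) separately, since they rest on essentially disjoint mechanisms: part (A) is a small divisors estimate controlled quantitatively by a Diophantine hypothesis on the parameters, whereas part (B) is a blue-shift phenomenon that operates for every choice of parameters in $\mathscr{P}$ and is insensitive to number-theoretic properties.

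For (A), the first step is to define $\mathscr{P}_{\textup{Bounded}}$ as the set of $(\mathfrak m, \mathfrak a)$ for which, in the notation of \eqref{eq:diphantineintro}, there exist constants $c,N>0$ with $|\omega_{m\ell n}-\omega_- m|\geq c\,\ell^{-N}$ for all admissible $(m,\ell,n)$. Using the heuristic distribution \eqref{eq:decayinexpl}, this translates to a Diophantine-type condition on the ratio $r(\mathfrak m, \mathfrak a)$; the measure-theoretic half of Khintchine's theorem then yields full Lebesgue measure, while a Baire category argument (complementary to the one applied to $\mathscr R$ in the excerpt) yields first category. For the boundedness estimate, decompose $\psi$ via a $t^\ast$-Fourier transform in time together with a spherical-harmonic expansion, and apply the interior fixed-frequency scattering picture: the trace on $\mathcal{CH}_R$ is obtained from the trace on $\mathcal{H}_R$ through the reflection coefficient \eqref{eq:reflectioncoefficient}, so that $|\mathfrak R(\omega,m,\ell)|\lesssim|\mathfrak r|/|\omega-\omega_- m|\lesssim\ell^{N}$ on $\mathscr{P}_{\textup{Bounded}}$. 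Because smooth, compactly supported data produce horizon Fourier coefficients $a(m,\ell,n)$ decaying faster than any polynomial in $(m,\ell)$, one can afford the polynomial loss $\ell^N$ and still get a convergent Plancherel sum in \eqref{eq:diphantineintro}, yielding
\begin{equation*}
\|\psi\|_{L^2(\mathbb{S}^2)}(u,r)\leq C(\mathfrak m,\mathfrak a)\,D(\psi_0,\psi_1),
\end{equation*}
with $D$ a finite, high-order Sobolev energy of the initial data.

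For (B), the plan is to produce, for every $(\mathfrak m,\mathfrak a)\in\mathscr P$, a single solution with $\|\psi\|_{H^1_{\textup{loc}}}=+\infty$ at $\mathcal{CH}_R$; by linearity of \eqref{eq:wavekerr}, this upgrades to a co-dimension one genericity statement in the spirit of \cref{rmk:linearisenough}. The driving mechanism should be the exponential blue-shift at the Cauchy horizon, which is present whenever the surface gravity $\kappa_-\neq 0$ and is therefore insensitive to Diophantine properties. Concretely, one can adapt the strategy of \cite{Lukreissner,timetranslation} developed for $\Lambda=0$: starting from a suitable lower bound on the $T$-energy flux through $\mathcal{H}_R$ for carefully chosen smooth compactly supported data, an integrated energy estimate in the interior against the non-degenerate $(\partial_u,\partial_v)$-frame near $\mathcal{CH}_R$ forces divergence of the non-degenerate $H^1_{\textup{loc}}$ norm. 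The required horizon lower bound can be manufactured by injecting a single quasimode or quasinormal-mode wave packet whose horizon trace is non-trivial by construction.

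The main obstacle in (A) is that a generic smooth solution cannot be expanded as a discrete sum of quasinormal modes as in the heuristic \eqref{eq:intropsisum}: one must therefore replace the heuristic with a continuous $\omega$-contour Fourier representation, deform past the interior scattering pole $\omega=\omega_- m$, and quantify the contribution of each frequency slice using semiclassical resolvent estimates at the stably trapped set built on the analysis of \cite{gustav,quasimodes,quasinormal_gannot}. The Diophantine condition must then be reformulated as a uniform quantitative bound on the interior resolvent along the deformed contour. In (B), the delicate point is that the exterior decay on Kerr--AdS is merely logarithmic, much slower than the polynomial $\Lambda=0$ regime, so the standard blue-shift amplification estimate must be refined to work from very weak horizon lower bounds; concentrating the constructed datum in the quasimode frequency region of \cref{fig:distributionofqnms}, where exterior decay saturates, appears to be the natural route.
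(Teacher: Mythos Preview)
The statement you are attempting to prove is \cref{con:2b}, which in the paper is an \emph{open conjecture}, not a theorem; the paper contains no proof of it. What the paper does contain is a brief outlook (\cref{sec:outlookonres}) indicating the expected mechanisms: for (B), the paper suggests that a single quasinormal mode whose exponential decay rate is slower than the surface gravity $\kappa_-$ of the Cauchy horizon should suffice, and notes that \cref{thm:rough} already settles (B) on $\mathscr P_{\textup{Blow-up}}$; for (A), the paper points to the exact formula \eqref{eq:formulaforlimit} in \cref{prop:mainpropinmainthm} combined with the methods of \cite{CKehle2019} as the natural starting point. Your proposal is therefore not comparable to a proof in the paper but rather to these heuristics.

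On the substance of your plan, two points deserve comment. For (A), your Diophantine condition $|\omega_{m\ell n}-\omega_- m|\ge c\,\ell^{-N}$ is phrased in terms of the heuristic quasinormal frequencies and the heuristic sum \eqref{eq:diphantineintro}; the paper is explicit that a general smooth solution \emph{cannot} be written as such a sum, and its actual analysis proceeds via the representation \eqref{eq:formulaforlimit}, where the small divisor is $\mathfrak W[u_{\mathcal H^+},u_\infty](\omega_- m,m,\ell)$ rather than $\omega_{m\ell n}-\omega_- m$. Any rigorous boundedness proof would have to formulate the Diophantine condition directly on this Wronskian (or equivalently on the Bohr--Sommerfeld phase), and then control the continuous $\omega$-integral in \eqref{eq:formulaforlimit} together with all of the error terms collected under $\textup{Err}(D)$; your contour-deformation remark gestures at this but does not engage with the structure of that formula.

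For (B), your diagnosis of the ``main obstacle'' is inverted: the logarithmic exterior decay is an \emph{advantage}, not a difficulty, for $H^1$ blow-up, since slower horizon decay makes it easier for the exponential blue-shift to dominate. The actual issue, and the one the paper highlights, is to exhibit for every parameter a quasinormal mode with $|\operatorname{Im}\omega|<\kappa_-$ and to verify that it contributes nontrivially to the transverse derivative at $\mathcal{CH}_R$; this is a spectral existence statement (cf.\ \cite{quasinormal_gannot}) rather than an adaptation of the $\Lambda=0$ energy-lower-bound machinery you cite.
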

\begin{rmk}
In view of \cref{con:2a}, we expect that the constant $C(\mathfrak m, \mathfrak a)$ appearing in \eqref{eq:constantinpbounded} to be unbounded on any open set of parameters  in the sense that
\begin{align}\sup_{(\mathfrak m, \mathfrak a) \in \mathcal U\cap \mathscr{P}_{\textup{Bounded}}} C(\mathfrak m, \mathfrak a) = + \infty\end{align}
for any non-empty open $\mathcal U \subset \mathscr P$.
\end{rmk}
%\paragraph{Blow-up on every point on the sphere?}
% Recall from \cref{rmk:blowupinamplitue} that in \cref{thm:rough} we show that the amplitude blows up in the sense that $\lim_{r\to r_-}\|\psi\|_{L^\infty(\mathbb S^2)}(u,r)\to +\infty$. 
\paragraph{More general boundary conditions and Klein--Gordon masses}
The above conjectures are both stated for Dirichlet conditions at infinity.  Neumann conditions are also natural to consider and indeed well-posedness was proved in \cite{warnick_massive_wave,warnick_boundedness_and_growth}. For Neumann conditions we also expect the same behavior as for the case of Dirichlet boundary conditions. For other more general conditions, it may be the case that linear waves  grow exponentially (as for suitable Robin boundary conditions \cite{warnick_boundedness_and_growth}) or on the other hand  even decay superpolynomially as is the case for purely outgoing conditions \cite{MR4072237}. For even more general  boundary conditions, even well-posedness may be open.

In this paper we have focused on scalar perturbations satisfying \eqref{eq:wavekerr}. In particular, the choice of the Klein--Gordon mass parameter $\mu = \frac 23 \Lambda$  (``conformal coupling'') is the most natural as it arises from the linear scalar analog of \eqref{eq:Einstein} and also remains regular at infinity. However, in certain situations it may also be interesting to consider more general Klein--Gordon masses $\mu$ satisfying the Breitenlohner--Friedman  \cite{breitenlohner} bound  $\mu > \frac{3}{4}\Lambda$. We also expect \cref{con:2a} and \cref{con:2b} to hold for Klein--Gordon masses for which the exterior is linearly stable, i.e. for $\mu > \frac{3}{4}\Lambda$ in the case of Dirichlet boundary conditions, and for  $ \frac 34 \Lambda < \mu < \frac{5}{12} \Lambda$ together with additional assumptions in the case of Neumann boundary conditions \cite{warnick_boundedness_and_growth}.
\paragraph{Regularity of the initial data}
We stated \cref{con:2a} and \cref{con:2b} for smooth ($C^\infty$) initial data. One can also consider classes of initial data which are more regular (e.g.\ Gevrey or analytic) or less regular (e.g.\  Sobolev). 
From our heuristics, we expect that the analogs of \cref{con:2a} and  \cref{con:2b} remain valid both for rougher data in some suitably weighted Sobolev space (see  \cite{gustav}) and more regular data of Gevrey regularity with index $\sigma >1$ and analytic data ($\sigma=1$). 
%We also expect that the results are sensitive to regularity of the initial data in the $\partial_\phi$-direction. 
%thus, 
Only in the exceptional and most regular case of initial data with Gevrey regularity $\sigma <1$ (note that this is more regular than analytic) in the angular direction $\partial_\phi$, we expect the analog of \cref{con:2a} to break down. In particular, for axisymmetric data (or data supported only on finitely many azimuthal modes $m$), we expect the arising solution to remain uniformly bounded at the Cauchy horizon for \emph{all} parameters in $\mathscr P$. 
\subsection{\texorpdfstring{\cref{thm:rough}}{Theorem~1}: \texorpdfstring{\cref{con:2a}}{Conjecture 5} is true}
\label{sec:intromainthm}
Our main result is the following   resolution of \cref{con:2a}. 

\addvspace{\medskipamount}
\begin{theorem}\label{thm:rough}
	\cref{con:2a} is true.
\end{theorem}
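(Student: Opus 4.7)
The plan is to convert the heuristics of \cref{sec:introex}--\cref{sec:introdioph} into a genuine construction of a blow-up solution, \emph{without} relying on a full quasinormal mode decomposition or scattering theory (both of which are delicate for Kerr--AdS). First I would perform the separation of variables \eqref{eq:qnms}, reducing the problem to the radial o.d.e.\ \eqref{eq:radialintro} coupled to the angular eigenvalue problem \eqref{eq:angintro}. Working exclusively with fixed frequencies $(\omega,m,\ell)$ on the \emph{real} axis, I would then define two analyses, one on the exterior and one on the interior, which will eventually be glued through the bifurcation sphere of the event horizon.

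On the exterior, I would adapt the potential analysis of \cite{quasimodes} to produce, for suitable $(m,\ell)$ with $|m|\sim\ell$ large, an interval of real quasimode frequencies on which the Wronskian $\mathfrak W[u_{\mathcal H^+},u_\infty]$ satisfies the exponential smallness \eqref{eq:decayofwronskian}; via Bohr--Sommerfeld quantization and the semi-classical heuristic of \cref{fig:potential_intro}, this yields $N(m,\ell)\sim\ell$ quasimode frequencies $\omega_{m\ell n}$ with $c\ell\le |\omega_{m\ell n}|\le C\ell$ and a quantitative lower bound on the ``tunneling'' amplitude of the corresponding radial profile $u(r_+,\omega_{m\ell n},m,\ell)$ at the event horizon. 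On the interior I would solve the radial equation by separation from the right event horizon $\mathcal H_R$ using the vector field $K_-=T+\omega_-\Phi$ as the generator defining the outgoing data at $\mathcal{CH}_R$, and prove the key estimate: the fixed-frequency reflection coefficient $\mathfrak R(\omega,m,\ell)$ obeys the pole structure \eqref{eq:reflectioncoefficient} with $|\mathfrak r(\omega_- m,m,\ell)|$ bounded below uniformly in the relevant high-frequency regime. Here the only subtlety not present in \cite{kehle2018scattering} is that for Kerr--AdS the characteristic set $\{\omega-\omega_- m=0\}$ is not frequency-isolated, so the pole estimate must be made uniform in $(m,\ell)\to\infty$.

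Given these two inputs, I would build the blow-up solution as a superposition
\begin{align*}
\psi=\sum_{(m,\ell,n)} a(m,\ell,n)\,\psi_{m\ell n},
\end{align*}
where $\psi_{m\ell n}$ is the mode solution at the quasimode frequency $\omega_{m\ell n}$, extended across $\mathcal H_R$ into the interior by the fixed-frequency scattering described above, and where the weights $a(m,\ell,n)$ are chosen so that (i) the sequence $|a(m,\ell,n)|$ decays faster than any polynomial in $(m,\ell)$ so that the data are smooth of compact support on a $\{t^\ast=\mathrm{const.}\}$ slice after cutoff and use of a domain-of-dependence argument, yet (ii) the Cauchy horizon trace obeys the orthogonality-based identity
\begin{align*}
\|\psi\!\restriction_{\mathcal{CH}_R}\|_{L^2(\mathbb S^2)}^2(u)\;\gtrsim\;\sum_{(m,\ell,n)}\frac{|a(m,\ell,n)|^2}{|\omega_{m\ell n}-\omega_- m|^2},
\end{align*}
exhibiting the small-divisor structure \eqref{eq:diphantineintro}. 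The parameter set $\mathscr P_{\textup{Blow-up}}$ is then defined through the non-Diophantine condition that infinitely many $(m,\ell,n)$ yield $|\omega_{m\ell n}-\omega_- m|$ smaller than the reciprocal of the polynomial weight of $a(m,\ell,n)$, which translates, via the uniform distribution of $\operatorname{Re}(\omega_{m\ell n})$, into the condition $r(\mathfrak m,\mathfrak a)\in\mathscr R$ in \eqref{eq:diophantine}.

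The Baire-generic and Lebesgue-exceptional properties of $\mathscr P_{\textup{Blow-up}}$ then follow by writing it as a $\limsup$ of open dense sets (Baire's theorem gives second category) and by a direct Borel--Cantelli estimate on the total length of the approximating intervals (giving zero Lebesgue measure), exactly as sketched after \eqref{eq:diophantine}; the only work is to check that the map $(\mathfrak m,\mathfrak a)\mapsto r(\mathfrak m,\mathfrak a)$ is, locally in $\mathscr P$, a smooth submersion, so that preimages of Baire-generic/Lebesgue-null sets retain these properties. The main obstacle is step two: rigorously establishing the pole structure of $\mathfrak R$ \emph{with lower bounds} uniform across the infinite family of quasimode frequencies, because the standard estimates in \cite{kehle2018scattering} give only a single characteristic frequency, whereas here the characteristic set intersects the high-frequency trapping regime non-trivially and one must control the constants $\mathfrak r(\omega_{m\ell n},m,\ell)$ from below independently of $(m,\ell,n)$. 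A secondary technical difficulty is ensuring that the formal superposition defines genuine $C^\infty_c$ Cauchy data: this requires propagating the mode decomposition backwards from a $\{t^\ast=\mathrm{const.}\}$ slice to a spacelike hypersurface $\Sigma$ using well-posedness and a truncation argument, and verifying that the exponential smallness of the coefficients absorbs the polynomial losses inherent in the trace and angular-eigenfunction estimates.
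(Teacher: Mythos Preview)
Your proposal follows the heuristic outline of \cref{sec:introex}--\cref{sec:introdioph} rather closely, but that outline is explicitly flagged in \cref{sec:breifdesc} as \emph{not} the actual route taken, and for a reason that creates a genuine gap in your plan. You propose to build $\psi$ as a superposition $\sum a(m,\ell,n)\psi_{m\ell n}$ of ``mode solutions at the quasimode frequency $\omega_{m\ell n}$'' on the real axis, extended into the interior by fixed-frequency scattering. But no such mode solutions exist: for real $\omega$ the Wronskian $\mathfrak W[u_{\mathcal H^+},u_\infty]$ never vanishes (this is the absence of time-periodic solutions, cf.\ the discussion after \eqref{eq:decayofwronskian}), so there is no $\psi_{m\ell n}$ satisfying both boundary conditions. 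A quasimode is only an approximate solution with $O(e^{-\ell})$ error, and if you evolve quasimode initial data the resulting true solution is not fixed-frequency, so you cannot simply multiply by $\mathfrak R(\omega_{m\ell n})$ to obtain the Cauchy-horizon trace. The identity $\|\psi\restriction_{\mathcal{CH}_R}\|_{L^2(\mathbb S^2)}^2\gtrsim\sum |a|^2/|\omega_{m\ell n}-\omega_- m|^2$ therefore has no rigorous meaning in your setup.

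The paper resolves this by \emph{exchanging the roles} of the two small quantities: instead of fixing $\omega=\omega_{m\ell n}$ and measuring $|\omega_{m\ell n}-\omega_- m|$, one fixes $\omega=\omega_- m$ (the interior scattering pole) and the small divisor becomes $1/\mathfrak W[u_{\mathcal H^+},u_\infty](\omega_- m,m,\ell)$, which appears naturally in the representation of the Fourier transform $\mathcal F_{\mathcal H}[\psi\restriction_{\mathcal H}]$ of a genuine solution along the event horizon (see \eqref{eq:fouriertransfromoneventhorizon} and \cref{prop:rep}). The initial data are not a mode superposition but are constructed directly as compactly supported functions on $\Sigma_0$ (\cref{defn:initialdata}). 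The non-Diophantine condition is then formulated directly on this Wronskian (\cref{eq:defnpblowup}), not via a ratio $r(\mathfrak m,\mathfrak a)$ and a submersion argument; Baire-genericity requires producing, for each neighbourhood in $\mathscr P$, a parameter where an auxiliary Wronskian $\mathfrak W_2$ vanishes (\cref{prop:density}), which in turn needs the quantitative semi-classical analysis of \cref{sec:radialext}. Finally, the interior step is not merely a lower bound on $\mathfrak r$: the main analytic work is the formula \eqref{eq:formulaforlimit}, whose proof needs uniform resolvent-type bounds on $\partial_\omega S_{m\ell}$ near the poles (\cref{eq:smlbounds}) and on the interior radial solutions (\cref{sec:interior}), none of which your outline anticipates.
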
 

\addvspace{\medskipamount}
The proof of  \cref{thm:rough} will be given in \cref{sec:mainthmkerr}.
\begin{rmk}\label{rmk:othernotionsofgenericity}
In the proof of \cref{thm:rough} we will not only construct a single solution which blows up leading to genericity of initial data as in \cref{rmk:linearisenough} but  we will actually obtain what is perhaps a more satisfying genericity condition on the initial data which are smooth and of compact support. We formulate this condition in  \cref{cor:genericity} in \cref{sec:mainthmkerr}. 
\end{rmk}

\begin{rmk}
	We also prove in \cref{sec:mainthmkerr} the statement about the packing dimension of $\mathscr P_{\textup{Blow-up}}$ as conjectured in  \cref{rmk:conjhdpacking}. The statements concerning the Hausdorff dimension, however, remain open.
\end{rmk}
\begin{rmk}
	In principle, our proof is expected to also apply to Neumann boundary conditions as well as to more general Klein--Gordon masses satisfying the Breitenlohner--Friedman bound \cite{breitenlohner} as discussed at the end of \cref{sec:introconjectures}.
\end{rmk}

\subsection{Outlook on \texorpdfstring{\cref{con:2b}}{Conjecture~6}}\label{sec:outlookonres}
We also expect that our methods provide a possible framework for the  resolution of \cref{con:2b}. 

 First, note that the blow-up statement of \cref{thm:rough} is strictly stronger than the $H^1_\textup{loc}$ blow-up conjectured in \cref{con:2b}(B). Thus, \cref{thm:rough}  shows that \cref{con:2b}(B) is  true for black hole parameters in the set $\mathscr P_{\textup{Blow-up}}$. For parameters not contained in $\mathscr P_{\textup{Blow-up}}$, we expect that a quasinormal mode which decays at a sufficiently slow exponential decay rate compared to the surface gravity of the Cauchy horizon will blow up in energy at the Cauchy horizon. This would show \cref{con:2b}(B).
Towards \cref{con:2b}(A),  we note that our proof, particularly   formula \eqref{eq:formulaforlimit} of \cref{prop:mainpropinmainthm}, reveals the main obstruction for boundedness. Together with the methods used  in  \cite{CKehle2019} for the Reissner--Nordström--AdS case, this can serve as a starting point for a resolution of \cref{con:2b}(A). 

\subsection{Turning the heuristics of \texorpdfstring{\cref{sec:introdioph}}{Section~1.4} into a proof of \texorpdfstring{\cref{thm:rough}}{Theorem~1}}
\label{sec:breifdesc}
We will now outline how we turn our heuristics of \cref{sec:introdioph} into a proof  of \cref{con:2a}, i.e.~\cref{thm:rough}.

 We are interested in constructing a solution of \eqref{eq:wavekerr}, arising from smooth and compactly supported initial data, which blows up as in  \eqref{eq:blowupl2}, if the dimensionless parameters $\mathfrak m, \mathfrak a$ satisfy a certain non-Diophantine condition. 

We remark that unlike in our heuristic discussion, we will not make use of quasinormal modes and the frequency analysis will be purely based on the real axis with $\omega\in \mathbb R$. Indeed, our approach can be interpreted as replacing quasinormal modes with quasimodes. This will also manifest itself in the fact that the roles of $\frac{1}{\mathfrak W[u_{\mathcal H^+},u_\infty](\omega,m,\ell)}$ and $\frac{1}{\omega - \omega_- m}$ will be changed: In the heuristic analysis, we considered the quasinormal mode frequencies $\omega_{m\ell n}$ which are (complex) roots of the Wronskian $\mathfrak W[u_{\mathcal H^+},u_\infty] $ and the small divisors came from $\frac{1}{\omega_{m\ell n} - \omega_- m}$. In the actual proof of \cref{thm:rough}, we will instead consider the real frequencies $\omega=\omega_- m$ (i.e.\ the roots of $\omega-\omega_- m =0$) and as we will see, the small divisors will then appear from the Wronskian evaluated at the characteristic frequency $\frac{1}{\mathfrak W[u_{\mathcal H^+},u_\infty](\omega=\omega_- m,m,\ell)}$. Note that the divisor $|\mathfrak W[u_{\mathcal H^+},u_\infty](\omega=\omega_- m,m,\ell)|$ is small exactly if there exists a quasimode with frequency $(\omega=\omega_- m, m,\ell)$. In view of the distribution of the quasimode frequencies discussed in \cref{sec:introex}, this will lead  to a (generalized) non-Diophantine condition which we will address in more detail further below.
\paragraph{Initial data and exterior analysis (\texorpdfstring{\cref{sec:initialdata} and \cref{sec:exterioranalysis}}{Section~6 and Section~7})}
We begin our discussion with our choice of initial data. In \cref{sec:initialdata} we will carefully impose smooth and compactly supported initial data $\Psi_0,\Psi_1 \in C_c^\infty(\Sigma_0)$ for \eqref{eq:wavekerr} on the spacelike hypersurface $\Sigma_0 = \{ t=0\}$ which---with foresight---will be chosen to satisfy 
 \begin{align}\label{eq:decaywrintro}
 \left| G(\Psi_0,\Psi_1,m_i,\ell_i)\right|  \gtrsim e^{-m_i^{\frac 13}}
 \end{align}
 for  suitable infinite sequences $m_i,\ell_i$,  where 
 \begin{align}\nonumber
 G(\Psi_0,\Psi_1,m,\ell):= 	\int_{\Sigma_0} u_\infty(r,\omega = \omega_-m, m,\ell)& S_{m\ell}(a\omega = a\omega_-m ,\cos\theta) e^{-im\phi} \\
 &\times  H(\Psi_0,\Psi_1)(\omega_- m, r,\theta,\phi) \dvol_{\Sigma_0},
\end{align}
and
\begin{align} H(\Psi_0,\Psi_1)(\omega, r,\theta,\phi) =  \frac{\Sigma}{r^2 \sqrt{r^2+a^2}}  \left( - \sqrt{-g^{tt}} \Psi_1 - i \omega g^{tt} \Psi_0 + g^{t\phi} \partial_\phi \Psi_0 \right). \end{align}
We also recall that $u_\infty(r,\omega,m,\ell)$ is the solution to the radial o.d.e.\ adapted to the Dirichlet boundary condition at $\mathcal I$. 

Complementing the data with vanishing data on $\mathcal H_L \cup \mathcal B_{\mathcal H}$, the data $\Psi_0,\Psi_1$ define a solution on the black hole interior. Different from our heuristic discussion with quasinormal modes (i.e. fixed frequency solutions) in  \cref{sec:introdioph}, in the present case, we do need to consider the analog of a ``full'' scattering operator $\mathfrak S_{\mathcal H_R \to \mathcal{CH}_R}$ from the event horizon to the Cauchy horizon which would  be of the form 
\begin{align}\label{eq:scatteringkerrads}
\mathfrak S_{\mathcal H_R \to \mathcal{CH}_R} = \mathcal{F}_{\mathcal{CH}}^{-1}\circ  \mathfrak R(\omega,m,\ell)\circ \mathcal F_{\mathcal{H}}  = \mathcal{F}_{\mathcal{CH}}^{-1}\circ   \frac{\mathfrak r (\omega,m,\ell)  }{\omega- \omega_- m}\circ \mathcal F_{\mathcal{H}} ,
\end{align}
where $\mathcal F_{\mathcal{H}}$ and $\mathcal F_{\mathcal{CH}}$ represent (generalized) Fourier transforms along the event and Cauchy horizon, respectively. 
Thus, from the exterior, we need to determine the generalized Fourier transform  $\mathcal F_{\mathcal H} [\psi\restriction_\mathcal H]$ along the event horizon. 
Such a characterization in terms of the chosen initial data from above is the content of \cref{sec:exterioranalysis}. While in the actual proof (see already \cref{prop:rep}), we will use a  suitably truncated generalized Fourier transform, we may formally think of  $\mathcal F_{\mathcal H} [\psi\restriction_\mathcal H]$ as having the form
\begin{align}\label{eq:fouriertransfromoneventhorizon}
\mathcal F_{\mathcal H} [\psi\restriction_\mathcal H] (\omega, m,\ell) \sim \frac{ \int_{\Sigma_0} u_{\infty}(r,\omega,m,\ell)  S_{m\ell}(a\omega ,\cos\theta) e^{-im\phi} H(\Psi_0,\Psi_1)(\omega, r,\theta,\phi) \dvol_{\Sigma_0}}{\mathfrak W[u_{{\mathcal H^+}}, u_{\infty} ](\omega,m,\ell) }.
\end{align}
We already remark that a consequence of the smoothness of the initial data is that $G(\Psi_0,\Psi_1,m,\ell)$ decays superpolynomially in $m$ and $\ell$, cf.\ \eqref{eq:superpolynomialdecayofinitialdata} in \cref{sec:initialdata}.

\paragraph{Interior analysis (\texorpdfstring{\cref{sec:interior},  \cref{sec:mainthmkerr} and some of \cref{sec:angularode}}{Section~8, Section~9 and some of Section~3})} 
Turning to the interior analysis, we recall from our heuristic discussion in \cref{sec:introin} that the analog of the scattering operator  \eqref{eq:scatteringkerrads} from the event to the Cauchy horizon has  poles at the characteristic frequencies $\omega- \omega_- m=0$ with respect to $K_-$. In our heuristic discussion in \cref{sec:introdioph} based on quasinormal modes and fixed frequency scattering, these poles formally lead to \eqref{eq:diphantineintro}. In the actual proof, based on frequency analysis on the real axis, the scattering poles become evident in formula \eqref{eq:formulaforlimit} stated in \cref{prop:mainpropinmainthm} which roughly translates to the statement that, as $r\to r_-  $, we have
\begin{align}\label{eq:rough}
 \| \psi(u_0, r) \|_{L^2(\mathbb S^2)}^2 \sim  \sum_{m\ell}   |\mathfrak r(\omega=\omega_-m,m,\ell)|^2 \left|\mathcal F_{\mathcal H} [\psi\restriction_\mathcal H  ]   (\omega = \omega_- m,m,\ell)\right|^2 + \textup{Err}(D),
\end{align}
where $\textup{Err}(D)$ is uniformly bounded by an (higher order)  energy of the initial data. (Note that in the actual statement of \cref{prop:mainpropinmainthm}, the Fourier transform  along the horizon $\mathcal{F}_{\mathcal{H}} [\psi\restriction_{\mathcal H}]$ appearing in \eqref{eq:rough} is  the truncated Fourier transform   $a_{\mathcal H}^{R_n} (\omega = \omega_- m,m,\ell)  = \mathcal{F}_{\mathcal{H}} [\psi\restriction_{\mathcal H}\mathbb 1_{v\leq R_n}]$ for $R_n = 2r_n^\ast - u_0 + \tilde c $.)

  Both the proof and the use of \eqref{eq:rough} lie at the heart of the proof of \cref{thm:rough}. The proof of \eqref{eq:rough} is technical and combines physical space methods with techniques from harmonic analysis.
 One of the key technical steps  (see \cref{eq:smlbounds} in \cref{sec:angularode}) is a quantitative bound (see already \eqref{eq:formulaonsml}) on the derivative of the generalized spheroidal harmonics
 \begin{align} \sup_{|a\omega-a\omega_- m| < \frac{1}{m}}\| \partial_\omega S_{m\ell}(a \omega)\|^2_{L^2} \lesssim  m\label{eq:spheroidalboundintro}\end{align}
  near the interior scattering poles $\omega = \omega_- m$. The proof of \eqref{eq:spheroidalboundintro} relies on uniform bounds (in $m,\ell$ and $\omega \approx \omega_- m$) on the  resolvent of the associated singular Sturm--Liouville operator, see the discussion in  \cref{sec:boundsonpartialxlambda}. These bounds are shown by   constructing and estimating the   integral kernel of the resolvent using suitable approximations with parabolic cylinder functions and Airy functions. For solutions of the radial o.d.e.\ in the interior, the analogous resolvent bounds   are shown in  \cref{eq:radialodeinterior}. Their proofs rely on  semi-classical approximations and estimates on Volterra integral equations. 
  
  Further ingredients to control the error term  $\textup{Err}(D)$ in \eqref{eq:rough}  are  uniform bounds on the transmission and reflection coefficients $\mathfrak T(\omega,m,\ell)$ and $\mathfrak R(\omega,m,\ell)$ for frequencies which are bounded away from the characteristic frequency $\omega=\omega_- m$.

\paragraph{Combining the exterior with the interior: Occurrence of small divisors and the proof of \texorpdfstring{\cref{thm:rough}}{Theorem~1} (\texorpdfstring{\cref{sec:mainthmkerr}}{Section~9})} 
We will now connect the exterior analysis  to the interior.
In particular, formally plugging \eqref{eq:fouriertransfromoneventhorizon} into \eqref{eq:rough} and noting that \begin{align} \mathcal F_{\mathcal H} [\psi\restriction_\mathcal H] (\omega=\omega_- m , m,\ell) \sim \frac{ G(\Psi_0,\Psi_1,m,\ell)}{\mathfrak W[u_{{\mathcal H^+}}, u_{\infty} ](\omega=\omega_- m,m,\ell) } ,\end{align} yields in the limit $r\to r_-$ that
\begin{align}\label{eq:thethingtoblowup}
 \| \psi(u_0, r) \|_{L^2(\mathbb S^2)}^2 \sim  \sum_{m\ell}   \frac{ |m|^2 |G(\Psi_0,\Psi_1,m,\ell)|^2}{|\mathfrak W[u_{{\mathcal H^+}}, u_{\infty} ](\omega=\omega_- m,m,\ell) |^2} + \textup{Err}(D),
\end{align}
where we also used that the (renormalized) reflection coefficient  satisfies $|\mathfrak r (\omega=\omega_- m, m,\ell)|\sim |m|$ which we will show in \cref{lem:lowerboundont}. Also recall from before that the error term $|\textup{Err}(D)|$ is shown to remain uniformly bounded as $r\to r_-$.
Remark that in the actual proof we will not quite show \eqref{eq:thethingtoblowup} but rather obtain  \eqref{eq:thefinalformulaforblowup} which corresponds to \eqref{eq:thethingtoblowup}   in a certain limiting sense. We also recall from the discussion of the exterior analysis that the term $|m|^2 |G(\Psi_0,\Psi_1,m,\ell)|^2$ 
which appears in the sum of \eqref{eq:thethingtoblowup} as the numerator,  decays superpolynomially in $m$ and $\ell$. 
Thus, at least formally, in order to show blow-up for \eqref{eq:thethingtoblowup}, it is necessary that small divisors in \eqref{eq:thethingtoblowup} occur infinitely often, i.e.\  that the Wronskian evaluated at the interior scattering poles $\mathfrak W[u_{{\mathcal H^+}}, u_{\infty} ](\omega=\omega_-m,m,\ell)$ (cf.\ \cref{sec:introex}) decays (at least) superpolynomially for infinitely many $(m,\ell)$. In our  proof, we will actually require from the black hole parameters $\mathfrak m, \mathfrak a$ that this Wronskian decays exponentially \begin{align}|\mathfrak W[u_{{\mathcal H^+}}, u_{\infty} ](\omega=\omega_-m,m,\ell)| \leq e^{-m^{\frac 12}}  \text{ for infinitely many } (m,\ell). \label{eq:formuladecayintro}\end{align}   
%(Connecting this to our previous heuristic discussion  with quasimodes in   Sections~\ref{sec:introex}--\ref{sec:introdioph}, this can be interpreted as the statement that there exist infinitely many quasimodes with frequency $\omega = \omega_- m$. This also implies that there exists infinitely many quasinormal modes exponentially close to the interior scattering poles $\omega =\omega_- m$.)

Before we address the validity of \eqref{eq:formuladecayintro}, we will assume for a moment that indeed the black hole parameters $\mathfrak m, \mathfrak a$ are such that \eqref{eq:formuladecayintro} holds true. Then, explicitly choose that the subsequences $m_i$ and $\ell_i$ in \eqref{eq:decaywrintro}    coincide with the infinite sequences which fulfill  \eqref{eq:formuladecayintro}. Then, we formally obtain the blow-up result of  \cref{thm:rough} as  
\begin{align}
\nonumber
\lim_{r\to r_-} \| \psi(u_0, r) \|_{L^2(\mathbb S^2)}^2 & \sim  \sum_{i \in \mathbb N}   |m_i|^2 \frac{\left|G(\Psi_0,\Psi_1,m_i,\ell_i)\right|^2 }{|\mathfrak W[u_{{\mathcal H^+}}, u_{\infty} ] (\omega = \omega_- m_i ,m_i,\ell_i)|^2 }  \\ &
\sim   \sum_{i \in \mathbb N}   |m_i|^2 \frac{ |e^{- m_i^{\frac 13}}  |^2}{| e^{-m_i^{\frac 12}} |^2 } =+\infty.\label{eq:thethingtoblowupactually}
\end{align}
 Similarly to the remark before,  in reality, \eqref{eq:thethingtoblowupactually} holds true only in a certain limiting sense, cf.\ \eqref{eq:thefinalformulaforblowup}--\eqref{eq:thefinalformula2} of \cref{sec:mainthmkerr}. Already from \eqref{eq:thethingtoblowupactually} and \eqref{eq:formuladecayintro} we obtain the following genericity condition 
  \begin{align} 
 \sum_{i\in \mathbb N} |m_i e^{  m_i^\frac 12}|^2 |G(  \Psi_0,  \Psi_1,m_i,\ell_i)|^2 = +\infty
 \end{align}
 on the initial data leading to blow-up. This will be formulated as \cref{cor:genericity} in \cref{sec:mainthmkerr}.
\paragraph{The non-Diophantine condition and its relation to quasimodes (\texorpdfstring{\cref{sec:radialext}, \cref{sec:pblowup}, and some   of \cref{sec:angularode}}{Section~4, Section~5, and some of Section~3})} 
Finally, this leaves us to address the question of whether the small divisors in \eqref{eq:thethingtoblowup} actually appear infinitely often, more precisely, whether \eqref{eq:formuladecayintro} holds true. The condition \eqref{eq:formuladecayintro}   constitutes a generalized non-Diophantine condition on the black hole parameters $\mathfrak m, \mathfrak a$ in view of its relation to the (discrete) Bohr--Sommerfeld quantization conditions from our heuristic discussion \cref{sec:introex}. In our actual proof, the non-Diophantine condition which we define in \cref{eq:defnpblowup} in \cref{sec:pblowup} is more  technical than \eqref{eq:formuladecayintro}, though \eqref{eq:formuladecayintro} should be considered as its key property. 
%Moreover, note that in contrast to the Diophantine condition \eqref{eq:diophantine} of our heuristic discussion of \cref{sec:introdioph},  the Diophantine condition  of our actual proof as in \cref{eq:defnpblowup} is implicit and a proof of its genericity properties is intricate.
 We denote the set of  dimensionless black hole parameters $\mathfrak m,\mathfrak a$ which satisfy the condition with $\mathscr{P}_\textup{Blow-up}$. 
The statement that is $\mathscr{P}_\textup{Blow-up}$ is  Baire-generic but Lebesgue-exceptional is the content of \cref{sec:quasimodes} and \cref{sec:lebesgue}, respectively. Both proofs crucially rely on estimates developed in \cref{sec:radialext}.

Connecting to the discussion of quasimodes before, we note that the non-Diophantine condition of \eqref{eq:formuladecayintro} can be interpreted as the statement that there exist infinitely many quasimodes with frequency $\omega = \omega_- m$. This also implies that there exist infinitely many quasinormal modes with (complex) frequencies $\omega$ exponentially close to $\omega=\omega_- m$. However, note that quasimodes are more robust to perturbations in the sense that if $\omega, m,\ell$ are frequencies of a quasimode,  there exists a (exponentially small) neighborhood of $\omega$ such that for each $\tilde \omega$  in that neighborhood, the frequencies $\tilde \omega, m,\ell$ would also describe a quasimode. It is also  this robustness which is a key advantage of quasimodes over an approach based on quasinormal modes as in the heuristic discussion in \cref{sec:introdioph}.

\subsection{Outline of the paper}
\label{sec:outlineofc3}
In \cref{sec:prelimskerr} we set up the Kerr--AdS spacetime, recall the well-posedness of \eqref{eq:wavekerr} and the decay statement for solutions on the exterior. We also introduce Carter's separation of variables. \cref{sec:angularode} is devoted to the analysis of the angular o.d.e. In \cref{sec:radialext} we analyze the radial o.d.e.\ on the exterior and introduce suitable solutions of the radial o.d.e.\ associated to trapping at the interior  scattering  poles $\omega = \omega_- m$. With the estimates from \cref{sec:radialext} in hand, we define the set $\mathscr P_{\textup{Blow-up}}$ in \cref{sec:pblowup} and  show its topological and metric properties. 

Then, for arbitrary but fixed parameters $\mathfrak p \in \mathscr P_{\textup{Blow-up}}$ we define suitable compactly supported and smooth initial data in \cref{sec:initialdata}. In \cref{sec:exterioranalysis} we treat the exterior problem and conclude with a representation formula of the solution along the horizon in terms of the initial data. In \cref{sec:interior} we first show suitable estimates for solutions of the radial o.d.e.\  in the interior before we finally conclude the paper with the proof of \cref{thm:rough} in \cref{sec:mainthmkerr}.
\subsection{Acknowledgements}
The author would like to express his gratitude to his advisor Mihalis Dafermos for his support and many valuable comments on the manuscript.  The author also thanks Harvey Reall, Igor Rodnianski, Yakov Shlapentokh-Rothman and Claude Warnick for insightful discussions and helpful remarks. The author is also grateful to two anonymous referees for several helpful comments and remarks to improve the manuscript. This work was supported by the EPSRC grant EP/L016516/1, by Dr. Max Rössler, the Walter Haefner Foundation and the ETH Zürich Foundation.. The author thanks Princeton University for hosting him as a VSRC.
\section{Preliminaries}\label{sec:prelimskerr}
	\subsection{Fractal measures and dimensions}\label{sec:fractalmeas}
	\subsubsection{Hausdorff and Packing measures}
	We begin by introducing the Hausdorff and packing measure. We refer to the monograph \cite{MR3236784} for a more detailed discussion.  For an increasing \emph{dimension function} $f\colon [0,\infty) \to [0,\infty)$ we define the   Hausdorff measure $ H^f(A)$ of a set $A$ as 
	\begin{align}
	H^f(A):= \sup_{\delta >0} H^f_{\delta}(A), 
	\end{align}
	where 
	\begin{align}
	H^f_{\delta}(A) := \inf\{ \sum_{i=1}^\infty f(\textup{diam}(U_i)) \colon  \{U_i\}_{i=1}^\infty \text{ countable cover of } A, \textup{diam}(U_i) \leq \delta \}.
	\end{align}
	If $f(r) = r^s$, we write $H^s=H^{r^s}$ and for $s\in \mathbb N$, the measure $H^s$ reduces to the Lebesgue measure up to some normalization. 
	While the Hausdorff measure quantifies the size of a set by approximation it from outside via efficient coverings, we also recall the dual notation: The \emph{packing measure} quantifies the size of sets by placing as many disjoint balls with centers contained in the set. Again, for a dimension function $f$, we define the pre-measure
	\begin{align}\nonumber
	P_0^f(A) := \limsup_{\delta\to 0}  \big\{&\sum_{i=1}^\infty f(\textup{diam}(B_i))\colon \{B_i\}_{i=1}^\infty  \text{  collection of closed,  }\\&
	\text{ pairwise disjoint balls with }\textup{diam}(B_i)\leq \delta \text{ and centers in } A \big\}
	\end{align}
	and finally the packing  measure as 
	\begin{align}
	P^f(A) := \inf\left\{ \sum_{i=1}^\infty P_0^f(A_i)\colon A \subset \bigcup_{i=1}^\infty  A_i  \right\}.
	\end{align}
\subsubsection{Hausdorff and Packing dimensions}
	For $f(r)=r^s$ Hausdorff and Packing dimensions $\dim_H$ and $\dim_P$ are now characterized as the jump value, where the respective measure jumps from 0 to $\infty$, more precisely
	\begin{align}
			\dim_H (A) = \sup\{ s \colon   H^s(A) =0 \}, \;\; \dim_P (A) = \sup\{ s \colon   P^s(A) =0 \}.
	\end{align}
	We also say that a set $A$ has generalized Hausdorff dimension $\dim_{gH}(A) = s+\log $ if the jump appears for the dimension function $f(r) = r^s \log^t(r)$ for some $t>0$.
	\subsection{Kerr--AdS spacetime}
	\subsubsection{Parameter space}
	\label{sec:parameterspace}
	We let the value of the cosmological constant $\Lambda <0$ be \emph{fixed} throughout the paper. For convenience and as it is convention, we re-parametrize the cosmological constant by the AdS radius 
	\begin{align}\label{eq:defnofl}
	l:= \sqrt{\frac{-3}{\Lambda}}.
	\end{align} 

	We consider Kerr--AdS black holes which are parameterized by their mass $M>0$ and their specific angular momentum $a\neq 0$. 	Moreover, without loss of generality  we will only consider $a>0$ and require $0<a<l$ for the spacetime to be regular. For $M>0$, $0<a<l$, we consider the polynomial 
	\begin{align}
	\Delta(r) := (a^2+r^2)\left(1+\frac{r^2}{l^2}\right) - 2 Mr .\label{eq:defdelta}
	\end{align} 
	We are interested in spacetimes without naked singularities. To ensure this, we define a parameter tuple    $(M,a) \in \mathbb{R}_{>0}^2$ to be \emph{non-degenerate} if $0<a<l$ and $\Delta(r)$ defined in \eqref{eq:defdelta} has two real roots satisfying $0< r_- < r_+$. Finally, to exclude growing mode solutions (see \cite{dold}) we assume the Hawking--Reall (non-superradiant) bound 
		\begin{align}\label{eq:hawkingreallbound}
	 	r_+^2 > a l.
		\end{align}
		This leads us to the definition of the dimensionless black hole parameter space 
		\begin{align}\label{eq:setofallparameters}
		\mathscr P:= \{ (\mathfrak m, \mathfrak a) \in \mathbb R^2_{>0} \colon (M ,a):= ( \mathfrak m l/\sqrt{3} , \mathfrak a l/\sqrt{3} ) \text{ is non-degenerate and } r_+^2 > a l \}.
		\end{align}
		Note that in view of \eqref{eq:defnofl}, we have $ M = \mathfrak m/ \sqrt{-\Lambda} =\mathfrak m l/\sqrt{3} $ and $a = \mathfrak a/ \sqrt{-\Lambda} = \mathfrak a l /\sqrt{3}$. 
On the parameter space $\mathscr P$, we will also use the global coordinates $( \vartheta, \mathfrak a)$, where \begin{align}\vartheta = \vartheta(\mathfrak a, \mathfrak m) := \frac{1-a^2/l^2}{1+r_-^2/a^2}.\end{align} 
(Note that $\vartheta = a \omega_-$, where $\omega_-$ is defined in \eqref{eq:definitionofomega-} below.) Thus, for each $\mathfrak a$, there exists an interval $(\mathfrak \vartheta_1(\mathfrak a), \vartheta_2(\mathfrak a))$ and a smooth embedding  $ (\mathfrak \vartheta_1(\mathfrak a), \vartheta_2(\mathfrak a) ) \to \mathscr P, \vartheta \mapsto (\mathfrak m (\vartheta), \mathfrak a)$ which also depends smoothly on $\mathfrak a$. We define the vector field $\Gamma$ on $\mathscr P$ by
 \begin{align}
 \Gamma:= \frac{\partial}{\partial \vartheta}
\label{eq:defnofgamma}
\end{align} in coordinates $(\vartheta,\mathfrak a)$. We define  $\Phi_\tau^\Gamma$ as the flow generated by $\Gamma$.

Finally, remark that  $\mathscr{P}$ is a  Baire space as  a (non-empty) open subset of $\mathbb R^2$. In particular, this allows us to speak about the notion of  \emph{Baire-exceptional} and \emph{Baire-generic} subsets. Recall that a subset is Baire-exceptional if it is a countable union of nowhere dense sets and a subset is called Baire-generic if it is a countable intersection of open and dense sets. Note that if a subset is Baire-generic then its complement is Baire-exceptional and vice versa. Finally, in a Baire space every Baire-generic subset is dense.

\subsubsection{Kerr--AdS spacetime}
\paragraph{Fixed manifold}	We begin by constructing the Kerr--AdS spacetime. We define the exterior region $\mathcal R$ and the black hole interior $\mathcal{B}$ as smooth four dimensional manifolds diffeomorphic to $\mathbb R^2\times \mathbb S^2$. On $\mathcal R$ and on $\mathcal{B}$ we assume to have global (up to the well-known degeneracy on $\mathbb S^2$) coordinate charts
\begin{align}
&(t_{\mathcal{R}}, r_{\mathcal{R}}, \theta_{\mathcal{R}}, \phi_{\mathcal{R}} ) \in \mathbb R \times (r_+,\infty) \times \mathbb S^2,\\ 
 &(t_{\mathcal{B}}, r_{\mathcal{B}}, \theta_{\mathcal{B}}, \phi_{\mathcal{B}} ) \in \mathbb R \times (r_-,r_+) \times \mathbb S^2.
\end{align} These coordinates $(t,r,\phi,\theta)$ are called Boyer--Lindquist coordinates. If it is clear from the context which coordinates are being used, we will omit their subscripts throughout the
chapter.
\paragraph{The Kerr--AdS metric} For $(\mathfrak m, \mathfrak a ) \in \mathscr{P}$ and $ M = \mathfrak m l/ \sqrt{3} $ and $a = \mathfrak al/ \sqrt{3}$, we define the Kerr--AdS metric on $\mathcal{R}$ and $\mathcal{B}$ in terms of the Boyer--Lindquist coordinates as 
\begin{align}\nonumber
g_{\mathrm{KAdS}}:= &- \frac{\Delta - \Delta_\theta a^2 \sin^2\theta}{\Sigma} \d t\otimes \d t+  \frac{\Sigma}{\Delta} \d r \otimes \d r + \frac{\Sigma}{\Delta_\theta}\d \theta \otimes \d \theta \\
&+ \frac{\Delta_\theta(r^2 + a^2)^2 - \Delta a^2 \sin^2\theta}{\Xi^2 \Sigma} \sin^2\theta \d {\phi} \otimes \d \phi \nonumber  \\
& -  \frac{\Delta_\theta (r^2 +a^2)- \Delta}{\Xi \Sigma} a \sin^2 \theta (\d t \otimes\d \phi + \d\phi \otimes \d t), \label{eq:kerradsmetric}
\end{align}
where \begin{align}\Sigma:= r^2 + a^2 \cos^2\theta,\;\; \Delta_\theta:= 1 - \frac{a^2}{l^2} \cos^2\theta,\;\; \Xi := 1- \frac{a^2}{l^2}\end{align} and $\Delta$ is as in \eqref{eq:defdelta}. We will also write $\Delta_x := 1-\frac{a^2}{l^2}x^2$ which arises from the substitution $x=\cos\theta$ in $\Delta_\theta$. We also define
\begin{align}\label{eq:definitionofomega-}
	\omega_+:= \frac{a \Xi}{r_+^2+a^2}, \;\;	\omega_-:= \frac{a \Xi}{r_-^2+a^2}, \;\;	\omega_r:= \frac{a \Xi}{r^2+a^2}.
\end{align} Now, we time-orient the patches  $\mathcal{R}$  and $\mathcal{B}$ with $-\nabla t_{\mathcal{R}}$ and $-\nabla{r_{\mathcal B}}$, respectively. We also note that $\partial_t$ and $\partial_\phi$ are Killing fields in each of the patches.
The inverse metric reads
\begin{align}\nonumber
	g_{\textup{KAdS}}^{-1} = &\left( -\frac{(r^2+a^2)^2}{\Sigma\Delta} + \frac{a^2 \sin^2\theta}{\Sigma\Delta_{\theta}} \right) \partial_t \otimes \partial_t +  \frac{\Delta}{\Sigma} \partial_r \otimes\partial_r  + \frac{\Delta_\theta}{\Sigma}\partial_\theta \otimes \partial_\theta \\
	&+ \left( \frac{\Xi^2}{\Sigma\Delta_{\theta} \sin^2\theta} - \frac{\Xi^2 a^2}{\Sigma \Delta}\right) \partial_\phi\otimes\partial_\phi 
-\left( \frac{\Xi a (r^2+a^2)}{\Delta \Sigma} - \frac{a\Xi}{\Delta_\theta \Sigma} \right) (\partial_t \otimes \partial_\phi+ \partial_\phi \otimes \partial_t ).
\end{align}
On $\mathcal{R}$ and $\mathcal{B}$, we define the tortoise coordinate ${r^\ast}(r)$ by
\begin{align}\label{eq:defnrast} \frac{\d {r^\ast}}{\d r} (r) := \frac{r^2 +a^2}{\Delta(r)},\end{align} where $\Delta$ is as in \eqref{eq:defdelta}. For definiteness we set ${r^\ast} (r=+\infty):= \frac{\pi}{2} l$ on $\mathcal{R}$ and ${r^\ast} ( \frac{1}{2} (r_+ + r_-)) = 0$ on $\mathcal{B}$. 
\paragraph{Eddington--Finkelstein-like coordinates}
We also define Eddington--Finkel\-stein-like coordinates ($v,r,\theta,\tilde \phi_+$) in the exterior $\mathcal{R}$ as 
\begin{align}
	v(t,r):= t + r^\ast \chi_v(r), \;\;  \tilde \phi_+ (\phi,r):= \phi + \omega_+ r^\ast(r)   \chi_v(r) \; \text{mod } 2\pi,
\end{align}
where $\chi_v(r)$ is  a smooth monotone cut-off function with $\chi_v(r) = 1 $ for $r \leq  r_+ +\eta $ and $\chi_v(r) = 0$ for $r\geq r_+ +2 \eta$ for some $\eta>0$ small enough such that $   J^+( \{r> 2 r_+ \} \cap \{ t_\mathcal{R}=0 \} ) \cap \{ v=0 \}   = \emptyset $ \footnote{Note that $\nabla v$ is not timelike everywhere on $\mathcal{R}$, in particular $g(\nabla v , \nabla v) = a^2 \sin^2\theta \Sigma^{-1}\Delta_\theta^{-1}$ for $r\in[r_+,r_+ + \eta]$.} and $\eta <\frac{r_+}{4}$. In these coordinates the spacetime $(\mathcal{R},g_{\mathrm{KAdS}})$ can be extended (see \cite{gustav} for more details) to a time-oriented Lorentzian manifold   $(\mathcal D,g_{\mathrm{KAdS}})$ defined as $\mathcal D:= \{ (r,v,\theta,\tilde \phi_+) \in (r_-,\infty) \times \mathbb R \times \mathbb S^2 \}$. Moreover, the Lorentzian submanifold $(\mathcal D \cap \{ r_- < r < r_+\} , g_{\mathrm{KAdS}})$  coincides (up to time-orientation preserving isometry) with  $(\mathcal{B} ,g_{\mathrm{KAdS}})$. We identify these regions and denote the (right) event horizon as $\mathcal{H}_R:= \{r =r_+\}.$ The Killing null generator of the event horizon is \begin{align}K_+:= \partial_v + \omega_+ \partial_{\tilde \phi_+}. \end{align} 
The Killing field $K_+$ is called the Hawking vector field and is future-directed and timelike in $\mathcal{R}$, a consequence the Hawking--Reall bound $r_+ > a l$.

To attach the (left) Cauchy horizon $\mathcal{CH}_L$ we introduce in $\mathcal B$ further coordinates $(v,r,\theta,\tilde \phi_-)$, as
\begin{align}v=t+r^\ast,\;\; \tilde \phi_- (\phi,r):= \phi+\omega_- r^\ast \textup{ mod } 2\pi,\;\; r=r, \;\; \theta = \theta.\end{align} In these coordinates, the Lorentzian manifold extends smoothly to $r=r_-$ and the null hypersurface $\mathcal{CH}_L:= \{r =r_-\}$ is the left Cauchy horizon with null generator \begin{align}K_- := \partial_v + \omega_- \partial_{\tilde \phi_-}.\end{align}      Note that $\partial_v = \partial_t$ and $\partial_{\tilde \phi_-} = \partial_{\phi}$ in $\mathcal B $. 

To attach the left event horizon $\mathcal{H}_L$  we introduce new coordinates on $\mathcal{B}$ defined as $(u,r,\theta, \phi_+^\ast ) \in \mathbb R \times (r_-,r_+) \times \mathbb S^2$ by \begin{align} u(t,r) := -t + r^\ast, \;\; \phi_+^\ast := \phi - \omega_+ r^\ast \text{ mod } 2\pi, r=r, \theta=\theta \end{align} on $\mathcal{B}$ and attach $\mathcal{H}_L$ as $\mathcal{H}_L=\{r=r_+\}$.   Similarly, introduce $(u,r,\theta,\phi^\ast_{-})$ as 
\begin{align}
u(t,r) := -t + r^\ast, \;\; \phi_-^\ast := \phi - \omega_- r^\ast \text{ mod } 2\pi, r=r, \theta=\theta
\end{align} on $\mathcal{B}$ and attach the right Cauchy horizon $\mathcal{CH}_R$ as $\mathcal{CH}_R = \{ r=r_-\}$ in this coordinate system.  Indeed, $K_+$ and $K_-$ extend to  Killing vector fields expressed as $K_+:=- \partial_u  + \omega_+ \partial_{\phi_+^\ast}$ and $K_-:=- \partial_u  + \omega_- \partial_{\phi_-^\ast}$. They are past directed Killing generators of $\mathcal{H}_L$ and $\mathcal{CH}_R$, respectively. Finally, we attach the past and future bifurcation spheres $\mathcal{B}_{\mathcal H}$ and $\mathcal{B}_{\mathcal{CH}}$. Formally, they are defined as $\mathcal{B}_{\mathcal{H}}:= \{ v = - \infty\} \times \{ r=r_+\} \times \mathbb S^2 =\{ u= - \infty\} \times \{ r=r_+\} \times \mathbb S^2  $ respectively in the coordinates systems $(v,r,\theta,\tilde \phi_+)$ and $(u,r,\theta, \phi^\ast_+)$. Similarly, we have $\mathcal{B}_{\mathcal{CH}}:= \{ v = + \infty\} \times \{ r=r_-\} \times \mathbb S^2 =\{ u= + \infty\} \times \{ r=r_-\} \times \mathbb S^2  $. Finally, we define the Cauchy horizon $\mathcal{CH}:=\mathcal{CH}_L \cup \mathcal{CH}_R \cup \mathcal {B}_{\mathcal{CH}}$.  This is standard and we refer to the preliminary section of \cite{MR3607468} for more details.  The metric $g_{\mathrm{KAdS}}$ extends to a smooth Lorentzian metric on $\mathcal{B}_{\mathcal H}$, $\mathcal{B}_{\mathcal{CH}}$ and we define $(\mathcal{M}_{\mathrm{KAdS}}, g_{\mathrm{KAdS}})$ as the Lorentzian manifold  constructed above. Moreover, $T:= \partial_t$ and $\Phi:= \partial_\phi$ extend to smooth Killing vector fields on $\mathcal M_{\textup{KAdS}}$ with $K_+ = T+\omega_+ \Phi$ and $K_- = T + \omega_- \Phi$.  
\paragraph{Kerr--AdS-star coordinates}
On the exterior region $\mathcal{R}$ we define an additional system of coordinates $(t^\ast,r,\theta,\phi^\ast)$ from the Boyer--Lindquist coordinates  through
\begin{align}\label{eq:kerrstar}
	t^\ast := t + A(r), \; r=r, \; \theta = \theta,\; \phi^\ast := \phi+ B(r) - \omega_+ ( t+A(r)),
\end{align}
where $\frac{\d A}{\d r} = \frac{2 M r}{\Delta (1+\frac{r^2}{l^2})}$ and $\frac{\d B}{\d r} = \frac{a \Xi}{\Delta}$ and $A= B = 0$ at $r=+\infty$. As shown in \cite[Section~5]{warnick_boundedness_and_growth}, these coordinates extend smoothly to the event horizon $\mathcal H_R$ and we call the coordinates  $(t^\ast,r,\theta,\phi^\ast)$ covering $\mathcal R \cup \mathcal H_R$ Kerr--AdS-star coordinates. Note that the event horizon is characterized as $\mathcal H_R = \{ r= r_+\}$ and that $K_+ = \partial_{t^\ast}$ in these coordinates.

\paragraph{Foliations and Initial Hypersurface}
We foliate the region $\mathcal R \cup \mathcal H_R$ with constant $t^\ast$ hypersurfaces $\Sigma_{t^\ast}$ which are spacelike and intersect the event horizon at $r=r_+$. 
For the initial data we will consider the axisymmetric spacelike hypersurface \begin{align}
 \label{eq:defnsigma0}
\Sigma_0 :=  \Sigma_{t=0}= \mathcal{R} \cap \{ t_{\mathcal{R}} = 0 \} .  \end{align}
 Note that $\Sigma_0$ does not contain the bifurcation sphere $\mathcal{B}_{\mathcal H}$.  We will impose initial data on $\Sigma_0 \cup \mathcal{B}_{\mathcal H} \cup  \mathcal{H}_L $. We will choose the support of our initial data to lie in a compact subset $K \subset \Sigma_0 \cap \{ r\geq 2 r_+ \}$. Thus, we assume vanishing data on $\mathcal{H}_L \cup \mathcal{B}_{\mathcal H}$. This will be made precise in \cref{sec:initialdata}.
 \paragraph{Boundary conditions}
    Note that the conformal boundary $\mathcal{I}$,  expressed formally as $\{ r  = +\infty \}$, is timelike, as a consequence, $(\mathcal{M}_{\mathrm{KAdS}}, g_{\mathrm{KAdS}})$ is not globally hyperbolic. Thus, in addition to Cauchy data for \eqref{eq:wavekerr}, we will also impose Dirichlet boundary conditions at $\mathcal I = \{ r=+\infty\}$. 
    		\subsection{Conventions} \label{sec:conventions}
    If $X$ and $Y$ are two (typically non-negative) quantities, we use $X\lesssim Y$ of $Y\gtrsim X$ to denote that $X\leq C(M,a,l) Y$ for some constant $C(M,a,l) >0$  depending continuously on the black hole parameters $(M,a,l) $, unless explicitly stated  otherwise.    We also use $X=O(Y)$ for $|X| \lesssim Y$. We use $X\sim Y$ for $X \lesssim Y \lesssim X$ and if the constants appearing in $\lesssim, \gtrsim, \sim$ or $O$ depend on additional parameters $a_i$ we include those as subscripts, e.g.\ $X \lesssim_{a_1 a_2} Y$. Similarly, implicit constants in ``sufficiently small'' or ``sufficiently large'' may also depend continuously on $M,a,l$.

In \cref{sec:initialdata} we will \emph{fix} parameters $(\mathfrak m,\mathfrak a)  \in \mathscr P_\textup{Blow-up}$ and all constants appearing in $\lesssim$ and $\gtrsim$  throughout \cref{sec:initialdata} \cref{sec:exterioranalysis}, \cref{sec:interior} will only depend on this particular choice  and on $l>0$ as   in \eqref{eq:defnofl}.

Further, we denote the total variation of a function $f\colon \mathbb R \to \mathbb R$ in the interval $(a,b)$  with $\mathcal V_{a,b}(f)$ defined as $
\mathcal V_{a,b} (f) := \sup_{ P}\sum_{i=1}^{n_p -1} |f(x_{i+1}) - f(x_i)|$,
where the supremum runs over the set of all partitions  of the given interval, see \cite[Chapter~1, \S11]{olver}.
  
  	\subsection{Norms and energies}
  	To state the well-posedness result of \eqref{eq:wavekerr} and the logarithmic decay result on the Kerr--AdS exterior, we define the following norms and energies in the exterior region $\mathcal R \cup \mathcal H_R$. These are based on the works \cite{wellposed,gustav,quasimodes}, where more  details can be found. In the region $\mathcal R \cup \mathcal H_R$ we let $\slashed g$ and $\slashed \nabla$ be the induced metric and induced connection of the spheres $\mathbb S_{t^\ast,r}^2$ of constant $t^\ast $ and $r$. For a smooth function $\psi$ we denote $|\slashed\nabla \dots \slashed \nabla \psi|^2 = \slashed g^{AA'} \cdots g^{BB'} \slashed \nabla_A \dots \slashed \nabla_B \bar \psi \slashed \nabla_{A'} \dots \slashed \nabla_{B'} \psi$. Now, we define  energy densities in Kerr--AdS-star coordinates as
\begin{align}
&e_1[\psi]:= 	\frac{1}{r^2} |\partial_{t^\ast} \psi|^2 + r^2 |\partial_r \psi|^2 + |\slashed \nabla \psi|^2 + |\psi|^2,\\
&e_2[\psi] := e_1[\psi] + e_1[\partial_{t^\ast} \psi] + \sum_{i=1}^3 e_1[\Omega_i \psi] + r^4 |\partial_r \partial_r \psi|^2 + r^2 |\partial_r \slashed \nabla \psi|^2 + |\slashed \nabla \slashed \nabla \psi|^2,
\end{align}
and analogously for higher order energy densities. Here, $(\Omega_i)_{i=1,2,3}$ denote the angular momentum operators on the unit sphere in $\theta,\phi^\ast$ coordinates.  We also define the energy norms on constant $t^\ast$ hypersurfaces as 
\begin{align}
	\|\psi\|^2_{H^{0,s}_{\textup{AdS}} (\Sigma_{t^\ast})} =& \int_{\Sigma_{t^\ast}} r^s |\psi|^2 r^2 \d r \sin \theta \d \theta \d \phi^\ast,\\
 	\|\psi\|^2_{H^{1,s}_{\textup{AdS}} (\Sigma_{t^\ast})} =& \int_{\Sigma_{t^\ast}} r^s\left( r^2 |\partial_r \psi|^2 + |\slashed \nabla \psi|^2 + |\psi|^2 \right)r^2 \d r \sin \theta \d \theta \d \phi^\ast,\\
 	\|\psi\|^2_{H^{2,s}_{\textup{AdS}} (\Sigma_{t^\ast})} =& \|\psi\|^2_{H^{1,s}_{\textup{AdS}} (\Sigma_{t^\ast})} +  \int_{\Sigma_{t^\ast}}r^s\big( r^4 |\partial_r \partial_r \psi|^2 + r^2 |\slashed \nabla \partial_r \psi|^2 \nonumber \\ & \hspace{5cm} + |\slashed \nabla \slashed \nabla \psi|^2 \big)r^2 \d r \sin \theta \d \theta \d \phi^\ast.
\end{align}
We now denote the space $H_\textup{AdS}^{k,s} (\Sigma_{t^\ast})$ as the space of functions with $\nabla^i \psi \in L^2_{loc} (\Sigma_{t^\ast})$ for $i=0, \dots, k$  and such that $\| \psi\|^2_{ H^{k,s}_{\textup{AdS}} (\Sigma_{t^\ast})} < \infty$ and we denote with $CH_\textup{AdS}^{k}$ the space of functions $\psi$  on $\mathcal R \cup \mathcal H_R$ such that $\psi \in \bigcap_{q = 0, \dots, k} C^q(\mathbb R_{t^\ast}; H_{\textup{AdS}}^{k-q,s_q} (\Sigma_{t^\ast})  )$, where $s_k = -2, s_{k-1}  =0$ and $s_j =0 $ for $j =0,\dots, k-2$. 
 \subsection{Well-posedness and \texorpdfstring{$\log$}{log}-decay on the exterior region}
In the following we state well-posedness for \eqref{eq:wavekerr} and decay solutions  with Dirichlet boundary conditions. The following theorem is a summary of results by Holzegel, Smulevici and Warnick shown in \cite{hol10,wellposed,gustav,quasimodes,MR3150167}. 
\begin{theorem}[{\cite{hol10,wellposed,gustav,quasimodes,MR3150167}}] \label{thm:wellposedanddecay}
	Let the initial data $\Psi_0,\Psi_1 \in C_c^\infty(\Sigma_0)$. Assume Dirichlet boundary conditions at $\mathcal I$ and vanishing incoming data on $\mathcal{H}_L \cup \mathcal{B}_{\mathcal H}$. Then, there exists a unique solution $\psi \in C^\infty(\mathcal{M}_{\textup{KAdS}} \setminus \mathcal{CH})$ of \eqref{eq:wavekerr} such that $\psi\vert_{\Sigma_0} = \Psi_0, n_{\Sigma_0}\psi\vert_{\Sigma_0} = \Psi_1$, $\psi\restriction_{\mathcal{H}_L \cup \mathcal{B}_{\mathcal H}} =0$. The solution satisfies $\psi\restriction_{\mathcal R \cup \mathcal H_R} \in CH_\textup{AdS}^k$ for every $k \in \mathbb N$.  We also have boundedness of the energy 
\begin{align}
\label{eq:boundednesskerr}
&\int_{\Sigma_{t^\ast_2}} e_1[\psi] r^2 \sin\theta \d r \d\theta\d\phi^\ast \lesssim 	\int_{\Sigma_{t^\ast_1}} e_1[\psi] r^2 \sin\theta \d r \d\theta\d\phi^\ast
\end{align}
for $t^\ast_2 \geq t_1^\ast \geq 0 $ as well as for all higher order energies. Further, the energy along the event horizon is bounded by the initial energy as 
\begin{align} \label{eq:eventhorizonbound0}
\sum_{\substack{1\leq i_1 + i_2 \leq k \\ i_2\geq 1}}	\int_{\mathcal H_R\cap\{ t^\ast \geq t^\ast_0  \}} |\slashed \nabla^{i_1} K_+^{i_2} \psi|^2 r^2 \sin \theta  \d v\d \theta \d \tilde \phi_+ \lesssim_{k} \int_{\Sigma_{t^\ast_0}} e_k[\psi] r^2 \sin\theta \d r \d \theta \d \phi.\end{align}
for any $t_0^\ast \geq0$. 

Moreover, the energy of $\psi$ decays 
\begin{align}
&\int_{\Sigma_{t^\ast}} e_1[\psi] r^2 \sin\theta \d r \d\theta\d\phi^\ast \lesssim \frac{1}{[\log(2+t^\ast)]^2} \int_{\Sigma_{t^\ast_0}} e_2[\psi] r^2 \sin\theta \d r \d\theta\d\phi
\label{eq:decay}
\end{align}
	for all $t^\ast \geq t^\ast_0\geq 0$ and similar estimates hold for all higher order energies. Similarly, by commuting and applications of the Sobolev embeddings, $\psi$ and all its derivatives also decay pointwise
	\begin{align}
		\label{eq:logdecay}		\sum_{0 \leq i_1 + i_2 + i_3 \leq k} |\slashed \nabla^{i_1} \partial_{t^\ast}^{i_2} \partial_r^{i_3} \psi|^2 \lesssim_k \frac{1}{[\log(2+t^\ast)]^2} \int_{\Sigma_{t^\ast_0}} e_{k + 3}[\psi] r^2\sin\theta \d r \d \theta \d \phi
	\end{align}
for $t^\ast \geq t_0^\ast\geq 0 $. 
\end{theorem}
By general theory (a local in time energy estimate) all norms on the right-hand side of \eqref{eq:boundednesskerr}--\eqref{eq:logdecay}   are bounded in terms of a non-degenerate energy of $\psi$ on $\Sigma_0$, i.e.\ in terms of weighted Sobolev norms (of appropriate order) of $\Psi_0$ and $\Psi_1$ on $\Sigma_0$. In particular, since $\Psi_0, \Psi_1$ are smooth and compactly supported, all right-hand sides of \eqref{eq:boundednesskerr}--\eqref{eq:logdecay} are finite.

It should be noted that \eqref{eq:eventhorizonbound0} merely gives a bound on $\int_{\mathcal{H}_R} |K_+ \psi|^2$ which does   not control   the full $L^2$-norm of $\psi$ along in the event horizon. However,   one     obtains control of the $L^2$-norm via an ``inverse-commutation'' argument relying on \cite[Theorem 4.9]{quasi_warnick}. 
\begin{prop}\label{eq:estimateonl2norm}
	Let the initial data $\Psi_0,\Psi_1 \in C_c^\infty(\Sigma_0)$. Assume Dirichlet boundary conditions at $\mathcal I$ and vanishing incoming data on $\mathcal{H}_L \cup \mathcal{B}_{\mathcal H}$. In view of \cref{thm:wellposedanddecay}, denote by $\psi$ the unique solution with $\psi\vert_{\Sigma_0} = \Psi_0, n_{\Sigma_0}\psi\vert_{\Sigma_0} = \Psi_1$, $\psi\restriction_{\mathcal{H}_L \cup \mathcal{B}_{\mathcal H}} =0$.  Then, 
\begin{align}\label{eq:finitenessofl2norm}
D^k_{\Hp_R}[\psi]:= 	\sum_{ {0\leq i_1 + i_2 \leq k }}	\int_{\mathcal H_R } |\slashed \nabla^{i_1}K_+^{i_2} \psi|^2 r^2 \sin \theta  \d v\d \theta \d \tilde \phi_+ <  \infty
	\end{align}
for each  $k \in \mathbb N$. 
\begin{proof}
By a local in time energy estimate it suffices to show that   \begin{align}	\sum_{ {0\leq i_1 + i_2 \leq k }}	\int_{\mathcal H_R \cap \{ t^\ast \geq 0 \}} |\slashed \nabla^{i_1}K_+^{i_2} \psi|^2 r^2 \sin \theta  \d v\d \theta \d \tilde \phi_+ < \infty.\end{align}
Further, we also have that  the solution $\psi$ has  finite energy on $\Sigma_{t^\ast_0}$ for $t^\ast_0:=0$ of all orders in the sense that \begin{align} \int_{\Sigma_{t^\ast_0}} e_k[\psi] r^2 \sin\theta \d r \d \theta \d \phi < \infty \label{eq:boundonpsi1andpsi0}\end{align}
  for every $k\in \mathbb N$. We   denote $(\psi_0,\psi_1):= (\psi, K_+ \psi)\restriction_{\Sigma_{t^\ast_0}}$.

In view of the above and  \eqref{eq:eventhorizonbound0} in  \cref{thm:wellposedanddecay}, to obtain \eqref{eq:finitenessofl2norm},  it suffices to show that there exist  data $(\tilde \psi, K_+ \tilde \psi)\restriction_{\Sigma_{t^\ast_0}} = (\tilde \psi_0, \tilde \psi_1)$ for \eqref{eq:wavekerr} such that the arising solution $\tilde \psi$ satisfies  $K_+ \tilde \psi = \psi$  and moreover $(\tilde \psi_0, \tilde \psi_1)$ are sufficiently regular as to apply the (twisted) energy estimate associated to $K_+$.  We set $\tilde \psi_1 := \psi_0  $ and it remains to construct $\tilde \psi_0$ via inverting an elliptic operator to ensure $K_+ \tilde \psi = \psi$. 

To construct $\tilde \psi_0$ we will use  \cite[Theorem~4.9]{quasi_warnick}. To apply it we briefly recall the theory developed in  \cite{quasi_warnick,warnick_boundedness_and_growth}. Note from \cite[Lemma~5.2]{quasi_warnick} and the Hawking--Recall bound \eqref{eq:hawkingreallbound}  that the Kerr--AdS  exterior to the future of $\Sigma_{t^\ast_0}$ is a globally stationary asymptotically Anti-de~Sitter black hole spacetime in the sense of \cite[Definition 2.6]{quasi_warnick} with stationary vector field $K_+$. Thus, we can apply the general framework of \cite{quasi_warnick}. 
Following \cite{quasi_warnick} we write \eqref{eq:wavekerr}  as $L\psi =0$ for a strongly hyperbolic operator with $W=0$, $V=\frac{2}{l^2}$ as in \cite[Definition~2.7]{quasi_warnick}, more precisely,   $L:= A (\Box_g + \frac{2}{l^2}) $ for $ A = \frac{-1}{g(\nabla t^\ast, \nabla t^\ast )}$. As in  \cite[p.~998]{quasi_warnick} we decompose $L\tilde \psi = P_2 \tilde \psi + P_1 K_+  \tilde \psi + K_+ K_+ \tilde \psi$, where $P_1$ is a differential operator of first order on $\Sigma_{t^\ast_0}$ and $P_2$ is a (degenerate) elliptic operator on $\Sigma_{t^\ast_0}$. We further recall the following natural norms  from  \cite[p.~976]{quasi_warnick}:
\begin{align}
\| \tilde \psi \|^2_{\underline L^2 {(\Sigma_{t^\ast_0} )}} := \int_{\Sigma_{t^\ast_0}} |\tilde \psi|^2 r^{-1} \dvol_{\Sigma_{t^\ast_0}}, \;  \| \tilde \psi \|^2_{\underline H^1(\Sigma_{t^\ast_0})}  :=  \int_{\Sigma_{t^\ast_0}} \left( |\tilde \nabla\tilde \psi|^2 + |\tilde \psi|^2r^{-2} \right)  r  \dvol_{\Sigma_{t^\ast_0}}, \label{eq:norms}
\end{align}
where $\tilde \nabla_{\mu}\tilde  \psi := r^{-1} \nabla_{\mu} (r \tilde \psi), \tilde \nabla_{\mu}^\dagger \tilde \psi :=- r\nabla_{\mu} (r^{-1} \tilde \psi)$ are the twisted derivatives and the norms in \eqref{eq:norms} are with respect to the induced metric on $\Sigma_{t^\ast_0}$. As in \cite{quasi_warnick} we define respectively $\underline L^2(\Sigma_{t^\ast_0})$ and $\underline H^1(\Sigma_{t^\ast_0})$ as the completion of smooth functions on  $\Sigma_{t^\ast_0}$ which are supported away from $\mathcal I$ in the norms $\| \cdot \|_{\underline L^2 {(\Sigma_{t^\ast_0} )}}$ and $ \| \cdot \|_{\underline H^1(\Sigma_{t^\ast_0})}  $, respectively. 

In order  to construct $\tilde \psi_0$ we need to invert $P_2$, more precisely, we need to solve $P_2 \tilde \psi_0 =- P_1 \psi_0 - \psi_1$, where we note that $P_1\colon\underline H^1(\Sigma_{t^\ast_0}) \to \underline L^2(\Sigma_{t^\ast_0})$ is a bounded operator (see \cite[p.\ 1002]{quasi_warnick}). In particular, we have $P_1 \psi_0 + \psi_1 \in \underline L^2(\Sigma_{t^\ast_0})$ in view of \eqref{eq:boundonpsi1andpsi0}. 
We note that $P_2 = \hat L_0$, where $\hat L_0= \hat L_{s=0}\colon \operatorname{dom}(\hat L_{s=0})  \to \underline L^2 (\Sigma_{t^\ast_0})$ is as in \cite[eqn.~(4.1), $s=0$]{quasi_warnick}. Now, we apply \cite[Theorem~4.9]{quasi_warnick} with $k=0$ and $s=0$. Indeed, $k=0$ and $s=0$ are  valid because  $w_L=0$ (recall $W=0$ and \cite[Definition~3.7]{quasi_warnick}). Since $0\notin \Lambda_{\mathrm{QNF}}^0$ (no stationary solutions  exist \cite[Corollary~1.3]{quasi_warnick}), we have from \cite[Theorem~4.9]{quasi_warnick}  that the operator $P_2^{-1} \colon \underline L^{2}(\Sigma_{t^\ast_0} ) \to  \underline H^{1}(\Sigma_{t^\ast_0})$ exists and is bounded. Hence,  $\tilde \psi_0 = - P_2^{-1} ( P_1 \psi_0 + \psi_1) \in \underline H^{1}(\Sigma_{t^\ast_0})$. (In fact, $\tilde \psi_0$ can be shown to be more regular which is however not needed for the proof.)

Finally, let $\tilde \psi$ be the unique solution to \eqref{eq:wavekerr} arising from initial data $(\tilde \psi, K_+ \tilde \psi)\restriction_{\Sigma_{t^\ast_0}} = (\tilde \psi_0, \tilde \psi_1)$. This is well-posed by \cite[Theorem~2.3]{quasi_warnick}. Then, by construction we have that $K_+ \tilde \psi  = \psi$. Now, by the twisted energy estimate for $\tilde \psi$ associated to  $K_+$ (see  e.g.\ \cite[Proposition~3]{warnick_boundedness_and_growth} and \cite[Theorem~3.4 (i), $\gamma=0$]{quasi_warnick})  we have $ \int_{\Hp_R\cap\{t^\ast \geq t_0^\ast \}} |\psi|^2 r^2 \d v \sin\theta \d\theta \d \tilde \phi_+  \lesssim  \int_{t^\ast \geq t^\ast_0} |K_+ \tilde \psi|^2 r_+^2 \d t^\ast \sin\theta \d\theta \d \phi^\ast \lesssim  \|\tilde \psi_0 \|_{\underline H^1(\Sigma_{t^\ast_0})} + \| \tilde \psi_1\|_{\underline L^2(\Sigma_{t^\ast_0})} < \infty$.  This concludes the proof. 
\end{proof} 
\end{prop}

\subsection{Separation of variables: Radial o.d.e., angular o.d.e.\ and coupling constants \texorpdfstring{$\lambda_{m\ell}(a\omega)$}{lambdamlaw} }
The wave equation \eqref{eq:wavekerr} is formally separable \cite{carter} and we can consider pure mode solutions in the Boyer--Lindquist coordinates of the form
	\begin{align}
	\psi(t,r,\theta,\phi) = \frac{u(r)}{\sqrt{r^2 + a^2}} e^{-i\omega t} S_{m\ell}(a\omega, \cos \theta)e^{i m\phi}, \; m \in \mathbb Z, \omega \in \mathbb R
	\end{align}
	for two unknown functions $u(r)$ and $S_{m\ell}(a \omega, \cos\theta)$. 	Plugging   this ansatz into \eqref{eq:wavekerr}  leads to a coupled system of o.d.e's.  The angular o.d.e.\ is the eigenvalue equation of the operator $P (a \omega)$  which reads
	\begin{align}\label{eq:Pwithaomega}
		P(a \omega) S_{m\ell}(a\omega, \cos \theta ) = \lambda_{m\ell}(a\omega) S_{m\ell}(a \omega,\cos\theta),
	\end{align}
	where 
		\begin{align}\nonumber 
	P(\xi) f =	P_m(\xi) f = & - \frac{1}{\sin\theta} \partial_\theta (\Delta_\theta \sin\theta \partial_\theta f) + \frac{\Xi^2 m^2}{\Delta_\theta \sin^2 \theta} f - \Xi \xi^2 \Delta_\theta^{-1} \cos^2 \theta f \\ &  - 2 m\xi \frac{\Xi}{\Delta_\theta} \frac{a^2}{l^2} \cos^2 \theta f + \frac{2}{l^2}a^2 \sin^2\theta f, \; \; \; \xi \in \mathbb R. \label{eq:spheroidalharmonicoperator}
		\end{align}
 The operator \eqref{eq:spheroidalharmonicoperator} is realized as a self-adjoint operator on a suitable domain in $L^2((0,\pi); \sin\theta \d \theta)$.  As a Sturm--Liouville operator, the spectrum of $P(a\omega)$ consists of simple eigenvalues $\lambda_{m\ell}(a\omega)$, where $\ell \in \mathbb Z_{\geq |m|}$  labels the eigenvalue in ascending order.  The eigenvalues $\lambda_{m\ell}(a \omega )$ of $P(a\omega)$  couple the angular o.d.e.\ to the radial o.d.e. 
	\begin{align}\label{eq:radial}
	-u^{\prime \prime} + (V - \omega^2 )u = 0,
	\end{align}
	where ${}^\prime := \frac{\d }{\d r^\ast}$.
	We also use the notation $\tilde V:= V - \omega^2$, where the potential $V$ is given by 
	\begin{align}\label{eq:defnofvintermsofv0v1}
	V =  V_0 + V_1
	\end{align}
	with purely radial part
	\begin{align} \label{eq:defnv1}
	V_1:=  \frac{-\Delta^2 3r^2}{(r^2+a^2)^4} + \Delta \frac{5 \frac{r^4}{l^2} + 3 r^2\left(1 + \frac{a^2}{l^2} \right) - 4 M r + a^2}{(r^2+a^2)^3} - \frac{2 \Delta}{l^2} \frac{1}{r^2+a^2}
	\end{align} 
	and frequency dependent part
	\begin{align}
	V_0 := \frac{\Delta (\lambda_{m\ell}(a \omega) + \omega^2 a^2 ) - \Xi^2 a^2 m^2 - 2m\omega a \Xi (\Delta - (r^2 +a^2))}{(r^2 + a^2)^2}.
	\end{align}  
 We will be particularly interested in the case for which the frequency   $\omega$ coincides with the interior scattering poles, i.e.\ $\omega=\omega_- m$. Moreover, in order to be in the regime of stable trapping on the exterior we also want $|\omega|$ and $|m|$ to be large. Hence, we will think of $ \frac{1}{m}$ as a small semiclassical parameter. In particular, setting $\omega = \omega_- m$ in \eqref{eq:radial} and separating out the $m^2$ we obtain
 \begin{align} \label{eq:radialomega}
 	-u'' + (m^2 V_{\textup{main}} + V_{1})u =0,
 \end{align}
 where $V_1$ is as in \eqref{eq:defnv1} and
 \begin{align}\label{eq:highfrequencypart}
 	V_{\textup{main}}  := \frac{V_0(\omega = \omega_- m) - \omega_-^2 m^2}{m^2} =  \frac{\Delta}{(r^2+a^2)^2} \left(\frac{\lambda_{m \ell}(a\omega_- m)}{m^2} + \omega_-^2 a^2 - 2  a \omega_- \Xi \right) - ( \omega_- - \omega_r )^2.
 \end{align}
In \eqref{eq:highfrequencypart} we  also used  $\omega_r= \frac{a \Xi}{r^2+a^2}$ as defined in \eqref{eq:definitionofomega-}.
We begin our analysis with the angular o.d.e.~\eqref{eq:Pwithaomega} in the following \cref{sec:angularode}.
  \section{The angular o.d.e.}
  \label{sec:angularode}
For the operator $P(\xi)$ as in  \eqref{eq:spheroidalharmonicoperator} we change   variables to $x=\cos\theta$. This is a unitary transformation and thus, the eigenvalues of $P(\xi)$ are equal to the eigenvalues of $P_x$ given by \begin{align}
\label{eq:spheroidalharmonicoperatorx0}
P_x (\xi):=	 -
\frac{\d }{\d x} \left(\Delta_x (1-x^2)  \frac{\d }{\d x} \cdot \right)+ \frac{\Xi^2 m^2}{\Delta_x (1-x^2) } - \Xi \xi^2 \frac{ x^2}{\Delta_x}- 2 m\xi \frac{\Xi}{\Delta_x} \frac{a^2}{l^2} x^2 + \frac{2}{l^2}a^2 (1-x^2).
\end{align} 
The Sturm--Liouville operator $P_x$ is realized as a self-adjoint operator acting on a domain $\mathcal D \subset L^2(-1,1)$  which can be explicitly characterized as \begin{align}\mathcal D =\{ f \in L^2(-1,1) \colon f \in AC^1(-1,1), P_x f \in L^2(-1,1) , \lim_{x\to \pm 1 } (1-x^2) f'(x) = 0 \text{ if } m=0 \},\label{eq:defndomain}
\end{align}
see e.g.\ \cite[Chapter~10.3, Theorem 10.7]{teschl} which, mutatis mutandis, also applies to \eqref{eq:spheroidalharmonicoperatorx0} and \eqref{eq:defndomain}. In \eqref{eq:defndomain}, $AC^1(-1,1)$ denotes the space of differentiable functions with absolutely continuous derivative.

Having the same spectrum as $P$, the operator $P_x$  has   eigenvalues $(\lambda_{m\ell}(\xi))_{\ell \geq |m|}$ with corresponding real-analytic eigenfunctions $S_{m\ell} = S_{m\ell}(\xi,x)$ which satisfy
\begin{align}
P_x S_{m\ell} = \lambda_{m\ell} S_{m\ell} \;\; \text{ and } \;\; 	\| S_{m\ell}(\xi)\|_{L^2(-1,1)} = 1. 
\end{align}
We note that for $\xi = a=0$, the eigenvalues $(\lambda_{m\ell})_{\ell \geq|m|}$ reduce to the eigenvalues of the Laplacian on the sphere $\lambda_{m\ell}(a =\xi =0) = \ell (\ell +1)$. We also define the shifted eigenvalues
\begin{align}\Lambda_{m\ell}(\xi):= \lambda_{m\ell}(\xi) + \xi^2.\end{align}
A computation (see \cite[Proof of Lemma~5.1]{gustav}) shows that  \begin{align}P_x (\xi) + \xi^2 - \frac{2}{l^2} a^2 ( 1- x^2) \geq \Xi^2 P_x(\xi =0, a =0 )\end{align} in the sense of self-adjoint operators acting on $\mathcal D\subset L^2(-1,1)$. Hence, 
\begin{align}
	\Lambda_{m\ell}(\xi) \geq \Xi^2 \ell ( \ell + 1) \geq \Xi^2 |m| (|m|+1).
\end{align}

Having recalled basic properties of the angular problem we now focus on the interior scattering poles $\omega=\omega_- m$ for large $m$. In particular, we will only consider $m\neq 0$ for the rest of \cref{sec:angularode}. 
  \subsection{Angular potential  \texorpdfstring{$W_1$}{W1} at interior scattering poles in semi-classical limit}
  \label{sec:3.1}
 In the current \cref{sec:3.1} and in the following \cref{sec:332} we will consider the operator 
 \begin{align}
  \nonumber
  P_{\omega_-}:= P_x (\xi = a m \omega_-)  
  	= &-  \frac{\d }{\d x} \left(\Delta_x (1-x^2)  \frac{\d }{\d x} \cdot \right)+ \frac{\Xi^2 m^2}{\Delta_x (1-x^2) } - \Xi a^2m^2\omega_-^2 \frac{ x^2}{\Delta_x}\\ &- 2 m^2 a \omega_- \frac{\Xi}{\Delta_x} \frac{a^2}{l^2} x^2 + \frac{2}{l^2}a^2 (1-x^2)
 \label{eq:pomega-}
  \end{align} 
  with corresponding eigenvalues  $\lambda_{m\ell}:= \lambda_{m\ell}(a \omega_- m)$. We re-write the eigenvalue problem  \begin{align}\label{eq:evproblempomega-}
  P_{\omega_-} f= \lambda f
  \end{align} as 
  \begin{align}\label{eq:equivalentformulation}
  	\tilde P_{\omega_-} f = 0,
  \end{align}
  where 
  \begin{align}
 & \tilde P_{\omega_-}   := - \Delta_x (1-x^2) \frac{\d }{\d x} \left(\Delta_x (1-x^2)  \frac{\d }{\d x} \cdot \right) + m^2 W_1(x) + P_\textup{error},\\
& P_\textup{error}:= \Delta_x (1-x^2) \frac{2}{l^2}a^2 (1-x^2)
\end{align}
 and
  \begin{align} \label{eq:defnW1}
  	W_1 :=  \Xi^2 - \left[ \Xi a^2 \omega_-^2 + 2 a \omega_- \Xi \frac{a^2}{l^2} \right] x^2 (1-x^2) - \tilde \lambda \Delta_x (1-x^2),
  \end{align}
  with 
  \begin{align}\label{eq:tildelambda}
  	\tilde \lambda:= \frac{\lambda}{m^2}.
  \end{align}
    In the semi-classical limit $m^2 \to\infty$ we   consider $P_\textup{error}$ as a perturbation and $W_1$ determines the leading order terms of the eigenvalues and eigenfunctions. Consequently, our analysis focuses on $W_1$ which we analyze in the following lemma.
  \begin{lemma}\label{lem:w1} 
	\begin{enumerate}Let $W_1$ be the angular potential defined in \eqref{eq:defnW1}.
\item 	For $\tilde \lambda <  \Xi^2$, we have $W_1 > 0$ for $x\in [0,1]$. 	
\item	For $\tilde \lambda = \Xi^2$, we have $W_1 >0$ on $(0,1]$ and $W_1(x=0) =0$.
\item	For $\tilde \lambda > \Xi^2 $, the potential $W_1$ has exactly one root in $x\in [0,1]$ and satisfies 
  	\begin{align}\label{eq:caselambdatildegeqxi^2}
  	\frac{\d W_1}{\d x} \gtrsim \tilde \lambda x
  	\end{align}
  	for $x\in [0,1]$. We call this root $x_0$ which also satisfies  $x_0 \in (0,1)$.
  	\end{enumerate}
  	\begin{proof}
  		  We start by expanding $W_1$ and obtain
  		\begin{align}
  		W_1(x) = \Xi^2-\tilde \lambda + a_1 x^2 + a_2 x^4
  		\end{align}
  		with 
  		\begin{align} \nonumber 
  		a_1 & = \tilde \lambda \left(1+\frac{a^2}{l^2}\right) - \frac{a^4(a^2-l^2)^2 (a^2+l^2+2 r_-^2)}{l^6 (a^2+r_-^2)^2}
  		\\ \nonumber	&= \Xi^2 \left(1+\frac{a^2}{l^2}\right) - \frac{a^4(a^2-l^2)^2 (a^2+l^2+2 r_-^2)}{l^6 (a^2+r_-^2)^2} + (\tilde \lambda-\Xi^2) \left(1+\frac{a^2}{l^2}\right) 
  		\\	&=  \frac{(a-l)^2 (a+l)^2 (2a^2 l^2 r_-^2 +(a^2+l^2)r_-^4)}{l^6 (a^2+r_-^2)^2} + (\tilde \lambda-\Xi^2) \left(1+\frac{a^2}{l^2}\right)
  		\end{align}
  		and
  		\begin{align}
  		a_2 & = \frac{a^4(a^2-l^2)^2 (a^2+l^2+2r_-^2)}{l^6(a^2+r_-^2)^2} - \frac{a^2}{l^2}\tilde \lambda.
  		\end{align}
  		We also note that \begin{align}\label{eq:w1of0}  		W_1(x=0) = \Xi^2 - \tilde{\lambda}.\end{align}
  	  
  	  	We now consider the case $\tilde \lambda \geq \Xi^2$ and remark that \begin{align}\frac{\d W_1}{\d x } = 2 a_1 x + 4 a_2 x^3 .  \end{align}
  		We look at two cases now, $a_2\geq 0$ and $a_2 <0$. If $a_2 \geq 0$, then we directly infer that $\frac{\d W_1}{\d x} \geq 2 a_1 x$. If $a_2 <0$, then we use that $x^3< x$ and estimate
  		\begin{align}\frac{\d W_1}{\d x } = 2 a_1 x + 4 a_2 x^3 \geq ( 2a_1 + 4 a_2 )x.  \end{align}
  		 Now, a direct computation yields 
  		 \begin{align}
  		 	2a_1 + 4 a_2 = 2 \Xi \left( \Xi \frac{a^2}{l^2} \frac{a^2+l^2+2 r_-^2}{(a^2+r_-^2)^2} + \tilde \lambda \right) \gtrsim \tilde \lambda  	.	 \end{align}
  		Note that this shows \eqref{eq:caselambdatildegeqxi^2} for $x\in[0,1]$ and we conclude \textit{3.} Together with \eqref{eq:w1of0}, this also shows that  $W_1(x)>0$ for $x\in (0,1]$ and $\tilde \lambda = \Xi^2$ such that we have \textit{2.}
  		
  		Finally for $\tilde \lambda < \Xi^2$, we have $W_1>0$ everywhere because for each fixed $x$, the function $\tilde \lambda \mapsto W_1(x)$ is strictly  decreasing.
  	  	\end{proof}
  \end{lemma}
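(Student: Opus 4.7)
My plan is to exploit the fact that $W_1$ is a polynomial that is even in $x$ and reduce everything to elementary sign analysis of a quartic. First I would expand the definition of $W_1$ in powers of $x$ to obtain
\begin{align}
W_1(x) = (\Xi^2 - \tilde\lambda) + a_1 x^2 + a_2 x^4,
\end{align}
where $a_1, a_2$ are explicit rational functions of $M, a, l$ and $\tilde\lambda$. This immediately yields $W_1(0) = \Xi^2 - \tilde\lambda$, which settles the boundary statements at $x=0$ in cases (1) and (2), and also $W_1(1) = \Xi^2 > 0$ (since the $\Delta_x(1-x^2)$ term vanishes at $x=1$). It therefore remains, for $x \in (0,1]$, to control the sign of $\frac{d W_1}{d x} = 2 a_1 x + 4 a_2 x^3$.

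For case (1), $\tilde\lambda < \Xi^2$, I would observe that for each fixed $x \in [0,1)$ the map $\tilde\lambda \mapsto W_1(x)$ is strictly decreasing, because its $\tilde\lambda$-coefficient $-\Delta_x(1-x^2)$ is strictly negative there. Combined with $W_1(x=1) = \Xi^2$ (independent of $\tilde\lambda$), it suffices to verify positivity on $(0,1]$ at the borderline value $\tilde\lambda = \Xi^2$, which falls inside case (2).

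For cases (2) and (3), $\tilde\lambda \geq \Xi^2$, I would split according to the sign of $a_2$. If $a_2 \geq 0$ both terms in $\frac{dW_1}{dx}$ are non-negative for $x \in [0,1]$, and a direct inspection of the explicit formula for $a_1$ shows $a_1 \gtrsim \tilde\lambda$, giving $\frac{dW_1}{dx} \gtrsim \tilde\lambda x$. If $a_2 < 0$, I would use $x^3 \leq x$ on $[0,1]$ to estimate $\frac{dW_1}{dx} \geq (2 a_1 + 4 a_2) x$, so the whole argument hinges on the algebraic identity
\begin{align}
2 a_1 + 4 a_2 = 2\Xi\left( \Xi \frac{a^2}{l^2} \frac{a^2 + l^2 + 2 r_-^2}{(a^2+r_-^2)^2} + \tilde\lambda\right) \gtrsim \tilde\lambda,
\end{align}
which is manifestly positive under the standing conditions $0<a<l$, $0 < r_- < r_+$ and $\tilde\lambda \geq \Xi^2 > 0$. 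Either way $W_1$ is strictly increasing on $(0,1]$; combined with $W_1(0) = \Xi^2 - \tilde\lambda \leq 0$ and $W_1(1) = \Xi^2 > 0$, this produces exactly one root $x_0$, forces $x_0 \in (0,1)$ in case (3) (with $x_0=0$ in the degenerate case (2)), and delivers the derivative lower bound.

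The one non-routine step is verifying the algebraic identity for $2a_1 + 4a_2$: both $a_1$ and $a_2$ contain the large negative contribution $-\frac{a^4(a^2-l^2)^2(a^2+l^2+2r_-^2)}{l^6(a^2+r_-^2)^2}$, and one has to see that these pieces cancel against each other after forming the combination $2a_1 + 4a_2$, using $\omega_- = a\Xi/(r_-^2+a^2)$. I would perform this expansion once carefully, factor out $\Xi$ and common denominators, and then the claimed positivity becomes transparent. Once the identity is in hand, the rest of the proof is mechanical.
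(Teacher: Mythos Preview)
Your proposal is correct and follows essentially the same approach as the paper: expand $W_1$ as an even quartic $(\Xi^2-\tilde\lambda)+a_1x^2+a_2x^4$, handle $\tilde\lambda\geq\Xi^2$ by splitting on the sign of $a_2$ and invoking the key algebraic identity for $2a_1+4a_2$, and reduce $\tilde\lambda<\Xi^2$ to the borderline case via monotonicity of $\tilde\lambda\mapsto W_1(x)$. The only cosmetic differences are the order in which you treat the three cases and your explicit early use of $W_1(1)=\Xi^2$; the substance is identical.
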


 \subsection{Angular eigenvalues at interior scattering poles with \texorpdfstring{$\lambda_{m_i\ell_i} = \tilde \lambda_0 m_i^2 +O(1)$}{lambda = tilde lambda m2 +O(1) }}
 \label{sec:332}
For the proof of the Baire-genericity of the set $\mathscr{P}_{\textup{Blow-up}}$ in \cref{sec:quasimodes} we will use that there exists a sequence of angular eigenvalues of the form $\lambda_{m_i\ell_i} =\lambda_{m_i\ell_i}(\omega=\omega_- m)= \tilde \lambda_0 m_i^2 +O(1)$ at the interior scattering poles. To show this, we will use the following well-known result (\cref{prop:highfreq}) on  the semi-classical distribution of eigenvalues. The proof of \cref{prop:highfreq} relies on suitable connection formulas of Airy functions and can be found in \cite[Chapter~13, \S8--\S9.1]{olver}. We also recall the definition $\mathcal V$ of the total variation as in \cref{sec:conventions}.
 
 \begin{prop}[{\cite[Chapter~13, \S8.2--\S9.1]{olver}}]\label{prop:highfreq}
 	Consider a parameter $\epsilon \in I_0:= [\epsilon_0,\epsilon_1]$. 
Let $f_\epsilon(x),g_\epsilon(x) \in C^2(\mathbb R_x)$   for all $\epsilon\in I_0$ where we assume that  $f_\epsilon$ and $g_\epsilon$ depend continuously on $\epsilon$. Assume that $f_\epsilon(x)/[(x-\hat x_0(\epsilon)) (x-x_0(\epsilon))]$  is positive and  bounded away from zero uniformly for   $\epsilon\in I_0$. In particular, assume that $f_\epsilon$ has two simple roots at $\hat x_0(\epsilon)< x_0(\epsilon)$ with $-\infty < \inf_{\epsilon} \hat x_0(\epsilon)$ and  $ \sup_{\epsilon}  x_0(\epsilon)<+\infty$.  Assume further that the roots do not coalesce, i.e.\ that there exists a $x_1\in \mathbb R$ with $ \sup_{\epsilon}\hat x_0(\epsilon) < x_1 < \inf_{\epsilon}  x_0(\epsilon) $.
  Assume that for all $\epsilon\in I_0$ and for some $c>0$ sufficiently large,
$\int_{c}^x \sqrt{f_\epsilon}$ diverges as $x\to +\infty$ and $\int_{x}^{-c} \sqrt{f_\epsilon}$  diverges as $x\to -\infty$.

To make the above statements quantitative, define the error-control functions \begin{align}\label{eq:hepsilon1} H_\epsilon(x):= \int_{x_0(\epsilon)}^x \frac{1}{|f_\epsilon|^{\frac 14}} \frac{\d^2}{\d y^2} \left( \frac{1}{|f_\epsilon|^{\frac 14}}\right) - \frac{g_\epsilon}{|f_\epsilon|^{\frac 12}} - \frac{5|f_\epsilon|^2}{16|\zeta_\epsilon|^3}\d y,\\
	\hat H_\epsilon(x):= \int_{\hat x_0(\epsilon)}^x \frac{1}{|f_\epsilon|^{\frac 14}} \frac{\d^2}{\d y^2} \left( \frac{1}{|f_\epsilon|^{\frac 14}}\right) - \frac{g_\epsilon}{|f_\epsilon|^{\frac 12}} - \frac{5|f_\epsilon|^2}{16|\hat \zeta_\epsilon|^3}\d y\label{eq:hepsilon2} 
\end{align}
for $|\zeta_\epsilon|^3 := \left| \frac 32\int_{x_0(\epsilon)}^x\sqrt{f_\epsilon} \d y\right|^{2}  $, $|\hat \zeta_\epsilon|^3 := \left| \frac 32\int_{\hat x_0(\epsilon)}^x\sqrt{f_\epsilon} \d y\right|^{2}  $
and set
\begin{align}\nonumber 
	B(f_\epsilon,g_\epsilon)=\left| \frac{f_\epsilon'(x_1)}{f^{\frac 32}_\epsilon(x_1)}\right| &+\left|  \int_{x_1}^{x_0(\epsilon)} \sqrt{f_\epsilon}(x) \d x\right|^{-1} + \left|\int_{\hat x_0(\epsilon)}^{x_1} \sqrt{f_\epsilon}(x) \d x  \right|^{-1} \\ & + \mathcal{V}_{ x_1, +\infty}   (H_\epsilon) + \mathcal{V}_{ -\infty, x_1}   (\hat H_\epsilon). \label{eq:defnbfge}
\end{align}
Assume $B_0 := \sup_{\epsilon\in I_0} B(f_\epsilon,g_\epsilon)<+\infty$.

Then, for all $u$ sufficiently large and all $\epsilon\in I_0$, if the differential equation 
\begin{align}\label{eq:odebohrsommerfeld}
w''=(u^2 f_\epsilon + g_\epsilon )w,
\end{align}
admits a bound state  $w$ (i.e.\ a solution which is recessive at both ends $x\to \pm \infty$), then
\begin{align}\label{eq:bohrsommerfeld}
\frac 2\pi 	\int_{\hat x_0(\epsilon)}^{x_0(\epsilon)} \sqrt{-f_\epsilon} \d x + \vartheta_{f_\epsilon,g_\epsilon,u,n} = \frac{2n+ 1 }{u}
\end{align}
for some positive integer $n\in \mathbb N$ and for an  error  function $\vartheta_{f_\epsilon,g_\epsilon,u,n}$ which obeys $|\vartheta_{f_\epsilon,g_\epsilon,u,n}| \lesssim_{B_0} u^{-2}$. In particular, the implicit constant is independent of $\epsilon, u,n$. 

Conversely,  for all $u$ sufficiently large,  there exists an error  function $\vartheta_{f_\epsilon,g_\epsilon,u,n}$   which depends continuously on $\epsilon \in I_0$ and  satisfies $ |\vartheta_{f_\epsilon,g_\epsilon,u,n}| \lesssim_{B_0} u^{-2}$ such that \eqref{eq:odebohrsommerfeld} admits a bound state $w$ if there exist an $\epsilon \in I_0$ and a   $n \in \mathbb N$  satisfying \eqref{eq:bohrsommerfeld}.  
\begin{proof}
The above result follows from \cite[Chapter~13, \S8.2]{olver} but for convenience of the reader we briefly outline the steps in  \cite[Chapter~13, \S8.2]{olver} in our context.
We begin by noting that our assumption $B_0<\infty$ shows that the error terms in   (8.03) and (8.05) of  \cite[Chapter~13, \S8.2]{olver} are controlled by $O_{B_0}(u^{-1})$ uniformly for $u$ sufficiently large and $\epsilon\in I_0$.  Following the argument of \cite[Chapter~13, \S8.2]{olver}  we then conclude that for $u$ sufficiently large, \eqref{eq:odebohrsommerfeld} admits a bound state $w$  if and only if $\sin\left(u \int_{\hat x_0(\epsilon)}^{x_0 (\epsilon)} \sqrt{-f_\epsilon} \d x - \frac \pi 2\right)  = \Theta(f_\epsilon,g_\epsilon,u)$, for some error function  $\Theta$ which satisfies $\sup_{\epsilon\in I_0} |\Theta(f_\epsilon,g_\epsilon,u)| \lesssim_{B_0} u^{-1}$. By virtue of $f_\epsilon$ and $g_\epsilon$ depending continuously on $\epsilon$, we also obtain that $\Theta(f_\epsilon,g_\epsilon,u)$ depends continuously on $\epsilon$. Inverting $\sin$ around its zeros yields the claim. 
\end{proof}
 \end{prop}
 With the above proposition in hand we proceed to the main proposition of this subsection, where we recall that we still consider the case $\omega=\omega_- m$.
 \begin{prop}\label{prop:angular} Let $\mathfrak p_0 \in \mathscr{P}$ be arbitrary but fixed. 
 Then, for almost every $\tilde \lambda_0 \in  (\Xi^2,\infty)$ (more precisely, for every $\tilde \lambda_0 \in (\Xi^2,\infty) \setminus \mathcal{N}_{\mathfrak p_0}$ for some Lebesgue null set  $\mathcal{N}_{\mathfrak p_0}$), there exists a strictly increasing sequence of natural numbers $(m_i)_{i \in \mathbb N}$  such that for every $i\in \mathbb N$, the operator
 	$	P_{\omega_- } $ admits an eigenvalue $\lambda_i := \lambda_{m_i \ell_i}  = \lambda_{m_i \ell_i}(\omega = \omega_- m)$ satisfying \begin{align}\tilde \lambda_i := { \lambda_i } m_i^{-2} = \tilde \lambda_0  + \lambda_\textup{error}^{(i)}m_i^{-2}, \end{align} where $|\lambda_{\textup{error}}^{(i)}| \leq C(\tilde \lambda_0,\mathfrak p_0)$ as $m_i\to\infty$ for some constant $C(\tilde \lambda_0,\mathfrak p_0)>0$. Moreover,  $m_i \leq \ell_i \leq m_i^2$.
 \end{prop}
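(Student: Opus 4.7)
}
The plan is to apply the semi-classical quantization result \cref{prop:highfreq} to the equivalent formulation $\tilde P_{\omega_-} f = 0$. First, I introduce the Liouville coordinate $y := \int_0^x \frac{d\tilde x}{\Delta_{\tilde x}(1-\tilde x^2)}$, which maps $(-1,1)$ diffeomorphically onto $\mathbb R$ and converts the operator $\Delta_x(1-x^2)\frac{d}{dx}(\Delta_x(1-x^2)\frac{d}{dx}\cdot)$ into $\frac{d^2}{dy^2}$; the eigenvalue equation then takes the Schr\"odinger normal form $f_{yy} = (m^2 F(y) + G(y))f$ with $F = W_1$ and $G = P_\textup{error}$, matching the hypothesis of \cref{prop:highfreq} with $u = m$. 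Frobenius analysis at the regular singular points $x = \pm 1$ identifies $L^2(-1,1;dx)$-eigenfunctions of $P_{\omega_-}$ with the bound states (solutions recessive at $y \to \pm\infty$) of this equation. For any $\tilde\lambda_0 > \Xi^2$, the third statement of \cref{lem:w1} combined with the symmetry $W_1(-x) = W_1(x)$ guarantees that $W_1$ has exactly two simple roots at $\pm x_0 \in (-1,1)$, is positive outside $[-x_0,x_0]$, negative inside, and tends to $\Xi^2 > 0$ at $x = \pm 1$; together with the exponential decay $1 - x^2 \sim e^{-2\Xi|y|}$ and the smoothness of $W_1$, this yields the divergence of $\int\sqrt{F}$ and the bounded variation of the error-control function $\digamma$ on the tails, uniformly in a compact $\tilde\lambda$-neighborhood of $\tilde\lambda_0$.

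\paragraph{Bohr--Sommerfeld quantization}
\cref{prop:highfreq} then provides an error function $\vartheta_m(\tilde\lambda) = O(m^{-2})$ such that, for all sufficiently large $m$, a value $\tilde\lambda$ in our neighborhood arises as $\lambda_{m\ell}(a\omega_- m)/m^2$ for some $\ell$ if and only if
\begin{align*}
\Phi(\tilde\lambda) + \vartheta_m(\tilde\lambda) = \frac{2n+1}{m} \quad \text{for some } n \in \mathbb N,
\end{align*}
where
\begin{align*}
\Phi(\tilde\lambda) := \frac{2}{\pi} \int_{-x_0(\tilde\lambda)}^{x_0(\tilde\lambda)} \frac{\sqrt{-W_1(x;\tilde\lambda)}}{\Delta_x(1-x^2)}\, dx.
\end{align*}
Differentiating under the integral sign using $\partial_{\tilde\lambda} W_1 = -\Delta_x(1-x^2) < 0$ gives $\Phi'(\tilde\lambda) > 0$ on $(\Xi^2,\infty)$, while $\Phi(\tilde\lambda) \to 0$ as $\tilde\lambda \downarrow \Xi^2$ (the turning points coalesce at $0$) and $\Phi(\tilde\lambda) \to \infty$ as $\tilde\lambda \to \infty$ (the turning points approach $\pm 1$, where $\Delta_x(1-x^2)$ vanishes). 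Hence $\Phi$ is a $C^\infty$-diffeomorphism of $(\Xi^2,\infty)$ onto $(0,\infty)$.

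\paragraph{Diophantine approximation and conclusion}
Set $y_0 := \Phi(\tilde\lambda_0)$ and let $\mathcal N_{\mathfrak p_0} \subset (\Xi^2,\infty)$ denote the preimage under $\Phi$ of the set of those $y \in (0,\infty)$ for which only finitely many $(n,m) \in \mathbb N^2$ satisfy $|y - (2n+1)/m| \leq 1/m^2$. Dirichlet's theorem applied to $(y-1)/2$, combined with a standard parity-separation argument restricting to odd denominators, shows this exceptional $y$-set has zero Lebesgue measure; since $\Phi^{-1}$ is locally Lipschitz, so does $\mathcal N_{\mathfrak p_0}$. For $\tilde\lambda_0 \in (\Xi^2,\infty) \setminus \mathcal N_{\mathfrak p_0}$ there exist infinitely many pairs $(n_i,m_i)$ with $m_i \to \infty$ and $|y_0 - (2n_i+1)/m_i| \leq C_1 m_i^{-2}$; the implicit function theorem applied to the Bohr--Sommerfeld condition produces, for each $i$, a unique eigenvalue $\tilde\lambda_i$ near $\tilde\lambda_0$ with
\begin{align*}
\tilde\lambda_i = \tilde\lambda_0 + \Phi'(\tilde\lambda_0)^{-1}\Bigl(\tfrac{2n_i+1}{m_i} - y_0 - \vartheta_{m_i}(\tilde\lambda_0)\Bigr) + O(m_i^{-4}),
\end{align*}
so $\lambda_i = \tilde\lambda_0 m_i^2 + \lambda_\textup{error}^{(i)}$ with $|\lambda_\textup{error}^{(i)}| \leq C(\tilde\lambda_0,\mathfrak p_0)$. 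The Sturm--Liouville ordering (giving $\ell_i \geq m_i$) together with the a priori lower bound $\lambda_{m\ell} \geq \Xi^2 \ell(\ell+1) - a^2 \omega_-^2 m^2$ forces $\ell_i/m_i \to \sqrt{\tilde\lambda_0 + a^2\omega_-^2}/\Xi$, so $m_i \leq \ell_i \leq m_i^2$ for all large $m_i$. The main technical obstacle is verifying the bounded-variation estimate for $\digamma$ uniformly in $\tilde\lambda$ as the turning points vary (in particular as $\tilde\lambda \downarrow \Xi^2$ or $\tilde\lambda \to \infty$); this is handled by the Langer-type regularization implicit in \cref{prop:highfreq}.
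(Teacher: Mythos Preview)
Your proposal is correct and follows essentially the same approach as the paper: the same Liouville coordinate $y$, the same application of \cref{prop:highfreq} to obtain the Bohr--Sommerfeld condition, the same monotonicity of the phase integral (the paper's $\xi$ is your $\tfrac{\pi}{2}\Phi$), and the same Diophantine approximation step to produce the sequence $(m_i,n_i)$. Your closing remark about uniformity as $\tilde\lambda \downarrow \Xi^2$ or $\tilde\lambda \to \infty$ is unnecessary, since \cref{prop:highfreq} need only be applied for $\tilde\lambda$ in a fixed compact neighborhood of the given $\tilde\lambda_0$, which keeps the turning points uniformly separated and bounded away from $\pm 1$.
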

 \begin{proof}
We consider the   formulation of the angular o.d.e.\  in \eqref{eq:equivalentformulation} and moreover change coordinates 
 	\begin{align}
 	y (x) = \int_{0}^x \frac{1}{\Delta_{\tilde x} (1-\tilde x^2)} \d \tilde x
 	\end{align}
 	such that \begin{align}\frac{\d x}{\d y } = \Delta_x (1-x^2).\end{align}
 This yields the equivalent eigenvalue problem
 	\begin{align}
 	-  \frac{\d^2}{\d y^2}g +   (m^2 W_1  +   P_\textup{error}) g =0 \label{eq:eigenvaluepb}
 	\end{align}
 	for $g$ in a dense domain of $L^2(\mathbb R,  \d y)$.
 From \cref{lem:w1} we have that $W_1$ has a unique positive root   for $\tilde \lambda>\Xi^2$ which we denote with $y_0(\tilde \lambda) := y(x_0(\tilde \lambda))$. We also define
 	\begin{align}
\xi (\tilde \lambda):= \int_{-y_0(\tilde \lambda)}^{y_0(\tilde \lambda)} \sqrt{-W_1}  \d y,
 	\end{align}
 	where we recall that $W_1$ is symmetric around the origin.  	For the potential $W_1$, we have (e.g.\ \cite[p.~118]{bsbook}) that $ 	\xi \colon (\Xi^2, + \infty) \to \mathbb R, \tilde \lambda\mapsto \xi (\tilde \lambda)$ is a strictly increasing smooth (even real-analytic) function. Further note that \begin{align}\frac{\d \xi}{\d \tilde \lambda} = \int_{-y_0(\tilde \lambda)}^{y_0(\tilde \lambda)}\frac{\Delta_{x(y)} (1-x(y)^2)}{2\sqrt{- W_1}}\d y >0 \label{eq:monotone}\end{align}so by the inverse function theorem, $\xi$ has a smooth inverse.
 
By a standard result on Diophantine approximation (see e.g.\ \cite[Corollary (ii) after Theorem~6.2]{harman1998metric}), we have that for each $x \in \mathbb R_{>0} \setminus \mathcal N$, where $\mathcal N$ is a Lebesgue null set, there exist sequences of natural numbers $(n_i )_{i\in \mathbb N}$ and $(m_i )_{i\in \mathbb N}$ with $n_{i+1} > n_i$ and $m_{i+1} > m_i$  such that \begin{align} 
 	\Big |\frac{2}{\pi} x - \frac{2n_i +1}{m_i} \Big| \leq \frac{1}{m_i^2}
 	 	\end{align}
 	 	 for all $i \in \mathbb N$. Indeed, the assumptions of  \cite[Corollary (ii) after Theorem~6.2]{harman1998metric} are satisfied as  $\sum_{m\in \mathbb N} \frac{1}{m}$ is divergent and $2n+1 = 	1\, (\textup{mod } 2) $, $m= 0\,  (\textup{mod } 1)$, where $(1,2,0,1)$ is pairwise coprime, i.e. $(1,2,1,1)=1$ in the notation of \cite[Corollary (ii) after Theorem~6.2]{harman1998metric}.  Alternatively, the result also follows from  \cite[Theorem~6.6]{harman1998metric} after noting that the sum of the lower asymptotic densities $\underline d$ of the odd and the natural numbers exceeds 1, i.e.\ $\underline d(\mathbb N) + \underline d (2\mathbb N +1) = 1 +\frac 12 >1$. 
 	 	 
 	 	 	Now, since $\xi$ has a smooth inverse, there exists a  Lebesgue null set $\mathcal N_{\mathfrak p_0}:= \xi^{-1}(\mathcal N) \subset (\Xi^2, \infty)$ such that for each $\tilde \lambda_0 \in  (\Xi^2, \infty)\setminus \mathcal N_{\mathfrak p_0}$ we have 
 	 	 	 \begin{align}\label{eq:approxnimi}
 	 	 	\Big |\frac{2}{\pi} \xi( \tilde \lambda_0 )- \frac{2n_i +1}{m_i} \Big| \leq \frac{1}{m_i^2}
 	 	 	\end{align}
for a sequence   of natural numbers $(n_i )_{i\in \mathbb N}$ and $(m_i )_{i\in \mathbb N}$ with $n_{i+1} > n_i$ and $m_{i+1} > m_i$.
 	 	 
 	 	  Now, we will apply \cref{prop:highfreq}. For  $\tilde \lambda_0 \in (\Xi^2,\infty)\setminus \mathcal N_{\mathfrak p_0}$, choose a small neighborhood $\mathcal U_{\tilde \lambda_0}$ such that for all $\tilde \lambda$ in the closure of $U_{\tilde \lambda_0}$, we have $\tilde \lambda > \Xi^2$. We will now consider $\tilde \lambda \in \mathcal U_{\tilde \lambda_0}$ which will take the role of $\epsilon$ appearing in \cref{prop:highfreq}.
We will now show that indeed the assumptions of  \cref{prop:highfreq} are satisfied. First, note that  $W_1$ and $P_\textup{error}$ are smooth for all $\tilde \lambda  \in \mathcal U_{\tilde \lambda_0}$. Further, uniformly in $\mathcal U_{\tilde \lambda_0}$,  $\int^y_0 \sqrt{| W_{1}|} \d \tilde y $ diverges as $y\to \pm \infty$. Moreover, the potential $W_1$ has two simple roots which do not coalesce uniformly in $\mathcal U_{\tilde \lambda_0}$ in view of \cref{lem:w1}. In particular,  this also shows that in a region $[0,c]$ for any fixed $c>0$ (in particular containing the right turning point), the total variation of $H_{\tilde \lambda}$ is bounded uniformly in $\mathcal U_{\tilde \lambda_0}$; analogously for  $\hat H_{\tilde \lambda}$ in $[-c,0]$. Remark from \cite[Chapter~11, \S3]{olver} that indeed \eqref{eq:hepsilon1} and \eqref{eq:hepsilon2} are by construction the quantitative versions of the qualitative statement of two non-coalescing roots.  
To show that $\mathcal{V}_{ 0, +\infty}   (H_{\tilde \lambda})$ and $ \mathcal{V}_{ -\infty, 0}   (\hat H_{\tilde \lambda}) $ remain bounded at $\pm\infty$, respectively, we note that for $|y|\to\infty$, we have \begin{align}  \frac{\Xi^2}{2}\leq W_1 \leq \Xi^2, \text{ and } \left|\frac{\d W_1}{\d y}\right|, \left| \frac{\d^2 W_1}{\d y^2} \right| \lesssim \frac{\d x}{\d y}\end{align}  
 	 	 as well as
 	 	  \begin{align}|P_\textup{error}|\lesssim \frac{\d x}{\d y}.\end{align} 
 	 Inserting these bounds in  \eqref{eq:hepsilon1} and \eqref{eq:hepsilon2}  we obtain
 	 	 \begin{align}
 	 	  \mathcal{V}_{ 0, +\infty}   (H_{\tilde \lambda}) + \mathcal{V}_{ -\infty, 0}   (\hat H_{\tilde \lambda})  \lesssim 1
 	 	 \end{align}
  	 uniformly in $\mathcal U_{\tilde \lambda_0}$. The others bounds of \eqref{eq:defnbfge} (uniformly in $\mathcal U_{\tilde \lambda_0}$)  also follow directly from the previous estimates  and we obtain $B_0 = \sup_{\tilde \lambda \in \mathcal U_{\tilde \lambda_0}} \in B(W_1,P_\textup{error})\lesssim 1$.
 	 	  	 	 Thus, from \cref{prop:highfreq} we now conclude that the eigenvalues $\lambda =\tilde \lambda m^2$ for  $\tilde \lambda $ in a neighborhood of $\tilde{\lambda}_0$ are characterized by
 	 	 \begin{align}\label{eq:highfreq}
 	 	 \frac 2 \pi	\xi(\tilde \lambda) + \vartheta_{\tilde \lambda_0,m,n,B(W_1,P_{\textup{error}})} = \frac{2n +1}{m}
 	 	 \end{align}
 	 	 for   $n\in \mathbb N$, where $|\vartheta_{\tilde \lambda_0,m,n,B(W_1,P_{\textup{error}})}| \lesssim_{\tilde \lambda_0} m^{-2}$.
 	 	 
 	 	 Now, for fixed $\tilde \lambda_0 \in (\Xi^2,\infty)\setminus \mathcal N_{\mathfrak p_0}$, let the sequence $(m_i,n_i)_{i\in \mathbb N}$ as above be such that \eqref{eq:approxnimi} holds. Then, we obtain associated eigenvalues from \eqref{eq:highfreq} which satisfy
 	 	 \begin{align}
 	 	 	\tilde \lambda_i = \xi^{-1}\left(\frac{\pi}{2}\frac{2n_i+1}{m_i} - \frac \pi 2  \vartheta_{\tilde \lambda_0,m_i,n_i,B(W_1,P_{\textup{error}})}  \right) =  \xi^{-1}\left( \xi(\tilde \lambda_0) + O_{\tilde \lambda_0}(m_i^{-2})\right) = \tilde \lambda_0 + O_{\tilde \lambda_0}(m_i^{-2}).
 	 	 \end{align}
 	 The last equality holds due Taylor's theorem and \eqref{eq:monotone}.
 \end{proof}
   
   \subsection{Bounds on \texorpdfstring{$\partial_\xi \lambda_{m\ell}$}{dxilambdaml} and \texorpdfstring{$\partial_\xi S_{m\ell}$}{dxisml} near interior scattering poles}
   \label{sec:boundsonpartialxlambda}
In the proof of \cref{thm:rough} in \cref{sec:mainthmkerr} we will need to control the quantities $\partial_\omega \lambda_{m\ell}(a \omega)$ and $\partial_\omega S_{m\ell}(a \omega)$ near the interior scattering poles, i.e. for $\omega \approx \omega_- m$. We will choose our initial data in \cref{sec:initialdata} to be supported on angular modes $m>0$ which are large and positive.   Thus, for the rest of this subsection, we assume that $m>0$ and think of $1/m$ as a semiclassical parameter.
We first note that $\xi \mapsto S_{m\ell}(\xi,x)$ is smooth as $\xi$ is a smooth parameter of the  angular o.d.e.\ \eqref{eq:spheroidalharmonicoperator} solved by $S_{m\ell}$.    Now, a direct computation shows that 
   \begin{align}\partial_\xi S_{m\ell} = \frac{\partial S_{m\ell}(\xi,x)}{\partial \xi}\end{align}
   solves the inhomogeneous o.d.e.\
   \begin{align}\label{eq:inhom}
   	(P_x - \lambda_{m\ell} ) {\partial_\xi S_{m\ell}} = (\partial_\xi P_x - \partial_\xi \lambda_{m\ell}) S_{m\ell}
   \end{align}
   with Dirichlet boundary conditions at $x = \pm 1$, where
     \begin{align} \label{eq:partialxofp}
     \partial_\xi P_x = \frac{\partial P_x (\xi)}{\partial \xi } = - 2 \Xi \xi \frac{x^2}{\Delta_x} - 2 m \frac{\Xi}{\Delta_x} \frac{a^2}{l^2} x^2.
     \end{align}
We will first consider $\partial_\xi \lambda_{m\ell}$.
   \begin{lemma}\label{lem:partialxlambda}
   	The eigenvalues $\lambda_{m\ell}(\xi)$ of $P_x(\xi)$ as in \eqref{eq:spheroidalharmonicoperatorx0} satisfy
   	   \begin{align}|\partial_\xi \lambda_{m\ell}(\xi)|\leq |\langle S_{m\ell}, \partial_\xi P_x S_{m\ell}\rangle_{L^2(-1,1)} |\end{align} 
   	   and thus, 
   	   \begin{align}
   	   \sup_{\xi \in (a m \omega_- - \frac 1m, a m \omega_- + \frac 1m)} |\partial_\xi \lambda_{m\ell} (\xi) | \lesssim |m|.
   	   \label{eq:boundonpartialxilambda}
   	   \end{align}
   \end{lemma}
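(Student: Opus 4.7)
The plan is a two-step argument: establish the pointwise bound by a Hellmann--Feynman-type identity, then estimate the right-hand side by its multiplication-operator $L^\infty$-norm.

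For the first inequality, I would exploit that $\xi\mapsto P_x(\xi)$ is a real-analytic family of self-adjoint Sturm--Liouville operators with compact resolvent and that the eigenvalue $\lambda_{m\ell}(\xi)$ is simple, hence depends real-analytically on $\xi$ together with a real-analytic eigenfunction branch $S_{m\ell}(\xi,\cdot)$ normalized by $\|S_{m\ell}(\xi)\|_{L^2(-1,1)}=1$. Differentiating the eigenvalue equation in \eqref{eq:evproblempomega-} (in the form $P_x(\xi)S_{m\ell}=\lambda_{m\ell}S_{m\ell}$) with respect to $\xi$ and pairing with $S_{m\ell}$ in $L^2(-1,1)$ yields
\begin{align}
\langle S_{m\ell},\partial_\xi P_x\, S_{m\ell}\rangle + \langle S_{m\ell}, P_x\, \partial_\xi S_{m\ell}\rangle = (\partial_\xi\lambda_{m\ell}) + \lambda_{m\ell}\langle S_{m\ell},\partial_\xi S_{m\ell}\rangle.
\end{align}
Self-adjointness of $P_x$ gives $\langle S_{m\ell},P_x\partial_\xi S_{m\ell}\rangle=\lambda_{m\ell}\langle S_{m\ell},\partial_\xi S_{m\ell}\rangle$, so the two terms involving $\partial_\xi S_{m\ell}$ cancel and we obtain the Hellmann--Feynman identity $\partial_\xi\lambda_{m\ell}=\langle S_{m\ell},\partial_\xi P_x\, S_{m\ell}\rangle$, from which the asserted inequality is immediate. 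A minor point to justify is that the eigenfunction branch can be chosen smoothly in $\xi$ (so that $\partial_\xi S_{m\ell}$ lies in the domain \eqref{eq:defndomain}); this follows from analytic perturbation theory for simple eigenvalues with compact resolvent.

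For the quantitative bound \eqref{eq:boundonpartialxilambda}, the key observation is that $\partial_\xi P_x$ is a multiplication operator, so
\begin{align}
|\langle S_{m\ell},\partial_\xi P_x\, S_{m\ell}\rangle| \leq \|\partial_\xi P_x\|_{L^\infty(-1,1)} \,\|S_{m\ell}\|_{L^2(-1,1)}^2 = \|\partial_\xi P_x\|_{L^\infty(-1,1)}.
\end{align}
From \eqref{eq:partialxofp}, $\partial_\xi P_x(x)=-2\Xi\xi\, x^2/\Delta_x - 2m\Xi (a^2/l^2)\, x^2/\Delta_x$. In the regime $\xi\in(am\omega_- - m^{-1},am\omega_- + m^{-1})$ we have $|\xi|\lesssim |m|$, while $\Delta_x=1-(a^2/l^2)x^2\geq 1-a^2/l^2>0$ since $a<l$ in $\mathscr P$ (see \cref{sec:parameterspace}), so $1/\Delta_x$ and $x^2$ are uniformly bounded on $[-1,1]$. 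Combining these, $\|\partial_\xi P_x\|_{L^\infty(-1,1)}\lesssim |m|$ uniformly in $\xi$ in the stated interval, which gives \eqref{eq:boundonpartialxilambda}.

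There is no substantive obstacle here; the only delicate point is the standard but non-trivial appeal to analytic perturbation theory to guarantee that $\lambda_{m\ell}$ and a corresponding normalized eigenfunction $S_{m\ell}$ admit smooth $\xi$-dependent branches, which is where simplicity of the Sturm--Liouville eigenvalues is used. Everything else is algebraic manipulation and an elementary $L^\infty$ estimate on the explicit multiplier in \eqref{eq:partialxofp}.
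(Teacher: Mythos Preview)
Your proof is correct and follows essentially the same approach as the paper: both derive the Hellmann--Feynman identity $\partial_\xi\lambda_{m\ell}=\langle S_{m\ell},\partial_\xi P_x\, S_{m\ell}\rangle$ by differentiating the eigenvalue equation and using self-adjointness, then bound the right-hand side by the operator norm $\|\partial_\xi P_x\|\lesssim |\xi|+|m|$. Your version is slightly more explicit about the analytic perturbation theory justifying the differentiability and about the $L^\infty$ bound on the multiplier, but the argument is the same.
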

  \begin{proof}
Taking the $L^2$-inner product of \eqref{eq:inhom} with $S_{m\ell}$ and using that $P_x$ is self-adjoint, shows that 
\begin{align}\label{eq:hisorthgonal}
	\langle S_{m\ell}, (\partial_\xi P_x - \partial_\xi \lambda_{m\ell} )S_{m\ell}\rangle_{L^2(-1,1)} =0
\end{align}from which we obtain 
\begin{align}
|\partial_\xi \lambda | \leq| \langle S_{m\ell},  \partial_\xi P_x S_{m\ell}\rangle_{L^2(-1,1)}| \leq \| \partial_\xi P_x\|
\end{align} 
in view of $\langle S_{m\ell} , S_{m\ell} \rangle_{L^2(-1,1)} =1 $. Here $ \| \partial_\xi P_x\|$ denotes the operator norm which is equal to the $L^\infty$ norm as $\partial_\xi P_x$ is a multiplication operator (see \eqref{eq:partialxofp}). 
Now, the claim follows from the fact that  $\| \partial_\xi P_x \|_{L^\infty}\lesssim |\xi| + |m|$. 
\end{proof} 

It is more difficult to obtain estimates for $\partial_\xi S_{m\ell}$ which we express as
 \begin{align}\partial_\xi S_{m\ell} =  \operatorname{Res}(\lambda_{m\ell}; P_x ) \Pi_{S_{m\ell}}^\perp H, \label{eq:orthogonal}\end{align} 
where \begin{align}H=  (\partial_\xi P_x - \partial_\xi \lambda_{m\ell}) S_{m\ell}\end{align} is the inhomogeneous term of \eqref{eq:inhom},  $\operatorname{Res}(\lambda ; P_x ) $ is the resolvent  and $\Pi_{S_{m\ell}}^\perp $ is the orthogonal projection on the orthogonal complement of $S_{m\ell}$. At this point we also remark that both $\partial_\xi S_{m\ell}$ and  $H$ are orthogonal to $S_{m\ell}$ which follows from $\xi \mapsto \langle S_{m\ell}, S_{m\ell} \rangle_{L^2(-1,1)}=1$ and \eqref{eq:hisorthgonal}, respectively. 

 A possible way to control the  resolvent operator $\operatorname{Res}(\lambda_{m\ell}; P_x ) \Pi_{S_{m\ell}}^\perp$ is to show  lower bounds on the spectral gaps $|\lambda_{m,\ell}(a\omega) - \lambda_{m,\ell+1}(a\omega)|$  \emph{uniformly} in $m,\ell \to \infty$ and $\omega\approx \omega_- m$.   Our   approach is based on an explicit construction of the resolvent kernel via suitable approximations with parabolic cylinder functions and Airy functions. 

We begin by noting that from standard results on solutions to Sturm--Liouville problems,  each eigenfunction  $S_{m\ell}$ is either symmetric or anti-symmetric around $x=0$. If  $S_{m\ell}$ is antisymmetric around $x=0$ we have $S_{m\ell} (x=0) =0$, i.e. Dirichlet boundary conditions at $x=0$. Analogously, if $S_{m\ell}$ is symmetric, we have Neumann boundary conditions at $x=0$, i.e.\ $\frac{\d }{\d x} S_{m\ell} (x=0) = 0$. Also note that $\partial_\xi S_{m\ell}$ inherits the symmetry properties of $S_{m\ell}$. Hence, the problem reduces to studying the  interval $x \in [0,1)$ with Dirichlet/Neumann boundary conditions at $x=0$ and Dirichlet boundary conditions at $x=1$. In view of the above, $\partial_\xi S_{m\ell}$ will satisfy
\begin{align}
\label{eq:boundaryconditions}
&\partial_\xi S_{m\ell}(x=0,\xi) = 0 \text { or } \frac{\d}{\d x}\partial_\xi S_{m\ell}(x=0,\xi)  = 0
\end{align}
depending on $S_{m\ell}(x=0) =0$ or $\frac{\d}{\d x} S_{m\ell}(x=0) =0$, respectively, as well as
\begin{align}
 \partial_\xi S_{m\ell}(x=1,\xi) = 0 .
\end{align}
In addition to satisfying the above boundary conditions, $\partial_\xi S_{m\ell}$ is also a solution of the inhomogenous o.d.e.~\eqref{eq:inhom} which we explicitly write out as 
\begin{align}
\Big[ - & \frac{\d }{\d x} \left(\Delta_x (1-x^2)  \frac{\d }{\d x} \cdot \right)+ \frac{\Xi^2 m^2}{\Delta_x (1-x^2) } - \Xi m^2 a^2 \omega_-^2 \frac{ x^2}{\Delta_x}- 2 m^2 a \omega_- \frac{\Xi}{\Delta_x} \frac{a^2}{l^2} x^2 \nonumber \\  + &\frac{2}{l^2}a^2 (1-x^2) - \Xi (2\epsilon a\omega_-  + \epsilon^2 m^{-2}) \frac{ x^2}{\Delta_x} -2\epsilon \frac{\Xi}{\Delta_x} \frac{a^2}{l^2}x^2 - \lambda_{m\ell} \Big] \partial_\xi S_{m\ell} \nonumber \\ &= \left[\partial_\xi \lambda_{m\ell} + 2 \Xi (a m \omega_- + \frac{\epsilon}{m})\frac{x^2}{\Delta_x} + 2 m \frac{\Xi}{\Delta_x} \frac{a^2}{l^2} x^2\right] S_{m\ell},\label{eq:inhom2}  
\end{align}
 where $|\epsilon|<1$ is such that $\xi = a m \omega_- + \frac{\epsilon}{m}$. 
   Moreover, $\partial_\xi S_{m\ell}$ and $H$ admit the same symmetries as $S_{m\ell}$ such that, both $H$ and $\partial_\xi S_{m\ell}$ are orthogonal to $S_{m\ell}$ in $L^2([0,1))$.
    Also recall that 
   \begin{align}
   	\langle S_{m\ell} , S_{m\ell} \rangle_{L^2(-1,1)} = \int_{-1}^1 S_{m\ell}^2  \d x = 1
   \end{align}
   such that 
   \begin{align}
\int_{0}^1 S_{m\ell}^2 \d x = \frac 12.
  \end{align}

 As in the proof of \cref{prop:angular}, we introduce the variable $y=y(x)$ through the conditions 
 \begin{align}
 y(0) =0, \; \frac{\d y}{\d x} = \frac{1}{\Delta_x (1-x^2)}
 \end{align}
  as well as the associated Hilbert space $L^2([0,\infty), w(y) \d y)$, where \begin{align}w(y) = \Delta_{x(y)} (1-x(y)^2).\end{align} This can be computed explicitly as 
\begin{align}
y(x) = \frac{1}{ 2\Xi} \left( \log(1+x) - \log(1-x) + \frac{a}{l}\log(1-\frac{a}{l} x) - \frac{a}{l} \log (1+\frac{a}{l}x ) \right).
\end{align}
Note that
\begin{align}\label{eq:defnofxiy}
e^{2 \Xi y } = \frac{1+x}{1-x} \left(\frac{1-\frac a l x}{1+\frac a l x}\right)^{\frac al}.
\end{align}
In this new variable, we define \begin{align} & s_1(y) := S_{m\ell}(x(y)) \text{ and } s_p(y) := \partial_\xi S_{m\ell}(x(y))
\label{eq:defns1}
\end{align}
such that 
\begin{align}
\int_0^\infty s_1^2(y) \Delta_x (1-x^2(y)) \d y = \int_{0}^1 S_{m\ell}^2 \d x = \frac 12. \label{eq:normalizationofs1}
\end{align}
Then,   we re-write \eqref{eq:inhom2} as 
\begin{align} \nonumber
-&\frac{\d^2}{\d y^2} s_p + m^2\left( \Xi^2 - \left[ \Xi a^2 \omega_-^2 + 2 a \omega_- \Xi \frac{a^2}{l^2} \right] x^2 (1-x^2) - \tilde \lambda \Delta_x (1-x^2) \right) s_p  \\ +& \Delta_x (1-x^2) \left( \frac{2}{l^2} a^2 (1-x^2)- \Xi (2\epsilon\omega_-  + \epsilon^2 m^{-2}) \frac{ x^2}{\Delta_x} -2 \epsilon \frac{\Xi}{\Delta_x} \frac{a^2}{l^2}x^2 \right) s_p  \nonumber \\ & =
\Delta_x (1-x^2)  \left[\partial_\xi \lambda  + 2 \Xi (a m \omega_- + \frac{\epsilon}{m})\frac{x^2}{\Delta_x} + 2 m \frac{\Xi}{\Delta_x} \frac{a^2}{l^2} x^2\right] s_1. \label{eq:inhomo}
\end{align}
We recall the definition of $W_1$ in \eqref{eq:defnW1} as 
  \begin{align} 
W_1 (x(y)) =  \Xi^2 - \left[ \Xi a^2 \omega_-^2 + 2 a \omega_- \Xi \frac{a^2}{l^2} \right] x(y)^2 (1-x(y)^2) - \tilde \lambda \Delta_{x(y)} (1-x(y)^2),
\end{align}
and  define \begin{align}\label{eq:defnofw222}&
W_2 (x(y)) :=  w(y) \left( \frac{2}{l^2} a^2 (1-x(y)^2)- \Xi (2\epsilon a\omega_-  + \epsilon^2  m^{-2}) \frac{ x(y)^2}{\Delta_{x(y)}} -2\epsilon \frac{\Xi}{\Delta_{x(y)}} \frac{a^2}{l^2}x(y)^2 \right)\end{align}
as well as 
\begin{align}& F (x(y)) := w(y)\left(  \partial_\xi \lambda  + 2 \Xi (a m \omega_- + \frac{\epsilon}{m})\frac{x(y)^2}{\Delta_{(y)}} + 2 m \frac{\Xi}{\Delta_{x(y)}} \frac{a^2}{l^2} x(y)^2 \right).
\end{align}
Thus, \eqref{eq:inhomo}  reads
\begin{align}\label{eq:inhomogen}
-\frac{\d^2 }{\d y^2}s_p + (m^2 W_1 + W_2)s_p = F s_1 ,
\end{align}
where we recall that $s_p$ satisfies Dirichlet/Neumann boundary conditions at $y=0$ and vanishes at $y=+\infty$. We also note that the previous orthogonality properties remain, i.e. both $s_p$ and $w^{-1} F s_1$ are orthogonal to $s_1$ in the Hilbert space $L^2([0,\infty), w(y) \d y)$.

In order to construct the resolvent operator, we  will first state the existence of a further suitable solution $s_2$  to the homogeneous equation \begin{align}
-\frac{\d^2 }{\d y^2} g + (m^2 W_1 + W_2)g = 0 \label{eq:homo}
\end{align} which is linearly independent from $s_1$. This is the content of the following lemma which  will be proved in \cref{sec:proofofs2}. 
\begin{lemma}
	\label{prop:existenceofs2}
	Let $m\in \mathbb N$ sufficiently large as in \cref{sec:proofofs2}. For $\xi \in ( a \omega_- m -\frac{1}{m} , a \omega_- m + \frac{1}{m})$, there exists a solution $s_2$ to \eqref{eq:homo} with $\mathfrak W (s_1,s_2) =1$.
	 Moreover,  $g_p$  defined as
	\begin{align}\label{eq:defgp}
	g_p(y):= s_2(y) \int_{y}^{\infty} s_1^2(\tilde y) F(\tilde y) \d \tilde y +  s_1(y)  \int_{0}^y s_2(\tilde y) s_1(\tilde y) F(\tilde y) \d\tilde y  
	\end{align} satisfies 
	\begin{align}
\| g_p\|_{L^2([0,\infty), w(y) \d y) }^2 = 	\int_{0}^{\infty} g_p(y)^2 (1-x(y)^2) \Delta_x \d y \lesssim m.
	\end{align}
	\begin{proof}
		This is proved in \cref{sec:proofofs2}, more specifically the claim follows  from \cref{prop:estimateongp} and \cref{prop:estimateongp2}.
	\end{proof}
\end{lemma}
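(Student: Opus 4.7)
The plan is to construct $s_2$ by the standard semiclassical (WKB / Liouville--Green) method applied to the homogeneous equation \eqref{eq:homo}, and then to bound $g_p$ by substituting the resulting pointwise estimates for $s_1$ and $s_2$ into the explicit formula \eqref{eq:defgp}. Since \eqref{eq:homo} is a one-dimensional Schr\"odinger-type equation with large semiclassical parameter $m$, the dominant potential is $W_1$, with $W_2$ a bounded subleading perturbation. By \cref{lem:w1}, either $\tilde{\lambda} \leq \Xi^2$ and $[0,\infty)$ is entirely classically forbidden, or $\tilde{\lambda} > \Xi^2$ and there is a single turning point $y_0 = y(x_0)$ separating a classically allowed region $[0, y_0)$ from a forbidden region $(y_0, \infty)$. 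The eigenfunction $s_1$ is the WKB mode that decays at $+\infty$; I take $s_2$ to be the linearly independent growing WKB mode, normalized so that $\mathfrak{W}(s_1, s_2) = 1$. The construction is completed by matching across $y_0$ via an Airy approximation in an $O(m^{-2/3})$ boundary layer, replaced by a parabolic-cylinder matching when $\tilde{\lambda}$ is close to the threshold $\Xi^2$.

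From this construction I would derive the pointwise WKB bound $|s_1(y)\, s_2(y)| \lesssim \bigl(m \sqrt{|W_1(y)|}\bigr)^{-1}$ away from the turning point, with an $O(m^{-2/3})$ amplitude in the Airy boundary layer. Inspection of the explicit formula for $F$ together with \cref{lem:partialxlambda} yields $|F(y)| \lesssim m\, w(y)$. The orthogonality identity
\begin{align}
\int_0^\infty s_1^2(\tilde{y})\, F(\tilde{y})\, d\tilde{y} = 0,
\end{align}
which is the restriction of \eqref{eq:hisorthgonal} to $[0,\infty)$ via the definite parity of $S_{m\ell}$, plays a key role: it ensures compatibility of the boundary contribution in \eqref{eq:defgp} with the prescribed Dirichlet or Neumann condition on $g_p$ at $y = 0$, and, equivalently, permits rewriting the tail $\int_y^\infty s_1^2 F\, d\tilde{y}$ as $-\int_0^y s_1^2 F\, d\tilde{y}$ wherever that form is more convenient.

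With these ingredients, I would estimate $\|g_p\|_{L^2([0,\infty),\, w\, dy)}^2$ by splitting $[0,\infty)$ into the classically allowed region (if present) and the classically forbidden region and substituting the WKB bounds into the two pieces of \eqref{eq:defgp}. In the forbidden region, the tail $\int_y^\infty s_1^2 F\, d\tilde{y}$ is controlled by the (exponentially concentrated) integrand together with the orthogonality identity, and the product with $s_2(y)$ is tame because of the WKB bound on $s_1 s_2$. In the allowed region, oscillatory partial cancellation in $\int_0^y s_1 s_2 F\, d\tilde{y}$, captured quantitatively by integration by parts against the rapid WKB phase $m \int \sqrt{-W_1}$, yields a gain of $m^{-1/2}$ over the trivial estimate. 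Combining with the $O(m)$ size of $F$ and the integrability of $w$ near $+\infty$ then produces the stated bound $\|g_p\|_{L^2_w}^2 \lesssim m$.

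The main obstacle is the uniform WKB analysis across all regimes of $\tilde{\lambda}$, particularly near the threshold $\tilde{\lambda} = \Xi^2$, where the turning point $y_0$ pinches toward the boundary at $y = 0$ and the standard Airy connection degenerates, forcing parabolic-cylinder matching that interacts nontrivially with the boundary condition on $s_p$. A secondary difficulty is ensuring that all estimates are uniform in $\ell$ and in $\xi \in (am\omega_- - 1/m,\, am\omega_- + 1/m)$: since the spectral gaps of $P_{\omega_-}$ shrink as $O(m)$ in the semiclassical limit and can degenerate further at avoided crossings, one cannot rely on a resolvent estimate alone but must work directly with the explicit Green's-function representation \eqref{eq:defgp}.
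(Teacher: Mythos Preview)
Your overall framework—WKB in the bulk, Airy matching at the turning point, parabolic-cylinder matching near the threshold $\tilde\lambda=\Xi^2$—coincides with the paper's.  The paper splits the range of $\tilde\lambda$ at $\Xi^2+1$: for $\tilde\lambda\in(\Xi^2,\Xi^2+1]$ it builds $s_2$ from the parabolic-cylinder approximants $U,\bar U$ (\cref{prop:existenceofw1w2}), and for $\tilde\lambda>\Xi^2+1$ from the Airy approximants (\cref{prop:defnofw1w2}); a short argument (\cref{cor:max}) disposes of $\tilde\lambda\leq\Xi^2$ for large $m$.  Your observation that one must avoid spectral-gap arguments and work directly with the explicit Green kernel is exactly the paper's point of view.

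Where your sketch diverges is in the final $L^2$ estimate.  The paper uses \emph{no} oscillatory cancellation or integration by parts, and in fact your claimed $m^{-1/2}$ gain would not materialize: in the allowed region $s_1 s_2$ carries a non-oscillatory component of the same size $\sim (m\sqrt{|W_1|})^{-1}$ as the oscillatory one (write both factors as WKB cosines and expand the product), and that DC part reproduces the trivial bound under any number of integrations by parts.  The good news is that the trivial bound already suffices.  Concretely, for the first term in \eqref{eq:defgp} the paper writes $s_1^2=(A\,w_1)\cdot s_1$ inside the inner integral, cancels the envelope $E^{-1}$ of $w_1$ against the envelope $E$ of $w_2$ in $s_2=w_2/(A\,\mathfrak W(w_1,w_2))$ via monotonicity of $E_U$ (resp.\ $E_{\mathrm{Ai}}$), and then applies Cauchy--Schwarz together with the normalization $\|s_1\|_{L^2_w}^2=\tfrac12$ and the crude bound $\int_0^\infty F^2/w\,dy\lesssim m^2$.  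The second term is handled by the uniform product bound $|s_1 s_2|\lesssim M^2/|\mathfrak W(w_1,w_2)|$ (with $M=M_U$ or $M_{\mathrm{Ai}}$) and the same Cauchy--Schwarz.  In neither case is the orthogonality $\int_0^\infty s_1^2 F\,dy=0$ used for the norm estimate; its role is purely structural, forcing the free constant $c_{p2}=0$ so that $s_p=g_p+c_{p1}s_1$ (the lemma immediately after \cref{prop:existenceofs2}).  So you can drop the integration-by-parts step and the tail-rewriting trick altogether; the envelope cancellation plus Cauchy--Schwarz is both simpler and what actually closes.
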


With $s_2$ in hand we will now construct the integral kernel of the resolvent $\operatorname{Res}(\lambda_{m\ell}; P_x ) \Pi_{S_{m\ell}}^\perp$ in   $y$-coordinates. More specifically, we show
\begin{lemma} 
	The solution $s_p (y) = \partial_\xi S_{m\ell}(x(y)) = \operatorname{Res}(\lambda_{m\ell}; P_x ) \Pi_{S_{m\ell}}^\perp (H) (x(y))$ of \eqref{eq:inhomogen} satisfies
\begin{align}\label{eq:defsp}
	s_p(y) =  g_p(y) + c_{p1} s_1(y)  =  s_2(y) \int_{y}^{\infty} s_1^2(\tilde y) F(\tilde y) \d \tilde y +  s_1(y) \left( \int_{0}^y s_2(\tilde y) s_1(\tilde y) F(\tilde y) \d\tilde y +c_{p1}\right),
\end{align}
for some constant $c_{p1} \in \mathbb R$.
\begin{proof}
Since $s_p$ is a solution of the inhomogeneous o.d.e.\ \eqref{eq:inhomogen}, it can be written (using  $\mathfrak W (s_1,s_2)=1$) as
\begin{align}\label{eq:generalsolutionofangular}
	s_p (y)=   s_2(y) \left( \int_{y}^{\infty} s_1^2(\tilde y) F(\tilde y) \d \tilde y + c_{p2}\right) +  s_1(y) \left( \int_{0}^y s_2(\tilde y) s_1(\tilde y) F(\tilde y) \d\tilde y +c_{p1}\right)
\end{align}
for some constants $c_{p1} , c_{p2} \in \mathbb R$.  It remains to show that $c_{p2} =0$ and we consider the cases of Dirichlet/Neumann conditions of $s_1$ at $y=0$ independently.

First, assume that $s_1(y=0)=0$, then we also have that $s_p (y=0) =0$ (see \eqref{eq:boundaryconditions}). Evaluating the right hand side of \eqref{eq:generalsolutionofangular} at $y=0$ we obtain 
\begin{align}\nonumber
s_2(0)& \left( \int_{0}^{\infty} s_1^2(\tilde y) F(\tilde y) \d \tilde y + c_{p2}\right) +  s_1(0) \left( \int_{0}^0 s_2(\tilde y) s_1(\tilde y) F(\tilde y) \d\tilde y +c_{p1}\right) \\ 
& = s_2(0) \left( \int_{0}^{\infty} s_1^2(\tilde y) F(\tilde y) \d \tilde y + c_{p2}\right) = s_2(0)   c_{p2},
\end{align}
where we have used that $s_1(y=0)=0$ and that $s_1$ is $L^2([0,\infty), w(y) \d y)$-orthogonal to $w^{-1} s_1 F$. Moreover, from the Wronskian condition $\mathfrak W(s_1,s_2) = 1$ we have that  $s_2(y=0) \neq 0$. Thus, $c_{p2}=0$ follows from $s_p(y=0) =0$.

Now, if $s_1$ satisfies the Neumann condition $\frac{\d}{\d y} s_1(y=0)=0$, then so does $s_p$, i.e.~$\frac{\d}{\d y} s_p (y=0) =0$. Differentiating the right hand side of \eqref{eq:generalsolutionofangular} and evaluating this at $y=0$ yields 
\begin{align} 
\frac{\d}{\d y}s_2(0) c_{p2} - s_2(0) s_1^2(0) F(0) + s_1(0)^2 s_2(0) F(0) = \frac{\d}{\d y}s_2(0) c_{p2}.
\end{align}
From the Wronskian identity we again have that $\frac{\d}{\d y} s_2 (0) \neq 0$ such that $c_{p2} =0$ follows from $\frac{\d}{\d y} s_p (y=0) =0$.
\end{proof}
\end{lemma}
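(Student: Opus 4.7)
The plan is to combine the variation-of-parameters formula with the boundary conditions at $y=0$ and the orthogonality $\partial_\xi S_{m\ell}\perp S_{m\ell}$ in $L^2(-1,1)$, translated through the change of variables $x\leftrightarrow y$.

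First, using the linearly independent homogeneous solutions $s_1,s_2$ with $\mathfrak{W}(s_1,s_2)=1$ furnished by \cref{prop:existenceofs2}, a direct differentiation shows that $g_p$ defined in \eqref{eq:defgp} satisfies $-g_p''+(m^2 W_1 + W_2)g_p = s_1 F$. Indeed, the boundary contributions produced by differentiating the two integrals combine to $s_1 F(s_1' s_2 - s_1 s_2') = -s_1 F\,\mathfrak{W}(s_1,s_2) = -s_1 F$, which is exactly what is needed after $s_1'', s_2''$ are replaced via the homogeneous equation. Hence the general solution of \eqref{eq:inhomogen} is $s_p = g_p + c_{p1} s_1 + c_{p2} s_2$ for constants $c_{p1}, c_{p2} \in \mathbb{R}$, and the content of the lemma is the vanishing $c_{p2}=0$.

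Next, I would unpack the orthogonality \eqref{eq:hisorthgonal}, which asserts $\langle S_{m\ell}, H\rangle_{L^2(-1,1)}=0$ for $H=(\partial_\xi P_x - \partial_\xi\lambda_{m\ell})S_{m\ell}$. Pulling this back through the change of variables (under which $\d x = w\,\d y$) and identifying the inhomogeneity $F s_1$ of \eqref{eq:inhomogen} with $w\cdot H(x(y))$ yields $\int_{-\infty}^\infty s_1^2(\tilde y)\, F(\tilde y)\,\d\tilde y = 0$. Since both $P_x$ and $\partial_\xi P_x$ commute with the reflection $x\mapsto -x$, the eigenfunction $s_1$ has definite parity about $y=0$ and $F$ is even, so the two half-axis integrals coincide and $\int_{0}^\infty s_1^2(\tilde y)\, F(\tilde y)\,\d\tilde y = 0$.

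Combining these gives $g_p(0) = s_2(0)\int_0^\infty s_1^2 F\,\d\tilde y = 0$, and a parallel short calculation (the boundary term $s_1 s_2 F$ at $\tilde y = y$ from the two derivatives cancels) gives $g_p'(0) = s_2'(0)\int_0^\infty s_1^2 F\,\d\tilde y = 0$. The constant $c_{p2}$ is then pinned down by the boundary condition at $y=0$: if $s_1(0)=0$ (the Dirichlet case), then $s_p(0)=0$ forces $c_{p2}\,s_2(0)=0$, and the Wronskian identity $s_1(0)s_2'(0) - s_1'(0)s_2(0) = 1$ gives $s_2(0)\neq 0$, whence $c_{p2}=0$; the Neumann case $s_1'(0)=0$ is strictly analogous, with $s_p'(0)=0$ yielding $c_{p2}\,s_2'(0)=0$ and the Wronskian again forcing $s_2'(0)\neq 0$. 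The one conceptually nontrivial step is the orthogonality identity $\int_0^\infty s_1^2 F\,\d y = 0$; once this is in hand, the rest of the argument is purely algebraic.
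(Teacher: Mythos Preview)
Your proof is correct and follows essentially the same route as the paper: variation of parameters to obtain $s_p=g_p+c_{p1}s_1+c_{p2}s_2$, the orthogonality $\int_0^\infty s_1^2 F\,\d y=0$, and then the Dirichlet/Neumann boundary condition at $y=0$ together with the Wronskian identity to force $c_{p2}=0$. The only cosmetic difference is that you first isolate $g_p(0)=g_p'(0)=0$ before invoking the boundary condition, whereas the paper evaluates the full expression \eqref{eq:generalsolutionofangular} at $y=0$ directly; the paper also simply cites the orthogonality (established in the paragraph preceding the lemma) rather than re-deriving it from parity as you do.
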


Up to the completion of the proof of \cref{prop:existenceofs2}, which is the content of \cref{sec:proofofs2}, we will now show the   main proposition of this subsection.
\begin{prop}\label{eq:smlbounds}
For all $m\in \mathbb N$ sufficiently large, the eigenfunctions $S_{m\ell}(\xi, \cos \theta)$  of the operator $P$ defined in \eqref{eq:spheroidalharmonicoperator} satisfy
\begin{align}
\label{eq:formulaonsml}
\sup_{\omega \in (\omega_- m - \frac{1}{am}, \omega_- m + \frac{1}{a m } )} \| \partial_\omega S_{m\ell}(a \omega, \cdot )\|_{L^2 ([0,\pi]; \sin\theta \d \theta)} \lesssim m^\frac{1}{2}.
\end{align}
\begin{proof}
	First note that $\xi = a \omega $ such that $\partial_\omega = a \partial_\xi $. Then, we have
	\begin{align}\nonumber
	\|\partial_\xi S_{m\ell} \|^2_{L^2(-1,1)}& = \|  \operatorname{Res}(\lambda_{m\ell}; P_x ) \Pi_{S_{m\ell}}^\perp H \|_{L^2(-1,1)}^2
	= 2\|s_p\|_{L^2([0,\infty), w(y) \d y)}^2 \\
	&= 2 \|\Pi_{s_1}^\perp g_p \|_{L^2([0,\infty), w(y) \d y)}^2 \leq 2  \| g_p \|_{L^2([0,\infty), w(y) \d y)}^2,\label{eq:estimateonparxisml} \end{align}
	where we have used that \begin{align}
	s_p = \Pi_{s_1}^\perp g_p.
	\end{align}
	 Here, $ \Pi_{s_1}^\perp$ is the projection on the orthogonal complement of $s_1$ in $L^2([0,\infty), w(y) \d y)$.	
	The estimate \eqref{eq:formulaonsml} follows now from \eqref{eq:estimateonparxisml} and \cref{prop:existenceofs2}.
\end{proof}
\end{prop}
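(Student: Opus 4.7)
The plan is to translate the estimate on $\partial_\xi S_{m\ell}$ into an inhomogeneous Sturm--Liouville problem and to solve it by variation of parameters. Since the eigenvalues $\lambda_{m\ell}(\xi)$ are simple and depend analytically on $\xi$, differentiating $P_x(\xi) S_{m\ell} = \lambda_{m\ell} S_{m\ell}$ in $\xi$ produces equation \eqref{eq:inhom}, with source $H = (\partial_\xi P_x - \partial_\xi \lambda_{m\ell}) S_{m\ell}$. By Lemma \ref{lem:partialxlambda}, $|\partial_\xi \lambda_{m\ell}| \lesssim m$, and \eqref{eq:partialxofp} gives $|H| \lesssim m|S_{m\ell}|$ pointwise; crucially, $H \perp S_{m\ell}$ in $L^2(-1,1)$. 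Differentiating the normalization of $S_{m\ell}$ similarly forces $\partial_\xi S_{m\ell}$ to be orthogonal to $S_{m\ell}$, which together with the Dirichlet conditions at $x = \pm 1$ characterizes $\partial_\xi S_{m\ell}$ uniquely.

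I would then pass to the Liouville--Green variable $y(x)$, in which \eqref{eq:inhom} takes the Schr\"odinger form \eqref{eq:inhomogen}, the weighted space $L^2([0,\infty), w(y)\,dy)$ inheriting the $L^2(-1,1)$ structure and the $\mathbb{Z}_2$--symmetry of $S_{m\ell}$ reducing the analysis to a half-line with Dirichlet or Neumann data at $y=0$. With $s_1(y) := S_{m\ell}(x(y))$ and a linearly independent homogeneous solution $s_2$ normalized by $\mathfrak W(s_1,s_2) = 1$, variation of parameters yields the two-parameter family \eqref{eq:generalsolutionofangular}. I would then fix the constant $c_{p2}$ by matching at $y=0$, using the orthogonality $\langle F s_1, s_1\rangle_{L^2(w\,dy)} = 0$ (a restatement of $H \perp S_{m\ell}$) to kill the boundary term and the non-vanishing of $s_2(0)$ or $s_2'(0)$ (forced by $\mathfrak W(s_1,s_2)=1$) to conclude $c_{p2}=0$. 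This reduces $s_p$ to $g_p + c_{p1} s_1$ with $g_p$ given by \eqref{eq:defgp}, and the orthogonality $s_p \perp s_1$ identifies $s_p = \Pi_{s_1}^{\perp} g_p$, so that $\|s_p\| \leq \|g_p\|$.

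The final step is the quantitative $L^2$--bound $\|g_p\|_{L^2(w\,dy)}^2 \lesssim m$; combined with the projection estimate above and $\partial_\omega = a\partial_\xi$, this yields the desired $m^{1/2}$ growth. The main obstacle, and the place where all the analytic work concentrates, is precisely this bound on $g_p$: it requires the existence of a second solution $s_2$ which does not grow too fast in the semi-classical limit $m \to \infty$, despite the fact that the potential $W_1$ may develop turning points as classified in \cref{lem:w1}. A naive route via uniform resolvent bounds is blocked because one cannot exclude small spectral gaps $|\lambda_{m\ell} - \lambda_{m,\ell\pm 1}|$; instead, the plan is to construct $s_2$ by explicit semi-classical matching---parabolic cylinder function models in the non-turning regime $\tilde\lambda < \Xi^2$ and Airy function models near simple turning points when $\tilde\lambda > \Xi^2$---and then estimate \eqref{eq:defgp} directly, exploiting the decay of $s_1$ away from the classically allowed region to absorb the exponential growth of $s_2$. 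This construction and the corresponding norm bound are exactly the content of \cref{prop:existenceofs2}, whose proof I would defer to a dedicated subsection.
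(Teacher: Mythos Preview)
Your proposal is correct and follows essentially the same route as the paper: reduce to the inhomogeneous problem \eqref{eq:inhom}, pass to the $y$-variable, write $s_p = \Pi_{s_1}^\perp g_p$ via variation of parameters and the boundary/orthogonality argument, and defer the bound $\|g_p\|^2_{L^2(w\,dy)} \lesssim m$ to \cref{prop:existenceofs2}. One small correction in your description of that deferred construction: the case $\tilde\lambda \leq \Xi^2$ does not occur for large $m$, and the parabolic cylinder model is used for $\Xi^2 < \tilde\lambda \leq \Xi^2 + 1$ (where the turning points $\pm y_0$ coalesce at the origin), while the Airy model handles $\tilde\lambda > \Xi^2 + 1$ (a single well-separated turning point on the half-line).
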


\subsection{Semi-classical resolvent estimates near interior scattering poles} \label{sec:proofofs2}
Throughout this subsection (\cref{sec:proofofs2}) we assume that 
\begin{align}\xi \in \left( a \omega_- m -\frac{1}{m} , a \omega_- m + \frac{1}{m}\right)
\label{eq:restrictionofxi}\end{align}
and $m>0$. The goal of this subsection is to show {\cref{prop:existenceofs2}. We first argue that for sufficiently large $m$, we only need to consider the case $\tilde{\lambda} > \Xi^2$ as all eigenvalues $\lambda_{m\ell}(a\omega_- m)$ at the interior scattering poles are larger than $\Xi^2 m^2$.
\begin{lemma}\label{cor:max} For  sufficiently large $m$, we have $\inf_{y\in \mathbb R}\left( m^2 W_1(y) + W_2(y)\right) >0 $ for any  $\tilde \lambda \leq \Xi^2 $.
\begin{proof}  By monotonicity of $W_1$ with respect to $\tilde \lambda$, it suffices to show the result for $\tilde \lambda = \Xi^2$. We recall   from the definition of $W_2$ in \eqref{eq:defnofw222} that $W_2$ is uniformly bounded  and satisfies \begin{align}
	W_2 (x=0) = \frac{2a^2}{l^2} >0.\end{align}
Since $W_1\geq 0$ in view of  \cref{lem:w1}, we have positivity  in a neighborhood $U$ around $y=0$, i.e.\ $\inf_{y\in U}\left( m^2 W_1(y) + W_2(y)\right) >0 $. Outside that neighborhood, in view of \cref{lem:w1}, we have that $\inf_{y\in \mathbb R \setminus U}  W_1(y)  >0$. To conclude we use that $W_2$  is uniformly bounded and the claim follows for all $m$ sufficiently large. 
\end{proof}
\end{lemma}

\begin{lemma}
For $\xi$ as in \eqref{eq:restrictionofxi} and for sufficiently large $m$  as in \cref{cor:max}, any eigenvalue  $\lambda_{m\ell}(\xi) = m^2 \tilde \lambda$ of $P_x$ satisfies $\tilde \lambda > \Xi^2 $.
\begin{proof} 
This is immediate as for $\tilde \lambda \leq \Xi^2$ and sufficiently large $m$, the operator $ - \frac{\d^2}{\d y^2} + m^2 W_1 + W_2$  is strictly positive in view of \cref{cor:max}. 
\end{proof}
\end{lemma}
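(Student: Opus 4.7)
The plan is to argue by contradiction, exploiting directly the semi-classical positivity of the combined potential $m^2 W_1 + W_2$ just established in the previous corollary.

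First, I would suppose that there is an eigenvalue $\lambda_{m\ell}(\xi) = m^2 \tilde{\lambda}$ with $\tilde\lambda \leq \Xi^2$ and corresponding non-trivial eigenfunction $S_{m\ell}(\xi,\cdot)$. Transforming the eigenvalue problem $P_x S_{m\ell} = m^2 \tilde\lambda\, S_{m\ell}$ into the form \eqref{eq:equivalentformulation}, i.e.\ $\tilde P_{\omega_-} S_{m\ell} = 0$, and changing to the $y$-variable via $\tfrac{\d y}{\d x} = \frac{1}{\Delta_x (1-x^2)}$ (which turns $\Delta_x (1-x^2) \tfrac{\d}{\d x}$ into $\tfrac{\d}{\d y}$), the function $s_1(y) := S_{m\ell}(\xi, x(y))$ defined in \eqref{eq:defns1} satisfies the homogeneous ODE
\begin{align}
-\frac{\d^2 s_1}{\d y^2} + (m^2 W_1 + W_2)\, s_1 = 0, \qquad y \in [0,\infty),
\end{align}
where $W_1$ is evaluated with the putative $\tilde\lambda \leq \Xi^2$. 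As noted just before this lemma, $s_1$ inherits from $S_{m\ell}$ either Dirichlet or Neumann boundary conditions at $y=0$, and since $S_{m\ell}$ belongs to the domain $\mathcal D$ in \eqref{eq:defndomain} it satisfies $S_{m\ell}(\xi,\pm 1)=0$, which in $y$-coordinates translates to $s_1(y) \to 0$ as $y \to \infty$.

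The key step is then a straightforward energy identity: multiplying the ODE by $s_1$ and integrating by parts on $[0,\infty)$ yields
\begin{align}
\int_0^\infty (s_1')^2 \,\d y \;+\; \int_0^\infty (m^2 W_1 + W_2)\, s_1^2\,\d y \;=\; \bigl[s_1'\, s_1\bigr]_0^\infty.
\end{align}
The boundary term at $y=0$ vanishes because $s_1(0)=0$ or $s_1'(0)=0$; the boundary term at $y=\infty$ vanishes using the pointwise decay of $s_1$ together with a standard ODE argument controlling $s_1'$ at infinity. Applying \cref{cor:max}, for $m$ large the integrand $(s_1')^2 + (m^2 W_1 + W_2) s_1^2$ is pointwise non-negative and is strictly positive wherever $s_1 \neq 0$; hence $s_1 \equiv 0$, contradicting that $s_1$ is a non-trivial eigenfunction.

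The only real obstacle is the rigorous justification of the vanishing of the boundary term at $y=\infty$, since the natural Hilbert space $L^2([0,\infty), w(y)\,\d y)$ carries the degenerating weight $w(y) = \Delta_x(1-x^2)$ and membership of $s_1$ in this space does not immediately control $s_1'(y) s_1(y)$ pointwise at infinity. This can be handled by using the Dirichlet condition $S_{m\ell}(\xi,\pm 1) = 0$ from the domain \eqref{eq:defndomain} together with a Volterra/bootstrap argument on the ODE itself in the region where $m^2 W_1 + W_2 > 0$, showing that any $L^2_w$ solution satisfying the right endpoint condition necessarily has $s_1, s_1' \to 0$ at an exponential rate as $y \to \infty$. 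With this, the identity above yields the contradiction and the lemma follows.
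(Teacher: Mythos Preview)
Your proof is correct and follows essentially the same idea as the paper's one-line argument: both rely on \cref{cor:max} to conclude that for $\tilde\lambda\leq\Xi^2$ the operator $-\tfrac{\d^2}{\d y^2}+m^2W_1+W_2$ is strictly positive, forcing $s_1\equiv 0$. The paper simply invokes operator positivity (which implicitly encodes the boundary behavior via the self-adjoint domain), whereas you spell out the energy identity and justify the vanishing of the boundary term at $y=\infty$; since $m^2W_1+W_2\to m^2\Xi^2>0$ as $y\to\infty$, the $L^2_w$ eigenfunction $s_1$ is the recessive solution and decays exponentially, so your Volterra-type argument is the right way to close this detail.
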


Thus, it suffices to show \cref{prop:existenceofs2} for $\tilde \lambda >\Xi^2$ and we consider the case $\tilde \lambda \in (\Xi^2,\Xi^2+1]$ in \cref{sec:tildelambda<xi+1} and the case $\tilde \lambda \in (\Xi^2+1,\infty)$ in \cref{sec:tildelambda>xi+1}.

\subsubsection{The case \texorpdfstring{$\Xi^2 < \tilde \lambda \leq \Xi^2 +1 $}{lambda < xi}}\label{sec:tildelambda<xi+1}
Let $\tilde \lambda \in (\Xi^2 , \Xi^2 +1]$. In this range, $\tilde \lambda $ can be arbitrarily close to $\Xi^2$. As $\tilde \lambda \to \Xi^2$, the root $y_0>0$ of the potential $W_1(y)$ coalesces with $y=0$. Thus, our estimates need to be uniform in this limit and the appropriate approximation will be given by parabolic cylinder functions. To do so we will introduce the following Liouville transform which is motivated by \cite{second-order}. We define a new variable\footnote{Here and in the following, $\xi$ is not to be mixed up with $\xi$ appearing in \eqref{eq:spheroidalharmonicoperatorx0}.} \begin{align}\xi=\xi(y)\end{align}
uniquely through the conditions
\begin{align}\label{eq:defndxidy}
 \left(\frac{\d \xi }{\d y}\right)^2 = \frac{W_1(y)}{\xi^2 - \alpha^2} \text{ for } y \neq y_0,  
\end{align}
 $\xi(y_0) = \alpha >0 $ and $\xi(y=0)=0$.  By construction, this defines $\xi = \xi(y)$ as a smooth (even real-analytic) increasing function with values in $[0,\infty)$, see also \cite[Section~2.2]{second-order}. Note that this holds true as the right hand side satisfies
\begin{align}\frac{W_1(y)}{\xi^2 - \alpha^2}>0\end{align} for $y>0$. Equivalently, the function $\xi(y)$ can be expressed as
\begin{align}
&	\int_{y}^{y_0} (-W_1)^{\frac 12} \d \tilde{y} = \int_{\xi(y)}^{\alpha} ( \alpha^2 - \tau^2 )^{\frac 12} \d \tau \text{ for }  y \leq y_0, \\
&	\int_{y_0}^{y} W_1^{\frac 12} \d \tilde{y} = \int_{\alpha}^{\xi(y)} (\tau^2 - \alpha^2)^{\frac 12} \d \tau \text{ for } y_0 \leq y<\infty. \label{eq:defnxilarge}
\end{align}
We also consider $y=y(\xi)$ as a function $\xi$ and define 	
\begin{align}\label{eq:defnsigma1and2}
\sigma_1 := \left(\frac{\d y}{\d \xi}\right)^{-\frac 12} s_1,
\end{align}
where we recall that $s_1$ was defined in \eqref{eq:defns1}.
In this new variable $\xi$, the function  $\sigma_1=\sigma_1(\xi)$ satisfies
\begin{align}\label{eq:newode}
-\frac{\d^2 \sigma}{\d \xi^2} +\left[ m^2 (\xi^2 - \alpha^2) + \Psi \right] \sigma=0,
\end{align}
where the error function $\Psi$ is given by
\begin{align} \label{eq:defnofpsi}
\Psi = \left( \frac{\d y}{\d \xi } \right)^2 W_2 + \left( \frac{\d y}{\d \xi} \right)^{\frac 12} \frac{\d^2}{\d \xi^2} \left(\frac{\d y}{\d \xi} \right)^{-\frac 12}.
\end{align}
Since $W_1$ is analytic and non-increasing in $\tilde \lambda$, we apply \cite[Lemma~1]{second-order} to conclude that $\Psi$ is continuous for $(\xi,\alpha) \in [0,\infty)\times [0, A]$, where $A = \xi(y_0 ( \tilde \lambda = \Xi^2+1))$. Now, we define the error-control function (see (6.3) of \cite{second-order}) 
\begin{align}\label{eq:defnf1}
	F_1 := \int_0^{\xi} \frac{|\Psi|}{\Omega(\xi \sqrt{2m})} \d \xi 
\end{align}
with $\Omega(x) = |x|^{\frac 13}$. We will now bound the total variation of the error-control function $F_1$ in \eqref{eq:defnf1}. To do so we first show 
\begin{lemma}
	The smooth and monotonic functions $\xi= \xi(y)$ and $y=y(\xi)$ as defined in \eqref{eq:defndxidy} satisfy
	\begin{align}	
	\label{eq:xi^2isy}
	&\xi^2(y) \sim y\\
	&	\frac{\d y }{\d \xi} \sim \xi \label{eq:estimateonder}\\
	&\left|	\frac{\d^2 y }{\d \xi^2}\right| \lesssim  1 \label{eq:estimatesecondder}\\
	&	\left|	\frac{\d^3 y }{\d \xi^3}\right| \lesssim  \xi^{-1} \label{eq:estimatesthirdorder}
\end{align}
for all $\xi$ sufficiently large.
	\begin{proof}\label{lem:xiofy}
We estimate \begin{align}
			\frac{\d \xi }{\d y } \lesssim \sqrt{ \frac{\Xi^2}{\xi^2 - \alpha^2}  } \lesssim \frac{1}{\xi}
		\end{align}
		for all $\xi$ large enough, where we have used that $W_1 \sim \Xi^2$ for large $\xi$. Similarly, \begin{align}\frac{\d \xi }{\d y } \gtrsim \frac{1}{\xi}\end{align}
		for $\xi$ large which shows \eqref{eq:estimateonder}. Upon integrating the inequalities, we obtain \eqref{eq:xi^2isy}. 
		
		For \eqref{eq:estimatesecondder}, we differentiate \eqref{eq:defndxidy} to obtain 
		\begin{align}
	\left|	\frac{\d^2 y }{\d \xi^2}\right| = \left| \frac{\d }{\d \xi } \sqrt{\frac{\xi^2 - \alpha^2}{W_1(y(\xi))}} \right| \lesssim \sqrt{\frac{W_1}{\xi^2 - \alpha^2} }  \left| \frac{\xi}{W_1} + \frac{\xi^2}{W_1^2} \frac{\d W_1}{\d x} \frac{\d x}{\d y} \frac{\d y }{\d \xi}\right| \lesssim 1,
		\end{align}
		where we have used that \begin{align}W_1 \sim 1, \frac{\d W_1}{\d x } \lesssim 1, \frac{\d x }{\d y } \lesssim e^{-2 \Xi y} ,  \text{ and }\frac{\d y }{\d \xi} \lesssim \xi\end{align} for $\xi$ large enough. In particular, it follows that $|\frac{\d W_1}{\d \xi}|\lesssim e^{-\xi}$ and similarly that $\left| \frac{\d^2 W_1}{\d \xi^2}\right| \lesssim e^{-\xi}$. 
	
		Finally, we proceed to \eqref{eq:estimatesthirdorder} by estimating for large $\xi$ 
		\begin{align}\nonumber
	\left|	\frac{\d^3 y }{\d \xi^3}\right|  & = \left| \frac{\d^2 }{\d \xi^2 } \sqrt{\frac{\xi^2 - \alpha^2}{W_1(y(\xi))}} \right| = \left| \frac{\d}{\d \xi } \left( \frac{\xi}{\sqrt{W_1 (\xi^2 - \alpha^2) }} - \frac{\sqrt{\xi^2-\alpha^2} }{2W_1^{\frac 32}} \frac{\d W_1}{\d \xi } \right)  \right|\\
	& \lesssim \left| \frac{1}{\sqrt{W_1(\xi^2-\alpha^2)} }  \right|  +    \frac{\left|\frac{\d W_1}{\d \xi} \xi^2 \right| + | W_1 \xi | } { \left| W_1 (\xi^2-\alpha^2)\right|^{\frac 32} }  + \left| \frac{\xi \frac{\d^2 W_1}{\d \xi^2}}{W_1^{\frac 32}} \right| + \left| \frac{\frac{\d W_1}{\d \xi} }{W_1^{\frac 32}} \right| + \left|\frac{\xi }{W_1^{\frac 52}} \left( \frac{\d W_1}{\d \xi} \right)^{2} \right|  \lesssim \xi^{-1} 
		\end{align}
	in view of the above estimates. 
	\end{proof}
\end{lemma}
This allows us now to estimate the total variation of the error control function $F_1$.
\begin{lemma}\label{lem:boundoftotalvariation}
The error control function $F_1$ as defined in \eqref{eq:defnf1} satisfies
\begin{align}
	\mathcal{V}_{0,\infty} (F_1) \lesssim \frac{1}{ m^{\frac 16}}.
\end{align}
\begin{proof}
	As $\Psi$ is continuous on $[0,\infty)$, it suffices to control the integral for large $\xi$. We control both terms of 
	\begin{align}\nonumber
		\Psi& = \left( \frac{\d y}{\d \xi } \right)^2 W_2 + \left( \frac{\d y}{\d \xi} \right)^{\frac 12} \frac{\d^2}{\d \xi^2} \left(\frac{\d y}{\d \xi} \right)^{-\frac 12} \\
		& = \left( \frac{\d y}{\d \xi } \right)^2 W_2 + \left( \frac{\d y}{\d \xi} \right)^{-2} \left(  - \frac 12  \frac{\d y}{\d \xi} \frac{\d^3 y}{\d \xi^3}  + \frac 34\left(\frac{\d^2 y}{\d \xi^2} \right)^{2}  \right) \label{eq:estimateonpsi}
	\end{align}
independently. For large $\xi$, we estimate the first term as 
	\begin{align}
	 \left|	\left(\frac{\d y }{\d \xi } \right)^2 W_2\right|\leq |W_2|{\frac{\xi^2 - \alpha^2}{|W_1|} }\lesssim \xi^2 |W_2|
	\end{align}
	in view of \begin{align}W_1 \geq \frac{\Xi^2}{2}\end{align} for $\xi$ sufficiently large. 
	Further, for $\xi$ sufficiently large we have $|W_2|\lesssim e^{-2\Xi y}$  and thus,
	\begin{align}\xi^2 |W_2|\lesssim \xi^2 e^{- 2 \Xi y(\xi)}\lesssim e^{-\xi}\end{align}in view of \cref{lem:xiofy}.
	
	 For the second term of \eqref{eq:estimateonpsi}, we use \cref{lem:xiofy} to estimate  
	\begin{align}
\left| \left( \frac{\d y}{\d \xi} \right)^{-2} \left(  - \frac 12  \frac{\d y}{\d \xi} \frac{\d^3 y}{\d \xi^3}  + \frac 34\left(\frac{\d^2 y}{\d \xi^2} \right)^{2}  \right) \right| \lesssim  \xi^{-2} 
	\end{align} 
	for $\xi$ sufficiently large. Hence, 
	\begin{align}
		|\Psi|\lesssim (1+\xi)^{-2}
	\end{align}
	for $\xi$ sufficiently large. Recall that $\Psi$ is continuous everywhere and $\Omega = |x|^{\frac 13}$ such that \begin{align} \label{eq:estimateonvariation}
		\mathcal{V}_{0,\infty} (F_1) \lesssim \int_0^\infty \frac{|\Psi|}{\xi^{\frac 13} m^{\frac 16}} \d \xi \lesssim m^{-\frac 16}.
	\end{align}
\end{proof}
\end{lemma}
Having controlled the error terms we now proceed to the definition of our fundamental solutions based on appropriate parabolic cylinder functions.
We will apply \cite[Theorem~1]{second-order} which we recall for convenience of the reader in the following.
\begin{prop}[{\cite[Theorem~1]{second-order}}] \label{prop:olverthm1}
Assume that for each value of $m$, the function $\Psi(m,\alpha,\xi)$ as defined in \eqref{eq:defnofpsi} is continuous in the region $\alpha \in [0,A], \xi \in [0,\infty)$ and $\mathcal V_{0,\infty}(F_1)$ converges uniformly with respect to $\alpha$, where $F_1$ is as in \eqref{eq:defnf1}.
  Then,  the o.d.e.\ \eqref{eq:newode} has solutions $w_1(m,\alpha,\xi)$ and $w_2(m,\alpha,\xi)$ which are continuous, have continuous first and second partial $\xi$-derivatives and are given by
\begin{align}
& w_1(m,\alpha,\xi) = U\left(-\frac 12 m \alpha^2, \xi \sqrt{2m} \right) + \epsilon_1(m,\alpha,\xi),\\
& w_2 (m,\alpha,\xi) = \bar U\left(-\frac 12 m \alpha^2, \xi \sqrt{2m} \right) + \epsilon_2 (m,\alpha,\xi),
\end{align}
where   $U$ and $\bar U$ are parabolic cylinder functions defined in \cref{defn:uubar} in the appendix. The error terms satisfy
\begin{align} \nonumber 
\frac{|\epsilon_1(m,\alpha,\xi) | }{M_U( - \frac 12 m \alpha^2, \xi \sqrt{2m})}, & \frac{\partial_\xi \epsilon_1(m,\alpha, \xi ) }{\sqrt{2m} N_U(-\frac 12 m \alpha^2,\xi\sqrt{2m})} \\ & \leq E_U^{-1} (-\frac 12 m \alpha^2, \xi \sqrt{2m}) \left[ e^{\frac 12 \pi^{\frac 12} m^{-\frac 12} l_1 (-\frac 12 m\alpha^2) \mathcal V_{\xi,\infty }(F_1)  }-1 \right]\\ \nonumber 
\frac{|\epsilon_2(m,\alpha,\xi) | }{M_U( - \frac 12 m \alpha^2, \xi \sqrt{2m})}, & \frac{\partial_\xi \epsilon_2(m,\alpha, \xi ) }{\sqrt{2m} N_U(-\frac 12 m \alpha^2,\xi\sqrt{2m})} \\ & \leq E_U (-\frac 12 m \alpha^2, \xi \sqrt{2m}) \left[ e^{\frac 12 \pi^{\frac 12} m^{-\frac 12} l_1 (-\frac 12 m\alpha^2) \mathcal V_{0,\xi}(F_1)  }-1 \right]
\end{align}
and 
\begin{align}\label{eq:defnofl1}
l_1(b) = \sup_{x\in (0,\infty)} \left( \Omega(x) \frac{M_U^2(b,x)}{\Gamma(\frac 12 - b)} \right), b \leq 0.
\end{align}
\end{prop}
For the definitions of $M_{U},N_U,E_U$ refer to \cref{sec:appweight}.
We will now apply the previous proposition with our estimate at hand.
\begin{prop}\label{prop:existenceofw1w2}
There exist solutions $w_1$ and $w_2$ of \eqref{eq:newode}  satisfying
\begin{align}
&	w_1 = U\left(-\frac 12 m \alpha^2, \xi \sqrt{2m} \right) + \tilde \eta_1, \\
&	w_2  = \bar U\left(-\frac 12 m \alpha^2, \xi \sqrt{2m} \right) + \tilde \eta_2.
\end{align}

The error terms satisfy
\begin{align}
&	\tilde \eta_1 = E_U^{-1}\left(-\frac 12 m \alpha^2, \xi \sqrt{2m}\right) M_U\left(- \frac 12 m \alpha^2, \xi \sqrt{2m}\right) O (m^{-\frac 23})  \label{eq:esteta2} \\
	&\tilde \eta_2 = E_U\left(-\frac 12 m \alpha^2, \xi \sqrt{2m}\right) M_U\left(- \frac 12 m \alpha^2, \xi \sqrt{2m}\right) O (m^{-\frac 23}) \\
	&	\partial_\xi \tilde \eta_1 = E_U^{-1}\left(-\frac 12 m \alpha^2, \xi \sqrt{2m}\right) N_U\left(- \frac 12 m \alpha^2, \xi \sqrt{2m}\right) O (m^{-\frac 16}) \label{eq:esteta'2}  \\
	&	\partial_\xi \tilde \eta_2 = E_U\left(-\frac 12 m \alpha^2, \xi \sqrt{2m}\right) N_U\left(- \frac 12 m \alpha^2, \xi \sqrt{2m}\right) O (m^{-\frac 16})
\end{align} 
uniformly in $\tilde \lambda \in [\Xi^2, \Xi^2 + 1]$ and $\xi \in [0,\infty)$. Moreover, $\tilde \eta_2 (\xi=0) = \partial_\xi \tilde \eta_2(\xi=0)=0$ and $\lim_{\xi \to\infty} \tilde \eta_1(\xi) = \lim_{\xi \to \infty} \partial_\xi \tilde \eta_1 (\xi ) =0$.  
\begin{proof}
	We have chosen $\Omega(x) = |x|^{\frac 13}$ in \eqref{eq:defnf1}. For this choice of $\Omega$, the quantity $l_1$ as defined in \eqref{eq:defnofl1} satisfies $l_1(b) \lesssim 1$ uniformly in $b\leq 0$ which follows from \cref{prop:boundsonmu} and \cref{defn:zetaU}, see also  \cite[equation (6.15)]{second-order}. 
	Now, we recall that $\Psi$ is continuous for $(\xi,\alpha) \in [0,\infty) \times [0,A]$ and from \cref{lem:boundoftotalvariation} we have $ 
			\mathcal{V}_{\xi,\infty} (F_1) , 	\mathcal{V}_{0,\xi} (F_1) \leq \mathcal{V}_{0,\infty} (F_1) \lesssim \frac{1}{ m^{\frac 16}}$. Hence, we apply \cref{prop:olverthm1} and moreover estimate the error terms as
		\begin{align}
\left| e^{\frac 12 \pi^{\frac 12} m^{-\frac 12} l_1 (-\frac 12 m\alpha^2) \mathcal V_{\xi,\infty }(F_1)  }-1 \right|, \left| e^{\frac 12 \pi^{\frac 12} m^{-\frac 12} l_1 (-\frac 12 m\alpha^2) \mathcal V_{0,\xi}(F_1)  }-1 \right| \lesssim m^{-\frac{2}{3}}
		\end{align}
	from which the error bounds follow. 
 
\end{proof}
\end{prop}
\begin{rmk} As $x\to\infty$, the function $U$ is recessive (decaying), whereas $\bar U$ is dominant (growing). Hence, $w_1$ is recessive and $w_2$ is dominant. We refer to  \cite[Chapter~5, \S7.2]{olver2014asymptotics} for further details.
\end{rmk}
 \begin{lemma}\label{lem:boundsonwrosnkianw1w2}
 	The Wronskian $\mathfrak W(w_1,w_2)$ satisfies
 	\begin{align}
 		|\mathfrak W (w_1,w_2)| \sim \sqrt m \Gamma\left(\frac 12 +\frac 12 m \alpha^2\right)
 	\end{align}
 	for $m$ sufficiently large.
 	\begin{proof}
 		Since the Wronskian is independent of $\xi$, we compute it at $\xi=0$ such that $\tilde \eta_2 = \partial_\xi \tilde \eta_2 =0$. We obtain
 		\begin{align} \nonumber 
\mathfrak W (w_1,w_2) &=   \mathfrak W\left( U\left(-\frac 12 m \alpha^2, \xi \sqrt{2m} \right) + \tilde \eta_1, \bar U\left(-\frac 12 m \alpha^2, \xi \sqrt{2m} \right)  \right) \\
&=\label{eq:principaltermofwrosnkian} \mathfrak W\left( U\left(-\frac 12 m \alpha^2, \xi \sqrt{2m} \right)  , \bar U\left(-\frac 12 m \alpha^2, \xi \sqrt{2m} \right)  \right)  
 	\\&+  \mathfrak W\left(   \tilde \eta_1, \bar U\left(-\frac 12 m \alpha^2, \xi \sqrt{2m} \right)  \right),
  	\end{align} 
  where each Wronskian is evaluated at $\xi=0$. 
 		We begin by computing \eqref{eq:principaltermofwrosnkian}. For   $U(b,x)$ and $\bar U (b,x)$ we have the Wronskian identity $\mathfrak W(U,\bar U) = \sqrt{\frac{2}{\pi}}\Gamma(\frac 12 - b)$, see \cite[Equation~(5.8)]{second-order}. Thus, the chain rule yields
 		\begin{align}
 \mathfrak W\left( U\left(-\frac 12 m \alpha^2, \xi \sqrt{2m} \right)  , \bar U\left(-\frac 12 m \alpha^2, \xi \sqrt{2m} \right)  \right) = \sqrt{2m}  \sqrt{\frac{2}{\pi}}\Gamma\left(\frac 12 + \frac 12 m \alpha^2\right).
 		\end{align}
 	Now, we use \eqref{eq:esteta2}, \eqref{eq:esteta'2}, \cref{defn:munu} and \eqref{eq:valuesatzero1}--\eqref{eq:valuesatzero4} to estimate
 	\begin{align}\nonumber 
&  \left|\mathfrak W\left(   \tilde \eta_1, \bar U\left(-\frac 12 m \alpha^2, \xi \sqrt{2m} \right)  \right)(\xi=0)\right|\\ \nonumber &   \lesssim \sqrt{m} \left| \tilde \eta_1(\xi=0) \bar U' \left(-\frac 12 m \alpha^2, 0\right) \right|  +  \left|  \eta_1'(\xi=0) \bar U \left(-\frac 12 m \alpha^2, 0\right) \right| \\\nonumber
&\lesssim  m^{-\frac 16} \left| M_U\left(- \frac 12 m \alpha^2, 0 \right)  \bar U' \left(-\frac 12 m \alpha^2, 0\right) \right| + m^{-\frac 16} \left| N_U\left(- \frac 12 m \alpha^2, 0 \right)  \bar U \left(-\frac 12 m \alpha^2, 0\right) \right|
\\ \nonumber & \lesssim m^{- \frac 16} \left( \left| \sqrt{ U^2 + \bar U^2} \left(-\frac 12 m \alpha^2, 0\right)  \bar U' \left(-\frac 12 m \alpha^2, 0\right) \right|  + \left| \sqrt{ {U'}^2 + ({\bar{U}'})^2}\left(-\frac 12 m \alpha^2, 0\right) \bar U \left(-\frac 12 m \alpha^2, 0\right) \right|   \right) \\\nonumber
& \lesssim m^{-\frac 16} 2^{\frac 12 m \alpha^2} \Gamma\left(\frac 14+\frac 14 m \alpha^ 2\right)\Gamma\left( \frac 34 + \frac 14 m\alpha^2  \right) =  m^{-\frac 16} 2^{\frac 12 m \alpha^2} \Gamma\left(\frac 12 +\frac 12 m \alpha^ 2\right) 2^{1-2(\frac 14 + \frac 14 m \alpha^2)} \sqrt{\pi}  \\
& \lesssim m^{-\frac 16} \Gamma\left(\frac 12 + \frac 12 m \alpha^2\right),
 	\end{align}
 where we also used the Legendre duplication formula $\Gamma(x) \Gamma(x+ \frac 12) = 2^{1-2x} \sqrt \pi \Gamma(2x)$. This concludes the proof.
 		\end{proof}
 \end{lemma}
 	\begin{lemma}
 		The function $\sigma_1$ defined in \eqref{eq:defnsigma1and2} has the form  \begin{align}\label{eq:s1approx}
 		&	\sigma_1 = A_1 w_1,
 		\end{align}
 		where $w_1$ is as in \cref{prop:existenceofw1w2} and $A_1\neq 0$ is a real constant. 
 	\end{lemma}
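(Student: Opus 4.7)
The plan is to use that $\sigma_1$ satisfies the same second-order linear ODE \eqref{eq:newode} as $w_1$ and $w_2$. Indeed, starting from the homogeneous equation \eqref{eq:homo} satisfied by $s_1$, applying the Liouville transform $\sigma_1 = (dy/d\xi)^{-1/2} s_1$ together with the change of variable $y \mapsto \xi$ defined by \eqref{eq:defndxidy} yields \eqref{eq:newode} by a direct computation, exactly as in the derivation in \cite{second-order}. Since $\{w_1, w_2\}$ is a fundamental system by \cref{lem:boundsonwrosnkianw1w2}, there exist constants $A_1, B_1 \in \mathbb{R}$ such that
\begin{equation*}
\sigma_1(\xi) = A_1 w_1(\xi) + B_1 w_2(\xi).
\end{equation*}

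Next, I would argue $B_1 = 0$ by exploiting that $w_1$ is recessive and $w_2$ is dominant as $\xi \to \infty$: this is the standard asymptotic dichotomy between the parabolic cylinder functions $U$ and $\bar U$, preserved by the error estimates in \cref{prop:existenceofw1w2} (the weight $E_U$ and modulus $M_U$ are designed precisely so that $w_1 E_U \to 0$ while $w_2 E_U^{-1} \to 0$ at infinity). It thus suffices to show that $\sigma_1$ does not grow like $w_2$ at infinity. For this, I would note that $S_{m\ell}$ is a smooth eigenfunction on $[-1,1]$, and for $m \neq 0$ the regular boundary behavior at $x=\pm 1$ forces $S_{m\ell}(x) \to 0$ as $x \to 1$, so $s_1(y) = S_{m\ell}(x(y))$ is bounded and in fact vanishes as $y \to \infty$. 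Combined with the estimate $dy/d\xi \sim \xi$ from \cref{lem:xiofy}, we get $|\sigma_1(\xi)| \lesssim \xi^{-1/2} |s_1(y(\xi))|$, which is bounded (and in fact tends to zero) as $\xi \to \infty$. Since the presence of any nonzero $B_1 w_2$ component would force $\sigma_1$ to inherit the dominant growth of $w_2$, we conclude $B_1 = 0$.

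Finally, $A_1 \neq 0$ is immediate: $s_1 = S_{m\ell}$ is a nontrivial eigenfunction of $P_{\omega_-}$, hence $\sigma_1 \not\equiv 0$, which combined with $\sigma_1 = A_1 w_1$ and $w_1 \not\equiv 0$ gives $A_1 \neq 0$.

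The main obstacle is calibrating the decay of $\sigma_1$ against the growth of $w_2$ quantitatively enough to rule out the dominant contribution — one must make sure that the pointwise decay of $s_1$ near $x = 1$, coupled with the Jacobian factor $(dy/d\xi)^{-1/2}$, genuinely dominates the exponential-type growth of $w_2 \sim \bar U(-\tfrac12 m\alpha^2, \xi\sqrt{2m})$. This is essentially a matching of semi-classical decay rates, but since we only need \emph{some} bound of the form $\sigma_1(\xi) = o(w_2(\xi))$ and $w_2$ grows unboundedly while $\sigma_1$ is uniformly bounded (even tends to $0$), the required inequality is qualitative and follows without additional work.
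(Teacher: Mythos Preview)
Your proof is correct and follows essentially the same approach as the paper: both arguments rest on the fact that $\sigma_1$ is a recessive solution of \eqref{eq:newode} as $\xi\to\infty$, and that the recessive subspace is one-dimensional. The paper states this in a single sentence, whereas you unpack it by writing $\sigma_1 = A_1 w_1 + B_1 w_2$ and comparing the boundedness of $\sigma_1$ against the dominant growth of $w_2$; the content is the same, and your ``main obstacle'' paragraph is overly cautious---boundedness of $\sigma_1$ versus unbounded growth of $w_2$ is already the full argument.
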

 	\begin{proof} Both functions $\sigma_1$ and $w_1$ are non-trivial solutions to \eqref{eq:newode} which are recessive as $\xi\to \infty$ ($y\to \infty$). The claim follows now as the space of solutions of \eqref{eq:newode} which are recessive as $\xi\to \infty$ is one-dimensional (see e.g.\ \cite[Chapter~5, \S7.2]{olver2014asymptotics}). 
 	\end{proof}
 	Using the parabolic cylinder functions, we now define a  solution $\sigma_2$ which is linearly independent of $\sigma_1$.  
 \begin{definition}
 	We define the solution $\sigma_2$  of \eqref{eq:newode} as
 	\begin{align} \label{eq:defnsigma2}
\sigma_2:= \frac{1}{A_1 \mathfrak{W}(w_1,w_2)}  w_2
 	\end{align}
 	and the solution $s_2$ to \eqref{eq:homo} as
 	\begin{align}\label{eq:defs2}
 		s_2(y):= \left(\frac{\d y}{\d \xi}\right)^{\frac 12} \sigma_2(\xi(y)).
 	\end{align}
 \end{definition}
  	A direct computation shows   
 \begin{lemma}
 	We have \begin{align}
 		\mathfrak W_y(s_1,s_2) = \mathfrak W_\xi (\sigma_1,\sigma_2) =1.
 	\end{align}
 	Here, $\mathfrak W_y$ and  $\mathfrak W_\xi$   denote the Wronskians with respect to the $y$ and $\xi$ variable. 
 \end{lemma}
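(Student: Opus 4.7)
The plan is to verify the two Wronskian identities in sequence, starting with the $\xi$-variable computation (which is algebraic from the definitions) and then deducing the $y$-variable identity via the conformal nature of the Liouville transform.

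First I would compute $\mathfrak W_\xi(\sigma_1,\sigma_2)$ directly. Since $\sigma_1 = A_1 w_1$ and $\sigma_2 = (A_1 \mathfrak W(w_1,w_2))^{-1} w_2$ by \eqref{eq:s1approx} and \eqref{eq:defnsigma2}, bilinearity of the Wronskian immediately gives
\begin{equation*}
\mathfrak W_\xi(\sigma_1,\sigma_2) = \frac{A_1}{A_1 \mathfrak W(w_1,w_2)}\,\mathfrak W_\xi(w_1,w_2) = 1,
\end{equation*}
where one uses that $\mathfrak W(w_1,w_2) = \mathfrak W_\xi(w_1,w_2)$ is the Wronskian in $\xi$, which by \cref{lem:boundsonwrosnkianw1w2} is nonzero (so $A_1\neq 0$ makes the division legitimate).

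Next I would show that the Liouville transformation $s_i(y) = (\d y/\d \xi)^{1/2}\sigma_i(\xi(y))$ preserves Wronskians. Writing $\psi(\xi):=(\d y/\d \xi)^{1/2}$ and using the chain rule,
\begin{equation*}
\frac{\d s_i}{\d y} = \frac{\d \xi}{\d y}\bigl[\psi'(\xi)\sigma_i(\xi) + \psi(\xi)\sigma_i'(\xi)\bigr].
\end{equation*}
Substituting into $s_1 \tfrac{\d s_2}{\d y} - s_2 \tfrac{\d s_1}{\d y}$, the two $\psi\psi'$ cross-terms cancel, and the surviving contribution is
\begin{equation*}
\mathfrak W_y(s_1,s_2) = \frac{\d \xi}{\d y}\,\psi^2\bigl(\sigma_1\sigma_2' - \sigma_2\sigma_1'\bigr) = \frac{\d \xi}{\d y}\cdot\frac{\d y}{\d \xi}\,\mathfrak W_\xi(\sigma_1,\sigma_2) = \mathfrak W_\xi(\sigma_1,\sigma_2).
\end{equation*}
Combined with the previous step this yields $\mathfrak W_y(s_1,s_2) = 1$.

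No step here is a real obstacle; the only small subtlety is verifying that $\d y/\d \xi > 0$ everywhere on $[0,\infty)$ so that $\psi(\xi)=(\d y/\d \xi)^{1/2}$ is a bona fide smooth positive function, but this is already built into the construction of $\xi(y)$ via \eqref{eq:defndxidy} (strictly increasing and smooth). Hence both equalities hold and the lemma is proved.
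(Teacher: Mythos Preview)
Your argument is correct and is precisely the ``direct computation'' the paper alludes to immediately before the lemma: the $\xi$-Wronskian identity is immediate from bilinearity and the normalization in \eqref{eq:defnsigma2}, and the invariance under the Liouville transform follows from the cancellation you display. There is nothing to add.
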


\begin{lemma}\label{eq:estimatesonsigma1sigma2}
	With $\sigma_1$ and $\sigma_2$ as defined  in \eqref{eq:defnsigma1and2} and \eqref{eq:defnsigma2} we have 
	\begin{align}&|\sigma_1| \lesssim |A_1| E_U^{-1}\left(-\frac 12 m \alpha^2, \xi \sqrt{2m} \right) M_U\left(-\frac 12 m \alpha^2, \xi \sqrt{2m} \right), \\&
	|\sigma_2|\lesssim  \left| \frac{1}{A_1 \mathfrak W(w_1,w_2)}\right|  E_U\left(-\frac 12 m \alpha^2, \xi \sqrt{2m} \right) M_U\left(-\frac 12 m \alpha^2, \xi \sqrt{2m} \right),  \\
	&	|\sigma_1(\xi) \sigma_2(\xi)| \lesssim \frac{1}{|\mathfrak W(w_1,w_2)|} M^2_U\left(-\frac 12 m \alpha^2, \xi \sqrt{2m} \right).
	\end{align}
	\begin{proof}
We estimate using \eqref{eq:estimatesonU} that
\begin{align} \nonumber
			|\sigma_1| &  =| A_1 w_1| =|A_1| \left|  U\left(-\frac 12 m \alpha^2, \xi \sqrt{2m} \right) + \tilde \eta_1 \right| \\
			&\lesssim |A_1| E_U^{-1}\left(-\frac 12 m \alpha^2, \xi \sqrt{2m} \right) M_U\left(-\frac 12 m \alpha^2, \xi \sqrt{2m} \right)
		\end{align}
		and
 	\begin{align} \nonumber
				|\sigma_2| & = \left| \frac{1}{A_1 \mathfrak W(w_1,w_2)}\right|  |w_2|\leq   \left| \frac{1}{A_1 \mathfrak W(w_1,w_2)}\right| \left| \bar U\left(-\frac 12 m \alpha^2, \xi \sqrt{2m} \right) + \tilde \eta_2 \right| \\
				&\lesssim \left| \frac{1}{A_1 \mathfrak W(w_1,w_2)}\right|  E_U\left(-\frac 12 m \alpha^2, \xi \sqrt{2m} \right) M_U\left(-\frac 12 m \alpha^2, \xi \sqrt{2m} \right).
	\end{align}
	\end{proof}
\end{lemma}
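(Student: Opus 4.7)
The plan is to invoke the definitions $\sigma_1 = A_1 w_1$ and $\sigma_2 = (A_1 \mathfrak{W}(w_1,w_2))^{-1} w_2$ from \eqref{eq:s1approx} and \eqref{eq:defnsigma2}, combine them with the pointwise decompositions $w_1 = U(-\tfrac{1}{2} m\alpha^2, \xi\sqrt{2m}) + \tilde \eta_1$ and $w_2 = \bar U(-\tfrac{1}{2} m\alpha^2, \xi\sqrt{2m}) + \tilde \eta_2$ from \cref{prop:existenceofw1w2}, and then bound each piece using the standard modulus--weight bounds for parabolic cylinder functions recorded in the appendix estimate \eqref{eq:estimatesonU}.

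Concretely, first I would write
\begin{align*}
|\sigma_1(\xi)| = |A_1|\,\bigl|U(-\tfrac{1}{2} m\alpha^2, \xi\sqrt{2m}) + \tilde \eta_1\bigr|
\end{align*}
and apply the triangle inequality. The main term is bounded, by the standard inequality $|U(b,x)| \lesssim E_U^{-1}(b,x)\, M_U(b,x)$, by $E_U^{-1}(-\tfrac12 m\alpha^2, \xi\sqrt{2m}) M_U(-\tfrac12 m\alpha^2, \xi\sqrt{2m})$. The error term $\tilde\eta_1$ has, by \cref{prop:existenceofw1w2}, precisely the same shape times an $O(m^{-2/3})$ factor and can therefore be absorbed for large $m$. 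Adding the two contributions yields the first claimed bound. The argument for $|\sigma_2|$ is completely symmetric, now using $|\bar U(b,x)| \lesssim E_U(b,x)\, M_U(b,x)$ together with the error estimate on $\tilde\eta_2$ from \cref{prop:existenceofw1w2}; the prefactor $|A_1 \mathfrak{W}(w_1,w_2)|^{-1}$ is carried through unchanged.

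Finally, the bound on $|\sigma_1(\xi)\sigma_2(\xi)|$ is obtained simply by multiplying the first two bounds: the weights satisfy $E_U^{-1}\cdot E_U = 1$ identically, and the two factors of $|A_1|$ cancel, leaving $|\mathfrak{W}(w_1,w_2)|^{-1} M_U^2(-\tfrac12 m\alpha^2, \xi\sqrt{2m})$, as asserted. No further input is needed once the first two bounds are in place.

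The only place where one has to be careful is in checking that the modulus--weight inequalities for $U$ and $\bar U$ apply uniformly in the parameter range of interest. The first argument $-\tfrac12 m\alpha^2$ ranges over a half-line that, because $\tilde\lambda$ may approach $\Xi^2$, allows $\alpha \to 0$; one must therefore rely on the versions of the $E_U$, $M_U$, $N_U$ constructions (Olver/Dunster) which are designed to remain valid through this coalescence, and which the paper records in the appendix. Granted that, the proof is an entirely mechanical consequence of \cref{prop:existenceofw1w2} and the standard parabolic cylinder function estimates.
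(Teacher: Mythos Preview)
Your proposal is correct and follows essentially the same approach as the paper: both write $\sigma_1 = A_1 w_1$, $\sigma_2 = (A_1\mathfrak W(w_1,w_2))^{-1} w_2$, expand $w_i$ via \cref{prop:existenceofw1w2}, and apply \eqref{eq:estimatesonU} together with the error bounds on $\tilde\eta_i$. The paper's proof is slightly terser in that it does not explicitly spell out the product estimate (it follows immediately from the first two), nor the uniformity remark you added, but these are cosmetic differences.
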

Now, we recall the definition of $g_p$ in \eqref{eq:defgp} as 
\begin{align}\label{eq:defngp2}
g_p (y):=   s_2(y) \int_{y}^{\infty} s^2_1(\tilde y) F(\tilde y) \d \tilde y +  s_1(y) \int_{0}^y s_1(\tilde y) s_2(\tilde y) F(\tilde y) \d\tilde y.
\end{align}
for $s_1$ as in \eqref{eq:defns1} and where we take $s_2$ as in \eqref{eq:defs2}. Now, we are in the position to show the main lemma of \cref{sec:tildelambda<xi+1}.
\begin{lemma}\label{prop:estimateongp}
	Let $\tilde \lambda \in (\Xi^2 , \Xi^2 + 1]$ and let $s_2$ as in \eqref{eq:defs2}. Then, $g_p$ satisfies
	\begin{align}\label{eq:lefthandside}
	\int_{0}^{\infty} g_p(y)^2 (1-x(y)^2) \Delta_x \d y \lesssim m .
	\end{align}
	\begin{proof}
		We plug \eqref{eq:defngp2} into the left hand side of \eqref{eq:lefthandside} and we will estimate both terms independently. 
		
For the first term, we change variables from $y$ to $\xi$, use that $x \mapsto E_U(b,x)$ is non-decreasing, as well as \cref{eq:estimatesonsigma1sigma2} to estimate
		\begin{align} \nonumber
		\int_0^\infty  s_2^2(y)  	& \left(\int_{y}^\infty s^2_1(\tilde y) F(\tilde y) \d\tilde y\right)^2 (1-x(y)^2) \Delta_x \d y \\ \nonumber & = \int_0^\infty \sigma_2^2(\xi)  \left(\int_{\xi}^\infty \sigma_1^2 (\tilde \xi) F(\tilde \xi) \d\tilde \xi\right)^2 (1-x(\xi)^2) \Delta_x(\xi) \d \xi \\
			&\lesssim \int_{0}^{\infty} \Bigg[ \frac{\left|M_U \left(- \frac 12 m \alpha^2, \xi \sqrt{2m} \right)\right|^2}{|\mathfrak{W}(w_1,w_2)|^2}  \Big( \int_{\xi}^{\infty} |\sigma_1(\tilde \xi) F(\tilde \xi)| \nonumber \\ &   \hspace{60pt}|M_U(-\frac 12 m \alpha^2, \tilde \xi \sqrt{2m})| \d \tilde \xi \Big)^2    (1-x(\xi)^2) \Delta_x(\xi) \Bigg]\d \xi.
		\end{align}
		Now, we use the bounds on $M_U$ and $\mathfrak W(w_1,w_2)$ from \cref{prop:boundsonmu} and \cref{lem:boundsonwrosnkianw1w2} to deduce
		\begin{align}\nonumber
	\int_0^\infty s_2^2(y) & \left(\int_{y}^\infty s^2_1(\tilde y) F(\tilde y) \d\tilde y\right)^2 (1-x(y)^2) \Delta_x \d y \\ \nonumber &
	\lesssim \frac 1m \int_0^\infty (1-x(\xi)^2) \Delta_x(\xi) \d \xi\left( \int_{0}^{\infty} |\sigma_1(\tilde \xi)| |F (\tilde \xi)| \d \tilde \xi\right)^2
	\\ \nonumber &\lesssim \frac{1}{m} \int_0^\infty |s_1|^2 (1-x(y)^2) \Delta_x(y) \d y \int_0^\infty\frac{ |F|^2 }{\Delta_x (1-x(y)^2)} \frac{\d \xi}{\d y} \d y \\ \nonumber & \lesssim \frac{1}{m} \int_0^\infty \Delta_x (1-x^2)  \left[\partial_\xi \lambda  + 2 \Xi (a m \omega_- + \frac{\epsilon}{m})\frac{x^2}{\Delta_x} + 2 m \frac{\Xi}{\Delta_x} \frac{a^2}{l^2} x^2\right]^2  \frac{\d \xi}{\d y} \d y \\ & \lesssim m , 
		\end{align}
		where we used the Cauchy--Schwarz inequality and the fact that $s_1$ satisfies \eqref{eq:normalizationofs1} as well as \eqref{eq:boundonpartialxilambda}.

		For the second term we argue similarly and obtain
		\begin{align}\nonumber
		&	\left|\int_{0}^\infty s_1^2(y)\left(\int_{0}^y s_2 (\tilde y) s_1(\tilde y) F(\tilde y) \d \tilde y\right)^2 (1-x(y)^2) \Delta_x \d y \right| \\ \nonumber & \lesssim \int_0^\infty s_1^2(y) (1-x(y)^2) \Delta_x \d y \left(\int_0^\infty s_2 (\tilde y) s_1(\tilde y ) F(\tilde y) \d \tilde y\right)^2\\ \nonumber
			& \lesssim \left( \int_0^\infty \frac{M^2_U\left(-\frac 12 m \alpha^2, \xi \sqrt{2m} \right)}{|\mathfrak W(w_1,w_2)|}   |F(\xi) |\d \xi  \right)^2
			\\  \nonumber & \lesssim \frac{1}{ m} \left(\int_0^\infty   \Delta_x (1-x(y)^2)  \left|\partial_\xi \lambda  + 2 \Xi (a m \omega_- + \frac{\epsilon}{m})\frac{x^2}{\Delta_x} + 2 m \frac{\Xi}{\Delta_x} \frac{a^2}{l^2} x^2\right|   \frac{\d \xi}{\d y}   \d y\right)^2
			\\ & \lesssim m.
		\end{align}
	\end{proof}
\end{lemma}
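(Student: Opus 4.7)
The plan is to split $g_p = T_1 + T_2$ into its two constituent pieces,
\[
T_1 := s_2(y)\int_y^\infty s_1^2(\tilde y) F(\tilde y)\,\d \tilde y, \qquad T_2 := s_1(y)\int_0^y s_1(\tilde y) s_2(\tilde y) F(\tilde y)\,\d \tilde y,
\]
and estimate the $L^2([0,\infty), w(y)\d y)$ norm of each separately. The natural first move is to pass to the Liouville variable $\xi$, under which $s_i(y) = (\d y/\d \xi)^{1/2}\sigma_i(\xi)$; this absorbs the weight $w(y)\,\d y = (\d\xi/\d y)^{-1} \d\xi$ cleanly into factors of $\sigma_i$ and converts the estimate into one governed entirely by the parabolic-cylinder weight, modulus, and auxiliary functions $E_U$, $M_U$, together with the Wronskian $\mathfrak W(w_1,w_2)$.

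For the first piece $T_1$, I would exploit the monotonicity of $E_U$ in $\xi$ combined with the bounds of Lemma~\ref{eq:estimatesonsigma1sigma2}: $|\sigma_1| \lesssim |A_1| E_U^{-1} M_U$ and $|\sigma_2| \lesssim |A_1 \mathfrak W|^{-1} E_U M_U$. Inside the inner integral the factor $E_U^{-2}$ coming from $\sigma_1^2$ may be controlled by $E_U^{-2}(\xi)$, which then cancels against the $E_U^2$ arising from $\sigma_2^2$ on the outside. One is left, after Cauchy--Schwarz, with a product of $\|\sigma_1\|_{L^2(w\,\d y)}^2 = \tfrac12$ (the normalization of $s_1$, \eqref{eq:normalizationofs1}) and an integral of $|F|^2/(\Delta_x(1-x^2))\cdot(\d\xi/\d y)$. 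The explicit form of $F$ together with Lemma~\ref{lem:partialxlambda} gives $|F| \lesssim m\,w(y)$, whereas Lemma~\ref{lem:boundsonwrosnkianw1w2} combined with the pointwise estimate of Proposition~\ref{prop:boundsonmu} yields $M_U^2/|\mathfrak W|^2 \lesssim 1/m$. Multiplying, $m^{-1}\cdot m^2 = m$, which is the claimed bound.

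For the second piece $T_2$, the inner integral is independent of the outer variable after majorization, so I would simply pull out the full integral $\int_0^\infty |s_1 s_2 F| \d \tilde y$. The outer $\|s_1\|_{L^2(w\d y)}^2 = \tfrac12$ leaves one with the square of this quantity. Using the pointwise bound $|\sigma_1\sigma_2| \lesssim M_U^2/|\mathfrak W|$ from Lemma~\ref{eq:estimatesonsigma1sigma2} and Proposition~\ref{prop:boundsonmu}, this reduces to roughly $m^{-1/2}\int_0^\infty |F|\,\d\xi$; invoking the above bound on $F$ and transforming back yields a factor of $m$ inside the absolute value, producing $m^{1/2}\cdot m^{1/2}$ after squaring.

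The main obstacle is to ensure that all constants are \emph{uniform} in $\tilde\lambda \in (\Xi^2,\Xi^2+1]$, i.e., uniform as the parameter $\alpha$ entering the parabolic cylinder approximation degenerates to $0^+$ (when the root $y_0$ of $W_1$ coalesces with the origin). This is precisely the virtue of the $(M_U, N_U, E_U)$ formalism introduced by Olver, but one must handle the Gamma factor $\Gamma(\tfrac12 + \tfrac12 m\alpha^2)$ inside the Wronskian delicately: it cancels against $M_U^2$ only after the sharp pointwise form of Proposition~\ref{prop:boundsonmu} is invoked, and it is only then that one recovers the decisive $m^{-1}$ gain from $M_U^2/|\mathfrak W|^2$ that underlies the whole estimate.
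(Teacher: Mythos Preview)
Your proposal is correct and follows essentially the same route as the paper: split $g_p$ into its two constituent integrals, pass to the Liouville variable $\xi$, use the monotonicity of $E_U$ to transfer the exponential weight from the inner to the outer factor, invoke $M_U^2/|\mathfrak W(w_1,w_2)| \lesssim m^{-1/2}$ (hence $M_U^2/|\mathfrak W|^2 \lesssim m^{-1}$) via Proposition~\ref{prop:boundsonmu} and Lemma~\ref{lem:boundsonwrosnkianw1w2}, and close with Cauchy--Schwarz against the normalization \eqref{eq:normalizationofs1} together with $|F|\lesssim m\,w$. Your remark about the $\Gamma(\tfrac12+\tfrac12 m\alpha^2)$ cancellation being the mechanism for uniformity in $\tilde\lambda\to\Xi^2{}^+$ is exactly the point.
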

\subsubsection{The case \texorpdfstring{$ \tilde \lambda \in (\Xi^2 + 1 , \infty) $}{tilde lambda near xi2}}
\label{sec:tildelambda>xi+1}
For the parameter range $\tilde \lambda \in (\Xi^2 + 1, \infty)$ we consider $\lambda = m^2 \tilde \lambda  $ as a large parameter and re-write the o.d.e.\ \eqref{eq:inhomogen} as 
\begin{align}
- \frac{\d^2}{\d y^2} s_p + m^2 \tilde{\lambda} \tilde W_1 s_p + W_2 s_p = F s_1 , 
\end{align}
where 
\begin{align}\label{eq:tildew1toest}
	\tilde W_1 = \tilde W_1 (y) := \frac{W_1}{\tilde \lambda} = \frac{ \Xi^2}{\tilde \lambda} - \left[ \Xi a^2 \omega_-^2 + 2 a \omega_- \Xi \frac{a^2}{l^2} \right] \frac{ x^2 (1-x^2) }{\tilde \lambda} -  \Delta_x (1-x^2).
\end{align}
We also recall the homogeneous o.d.e.\ \eqref{eq:homo} 
\begin{align}\label{eq:homo2}
- \frac{\d^2}{\d y^2} g + m^2 \tilde{\lambda} \tilde W_1 g + W_2 g = 0 .
\end{align}
Recall also that $s_1$ as defined in \eqref{eq:defns1} is a solution of \eqref{eq:homo2}.
As before, we define $y_0$ as the unique non-negative root of $\tilde W_1(y)$. It satisfies \begin{align} \label{eq:estimateony_0}
		y_0 \sim   \log (\tilde \lambda)
	\end{align}
for sufficiently large $\tilde \lambda$, where we note that $y_0$ becomes arbitrarily large for $\tilde \lambda \to \infty$. Indeed, to show \eqref{eq:estimateony_0}, we note that for large $\tilde \lambda$,   from \eqref{eq:tildew1toest} we see that $x(y_0)$ satisfies \begin{align}\Xi^2 = \left[ \Xi a^2 \omega_-^2 + 2 a \omega_- \Xi \frac{a^2}{l^2} \right]  { x(y_0)^2 (1-x(y_0)^2) } +  \tilde \lambda \Delta_{x(y_0)} (1-x(y_0)^2) \sim \tilde \lambda (1-x(y_0)).\end{align}
Then, \eqref{eq:estimateony_0} follows from $1-x \sim e^{- 2 \Xi y}$ for $y$ sufficiently large (recall \eqref{eq:defnofxiy}). 
Our estimates will be uniform in the limit $\tilde \lambda \to \infty$. 
	\begin{lemma}\label{prop:estimatesontildew1}
In the region $ 0 \leq   y \leq y_0 -1$ we have
\begin{align} \label{eq:estiminregionI}
&\frac{1}{\tilde \lambda}\lesssim	- \tilde W_1 \lesssim 1,\;\;
	 \frac{\d\tilde W_1}{\d y}\lesssim |\tilde W_1|, \;\;	\left| \frac{\d^2 \tilde W_1}{\d y^2}\right|\lesssim \frac{\d \tilde W_1}{\d y } + (1-x(y)^2 ) |\tilde W_1|.
\end{align} For $ y_0 - 1 \leq   y \leq y_0 +1$, we have
\begin{align}\label{eq:estimatesonw1inintermedaite}
\frac{|y-y_0|}{\tilde \lambda} \lesssim  	| \tilde W_1| \lesssim \frac{1}{\tilde \lambda}, \;\; \frac{\d \tilde{W}_1}{\d y }  \sim \frac{1}{\tilde \lambda}, \;\;
\left|\frac{\d^2 \tilde W_1}{\d y^2} \right|\lesssim \frac{1}{ \tilde \lambda}, \;\;
\left|\frac{\d^3 \tilde W_1}{\d y^3} \right|\lesssim \frac{1}{ \tilde \lambda}, \;\;
\left|\frac{\d^4 \tilde W_1}{\d y^4} \right|\lesssim \frac{1}{ \tilde \lambda}.
\end{align}
 For $  y_0 +1 \leq  y < \infty$, we have
  \begin{align}\tilde W_1 \sim \frac{1}{\tilde \lambda} \text{ and }
 	\frac{\d \tilde W_1}{\d y}, \left| 	\frac{\d^2 \tilde W_1}{\d y^2} \right| \lesssim \frac{\d x }{\d y} \lesssim \frac{1}{\tilde \lambda}.
 	\end{align}
\begin{proof}
From \cref{lem:w1} we have that $\tilde W_1$ is increasing on $y \in [0,\infty)$ and moreover, for $y \in [y_0-1,y_0+1]$ we have that \begin{align} \label{eq:lowerboundondwdx}
	\frac{\d \tilde W_1}{\d y} = \frac{\d \tilde W_1}{\d x } \frac{\d x}{\d y} \gtrsim x(y) \frac{\d x }{\d y} \gtrsim \frac{1}{\tilde \lambda}.
\end{align}
Thus, for $0 \leq y\leq y_0 -1$, 
\begin{align} \label{eq:lowerboundonw1}
	-\tilde W_1(y) \geq -\tilde W_1(y_0 - 1) \geq \int_{y_0-1}^{y_0} 	\frac{\d \tilde W_1}{\d y}  \d \tilde y \gtrsim \frac{1}{\tilde \lambda}.
\end{align} 
Moreover, for $0 \leq y \leq y_0 - 1$,
\begin{align}
	\frac{\d \tilde W_1}{\d y } \lesssim  \frac{\d x}{\d y } = \Delta_x (1-x^2) \lesssim |\tilde W_1| + \frac{1}{\tilde \lambda} \lesssim |\tilde W_1|
\end{align}
using the definition of $\tilde W_1$ and $|\tilde W_1 | \gtrsim \frac{1}{\tilde \lambda}$. Similarly, we obtain 

 \begin{align}\left|\frac{\d^2 \tilde W_1}{\d y^2} \right|  \lesssim (1-x(y)^2) \frac{\d\tilde W_1}{\d x } + (1-x(y)^2)^2 \left| \frac{\d^2 \tilde W_1}{\d x^2} \right|\lesssim \frac{\d \tilde W_1}{\d y } + (1-x(y)^2 ) |\tilde W_1| .\label{eq:estimateondwdy^2}
\end{align}

In the region $y \in [y_0 -1 , y_0 + 1 ]$, recall from \eqref{eq:lowerboundondwdx} that $\frac{\d \tilde W_1}{\d y } \gtrsim \frac{1}{\tilde \lambda}$. Moreover, just as in \eqref{eq:estimateondwdy^2}, we obtain
\begin{align}
\left|	\frac{\d \tilde W_1}{\d y} \right|, \left|	\frac{\d^2 \tilde W_1}{\d y^2} \right|, \left|	\frac{\d^3 \tilde W_1}{\d y^3} \right|, \left|	\frac{\d^4 \tilde W_1}{\d y^4} \right|  \lesssim \frac{\d x}{ \d y } \lesssim \frac{1}{\tilde \lambda}.
\end{align}

In the region $y\in (y_0 + 1 , +\infty)$, analogous to \eqref{eq:lowerboundonw1}, we have
\begin{align}
 \frac{1}{\tilde \lambda}\lesssim	\tilde W_1 \lesssim  \frac{1}{\tilde \lambda}
\end{align}
and moreover, \begin{align}
	\frac{\d \tilde W_1}{\d y}, \left| 	\frac{\d^2 \tilde W_1}{\d y^2} \right| \lesssim \frac{\d x }{\d y} \lesssim \frac{1}{\tilde \lambda}.
\end{align}

\end{proof}
	\end{lemma}

With the estimates of \cref{prop:estimatesontildew1} in hand we will define the  variable $\varsigma$ as 
\begin{align}
	\frac 23 \varsigma^{\frac 32} = \int_{y_0}^y \sqrt{\tilde W_1 (y)} \d y 
\end{align}
for  $y\geq y_0$ and
\begin{align}
\frac 23 \left( -\varsigma\right)^{\frac 32} = \int_{y}^{y_0} \sqrt{-\tilde W_1 (y)} \d y 
\end{align}
for $y \leq y_0$. We denote 
\begin{align}
\varsigma_0 := \varsigma(y=0) = - \left( \frac{3}{2}\int_{0}^{y_0} \sqrt{- \tilde W_1(y)} \d y \right)^{\frac 23}.
\end{align}
We further introduce the error control function 
\begin{align}\label{defn:herrorcontrol}
	H(y):= \int_{y_0}^{y}  \frac{1}{| \tilde  W_1|^{\frac 14}}\frac{\d^2}{\d y^2} \left(|\tilde W_1|^{-\frac 14}\right) - \frac{W_2}{|\tilde W_1|^{\frac 12}} - \frac{5 |\tilde W_1|^{\frac 12}}{16 \varsigma(y)^3} \d y .
\end{align}
The fact that $H$ is absolutely continuous is a standard result and follows from \cite[Lemma, Section~4]{turning}, see also \cite[Lemma~3.1, Chapter~11]{olver}. In the following we establish a quantitative version of this.
\begin{lemma}\label{lem:errorboundonH}
	The error control function $H$ defined in \eqref{defn:herrorcontrol} satisfies
\begin{align}
\mathcal V_{0, \infty}(H) \lesssim \tilde \lambda^{\frac 12} .
\end{align}
\begin{proof}Since $H$ is absolutely continuous we compute
\begin{align}\nonumber
	\mathcal{V}_{0,\infty}(H) =& \int_{0}^{y_0 -1 } \left| \frac{1}{| \tilde  W_1|^{\frac 14}}\frac{\d^2}{\d y^2} \left(|\tilde W_1|^{-\frac 14}\right) - \frac{W_2}{|\tilde W_1|^{\frac 12}} - \frac{5 |\tilde W_1|^{\frac 12}}{16 \varsigma(y)^3} \right|\d y  \\ \nonumber & + \int_{y_0 - 1}^{y_0 +1 } \left| \frac{1}{|\tilde  W_1|^{\frac 14}}\frac{\d^2}{\d y^2} \left(|\tilde W_1|^{-\frac 14}\right) - \frac{W_2}{|\tilde W_1|^{\frac 12}} - \frac{5 |\tilde W_1|^{\frac 12}}{16 \varsigma(y)^3} \right|\d y
	\\ \nonumber & +\int_{y_0 + 1}^{+\infty } \left| \frac{1}{|\tilde  W_1|^{\frac 14}}\frac{\d^2}{\d y^2} \left(|\tilde W_1|^{-\frac 14}\right) - \frac{W_2}{|\tilde W_1|^{\frac 12}} - \frac{5 |\tilde W_1|^{\frac 12}}{16 \varsigma(y)^3} \right|\d y \\ &=: I + II + III
\end{align}
and estimate each term independently relying on \cref{prop:estimatesontildew1}.

\paragraph{Term \texorpdfstring{$I$}{I}} We estimate   term $I$ as 
\begin{align}\label{eq:termI}
I \lesssim	\int_0^{y_0 - 1} \frac{1}{|\tilde W_1|^{\frac 52} }\left(\frac{\d \tilde W_1}{\d y}\right)^2
 + \frac{1}{|\tilde W_1|^{\frac 32} }\left|\frac{\d^2 \tilde W_1}{\d y^2}\right| +\frac{|W_2|}{|\tilde W_1|^{\frac 12}}+ \frac{|\tilde W_1|^{\frac 12}}{\varsigma^3} \d y.\end{align} 
  We consider the first term appearing in \eqref{eq:termI} and in view of  \cref{prop:estimatesontildew1} we obtain
 \begin{align}
 \int_0^{y_0 - 1} &\frac{1}{|\tilde W_1|^{\frac 52} }\left(\frac{\d \tilde W_1}{\d y}\right)^2
 \d y \lesssim  \int_0^{y_0 - 1} \frac{\frac{\d \tilde W_1}{\d y}}{|\tilde W_1|^{\frac 32} } 
 \d y  \lesssim \frac{1}{|\tilde W_1|^{\frac 12}} \Big\vert_{0}^{y_0-1} \lesssim \tilde \lambda^{\frac 12} .
 \end{align} 
For the second term involving the second derivative, we use \eqref{eq:estiminregionI} to conclude that 
  \begin{align}
  	\int_0^{y_0 - 1} \frac{1}{|\tilde W_1|^{\frac 32} }\left|\frac{\d^2 \tilde W_1}{\d y^2}\right| \d y \lesssim \int_0^{y_0 - 1} \frac{ \frac{\d \tilde W_1}{\d y}}{|\tilde W_1|^{\frac 32}} +  \frac{1-x(y)^2}{|\tilde W_1|^{\frac 12}} \d y \lesssim \tilde \lambda^{\frac 12}.
  \end{align}  
 For the third term we use that $|W_2|\lesssim 1-x(y)^2$ such that
 \begin{align}
 	\int_0^{y_0-1} \frac{|W_2|}{|\tilde W_1|^{\frac 12}} \lesssim \tilde \lambda^{\frac 12} \int_0^\infty (1-x(y)^2) \d y \lesssim \tilde \lambda^{\frac 12}.
 \end{align} 
 For the last term in \eqref{eq:termI}, we  have
 \begin{align} \nonumber
 \int_{0}^{y_0 -1}	\frac{|\tilde W_1|^{\frac 12}}{\varsigma^3}  \d y &\lesssim  \int_{0}^{y_0 -1} \frac{\sqrt{-\tilde W_1}}{\left( \int_{y}^{y_0} \sqrt{-\tilde W_1} \d \tilde y  \right)^2} \d y \lesssim \frac{1}{ \int_{y_0 - 1}^{y_0} \sqrt{-\tilde W_1} \d \tilde y} \\ & \lesssim \frac{1}{ \int_{y_0 - 1}^{y_0} \sqrt{|\tilde y - y_0| \tilde \lambda^{-1}} \d \tilde y}\lesssim \tilde \lambda^{\frac 12}.
 \end{align}

 \paragraph{Term \texorpdfstring{$II$}{II}}
 For this term, we use Taylor's theorem around the point $y=y_0$. We will now only consider the region $y\in [y_0-1,y_0+1]$ and use the notiation $' = \frac{\d}{\d y} $ throughout the following paragraph. 
We write $\tilde W_1 (y) = (y-y_0)  \tilde W_1' (y_0)+\frac 12   (y-y_0)^2    \tilde W_1''  (y_0) +\frac 16 (y-y_0)^3   \tilde W_1'''  (y_0)  + \frac{1}{24} (y-y_0)^{4} R_{W_1}(y)$, where the remainder $R_{W_1}(y)$ is smooth and satisfies  $R_{W_1}(y) = \tilde W_1'''' (\xi)$ for some $\xi$ between $y_0$ and $y$. Thus,
\begin{align}\tilde W_1 (y) =  (y-y_0)  \tilde W_1' (y_0)+\frac 12   (y-y_0)^2    \tilde W_1''  (y_0) +\frac 16 (y-y_0)^3   \tilde W_1'''  (y_0)  +   O(\tilde \lambda^{-1}(y-y_0)^{4}  )
\end{align} in view of \eqref{eq:estimatesonw1inintermedaite}.
We note that $\zeta^3 = \frac{9}{4} (\int^{y}_{y_0} \sqrt{|\tilde W_1(\tilde y )|} \d \tilde y)^2  $ which we expand as 
\begin{align}\nonumber 
\zeta^3(y) = & \tilde W'(y_0) (y-y_0)^3 + \frac{3}{10} \tilde W_1''(y_0) (y-y_0)^4 \\& + \frac{1}{700} \left( - \frac{3\tilde W_1''(y_0)^2 }{W'(y_0)} + 50 \tilde W_1'''(y_0) \right) (y-y_0)^5 +\tilde \lambda^{-1} O(|y-y_0|^6).
\end{align}
Then, we further expand
  \begin{align}\label{eq:estiamteonw1divxi}
- \frac{5}{16} \frac{\tilde W_1}{\xi^3} = \frac{-5}{16|y-y_0|^2} - \frac{\tilde W_1''(y_0)}{16 W'(y_0)(y-y_0) } + \frac{117\tilde W_1''(y_0)^2 - 200 \tilde W_1'(y_0) \tilde W_1'''(y_0) }{6720 \tilde W_1'(y_0)^2} + O(|y-y_0| ).
  \end{align}
Here, the error in  $O(|y-y_0| )$ is estimated using  \eqref{eq:estimatesonw1inintermedaite} and we note that the homogeneity of \eqref{eq:estiamteonw1divxi} is such that no powers of $\tilde \lambda$ occur. 

We also expand $ {| \tilde  W_1|^{\frac 14}}\frac{\d^2}{\d y^2} \left(|\tilde W_1|^{-\frac 14}\right) $ around $y=y_0$ and obtain 
\begin{align}\nonumber 
 {| \tilde  W_1|^{\frac 14}}  \frac{\d^2}{\d y^2} \left(|\tilde W_1|^{-\frac 14}\right) = & \frac{5}{16 |y-y_0|^2} + \frac{\tilde W_1''(y_0)}{16 \tilde W_1'(y_0) (y-y_0)} \\ & + \frac{9 \tilde W_1''(y_0)^2 - 8 \tilde W_1'(y_0) \tilde W_1'''(y_0)}{192 \tilde W_1'(y_0)^2} + O(|y-y_0|).
\end{align}
Thus,
\begin{align}
 {| \tilde  W_1|^{\frac 14}}  \frac{\d^2}{\d y^2} \left(|\tilde W_1|^{-\frac 14}\right)  - \frac{5}{16} \frac{\tilde W_1}{\xi^3} = \frac{9 \tilde W_1''(y_0) - 10 \tilde W_1'''(y_0) \tilde W_1'(y_0) }{140 \tilde W_1'(y_0)^2} + O(|y-y_0|) 
\end{align}
from we  estimate  (using  \cref{prop:estimatesontildew1})
 \begin{align} \nonumber
 \Bigg| & \frac{1}{| \tilde  W_1|^{\frac 14}}\frac{\d^2}{\d y^2} \left(|\tilde W_1|^{-\frac 14}\right) - \frac{W_2}{|\tilde W_1|^{\frac 12}} - \frac{5 |\tilde W_1|^{\frac 12}}{16 \varsigma(y)^3} \Bigg| \\ & \lesssim \frac{1}{|W_1|^{\frac 12}}\left|\frac{|9 \tilde W_1''(y_0) - 10 \tilde W_1'''(y_0) \tilde W_1'(y_0)| }{140 \tilde W_1'(y_0)^2}  + |W_2| + O(|y-y_0|) \right| \lesssim\frac{1}{|W_1|^{\frac 12}} \lesssim  \frac{\tilde \lambda^{\frac 12 }}{|y-y_0|^{\frac 12}}\label{eq:estimateintheregiony0pm1}
 \end{align}
 uniformly in $y \in [y_0-1,y_0 + 1]$ from which we conclude that $|II|\lesssim \tilde \lambda^{\frac 12}$.  
 \paragraph{Term \texorpdfstring{$III$}{III}}
In the region $y\in [y_0+1, \infty)$ we first have
\begin{align}
	\varsigma^3(y) \gtrsim \left(\int_{y_0}^{y_0+1} \sqrt{\tilde W_1} \d \tilde y \right)^2+ \left(\int_{y_0+1}^{y} \sqrt{\tilde W_1} \d \tilde y\right)^2\gtrsim \frac{1}{\tilde \lambda} + (y-y_0 - 1)^2 \frac{1}{\tilde \lambda}
\end{align}
such that 
\begin{align}\label{eq:w1111}
	\frac{|\tilde W_1|^{\frac 12}}{\varsigma^3} \lesssim\frac{ \tilde \lambda^{\frac 12} }{1+(y-y_0-1)^2}
\end{align}
which is integrable at $y=\infty$.
Moreover, we have
\begin{align}\label{eq:w1112}
\Bigg| \frac{1}{| \tilde  W_1|^{\frac 14}}\frac{\d^2}{\d y^2} \left(|\tilde W_1|^{-\frac 14}\right) & - \frac{W_2}{|\tilde W_1|^{\frac 12}}   \Bigg| \lesssim \frac{\d x}{\d y}\left( \frac{\frac{\d \tilde W_1}{\d y}}{\tilde W_1^{\frac 32}} + \frac{1}{\tilde W_1^{\frac 12}}\right)\lesssim  \frac{\d x}{\d y} \tilde \lambda^{\frac 12}.
\end{align}
Combining the estimates \eqref{eq:w1111} and \eqref{eq:w1112} we obtain that $|III|\lesssim \tilde \lambda^{\frac 12}$ which concludes the proof.
\end{proof}
\end{lemma}
Finally, we also introduce 
\begin{align}
	\hat W_1 := \frac{\tilde W_1}{\varsigma} \text{ or equivalently } \hat W_1 = \left( \frac{\d \varsigma}{\d y} \right)^2
\end{align}
which we will bound from below in the following. To do that we also use the error-control function  $M_{\Ai}$ which is defined in \eqref{eq:definsofeaimai} in the appendix.
\begin{lemma}\label{lem:boundsonmoverw}
We  have
\begin{align}
	\frac{M_{\Ai}(\lambda^{\frac 13} \varsigma(y))}{\hat W_1^{\frac 14}} \lesssim \tilde \lambda^{\frac 16}.
\end{align}
\begin{proof}
	First, for $y_0 -1 \leq y \leq y_0$ we have
\begin{align}
\frac 23	(-\varsigma)^{\frac 32} = \int_{y}^{y_0} \sqrt{ |\tilde W_1| } \d \tilde y \leq (y_0 - y) \sqrt{ - \tilde W_1(y)}
\end{align}
	and for $y_0 \leq y \leq y_0 + 1 $ we have  \begin{align}
	\frac 23	\varsigma^{\frac 32} = \int_{y_0}^{y} \sqrt{ \tilde W_1 } \d \tilde y \leq (y-y_0)  \sqrt{\tilde W_1(y)},\end{align}
	where we have used the monotonicity of $\tilde W_1$. Hence, 
	\begin{align}\hat W_1 = \frac{\tilde W_1}{\varsigma} \gtrsim \left(\frac{\tilde W_1}{y-y_0}\right)^{\frac 23}\gtrsim \tilde \lambda^{-\frac 23} \end{align}
for $|y-y_0|\leq 1$. Now, using $ M_\Ai(x) \lesssim 1$ we conclude that for $|y-y_0|\leq 1$ we have
\begin{align}
	\frac{M_{\Ai}(\lambda^{\frac 13} \varsigma(y))}{\hat W_1^{\frac 14}} \lesssim 	\frac{1}{\hat W_1^{\frac 14}} \lesssim \tilde \lambda^{\frac 16}.
\end{align}
For $|y-y_0 | \geq 1$, we use that $M_{\Ai}(x) \lesssim |x|^{- \frac 14}$ to obtain 

\begin{align}
	\frac{M_{\Ai}(\lambda^{\frac 13} \varsigma(y))}{\hat W_1^{\frac 14}} \lesssim \frac{|\varsigma|^{\frac 14}}{|\varsigma|^{\frac 14} \lambda^{\frac{1}{12}} |\tilde W_1|^{\frac 14}}\lesssim \frac{\tilde{\lambda}^{\frac 14}}{\lambda^{\frac{1}{12}} } \lesssim \tilde \lambda^{\frac{1}{6}} m^{-\frac 16} \lesssim \tilde \lambda^{\frac{1}{6}}.
\end{align}
\end{proof}
\end{lemma}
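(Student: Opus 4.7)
The plan is to split the bound according to whether $y$ is close to the turning point $y_0$ or not, since the modulus function $M_{\Ai}$ behaves differently: it is uniformly bounded near the origin of its argument, whereas $M_{\Ai}(x)\lesssim |x|^{-1/4}$ for $|x|$ large. The delicate point in both regimes is controlling $\hat W_1$, which by construction is continuous through $y_0$ but whose size depends on how flat $\tilde W_1$ is at its root.

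First I would handle the region $|y-y_0|\leq 1$. By \cref{prop:estimatesontildew1} one has $\tfrac{d\tilde W_1}{dy}\sim \tilde\lambda^{-1}$ there, hence $|\tilde W_1(y)|\sim |y-y_0|\,\tilde\lambda^{-1}$. Substituting into the defining integral for $\varsigma$ gives
\begin{equation*}
\tfrac{2}{3}|\varsigma(y)|^{3/2}=\Big|\!\int_{y_0}^{y}\!\sqrt{|\tilde W_1|}\,d\tilde y\Big|\sim |y-y_0|^{3/2}\tilde\lambda^{-1/2},
\end{equation*}
so $|\varsigma(y)|\sim |y-y_0|\,\tilde\lambda^{-1/3}$ and therefore $\hat W_1=\tilde W_1/\varsigma\gtrsim \tilde\lambda^{-2/3}$. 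Combined with the trivial bound $M_{\Ai}\lesssim 1$, this yields $M_{\Ai}(\lambda^{1/3}\varsigma)/\hat W_1^{1/4}\lesssim \tilde\lambda^{1/6}$ as desired.

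For the complementary region $|y-y_0|\geq 1$, I would use the large-argument asymptotic $M_{\Ai}(x)\lesssim |x|^{-1/4}$ together with the identity $\hat W_1\cdot\varsigma = \tilde W_1$ to rewrite
\begin{equation*}
\frac{M_{\Ai}(\lambda^{1/3}\varsigma(y))}{\hat W_1^{1/4}}\lesssim \frac{1}{\lambda^{1/12}|\varsigma|^{1/4}\hat W_1^{1/4}} = \frac{1}{\lambda^{1/12}\,|\tilde W_1|^{1/4}}.
\end{equation*}
In the outer region $y\geq y_0+1$ one has $\tilde W_1\sim \tilde\lambda^{-1}$ by \cref{prop:estimatesontildew1}, and analogously in $0\leq y\leq y_0-1$ the same bound $|\tilde W_1|\gtrsim \tilde\lambda^{-1}$ holds (from the monotonicity and the $|y-y_0|\geq 1$ gap). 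Hence $|\tilde W_1|^{1/4}\gtrsim \tilde\lambda^{-1/4}$ and we obtain
\begin{equation*}
\frac{M_{\Ai}(\lambda^{1/3}\varsigma(y))}{\hat W_1^{1/4}}\lesssim \frac{\tilde\lambda^{1/4}}{\lambda^{1/12}} = \frac{\tilde\lambda^{1/6}}{m^{1/6}}\leq \tilde\lambda^{1/6},
\end{equation*}
using $\lambda = \tilde\lambda m^2$ and $m$ large. Together with the near-turning-point estimate this completes the bound.

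The main obstacle I anticipate is the inner estimate: showing $\hat W_1\gtrsim \tilde\lambda^{-2/3}$ near $y_0$ with the sharp power, because one must exploit that $\tilde W_1$ has a simple zero at $y_0$ whose derivative is itself of order $\tilde\lambda^{-1}$ (a doubly small quantity in the regime of interest), and verify that the Liouville variable $\varsigma$ is comparable to $(y-y_0)\tilde\lambda^{-1/3}$ uniformly as $\tilde\lambda\to\infty$. Once that algebra is in hand, the remainder of the proof is bookkeeping of the two regimes.
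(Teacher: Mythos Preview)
Your proposal is correct and follows essentially the same approach as the paper: split into $|y-y_0|\leq 1$ and $|y-y_0|\geq 1$, use $M_{\Ai}\lesssim 1$ near the turning point together with $\hat W_1\gtrsim\tilde\lambda^{-2/3}$, and use $M_{\Ai}(x)\lesssim|x|^{-1/4}$ away from it together with $|\tilde W_1|\gtrsim\tilde\lambda^{-1}$. The only cosmetic difference is that the paper obtains the near-turning-point bound on $\hat W_1$ by bounding the integral via monotonicity of $\tilde W_1$ (namely $\int_{y_0}^{y}\sqrt{\tilde W_1}\,d\tilde y\leq (y-y_0)\sqrt{\tilde W_1(y)}$), whereas you compute it directly from the linear behavior $|\tilde W_1|\sim|y-y_0|\tilde\lambda^{-1}$ implied by the two-sided derivative bound in \cref{prop:estimatesontildew1}; both routes yield the same $\hat W_1\gtrsim\tilde\lambda^{-2/3}$.
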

Now, we are in the position to define the following fundamental solutions.
\begin{prop}\label{prop:defnofw1w2}
	There exist solutions $w_1$ and $w_2$ of \eqref{eq:homo2} satisfying
	\begin{align} \label{eq:defnw1}
		& w_1 = \frac{1}{\hat W_1^{\frac 14}} \left( \Ai( \lambda^{\frac 13} \varsigma(y)) + \eta_{\Ai}(\lambda,y) \right)\\
		& w_2 = \frac{1}{\hat W_1^{\frac 14}} \left( \Bi( \lambda^{\frac 13} \varsigma(y)) + \eta_{\Bi}(\lambda,y) \right), \label{eq:defnw2}
	\end{align}
	where 
	\begin{align}
	&	\frac{|\eta_{\Ai}(\lambda, y)|}{M_{\Ai}(\lambda^{\frac 13} \varsigma)}, 	\frac{|\partial_y \eta_{\Ai}(\lambda, y)|}{\lambda^{\frac 13} N_{\Ai}(\lambda^{\frac 13} \varsigma) \hat W_1^{\frac 12}}	 \lesssim {E^{-1}_{\mathrm{Ai}}(\lambda^{\frac 13} \varsigma)} m^{-1},
		\\
		&	\frac{|\eta_{\Bi}(\lambda, y)|}{M_{\Ai}(\lambda^{\frac 13} \varsigma)}, 	\frac{|\partial_y \eta_{\Bi}(\lambda, y)|}{\lambda^{\frac 13} N_{\Ai}(\lambda^{\frac 13} \varsigma) \hat W_1^{\frac 12}} \lesssim E_{\mathrm{Ai}}(\lambda^{\frac 13} \varsigma) m^{-1}.
	\end{align}
	Moreover, the Wronskian of $w_1$ and $w_2$ satisfies \begin{align} \label{eq:wronskianw1w2}
		|\mathfrak W(w_1,w_2) |\sim \lambda^{\frac 13}.
	\end{align}
	\begin{proof}
		This follows from \cite[Chapter~11, Theorem~3.1]{olver} and the error bounds follow from the bounds on $\mathcal{V}_{0,\infty}(H)$ in \cref{lem:errorboundonH}. The Wronskian identity is a direct consequence of the chain rule.
	\end{proof}
\end{prop}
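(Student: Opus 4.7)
The plan is to invoke the Langer--Olver uniform asymptotic construction for second-order ODEs with a simple turning point, feeding into it the structural estimates on $\tilde W_1$ from \cref{prop:estimatesontildew1} and the variational bound on the error-control function from \cref{lem:errorboundonH}. The whole proposition is essentially a quantitative citation of Chapter 11, Theorem 3.1 of Olver, once we have checked that its hypotheses hold with the right uniformity in $m$ and $\tilde\lambda$.

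First, I would perform the Liouville transformation $(y,g)\mapsto (\varsigma, W)$ given by the variable $\varsigma(y)$ introduced before the statement, together with the dependent-variable rescaling $g=\hat W_1^{-1/4} W$. Using $\hat W_1=(d\varsigma/dy)^{2}$, a direct computation shows that $g$ solves \eqref{eq:homo2} iff
\[
\frac{d^{2}W}{d\varsigma^{2}}=\bigl(\lambda\,\varsigma+\hat\Psi(\varsigma)\bigr)W,\qquad \lambda:=m^{2}\tilde\lambda,
\]
where the perturbation $\hat\Psi$ is exactly the integrand whose $y$-antiderivative is $H$ in \eqref{defn:herrorcontrol}; the subtraction of $5|\tilde W_1|^{1/2}/(16\varsigma^{3})$ in the definition of $H$ is precisely the Schwarzian-derivative counterterm needed to keep the rescaling $\hat W_1^{-1/4}$ regular at the turning point $\varsigma=0$. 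The unperturbed equation is Airy's, with solutions $\Ai(\lambda^{1/3}\varsigma)$ (recessive as $\varsigma\to+\infty$) and $\Bi(\lambda^{1/3}\varsigma)$ (dominant).

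Second, I would apply Olver's theorem to this perturbed Airy equation on the interval $[\varsigma_{0},\infty)$. It constructs two linearly independent solutions of the prescribed form whose errors $\eta_{\Ai},\eta_{\Bi}$ are pointwise controlled by the modulus $M_{\Ai}$, the weight $E_{\Ai}$ (or $E_{\Ai}^{-1}$), and the rescaled derivative analogues $N_{\Ai}$ for the $y$-derivatives (the factor $\hat W_1^{1/2}$ in the derivative bounds comes from the chain rule $d/dy=\hat W_1^{1/2}\,d/d\varsigma$). The hypotheses---absolute continuity of $H$ on $(0,\infty)$ and finiteness of $\mathcal{V}_{0,\infty}(H)$---are verified by \cref{lem:errorboundonH}, which gives the quantitative bound $\mathcal{V}_{0,\infty}(H)\lesssim \tilde\lambda^{1/2}$. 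Inserting this into the Volterra iteration bound in Olver's proof and dividing by the effective large parameter $\lambda^{1/3}=m^{2/3}\tilde\lambda^{1/3}$ yields the stated error factors after reverting to the $y$-variable via \eqref{eq:defnw1}--\eqref{eq:defnw2}.

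Third, for the Wronskian identity I would use that the Liouville transformation together with the normalization $\hat W_1^{-1/4}$ preserves the Wronskian (since $\hat W_1=(d\varsigma/dy)^{2}$), so the chain rule and the classical identity $\mathfrak W(\Ai,\Bi)=1/\pi$ give
\[
\mathfrak W(w_1,w_2)=\lambda^{1/3}\,\mathfrak W\bigl(\Ai(\lambda^{1/3}\cdot),\Bi(\lambda^{1/3}\cdot)\bigr)\Big|_{\varsigma=\varsigma(y)}+\text{(smaller error)}=\frac{\lambda^{1/3}}{\pi}+o(\lambda^{1/3}),
\]
whence $|\mathfrak W(w_1,w_2)|\sim\lambda^{1/3}$ for $m$ sufficiently large. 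The main technical obstacle, and essentially the only real work, is the error bookkeeping: one must check that Olver's Volterra iteration, fed with $\mathcal{V}_{0,\infty}(H)\lesssim \tilde\lambda^{1/2}$ and the large parameter $\lambda^{1/3}=m^{2/3}\tilde\lambda^{1/3}$, produces errors that are small uniformly as $\tilde\lambda\to\infty$ in the regime $\tilde\lambda\in(\Xi^{2}+1,\infty)$. This is exactly where the three-region decomposition underlying \cref{prop:estimatesontildew1} is indispensable, since it ensures the $\tilde\lambda$-dependence of $\mathcal{V}(H)$ is borderline enough to be absorbed into the constant once $m$ is large, while keeping $\hat W_1$ and $M_{\Ai}/\hat W_1^{1/4}$ uniformly controlled as in \cref{lem:boundsonmoverw}.
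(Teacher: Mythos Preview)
Your proposal is correct and follows exactly the paper's route: invoke Olver's Chapter~11, Theorem~3.1 with the variation bound $\mathcal V_{0,\infty}(H)\lesssim\tilde\lambda^{1/2}$ from \cref{lem:errorboundonH}, and read off the Wronskian from the chain rule. One small slip: the large parameter in Olver's error bound is $u=\lambda^{1/2}=m\tilde\lambda^{1/2}$, not $\lambda^{1/3}$; dividing $\mathcal V(H)\lesssim\tilde\lambda^{1/2}$ by $u$ is what produces the clean $m^{-1}$ (the factor $\lambda^{1/3}$ only appears in the Airy argument via the rescaling $\varsigma\mapsto\lambda^{1/3}\varsigma$).
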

\begin{lemma}
There exists a constant $A_2\neq 0$ such that $s_1 = A_2 w_1$, where $w_1$ is defined in \eqref{eq:defnw1} and $s_1$ is defined in \eqref{eq:defns1}.
	\begin{proof}
		Note that both, $w_1$ and $s_1$ are recessive as $y\to \infty$. Since the space of solutions which are recessive at $y\to\infty$ is one-dimensional, we conclude that $s_1$ and $w_1$ are linearly dependent.
	\end{proof}
	\label{prop:a2}
\end{lemma}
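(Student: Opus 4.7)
The plan is to identify both $s_1$ and $w_1$ as elements of the (at most) one-dimensional space of solutions of the homogeneous equation \eqref{eq:homo2} which are recessive at $y = +\infty$, from which linear dependence $s_1 = A_2 w_1$ is immediate.

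First, I would analyze the coefficient $m^2 \tilde\lambda \tilde W_1 + W_2$ of \eqref{eq:homo2} as $y\to\infty$. By \cref{prop:estimatesontildew1}, $\tilde\lambda \tilde W_1(y) \to \Xi^2$, and $W_2$ is bounded (in fact decays like $1-x(y)^2 \sim e^{-2\Xi y}$). Hence \eqref{eq:homo2} is, to leading order at infinity, $-g'' + m^2\Xi^2 g = 0$, whose solutions behave like $e^{\pm m\Xi y}$. A standard Levinson-type argument therefore shows that the solution space of \eqref{eq:homo2} splits near $+\infty$ into a one-dimensional ``recessive'' subspace (asymptotic to $e^{-m\Xi y}$) and a one-dimensional ``dominant'' subspace (asymptotic to $e^{+m\Xi y}$).

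Next, I would verify that $w_1$ lies in the recessive subspace. Since $\tilde W_1 > 0$ for $y > y_0$, the function $\varsigma(y)$ defined via $\frac{2}{3}\varsigma^{3/2} = \int_{y_0}^y \sqrt{\tilde W_1}\,d\tilde y$ satisfies $\varsigma(y) \to +\infty$ as $y\to\infty$. From \cref{prop:defnofw1w2} and the decay $\mathrm{Ai}(\lambda^{1/3}\varsigma) \lesssim E_{\mathrm{Ai}}^{-1} M_{\mathrm{Ai}}$ of the Airy function at $+\infty$, together with the error bound $\eta_{\mathrm{Ai}} = M_{\mathrm{Ai}}E_{\mathrm{Ai}}^{-1}O(m^{-1})$, we get that $w_1(y) \to 0$ exponentially. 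Thus $w_1$ is recessive.

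Then, I would verify that $s_1 = S_{m\ell}(x(y))$ is recessive. The weight $w(y) = \Delta_x(1-x^2)$ satisfies, by the explicit formula for $e^{2\Xi y}$ just above \eqref{eq:defns1}, the equivalence $w(y) \sim c\, e^{-2\Xi y}$ as $y \to \infty$, for some positive constant $c$. The normalization \eqref{eq:normalizationofs1} therefore forces $\int_0^\infty s_1(y)^2 e^{-2\Xi y}\,dy < \infty$. If $s_1$ had any component in the dominant direction, it would grow like $e^{m\Xi y}$, and the integrand would grow like $e^{(2m-2)\Xi y}$, which is not integrable at $+\infty$ since $m \geq 1$ (in fact $m$ is taken large). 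Hence the dominant component vanishes and $s_1$ lies in the recessive subspace.

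Combining the last two steps, both $s_1$ and $w_1$ belong to the one-dimensional recessive subspace, so $s_1 = A_2 w_1$ for some scalar $A_2$. Finally, $A_2 \neq 0$ because $s_1$ is $L^2$-normalized and thus not identically zero, while $w_1$ is nontrivial by construction. No part of the argument is genuinely difficult; the only subtlety is matching the exponential $L^2$-weight rate $e^{-2\Xi y}$ against the dominant growth rate $e^{m\Xi y}$, which is comfortably settled provided $m$ is sufficiently large, consistent with the standing assumption throughout \cref{sec:proofofs2}.
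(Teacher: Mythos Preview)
Your proposal is correct and follows exactly the same approach as the paper: both $s_1$ and $w_1$ are recessive as $y\to\infty$, and the recessive subspace is one-dimensional, hence they are proportional. The paper's proof is the one-line version of your argument; you have simply supplied the justifications (Airy asymptotics for $w_1$, the weighted $L^2$-normalization for $s_1$) that the paper leaves implicit.
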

In view of \cref{prop:a2} we define 
\begin{align}\label{eq:defns2new}
	s_2 := \frac{1}{A_2 \mathfrak{W}(w_1,w_2)} w_2,
\end{align} 
where $w_2$ is as in \eqref{eq:defnw2}. Note that this implies that 
\begin{align}
	\mathfrak W (s_1,s_2) = 1.
\end{align}
\begin{lemma}\label{prop:estimateongp2}
	Let $\tilde \lambda \in [\Xi^2+1, \infty)$ and let $s_2$ as in \eqref{eq:defns2new}. Then, $g_p$ defined in \eqref{eq:defngp2} 
	satisfies
	\begin{align}\label{eq:lefthandside2}
	\int_{0}^{\infty} g_p(y)^2 (1-x(y)^2) \Delta_x \d y \lesssim m.
	\end{align}
	\begin{proof}
		Analogous to the proof of \cref{prop:estimateongp} we first estimate \begin{align} \nonumber
		\int_0^\infty s_2^2(y) & \left(\int_{y}^\infty s^2_1(\tilde y) F(\tilde y) \d\tilde y\right)^2 (1-x(y)^2) \Delta_x \d y \\ & = \int_0^\infty \frac{w_2^2(y)}{\mathfrak W(w_1,w_2)^2}  \left(\int_{y}^\infty w_1(\tilde y) s_1 (\tilde y) F(\tilde y) \d\tilde y \right)^2 (1-x(y)^2) \Delta_x(y) \d y . \label{eq:boundsinhighfreq}
		\end{align}
		Now, we use \cref{prop:defnofw1w2} and standard bounds on Airy functions from \cref{sec:airyfunctions}, as well as  \cref{lem:boundsonmoverw} to obtain \begin{align}&|w_1(y)|\lesssim \left| E^{-1}_{\Ai}(\lambda^{\frac 13} \varsigma(y)) \frac{M_{\Ai}(\lambda^{\frac 13} \varsigma(y))}{\hat W_1^{\frac 14} (y) } \right| \lesssim E^{-1}_{\Ai}(\lambda^{\frac 13} \varsigma(y)) \tilde \lambda^{\frac 16},\\ & |w_2(y)|\lesssim \left| E_{\Ai}(\lambda^{\frac 13} \varsigma(y)) \frac{M_{\Ai}(\lambda^{\frac 13} \varsigma(y))}{\hat W_1^{\frac 14} (y) } \right| \lesssim E_{\Ai}(\lambda^{\frac 13} \varsigma(y)) \tilde \lambda^{\frac 16}. \end{align}
		Now, plugging these estimates into \eqref{eq:boundsinhighfreq} and using that $E_{\textup{Ai}}^{-1}(\lambda^{\frac 13} \varsigma(y))$ is a decreasing function, we conclude
		\begin{align} \nonumber
			\int_0^\infty s_2^2(y) & \left(\int_{y}^\infty s^2_1(\tilde y) F(\tilde y) \d\tilde y\right)^2 (1-x(y)^2) \Delta_x \d y \\ \nonumber
&	\lesssim \int_0^\infty   \frac{\tilde \lambda^{\frac 23}}{\mathfrak W(w_1,w_2)^2} \left( \int_{y}^\infty s_1(\tilde y) F(\tilde y) \d \tilde y\right)^2 (1-x(y)^2) \Delta_x \d y.
		\\ \nonumber & \lesssim \frac{\tilde \lambda^{\frac 23}}{\mathfrak W(w_1,w_2)^2} \int_0^\infty (1-x(y)^2) \Delta_x \d y 		\\ \nonumber &\;\;\;\; \times \int_0^\infty s_1^2( y) (1-x(y)^2) \Delta_x \d y \int_0^\infty F^2(y) \frac{1}{(1-x(y)^2) \Delta_x} \d y 
			\\   & \lesssim \frac{\tilde \lambda^{\frac 23} m^2}{\mathfrak W(w_1,w_2)^2} . 
		\end{align}
		
		For the second term, we argue similarly and estimate
		\begin{align}\nonumber
		\int_0^\infty s_1^2(y) & \left(\int_{0}^y s_2(\tilde y)  s_1(\tilde y) F(\tilde y) \d\tilde y\right)^2 (1-x(y)^2) \Delta_x \d y \\ \nonumber& \leq  \left(\int_0^\infty | s_2(\tilde y)  s_1(\tilde y) F(\tilde y)| \d\tilde y\right)^2 \int_0^\infty  s_1^2(y)(1-x(y)^2) \Delta_x \d y \\ \nonumber & = \frac 12  \left(\int_0^\infty | s_2(\tilde y)  s_1(\tilde y) F(\tilde y)| \d\tilde y\right)^2 \\ & \lesssim \frac{\tilde \lambda^{\frac 23}}{\mathfrak W (w_1, w_2)^2}  \left(\int_{0}^{\infty} |F| \d \tilde y \right)^2 \lesssim 	 \frac{\tilde \lambda^{\frac 23} m^2}{\mathfrak W (w_1, w_2)^2}.
		\end{align}
		
		Thus, we conclude 	\begin{align}
		\int_{0}^{\infty} g_p(y)^2 (1-x(y)^2) \Delta_x \d y \lesssim 	 \frac{\tilde \lambda^{\frac 23} m^2}{\mathfrak W (w_1, w_2)^2} \lesssim	 \frac{\tilde \lambda^{\frac 23} m^2}{\lambda^{\frac 23}}\lesssim m^{\frac 23} \lesssim m.
		\end{align}
	\end{proof}
	\end{lemma}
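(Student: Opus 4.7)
The plan is to mirror the proof strategy of \cref{prop:estimateongp}, but with the parabolic cylinder function bounds replaced by the Airy function bounds from \cref{prop:defnofw1w2} combined with the estimate of \cref{lem:boundsonmoverw}. First, I split $g_p$ into the two terms coming from the Green's function representation, namely $s_2(y)\int_y^\infty s_1^2 F\, \d\tilde y$ and $s_1(y)\int_0^y s_1 s_2 F\, \d\tilde y$, and bound the $L^2([0,\infty);(1-x^2)\Delta_x\,\d y)$-norm of each separately. The key ingredients are: the identifications $s_1 = A_2 w_1$ and $s_2 = w_2/(A_2\mathfrak{W}(w_1,w_2))$ (so that the unknown constant $A_2$ cancels in every product we care about); the pointwise bounds $|w_1(y)|\lesssim E_{\Ai}^{-1}(\lambda^{1/3}\varsigma(y))\,\tilde\lambda^{1/6}$ and $|w_2(y)|\lesssim E_{\Ai}(\lambda^{1/3}\varsigma(y))\,\tilde\lambda^{1/6}$ coming from \cref{prop:defnofw1w2} and \cref{lem:boundsonmoverw}; the Wronskian estimate $|\mathfrak{W}(w_1,w_2)|\sim\lambda^{1/3}=m^{2/3}\tilde\lambda^{1/3}$; and the bounds $\int F^2/w\,\d y\lesssim m^2$, $\int|F|\,\d y\lesssim m$, which follow from \cref{lem:partialxlambda}, the explicit form of $F$, and the fact that $\int_0^\infty w\,\d y=\int_0^1\d x = 1$.

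For the first term, I would use the monotonicity of $E_{\Ai}^{-1}$ to move the weight $E_{\Ai}^{-1}(\lambda^{1/3}\varsigma(\tilde y))$ out of the inner integral and combine it with $E_{\Ai}(\lambda^{1/3}\varsigma(y))$ from $w_2$, yielding the clean bound $|s_2(y)\,s_1(\tilde y)|\lesssim \tilde\lambda^{1/3}/|\mathfrak{W}(w_1,w_2)|$ for $\tilde y\ge y$. Then Cauchy--Schwarz applied to the inner integral (splitting $s_1 F = (s_1\sqrt{w})(F/\sqrt{w})$) and the normalization $\int_0^\infty s_1^2 w\,\d y = 1/2$ together with $\int F^2/w\lesssim m^2$ and $\int w\,\d y \lesssim 1$ give a total bound of
\begin{equation*}
\frac{\tilde\lambda^{2/3}}{|\mathfrak{W}(w_1,w_2)|^2}\cdot m^2\;\lesssim\;\frac{\tilde\lambda^{2/3}}{\lambda^{2/3}}\cdot m^2 \;=\; m^{2/3} \;\lesssim\; m .
\end{equation*}

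For the second term, I would use the normalization $\int_0^\infty s_1^2(y)(1-x^2)\Delta_x\,\d y = 1/2$ to pull the outer $s_1^2$ weight out, reducing matters to controlling $\bigl(\int_0^\infty |s_1 s_2 F|\,\d y\bigr)^2$. The same product bound $|s_1(y)s_2(y)|\lesssim \tilde\lambda^{1/3}/|\mathfrak{W}(w_1,w_2)|$ (again exploiting the $E_{\Ai}\cdot E_{\Ai}^{-1}$ cancellation) and the $L^1$ bound $\int|F|\,\d y\lesssim m$ give the same $\tilde\lambda^{2/3}m^2/|\mathfrak{W}|^2\lesssim m^{2/3}$ estimate, completing the proof.

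The main obstacle I anticipate is verifying that the $\tilde\lambda$-growth in the numerator from the pointwise bounds on $w_1$ and $w_2$ exactly matches the $\tilde\lambda$-growth coming from the Wronskian and the weighted $L^2$ norms of $F$. The small gain $\tilde\lambda^{2/3}/\lambda^{2/3} = m^{-4/3}$ must precisely absorb the $m^2$ produced by $F$ to leave a net $m^{2/3}$, and the only way this can go wrong is if either \cref{lem:boundsonmoverw} is not sharp enough uniformly in $y$ (which requires the careful case-split $|y-y_0|\lessgtr 1$ already carried out there) or if $\int|F|$ grows faster than $m$. A secondary nuisance is the uniformity in $\tilde\lambda\in[\Xi^2+1,\infty)$: both the Airy approximation and the bounds on $\hat W_1$ degenerate at the turning point $y_0(\tilde\lambda)$, and one must check that the constants coming from \cref{lem:errorboundonH} (which only provide $\mathcal{V}_{0,\infty}(H)\lesssim\tilde\lambda^{1/2}$) do not propagate into the error terms $\eta_{\Ai}$, $\eta_{\Bi}$ in a way that destroys the bound $m^{-1}$ used in \cref{prop:defnofw1w2}—but this is absorbed by the $m$ sufficiently large assumption already in force.
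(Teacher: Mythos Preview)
Your proposal is correct and follows essentially the same approach as the paper: the same splitting of $g_p$, the same use of the Airy envelope bounds $|w_1|\lesssim E_{\Ai}^{-1}\tilde\lambda^{1/6}$ and $|w_2|\lesssim E_{\Ai}\tilde\lambda^{1/6}$ from \cref{prop:defnofw1w2} and \cref{lem:boundsonmoverw}, the same monotonicity trick for $E_{\Ai}^{-1}$ to obtain $|s_2(y)w_1(\tilde y)|\lesssim \tilde\lambda^{1/3}/|\mathfrak W(w_1,w_2)|$ for $\tilde y\ge y$, the same Cauchy--Schwarz with the remaining $s_1$ factor, and the same final arithmetic $\tilde\lambda^{2/3}m^2/\lambda^{2/3}=m^{2/3}$. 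Your concern about the error control is also correctly resolved: since the semiclassical parameter in \cref{prop:defnofw1w2} is $u=\lambda^{1/2}=m\tilde\lambda^{1/2}$ and $\mathcal V_{0,\infty}(H)\lesssim\tilde\lambda^{1/2}$, the Olver error is governed by $\mathcal V(H)/u\lesssim m^{-1}$, exactly as stated.
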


\section{The radial o.d.e.\ on the exterior}
\label{sec:radialext}
We will now derive for which frequency parameters  $(\omega,m,\ell)$ the poles of the interior scattering operator at $\omega = \omega_- m$  coincide with frequency parameters which are exposed to stable trapping on the black hole exterior. This allows us then to define the set $\mathscr{P}_{\textup{Blow-up}}$ in \cref{sec:defnblow}. Thus, we will analyze the radial o.d.e.\ at frequency $\omega=\omega_- m$.
\subsection{Resonance: Radial o.d.e.~at interior scattering poles  allows for stable trapping}
We will first determine the range of angular eigenvalues $\lambda_{m\ell}(a m \omega_-)$ at the interior scattering poles $\omega=\omega_- m$  for which the radial o.d.e.\ admits stable trapping.
Recall from \eqref{eq:highfrequencypart} that the normalized high frequency part of the potential with $\omega=\omega_- m$ is given by 
\begin{align}\label{eq:Vomega-}
&V_{\textup{main}} =  \frac{\Delta}{(r^2+a^2)^2} \left( \frac{\lambda_{m\ell}(a\omega_- m)}{m^2} + \omega_-^2 a^2 - 2  a \omega_- \Xi \right) - ( \omega_- - \omega_r )^2.
\end{align}
Note that \begin{align}&V_{\textup{main}} \to - (\omega_- - \omega_+)^2 < 0 \text{ as } r \to r_+,\\
&V_{\textup{main}} \to l^{-2}\left( \frac{\lambda_{m\ell}(a\omega_- m)}{m^2} + \omega_-^2 a^2 - 2  a \omega_- \Xi \right) -   \omega_-^2    \text{ as } r \to \infty,\label{eq:vmainatinf}
\\
& V_{\textup{main}}(r_\mathrm{cut}) \geq \Big(l^{-2}  + 3 \frac{M^2}{\Xi} \Big(\frac{9M^2}{\Xi^2} + a^2\Big)^{-2} \Big) \left(\frac{\lambda_{m\ell}(a\omega_- m)}{m^2} + \omega_-^2 a^2 - 2  a \omega_- \Xi \right) -  \omega_-^2 ,\label{eq:vmainatrcut}
\end{align}
where $r_{\mathrm{cut}} := \frac{3M}{\Xi}$. Remark that \eqref{eq:vmainatrcut} follows from (using $l^2 = a^2 + l^2 \Xi$)
\begin{align}
\frac{\Delta( r)}{(r^2+a^2)^2} = l^{-2}  \frac{ (r^2+a^2)(r^2 + l^2) }{(r^2+a^2)^2}   - \frac{2Mr}{(a^2+r^2)^2 } = l^{-2} + \frac{\Xi(r^2+a^2) - 2Mr}{(a^2+r^2)^2},
\end{align}
 together with $\Xi (r_{\mathrm{cut}}^2 + a^2)  - 2 M r_{\mathrm{cut}} = \frac{3M^2}{\Xi} + a^2 \Xi$.

For the potential $V_{\textup{main}}$ to admit a region of stable trapping, we require that   $V_{\textup{main}}$ has two roots $r_{1} < r_{2}$, see already \cref{fig:turningpoints}. A sufficient condition for that is that the angular eigenvalues $\lambda_{m\ell}(a \omega_- m) m^{-2}$ are such that \eqref{eq:vmainatinf} is negative and \eqref{eq:vmainatrcut} is positive. In this case, we denote with $r_1 = r_1( {\lambda_{m\ell}(a\omega_- m)}{m^{-2}}, \mathfrak p) < r_2 = r_2( {\lambda_{m\ell}(a\omega_- m)}{m^{-2}}, \mathfrak p)$ the two largest roots. (We will show later that indeed, these are the only roots for $r\geq r_+$.)
This leads us to define the following range of angular eigenvalues
\begin{align}
E:=\bigsqcup_{\mathfrak p \in \mathscr P} \{\mathfrak p\}\times E_\mathfrak p = \bigsqcup_{\mathfrak p \in \mathscr P} \{\mathfrak p \}\times (\mu_0(\mathfrak p), \mu_1 (\mathfrak p) ),\label{eq:fibrebundle}
\end{align}
where 
\begin{align}\nonumber
E_\mathfrak p := \bigg\{\tilde \mu \in&  ( \Xi^2, \omega_-^2 (l^2-a^2) +2a\omega_- \Xi)  \colon    \text{ Every }   \mu \in [\tilde \mu, \omega_-^2 (l^2-a^2) +2a\omega_- \Xi) \text{ satisfies} \\ \nonumber
&
\Big(l^{-2}  + 3 \frac{M^2}{\Xi} \Big(\frac{9M^2}{\Xi^2} + a^2\Big)^{-2} \Big) \left(\mu + \omega_-^2 a^2 - 2  a \omega_- \Xi \right) -    \omega_-^2 >0 ,
\\ &  \Gamma \Big(\Delta^{-1}\Big)(\mu + a^2 \omega_-^2 - 2 a \omega_- \Xi) - \Gamma\Big(  \frac{(r^2+a^2)^2}{\Delta^2}(\omega_- - \omega_r)^2\Big)  < -\frac{\omega_-}{a}l^4 r^{-4} \text{ for } r \geq r_2(\mu, \mathfrak p) \bigg\} . \label{eq:defnsp}
\end{align}
Remark that the last condition in \eqref{eq:defnsp} will be used in \cref{lem:vomega-estimates} while the other conditions in \eqref{eq:defnsp} guarantee that $V_\textup{main}$ has two roots. Here, we also recall the definition of $\Gamma = \frac{\partial}{\partial \vartheta}$ in \eqref{eq:defnofgamma}, where $\vartheta= a \omega_-$. 

By construction, $E_\mathfrak p$ is connected. Indeed, note that if $\tilde \mu_1,\tilde \mu_2 \in E_{\mathfrak p}$, then every $\tilde \mu_3$ with $\tilde \mu_1\leq  \tilde  \mu_3 \leq \tilde \mu_2 $ satisfies $\tilde \mu_3\in E_{\mathfrak p}$. Thus, $E_\mathfrak p$ is a (a priori possibly empty) bounded interval.  We define $\mu_0 (\mathfrak p):=\inf E_\mathfrak p$, $\mu_1( \mathfrak p):=\sup E_\mathfrak p$.   We will show that $E$ is a fiber bundle. To do so we first show that $E_\mathfrak p$ is open and non-empty. 
\begin{lemma}\label{eq:lemspnonempty}
	For any $\mathfrak p\in \mathscr{P}$, the set $E_\mathfrak p $ defined in \eqref{eq:defnsp} is  a non-empty bounded open interval. 
	\begin{proof}
	First, we will show that  \begin{align}
\Xi^2 < \omega_-^2 (l^2-a^2) +2a\omega_- \Xi \end{align} which in turn follows from \begin{align}r_-^2 < a l .\end{align}
		To see that $r_-^2 < a l$ holds true, we write $\Delta (r)$ in terms of $r_-$. More precisely, from $\Delta(r_-)=0$ we have
		\begin{align}
			2M = r_-^{-1} (a^2+r_-^2) \left( 1+ \frac{r_-^2}{l^2}\right)
			\end{align} 
		from which we obtain
		\begin{align}
\Delta(r) = (r^2+a^2)(1+ \frac{r^2}{l^2}) - \frac{r}{r_-} (a^2+r_-^2) \left( 1+ \frac{r_-^2}{l^2}\right).
		\end{align}
After a polynomial long division, this reduces to 
		\begin{align}
		\Delta(r) = l^{-2} (r - r_-) (r^3 + r^2 r_- + r(r_-^2 + a^2 + l^2) - a^2 l^2 r_-^{-1}).
		\end{align}
		Hence, \begin{align} 0> \partial_r \Delta (r_-) = l^{-2} r_-^{-1} ( 3 r_-^4 + r_-^2 a^2 + r_-^2 l^2 - a^2 l^2)\end{align}
		implies \begin{align}3 r_-^4 < a^2 l^2\end{align} from which  \begin{align}
		\label{eq:onr-}
	r_-^2 < a l
		\end{align}
		follows.  
		
		Note that for $\mu = \omega_-^2 (l^2-a^2) +2a\omega_- \Xi$ we have 
		\begin{align} \nonumber 
			\Big(l^{-2}  + &  3 \frac{M^2}{\Xi} \Big(\frac{9M^2}{\Xi^2} + a^2\Big)^{-2} \Big)  \left(\mu + \omega_-^2 a^2 - 2  a \omega_- \Xi \right) -    \omega_-^2 \\
			& = 	\Big(l^{-2}  + 3 \frac{M^2}{\Xi} \Big(\frac{9M^2}{\Xi^2} + a^2\Big)^{-2} \Big)\omega_-^2 l^2  -    \omega_-^2 \nonumber  \\
			& = 3 \frac{M^2}{\Xi} \Big(\frac{9M^2}{\Xi^2} + a^2\Big)^{-2}  \omega_-^2 l^2  > 0.
		\end{align}
	Thus, for $\mu < \omega_-^2 (l^2-a^2) +2a\omega_- \Xi$ and $\omega_-^2 (l^2-a^2) +2a\omega_- \Xi-\mu$ sufficiently small, we have \begin{align}\Big(l^{-2}  + 3 \frac{M^2}{\Xi} \Big(\frac{9M^2}{\Xi^2} + a^2\Big)^{-2} \Big) \left(\mu + \omega_-^2 a^2 - 2  a \omega_- \Xi \right) -    \omega_-^2 >0. \end{align}
		Now, note that for $\mu = \lambda_{m\ell}(a\omega_- m) m^{-2} = \omega_-^2 (l^2-a^2) +2a\omega_- \Xi$, we have that $V_{\textup{main}}\to 0$ as $r\to\infty$. Thus, the largest root $r_2$ satisfies $r_2(\mu,\mathfrak p) \to +\infty$ as $\mu \to  \omega_-^2 (l^2-a^2) +2a\omega_- \Xi$ from below.
		Hence, the claim follows from the fact that to highest order in $r$, the last condition in \eqref{eq:defnsp} is fulfilled. More precisely, \begin{align} \nonumber 
	\lim_{r\to\infty} r^4 \left(	\Gamma \Big(\Delta^{-1}\Big)(\mu + a^2 \omega_-^2 - 2 a \omega_- \Xi) - \Gamma\Big(  \frac{(r^2+a^2)^2}{\Delta^2}(\omega_- - \omega_r)^2\Big) \right) = -  l^4  \Gamma(\omega_-^2) = \frac{-2   \omega_-}{a} l^4 <0 \end{align}
where we used $\Gamma(a \omega_-)=1$ and $\Gamma(a)=0$. 	
\end{proof}
\end{lemma}
From \eqref{eq:defnsp} it also follows that $\mu_0$ and $\mu_1$ are manifestly continuous functions on $\mathscr{P}$. Hence, $E$ is a  (topological) fiber bundle. Now, also note that $E$ is trivial with global trivialization $\varphi_E\colon E\to \mathscr P \times (0,1),  (\mathfrak p,\mu) \mapsto (\mathfrak p,\frac{\mu}{\mu_1 - \mu_0} - \frac{\mu_0}{\mu_1 - \mu_0})$ and we find (using this trivialization) two global sections
 \begin{align}\label{eq:defnsigma12} \sigma_1 \in \Gamma(E)  \text{ and } \sigma_2 \in \Gamma(E)  \text { with }  \sigma_1(\mathfrak p) <\sigma_2(\mathfrak p) 
 \end{align}  for all $\mathfrak p \in \mathscr{P}$ (in mild abuse of notation). 
For definiteness, we take 
\begin{align}\sigma_1(\mathfrak p) := \varphi_E^{-1}\big(\mathfrak p,\frac 12\big)\text{ and }\sigma_2(p):= \varphi_E^{-1}\big(\mathfrak p, \frac 34\big).
\end{align}
Having constructed $\sigma_1$ and $\sigma_2$, we will now show the existence of exactly two turning points $r_1 < r_2$ of $V_{\textup{main}}$.
\begin{lemma} \label{prop:trappingexist} Let $m^{-2}\lambda_{m\ell}(a \omega_- m) \in (\sigma_1, \sigma_2)$ as chosen in \eqref{eq:defnsigma12}. Then, $V_{\textup{main}}$ 
	has a maximum $r_\textup{max}\in (r_+,\infty)$, and two roots $r_1, r_2$ with $r_+ < r_1 < r_\textup{max} < r_2 <\infty$ such that
	\begin{align}
	\label{eq:vw11}	&V_{\textup{main}} > 0 \text{ for } r \in ( r_1 , r_2),\\ \label{eq:vw21}
	&V_{\textup{main}} < 0 \text{ for } r\in [r_+,\infty) \setminus [r_1,r_2]
	\end{align}
	and $r_2 - r_1 \gtrsim  1$.
	\begin{proof}
		By construction of $\sigma_1$ and $\sigma_2$, for any $m^{-2}\lambda_{m\ell}\in (\sigma_1, \sigma_2 )$, the potential $V_{\textup{main}}$ has a maximum and satisfies \begin{align}\lim_{r\to\infty} V_{\textup{main}} <0, V_{\textup{main}}(r=r_+) <0 \text{ and }V_{\textup{main}} (r = r_{\mathrm{cut}}) >0,\end{align}
		 where $r_{\mathrm{cut}} = \frac{3M}{\Xi}$. See also \cite[Lemma~3.1]{quasimodes}.
		
		We will show now that $V_{\textup{main}}$ has exactly two roots in $[r_+,\infty)$ from which \eqref{eq:vw11} and \eqref{eq:vw21} follow. Indeed, in view of the above, $V_{\textup{main}}$ either has two or four roots in $[r_+,\infty)$. To exclude the case of four roots, it suffices to exclude the case of three critical points in $[r_+,\infty)$. To see this, note that
		\begin{align} \nonumber
		\frac{\d V_{\textup{main}}}{\d r } = &\frac{( - 2 \Xi r^3+6 M r^2 - 2 \Xi a^2 r - 2 M a^2 ) m^{-2} (\lambda_{m\ell}(a\omega_- m)  + a^2\omega_-^2 - 2 a \omega_- \Xi )  }{(r^2+a^2)^3}\\
		& + \frac{ 4 a r\Xi (  \omega_r - \omega_- ) (r^2 + a^2) }{(r^2+a^2)^3}
		\end{align}
		has at most three real roots, one of which is in $[r_+,\infty)$ in view of the construction above. Indeed, one other root has to lie in $(-\infty,r_-]$ as \begin{align}\lim_{r\to-\infty} \frac{\d V_{\textup{main}}}{\d r }  > 0\text{ and } \frac{\d V_{\textup{main}}}{\d r }(r=r_-) = \frac{\partial_r \Delta (r_-)}{(r_-^2 + a^2)^2}  < 0.\end{align} Thus, $V_{\textup{main}}$ has at least one and at most two critical points in $[r_+,\infty)$ from which we deduce \eqref{eq:vw11} and \eqref{eq:vw21}.
	\end{proof}
\end{lemma}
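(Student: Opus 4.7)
My plan is to read off the existence and count of the roots of $V_{\textup{main}}$ from three pieces of information: the signs of $V_{\textup{main}}$ at the endpoints and at the cut-off radius $r_{\textup{cut}} = 3M/\Xi$, the count of real critical points of $V_{\textup{main}}$, and a compactness argument for the quantitative gap.

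Step 1: sign information. By the definition of $E_{\mathfrak{p}}$ in \eqref{eq:defnsp} and the fact that $\mu = m^{-2}\lambda_{m\ell}(a\omega_- m) \in (\sigma_1,\sigma_2) \subset E_{\mathfrak{p}}$, the condition that $\mu<\omega_-^2(l^2-a^2)+2a\omega_-\Xi$ gives $\lim_{r\to\infty} V_{\textup{main}} <0$, while the second condition ensures $V_{\textup{main}}(r_{\textup{cut}})>0$. Together with the explicit value $V_{\textup{main}}(r_+) = -(\omega_--\omega_+)^2 <0$ (using $\Delta(r_+)=0$ and $\omega_{r_+}=\omega_+$), continuity guarantees the existence of at least one local maximum $r_{\textup{max}} \in (r_+,\infty)$ and at least two roots $r_+ < r_1 < r_{\textup{cut}} < r_2 < \infty$, with $V_{\textup{main}}>0$ on $(r_1,r_2)$ near this pair.

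Step 2: exactly two roots in $[r_+,\infty)$. The delicate point is to show $V_{\textup{main}}$ does not acquire a second local maximum creating two further roots. I would compute
\begin{align}
\frac{\d V_{\textup{main}}}{\d r} = \frac{-2(\Xi r^3 - 3Mr^2 + \Xi a^2 r + M a^2)\,(\mu + a^2\omega_-^2 - 2a\omega_-\Xi)}{(r^2+a^2)^3} + \frac{4a\Xi r(\omega_r-\omega_-)(r^2+a^2)}{(r^2+a^2)^3},
\end{align}
whose numerator is a real cubic in $r$ and therefore has at most three real roots. Then I would show that at least one of these roots lies in $(-\infty,r_-]$ by checking the opposite signs $\lim_{r\to-\infty}\frac{\d V_{\textup{main}}}{\d r}>0$ (dominant term $r^3$ against a negative coefficient $-2\Xi$) and $\frac{\d V_{\textup{main}}}{\d r}(r_-) = \frac{\partial_r\Delta(r_-)}{(r_-^2+a^2)^2}(\mu + a^2\omega_-^2-2a\omega_-\Xi) < 0$, where I use $\partial_r \Delta(r_-)<0$ (simple root ordering) and positivity of $\mu+a^2\omega_-^2-2a\omega_-\Xi=\mu-\Xi^2+(a\omega_--\Xi)^2>0$ since $\mu>\Xi^2$. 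Hence at most two critical points of $V_{\textup{main}}$ sit in $[r_+,\infty)$, which forces exactly one local maximum and at most three real roots there; the sign data then pin the count to exactly two roots $r_1<r_{\textup{max}}<r_2$, giving \eqref{eq:vw11} and \eqref{eq:vw21}.

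Step 3: the gap $r_2-r_1\gtrsim 1$. Since $\sigma_1,\sigma_2$ are continuous sections of the fibre bundle $E$ and $(\sigma_1(\mathfrak{p}),\sigma_2(\mathfrak{p}))$ has closure compactly contained in the open interval $E_{\mathfrak{p}}$, both turning points $r_1(\mu,\mathfrak{p}), r_2(\mu,\mathfrak{p})$ depend continuously on $(\mu,\mathfrak{p})$ on this compact set (by the implicit function theorem applied at each simple root). Hence $r_2-r_1$ attains a positive minimum bounded below by a constant depending only on $\mathfrak{p}$, which is the content of the $\lesssim$ convention declared in \cref{sec:conventions}.

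The main obstacle I anticipate is the verification that the cubic factor cannot produce three roots in $[r_+,\infty)$: a purely qualitative argument is insufficient because the cubic couples $\mu$ with $\omega_-$ and $\omega_r$ in a mildly competing way. The clean way around it is the sign-at-$r_-$ computation above, which cleanly reserves one of the three cubic roots for $(-\infty,r_-]$ using only $\partial_r\Delta(r_-)<0$ and $\mu>\Xi^2$; everything else then follows from counting.
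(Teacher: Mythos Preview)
Your proposal is correct and follows essentially the same route as the paper: sign information at $r_+$, $r_{\textup{cut}}$, and $r\to\infty$, followed by the observation that $\frac{\d V_{\textup{main}}}{\d r}$ has a cubic numerator with one root forced into $(-\infty,r_-]$ by the sign change there, leaving at most two critical points in $[r_+,\infty)$. Your treatment is in fact slightly more careful than the paper's in two places: you correctly retain the positive factor $(\mu+a^2\omega_-^2-2a\omega_-\Xi)$ in the evaluation of $\frac{\d V_{\textup{main}}}{\d r}(r_-)$ and justify its sign via $\mu>\Xi^2$, and you supply the compactness argument for $r_2-r_1\gtrsim 1$, which the paper leaves implicit.
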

\subsection{Fundamental pairs of solutions associated to trapping}\label{sec:fundamentalsolutions}
We will now define various solutions to the radial o.d.e.\ associated to the boundary and to the turning points. Note that the turning points define the transition from the trapping region to the semiclassically forbidden region. 
\subsubsection{Solutions associated to the boundary}
\label{sec:soltoboundary}
We first define the associated solution to the radial o.d.e.\ \eqref{eq:radial} which satisfies the Dirichlet boundary conditions at ${r^\ast} = \frac{\pi}{2} l $.
\begin{definition}\label{def:uinfty} For all frequencies $\omega\in \mathbb R,m \in \mathbb Z,\ell \in \mathbb Z_{\geq |m|}$ we define the solution $u_\infty$ as the unique solution to   \eqref{eq:radial} satisfying
\begin{align}
&u_\infty \left(\frac{\pi}{2} l \right) = 0\\
& u_{\infty}'\left( \frac{\pi}{2}l \right) =  1,
\end{align}
where we recall that $~^\prime = \frac{\d}{\d r^\ast}$. 
\end{definition}
We now define solutions associated to the event horizon $\mathcal H^+$. The limit $r^\ast\to-\infty$ of the radial o.d.e.\ \eqref{eq:radial} constitutes a regular singular point. 
In particular, in view of \cite[Theorem 2.2, Chapter~6]{olver2014asymptotics} (set $f = (\omega-\omega_+ m)^2$ and $g= V-\omega^2 - (\omega-\omega_+ m)^2 = O_{\omega,m,\ell}(\Delta)$  as $r^\ast\to-\infty$) we can define the following unique solutions to \eqref{eq:radial}.
 
\begin{definition}\label{defn:uhruhlexterior}
For all frequencies $\omega\in \mathbb R, m \in \mathbb Z,\ell \in \mathbb Z_{\geq |m|}$ we also define $\uhplus $ and $\uhminus$ as the unique solutions to  \eqref{eq:radial} satisfying
\begin{align}
&	u_{\mathcal{H}^+} = e^{-i (\omega  -\omega_+m )  {r^\ast}} + O_{\omega,m,\ell}(\Delta) \text{ as } {r^\ast} \to -\infty,\\
&	u_{\mathcal{H}^-} = e^{i (\omega - \omega_+ m ) {r^\ast}} + O_{\omega,m,\ell}(\Delta) \text{ as } {r^\ast} \to -\infty.
\end{align}
\end{definition} 
\begin{rmk}\label{rmk:volterra1}
Equivalently, we can define $u_{\mathcal{H}^+} $ (and similarly $u_{\mathcal{H}^+}$) as the unique solution to the Volterra  integral equation
\begin{align}\label{eq:defnofuhrthroughvolterra0}
	u_{\mathcal{H}^+} (r^\ast) = e^{-i (\omega  -\omega_+m )  {r^\ast}} + \int_{-\infty}^{r^\ast} \frac{\sin((\omega-\omega_+m)(r^\ast-y))}{\omega-\omega_+m} (V(y) - \omega^2 +   ( \omega-\omega_+ m)^2)	u_{\mathcal{H}^+} (y) \d y,
\end{align}
where $V$ is as in \eqref{eq:defnofvintermsofv0v1} and we note that $V(y) - \omega^2 +   ( \omega-\omega_+ m)^2= O_{\omega,m,\ell} (\Delta)$ as $r^\ast \to -\infty$. Existence and uniqueness is standard, see e.g.\
 \cite[Theorem 10.1, Chapter 6]{olver2014asymptotics}. Note also that the fact that 	$u_{\mathcal{H}^+}$ defined by \eqref{eq:defnofuhrthroughvolterra0} indeed solves the o.d.e.\ can be checked explicitly. 
\end{rmk}

\subsubsection{Solutions associated to turning points at interior scattering poles}\label{sec:solassocittoturningpoints}
For the solutions associated to the turning points we only consider the radial o.d.e.\ for $\omega= \omega_- m$ which we recall from \eqref{eq:radialomega} as 
 \begin{align}  
	-u'' + (m^2 V_{\textup{main}} + V_{1})u =0.
\end{align} 
Throughout \cref{sec:solassocittoturningpoints} we assume that $\sigma_1 \leq \lambda_{m\ell}m^{-2} \leq \sigma_2$ and in view of  \cref{prop:trappingexist} we denote the  turning points of $V_{\textup{main}}$ with $r_1^\ast := r^\ast (r_1) <  r^\ast(r_2)=: r_2^\ast $, where we recall that they do not coalesce for $\sigma_1 \leq \lambda_{m\ell}m^{-2} \leq \sigma_2$.
We will now define solutions associated to the turning points $r_1^\ast$ and $r_2^\ast$ as illustrated in \cref{fig:turningpoints}. We will closely follow \cite[Chapter~11, \S3]{olver2014asymptotics}, see also \cite{turning} for the original publication. 
 \begin{figure}[!h]
	\centering
%	\includesvg{plot_potential_stable_trapping}
	%% Creator: Inkscape inkscape 0.92.5, www.inkscape.org
%% PDF/EPS/PS + LaTeX output extension by Johan Engelen, 2010
%% Accompanies image file '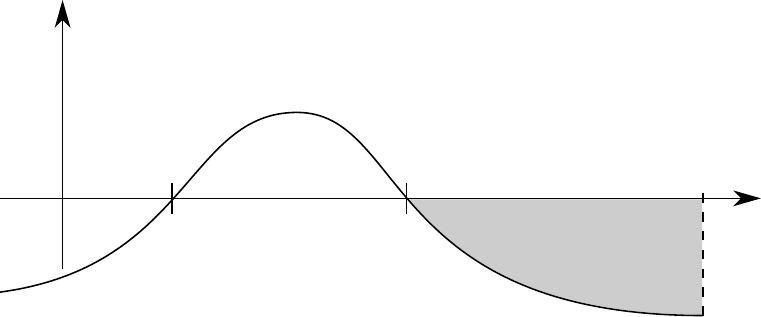' (pdf, eps, ps)
%%
%% To include the image in your LaTeX document, write
%%   \input{<filename>.pdf_tex}
%%  instead of
%%   \includegraphics{<filename>.pdf}
%% To scale the image, write
%%   \def\svgwidth{<desired width>}
%%   \input{<filename>.pdf_tex}
%%  instead of
%%   \includegraphics[width=<desired width>]{<filename>.pdf}
%%
%% Images with a different path to the parent latex file can
%% be accessed with the `import' package (which may need to be
%% installed) using
%%   \usepackage{import}
%% in the preamble, and then including the image with
%%   \import{<path to file>}{<filename>.pdf_tex}
%% Alternatively, one can specify
%%   \graphicspath{{<path to file>/}}
%% 
%% For more information, please see info/svg-inkscape on CTAN:
%%   http://tug.ctan.org/tex-archive/info/svg-inkscape
%%
\begingroup%
  \makeatletter%
  \providecommand\color[2][]{%
    \errmessage{(Inkscape) Color is used for the text in Inkscape, but the package 'color.sty' is not loaded}%
    \renewcommand\color[2][]{}%
  }%
  \providecommand\transparent[1]{%
    \errmessage{(Inkscape) Transparency is used (non-zero) for the text in Inkscape, but the package 'transparent.sty' is not loaded}%
    \renewcommand\transparent[1]{}%
  }%
  \providecommand\rotatebox[2]{#2}%
  \newcommand*\fsize{\dimexpr\f@size pt\relax}%
  \newcommand*\lineheight[1]{\fontsize{\fsize}{#1\fsize}\selectfont}%
  \ifx\svgwidth\undefined%
    \setlength{\unitlength}{365.25610054bp}%
    \ifx\svgscale\undefined%
      \relax%
    \else%
      \setlength{\unitlength}{\unitlength * \real{\svgscale}}%
    \fi%
  \else%
    \setlength{\unitlength}{\svgwidth}%
  \fi%
  \global\let\svgwidth\undefined%
  \global\let\svgscale\undefined%
  \makeatother%
  \begin{picture}(1,0.41569001)%
    \lineheight{1}%
    \setlength\tabcolsep{0pt}%
    \put(0,0){\includegraphics[width=\unitlength,page=1]{plot_potential_stable_trapping.pdf}}%
    \put(0.5169779,0.10998776){\color[rgb]{0,0,0}\makebox(0,0)[lt]{\lineheight{1.25}\smash{\begin{tabular}[t]{l}$r_2^\ast$\end{tabular}}}}%
    \put(0.20895997,0.11012886){\color[rgb]{0,0,0}\makebox(0,0)[lt]{\lineheight{1.25}\smash{\begin{tabular}[t]{l}$r_1^\ast$\end{tabular}}}}%
    \put(0.27959915,0.28973286){\color[rgb]{0,0,0}\makebox(0,0)[lt]{\lineheight{1.25}\smash{\begin{tabular}[t]{l}$V_{\textup{main}}(r^\ast,\omega,m,\lambda_{m\ell}(a \omega))$\end{tabular}}}}%
    \put(0.91151512,0.18038846){\color[rgb]{0,0,0}\makebox(0,0)[lt]{\lineheight{1.25}\smash{\begin{tabular}[t]{l}$\frac \pi 2l$\end{tabular}}}}%
  \end{picture}%
\endgroup%

	\caption{Rough shape of the potential $V_{\textup{main}}$ and the turning points $r_1^\ast$ and $r_2^\ast$.}
	\label{fig:turningpoints}
\end{figure}
 We begin by defining new variables $\xi_1$ and $\xi_2$  in the following which will by construction solve 
\begin{align}
\xi_1 \left( \frac{\d \xi_1 }{\d r^\ast }\right)^2 = V_{\textup{main}}, \;\;\; \xi_2 \left( \frac{\d \xi_2 }{\d r^\ast }\right)^2 = V_{\textup{main}}.
\end{align}
 
\begin{definition}\label{eq:defnofxi1andxi2}
For some fixed  $\epsilon=\epsilon(\mathfrak p)>0$ sufficiently small depending smoothly on $\mathfrak p$, we define 
\begin{align} \label{eq:defnxi1}
	&\xi_1 ({r^\ast}, m):= \begin{cases}
	- \left( \frac 32 \int^{{r^\ast_1}}_{{r^\ast}} (- V_{\textup{main}})^{\frac 12} \d y \right)^{\frac 23} & {r^\ast} \in (-\infty ,{r^\ast_1}) \\
	\left( \frac 32 \int_{{r^\ast_1}}^{{r^\ast}}  V_{\textup{main}}^{\frac 12 } \d y \right)^{\frac 23} & {r^\ast} \in  [{r^\ast_1}, {r^\ast_2} - \epsilon ]
	\end{cases}\\
	&\xi_2 ({r^\ast}, m):= \begin{cases}
	\left( \frac 32 \int^{{r^\ast_2}}_{{r^\ast}}  V_{\textup{main}}^{\frac 12} \d y \right)^{\frac 23} & {r^\ast} \in [{r^\ast_1} + \epsilon, {r^\ast_2}) \\
	-\left( \frac 32 \int_{{r^\ast_2}}^{{r^\ast}} (- V_{\textup{main}})^{\frac 12 } \d y \right)^{\frac 23} & {r^\ast}  \in [{r^\ast_2} , \frac{\pi}{2} l]
	\end{cases}\\
	&	\hat f_1:= \frac{V_{\textup{main}}}{\xi_1} = \left(\frac{\d \xi_1 }{\d r^\ast}\right)^2\text{ for } {r^\ast} \in (-\infty, r_{2}^{\ast} - \epsilon],\\
	& \hat f_2:= \frac{V_{\textup{main}}}{\xi_2} = \left(\frac{\d \xi_1 }{\d r^\ast}\right)^2 \text{ for }{r^\ast} \in [r_{1}^\ast + \epsilon, \frac{\pi}{2}l], \label{eq:defnofxi2withf2}
\end{align}
where we note that $r^\ast \mapsto \xi_1$ is monotonically increasing and $r^\ast \mapsto \xi_2$ is monotonically decreasing. In particular, we  choose $\epsilon(\mathfrak p)>0$ sufficiently small  to ensure that 
\begin{align} \frac{\xi_2 (r_1^\ast+2\epsilon)}{\xi_1(r_1^\ast+2 \epsilon)}  \geq 3. \label{eq:smallnessofepsilon}
	\end{align} 
\end{definition}
As in \cite[Chapter~11, \S3]{olver2014asymptotics} we note that with the above definitions the new unknowns $W_1:= \left( \frac{\d \xi_1}{\d r^\ast} \right)^{\frac 12} u $ and  $W_2:= \left( - \frac{\d \xi_2}{\d r^\ast} \right)^{\frac 12} u $ respectively solve 
\begin{align}\label{eq:transformedformofode}
\frac{\d^2 W_1}{\d \xi_1^2} = (m^2 \xi_1 + \Psi_1 )W_1
\end{align} 
\begin{align}\label{eq:transformedformofode2}
	\frac{\d^2 W_2}{\d \xi_2^2} = (m^2 \xi_2 + \Psi_2)W_2
\end{align} 
for the error functions
\begin{align}\label{eq:defnofpsi1intermsoff1}
\Psi_1 = \frac{1}{\hat f_1^{\frac 14}} \frac{\d^2 \hat f_1^{\frac 14}}{\d \xi_1^2} + \frac{V_1}{\hat f_1} = -  \frac{1}{ {\hat f_1}^{\frac 34}} \frac{\d^2 }{\d {r^\ast}^2} \left( \frac{1}{ {\hat f_1}^{\frac 14}} \right) + \frac{V_1}{\hat f_1}
\end{align}
and
\begin{align}
	\Psi_2 = \frac{1}{\hat f_2^{\frac 14}} \frac{\d^2 \hat f_2^{\frac 14}}{\d \xi_2^2} + \frac{V_1}{\hat f_2} = -  \frac{1}{ {\hat f_2}^{\frac 34}} \frac{\d^2 }{\d {r^\ast}^2} \left( \frac{1}{ {\hat f_2}^{\frac 14}} \right) + \frac{V_1}{\hat f_2}.
\end{align}
As in \cite[Chapter~11, \S3]{olver2014asymptotics} this can be equivalently written as 
\begin{align} \label{eq:psi1alternative}
\Psi_1   = \frac{5}{16 \xi_1^2} + \left[ 4 V_{\textup{main}} \frac{\d^2 }{\d {r^\ast}^2} V_{\textup{main}}  - 5\left(   \frac{\d  }{\d {r^\ast} } V_{\textup{main}}  \right)^2   \right]  \frac{\xi_1}{16 V_{\textup{main}}^3} + \frac{\xi_1 V_1 }{V_{\textup{main}} }
\end{align}
and 
 \begin{align} \label{eq:psi2alternative}
	\Psi_2   = \frac{5}{16 \xi_2^2} + \left[ 4 V_{\textup{main}} \frac{\d^2 }{\d {r^\ast}^2} V_{\textup{main}}  - 5\left(   \frac{\d  }{\d {r^\ast} } V_{\textup{main}}  \right)^2   \right]  \frac{\xi_2}{16 V_{\textup{main}}^3} + \frac{\xi_2 V_1 }{V_{\textup{main}} }.
\end{align}

\begin{lemma}\label{lem:smoothnessandsmoothdependceofpsi1psi2}
The functions $(-\infty, r_2^\ast-\epsilon]\ni r^\ast \mapsto \xi_1/(r^\ast-r_1^\ast), [r_1^\ast+\epsilon, \frac{\pi l }{2}] \ni r^\ast \mapsto \xi_2/(r_2^\ast - r^\ast)$ are smooth positive functions in their respective domains. Moreover, $(-\infty, r_2^\ast-\epsilon]\ni r^\ast \mapsto \xi_1(r^\ast)$ (resp.\ $ [r_1^\ast+\epsilon, \frac{\pi l }{2}] \ni r^\ast \mapsto  \xi_2(r^\ast)$) is a strictly increasing (resp.\ decreasing) smooth function. 

The error control functions $(-\infty, r_2^\ast-\epsilon]\ni r^\ast \mapsto  \Psi_1(r^\ast)$ and $ [r_1^\ast+\epsilon, \frac{\pi l }{2}] \ni r^\ast \mapsto \Psi_2(r^\ast)$ are also smooth functions. Moreover, $\xi_1,\xi_2,\Psi_1,\Psi_2$ depend smoothly on $(r^\ast, \mathfrak p)$. 
\end{lemma}
\begin{proof}
Mutatis mutandis, this follows from \cite[Chapter 11, \S3, Lemma 3.1]{olver2014asymptotics}. For convenience of the reader we will   give a proof for the case of $\xi_1$ and $\Psi_1$ following the proof of \cite[Chapter 11, \S3, Lemma 3.1]{olver2014asymptotics}. 
We first write $p(r^\ast):= \frac{V_{\textup{main}}}{r^\ast - r^\ast_1}$ which is smooth as  $V_{\textup{main}}$ has a simple root at $r^\ast_1$. We also set $q(r^\ast):= (r^\ast - r^\ast_1)^{-\frac 32} \int_{r^\ast_1}^{r^\ast} (y - r^\ast_1)^{\frac 12	} p(y) \d y$ for $r^\ast > r_1^\ast$ such that by construction, $\frac{\xi_1 (r^\ast )}{r^\ast-r^\ast_1} = (\frac 32 q(r^\ast))^{\frac 23}$.  Moreover, $q$ is also smooth with $n$-th derivative $q^{(n)} (r^\ast) = (r^\ast-r_1^\ast)^{\frac{ 2n +1}{2}} \int_{r^\ast_1}^{r^\ast} (y-r^\ast_1)^{\frac{2n-1}{2}} p^{(n)} (y)\d y $ which follows from integrating by parts. By the mean value theorem we have that for every $n$, the derivative $q^{(n)}(r^\ast)$ has a right limit $r^\ast \to r^\ast_1$ and in particular, $\lim_{r^\ast\to r_1^\ast} q(r^\ast) = \frac 23 p(r^\ast_1) \neq 0$, where the last property follows from the fact that $V_{\textup{main}}$ has a simple root at $r_1^\ast$. Thus, $\frac{\xi_1 (r^\ast )}{r^\ast-r^\ast_1} = (\frac 32 q(r^\ast))^{\frac 23}$ is positive and extends smoothly across  $r^\ast= r_1^\ast$. Arguing completely analogously for the region $r^\ast < r^\ast_1$ and noting that by construction the left and right derivatives agree at $r^\ast=r^\ast_1$, we obtain that $\frac{\xi_1 (r^\ast )}{r^\ast-r^\ast_1}$ is a smooth positive function on $(-\infty, r_2^\ast-\epsilon]$. 

Moreover, as $r_1^\ast$ and $V_{\textup{main}}$ depend smoothly on $(r^\ast,\mathfrak p)$, we also have that $\xi_1$ depends smoothly on $(r^\ast,\mathfrak p)$.
Now, we note that $\hat f_1 (r^\ast)= p(r^\ast)^2 ( \frac{3}{2}q(r^\ast))^{- \frac 23}$ such that $\hat f_1$ is a smooth positive function of $r^\ast$ and $\hat f_1$ also depends smoothly on $\mathfrak p$ in view of the above properties  shown about $p$ and $q$.  In view of \eqref{eq:defnofpsi1intermsoff1}, this finally shows that $\Psi_1$ is a smooth function of $r^\ast$ (and of $\xi_1$) and also depends smoothly on $\mathfrak p$.  The claim about $\xi_2$ and $\Psi_2$ follows completely analogously. 
\end{proof}

\begin{definition}
	We now define the error control functions 
	\begin{align}
	H_1({r^\ast}) &:= \int_{{r^\ast_1}}^{{r^\ast}} \left\{ \frac{1}{|V_{\textup{main}}|^{\frac 14}} \frac{\d^2}{\d {r^\ast}^2} \left(\frac{1}{|V_{\textup{main}} |^{\frac 14} }\right) - \frac{V_{1}}{|V_{\textup{main}}|^{\frac 12}} - \frac{5 |V_{\textup{main}}|^{\frac 12}}{16 |\xi_{1}|^3} \right\} \d y ,\\
	H_2({r^\ast}) &:= \int^{{r^\ast_2}}_{{r^\ast}} \left\{ \frac{1}{|V_{\textup{main}}|^{\frac 14}} \frac{\d^2}{\d {r^\ast}^2} \left(\frac{1}{|V_{\textup{main}} |^{\frac 14} }\right) - \frac{V_{1}}{|V_{\textup{main}}|^{\frac 12}} - \frac{5 |V_{\textup{main}}|^{\frac 12}}{16 |\xi_{2}|^3} \right\}\d y
	\end{align}
which are equivalently characterized (using \eqref{eq:psi1alternative}, see also \cite[Chapter~11, \S3.3, equ.\ (3.07), (3.08)]{olver2014asymptotics}) as 
\begin{align}
	&H_1(r^\ast) = - \int_0^{\xi_1(r^\ast)} |v|^{-\frac 12} \Psi_1(v) \d v\\
	& H_2(r^\ast) = - \int_0^{\xi_2(r^\ast)} |v|^{-\frac 12} \Psi_2(v) \d v.
\end{align}  
\end{definition}

\begin{lemma}\label{prop:properrorairy}
The error control functions $H_1$ and $H_2$ satisfy
	\begin{align}
	&\mathcal{V}_{{-\infty} , r_2^\ast - \epsilon }\left( H_1 \right) \lesssim  1\label{eq:hbound1}\\
	&\mathcal{V}_{r_1^\ast + \epsilon, l\frac \pi 2} \left( H_2 \right) \lesssim    1.\label{eq:hbound2}
	\end{align}
		\begin{proof}
			We begin with $H_2$. We have that $\xi_2( [r_1^\ast+\epsilon, \frac{\pi}{2}l] )$ is compact as $\xi_2$ is continuous. Moreover, since $\xi_2 \mapsto \Psi_2 $ is continuous we have $\sup_{\xi_2( [r_1^\ast+\epsilon, \frac{\pi}{2}l] )} |\Psi_2| \lesssim 1 $. As  $|v|^{-\frac 12} \in L^1_{loc}$ we obtain \eqref{eq:hbound2} as 
			\begin{align}\label{eq:boundonh2}
\mathcal{V}_{r_1^\ast + \epsilon, l\frac \pi 2} \left( H_2 \right)  \leq  \int_{\xi_2(\frac{\pi l}{2})}^{\xi_2(r_1^\ast+\epsilon)} |v|^{-\frac 12} |\Psi_2(v) | \d v \lesssim 1. 
			\end{align}

			For $H_1$, we have to deal with the unbounded region $r^\ast \in (-\infty, r_2^\ast - \epsilon]$. We 	decompose \begin{align}\mathcal{V}_{{-\infty} , r_2^\ast- \epsilon }\left( H_1 \right) = \mathcal{V}_{{-\infty} , r_1^\ast - 1 }\left( H_1 \right) + \mathcal{V}_{ r_1^\ast - 1 , r_2^\ast - \epsilon }\left( H_1 \right).  \end{align}
			Completely analogous to the   proof of the bound on $H_2$ we have 
			\begin{align} \mathcal{V}_{ r_1^\ast - 1 , r_2^\ast - \epsilon }\left( H_1 \right) \lesssim 1.
			\end{align} 
			For the term $\mathcal{V}_{{-\infty} , r_1^\ast - 1 }\left( H_1 \right)$  we remark  that
			 \begin{align}  -V_{\textup{main}}\sim 1 ,\;\;   |V_{\textup{main}}'| , |V_{\textup{main}}''| \lesssim  e^{2\kappa_+ {r^\ast}} \text{ and }|V_{1}|\lesssim  e^{2 \kappa_+ {r^\ast}} 
			 \end{align}
			for $r^\ast \in (-\infty, r_1^\ast - 1).$
			 Using the lower bound $-V_{\textup{main}}$, we infer from \eqref{eq:defnxi1} that  \begin{align}- \xi_1(r^\ast) \gtrsim (-r^\ast)^{\frac 23}\end{align} 
			 for $r^\ast \in (-\infty, r_1^\ast - 1)$. Hence, 
			\begin{align} \nonumber
			\mathcal{V}_{{-\infty} , r_1^\ast - 1 }\left( H_1 \right) & \lesssim  \int_{-\infty}^{r_1^\ast - 1}\left| \frac{V''_{\textup{main}}}{|V_{\textup{main}}|^{\frac 32}}\right| + \left| \frac{V'^2_{\textup{main}}}{|V_{\textup{main}}|^{\frac 52}} \right| + \left| \frac{V_1}{|V_{\textup{main}}|^{\frac 12}}\right| + \frac{|V_{\textup{main}}|^{\frac 12}}{|\xi_1|^3} \d r^\ast \\
			& \lesssim  \int_{-\infty}^{r_1^\ast - 1} e^{2 \kappa_+ r^\ast} + \frac{1}{|r^\ast|^2} \d r^\ast \lesssim  1.
			\end{align} 
		\end{proof}
\end{lemma}
	
With the bounds in \cref{prop:properrorairy} in hand, we apply  \cite[Chapter~11, Theorem~3.1]{olver} which allow us to define the following. 
\begin{prop} \label{prop:defnsofuaianduabi}
	We define solutions to the radial o.d.e.\ \eqref{eq:radial} for $\omega = \omega_- m$ as
	\begin{align}\label{eq:defai1}
	&u_{\mathrm{Ai}1} ({r^\ast},m) := \hat f_1^{\frac 14}({r_1^\ast}) \hat f_1^{- \frac 14}({r^\ast}) \left\{ \mathrm{Ai}(m^{\frac 23} \xi_1) + \epsilon_{\mathrm{Ai}1}(m,{r^\ast} )\right\} \text{ for } {r^\ast} \in (-\infty, r_2^\ast - \epsilon],\\  \label{eq:defbi1}
	&	u_{\mathrm{Bi}1} ({r^\ast},m) :=  \hat f_1^{\frac 14}({r_1^\ast}) \hat f_1^{- \frac 14}({r^\ast}) \left\{ \mathrm{Bi}(m^{\frac 23} \xi_1) + \epsilon_{\mathrm{Bi}1}(m,{r^\ast} )\right\} \text{ for } {r^\ast} \in (-\infty, r_2^\ast - \epsilon],\\
	&u_{\mathrm{Ai}2} ({r^\ast},m) := \hat f_2^{\frac 14}({r_2^\ast}) \hat f_2^{- \frac 14}({r^\ast}) \left\{ \mathrm{Ai}(m^{\frac 23} \xi_2) + \epsilon_{\mathrm{Ai}2}(m,{r^\ast} )\right\} \text{ for }{r^\ast} \in [r_1^\ast + \epsilon, \frac{\pi}{2}l], \label{eq:uai2} \\ 
	&	u_{\mathrm{Bi}2} ({r^\ast},m) := \hat   f_2^{\frac 14}({r_2^\ast}) \hat f_2^{- \frac 14}({r^\ast}) \left\{ \mathrm{Bi}(m^{\frac 23} \xi_2) + \epsilon_{\mathrm{Bi}2}(m,{r^\ast} )\right\} \text{ for }{r^\ast} \in [r_1^\ast + \epsilon, \frac{\pi}{2}l]. 
	\end{align}
	Moreover,
 \begin{align}
	&|\epsilon_{\mathrm{Ai}1}|\lesssim M_\Ai(m^{\frac 23} \xi_1) E_\Ai^{-1}(m^{\frac 23} \xi_1) m^{-1}, \label{eq:eps1ai}\\
		&|\epsilon'_{\mathrm{Ai}1}| \lesssim  \hat f_1^{\frac 12} N_\Ai(m^{\frac 23} \xi_1) E_\Ai^{-1}(m^{\frac 23} \xi_1) m^{-\frac 13}, \label{eq:eps1ai'}\\
	&|\epsilon_{\mathrm{Bi}1}|\lesssim_\epsilon
	 M_\Ai(m^{\frac 23} \xi_1) E_\Ai(m^{\frac 23} \xi_1) m^{-1},\label{eq:eps1bi}\\
	 &|\epsilon'_{\mathrm{Bi}1}| \lesssim  \hat f_1^{\frac 12} N_\Ai(m^{\frac 23} \xi_1) E_\Ai(m^{\frac 23} \xi_1) m^{-\frac 13}, \label{eq:eps1bi'}\\
	&|\epsilon_{\mathrm{Ai}2}|\lesssim   M_\Ai(m^{\frac 23} \xi_2) E_\Ai^{-1}(m^{\frac 23} \xi_2) m^{-1},\label{eq:eps2ai}\\
		&|\epsilon'_{\mathrm{Ai}2}| \lesssim  \hat f_2^{\frac 12} N_\Ai(m^{\frac 23} \xi_2) E^{-1}_\Ai(m^{\frac 23} \xi_2) m^{-\frac 13}, \label{eq:eps2ai'}\\
	&|\epsilon_{\mathrm{Bi}2}|\lesssim  M_\Ai(m^{\frac 23} \xi_2) E_\Ai(m^{\frac 23} \xi_2) m^{-1},\label{eq:eps2bi} \\
	&|\epsilon'_{\mathrm{Bi}2}| \lesssim \hat f_2^{\frac 12} N_\Ai(m^{\frac 23} \xi_2) E_\Ai(m^{\frac 23} \xi_2) m^{-\frac 13}, \label{eq:eps2bi'}
	\end{align}
	where \eqref{eq:eps1ai}--\eqref{eq:eps1bi'} hold uniformly in $r^\ast \in (-\infty, {r^\ast_2} - \epsilon]$ and \eqref{eq:eps2ai}--\eqref{eq:eps2bi'} hold uniformly in $r^\ast \in [{r^\ast_1}+\epsilon, \frac{\pi}{2}l]$.
Further, we choose the initialization such that
	\begin{align}
	&	|\epsilon_{\mathrm{Ai}2}(r^\ast)|\lesssim
	M_\Ai(m^{\frac 23} \xi_2) E_\Ai^{-1}(m^{\frac 23} \xi_2) \left( \exp\left[ 2\mathcal{V}_{r_1^\ast + \epsilon, r^\ast}(H_2)  m^{-1}\right]  -1 \right) 	,	\\&
	|\epsilon_{\mathrm{Bi}2}(r^\ast)|\lesssim M_\Ai(m^{\frac 23} \xi_2) E_\Ai(m^{\frac 23} \xi_2) \left( \exp\left[ 2\mathcal{V}_{r^\ast,l \frac \pi 2}(H_2)  m^{-1}\right]  -1 \right) 		
	\end{align}
	and in particular,
	$|\epsilon_{\mathrm{Ai}2}(\frac \pi 2 l)|\lesssim m^{- \frac{7}{6}}, \epsilon_{\mathrm{Bi}2}(\frac \pi 2 l)=0$.
	\begin{proof}
The proof follows from  \cite[Chapter~11, Theorem 3.1]{olver1961error}. For convenience of the reader we will outline the proof for $u_{\mathrm{Ai}2}$. Indeed, with the ansatz of \eqref{eq:uai2} we note that from \eqref{eq:transformedformofode2}, we have $W_2 = \hat f_2^{\frac 14}(r_2^\ast) \{ \Ai (m^{\frac 23} \xi_2) + \epsilon_{\mathrm{Ai}2}(\xi_2)\}$. Thus, from \eqref{eq:transformedformofode2} and variation of parameters (see \cite[Chapter~11, equ.\ (3.12)]{olver2014asymptotics} or \cite{turning}), the error  $\epsilon_{\mathrm{Ai}2}(\xi_2)$ solves
\begin{align}\label{eq:volterraintegralequation}
\epsilon_{\mathrm{Ai}2} (\xi_2) =  - \pi m^{-\frac 23} \int^{\xi_2(r^\ast_1+\epsilon)}_{\xi_2} K(\xi_2, v) \Psi_2(v) \left[ \epsilon_{\mathrm{Ai}2}(v) + \Ai ( m^{\frac 23} v) \right] \d  v,
\end{align}
where $K(\xi_2,v) = \Bi(m^{\frac 23} \xi_2) \Ai(m^{\frac 23} v) - \Bi(m^{\frac 23} v) \Ai(m^{\frac 23} \xi_2)$. We note that as in \cite[Chapter~11, equ.\ (3.12)]{olver2014asymptotics}, the kernel  $K$ satisfies 
\begin{align} \label{eq:estimatesonkernel1}
|K(\xi_2, v) |\leq E_{\Ai}^{-1}( m^{\frac 23}   \xi_2 ) E_{\Ai} (m^{\frac 23} v ) M_{\Ai}(m^{\frac 23} \xi_2) M_{\Ai}(m^{\frac 23  } v )
\end{align}
for $v\geq \xi_2$ and similarly,
\begin{align}\label{eq:estimatesonkernel2}
|\partial_{\xi_2 } K(\xi_2, v) | \leq m^{\frac 23} E_{\Ai}^{-1}(m^{\frac 23} \xi_2) E_{\Ai}(m^{\frac 23} v ) N_{\Ai} (m^{\frac 23} \xi_2) M_{\Ai}(m^{\frac 23} v)
\end{align}
for $v\geq \xi_2$. Now, as in  \cite[Chapter~11, equ.\ (3.12)]{olver2014asymptotics}, the  integral equation \eqref{eq:volterraintegralequation} is solved using \cite[Chapter~6, Theorem~10.2]{olver2014asymptotics} which shows the bounds together with  \cref{prop:properrorairy}. 
	\end{proof}
\end{prop}
\section{The non-Diophantine condition}
\label{sec:pblowup}
With the fundamental solutions from \cref{sec:fundamentalsolutions}, we are now in the position to define the set of black hole parameters $\mathscr P_{\textup{Blow-up}}$. The set $\mathscr P_{\textup{Blow-up}}$   will be defined  in \cref{eq:defnpblowup} as a suitable $\limsup$ set which constitutes a  (generalized) non-Diophantine condition.
\subsection{Definition of the non-Diophantine condition as the set \texorpdfstring{$\mathscr{P}_{\textup{Blow-up}}$  }{Pblowup}} We first define Wronskians of solutions of the radial o.d.e.\ which will play a fundamental role in the estimates. 
\label{sec:defnblow}
\begin{definition}\label{defn:w1w2}
	We define $\mathfrak{W}_1 \colon\mathscr P \times \mathbb Z_m \times \mathbb Z_{\ell \geq |m|} \to \mathbb C$ and $\mathfrak{W}_2 \colon\mathscr P \times \mathbb Z_m \times \mathbb Z_{\ell \geq |m|} \cap \{\sigma_1 \leq \lambda_{m\ell}m^{-2} \leq \sigma_2\} \to \mathbb C$ as\begin{align}
	&	\mathfrak{W}_1 (\mathfrak p,m,\ell):= \mathfrak W [\uhplus,u_\infty](m,\ell,\omega = \omega_- m,\mathfrak  p),\\
		&	\mathfrak{W}_2 (\mathfrak p,m,\ell):= \mathfrak W [u_{\textup{Ai}2},u_\infty](m,\ell,\omega = \omega_- m,\mathfrak p).
	\end{align}
	Note that this is well-defined as the Wronskians $\mathfrak W_1$ and $\mathfrak W_2$    only depend on $\mathscr P$ (by construction). Moreover, by continuous  dependence on parameters of solutions to o.d.e.'s,   the Wronskians $\mathfrak W_1$ and $\mathfrak W_2$   are continuous functions on $\mathscr P$ for fixed $m,\ell$. 
\end{definition}
\begin{rmk}
Note that, as discussed in the introduction, the Wronskian  $\mathfrak W_1$ does not vanish.  Nevertheless, $\mathfrak W_1$ can be very small (as $m,\ell\to \infty$)  which corresponds to  frequency parameters associated to stable trapping. On the other hand,  $\mathfrak W_2$ may vanish and this indeed corresponds to stable trapping. In particular, if $\mathfrak W_2$ vanishes, then the solution $u_\infty$ is a multiple of the $u_{\Ai2}$ which is exponentially damped in the semi-classical forbidden region. In this case, we infer that $\mathfrak W_1$ is exponentially small and indeed, we are in the situation of stable trapping. This would then show that there exists a quasimode with frequency $\omega = \omega_- m$. This intuition leads to the following non-Diophantine condition for the set $\mathscr P_{\textup{Blow-up}}$.
\end{rmk}

\begin{definition}\label{defn:um0}
For $m_0 \in \mathbb{N}$ we define
	\begin{align}
	&	U_{m_0} := \bigcup_{\substack{ m \geq m_0 \\ m \in \mathbb N } } \; \bigcup_{\substack{m \leq \ell \leq m^2\\ \ell \in \mathbb{N} }} U(m,\ell),
	\end{align}
	where 
	\begin{align} \nonumber
		U(m,\ell):= \Big\{  \mathfrak p \in \mathscr{P}\colon & |\mathfrak W_{1}(\mathfrak p,m,\ell)|< e^{-\sqrt{m }}, \sigma_1(\mathfrak p) < \lambda_{m\ell}(a\omega_-(\mathfrak p) m) < \sigma_2(\mathfrak p) ,
		 \\ \nonumber &|\mathfrak{W}_2( \mathfrak p,m,\ell)| < e^{-\ell} e^{-m}, 
		 |\Gamma \mathfrak W_{2} (\mathfrak p,m,\ell)| > 1,\\ &
		|\mathfrak W_2 ( \Phi^\Gamma_\tau (\mathfrak p) ,m,\ell)| > e^{-\ell} e^{-m} \textup{ for all } |\tau| \in [e^{-\ell} e^{-m}, \frac{1}{m^2}] \Big\}.
	\end{align}
 \end{definition}

\begin{definition}\label{eq:defnpblowup}
	We define 
	\begin{align}
	\mathscr{P}_{\textup{Blow-up}}:= \bigcap_{m_0\in  \mathbb N} U_{m_0}.
	\end{align}
\end{definition}
While a priori the set $\mathscr{P}_{\textup{Blow-up}}$ could be empty, we will show in the following that it is  {dense} in $\mathscr{P}$ and  {Baire-generic}, i.e.\ a countable intersection of open and dense sets. 

\subsection{Topological genericity: \texorpdfstring{$\mathscr{P}_{\textup{Blow-up}}$}{Pblowup} is Baire-generic}\label{sec:quasimodes}
We will first show that each $U_{m_0}$ is dense. To do so, we let $m_0$ and $\mathfrak p_0 = (\mathfrak m_0, \mathfrak a_0) \in \mathscr{P}$ be arbitrary and fixed throughout \cref{sec:quasimodes}. Also, let $\mathcal U \subset \mathscr P$  be an open neighborhood of $\mathfrak p_0$.  We will show that there exists an element of $U_{m_0}$ which is contained in $\mathcal U$.  
We now define a curve of parameters through $\mathfrak p_0$ as follows.
	\begin{definition}
	For $\delta=\delta(\mathfrak p_0, \mathcal U)>0$ sufficiently small, we  define the smooth embedded curve $\gamma_{\delta}(\mathfrak p_0) \subset \mathcal U $ through $\mathfrak p_0$ as  \begin{align}\gamma_{\delta}(\mathfrak p_0):= \{ \mathfrak p = (\mathfrak m, \mathfrak a)   \in  \mathscr P \colon \mathfrak a = \mathfrak a_0 , |\vartheta(\mathfrak p) -\vartheta(\mathfrak p_0)|\leq \delta \}.\end{align}
	\end{definition}
Throughout \cref{sec:quasimodes} we will only consider 
\begin{align}
\mathfrak p \in \gamma_{\delta}(\mathfrak p_0).
\end{align}

	We parameterize $\gamma_{\delta}(\mathfrak p_0)$ with $\vartheta  \in (\vartheta_0 - {\delta}, \vartheta_0 + {\delta})$, where $\vartheta_0 = \vartheta(\mathfrak p_0)  $.  
\begin{rmk} Note that the expression $\Xi $ is (by construction) constant on $\gamma_{\delta}(\mathfrak p_0)$.  We also note that  $\gamma_{\delta}(\mathfrak p_0)$ can be seen as a variation of $\mathfrak m$ keeping $\mathfrak a$ fixed. It is however more convention to use the coordinates $(\vartheta, \mathfrak a)$ as they are adapted to the interior scattering pole   in view of $\vartheta = a\omega_-$. 
\end{rmk}
\begin{lemma}\label{lem:gammaoneigv}
	 The angular eigenvalues at the interior scattering poles $\omega = \omega_- m$ satisfy
	\begin{align}
	\Gamma({\lambda_{m\ell}(a\omega_-m)} + a^2\omega_-^2 m^2 - 2 a \omega_- m^2 \Xi ) \in [- 4 m^2, 0].
	\end{align}
\begin{proof}
Note that  \begin{align}	{\lambda_{m\ell}(a\omega_-m)} + a^2\omega_-^2 m^2 - 2 a \omega_- m^2 \Xi \end{align} is an eigenvalue of the operator 
\begin{align}  \nonumber
P_{\omega_-} &	+ a^2\omega_-^2 m^2 - 2 a \omega_- m^2 \Xi 
 \\ & =   - \frac{\d}{\d x} \left(\Delta_x (1-x^2) \frac{\d}{ \d x} \cdot \right)  + \frac{m^2}{\Delta_x}\left( \frac{\Xi}{\sqrt{1-x^2}}-a \omega_- \sqrt{1-x^2} \right)^2 + 2\frac{a^2}{l^2} (1-x^2).
\end{align}
Now, by construction of $\Gamma$ in \eqref{eq:defnofgamma} we have $\Gamma(a\omega_-) = \Gamma(\vartheta )=1$ and $\Gamma(a) =0$ to compute
\begin{align} \nonumber
	\Gamma(P_{\omega_-} + & a^2 \omega_-^2 m^2 - 2 a \omega_- m^2 \Xi ) = -2 \frac{m^2}{\Delta_x} ( \Xi - a \omega_- (1-x^2)) + 2 a \omega_- m^2 - 2 m^2 \Xi \\ \nonumber
&	= - 2m^2 \frac{\Xi}{\Delta_x} \frac{  r_-^2 + a^2 -a^2 + a^2 x^2 - a^2 \Delta_x + (r_-^2 +  a^2) \Delta_x }{r_-^2 + a^2}\\
& =  - 2m^2 \frac{\Xi}{\Delta_x} \left( 1 - \frac{  r_-^2 -a^2 }{r_-^2 + a^2}\Delta_x \right) \in [-4m^2,0]
\end{align}
as $\left|\frac{  r_-^2 -a^2 }{r_-^2 + a^2}\Delta_x \right|\leq 1$ and $0\leq \frac{\Xi}{\Delta_x}\leq 1$. 
\end{proof}
\end{lemma}
\begin{lemma}\label{lem:boundongammapesilon2}
We have $|\Gamma (\epsilon_{\mathrm{Ai}2} (\frac \pi 2 l ) ) | \lesssim m^{- \frac 16}$ where $\epsilon_{\mathrm{Ai}2}$ is defined in \cref{prop:defnsofuaianduabi}. 
\begin{proof}
 First, we consider $\epsilon_{\mathrm{Ai}2}$ as a function of $\xi_2$ defined in \cref{eq:defnofxi1andxi2}.  Then, applying $\Gamma$ to \eqref{eq:volterraintegralequation} shows that  $\Gamma \epsilon_{\mathrm{Ai}2}$ solves 
\begin{align}\nonumber
	\Gamma	\epsilon_{\mathrm{Ai}2} (\xi_2)   = & - \pi m^{-\frac 23} \int^{\xi_2(r^\ast_1+\epsilon)}_{\xi_2} K(\xi_2, v) \Psi_2(v) \Gamma \epsilon_{\mathrm{Ai}2}(v)     \d  v\\
	&-  \pi m^{-\frac 23} \int^{\xi_2(r^\ast_1+\epsilon)}_{\xi_2} K(\xi_2, v) \Gamma (\Psi_2(v) ) \left[  \epsilon_{\mathrm{Ai}2}(v) + \Ai ( m^{\frac 23} v) \right] \d  v \nonumber \\
	& - \Gamma(\xi_2(r_1^\ast+\epsilon)) \pi m^{-\frac 23}  K(\xi_2, \xi_2(r_1^\ast+\epsilon) ) \Psi_2(\xi_2(r_1^\ast+\epsilon) ) \Ai ( m^{\frac 23} \xi_2(r_1^\ast+\epsilon) ).
 \end{align} 
We denote $k (\xi_2) := -   \Gamma(\xi_2(r_1^\ast+\epsilon)) \pi m^{-\frac 23}  K(\xi_2, \xi_2(r_1^\ast+\epsilon) ) \Psi_2(\xi_2(r_1^\ast+\epsilon) ) \Ai ( m^{\frac 23} \xi_2(r_1^\ast+\epsilon) )$ which satisfies 
\begin{align}\label{eq:estimatesonk}
|k (\xi_2)|\lesssim m^{-\frac 23}  M_{\Ai}(m^{\frac 23} \xi_2 ) M^2_{\Ai}(m^{\frac 23} \xi_2(r^\ast_1 + \epsilon) )   E_{\Ai}^{-1}(m^{\frac 23} \xi_2),
\end{align}
where we used that $|\Gamma(\xi_2(r^\ast_1+\epsilon))|\lesssim 1$ and $|\Psi_2(\xi_2(r_1^\ast+\epsilon))|\lesssim 1$. 
Now, we re-write the equation for the unknown $ (\Gamma 	\epsilon_{\mathrm{Ai}2}   -k )(\xi_2) $ as 
\begin{align}\nonumber
&(	\Gamma \epsilon_{\mathrm{Ai}2}  - k) (\xi_2) =  -\pi m^{-\frac 23} \int^{\xi_2(r^\ast_1+\epsilon)}_{\xi_2} K(\xi_2, v)  \Psi_2(v)( \Gamma \epsilon_{\mathrm{Ai}2}   - k)(v)  \d  v\\
& - \pi m^{-\frac 23} \int^{\xi_2(r^\ast_1+\epsilon)}_{\xi_2} K(\xi_2, v)\left[  \Gamma (\Psi_2(v) )   (\epsilon_{\mathrm{Ai}2}(v) + \Ai ( m^{\frac 23} v) ) + \Psi_2(v) k(v)\right] \d  v.
\end{align}

First, from \eqref{lem:smoothnessandsmoothdependceofpsi1psi2} we know that $\Psi_2$ depends smoothly on $\mathfrak p$ and in particular, that  $\xi_2 \mapsto \Gamma \Psi_2(\xi_2)$ is smooth.  Since $|l^2 \Gamma V_{\textup{main}}|\lesssim 1 $ (using \cref{lem:gammaoneigv}) and $|l^2\Gamma V_1|\lesssim 1$, this is made quantitative to obtain $|\Gamma \Psi_2|\lesssim 1$ uniformly in the compact set $\xi_2[r_1^\ast + \epsilon, l \frac \pi 2]$. 

Now, we apply  \cite[Theorem 10.1, Chapter 6]{olver2014asymptotics}  and in the notation of  \cite[Theorem 10.1, Chapter 6]{olver2014asymptotics}  we set
\begin{align}
&\mathrm{K}(\xi_2,v) = \pi m^{-\frac 23} K(\xi_2,v) |v|^{\frac 12},\\	
&P_0(\xi_2) = E^{-1}_{\Ai} (m^{\frac 23}\xi_2)	M_{\Ai}(m^{\frac 23 }\xi_2),\\
&P_1(\xi_2) = m^{\frac 23} E^{-1}_{\Ai} (m^{\frac 23}\xi_2)	N_{\Ai}(m^{\frac 23 }\xi_2),\\
& Q(v) = \pi m^{-\frac 23}  |v|^{\frac 12 } E_{\Ai}(m^{\frac 23} v) M_{\Ai} (m^{\frac 23} v),\\
&\psi_0(v)  = |v|^{-\frac 12} \Psi_2(v),\\
&\phi(v) = |v|^{-\frac 12}, \\
& \psi_1(v) = 0, \\
& J(v)=  \Gamma (\Psi_2(v) )   (\epsilon_{\mathrm{Ai}2}(v) + \Ai ( m^{\frac 23} v) ) + \Psi_2(v) k(v) .
\end{align}

We further have that 
\begin{align}&\Phi(\xi_2) := \int_{\xi_2}^{\xi_2(r^\ast_1 + \epsilon)} |\phi(v)|\d v= \int_{\xi_2}^{\xi_2(r^\ast_1 + \epsilon)} |v|^{-\frac 12} \d v,\\
&	\Psi_0(\xi_2) := \int_{\xi_2}^{\xi_2(r^\ast_1 + \epsilon)} |\psi_0(v)|\d v =  \int_{\xi_2}^{\xi_2(r^\ast_1 + \epsilon)}  |v|^{-\frac 12} |\Psi_2(v) |\d v
	\end{align}
exist and satisfy $\sup_{v \in \xi_2( r_1^\ast+\epsilon, l\frac \pi 2)}\Phi(v)\lesssim 1 $, as well as  $\sup_{v \in \xi_2( r_1^\ast+\epsilon, l\frac \pi 2)} \Psi_0(v) \lesssim 1$ as in \eqref{eq:boundonh2} of  \cref{prop:properrorairy}.  With the above choices, \cref{prop:properrorairy} and \eqref{eq:estimatesonkernel1},  \eqref{eq:estimatesonkernel2}, we have that assumptions (i)--(vi) of \cite[Theorem 10.1, Chapter 6]{olver2014asymptotics}  are satisfied. 

Now, we  compute $\kappa := \sup_{v\in \xi_2[r_1^\ast + \epsilon, l \frac \pi 2]} Q(v) |J(v)  |$, for which we note that the largest term in $J$ is coming from $\Gamma(\Psi_2) \Ai(m^{\frac 23} v )$ in view of the shown properties for $\epsilon_{\Ai 2}$ from \eqref{eq:eps2ai} and $k(v)$ from \eqref{eq:estimatesonk}.  Since $|\Gamma(\Psi_2)(v) \Ai(m^{\frac 23} v ) |\lesssim M_{\Ai}(m^{\frac 23 }v) E^{-1}_{\Ai}(m^{\frac 23} v)$, we obtain $\kappa \lesssim m^{-\frac 23}$. 
Further, for $\kappa_0 := \sup_{v\in \xi_2[r_1^\ast + \epsilon, l \frac \pi 2]} Q(v) |P_0(v)  |$ we directly obtain the estimate $\kappa_0 \lesssim m^{-\frac 23}$. 

Thus, the assumptions of \cite[Theorem 10.1, Chapter 6]{olver2014asymptotics}  are satisfied and we obtain that 
$\Gamma\epsilon_{\Ai2} - k$ satisfies 
\begin{align}
\sup_{\xi_2 \in \xi_2( r_1^\ast+\epsilon, l \frac \pi 2 )} |\Gamma\epsilon_{\Ai2} - k| (\xi_2)\leq \sup_{\xi_2 \in \xi_2( r_1^\ast+\epsilon, l \frac \pi 2 )} P_0(\xi_2) \kappa \Phi(\xi_2) \exp(\kappa_0 \Psi_0(\xi_2)) \lesssim m^{- \frac 23} . 
\end{align}
Thus,
\begin{align}
| (\Gamma\epsilon_{\Ai2} )( \xi_2 ({\pi l }/{2}))| \lesssim m^{-\frac 23} + |k(  {\pi l }/{2})| \lesssim m^{-\frac 23} 
\end{align}
in view of \eqref{eq:estimatesonk}.
Finally, \begin{align}\nonumber 
	\Gamma ( \epsilon_{\Ai2} ( \xi_2 ({\pi l }/{2}) )) & \leq | (\Gamma\epsilon_{\Ai2} )( \xi_2 ({\pi l }/{2}))    | + \left|\frac{ \d \epsilon_{\Ai2}}{\d \xi_2} (\xi_2 ( l  \pi /2) ) \Gamma (\xi_2 ( l  \pi/ 2)) \right| \\
	& \lesssim m^{-\frac 23} +  \left|\frac{ \d \epsilon_{\Ai2}}{\d r^\ast} ( l  \pi /2 ) \left( \frac{\d \xi_2 }{\d r^\ast}\right)^{-1}( l  \pi / 2) \Gamma (\xi_2 ( l  \pi / 2)) \right|  
	\nonumber \\& \lesssim m^{-\frac 23} + m^{-\frac 13 } N_{\Ai} (m^{\frac 23} \xi_2 ( l   \pi /2) ) \lesssim m^{-\frac 23}  +  m^{-\frac 13 } m^{\frac 23 \cdot  \frac 14 }  \xi_2 ( l   \pi /2)^{\frac 14}\lesssim m^{- \frac 16}\end{align} which concludes the proof.
\end{proof}
\end{lemma}
\begin{rmk}\label{rmk:lambdailambdamli}
From \cref{prop:angular} and \cref{lem:gammaoneigv} we have that for almost every $\tilde \lambda_0 > \Xi^2$, there exist sequences $(m_i)_{i\in \mathbb N}$ and $(\ell_i)_{i\in \mathbb N}$ ($m_i \leq \ell_i \leq m_i^2$) with $m_i\to \infty ,\ell_i\to\infty$ as $i\to\infty$ such that the angular eigenvalues satisfy \begin{align}\label{eq:asymptoticremark}
	\lambda_i= \lambda_{m_i \ell_i }(\omega=\omega_-m_i) = \tilde \lambda_i m_i^2 = \tilde \lambda_0 m_i^2 + \lambda_\textup{error}^{(i)},  \end{align}
	where $|\lambda_\textup{error}^{(i)}(\mathfrak p_0)|\leq C(\tilde \lambda_0,\mathfrak p_0) $ and $|\lambda_\textup{error}^{(i)}(\mathfrak p)|\leq  C(\tilde \lambda_0,\mathfrak p_0) (1+ |\vartheta_0 - \vartheta| m_i^2) \lesssim 1+  \delta m_i^2$   uniformly for $\mathfrak p \in \gamma_{\delta}(\mathfrak p_0)$ as $m_i \to \infty$.   Moreover, we   assume without loss of generality that $m_{i+1} > m_i$ and note that the choice of the subsequences $m_i,\ell_i$ depends  on $\mathfrak p_0.$
\end{rmk}
\begin{lemma}\label{lem:vomega-estimates} Let $\lambda_1 := \sup_{\mathfrak p \in \gamma_{\delta}(\mathfrak p_0)} \sigma_1(\mathfrak p)$ and $ \lambda_2:= \inf_{\mathfrak p \in \gamma_{\delta}(\mathfrak p_0)} \sigma_2(\mathfrak p)$ and choose $\delta >0$ potentially smaller such that $\lambda_1 < \lambda_2$. Let $\tilde \lambda_0 \in (\lambda_1,\lambda_2) \setminus \mathcal{N}_{\mathfrak p_0}$ (see \cref{rmk:lambdailambdamli}) be arbitrary. 
Let   $\tilde \lambda_i = \tilde \lambda_0 + \lambda_\textup{error}^{(i)} m_i^{-2}$ be the associated angular eigenvalues from \cref{prop:angular} such that from \eqref{eq:Vomega-} we have the associated potential $V_{\textup{main}}  = \frac{\Delta}{(r^2+a^2)^2} \left( \tilde \lambda_i + \omega_-^2 a^2 - 2  a \omega_- \Xi \right) - ( \omega_- - \omega_r )^2  $. 

Then, for all $\mathfrak p \in  \gamma_{\delta}(\mathfrak p_0)$, 
	and for all $i\in \mathbb N$ sufficiently large, we have 
	\begin{align}
	&c(\delta, \mathfrak p_0)\leq  | \Gamma  \xi_{\infty}^{(i)} | \leq C(\delta, \mathfrak p_0) , \label{eq:dxida}
	\end{align}
  where $\xi_{\infty}^{(i)} : \gamma_{\delta}(\mathfrak p_0) \to \mathbb R$ is   defined as
\begin{align}	 \xi_\infty^{(i)}:=  \int_{r^\ast_2}^{\frac \pi 2l} \sqrt{|V_{\textup{main}}|} \d r^\ast =  \int_{r_2}^\infty   \sqrt{- \frac{(r^2+a^2)^2}{\Delta^2}   V_{\textup{main}}  }  \d r      \label{eq:dxidrminus}
	\end{align}
	and $c(\delta, \mathfrak p_0),C(\delta, \mathfrak p_0)>0$ only depend  on $\delta>0 $ and $ \mathfrak p_0 $.   
		\begin{proof} 
			We use the product rule to compute for $r\geq r_2$
			\begin{align}\nonumber \Gamma \left( \frac{(r^2+a^2)^2}{   \Delta^2}	V_{\textup{main}} \right) = & \Delta^{-1}  \Gamma( \tilde \lambda_i + \omega_-^2 a^2 - 2 a \omega_- \Xi)  +  \Gamma(\Delta^{-1} ) ( \tilde \lambda_i + \omega_-^2 a^2 - 2 a \omega_- \Xi) \\
				& - \Gamma \left(  \frac{(r^2+a^2)^2}{\Delta^2}(\omega_- - \omega_r)^2 \right). \end{align} 
From \cref{lem:gammaoneigv} we know that $ \Delta^{-1}  \Gamma( \tilde \lambda_i + \omega_-^2 a^2 - 2 a \omega_- \Xi) \leq 0$. Moreover, by choice of $\tilde \lambda_0$ in the assumptions of   \cref{lem:vomega-estimates}, we have that $\tilde \lambda_i \in E_{\mathfrak p}$ for all $\mathfrak p \in \gamma_{\delta}(\mathfrak p_0)$ and for all $i$ sufficiently large.
Thus, using the definition of $E_{\mathfrak p}$ in \eqref{eq:defnsp}  we have
\begin{align}\Gamma(\Delta^{-1} ) ( \tilde \lambda_i + \omega_-^2 a^2 - 2 a \omega_- \Xi) - \Gamma \left(  \frac{(r^2+a^2)^2}{\Delta^2}(\omega_- - \omega_r)^2 \right) <- \frac{\omega_- l^2}{a r^4} \end{align}  
for all $r\geq r_2$, all $i$ sufficiently large and all $\mathfrak p \in \gamma_{\delta}(\mathfrak p_0)$. 

Hence,
\begin{align}
 \Gamma \left( \frac{(r^2+a^2)^2}{   \Delta^2}	V_{\textup{main}} \right) \leq - \frac{\omega_-l^4 }{ar^4}
\end{align}
which shows 
\begin{align}|\Gamma \xi_\infty^{(i)}| > c(\delta,\mathfrak p_0)\end{align} for all $i$ sufficiently large and for all parameters in $\gamma_\delta(\mathfrak p_0)$ by choosing $\delta>0$ sufficiently small. For the upper bound we also  use \cref{lem:gammaoneigv}  to obtain that $ |\Gamma  \sqrt{|V_{\textup{main}}}| = \frac{|\Gamma V_{\textup{main}}| }{2| V_\textup{main}^{\frac 12} |} \lesssim \frac{1}{\sqrt{|r^\ast - r_2^\ast|}} $ uniformly in $i$. The last implicit constant can also be chosen to be uniform on $\gamma_{\delta}(\mathfrak p_0)$ such that $ |\Gamma \xi_\infty^{(i)}|\leq   \int_{r^\ast_2}^{\frac \pi 2l} |\Gamma \sqrt{|V_{\textup{main}}|}| \d r^\ast \leq C(\delta,\mathfrak p_0)$ follows. 
\end{proof}
\end{lemma}

  Now, recall the definition of $\mathfrak W_1$ and $\mathfrak W_2$ from \cref{defn:w1w2}.
\begin{prop}\label{prop:density}
Let $m_0 \in \mathbb N$. Then, there exist a parameter
 \begin{align*}\mathfrak p_\textup{Blow-up}  \in \gamma_{\delta}(\mathfrak p_0)\subset \mathcal U\end{align*} and an $i \in \mathbb N$ such that $ m_0 \leq m_i \leq \ell_i \leq m_i^2 $ with
	 \begin{align}
	 \sigma_1(\mathfrak p_\textup{Blow-up})< \lambda_{m_i \ell_i} (\omega=\omega_- m_i)< \sigma_2(\mathfrak p_\textup{Blow-up}), \end{align}  
	\begin{align}
	|	\mathfrak W_1 ( \vartheta_\textup{Blow-up})  | < e^{- \sqrt m_i}
	\end{align}
	as well as 
	\begin{align}\nonumber
	&|	\mathfrak W_2 ( \vartheta_\textup{Blow-up} ) |=0 \text{ and }|\Gamma\mathfrak W_2( \vartheta_\textup{Blow-up} )|  > 1, \\ &|\mathfrak W_2 (  \vartheta)|> e^{- \ell_i } e^{-m_i} \text{ for all } e^{-\ell_i} e^{-m_i} < |\vartheta_{\textup{Blow-up}}-  \vartheta| < \frac{1}{m_i^2}.
	\end{align} 
\end{prop}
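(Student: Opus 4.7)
First I would fix $\tilde{\lambda}_0 \in (\lambda_1, \lambda_2) \setminus \mathcal{N}_{\mathfrak{p}_0}$ and invoke \cref{prop:angular}, together with \cref{rmk:lambdailambdamli}, to produce sequences $(m_i, \ell_i)$ with $m_i \leq \ell_i \leq m_i^2$ and $\lambda_{m_i \ell_i}(a\omega_- m_i) = \tilde{\lambda}_0 m_i^2 + O(1 + \delta m_i^2)$ uniformly on $\gamma_{\delta}(\mathfrak{p}_0)$. After shrinking $\delta$ if necessary, this guarantees that $\sigma_1(\mathfrak p) < \lambda_{m_i\ell_i}(a\omega_-(\mathfrak p) m_i) < \sigma_2(\mathfrak p)$ for every $\mathfrak p \in \gamma_{\delta}(\mathfrak{p}_0)$ and every sufficiently large $i$. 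The strategy is then to locate a single large $i \geq m_0$ at which all remaining conditions can be verified simultaneously by finding an appropriately transverse zero of $\mathfrak W_2$ on the curve $\gamma_{\delta}(\mathfrak p_0)$.

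\textbf{Step 2: Oscillatory structure of $\mathfrak{W}_2$ and existence of a transverse zero.}
The decisive observation is that the normalizations $u_{\infty}(\tfrac{\pi}{2}l) = 0$ and $u_{\infty}'(\tfrac{\pi}{2}l) = 1$ collapse the $r^\ast$-independent Wronskian to
\begin{align*}
\mathfrak{W}_2(\vartheta, m_i, \ell_i) \,=\, u_{\Ai 2}(\tfrac{\pi}{2}l).
\end{align*}
Inserting the Airy representation \eqref{eq:defai1} and the asymptotic $\Ai(-z) \sim \pi^{-1/2} z^{-1/4} \sin(\tfrac{2}{3} z^{3/2} + \tfrac{\pi}{4})$ as $z \to +\infty$, valid because $m_i^{2/3}\xi_2(\tfrac{\pi}{2}l) = -(\tfrac{3}{2}m_i\xi_\infty)^{2/3}$ is large and negative, I would rewrite
\begin{align*}
\mathfrak{W}_2(\vartheta, m_i, \ell_i) \,=\, A(\vartheta, m_i)\sin\!\bigl(m_i \xi_\infty(\vartheta) + \tfrac{\pi}{4}\bigr) + O(m_i^{-7/6}),
\end{align*}
with amplitude $|A| \sim m_i^{-1/6}$ that is slowly varying in $\vartheta$ by the error bounds \eqref{eq:eps2ai}--\eqref{eq:eps2ai'}. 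By \cref{lem:vomega-estimates} the phase $m_i \xi_\infty$ has derivative at least $c(\delta,\mathfrak p_0) m_i$ along $\gamma_{\delta}(\mathfrak p_0)$, so the sine performs of order $m_i \delta$ full oscillations across the curve. Standard transversality then yields a zero $\vartheta_{\textup{Blow-up}} \in \gamma_{\delta}(\mathfrak p_0)$ of $\mathfrak W_2$ at which
\begin{align*}
|\Gamma \mathfrak{W}_2(\vartheta_{\textup{Blow-up}})| \,\gtrsim\, m_i^{5/6} \,\gg\, 1.
\end{align*}
Since $|\partial_\vartheta^2 \mathfrak{W}_2| \lesssim m_i^{11/6}$, Taylor expansion remains valid in the window $|\vartheta - \vartheta_{\textup{Blow-up}}| \leq m_i^{-2} \ll m_i^{-1}$, and there
\begin{align*}
|\mathfrak{W}_2(\vartheta)| \,\gtrsim\, m_i^{5/6}\, |\vartheta - \vartheta_{\textup{Blow-up}}|,
\end{align*}
which is vastly stronger than the required lower bound $e^{-\ell_i - m_i}$.

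\textbf{Step 3: Exponential smallness of $\mathfrak{W}_1$ via tunneling.}
At $\vartheta_{\textup{Blow-up}}$ the vanishing of $\mathfrak{W}_2$ forces $u_\infty = c_{\Ai 2}\, u_{\Ai 2}$ on the overlap region $(r_1^\ast + \epsilon, r_2^\ast - \epsilon)$, with $|c_{\Ai 2}| \sim m_i^{-5/6}$ (from $u'_\infty(\tfrac{\pi}{2}l) = 1$ and $|u'_{\Ai 2}(\tfrac{\pi}{2}l)| \sim m_i^{5/6}$ at a zero). On the overlap I would match the Airy representations \eqref{eq:defai1}--\eqref{eq:defbi1} to decompose
\begin{align*}
u_\infty \,=\, c_{\Ai}\, u_{\Ai 1} + c_{\Bi}\, u_{\Bi 1} \quad\text{on}\quad (-\infty, r_2^\ast - \epsilon).
\end{align*}
The tunneling asymptotic $\Ai(z) \sim (2\sqrt{\pi})^{-1} z^{-1/4} e^{-\frac{2}{3} z^{3/2}}$ applied at $m_i^{2/3}\xi_2(r_1^\ast+\epsilon)$, combined with the standard Airy connection formulas at the inner turning point, yields
\begin{align*}
|c_{\Ai}|, |c_{\Bi}| \,\lesssim\, \exp\!\bigl(- c\, m_i\bigr) \quad\text{for some}\quad c = c(\mathfrak p_0, \tilde\lambda_0) > 0.
\end{align*}
Since $u_{\Ai 1}, u_{\Bi 1}$ remain bounded as $r^\ast \to -\infty$ (Airy in its oscillatory regime), $u_\infty$ has amplitude $O(e^{-c m_i})$ near the horizon, and evaluating the $r^\ast$-independent Wronskian $\mathfrak{W}_1$ there against the unit-amplitude $u_{\mathcal H^+}$ gives $|\mathfrak{W}_1(\vartheta_{\textup{Blow-up}})| \lesssim e^{-c m_i} < e^{-\sqrt{m_i}}$ once $m_i$ is large.

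\textbf{Main obstacle.}
The hardest step is the uniform propagation of the Airy error estimates of \cref{prop:properrorairy} through Step 2, namely verifying that the amplitude $A(\vartheta, m_i)$ and the phase correction inherited from $\epsilon_{\Ai 2}$ are genuinely slowly varying on $\gamma_{\delta}(\mathfrak p_0)$ (in particular that their $\vartheta$-derivatives are $O(m_i^{\alpha})$ with $\alpha < 1$), so that the leading $\sin(m_i \xi_\infty + \tfrac{\pi}{4})$ oscillation dictates both the existence of the zero and the transversality bound. This ultimately reduces to tracking the $\vartheta$-dependence of the error-control functions $H_1, H_2$ from \cref{prop:properrorairy} together with smooth dependence of $V_{\textup{main}}, V_1$, $r_1^\ast, r_2^\ast$ on $\vartheta$, and is a delicate bookkeeping exercise rather than a conceptual difficulty.
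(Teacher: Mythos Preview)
Your proposal is correct and follows essentially the same route as the paper's proof. The paper packages your Step~2 as \cref{lem:keylemmainquasimodes} and your Step~3 as \cref{lem:rad2} (computing Wronskians $\mathfrak W(u_{\Ai2},u_{\Ai1})$, $\mathfrak W(u_{\Ai2},u_{\Bi1})$ at $r_1^\ast+\epsilon$ rather than invoking ``connection formulas''), then combines them exactly as you do; the obstacle you flag is what the paper dispatches by noting $\Gamma O(m_i^{-7/6}) = O(m_i^{-1/6})$ via the smooth $\vartheta$-dependence of $V_{\textup{main}}$ and $V_1$.
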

The proof of \cref{prop:density} relies on the following two lemmata and will be given thereafter.
First, we will start by showing that for every $m_i \geq m_0$ sufficiently large, there exists a $\mathfrak p_\textup{Blow-up} \in \gamma_{\delta}(\mathfrak p_0)$ such that $\mathfrak W_2 = 0$ and $|\Gamma \mathfrak W_2 | >1$. We will state this as the following lemma. 
\begin{lemma}\label{lem:keylemmainquasimodes}
For   every $  \tilde m_0 >0$ there exists an $i\in \mathbb{N}$ with $m_i > \tilde  m_0$ and a parameter $\vartheta_{\textup{Blow-up}}$ with $|\vartheta_{\textup{Blow-up}} - \vartheta(\mathfrak p_0)|\leq \delta $ such that 
\begin{enumerate}
	\item $\mathfrak{W}_2(\vartheta_{\textup{Blow-up}}, m_i) =  0$,
	\item  $u_{\mathrm{Ai}2}  = \alpha_\infty \hat f_2^{\frac 12}(\frac \pi 2l) u_\infty$ for $\vartheta=\vartheta_{\textup{Blow-up}}$ with $|\alpha_\infty(\vartheta_\textup{Blow-up}) | \sim  m_i^{\frac 56}$,
	\item  $|\Gamma \mathfrak W_2(\vartheta_{\textup{Blow-up}}, m_i) |>1$,
\item For all $\vartheta$ with $  e^{-\ell_i} e^{-m_i} < |\vartheta - \vartheta_{\textup{Blow-up}}|<\frac{1}{m_i^2}$, we have $|\mathfrak W_2(\vartheta)| >  e^{-\ell_i} e^{-m_i}$.
\end{enumerate} 
\begin{proof}
Throughout the proof of \cref{lem:keylemmainquasimodes} we will use the convention that all constants appearing in $\lesssim$, $\gtrsim$, $\sim$ and $O$ only depend on $\mathfrak p_0,l$ and $\delta>0$.  
	
Let   $ \tilde m_0>0$.  
We begin by showing \textit{1}. From \cref{prop:defnsofuaianduabi} and \eqref{eq:dxidrminus} we have 
 \begin{align}\nonumber \mathfrak{W}[u_{\mathrm{Ai}2}, u_\infty](m_i,\vartheta) = u_{\mathrm{Ai}2}({r^\ast} = l \pi/2 , m_i) &=   \hat f_2^{ \frac 14}(r_2^\ast) \hat f_2^{- \frac 14}(\pi l /2) \left\{ \mathrm{Ai}( m_i^{\frac 23} \xi_2 (l \pi/2)) + \epsilon_{\mathrm{Ai}2}(m_i,l \pi/2 )\right\}\\ 
 &	= \hat f_2^{ \frac 14}(r_2^\ast) \hat f_2^{- \frac 14}(\pi l /2)  \left\{ \mathrm{Ai}\left( - \left(\frac{3}{2}m_i \xi_\infty^{(i)}\right)^{\frac 23} \right) +O(m_i^{- \frac 76}) \right\}\label{eq:expansionofair} \end{align}
uniformly on $\gamma_{\delta}(\mathfrak p_0)$.
Now, for all $m_i>\tilde m_0$ sufficiently large, we  use the asymptotics for the Airy  functions as shown in  \cref{lem:asymptoticsforairy} to conclude that 
\begin{align}
	\mathrm{Ai}\left( - \left(\frac{3}{2}m_i \xi_\infty^{(i)}\right)^{\frac 23} \right) +O(m_i^{- \frac 76}) = \frac{1}{\sqrt{\pi}} \left(\frac{3}{2}m_i \xi_\infty^{(i)}\right)^{- \frac 16} \left( \cos\left( m_i \xi_\infty^{(i)} - \frac \pi 4\right)  + O(m_i^{-1}) \right). \label{eq:estimateonairy123}
\end{align}	
Thus, in order to conclude that $\mathfrak{W}[u_{\mathrm{Ai}2}, u_\infty](m_i,\vartheta) =0 $ for some value on $\gamma_{\delta}(\mathfrak p_0)$, we have  to vary $\mathfrak p(\vartheta) \in \gamma_\delta(\mathfrak p_0)$ such that $  m_i \xi_\infty^{(i)}$ goes through a period of $\pi$. Here, we also use that all terms in $O(m_i^{-1})$ depend continuously on $\mathfrak p$. Thus, it suffices to let $\xi_\infty^{(i)}$ go through a period of $\pi m_i^{-1}$. From \eqref{eq:dxida} we have
 \begin{align}\label{eq:estiamteongammaxii}
| \Gamma \xi_\infty^{(i)} | \sim 1 
 \end{align}
uniformly on $\gamma_{\delta}(\mathfrak p_0)$. Thus, by potentially choosing $m_i>\tilde m_0$ even larger, there exists a parameter $\vartheta_{\textup{Blow-up}}$ with 
  \begin{align}
  |\vartheta_{\textup{Blow-up}}-\vartheta(\mathfrak p_0)|\lesssim \frac{1}{m_i} 
  \end{align} 
  such that 
  \begin{align}
  \mathfrak{W}_2({\vartheta_\textup{Blow-up}},m_i) =0  \text{ and } \mathfrak p(\vartheta_{\textup{Blow-up}}) \in \gamma_\delta(\mathfrak p_0). 
  \end{align}
We finally note that from \eqref{eq:expansionofair} and \eqref{eq:estimateonairy123} we have that $\mathrm{Ai}\left( - \left(\frac{3}{2}m_i \xi_\infty^{(i)}\right)^{\frac 23} \right) + \epsilon_{\mathrm{Ai}2}(m_i,l \pi/2 )  = O(m_i^{-\frac 76})$ for all $\vartheta$ with $|\vartheta-\vartheta_{\textup{Blow-up}}|\leq m_{i}^{-2}$ in view of \eqref{eq:estiamteongammaxii}.

Having found $m_i$ and $\vartheta_{\textup{Blow-up}}$, we will now prove \textit{2.} 
For $\vartheta=\vartheta_\textup{Blow-up}$  we have from \cref{prop:defnsofuaianduabi} and  $\textit{1.}$ that 
\begin{align} \nonumber
  |u_{\mathrm{Ai}2}^\prime ( \pi l/2) | &= \hat f_2^{ \frac 14}(r_2^\ast) \hat f_2^{- \frac 14}( \pi l/2)   \left\{ \mathrm{Ai}'( m_i^{\frac 23} \xi_2 (l \pi/2)) m_i^{\frac 23} \partial_{r^\ast} \xi_2 (l \pi/2) + \epsilon_{\mathrm{Ai}2}'(m_i,l \pi/2 )\right\}\\  \nonumber
  &  = \hat f_2^{ \frac 14}(r_2^\ast) \hat f_2^{ \frac 14}(\pi l/2) \left\{ \frac{1}{\sqrt \pi}  \sin\left(m_i \xi_\infty^{(i)} - \frac \pi 4 \right) m_i^{\frac 56} + O(m_i^{-\frac 16})\right\}  \\ & \sim \hat f_2^{\frac 12} ( l \pi/2 ) m_i^{\frac 56}.
\end{align}
Here, to estimate $\epsilon_{\mathrm{Ai}2}'(m_i,l \pi/2 )$ we used \eqref{eq:eps2ai'}  and \eqref{eq:errorestimateonnai}. We further used  \cref{lem:asymptoticsforairy}, as well as \eqref{eq:defnofxi2withf2} together with $\hat f_2(r_2^\ast) \sim \hat f_2 (l \pi/2)$. Finally, we used that $|\sin\left(m_i \xi_\infty^{(i)} - \frac \pi 4 \right)| \sim  1 $ as $|\cos\left(m_i \xi_\infty^{(i)} - \frac \pi 4 \right)|= O(m_i^{-1}) $.  
   Thus, for $\vartheta =\vartheta_\textup{Blow-up}$ we have \begin{align}u_{\Ai2} = \hat f_2^{\frac 12} (l \pi/ 2)\alpha_\infty u_\infty \text{ with } |\alpha_\infty |\sim m_i^{\frac 56}.\end{align}
   
For \textit{3}, we will in fact show the stronger statement that $|\Gamma \mathfrak W_2(\vartheta , m_i) |>1$  for all $\vartheta$ with $|\vartheta-\vartheta_{\textup{Blow-up}}|\leq m_i^{-2} $. We first recall that
	\begin{align}\label{eq:lowerboundongammaxiinfty}
	\left|\Gamma  \xi_\infty^{(i)}  \right| \sim 1
	\end{align}
	on $\gamma_{\delta}(\mathfrak p_0)$
	in view of \eqref{eq:dxida}.
	 Now, we take the derivative of \eqref{eq:expansionofair} with respect to $\Gamma$. 
First we consider the term   when $\Gamma$ hits $ \hat f_2^{ \frac 14}(r_2^\ast) \hat f_2^{- \frac 14}(\pi l /2) $. We recall from the proof of \textit{1} that for $\vartheta$ with $|\vartheta-\vartheta_{\textup{Blow-up}}|\leq m_i^{-2}$ we have $ \mathrm{Ai}\left( - \left(\frac{3}{2}m_i \xi_\infty^{(i)}\right)^{\frac 23} \right)   + \epsilon_{\mathrm{Ai}2}(m_i,l \pi/2 )= O(m_i^{-\frac 76})$ such that 
\begin{align*}
 \Gamma \left( \hat f_2^{ \frac 14}(r_2^\ast) \hat f_2^{- \frac 14}(\pi l /2) \right) \left\{ \mathrm{Ai}( m_i^{\frac 23} \xi_2 (l \pi/2)) + \epsilon_{\mathrm{Ai}2}(m_i,l \pi/2 )\right\} = O(m_i^{-\frac 76}).
\end{align*} 
Now, we consider  the    term when $\Gamma$ hits  $  \epsilon_{\mathrm{Ai}2}(m_i,l \pi/2 )$. Using \cref{lem:boundongammapesilon2}   we have that \begin{align*} \hat f_2^{ \frac 14}(r_2^\ast) \hat f_2^{- \frac 14}(\pi l /2) \Gamma   \epsilon_{\mathrm{Ai}2}(m_i,l \pi/2 ) = O(m_i^{-\frac 16}). \end{align*} 
Finally, we consider the term when $\Gamma$ hits $\mathrm{Ai}\left( - \left(\frac{3}{2}m_i \xi_\infty^{(i)}\right)^{\frac 23} \right)$. By the  chain rule, we directly compute \begin{align}\left|\Gamma \mathrm{Ai}\left( - \left(\frac{3}{2}m_i \xi_\infty^{(i)}\right)^{\frac 23} \right) \right|  \sim \left| \mathrm{Ai}'\left( - \left(\frac{3}{2}m_i \xi_\infty^{(i)}\right)^{\frac 23} \right)  m_i^{\frac 23}\right|\end{align}
	 in view of \eqref{eq:lowerboundongammaxiinfty}. Similar to  the proof of \textit{2.},   from $\mathrm{Ai}\left( - \left(\frac{3}{2}m_i \xi_\infty^{(i)}\right)^{\frac 23} \right) = O(m_i^{-\frac 76}) $, we have  that  $\left|\mathrm{Ai}'\left( - \left(\frac{3}{2}m_i \xi_\infty^{(i)}\right)^{\frac 23} \right) \right|\sim m_i^{\frac 14 \frac 23} \sim m_i^{\frac 16}$.   Putting everything together, this  shows that 
	\begin{align}\label{eq:455}
	\left|	\Gamma \mathfrak W_2 \right| \sim  m_i^{\frac 16}  m_i^{\frac 23}+ O(m_i^{-\frac 16}) + O(m_i^{-\frac 76})  \sim m_i^{\frac 56}
	\end{align}
for all  $\vartheta$ with $|\vartheta-\vartheta_{\textup{Blow-up}}| < m_i^{-2}$. In particular, this shows \textit{3}.  Upon integration this also shows~\textit{4}.
\end{proof}
\end{lemma}
\begin{lemma}\label{lem:rad2} 	There exists a constant $c>0$ (only depending on $\mathfrak p_0$ and $\delta>0$) such that for $\mathfrak p \in \gamma_\delta(\mathfrak p_0)$ we have 
\begin{align}
	\label{eq:wronskianestimate}
		|\mathfrak{W}(u_{\mathrm{Ai}2}, u_{\mathrm{Bi}1}) | \lesssim    \hat f_1^{\frac 12}(r_1^\ast) e^{- c m_i } \text{ and } |\mathfrak{W}(u_{\mathrm{Ai}2}, u_{\mathrm{Ai}1}) | \lesssim \hat  f_1^{\frac 12}(r_1^\ast) e^{- c m_i }
\end{align}
	for all $m_i$ sufficiently large. Moreover, there exist  constants $\alpha_1 = \alpha_1(m_i) \in \mathbb R$ and $\beta_1= \beta_1(m_i) \in \mathbb R$ satisfying $ |\alpha_1| \lesssim e^{- c m_i }  $ and $|\beta_1|\lesssim e^{- c m_i } $ such that $u_{\mathrm{Ai}2} = \alpha_1 u_{\mathrm{Ai}1} + \beta_1 u_{\mathrm{Bi}1}$.
\begin{proof}
	We start by proving	$|\mathfrak{W}(u_{\mathrm{Ai}2}, u_{\mathrm{Bi}1}) | \lesssim  \hat f_1^{\frac 12}(r_1^\ast)e^{- c m_i }$. We evaluate the Wronskian $\mathfrak{W}(u_{\mathrm{Ai}2}, u_{\mathrm{Bi}1})$ at ${r^\ast} := {r^\ast_1} + 2 \epsilon (\mathfrak p_0)$, where $\epsilon(\mathfrak p_0 )$ is as in \eqref{eq:defnxi1}. By potentially choosing $\delta>0$ smaller, we have that $ {r^\ast_1} + 2 \epsilon (\mathfrak p_0) \geq  {r^\ast_1} +  \epsilon (\mathfrak p)$ for all $\mathfrak p \in \gamma_\delta(\mathfrak p_0)$. Then, using standard bounds on Airy functions from \cref{lem:asymptoticsforairy}  we obtain
\begin{align}
	&|u_{\mathrm{Ai}2}( {r^\ast_1} + 2 \epsilon (\mathfrak p_0) )| \lesssim  \frac{1}{m_i^{\frac 16} \xi_2^{\frac 14}(r_1^\ast +  2 \epsilon (\mathfrak p_0))} e^{- \frac 23 m_i \xi_2^{\frac 32}({r^\ast_1} +2 \epsilon (\mathfrak p_0) )},\\
	& |u_{\mathrm{Ai}2}^\prime( {r^\ast_1} + 2 \epsilon (\mathfrak p_0) )| \lesssim  m_i^{\frac 16} \xi_2^{\frac 14}(r_1^\ast + 2 \epsilon (\mathfrak p_0)) \hat f_2^{\frac 12}(r_2^\ast) e^{- \frac{2}{3} m_i \xi_2^{\frac 32}({r^\ast_1} +2 \epsilon (\mathfrak p_0) )},\\
		&|u_{\mathrm{Bi}1}( {r^\ast_1} + 2 \epsilon (\mathfrak p_0) )| \lesssim  \frac{1}{m_i^{\frac 16} \xi_1^{\frac 14}(r_1^\ast + 2 \epsilon (\mathfrak p_0))} e^{ \frac 23 m_i \xi_1^{\frac 32}({r^\ast_1} +2 \epsilon (\mathfrak p_0) )}    ,\\
		& |u_{\mathrm{Bi}1}^\prime( {r^\ast_1} + 2 \epsilon (\mathfrak p_0) )| \lesssim  m_i^{\frac 16} \xi_1^{\frac 14}(r_1^\ast + 2 \epsilon (\mathfrak p_0)) \hat f_1^{\frac 12}(r_1^\ast) e^{ \frac{2}{3} m_i \xi_1^{\frac 32}({r^\ast_1} +2 \epsilon (\mathfrak p_0) )}.
	\end{align}
	Now,  by choosing $\delta>0$ potentially smaller,  in view of \eqref{eq:smallnessofepsilon}, we have $ \frac{\xi_2 (r_1^\ast+2\epsilon(\mathfrak p_0 ))}{\xi_1(r_1^\ast+2 \epsilon(\mathfrak p_0))} \geq 2 $ for all $\mathfrak p \in \gamma_\delta(\mathfrak p_0)$. Thus, there exists  a constant $c=c(\mathfrak p_0, \delta)>0$ such that 
	\begin{align} \xi_2^{\frac 32}({r^\ast_1} + 2 \epsilon (\mathfrak p_0)) -  \xi_1^{\frac 32}({r^\ast_1} + 2 \epsilon (\mathfrak p_0)) \geq c
	\end{align}
for all $\mathfrak p \in \gamma_\delta(\mathfrak p_0)$. Now, the first estimate follows by evaluating the Wronskian $\mathfrak{W}(u_{\mathrm{Ai}2}, u_{\mathrm{Bi}1})$ at ${r^\ast} = {r^\ast_1} +2 \epsilon (\mathfrak p_0) $ and the fact that ${\hat f_1(r_1^\ast)}/{\hat f_2(r_2^\ast)} \sim 1$. The second estimate of \eqref{eq:wronskianestimate} follows in the same manner but it is easier as $u_{\mathrm{Ai}2}$ is already exponentially small in the region between the turning points ${r^\ast_1}$ and ${r^\ast_2}$ since
	\begin{align}
&|u_{\mathrm{Ai}1}( {r^\ast_1} +2 \epsilon (\mathfrak p_0))| \lesssim  \frac{1}{m_i^{\frac 16} \xi_1^{\frac 14}(r_1^\ast + 2 \epsilon (\mathfrak p_0))} e^{- \frac 23 m_i \xi_1^{\frac 32}({r^\ast_1} +2 \epsilon (\mathfrak p_0))},  \\ & |u_{\mathrm{Ai}1}^\prime( {r^\ast_1} + 2 \epsilon (\mathfrak p_0))| \lesssim m_i^{\frac 16} \xi_1^{\frac 14}(r_1^\ast + 2 \epsilon (\mathfrak p_0)) \hat f_1^{\frac 12}( r_1^\ast) e^{- \frac{2}{3} m_i \xi_1^{\frac 32}({r^\ast_1} +2 \epsilon (\mathfrak p_0))}.
	\end{align}
	
	For the second part of the lemma we first note that \begin{align}\alpha_1 = \frac{\mathfrak W (u_{\mathrm{Ai}2}, u_{\mathrm{Bi}1})}{\mathfrak W (u_{\mathrm{Ai}1}, u_{\mathrm{Bi}1})} , \beta_1 = \frac{\mathfrak W (u_{\mathrm{Ai}2}, u_{\mathrm{Ai}1})}{\mathfrak W (u_{\mathrm{Bi}1}, u_{\mathrm{Ai}1})} .\end{align} To conclude it suffices to show that \begin{align}\mathfrak W (u_{\mathrm{Ai}1}, u_{\mathrm{Bi}1}) \sim \hat f_1^{\frac 12} (r_1^\ast)  m_i^{\frac 23}.\end{align} In view of the error bounds from \eqref{eq:eps1ai}--\eqref{eq:eps1bi'} and the chain rule, we conclude that 
	\begin{align}
|\mathfrak W_{r^\ast}(u_{\Ai1}, u_{\Bi1})| \sim \hat f_1^{\frac 12} (r_1^\ast)  m_i^{\frac 23 }\mathfrak{W}_{x}(\Ai(x), \Bi(x)) \sim \hat f_1^{\frac 12} (r_1^\ast) m_i^{\frac 23}
	\end{align}
	for all $m_i$ sufficiently large.
	\end{proof}
\end{lemma}
Now, we are in the position to prove \cref{prop:density}.
\begin{proof}[Proof of \cref{prop:density}]
Let $m_0 \in \mathbb N$ be arbitrary. Using  \cref{lem:keylemmainquasimodes}, we let  $m_i >m_0$ and fix $\mathfrak p_\textup{Blow-up}  \in \gamma_\delta(p_0)\subset \mathcal U$  such that $\mathfrak W_2 =0$ and $| \Gamma \mathfrak W_2 |>1$ as well as $|\mathfrak W_2(\vartheta)| > e^{-\ell_i} e^{-m_i}$ for  $e^{-\ell_i} e^{-m_i} < |\vartheta - \vartheta_{\textup{Blow-up}}|<\frac{1}{m_i^2}$. We moreover have \begin{align}u_\infty = \alpha_\infty^{-1} \hat f_2^{- \frac 12}(l \pi/ 2)   u_{\mathrm{Ai}2} = \alpha_\infty^{-1}  \hat f_2^{- \frac 12}(l \pi/ 2)  \left( \alpha_1 u_{\mathrm{Ai}1} + \beta_1 u_{\mathrm{Bi}1} \right),\end{align} where $|\alpha_\infty| \sim m_i^{\frac 56}$.
	Thus, in view of \cref{lem:rad2} we have
\begin{align} 
\nonumber
|	\mathfrak{W}[u_\infty,\uhplus] | & =|\alpha_\infty^{-1} \hat f_2^{- \frac 12}(l \pi/ 2) \left( 	\alpha_1 \mathfrak{W}[u_{\Ai1},\uhplus]   + \beta_1 \mathfrak{W}[u_{\Bi1},\uhplus] \right) |  \\ &   \lesssim \hat f_2^{- \frac 12}(l \pi/ 2) m_i^{-\frac 56} e^{-c m_i} \left(| \mathfrak{W}[u_{\Ai1},\uhplus]| + |\mathfrak{W}[u_{\Bi1},\uhplus]| \right)
\end{align}
for some constant $c= c(\mathfrak p_0)>0$.
To estimate $\mathfrak{W}[u_{\Ai1},\uhplus] $ and $\mathfrak{W}[u_{\Bi1},\uhplus] $ we infer from \cref{lem:asymptoticsforairy} and \eqref{eq:defai1}, \eqref{eq:defbi1} together with the associated error bounds,  that
\begin{align}
&|	u_{\Ai1} | \lesssim   m_i^{- \frac  16}, \;\;\; |	u'_{\Ai1} | \lesssim \hat f_1^{\frac 12}(r_1^\ast)  m_i^{\frac 56}, \;\; |	u_{\Bi1} | \lesssim m_i^{- \frac  16}, \;\;\; |	u'_{\Bi1} | \lesssim \hat f_1^{\frac 12}(r_1^\ast)m_i^{\frac 56}
\end{align} 
for all $r^\ast $ sufficiently small and particularly as $r^\ast \to -\infty$. Moreover, as $r^\ast \to -\infty$, we have that \begin{align}\uhplus = e^{-i(\omega_- - \omega_+) m r^\ast}\end{align} such that 
\begin{align}
|	\mathfrak W[u_{\Ai1}, \uhplus] | , | 	\mathfrak W[u_{\Bi1}, \uhplus ] |\lesssim \hat f_1^{\frac 12 }(r_1^\ast) m_i^{\frac 56}.
\end{align}
Thus, by potentially choosing $m_i$ even larger (i.e.\ choose $ \tilde m_0$ larger in \cref{lem:keylemmainquasimodes}) and noting that $\hat f_2^{ \frac 12}(\frac \pi 2 l) \sim (\omega_- - \omega_+) \sim \hat f_1^{\frac 12} (r_1^\ast)$, we have
\begin{align}
	|	\mathfrak{W}[u_\infty,\uhplus] | \lesssim m_i^{-\frac 56} m_i^{\frac 56} e^{-c m_i}  =   e^{-c m_i}
\end{align}
and thus, 
\begin{align}
	|	\mathfrak{W}[u_\infty,\uhplus] |< e^{- \sqrt{m_i}}
\end{align}
for all $m_i$ sufficiently large.
\end{proof}

Now, we can conclude 
\begin{prop}\label{thm:Baire}
	The set $\mathscr{P}_{\textup{Blow-up}}$ is a Baire-generic subset of $\mathscr{P}$.
	\begin{proof}
		Since $\mathfrak p_0 \in \mathscr{P}$ and $\mathcal U \subset \mathscr P$, $\mathcal U \ni\mathfrak p_0$ were arbitrary, \cref{prop:density} shows that for any $m_0 \in \mathbb N$ sufficiently large, the set $U_{m_0}$ as defined in \cref{defn:um0} is dense in $\mathscr{P}$.  Since $  \mathfrak W_1$, $\mathfrak W_2$, $\sigma_1$ and $\sigma_2$ are continuous, $U_{m_0}$ is manifestly open. Thus, in view of Baire's theorem \cite{zbMATH02668286}, $\mathscr{P}_{\textup{Blow-up}}$ is Baire-generic and in particular dense.		
	\end{proof}
\end{prop}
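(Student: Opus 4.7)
By the Baire category theorem, it suffices to show that each $U_{m_0}$ as in Definition~\ref{defn:um0} is open and dense in $\mathscr{P}$. The openness is immediate: by continuity of $\mathfrak{p}\mapsto\mathfrak{W}_1(\mathfrak{p},m,\ell)$, $\mathfrak{p}\mapsto\mathfrak{W}_2(\mathfrak{p},m,\ell)$, $\mathfrak{p}\mapsto\sigma_i(\mathfrak{p})$, $\mathfrak{p}\mapsto\Gamma\mathfrak{W}_2$, and of the eigenvalues $\mathfrak{p}\mapsto\lambda_{m\ell}(a\omega_-(\mathfrak{p})m)$, all defining conditions of $U(m,\ell)$ are strict inequalities, so $U(m,\ell)$ is open and $U_{m_0}$ is a union of open sets. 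The bulk of the work therefore goes into proving density, and my plan is to carry it out by varying parameters along a one-dimensional curve on which only $\vartheta = a\omega_-$ moves.

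Fix $\mathfrak{p}_0\in\mathscr{P}$ and a neighborhood $\mathcal{U}\ni\mathfrak{p}_0$. For $\delta>0$ small I would restrict attention to the curve $\gamma_\delta(\mathfrak{p}_0)\subset\mathcal{U}$ obtained by fixing $\mathfrak{a}=\mathfrak{a}_0$ and letting $\vartheta$ range over $(\vartheta_0-\delta,\vartheta_0+\delta)$. Choose $\tilde\lambda_0\in(\sup_{\gamma_\delta}\sigma_1,\inf_{\gamma_\delta}\sigma_2)\setminus\mathcal{N}_{\mathfrak{p}_0}$, and feed this into Proposition~\ref{prop:angular} to extract sequences $m_i,\ell_i\to\infty$ for which $\lambda_{m_i\ell_i}(a\omega_-m_i)/m_i^2$ equals $\tilde\lambda_0$ up to $O(m_i^{-2})$. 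Lemma~\ref{lem:gammaoneigv} then bounds how fast these eigenvalues move under $\Gamma$ (at rate $\lesssim m_i^2$), while Lemma~\ref{lem:vomega-estimates} guarantees that the action integral $\xi_\infty(\vartheta)=\int_{r_2^\ast}^{\pi l/2}\sqrt{|V_{\mathrm{main}}|}\,\mathrm{d}r^\ast$ has $\Gamma$-derivative bounded below by a positive constant along the curve.

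The key semi-classical step is to evaluate $\mathfrak{W}_2(\vartheta,m_i,\ell_i)=u_{\mathrm{Ai}2}(\tfrac{\pi}{2}l)$ using the Airy representation \eqref{eq:defai1}--\eqref{eq:defbi1} together with the error bounds in Proposition~\ref{prop:properrorairy}. The asymptotic expansion of $\mathrm{Ai}$ at large negative argument turns $\mathfrak{W}_2$ into $C \hat{f}_2^{1/4}(r_2^\ast)\hat{f}_2^{-1/4}(\pi l/2)m_i^{-1/6}[\cos(m_i\xi_\infty-\pi/4)+O(m_i^{-1})]$. Since $|\Gamma\xi_\infty|$ is bounded below, varying $\vartheta$ by $O(m_i^{-1})$ forces the cosine to pass through zero, producing a value $\vartheta_{\mathrm{Blow}\textup{-up}}$ at which $\mathfrak{W}_2=0$. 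Differentiating this expansion yields the transversality bound $|\Gamma\mathfrak{W}_2|\gtrsim m_i^{5/6}\gg 1$ at the zero, and upon integrating this lower bound over $|\vartheta-\vartheta_{\mathrm{Blow}\textup{-up}}|<m_i^{-2}$ we obtain the quantitative annular lower bound $|\mathfrak{W}_2(\vartheta)|>e^{-\ell_i}e^{-m_i}$ demanded by $U(m_i,\ell_i)$. This part relies delicately on the fact that $\Gamma$ acting on the semi-classical error terms only costs a factor $m_i$, which I would justify via the parameter-dependent version of the Olver-type estimates.

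It remains to transfer smallness from $\mathfrak{W}_2$ to $\mathfrak{W}_1$, and this is the technical heart of the argument. At $\vartheta=\vartheta_{\mathrm{Blow}\textup{-up}}$ the recessivity of $u_\infty$ forces $u_\infty=\alpha_\infty^{-1}\hat{f}_2^{-1/2}(\pi l/2)u_{\mathrm{Ai}2}$ with $|\alpha_\infty|\sim m_i^{5/6}$. Expanding $u_{\mathrm{Ai}2}$ in the basis $\{u_{\mathrm{Ai}1},u_{\mathrm{Bi}1}\}$ on the inner side, I would use a WKB estimate evaluated just past the inner turning point $r_1^\ast+\epsilon$, where $u_{\mathrm{Ai}2}$ is exponentially damped across the classically forbidden barrier, to deduce $|\mathfrak{W}(u_{\mathrm{Ai}2},u_{\mathrm{Ai}1})|,|\mathfrak{W}(u_{\mathrm{Ai}2},u_{\mathrm{Bi}1})|\lesssim e^{-c m_i}$ for some $c>0$ depending only on $\mathfrak{p}_0,\delta$ (fixing $\epsilon$ so that $\xi_2^{3/2}-\xi_1^{3/2}\geq c$). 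Pairing this with the harmless polynomial growth of $\mathfrak{W}(u_{\mathrm{Ai}1},u_{\mathcal{H}^+})$ and $\mathfrak{W}(u_{\mathrm{Bi}1},u_{\mathcal{H}^+})$ (both $\lesssim m_i^{5/6}$, since $u_{\mathcal{H}^+}$ is bounded on the event horizon) collapses to $|\mathfrak{W}_1|\lesssim e^{-cm_i}<e^{-\sqrt{m_i}}$ for $m_i$ large. Since $\vartheta_{\mathrm{Blow}\textup{-up}}\in\gamma_\delta(\mathfrak{p}_0)\subset\mathcal{U}$ and $m_i$ can be chosen arbitrarily large, this proves density. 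The main obstacle I foresee is controlling the $\vartheta$-dependence of the Olver error terms uniformly in $m_i$ well enough to differentiate \eqref{eq:expansionofair} with only one power of $m_i$ lost; all other ingredients follow from the estimates already collected in Sections~\ref{sec:angularode} and~\ref{sec:radialext}.
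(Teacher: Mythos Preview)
Your proposal is correct and follows essentially the same approach as the paper. The paper's short proof simply invokes \cref{prop:density} (whose proof in turn relies on \cref{lem:keylemmainquasimodes} and \cref{lem:rad2}), and you have accurately reconstructed all of these steps: the Airy-asymptotic expansion of $\mathfrak W_2$ to locate a transversal zero, the integration of the transversality bound to get the annular condition, and the exponential barrier estimate to transfer smallness from $\mathfrak W_2$ to $\mathfrak W_1$. The obstacle you flag---that differentiating the Olver error in $\vartheta$ costs only one power of $m_i$---is handled in the paper exactly as you anticipate, by noting $|\Gamma V_{\textup{main}}|,|\Gamma V_1|\lesssim 1$ so that $\Gamma O(m_i^{-7/6})=O(m_i^{-1/6})$.
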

\subsection{Metric genericity: \texorpdfstring{$\mathscr{P}_{\textup{Blow-up}}$}{Pblowup} is Lebesgue-exceptional and  2-packing dimensional}
\label{sec:lebesgue}
	\begin{prop}\label{prop:lebesgue}
The set  $\mathscr{P}_{\textup{Blow-up}}$ is a Lebesgue null set.
		\begin{proof}
			It suffices to show that $\mathscr P_{\textup{Blow-up}} \cap C$ has vanishing Lebesgue measure (denoted by $|\cdot |$) for any closed square $C$ contained in $\mathscr P$ with side length less than unity. Let $C$ be such a square contained in $\mathscr P$. Throughout the proof, all constants appearing in $\lesssim, \gtrsim, \sim$ and $O$ will only depend on the square $C$.
	We start by estimating $U(m,\ell)\cap C$ with the co-area formula: We have
			\begin{align}
				| U(m,\ell)\cap C | = \int_{\tilde{ \mathfrak  a }\in (\mathfrak a_1, \mathfrak a_2)}    H^1 ( U(m,\ell)\cap C  \cap \{\mathfrak a= \tilde{\mathfrak a} \}) \d{\tilde {\mathfrak  a}}.
			\end{align}
			 We recall that $H^1$ denotes the one dimensional Hausdorff measure. As $|\mathfrak a_2 - \mathfrak a_1|\leq 1$, it suffices to estimate $H^1 ( U(m,\ell)\cap C  \cap \{\mathfrak a= \tilde{\mathfrak a} \}) $ uniformly for $\tilde {\mathfrak a} \in (\mathfrak a_1, \mathfrak a_2)$.
			
			 For each $\tilde {\mathfrak a} \in (\mathfrak a_1, \mathfrak a_2)$  we claim that $U(m,\ell) \cap C \cap \{\mathfrak a= \tilde{\mathfrak a} \}$ can be decomposed into at most $O(m^2)$ many subsets, each of which with diameter at most $ O(e^{-\ell} e^{-m})$. More precisely, for $\vartheta_1< \vartheta_2$ let  $(\vartheta_1, \tilde{ \mathfrak a})$ and  $(\vartheta_2, \tilde {\mathfrak a})$ be elements of $U(m,\ell) \cap C \cap \{\mathfrak a= \tilde{\mathfrak a} \}$ in coordinates $(\vartheta, \mathfrak a)$. Then, we claim that either, $|\vartheta_2 - \vartheta_1|\leq 2 e^{-\ell} e^{-m}$ or $|\vartheta_2 - \vartheta_1| > \frac{1}{m^2}$. 
			 
			 Indeed, note that $(\vartheta_2, \tilde {\mathfrak a}) = \Phi^\Gamma_{|\vartheta_2-\vartheta_1|} ((\vartheta_1, \tilde {\mathfrak a}))$. Thus, from the definition of $U(m,\ell)$ and since both, $(\vartheta_1, \tilde {\mathfrak a}), (\vartheta_2, \tilde {\mathfrak a}) \in U(m,\ell)$, we conclude that
			 \begin{align} |\vartheta_2 - \vartheta_1|< 2 e^{-\ell} e^{-m} \text{ or } \frac{1}{m^2}< |\vartheta_2 - \vartheta_1|.\end{align}
			Hence, we decompose $U(m,\ell) \cap C \cap \{\mathfrak a= \tilde{\mathfrak a} \}$ into $O(m^2)$ many subsets, each of which has diameter at most  $O(e^{-m} e^{-\ell})$ which is uniform in $\tilde{\mathfrak a}$.  Thus, \begin{align} {H}^1(U(m,\ell) \cap C \cap \{\mathfrak a= \tilde{\mathfrak a} \}) \lesssim m^2 e^{-\ell} e^{-m}\end{align} which implies \begin{align}| U(m,\ell)\cap C | \lesssim  m^2 e^{-\ell} e^{-m}.\end{align} Now,
			 \begin{align}
			 U_{m,C} := \bigcup_{m \leq \ell \leq m^2} U(m,\ell) \cap C
			\end{align}
			satisfies \begin{align}\label{eq:estimateonumc}|U_{m,C}|\lesssim  e^{-m}.\end{align}
			Using the definition of $\mathscr P_{\textup{Blow-up}}$ from \cref{eq:defnpblowup} we compute 
			\begin{align}\nonumber 
\mathscr P_{\textup{Blow-up}} \cap C & = \big( \bigcap_{m_0 \in \mathbb N} U_{m_0}\big)  \cap C = \bigcap_{m_0 \in \mathbb N} (U_{m_0}\cap C) = \bigcap_{m_0 \in \mathbb N} \big( \big(\bigcup_{m\geq m_0} \bigcup_{m\leq \ell \leq m^2} U(m,\ell) \big) \cap C\big) \\
& = \bigcap_{m_0 \in \mathbb N}  \big(\bigcup_{m\geq m_0} \bigcup_{m\leq \ell \leq m^2} (U(m,\ell) \cap C ) \big)  = \bigcap_{m_0 \in \mathbb N}  \bigcup_{m\geq m_0} U_{m,C} = \limsup_{m\to\infty} U_{m,C}. 
			\end{align}
		With \eqref{eq:estimateonumc} we  conclude
			\begin{align}
				|	\mathscr P_{\textup{Blow-up}}\cap C| = 0
			\end{align}
		in view of the Borel--Cantelli lemma.		
	\end{proof}
	\end{prop} 
			\begin{prop}\label{prop:packingdimension}
				The set $\mathscr P_{\textup{Blow-up}}$ has full packing dimension, i.e. $\dim_P (\mathscr P_{\textup{Blow-up}}) = 2$.
				\begin{proof}
					This follows from \cref{thm:Baire} and \cite[Corollary 3.10]{MR3236784}.
				\end{proof}
			\end{prop}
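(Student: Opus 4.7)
The plan is to deduce the packing dimension statement essentially for free from the Baire-genericity already established. By Proposition \ref{thm:Baire}, the set $\mathscr P_{\textup{Blow-up}}$ is a residual (dense $G_\delta$) subset of the open domain $\mathscr P \subset \mathbb R^2$. Since $\mathscr P$ is open in $\mathbb R^2$, it is a Baire space with $\dim_P(\mathscr P) = 2$, and the desired conclusion follows from the general topological fact that every residual subset of such a space attains the ambient packing dimension.

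More concretely, the argument proceeds in two steps. First, one recalls from the proof of Proposition \ref{thm:Baire} that $\mathscr P_{\textup{Blow-up}} = \bigcap_{m_0 \in \mathbb N} U_{m_0}$, where each $U_{m_0}$ is open (by continuity of $\mathfrak W_1$, $\mathfrak W_2$, $\Gamma \mathfrak W_2$, $\sigma_1$, $\sigma_2$, and of the flow $\Phi_\tau^\Gamma$) and dense (by Proposition \ref{prop:density}); hence $\mathscr P_{\textup{Blow-up}}$ is a dense $G_\delta$ in $\mathscr P$. Second, one invokes Corollary 3.10 of \cite{MR3236784}, which states precisely that a residual subset of a complete metric space $X$ has packing dimension equal to $\dim_P(X)$. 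The underlying mechanism, which distinguishes packing dimension from Hausdorff dimension, is the characterization
\begin{equation*}
\dim_P(A) = \inf\Big\{\sup_i \overline{\dim}_B(A_i) \colon A \subseteq \bigcup_{i \in \mathbb N} A_i \Big\}
\end{equation*}
via the modified upper box dimension: for any countable decomposition $A = \bigcup_i A_i$ of a residual set $A$, Baire's theorem forces at least one $\overline{A_i}$ to have non-empty interior in $X$, which gives $\overline{\dim}_B(A_i) = \overline{\dim}_B(\overline{A_i}) = \dim_P(X)$. Applied with $X = \mathscr P$, this yields $\dim_P(\mathscr P_{\textup{Blow-up}}) \geq 2$, and the reverse inequality is automatic from monotonicity.

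Since the proof is a direct citation of an off-the-shelf theorem, no serious obstacle is anticipated. The only point that warrants a brief verification is that $\mathscr P$ — as an open subset of $\mathbb R^2$ equipped with the restriction of the Euclidean metric — falls within the hypotheses of the cited corollary; this is immediate since one may work locally on any closed square $C \subset \mathscr P$, which is itself a compact metric space of packing dimension $2$, and pass to the supremum over such $C$ to conclude.
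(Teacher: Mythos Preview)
Your proof is correct and follows exactly the same approach as the paper, which simply cites Proposition~\ref{thm:Baire} together with \cite[Corollary~3.10]{MR3236784}. Your additional explanation of the underlying mechanism (via the modified upper box dimension characterization) and the verification that $\mathscr P$ satisfies the hypotheses are helpful elaborations, but the core argument is identical.
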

\section{Construction of the initial data}
\label{sec:initialdata}
Having constructed the set $\mathscr{P}_{\textup{Blow-up}}$, we will turn to the problem of showing blow-up.  We begin by fixing an arbitrary parameter \begin{align}\label{eq:fixedp}
\mathfrak p  = (\mathfrak m, \mathfrak a) \in \mathscr{P}_{\textup{Blow-up}}.
\end{align}
which we keep fixed through the rest of the paper, i.e.\ throughout  \cref{sec:initialdata}, \cref{sec:exterioranalysis}, \cref{sec:interior} and \cref{sec:mainthmkerr}. This also fixes the mass $M = \mathfrak m l/\sqrt{3}$ and angular momentum $a = \mathfrak a l/\sqrt{3}$.
As stated in the conventions in \cref{sec:conventions}, all constants appearing in $\lesssim$, $\gtrsim$, $\sim$ and $O$ will now depend on $\mathfrak p$ as fixed in \eqref{eq:fixedp} (and on  $l>0$ as fixed in \eqref{eq:defnofl}) throughout  \cref{sec:initialdata}, \cref{sec:exterioranalysis}, \cref{sec:interior} and \cref{sec:mainthmkerr}.

By construction of $\mathscr{P}_{\textup{Blow-up}}$ and since $\mathfrak p \in \mathscr{P}_{\textup{Blow-up}}$, there exists an infinite sequence
 \begin{align}\label{eq:mili} m_i\to\infty, \ell_i\to\infty \end{align} 
 with \begin{align}
 \label{eq:smallnessofwronskian}
 &|\mathfrak W [\uhplus,u_\infty](\omega = \omega_- m_i, m_i,\ell_i)| < e^{-\sqrt m_i}
 \end{align}
 and
 \begin{align}  \frac{\lambda_{m_i \ell_i}(a \omega_- m_i)}{m_i^2} \in ( \sigma_1(\mathfrak p), \sigma_2(\mathfrak p) ).\end{align}  
 Without loss of generality we also assume that all $m_i$ are taken sufficiently large, i.e.\ \begin{align}\label{eq:defnofm0}
 m_i \geq m_0 \end{align} for all $i \in \mathbb N$ and for a sufficiently large $m_0 = m_0(\mathfrak p) \in \mathbb N$ only depending on the choice of $\mathfrak p$. 
 
We will now carefully choose initial data for \eqref{eq:wavekerr} with compact support in  $K$, which we define in the following.
\begin{lemma}\label{lem:K}
There exists a compact interval $K  \subset (-\infty, \frac{\pi}{2}l)$, an $\epsilon >0$ and a constant $c >0$ such that for every $i\in \mathbb N$, there exists a subinterval $K_i = [r_i^\ast-\frac{c}{m_i}, r_i^\ast + \frac{c}{m_i}] \subset K$  with   \begin{align}
u_\infty^{\omega_-}  := u_\infty(\omega=\omega_- m_i, m_i,\ell_i, r^\ast) \geq \frac{\epsilon}{m_i}
\end{align}
for all $r^\ast \in K_i$.
Moreover,  we choose $K $ such that $\inf K > 3 r_+$. 
\begin{proof}
By \cref{def:uinfty}, $u_\infty^{\omega_-} = u_\infty(\omega=\omega_- m_i, m_i,\ell_i, r^\ast)$ is a solution to \eqref{eq:radialomega}, i.e.\ a solution to
\begin{align}
	-u'' + (m_i^2 V_{\textup{main}} + V_{1})u =0,
\end{align}
where 
\begin{align}
	&V_{\textup{main}} =  \frac{\Delta}{(r^2+a^2)^2} \left( \frac{\lambda_{m\ell}(a \omega_- m_i)}{m_i^2} + \omega_-^2 a^2 - 2  a \omega_- \Xi \right) -  ( \omega_- - \omega_r )^2,\\
	&V_{1} =   \frac{-\Delta^2 3r^2}{(r^2+a^2)^4} - \Delta \frac{5 \frac{r^4}{l^2} + 3 r^2\left(1 + \frac{a^2}{l^2} \right) - 4 M r + a^2}{(r^2+a^2)^3} - \frac{2 \Delta}{l^2} \frac{1}{r^2+a^2}
\end{align}
	with $u_\infty^{\omega_-} (l\pi/2) =0$ and ${u_\infty^{\omega_-} }^\prime(l\pi /2) =1 $. Since \begin{align}\frac{\lambda_{m_i \ell_i}(a \omega_- m_i)}{m_i^2} \in (\sigma_1(\mathfrak p) , \sigma_2(\mathfrak p)),\end{align}  there exists a $\delta >0$ and a $\tilde r_2^\ast$ such that  \begin{align}V_{\textup{main}} < - \delta \text{ for } r^\ast \in \left[ \tilde r_2^\ast, l\frac{\pi}{2} \right),\end{align} see \cref{prop:trappingexist}. Without loss of generality we can assume that $\tilde r_2^\ast > r^\ast (r=3r_+)$. In particular, for $m_i$ sufficiently large (take $m_0(\mathfrak p)$ possibly larger in \eqref{eq:defnofm0}), we have that \begin{align}  V_{\textup{main}} + V_1m_i^{-2} <- \delta 
	\end{align}
	 for $r^\ast \in [\tilde r_2^\ast, \frac{\pi}{2}l)$. Now, let $K  := [\tilde r_2^\ast, r_3^\ast] \subset (\tilde r_2^\ast, \frac{\pi}{2}l)$ be a compact subinterval for $r_3^\ast > r_2^\ast$ fixed, e.g.\ $r_3^\ast = \frac{1}{2} (r_2^\ast + l \frac{\pi}{2})$. In the region $[\tilde r_2^\ast , \frac{\pi}{2}l)$, the smooth potential $V_{\textup{main}}$ satisfies \begin{align}V_{\textup{main}} < - \delta,  |V'_{\textup{main}} |\lesssim 1 \text{ and }|V''_{\textup{main}}|\lesssim 1. 
	\end{align}
	Moreover, $|V_1|\lesssim 1$ uniformly in $[\tilde r_2^\ast , \frac{\pi}{2}l)$. This allows us to approximate $u_\infty^{\omega_-} $ via a WKB approximation. First, we introduce the error-control function \begin{align}F_\infty(r^\ast) := \int_{r^\ast}^{\frac \pi 2l} {|V_{\textup{main}}|}^{-\frac 14} \frac{\d^2 }{\d y^2}\left( {|V_{\textup{main}}|}^{-\frac 14}  \right) - \frac{V_{1}}{{|V_{\textup{main}}|}^{\frac 12}} \d y\end{align} 
	and note that $F_\infty(\frac{\pi}{2}l) = 0$. In view of the above bounds on $V_{\textup{main}}$ and $V_1$ we obtain 
	\begin{align}
		\mathcal{V}_{\tilde r_2^\ast, \frac{\pi}{2}l }(F_\infty) \lesssim 1.
	\end{align}
Hence,  using the boundary conditions $u_\infty^{\omega_-}  (\frac \pi 2 l) =0$, $\frac{\d }{\d r^\ast} u_\infty^{\omega_-} (\frac \pi 2 l) =1$ we obtain from \cite[Chapter~6, Theorem~2.2]{olver} that the solution $u_\infty^{\omega_-} $ is given as 
\begin{align}\label{eq:uinftyw-}
u_\infty^{\omega_-}  = \frac{A}{m_i |V_{\textup{main}}|^{\frac 14}} \sin\left(-m_i \int_{r^\ast}^{\frac \pi 2 l} \sqrt{|V_{\textup{main}}|} \d y \right) ( 1+ \epsilon_{u_\infty}),
\end{align}
where $ A =  |V^{-\frac 14}_{\textup{main}}(r^\ast =l \frac{\pi}{2})|$ satisfies $|A|\sim 1$ and $\epsilon_{u_\infty}$ satisfies $\epsilon_{u_\infty}\left( \frac{\pi}{2}l \right) = \epsilon_{u_\infty}'\left( \frac{\pi}{2}l \right) = 0$ as well as 
\begin{align}
	|\epsilon_{u_\infty}|,\frac{|\epsilon^\prime_{u_\infty}|}{2 m_i {|V_{\textup{main}}|}^{\frac 12}} \lesssim \frac{\mathcal{V}_{\tilde r_2^\ast,\frac{\pi}{2}l} (F_\infty)}{m_i}  \lesssim \frac{1}{m_i}.
\end{align}
Indeed, note that the condition $u_\infty^{\omega_-}  (\frac \pi 2 l) =0$, $\frac{\d }{\d r^\ast} u_\infty^{\omega_-} (\frac \pi 2 l) =1$  and $\epsilon_{u_\infty}\left( \frac{\pi}{2}l \right) =  \epsilon_{u_\infty}'\left( \frac{\pi}{2}l \right) = 0$ force $u_\infty^{\omega_-}$ to be of the form \eqref{eq:uinftyw-}.
Now, since $u_\infty^{\omega_-} $ oscillates with period proportional to $m_i$ in view of \eqref{eq:uinftyw-}, there exists a  compact subinterval $K_i \subset K$ of the form $K_i=[r_i^\ast - \frac{c}{m_i}, r_i^\ast+ \frac{c}{m_i}]$ for some $c>0$ such that for all $r^\ast \in K_i$, we have \begin{align}u_\infty^{\omega_-} (r^\ast, m_i,\ell_i,) \geq \frac{\epsilon}{m_i}.\end{align}
\end{proof}\label{lem:data}
\end{lemma}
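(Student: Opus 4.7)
The plan is to establish a Liouville--Green (WKB) asymptotic formula for $u_\infty^{\omega_-}$ on a compact region where the rescaled potential $V_{\textup{main}}$ is uniformly negative, and then to exploit the oscillatory factor.

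First, because $\lambda_{m_i\ell_i}(a\omega_- m_i)/m_i^2 \in (\sigma_1(\mathfrak p),\sigma_2(\mathfrak p))$, \cref{prop:trappingexist} produces a second turning point $r_2^\ast$ with $V_{\textup{main}} < 0$ on $(r_2^\ast, \tfrac{\pi}{2}l)$. Since the $(m_i,\ell_i)$-dependence of $V_{\textup{main}}$ enters only through the normalized eigenvalue, which ranges in the fixed compact interval $(\sigma_1(\mathfrak p),\sigma_2(\mathfrak p))$, I can pick $\delta>0$ and a point $\tilde r_2^\ast > \max(r_2^\ast, r^\ast(3r_+))$, independently of $i$, so that $V_{\textup{main}} \leq -\delta$ on $K := [\tilde r_2^\ast, r_3^\ast]$ for some $r_3^\ast \in (\tilde r_2^\ast, \tfrac{\pi}{2}l)$. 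Taking $m_i$ large enough (adjusting $m_0(\mathfrak p)$ if necessary) absorbs the lower order piece $V_1/m_i^2$, since $V_1$ is a fixed smooth bounded function on $K$.

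Second, I would apply the standard Liouville--Green approximation for second-order ODE's with a large parameter in the oscillatory regime (cf.\ \cite[Chapter 6, \S5]{olver}), using the initialization $u_\infty^{\omega_-}(\tfrac{\pi}{2}l)=0$, $(u_\infty^{\omega_-})'(\tfrac{\pi}{2}l)=1$. Introducing the error-control function
\begin{equation*}
F_\infty(r^\ast) := \int_{r^\ast}^{\frac{\pi}{2}l} |V_{\textup{main}}|^{-\frac 14}\frac{\d^2}{\d y^2}\bigl(|V_{\textup{main}}|^{-\frac 14}\bigr) - \frac{V_1}{|V_{\textup{main}}|^{\frac 12}}\, \d y,
\end{equation*}
the uniform lower bound $|V_{\textup{main}}|\geq \delta$ on $K$ together with uniform $C^2$-bounds on $V_{\textup{main}}$ and $L^\infty$-bounds on $V_1$ (all independent of $i$) give $\mathcal V_{\tilde r_2^\ast,\frac{\pi}{2}l}(F_\infty)\lesssim 1$. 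The WKB theorem then yields
\begin{equation*}
u_\infty^{\omega_-}(r^\ast) = \frac{A}{m_i\,|V_{\textup{main}}|^{\frac 14}}\sin\!\left(m_i\int_{r^\ast}^{\frac{\pi}{2}l}\sqrt{|V_{\textup{main}}|}\,\d y\right)(1+\epsilon_{u_\infty}),
\end{equation*}
with $|A|\sim 1$ and $|\epsilon_{u_\infty}|\lesssim 1/m_i$ uniformly in $i$ on $K$.

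Third, since the phase inside the sine changes at rate $\sim m_i$ as $r^\ast$ traverses $K$, in any sub-interval of $K$ of length $\gtrsim 1/m_i$ the sine passes through a value of modulus one. I pick $r_i^\ast\in K$ so that the phase equals $\pi/2$ modulo $\pi$; then for $c>0$ sufficiently small and $m_i$ sufficiently large, the sine stays $\geq 1/2$ on $K_i:=[r_i^\ast-c/m_i,r_i^\ast+c/m_i]$. Combined with the prefactor $1/(m_i|V_{\textup{main}}|^{1/4})\sim 1/m_i$ and $|\epsilon_{u_\infty}|\lesssim 1/m_i$, this yields $u_\infty^{\omega_-}\geq \epsilon/m_i$ on $K_i$ for some $\epsilon>0$ independent of $i$.

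The only real obstacle is \emph{uniformity in $i$}: one must verify that $K$, $\delta$, the WKB error bounds, and the constants $\epsilon, c$ can all be chosen independently of $i$. This is ensured because $K$ is fixed, $V_{\textup{main}}$ depends on $i$ only through $\lambda_{m_i\ell_i}/m_i^2\in(\sigma_1,\sigma_2)$ which lies in a compact set, and $V_1$ is $i$-independent, so all constants in $\lesssim$ can be taken uniformly over $i$.
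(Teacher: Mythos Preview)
Your proposal is correct and follows essentially the same route as the paper: both arguments fix a compact region beyond the second turning point where $V_{\textup{main}}\leq -\delta$ uniformly in $i$, apply the Liouville--Green/WKB approximation from \cite[Chapter~6, \S5]{olver} with the same error-control function $F_\infty$, and then exploit the resulting oscillatory sine factor of frequency $\sim m_i$ to locate intervals $K_i$ of width $\sim 1/m_i$ on which $u_\infty^{\omega_-}\geq \epsilon/m_i$. Your explicit discussion of uniformity in $i$ (via the compactness of the range of $\lambda_{m_i\ell_i}/m_i^2$) makes transparent a point the paper leaves implicit.
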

We are now in the position to define our initial data which will be supported in the compact set $K$ as defined in \cref{lem:data}. We assume without loss of generality that all $m_i$ are sufficiently large such that we can apply \cref{lem:data}. First, let $\chi\colon \mathbb R \to [0,1]$ be a smooth bump function satisfying $\chi =0$ for $|x| \geq 1$ and $\chi =1$ for $|x|\leq \frac{1}{2}$. Then, for $i\in \mathbb N$ we set \begin{align} \label{defn:chii}\chi_i\colon (-\infty, l \pi/2) \to [0,1], r^\ast \mapsto \chi(c^{-1}{m_i} (r^\ast - r_i^\ast)).\end{align}

\begin{definition}\label{defn:initialdata}
Let $m_i$, $\ell_i$ be as in \eqref{eq:mili}. For each $i\in \mathbb N$, let $K_i\subset K$ be the associated subinterval as specified in \cref{lem:K} and let $\chi_i$ defined as in \eqref{defn:chii}. Then, we define initial data on $\Sigma_0$ as
	\begin{align}
&	\psi\restriction_{\Sigma_0} = \Psi_0 := 0 ,\\
	& n_{\Sigma_0}\psi\restriction_{\Sigma_0}(r,\theta,\phi)  = \Psi_1 (r,\theta,\phi) := \sum_{i \geq i_0} e^{-m_i^{\frac 13}} \psi_i(r,\theta,\phi),
	\end{align}
	where
	\begin{align}
		 \psi_i(r,\theta, \phi) =   \frac{\sqrt{r^2+a^2} \chi_i(r^\ast(r))}{-2\Sigma \sqrt{-g^{tt}}(r,\theta) u_\infty^{\omega_-} (r^\ast(r)) } S_{m_i\ell_i}(a\omega_- m_i, \cos\theta) e^{i m_i \phi}.
	\end{align}
\end{definition}

Having set up the initial data we proceed to 
\begin{definition}\label{defninitialdata} Throughout the rest of   \cref{sec:exterioranalysis}, \cref{sec:interior} and \cref{sec:mainthmkerr} we define  $\psi\in C^\infty(\mathcal{M}_{\textup{Kerr--AdS}} \setminus \mathcal{CH})$ to be the unique smooth solution to  \eqref{eq:wavekerr} of the mixed Cauchy-boundary value problem with vanishing data on $\mathcal{H}_L \cup \mathcal{B}_{\mathcal{H}}$, Dirichlet boundary conditions at infinity and the  initial data $(\Psi_0,\Psi_1) \in C_c^\infty(\Sigma_0)$ posed on $\Sigma_0$ specified in \cref{defn:initialdata}. This is well-posed in view of \cref{thm:wellposedanddecay}.\end{definition}
\begin{rmk}
By a domain of dependence argument one can also view $\psi$ as arising from smooth and compactly supported initial data posed on a spacelike hypersurface connecting both components of $\mathcal I$ as depicted in \cref{fig:adsintro}.
\end{rmk}
\begin{rmk}We note that our initial data are only supported on the positive azimuthal frequencies $m = m_i$. The same will apply to the arising solution $\psi$.\end{rmk}

In the following we define the quantity $a_\mathcal H$ from our initial data. This $a_\mathcal H$ will turn out (at least in a limiting sense) to be the (generalized) Fourier transform of the solution $\psi\restriction_{\mathcal H}$  along the event horizon. 
\begin{definition}\label{defn:amathcalhr}  For the initial data $ \Psi_0,\Psi_1 $ as in \cref{defn:initialdata} we define
	\begin{align} \nonumber
a_{\mathcal H}(\omega , m,\ell) := \frac{1}{\sqrt{2\pi}\mathfrak W[\uhplus,u_\infty]}  &\int_{r_+}^{\infty}\int_{\mathbb S^2}	\bigg\{ \frac{\Sigma}{\sqrt{r^2+a^2}} u_\infty    e^{-im\phi} S_{m\ell}(a \omega,\cos\theta) \\
& \times \left( -2 \sqrt{-g^{tt}} \Psi_1 - i \omega g^{tt} \Psi_0   \right)\bigg\} \d \sigma_{\mathbb S^2} \d r .
	\end{align}   
\end{definition}
Now, we will show that $a_{\mathcal H}$ has ``peaks'' at the interior scattering poles $\omega = \omega_- m$ for infinitely many $m$. This is a consequence of our careful  choice of initial data. We formulate this in
 \begin{lemma}\label{eq:defnah}
For $a_{\mathcal H }$ as in \cref{defn:initialdata} we have 
  \begin{align}
&a_{\mathcal{H} } (\omega = \omega_- m, m ,\ell) = a_{\mathcal H }(\omega= \omega_- m_i, m_i,\ell_i) \delta_{mm_i } \delta_{\ell\ell_i} ,\end{align}
where \begin{align}\label{eq:exponentiallyhigh}
& |a_{\mathcal H }(\omega=\omega_- m_i, m_i,\ell_i)|    \gtrsim e^{\frac 12 \sqrt{m_i}}
 	\end{align}
 	for $(m_i,\ell_i)$  as in \eqref{eq:mili}.
 	\begin{proof}
 	As $\Psi_0 = 0$, we compute \begin{align}\nonumber
 			&\int_{r_+}^\infty \int_{\mathbb S^2}\frac{\Sigma}{ \sqrt{r^2+a^2}} u_\infty^{\omega_-}  e^{-i m \phi} S_{m\ell}(a\omega_- m,\cos\theta) (-2 \sqrt{-g^{tt}}) \Psi_1 \d \sigma_{\mathbb S^2} \d r \\ 
 			& =  e^{- m_i^{\frac 13}}\delta_{m m_i}\delta_{\ell \ell_i} \int_{r_+}^\infty \chi_i(r) \d r \sim  e^{- m_i^{\frac 13}}m_i^{-1} \delta_{m m_i}\delta_{\ell \ell_i} .
 			\label{eq:superpolynomialdecayofinitialdata}
 		\end{align}
 		To conclude we use that from \eqref{eq:smallnessofwronskian} we have \begin{align}|\mathfrak W[\uhplus, u_\infty] ( \omega = \omega_- m_i, m_i, \ell_i)| < e^{- \sqrt m_i} .\end{align}
 	\end{proof}
 	 \end{lemma}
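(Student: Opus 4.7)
The plan is to exploit the deliberate design of the initial data in \cref{defn:initialdata}, where each $\psi_i$ is engineered precisely to cancel the angular and radial weights that appear in the definition of $a_{\mathcal H}$ when evaluated at the interior scattering frequency $\omega=\omega_-m_i$. Since $\Psi_0=0$, only the term $-\sqrt{-g^{tt}}\,\Psi_1$ contributes to the integrand. First, I would carry out the angular integration: the $\phi$-dependence of $\psi_j$ is $e^{im_j\phi}$, so multiplied by $e^{-im\phi}$ and integrated in $\phi\in[0,2\pi]$, only the $m_j=m$ term survives; this forces $m=m_i$ for some $i$. Within that term the spheroidal-harmonic factor becomes $S_{m_i\ell}(a\omega_-m_i,\cos\theta)\,S_{m_i\ell_i}(a\omega_-m_i,\cos\theta)$ at identical parameters, and integration against $\sin\theta\,\d\theta$ gives $\delta_{\ell\ell_i}$ by orthonormality of the spheroidal basis. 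This establishes the Kronecker-delta structure.

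Next I would compute the surviving diagonal entry. When $\omega=\omega_-m_i$ and $(m,\ell)=(m_i,\ell_i)$, the factor $u_\infty$ in the integrand coincides with $u_\infty^{\omega_-}$ that appears in the denominator of $\psi_i$, and the factors $\Sigma/\sqrt{r^2+a^2}$ and $\sqrt{-g^{tt}}$ cancel exactly against those in $\psi_i$. What remains under the radial integral is precisely the cutoff $\chi_i(r^\ast(r))$, whose support has width $2c/m_i$ by \cref{lem:K}, so that $\int_{r_+}^\infty \chi_i\,\d r \sim 1/m_i$. Including the weight $e^{-m_i^{1/3}}$ from $\Psi_1$ and the $1/\sqrt{2\pi}$ normalization yields a numerator of size $\sim e^{-m_i^{1/3}}/m_i$.

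The exponential lower bound then comes from the defining non-Diophantine property of $\mathscr P_{\textup{Blow-up}}$, which the fixed parameter $\mathfrak p$ satisfies along the sequence $(m_i,\ell_i)$: namely $|\mathfrak W[\uhplus,u_\infty](\omega_-m_i,m_i,\ell_i)|<e^{-\sqrt{m_i}}$ from \eqref{eq:smallnessofwronskian}. Dividing the numerator by this Wronskian gives $|a_{\mathcal H}|\gtrsim e^{\sqrt{m_i}-m_i^{1/3}}/m_i$, which exceeds $e^{\sqrt{m_i}/2}$ once $m_i$ is beyond a $\mathfrak p$-dependent threshold (a possibility already absorbed into the choice of $m_0(\mathfrak p)$ in \eqref{eq:defnofm0}).

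There is no real obstacle here; the lemma is essentially a verification that the construction of the data does what it was designed to do. The only care needed is bookkeeping the exact cancellation of geometric prefactors against those in $\psi_i$, and checking that $\psi_i$ (which involves division by $u_\infty^{\omega_-}$) is a well-defined smooth compactly supported function, which is guaranteed precisely because $\chi_i$ is supported in $K_i$, where \cref{lem:K} yields the quantitative lower bound $u_\infty^{\omega_-}\geq \epsilon/m_i$.
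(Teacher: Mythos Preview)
Your proposal is correct and follows essentially the same approach as the paper's proof, just with more detail spelled out: you make explicit the $\phi$-orthogonality and spheroidal-harmonic orthogonality that produce the Kronecker deltas, and the exact cancellation of the geometric and radial factors against those built into $\psi_i$, whereas the paper simply records the result of the integral in one line and then invokes \eqref{eq:smallnessofwronskian}.
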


\section{Exterior analysis: From the initial data to the event horizon}
\label{sec:exterioranalysis}
\subsection{Cut-off in time and inhomogeneous equation}
\label{sec:cutoff}
We will now consider the $\psi$ as defined in \cref{sec:initialdata}. The goal of this section is to determine the Fourier transform of $\psi$ along the event horizon. To do so we will first take   a time cut-off of $\psi$. To do so, we let  \begin{align}\label{eq:defnofchi} \chi\colon \mathbb R \to [0,1]\end{align} be a smooth and monotone cut-off function with $\chi(x) =0$ for $x \leq 0$, $\chi(x) = 1$ for $x \geq 1$. Now, define $\chi_\epsilon^R(v):= \chi(v/\epsilon) \chi(R - v) $ such that $\chi^R_\epsilon \to \mathbb{1}_{(0,\infty)}$ pointwise as $\epsilon \to 0$ and $R\to\infty$. Moreover, \begin{align}\label{eq:distributions} \partial_v (\chi (v/\epsilon)  )\to \delta_0(v) \text{ and }\partial^2_v ( \chi(v/\epsilon) ) \to  \delta_0'(v)\end{align} as $\epsilon \to 0 $ in the sense of distributions.
	On $\mathcal{R}\cup \mathcal{H}_{R}$ we set 
	\begin{align}\psi^R_{\epsilon}(v,r,\theta,\tilde \phi_+) := \psi (v ,r, \theta,\tilde \phi_+)  \chi^R_\epsilon(v) \text{ and }\psi^R := \psi(v,r,\theta,\tilde \phi_+) \chi(R-v)\label{defn:psiRepsilon}
	\end{align}
	and note that $\psi^R_\epsilon$ is smooth and compactly supported in $v$ and satisfies the inhomogeneous equation
	\begin{align} \label{eq:psiepsilonrinhom}
		\Box_{g_{\textup{Kerr--AdS}}} \psi^R_\epsilon + \frac{2}{l^2} \psi^R_\epsilon = F_\epsilon^R := 2 (\partial_{v} \chi^R_\epsilon )(\nabla v)\psi + \psi \Box_{g_{\textup{Kerr--AdS}}} \chi^R_\epsilon.
	\end{align}
	Analogously, $\psi^R$ satisfies the inhomogeneous equation with \begin{align} 	\Box_{g_{\textup{Kerr--AdS}}} \psi^R + \frac{2}{l^2} \psi^R = F^R := 2 (\partial_{v} \chi^R )(\nabla v)\psi + \psi \Box_{g_{\textup{Kerr--AdS}}} \chi^R.\end{align}
	As in \cite[Section~5.1]{gustav} we have \begin{align}&|F^R|^2 r^2 \lesssim \frac{1}{r^2} |\partial_v \psi|^2 + r^2 |\partial_r \psi|^2 + |\slashed \nabla \psi|^2 + |\psi|^2,\\
	& |F_\epsilon^R|^2 r^2 \lesssim  \frac{1}{\epsilon^2 r^2} |\partial_v \psi|^2 + r^2 |\partial_r \psi|^2 + |\slashed \nabla \psi|^2 + \frac{1}{\epsilon^2} |\psi|^2.\end{align}
In view of our coordinates, we also have that for each $r>r_+$, the function $\psi_\epsilon^R(t,r,\theta,\phi)$ is compactly supported in $\mathbb R_t$ with values in $C^\infty( \mathbb S^2)$. 
  This allows us to apply Carter's separation of variables to express $\psi_\epsilon^R$ as 
\begin{align}
&	\psi_\epsilon^R (t,r,\theta,\phi) = \frac{1}{\sqrt {2\pi}} \int_{\mathbb R} \mathfrak{F}[\psi^R_\epsilon](\omega,r,\theta,\phi)  e^{-i \omega t}\d \omega,
\end{align}
where
for each $r > r_+$, \begin{align}
	\mathfrak{F}[\psi^R_\epsilon](\omega,r,\theta,\phi) := \frac{1}{\sqrt{2\pi}}  \int_{\mathbb R} \psi^R_\epsilon(t,r,\theta,\phi)  e^{i \omega t}\d t
\end{align}
is  a Schwartz function on $\mathbb R_\omega$ with values in $C^\infty(\mathbb S^2)$. For the definition of Fr\'echet~space-valued Schwartz functions refer to  \cite[p.~533]{treves} or to \cite[Definition~3]{seminar_schwartz} by L.\ Schwartz himself. Note however that we will only use Fr\'echet space-valued Schwartz functions in a qualitative way and in view of this we will not go into more details.
  We further decompose $\mathfrak{F}[\psi^R_\epsilon](\omega,r,\theta,\phi)$ in (generalized) spheroidal harmonics
\begin{align}
	\mathfrak{F}[\psi^R_\epsilon](\omega,r,\theta,\phi)= \sum_{m\ell} \hat{\psi^R_\epsilon}(\omega,m,\ell,r) S_{m\ell}(a \omega, \cos\theta) e^{i m \phi}, \label{eq:hatpsi}
\end{align}
where 
\begin{align}
\hat{\psi^R_\epsilon}(\omega,m,\ell,r) := \int_{\mathbb S^2} 	\mathfrak{F}[\psi^R_\epsilon](\omega,r,\theta,\phi) e^{-im\phi} S_{m\ell}(a\omega,\cos\theta) \d \sigma_{\mathbb S^2} \label{eq:defnpsiepsilonR}
\end{align} is smooth in $\omega$ and $r> r_+$ for fixed $m$ and $\ell$ and moreover  \begin{align}\hat{\psi^R_\epsilon}  \in L^2(\mathbb R_\omega \times \mathbb Z_m \times \mathbb Z_{\ell \geq |m|}; C^\infty(\tilde r ,\infty) )\end{align}
in view of Plancherel's theorem for every $\tilde r > r_+$. Equivalently, we have 
\begin{align}\label{eq:524}
	\hat{\psi^R_\epsilon}(\omega,m,\ell,r) = \frac{1}{\sqrt {2\pi}} \int_{\mathbb R}  \int_{\mathbb S^2} \psi^R_\epsilon(t,r,\theta,\phi) e^{i \omega t} e^{-i m \phi}S_{m\ell}(a \omega, \cos \theta) \d \sigma_{\mathbb S^2} \d t
\end{align}
for each $r>r_+$. Now, note that $\psi^R_\epsilon(v,r,\theta,\tilde \phi_+)$ is smooth and compactly supported on $\mathbb R_v$ all the way to $r=r_+$ and thus, takes values in the space $C^\infty( [r_+,\infty)_r \times \mathbb S^2_{\theta,\tilde \phi_+})$. After a change of coordinates in \eqref{eq:524} we obtain that
\begin{align}\hat \psi^R_\epsilon(\omega,r,m,\ell) e^{i(\omega-\omega_+m) r^\ast} \label{eq:psihatmal}\end{align} extends smoothly to $r=r_+$ ($r^\ast \to -\infty$). 
Similarly to the above, we have
\begin{align}
	\widehat{\Sigma F^R_\epsilon}(\omega,m ,\ell, r) =\frac{1}{\sqrt {2\pi}} \int_{\mathbb R}  \int_{\mathbb S^2} \Sigma F^R_\epsilon(t,r,\theta,\phi) e^{i \omega t} e^{-i m \phi}S_{m\ell}(a \omega,\cos \theta)\d \sigma_{\mathbb S^2} \d t.
\end{align}
Now, we define 
\begin{align}\label{eq:defnuepsilon}
	&u_\epsilon^R = u_\epsilon^R(\omega, m,\ell,r):= (r^2 + a^2)^{\frac 12} \hat{\psi_\epsilon^R}(\omega, m,\ell,r) \end{align}and
	\begin{align}\label{eq:defnH}
	&H_\epsilon^R (\omega, m,\ell,r):= \frac{\Delta}{(r^2+a^2)^{\frac 32 }} \widehat {\Sigma F_\epsilon^R} (\omega, m,\ell,r).
\end{align}
Since $\hat{\psi_\epsilon^R}$ defined in \eqref{eq:defnpsiepsilonR} is smooth, we have that $u_\epsilon^R$ as defined in \eqref{eq:defnuepsilon} is a smooth function of $\omega$ and $r >r_+$. Moreover, we can also differentiate under the integral sign in \eqref{eq:524} and since $\psi_\epsilon^R$ satisfies \eqref{eq:psiepsilonrinhom}, we have that $u_\epsilon^R$ satisfies the inhomogeneous radial o.d.e.\
\begin{align}\label{eq:inhomogeneousterm}
	-{u_\epsilon^R}'' + (V -\omega^2)u_\epsilon^R = H_\epsilon^R
\end{align}
pointwise for each $\omega,m,\ell$ on $r^\ast \in (-\infty, \frac{\pi}{2} l]$, where we recall that $' = \frac{\d }{\d r^\ast}$.

\subsection{Estimates for the inhomogeneous radial o.d.e.}
\begin{lemma}\label{lem:boundaryconditionsonhandu}
The solution $u_\epsilon^R$ as defined in \eqref{eq:defnuepsilon} satisfies the boundary conditions 
\begin{align}
\label{eq:dirichlet}
&u_\epsilon^R = 0 \text{ for } r^\ast = \frac{\pi}{2}l,\\
&{u_\epsilon^R}^\prime + i (\omega - \omega_+m) u_\epsilon^R = O(\Delta) \text{ as } r^\ast \to -\infty\label{eq:outgoing}
\end{align}
and   the inhomogeneity $H_\epsilon^R$ defined in \eqref{eq:inhomogeneousterm} also satisfies
\begin{align}
\label{eq:boundarycondH1}
& H_\epsilon^R =0 \text{ for } r^\ast = \frac{\pi}{2}l, \\
&{H_\epsilon^R}^\prime + i (\omega - \omega_+m) H_\epsilon^R = O(\Delta) \text{ as } r^\ast \to -\infty.
\label{eq:boundarycondH2}
\end{align}
\begin{proof}
To see \eqref{eq:dirichlet} note that 
\begin{align}
|u_\epsilon^R| \leq (r^2 +a^2)^{\frac 12}| \hat{\psi^R_\epsilon} | = \left|\int_{\mathbb R} \int_{\mathbb S^2}  (r^2+a^2)^{\frac 12}  {\psi_\epsilon^R}(r,t,\theta,\phi) e^{i\omega t} S_{m\ell}(a \omega, \cos\theta) e^{- i m \phi}  \d \sigma \d t\right|.
\end{align}
In view of the compact support of $\psi_\epsilon^R$ in $t$, it suffices to show that the pointwise limit \begin{align}\lim_{r\to\infty} r\psi^R_\epsilon(t,r,\theta,\phi)=0\end{align} holds true. But this follows from the fact that $\psi^R_\epsilon \in CH^1_{\textup{AdS}}$---a consequence of the well-posedness in \cref{thm:wellposedanddecay}. 

For \eqref{eq:outgoing}, we use \eqref{eq:psihatmal} to see that $\partial_r (\hat \psi^R_\epsilon(\omega,r,m,\ell) e^{i(\omega-\omega_+m) r^\ast})$ extends smoothly to $r=r_+$. Thus, using $\partial_r=  \frac{r^2+a^2}{\Delta} \partial_{r^\ast} $, we  infer that
\begin{align}
{u_\epsilon^R}^\prime + i (\omega - \omega_+m) u_\epsilon^R = O(\Delta) \text{ as } r^\ast \to -\infty.
\end{align}
Analogously, we obtain \eqref{eq:boundarycondH1} and \eqref{eq:boundarycondH2}.
\end{proof}
\end{lemma}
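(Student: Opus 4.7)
The plan is to deduce all four statements from the smoothness of $\psi_\epsilon^R$ in the two regular coordinate charts: the Kerr--AdS-star chart near $\mathcal{I}$ (giving the Dirichlet condition) and the ingoing Eddington--Finkelstein-like chart $(v,r,\theta,\tilde\phi_+)$ near $\mathcal H_R$ (giving the outgoing condition), together with the fact that both $\psi_\epsilon^R$ and $F_\epsilon^R$ are \emph{compactly supported in $t$} so that the Fourier integral and all limits/differentiations commute freely.

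For the Dirichlet condition \eqref{eq:dirichlet}, I would start from the bound
\[
|u_\epsilon^R(\omega,m,\ell,r)| \lesssim \int_{\mathbb R}\!\!\int_{\mathbb S^2} r\,|\psi_\epsilon^R(t,r,\theta,\phi)|\,\d\sigma_{\mathbb S^2}\,\d t
\]
and note that the $t$-integral is over a fixed compact set independent of $r$ (determined by $\chi_\epsilon^R$). By \cref{thm:wellposedanddecay}, $\psi_\epsilon^R\in CH_{\mathrm{AdS}}^1$; the definition of $H_{\mathrm{AdS}}^{1,-2}$ together with Hardy's inequality at $\mathcal I$ gives $\lim_{r\to\infty} r\psi_\epsilon^R(t,r,\theta,\phi)=0$ uniformly on the compact $t$-support. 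Dominated convergence then yields $u_\epsilon^R\to 0$ as $r^\ast\to\tfrac{\pi}{2}l$.

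For the outgoing condition \eqref{eq:outgoing}, the key observation is the change of coordinates $t=v-r^\ast\chi_v$, $\phi=\tilde\phi_+-\omega_+r^\ast\chi_v$ valid for $r$ near $r_+$, which absorbs all singular behavior at the horizon into an explicit phase. Concretely, in \eqref{eq:524} I would substitute these identities to rewrite
\[
\hat\psi_\epsilon^R(\omega,m,\ell,r)\,e^{i(\omega-\omega_+m)r^\ast} = \frac{1}{\sqrt{2\pi}}\int_{\mathbb R}\int_{\mathbb S^2} \psi_\epsilon^R(v,r,\theta,\tilde\phi_+)\,e^{i\omega v}e^{-im\tilde\phi_+}\,S_{m\ell}(a\omega,\cos\theta)\,\d\sigma_{\mathbb S^2}\,\d v
\]
(for $r$ close enough to $r_+$ that $\chi_v\equiv 1$). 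Because $\psi_\epsilon^R$ is smooth and compactly supported in $v$ up to and through $r=r_+$, the right-hand side is manifestly a smooth function of $r$ on $[r_+,r_++\eta]$. Differentiating this identity, using $\partial_r=\tfrac{r^2+a^2}{\Delta}\partial_{r^\ast}$, and multiplying through by $\Delta$, gives
\[
\partial_{r^\ast}\!\bigl(u_\epsilon^R\,e^{i(\omega-\omega_+m)r^\ast}\bigr) = (r^2+a^2)^{-1/2}\,\Delta\cdot(\text{smooth up to }r_+),
\]
from which \eqref{eq:outgoing} follows by expanding the product rule.

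For the statements \eqref{eq:boundarycondH1}--\eqref{eq:boundarycondH2} on $H_\epsilon^R$, I would argue identically: from the defining formula for $F_\epsilon^R$ in \eqref{eq:psiepsilonrinhom}, the cutoff factors $\partial_v\chi_\epsilon^R$ and $\Box\chi_\epsilon^R$ are smooth functions supported in $\{0\le v\le R\}$, so $\Sigma F_\epsilon^R$ inherits the same smoothness and compact $v$-support properties as $\psi_\epsilon^R$ in both charts. The prefactor $\Delta/(r^2+a^2)^{3/2}$ in the definition \eqref{eq:defnH} of $H_\epsilon^R$ then supplies an extra factor of $\Delta$ at the horizon, yielding \eqref{eq:boundarycondH2} after the same Fourier manipulation as above; and it supplies an extra $r^{-3}$ decay at infinity, easily giving \eqref{eq:boundarycondH1} via the $CH_{\mathrm{AdS}}^1$-trace argument used for $u_\epsilon^R$. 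The main obstacle is really just bookkeeping: verifying that one can legitimately commute the $r^\ast$-derivative past the Fourier integral and the angular decomposition, which is justified by the smoothness and compact support of $\psi_\epsilon^R$ in $v$ for each fixed $r$ in a neighborhood of $r_+$ and respectively of $r=\infty$.
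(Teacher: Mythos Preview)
Your proposal is correct and follows essentially the same approach as the paper: both arguments deduce \eqref{eq:dirichlet} from $\psi_\epsilon^R\in CH^1_{\mathrm{AdS}}$ together with compact $t$-support, and both obtain \eqref{eq:outgoing} by noting that $\hat\psi_\epsilon^R\,e^{i(\omega-\omega_+m)r^\ast}$ extends smoothly to $r=r_+$ (this is exactly \eqref{eq:psihatmal}) and then converting $\partial_r$ to $\partial_{r^\ast}$ via $\partial_r=\tfrac{r^2+a^2}{\Delta}\partial_{r^\ast}$. The paper simply writes ``analogously'' for $H_\epsilon^R$, while you spell out that the extra $\Delta$-prefactor in \eqref{eq:defnH} makes the horizon condition immediate; this is a harmless elaboration.
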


\begin{lemma}\label{prop:microlocalhorizontrace}
	We represent $u_\epsilon^R$ as \begin{align}
		u_\epsilon^R (r^\ast) = \frac{1}{\mathfrak{W}[\uhplus,u_\infty]} \left\{\uhplus \int_{r^\ast}^{\frac \pi 2} u_\infty H^R_\epsilon \d y + u_\infty \int_{-\infty}^{r^\ast} \uhplus H^R_\epsilon \d y \right\}\label{eq:rep}.
	\end{align}
	Moreover, \begin{align} \lim_{r_\ast \to -\infty} u^R_\epsilon (r^\ast)e^{i (\omega - \omega_+ m) r^\ast}  = 	a^R_{\epsilon,\mathcal H}, \label{eq:microlocalhorizontrace1} \end{align}  
	where $a_{\epsilon, \mathcal{H}}^R$  is defined as
	\begin{align}
		a^R_{\epsilon,\mathcal H} :=  \frac{1}{\mathfrak{W}[\uhplus,u_\infty]}  \int_{-\infty }^{\frac \pi 2} u_\infty H_\epsilon^R \d y  . \label{eq:microlocalhorizontrace}
	\end{align}
\end{lemma}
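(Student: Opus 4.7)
The plan is to proceed by a variation-of-parameters computation for the boundary value problem \eqref{eq:inhomogeneousterm} equipped with the conditions recorded in \cref{lem:boundaryconditionsonhandu}. First, I would check that the right-hand side $u_p$ of \eqref{eq:rep} is a particular solution of the inhomogeneous radial o.d.e. Differentiating $u_p$ once, the two boundary terms coming from the inner limits of the two integrals cancel exactly, so that
\begin{align*}
u_p^\prime = \frac{{\uhplus}^\prime}{\mathfrak W[\uhplus,u_\infty]} \int_{r^\ast}^{\pi l/2} u_\infty H_\epsilon^R\, \d y + \frac{u_\infty^\prime}{\mathfrak W[\uhplus,u_\infty]} \int_{-\infty}^{r^\ast} \uhplus H_\epsilon^R\, \d y.
\end{align*}
Differentiating again and using that $\uhplus$ and $u_\infty$ both satisfy the homogeneous equation $-u^{\prime\prime}+(V-\omega^2)u=0$, one verifies $-u_p^{\prime\prime}+(V-\omega^2) u_p = H_\epsilon^R$ modulo the sign convention for $\mathfrak W$ fixed in the paper. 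The underlying Wronskian $\mathfrak W[\uhplus,u_\infty]$ is nonzero since the two endpoint solutions are linearly independent.

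Second, I would verify the two boundary conditions \eqref{eq:dirichlet}--\eqref{eq:outgoing}. At $r^\ast = \frac{\pi}{2}l$ the factor $u_\infty$ vanishes (\cref{def:uinfty}) while the first integral in \eqref{eq:rep} is empty, so $u_p(\pi l /2)=0$. At the horizon end, the key input is \cref{lem:boundaryconditionsonhandu}, which furnishes $H_\epsilon^R = O(\Delta)$ as $r^\ast\to -\infty$. Since $|\uhplus|$ and $|u_\infty|$ are uniformly bounded at $-\infty$ (each is a linear combination of $\uhplus$ and $\uhminus$ asymptotically), and $\Delta$ decays exponentially like $e^{2\kappa_+ r^\ast}$, both $\uhplus H_\epsilon^R$ and $u_\infty H_\epsilon^R$ are integrable near $-\infty$, while integrability near $\pi l/2$ is immediate by smoothness. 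A direct computation from \eqref{eq:rep} then shows that $u_p^\prime + i(\omega-\omega_+m)u_p$ is $O(\Delta)$, matching the outgoing condition \eqref{eq:outgoing}. Because the homogeneous BVP (Dirichlet at $\pi l/2$ combined with the $O(\Delta)$ outgoing condition at the horizon) admits only the trivial solution whenever $\mathfrak W[\uhplus, u_\infty]\neq 0$, uniqueness forces $u_\epsilon^R = u_p$, which is the first claim.

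Third, to extract the trace \eqref{eq:microlocalhorizontrace1} I would multiply \eqref{eq:rep} by $e^{i(\omega-\omega_+ m)r^\ast}$ and send $r^\ast\to -\infty$. By definition $\uhplus(r^\ast) e^{i(\omega-\omega_+m)r^\ast}\to 1$, while the integral $\int_{r^\ast}^{\pi l/2} u_\infty H_\epsilon^R\,\d y$ converges to $\int_{-\infty}^{\pi l/2} u_\infty H_\epsilon^R\,\d y$ by dominated convergence and the integrability just established. This yields the first summand $a_{\epsilon,\mathcal H}^R$. For the second summand, $u_\infty(r^\ast) e^{i(\omega-\omega_+m)r^\ast}$ is bounded (combination of $1$ and a rapidly oscillating exponential), whereas $\int_{-\infty}^{r^\ast}\uhplus H_\epsilon^R\,\d y = O(e^{2\kappa_+ r^\ast})\to 0$ in view of $H_\epsilon^R=O(\Delta)$. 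Hence the second term vanishes in the limit, giving \eqref{eq:microlocalhorizontrace1}.

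The only delicate point I expect is the convergence/integrability near $r^\ast=-\infty$: one must exploit the exponential suppression $H_\epsilon^R=O(\Delta)$ carefully to compensate for the merely bounded oscillations of $\uhplus$ and $u_\infty$, and, for the extraction of the limit, to show that the contribution of the $u_\infty$-term vanishes at a rate governed by $\Delta$. Everything else is a standard variation-of-parameters/Wronskian manipulation combined with the boundary conditions recorded in \cref{lem:boundaryconditionsonhandu}.
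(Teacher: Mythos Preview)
Your approach is essentially the paper's: variation of parameters plus the boundary conditions from \cref{lem:boundaryconditionsonhandu}, followed by passing to the limit $r^\ast\to-\infty$. The paper is terser (it just says ``a direct computation shows \eqref{eq:rep}'') and for the vanishing of the second term it writes a Cauchy--Schwarz estimate after changing variables to $r$, but the content is the same.

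Two small corrections to your write-up. First, the bound $H_\epsilon^R=O(\Delta)$ does \emph{not} come from \cref{lem:boundaryconditionsonhandu}; that lemma only gives ${H_\epsilon^R}'+i(\omega-\omega_+ m)H_\epsilon^R=O(\Delta)$. The pointwise $O(\Delta)$ bound is immediate from the \emph{definition} \eqref{eq:defnH}, since $H_\epsilon^R=\frac{\Delta}{(r^2+a^2)^{3/2}}\widehat{\Sigma F_\epsilon^R}$ and $\widehat{\Sigma F_\epsilon^R}$ extends smoothly to $r=r_+$. Second, your claim that $|u_\infty|$ is uniformly bounded at $-\infty$ fails at the threshold $\omega=\omega_+ m$: there $\uhplus=\uhminus$ no longer span, and $u_\infty$ can grow linearly in $r^\ast$ (cf.\ the remark in the proof of \cref{lem:HrepstoHr} that $|u_\infty|\le C_{m\ell\omega}|r^\ast|$). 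This does not break anything---the exponential decay $\int_{-\infty}^{r^\ast}\uhplus H_\epsilon^R\,\d y=O(e^{2\kappa_+ r^\ast})$ still beats linear growth---but you should state the correct bound $|u_\infty|\lesssim |r^\ast|$ rather than boundedness.
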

\begin{proof}
	First, since there do not exist pure mode solutions as shown in \cite[Theorem~1.3]{gustav}, the Wronskian $\mathfrak W[\uhplus,u_\infty] $ never vanishes. Thus, \eqref{eq:rep} is well-defined and in view of the boundary conditions of $u_\epsilon^R$ and $H_\epsilon^R$ as shown in \cref{lem:boundaryconditionsonhandu}, a direct computation shows \eqref{eq:rep}.  
	To show \eqref{eq:microlocalhorizontrace1} we first note that 
  that 
  \begin{align}
  	|u_{\infty}|\leq C_{m\ell \omega} |r^\ast|
  \label{eq:boundonuinfty}
\end{align} for all $r^\ast$. Indeed, \eqref{eq:boundonuinfty} holds true as for each $\omega,m,\ell$ there exist constants $a_s , a_c$ only depending on the $\omega,m,\ell$ such that 
	\begin{align} \label{eq:decompositionofuinfty} u_\infty = a_s u_s + a_c u_c,\end{align}
	where $u_s$ and $u_c$ are solutions to the radial o.d.e.\ satisfying $u_s \sim \frac{\sin( ( \omega - \omega_+ m)r^\ast)}{\omega - \omega_+ m}$ and $u_c \sim \cos( ( \omega - \omega_+ m)r^\ast)$ as $r^\ast \to -\infty$. For $\omega\neq \omega_+ m$,  $u_s$ and $u_c$ are defined as  $u_s= \frac{1}{2i} \frac{u_{\mathcal H^-} - u_{\mathcal H^+}}{\omega - \omega_+ m}$ and $u_c =  \frac{1}{2i} (u_{\mathcal H^-} + u_{\mathcal H^+})$, where $u_{\mathcal H^+}$ and $u_{\mathcal H^-}$  are defined in \cref{defn:uhruhlexterior}. This definition uniquely extends to  $\omega = \omega_+ m$ with the asymptotics   $u_s \sim r^\ast$ and $u_c \sim 1$ as $r^\ast \to -\infty$.  In particular, $\mathfrak W (u_s, u_c) = -1 $ for all $\omega,m,\ell$ which justifies \eqref{eq:decompositionofuinfty}. 
	
	We now obtain \eqref{eq:microlocalhorizontrace1} since
	\begin{align} \nonumber
\limsup_{r^\ast \to -\infty} & \left| u_\infty \int_{-\infty}^{r^\ast} \uhplus H^R_\epsilon \d y\right|^2 \leq C_{m\ell \omega}^2 \limsup_{r^\ast \to -\infty} \left( |r^\ast| \int_{r_+}^{r(r^\ast)} \frac{| \widehat{\Sigma F^R_{\epsilon}}|^2}{r^2+a^2} \d r \int_{-\infty}^{r_\ast} |\uhplus|^2 \frac{\Delta}{{r^2+a^2}} \d y \right)\\
& \leq C_{m\ell \omega}^2 \limsup_{r^\ast \to -\infty} \left(  \int_{r_+}^{r(r^\ast)} \frac{| \widehat{\Sigma F^R_{\epsilon}}|^2}{r^2+a^2} \d r \int_{-\infty}^{r_\ast} |\uhplus|^2 \frac{|y|^2 \Delta}{{r^2+a^2}} \d y \right) = 0
\label{eq:estim}	\end{align}
 	because \begin{align}\int_{r_+}^{r(r^\ast)}\frac{| \widehat{\Sigma F^R_{\epsilon}}|^2}{ r^2+a^2  } \d r <\infty, \; \sup_{r^\ast \in (-\infty,r_1)}|\uhplus|<\infty,\end{align} 
 	and $|r^\ast|^2 \Delta $   decays exponentially as $r^\ast \to -\infty$. In \eqref{eq:estim} we also used that $\d r = \frac{\Delta}{r^2 + a^2} \d r^\ast$. 
\end{proof}
\begin{lemma}\label{lem:HrepstoHr}
	The inhomogeneous term $H^R_\epsilon$ has the pointwise limit
	\begin{align} \nonumber 
	H^R &:= \lim_{\epsilon \to 0} H^R_\epsilon\\ &=\frac{\Delta}{(r^2+a^2)^{\frac 32}}  \frac{1}{\sqrt{2\pi}} \int_{\mathbb S^2} \Sigma   e^{-im\phi} S_{m\ell}(a \omega) \left( -2 \sqrt{-g^{tt}} \Psi_1 - i \omega g^{tt} \Psi_0  \right) \d \sigma_{\mathbb S^2} \nonumber \\  &+\frac{ \Delta}{(r^2+a^2)^{\frac 32}}\frac{e^{-i(\omega -\omega_+m)r^\ast} }{\sqrt{2\pi}} \int_{R-1}^{R}\int_{\mathbb S^2}\Sigma F^R(v,r,\theta,\tilde \phi_+) e^{i \omega v} e^{-i m\tilde{\phi}_+} S_{m\ell}(a\omega) \d \sigma_{\mathbb S^2} \d v.
	\end{align}
	In addition, 
	\begin{align}
		a^R_{\epsilon, \mathcal H} \to a_{\mathcal H}^R:= 	 \frac{1}{\mathfrak{W}[\uhplus,u_\infty]}  \int_{-\infty }^{\frac \pi 2} u_\infty H^R \d r^\ast \label{eq:defnar}
	\end{align}
	pointwise  as $\epsilon \to 0$. 

	Moreover,   we have
	\begin{align}\nonumber
	H^R \to H := \frac{\Delta}{({r^2+a^2})^{\frac 32}}  \frac{1}{\sqrt{2\pi}} \int_{\mathbb S^2} & \Sigma   e^{-im\phi} S_{m\ell}(a \omega)\\ & \left( -2 \sqrt{-g^{tt}} \Psi_1 - i \omega g^{tt} \Psi_0  \right) \d \sigma_{\mathbb S^2} 
	\end{align}
	and 
	\begin{align}
	a_{\mathcal H}^R \to a_{\mathcal H}
	\end{align}
  pointwise    as $R\to\infty$. Recall that $a_\mathcal H$ is defined in \cref{defn:amathcalhr}.
	\begin{proof}We start with the decomposition $F_\epsilon^R =F_\epsilon + F_R$, where the support of $F_\epsilon$ is in $\{0\leq t \leq \epsilon \} \cap \{ r \geq 2r_+\}$ for $\epsilon>0$ small enough and the support of $F_R$ is in the set $\{R-1 \leq v \leq R\} $. 
		
		We first consider   $F_\epsilon $ and write its (generalized) Fourier transform as
		\begin{align}
		\widehat{\Sigma  F_\epsilon} (\omega,m,\ell,r) &=\frac{1}{\sqrt {2\pi}} \int_{\mathbb S^2} \int_{\mathbb R}    F_\epsilon(t,r,\theta,\phi) e^{i \omega t}  \d t \, \, \Sigma e^{-i m \phi}S_{m\ell}(a \omega,\cos\theta) \d \sigma_{\mathbb S^2}.
		\end{align}
	Recall that for all $\epsilon >0$ sufficiently small, we have that $v (t,r,\theta,\phi) = t$ on the support of $F_{\epsilon}$. Thus,
	\begin{align}\nonumber 
	F_\epsilon	&= 2   \partial_t (\chi(t/\epsilon) )(\nabla t)\psi + \psi \Box_{g_{\textup{Kerr--AdS}}} \chi(t/\epsilon)\\ 
& =  - 2  \partial_t (\chi(t/\epsilon) ) \sqrt{-g^{tt}} n_{\Sigma_t}  \psi +  \psi g^{tt} \partial_t^2 ( \chi(t/\epsilon)), 
	\end{align} 
where $\chi$ is as in \eqref{eq:defnofchi}. We also used that for any $f(t,r,\theta,\phi)=f(t)$ only depending on $t$ in Boyer--Lindquist coordinates, we have $\Box_{g_{\textup{Kerr--AdS}}} f = g^{tt} \partial_t^2 f$.   Now, in view of \eqref{eq:distributions}  we obtain
 \begin{align}F_\epsilon \to F_0 := -2 \sqrt{-g^{tt} } \delta_{t=0} \Psi_1  + g^{tt} \delta'_{t=0} \Psi_0  \end{align}  as $\epsilon \to 0$ in the sense of distributions (compactly supported distributions) on $\mathbb R_t$ with values in $C^\infty ( (r_+, \infty) \times \mathbb S^2) $. Hence,
		\begin{align} \nonumber
	&	\widehat{\Sigma   F_{\epsilon}}  (\omega, m, \ell, r) = \frac{1}{\sqrt{2\pi}}   \int_{\mathbb S^2} \int_{\mathbb R} e^{-im\phi} S_{m\ell}(a \omega,\cos\theta) F_\epsilon e^{i \omega t}  \d t\d \sigma_{\mathbb S^2}\\ & \;\;\;\;\; \to\frac{1}{\sqrt{2\pi}}   \int_{\mathbb S^2}  e^{-im\phi} S_{m\ell}(a \omega,\cos\theta) \left( - 2  \sqrt{-g^{tt}} \Psi_1 - i \omega g^{tt} \Psi_0  \right) \d \sigma_{\mathbb S^2}
		\end{align}
		pointwise. (Note that the above pointwise limit can also be shown via integration by parts in $t$  without appealing to distribution theory, cf.\ \eqref{eq:estimateonsigmafe} below.)
		Thus,   \begin{align}\nonumber
		H^R &= \lim_{\epsilon \to 0} H^R_\epsilon = \\ \nonumber & = \frac{\Delta}{(r^2+a^2)^{\frac 32}}  \frac{1}{\sqrt{2\pi}} \int_{\mathbb S^2} \Sigma  e^{-im\phi} S_{m\ell}(a \omega) \left( - 2\sqrt{-g^{tt}} \Psi_1 - i \omega g^{tt} \Psi_0    \right) \d \sigma_{\mathbb S^2} \\ &+\frac{\Delta}{ (r^2+a^2)^{\frac 32}}\frac{e^{-i(\omega -\omega_+m)r^\ast} }{\sqrt{2\pi}} \int_{R-1}^{R}\int_{\mathbb S^2} \Sigma F_R(v,r,\theta,\tilde \phi_+) e^{i \omega v} e^{-i m\tilde{\phi}_+} S_{m\ell}(a \omega) \d \sigma_{\mathbb S^2} \d v
		\end{align}
		pointwise.
		
		Now, to show that $a_{\epsilon, \mathcal{H}}^R \to a_{\mathcal H}^R$ it suffices to show 
		\begin{align}\label{eq:limittointerchange}
		  \int_{-\infty }^{\frac \pi 2} u_\infty \frac{\Delta}{(r^2+a^2)^{\frac 32}} \widehat{ \Sigma F_\epsilon}\d y  \to 	  \int_{-\infty }^{\frac \pi 2} u_\infty \frac{\Delta}{(r^2+a^2)^{\frac 32}} \widehat{ \Sigma F_0}\d y  
		\end{align}
		pointwise  as $\epsilon \to 0$. Again, recall from \cref{defn:initialdata} that our initial data are compactly supported in $K$. Thus, by finite speed of propagation we have that    $r^\ast \mapsto F_\epsilon(r^\ast)$ is compactly supported (uniformly in the other coordinates) in an open neighborhood $K^o \supset K$ of $K$ for all $0 < \epsilon< \epsilon_0$ sufficiently small.  Note that $K^o \setminus K$   can be made arbitrarily small by choosing $\epsilon_0>0$ sufficiently small. Further, we show below that $\sup_{0 < \epsilon< \epsilon_0} \sup_{r^\ast} |\widehat{\Sigma F_\epsilon} | < \infty$ so we can interchange the integral with the limit $\epsilon \to 0$ in \eqref{eq:limittointerchange}.
		
		We will now justify $\sup_{0 < \epsilon< \epsilon_0} \sup_{r^\ast} |\widehat{\Sigma F_\epsilon} | < \infty$.
		Indeed, using that $\partial_t(\chi(t/\epsilon))$ is only supported in $[0,\epsilon]$ and integrating by parts  we have
		\begin{align} \nonumber
|\widehat{\Sigma F_\epsilon} | & \lesssim \left| \int_{\mathbb S^2} \int_0^\epsilon e^{-im\phi} S_{m\ell}(a\omega,\cos\theta)  e^{i\omega t} \left( - 2 \partial_t(\chi(t/\epsilon)) \sqrt{-g^{tt}} n_{\Sigma_t} \psi + \psi g^{tt} \partial_t^2(\chi(t/\epsilon)) \right)  \d \sigma_{\mathbb S^2} \d t\right|\\ \nonumber
& \lesssim   \left| \int_{\mathbb S^2} \int_0^\epsilon e^{-im\phi} S_{m\ell}(a\omega,\cos\theta)  \left(  2 \chi(t/\epsilon) \sqrt{-g^{tt}}  \partial_t (e^{i\omega t} n_{\Sigma_t} \psi )+   g^{tt} \partial_t^2 (e^{i\omega t} \psi) \chi(t/\epsilon) \right)  \d \sigma_{\mathbb S^2} \d t\right|
\\ & +   \left| \int_{\mathbb S^2} e^{-im\phi} S_{m\ell}(a\omega,\cos\theta)  \left(  2    e^{i\omega \epsilon}\sqrt{-g^{tt}} n_{\Sigma_{t=\epsilon}} \psi (t=\epsilon)+ ( \partial_t (e^{i\omega t} \psi g^{tt})) ( t= \epsilon) \chi(t/\epsilon) \right)  \d \sigma_{\mathbb S^2} \right|. \label{eq:estimateonsigmafe}
		\end{align}
	Since $\psi$ and all its derivatives are uniformly bounded and moreover are supported in $K^o$ in the $r^\ast$ coordinate we obtain $\sup_{0 < \epsilon< \epsilon_0} \sup_{r^\ast} |\widehat{\Sigma F_\epsilon} | < \infty$ for $\epsilon_0>0$ sufficiently small.
	
Next, we will show that $H^R \to H$ as $R\to\infty$.
		As $\psi$ and its derivatives decay pointwise at a logarithmic rate (see \cref{thm:wellposedanddecay}), we obtain \begin{align} \sup_{r \in (r_+,\infty),\theta,\tilde \phi_+\in \mathbb S^2} |F_r|(v,r,\theta,\tilde \phi_+) \to 0 \end{align} as $R\to \infty$. Thus, we have
		\begin{align}\nonumber
		\left| \int_{R-1}^{R}\int_{\mathbb S^2} \Sigma F_R(v,r,\theta,\tilde \phi_+) e^{i \omega v} e^{-i m\tilde{\phi}_+} S_{m\ell}(a \omega) \d \sigma_{\mathbb S^2} \d v \right|\\  \lesssim r^2 \sup_{\substack{v\in (R-1,R) \\ r \in (r_+,\infty),\theta,\tilde \phi_+\in \mathbb S^2} }|F_r|(v,r,\theta,\tilde \phi_+)  \to 0 
		\end{align}
		pointwise as $R\to\infty$. This shows $H^R \to H$ pointwise.
		
		Finally, to show that   $a_{\mathcal H}^R \to a_{\mathcal H}$ as $R\to\infty$, we estimate
		\begin{align} \nonumber
	&	\left|	\int_{-\infty}^{\frac \pi 2 } u_\infty \frac{\Delta}{ (r^2+a^2)^{\frac 32}}\frac{e^{-i(\omega -\omega_+m)r^\ast} }{\sqrt{2\pi}} \int_{R}^{R+1}\int_{\mathbb S^2} \Sigma F_R(v,r,\theta,\tilde \phi_+) e^{i \omega v} e^{-i m\tilde{\phi}_+} S_{m\ell}(a \omega) \d \sigma_{\mathbb S^2} \d v \d r^\ast \right|^2
	 \\ \nonumber
	& \lesssim \int_{-\infty}^{\frac \pi 2} \int_{\mathbb S^2} |S_{m\ell}(a \omega)|^2 |u_\infty|^2\frac{1}{r^2} \frac{\Delta}{r^2+a^2}  \d \sigma_{\mathbb S^2} \d r^\ast  \sup_{v\in (R,R+1)}\int_{r_+}^{\infty} \int_{\mathbb S^2} \Sigma^2 |F_R|^2  \frac{1}{r^2+a^2} r^2 \frac{\Delta}{r^2+a^2}  \d \sigma_{\mathbb S^2} \d r^\ast
	 \\ \nonumber
		& \lesssim \int_{-\infty}^{\frac \pi 2} |u_\infty|^2\frac{1}{r^2} \frac{\Delta}{r^2+a^2}  \d r^\ast  \sup_{v\in (R,R+1)}\int_{r_+}^{\infty} \int_{\mathbb S^2} \Sigma^2 |F_R|^2  \frac{1}{r^2+a^2} r^2  \d \sigma_{\mathbb S^2} \d r \\ \nonumber
		& \lesssim   \int_{-\infty}^{\frac \pi 2} C^2_{m\ell \omega} |r^\ast|^2 \frac{1}{r^2} \frac{\Delta}{r^2+a^2}  \d r^\ast  \sup_{v\in (R,R+1)}\int_{r_+}^{\infty} \int_{\mathbb S^2} r^2 |F_R|^2   r^2  \d \sigma_{\mathbb S^2} \d r \\
		&\lesssim C^2_{m\ell\omega} \sup_{v\in (R,R+1)}\int_{r_+}^{\infty} \int_{\mathbb S^2} e_1[\psi]  r^2  \d \sigma_{\mathbb S^2} \d r  \to 0 
		\end{align}
		as $R\to \infty$. Here, we have used \eqref{eq:boundonuinfty} again.
		Hence,	 $a_{\mathcal H}^R \to a_{\mathcal H}$ as $R\to\infty$ pointwise for each $\omega, m,\ell$. 
	\end{proof}
\end{lemma}
\subsection{Representation formula for \texorpdfstring{$\psi$}{Psi} at the event horizon}
In what follows we will prove a representation formula of the truncated solution $\psi^R$ along the event horizon in terms of the initial data and an error term which vanishes in the limit $R\to \infty$. More precisely, we will represent $\psi^R$ through  $a_{\mathcal{H}}^R$ which is defined in \eqref{eq:defnar}. Note that in the limit $R\to \infty$, we have $a_{\mathcal H}^R \to a_{\mathcal H}$ which in turn  only depends  on the initial data (see \cref{defn:amathcalhr}).
\label{sec:rep}
\begin{prop}\label{prop:rep}
	Let $a_{\mathcal H}^R$ be as defined in \eqref{eq:defnar}. Then,  on the event horizon $\mathcal{H}_R$ we have
	\begin{align}
	\psi^R (v,r_+,\theta, \tilde \phi_+) = \frac{1}{\sqrt{2\pi (r_+^2+a^2)}} \sum_{m\ell}\int_{\mathbb R}a_{\mathcal{H}}^R S_{m\ell}(a\omega, \cos \theta) e^{i m \tilde \phi_+} e^{-i \omega v} \d \omega,\end{align}
	in $L^2(\mathbb R_v \times \mathbb S^2)$.
	Moreover, 
	\begin{align}\label{eq:representationformulainproof}
a_{\mathcal{H}}^R = \sqrt{\frac{r_+^2+a^2}{2\pi}}\int_{\mathbb R \times \mathbb S^2} \psi^R (v,r_+,\theta,\tilde{\phi}_+ ) S_{m\ell}(a\omega, \cos \theta) e^{-i m \tilde \phi_+} e^{i \omega v} \d \sigma_{\mathbb S^2} \d v
\end{align} 
pointwise and in $L^2(\mathbb R_\omega \times \mathbb Z_{m} \times \mathbb Z_{\ell \geq|m|})$.
	\begin{proof}
	We have  \begin{align}\psi_\epsilon^R (v,r,\theta,\tilde{\phi}_+ )= \frac{1}{\sqrt{2\pi (r^2+a^2)}} \sum_{m\ell}\int_{\mathbb R}e^{i (\omega - \omega_+ m)r^\ast} u_\epsilon^R S_{m\ell}(a\omega, \cos \theta) e^{i m \tilde \phi_+} e^{-i \omega v} \d \omega\end{align}
	and 
	\begin{align}\label{eq:limitofeandu}
	e^{i (\omega - \omega_+ m)r^\ast} u_\epsilon^R 
	 = \sqrt{\frac{r^2+a^2}{2\pi}}\int_{\mathbb R \times \mathbb S^2} \psi_\epsilon^R (v,r,\theta,\tilde{\phi}_+ ) S_{m\ell}(a\omega, \cos \theta) e^{-i m \tilde \phi_+} e^{i \omega v} \d \sigma_{\mathbb S^2} \d v
	\end{align}
	for $r_+ < r < r_+ + \eta $.
Now, since $\psi_\epsilon^R$ is compactly supported in $v$ uniformly as $r_\ast \to-\infty$, we can interchange the limit $r^\ast \to -\infty$ with the integral over $v$. Thus, sending $r\to r_+$ ($r^\ast \to -\infty$) in \eqref{eq:limitofeandu} yields in view of \cref{prop:microlocalhorizontrace} that 
\begin{align} \label{eq:558}
		a_{\epsilon, \mathcal H}^R
			= \sqrt{\frac{r_+^2+a^2}{2\pi}}\int_{\mathbb R \times \mathbb S^2} \psi_\epsilon^R (v,r_+,\theta,\tilde{\phi}_+ ) S_{m\ell}(a\omega, \cos \theta) e^{-i m \tilde \phi_+} e^{i \omega v} \d \sigma_{\mathbb S^2} \d v,
\end{align} 
where $a_{\epsilon, \mathcal H}^R$ is defined in \eqref{eq:microlocalhorizontrace}. 
Now we will perform the limit $\epsilon\to 0$ on both sides of \eqref{eq:558} independently. First, from \cref{lem:HrepstoHr} we have that  \begin{align}a_{\epsilon,\mathcal{H}}^R \to a_{\mathcal{H}}^R  = \frac{1}{\mathfrak{W}[u_{{\mathcal H^+}},u_\infty]}  \int_{-\infty }^{\frac \pi 2} u_\infty H^R \d y\end{align} as $\epsilon \to 0$ pointwise.
Moreover,  $\psi_\epsilon^R$ has compact support on $\mathbb R_v$ uniformly as $\epsilon \to 0$ and $\psi_\epsilon^R \to \psi^R$ pointwise and in $L^2(\mathbb R_v \times \mathbb S^2)$  as $\epsilon\to 0$. Thus, the right hand side of \eqref{eq:558} converges pointwise and due to Plancherel also  in $L^2(\mathbb R_\omega \times \mathbb Z_{m} \times \mathbb Z_{\ell\geq |m|})$ as $\epsilon \to 0$. Hence, $a_{\epsilon,\mathcal{H}}^R \to a_{\mathcal{H}}^R  $ also holds in $L^2(\mathbb R_\omega \times \mathbb Z_{m} \times \mathbb Z_{\ell\geq |m|})$ and we conclude
			\begin{align}
				a_{\mathcal{H}}^R = \sqrt{\frac{r_+^2+a^2}{2\pi}}\int_{\mathbb R \times \mathbb S^2} \psi^R (v,r_+,\theta,\tilde{\phi}_+ ) S_{m\ell}(a\omega, \cos \theta) e^{-i m \tilde \phi_+} e^{i \omega v} \d \sigma_{\mathbb S^2} \d v
			\end{align}
			which holds pointwise and in $L^2(\mathbb R_\omega \times \mathbb Z_{m} \times \mathbb Z_{\ell \geq|m|})$.
	 And by Plancherel we also have
					\begin{align}
				\psi^R (v,r_+,\theta, \tilde \phi_+) = \frac{1}{\sqrt{2\pi (r_+^2+a^2)}} \sum_{m\ell}\int_{\mathbb R}a_{\mathcal{H}}^R S_{m\ell}(a\omega, \cos \theta) e^{i m \tilde \phi_+} e^{-i \omega v} \d \omega\end{align}
				in $L^2(\mathbb R_v \times \mathbb S^2)$.
\end{proof}
\end{prop}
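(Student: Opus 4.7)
The plan is to establish the representation first at the level of the doubly regularized function $\psi_\epsilon^R$ at any fixed $r > r_+$, where Carter's separation of variables and the Fourier transform in $t$ give a pointwise identity, and then to pass to the two limits $r \to r_+$ and $\epsilon \to 0$ in that order. For every fixed $r > r_+$, the function $\psi_\epsilon^R(\cdot,r,\cdot,\cdot)$ is smooth and compactly supported in $t$ with values in $C^\infty(\mathbb S^2)$, so the Fourier transform in $t$ combined with the completeness of the generalized spheroidal harmonics $\{e^{im\phi} S_{m\ell}(a\omega,\cos\theta)\}_{m,\ell}$ in $L^2(\mathbb S^2)$ (which holds since $P(a\omega)$ is a self-adjoint Sturm--Liouville operator with compact resolvent) yields the pointwise identity
\begin{align*}
\psi_\epsilon^R(v,r,\theta,\tilde\phi_+) = \frac{1}{\sqrt{2\pi(r^2+a^2)}} \sum_{m\ell} \int_{\mathbb R} e^{i(\omega-\omega_+m)r^\ast} u_\epsilon^R\, S_{m\ell}(a\omega,\cos\theta)\, e^{im\tilde\phi_+}\, e^{-i\omega v}\, \d\omega,
\end{align*}
together with its dual expression for $e^{i(\omega-\omega_+m)r^\ast} u_\epsilon^R$ as an integral of $\psi_\epsilon^R$ against $S_{m\ell} e^{-im\tilde\phi_+} e^{i\omega v}$.

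Next I would send $r \to r_+$ (equivalently $r^\ast \to -\infty$) in the dual identity. Since $\psi_\epsilon^R$ extends smoothly across $\mathcal H_R$ by \cref{thm:wellposedanddecay} and is compactly supported in $v$ uniformly in $r$, dominated convergence lets me swap the limit with the $\d v$-integral; combining this with \cref{prop:microlocalhorizontrace}, which identifies the horizon trace of $e^{i(\omega-\omega_+m)r^\ast} u_\epsilon^R$ with $a_{\epsilon,\mathcal H}^R$, delivers the representation formula at the $\epsilon$-regularized level.

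The final step is sending $\epsilon \to 0$. On the right-hand side $\psi_\epsilon^R(\cdot,r_+,\cdot,\cdot) \to \psi^R(\cdot,r_+,\cdot,\cdot)$ pointwise and in $L^2(\mathbb R_v \times \mathbb S^2)$, with supports uniformly contained in $\{v \leq R\}$; the unitarity of the combined $v$-Fourier / spheroidal-harmonic transform then upgrades this to $L^2(\mathbb R_\omega \times \mathbb Z_m \times \mathbb Z_{\ell \geq |m|})$-convergence of the Fourier side. On the left, \cref{lem:HrepstoHr} gives $a_{\epsilon,\mathcal H}^R \to a_{\mathcal H}^R$ pointwise, so matching the two limits yields the second formula both pointwise and in $L^2$; the first formula then follows by Fourier inversion in the $L^2(\mathbb R_v \times \mathbb S^2)$ sense.

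The main obstacle I anticipate is keeping the two limits $r \to r_+$ and $\epsilon \to 0$ in the correct order. The $\epsilon$-cut-off is essential to make every integration-by-parts and Fubini step rigorous (since the untruncated $\psi$ decays only logarithmically along $\mathcal H_R$), while the uniform $v$-support $\{v \leq R\}$ inherited from the $\chi(R-v)$ factor is what keeps Plancherel applicable across the limit $\epsilon \to 0$. A secondary delicate point is the behaviour near the horizon frequency $\omega = \omega_+ m$, where the two independent solutions of the radial ODE at $r^\ast = -\infty$ degenerate; the growth estimate $|u_\infty| \leq C_{m\ell\omega}|r^\ast|$ already employed in \cref{lem:HrepstoHr} must be used to control the integrand uniformly, so that the pointwise identification of $a_{\mathcal H}^R$ remains valid for every $(\omega,m,\ell)$.
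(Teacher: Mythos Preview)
Your proposal is correct and follows essentially the same approach as the paper: start from the Fourier--spheroidal representation of $\psi_\epsilon^R$ at fixed $r>r_+$, pass to the horizon using \cref{prop:microlocalhorizontrace} and dominated convergence in $v$, then send $\epsilon\to 0$ using \cref{lem:HrepstoHr} on the left and Plancherel on the right. The only minor remark is that the growth estimate $|u_\infty|\le C_{m\ell\omega}|r^\ast|$ near $\omega=\omega_+ m$ is not actually needed in this proposition---it was used in \cref{lem:HrepstoHr} for the $R\to\infty$ limit, whereas here both limits $r\to r_+$ and $\epsilon\to 0$ are taken at fixed $R$ and the relevant integrability is already handled by the compact $v$-support.
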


\section{Interior analysis: Estimates on radial o.d.e.\ and interior scattering poles}\label{sec:interior}
Having established the behavior of our solution $\psi$ on the exterior $\mathcal R$, we will now consider the interior region $\mathcal B$ characterized by $r\in (r_-, r_+)$. We will first consider the interior radial o.d.e.\ and prove a suitable representation formula on the interior. We also recall that in the interior region the tortoise coordinate is defined in \eqref{eq:defnrast} as
\begin{align}\label{defnofrastinsec8}
\frac{	\d r^\ast}{\d r} = \frac{r^2 + a^2}{\Delta},
\end{align}
where $r^\ast(\frac{r_+ + r_-}{2})  = 0$ and that $\Delta <0$ in the whole interior region.
\begin{rmk} As our initial data are only supported on azimuthal modes  $m$ which are large and positive, we only need to consider $m$ sufficiently large. 
\end{rmk}
\subsection{Radial o.d.e.\ on the interior: fixed frequency scattering}
\label{eq:radialodeinterior}
We recall the radial o.d.e. \eqref{eq:radial} and write it in the interior $ r_- < r <r_+$ as
\begin{align}\label{eq:interiorradial}
	-u'' + \left(\frac{\Delta L}{(r^2+a^2)^2} - (m \omega_r - \omega)^2 + V_1 \right) u =0,
\end{align}
where \begin{align}L := \lambda_{m\ell} +a^2 \omega^2 - 2m \omega a \Xi\end{align} and $V_1$ is defined in \eqref{eq:defnv1}. Note that $L\geq 0$ follows from \cite[Lemma~5.4]{gustav}. Also note that $V_1 = O(|\Delta|)$ uniformly for $r^\ast \in (-\infty,\infty)$. We will   treat $V_1$ as a perturbation and recall that the   high-frequency part of the potential is given by \begin{align}\label{eq:principlepart} V_\sharp := \frac{\Delta L}{(r^2+a^2)^2} - (m\omega_r - \omega)^2.\end{align} 
(Note that $V_\textup{main} = \frac{V_\sharp (\omega = \omega_- m)}{m^2}$.)
Analogously to \cref{defn:uhruhlexterior}, we define fundamental pairs of solutions to the radial o.d.e.\ corresponding to the event and Cauchy horizon, respectively.
\begin{definition}\label{eq:defnu1}
We  define  solutions $\uhr,\uhl$ to \eqref{eq:interiorradial} in the interior through the condition
\begin{align}
&\uhr = e^{-i (\omega -\omega_+m)r^\ast} + O_{\omega,m,\ell}(\Delta)\\
&\uhl = e^{i (\omega -\omega_+m)r^\ast}+ O_{\omega,m,\ell}(\Delta)
\end{align}
as $r^\ast \to -\infty$. For $\omega\neq \omega_+ m$, they form a fundamental pair. For $\omega=\omega_+ m$ the solutions $\uhl$ and $\uhr$ are linearly dependent. 

	Analogously, we define  
		\begin{align}
		&\uchl = e^{-i (\omega -\omega_-m)r^\ast}+O_{\omega,m,\ell}(\Delta)\\
	&	\uchr = e^{i (\omega -\omega_-m)r^\ast}+O_{\omega,m,\ell}(\Delta)
		\end{align}as $r^\ast \to +\infty$. For $\omega\neq \omega_- m$, they form a fundamental pair. For $\omega=\omega_- m$ the solutions $\uchl$ and $\uchr$ are linearly dependent.
\end{definition}
\begin{rmk}\label{rmk:volterra2}
As in \cref{rmk:volterra1} we can equivalently define $\uhr$ (and analogously $\uhl, \uchl, \uchr$) as the unique solution to the Volterra integral equation 
\begin{align}\label{eq:defnofuhrthroughvolterra}
\uhr (r^\ast) = e^{-i(\omega-\omega_+m)r^\ast } +\int_{-\infty}^{r^\ast} \frac{\sin((\omega-\omega_+m)(r^\ast-y))}{\omega-\omega_+m} (V_\sharp (y) +V_1 (y)+   ( \omega-\omega_+ m)^2)	\uhr  (y) \d y.
\end{align}
\end{rmk}
We moreover define reflection and transmission coefficients. 
\begin{definition}\label{defn:scatteringcoef}
	For $\omega\neq \omega_- m$ define the transmission coefficient $\mathfrak T = \mathfrak T (\omega,m,\ell)$ and the reflection coefficient $\mathfrak R = \mathfrak R (\omega,m,\ell)$ as the unique coefficients such that\begin{align}
		\uhr (r^\ast, \omega,m,\ell)= \mathfrak T(\omega,m,\ell) \uchl(r^\ast, \omega,m,\ell) + \mathfrak R(\omega,m,\ell) \uchr (r^\ast, \omega,m,\ell).
	\end{align}  
	Equivalently, we have
	\begin{align}
		&\mathfrak T (\omega,m,\ell)= \frac{\mathfrak W[\uhr ,\uchr] (\omega,m,\ell) }{\mathfrak{W}[\uchl,\uchr] (\omega,m,\ell)} = \frac{\mathfrak W[\uhr, \uchr](\omega,m,\ell)}{2i(\omega-\omega_- m)}, \\
		&\mathfrak R (\omega,m,\ell)= \frac{\mathfrak W[\uhr ,\uchl] (\omega,m,\ell)}{\mathfrak{W}[\uchr,\uchl] (\omega,m,\ell) } = \frac{\mathfrak W[\uhr ,\uchl] (\omega,m,\ell)}{-2i(\omega-\omega_- m)}.
	\end{align}
	Further, we define the renormalized transmission and reflection coefficient \begin{align}&\mathfrak t(\omega,m,\ell) := (\omega - \omega_- m) \mathfrak T (\omega,m,\ell)=  \frac{1}{2i} \mathfrak W[\uhr, \uchr](\omega,m,\ell),\\
&	\mathfrak r (\omega,m,\ell):= (\omega - \omega_- m) \mathfrak R(\omega,m,\ell) = -\frac{1}{2i} \mathfrak W[\uhr, \uchl](\omega,m,\ell)\end{align}
which satisfy
	\begin{align}
	\mathfrak t^{\omega_-}(m,\ell) = - \mathfrak r^{\omega_-}(m,\ell),
	\end{align}
	where 
	\begin{align}
		\mathfrak t^{\omega_-}(m,\ell) 	:=	\mathfrak t(\omega = \omega_- m, m, \ell ) \text{ and }  \mathfrak r^{\omega_-} (m,\ell):= \mathfrak r(\omega = \omega_- m,m,\ell).
	\end{align}
\end{definition}

\begin{lemma}
	The transmission and reflection coefficients satisfy the Wronskian identity 
\begin{align}
	|\mathfrak T(\omega,m,\ell)|^2 = |\mathfrak R(\omega,m,\ell)|^2 + \frac{\omega-\omega_+m}{\omega-\omega_- m}
\end{align}
for $\omega \in \mathbb R \setminus \{  \omega_- m\} $. 
\begin{proof}
We decompose \begin{align}\uhr & = \mathfrak T \uchl + \mathfrak R \uchr. \label{eq:uhrforwronskian1}\end{align}
Since the potential of the o.d.e.\ \eqref{eq:interiorradial} is real-valued we have that $\uhl = \bar \uhr$. Thus,
\begin{align}\label{eq:uhrforwronskian2} \uhl = \bar{\uhr} = \bar{ \mathfrak T} \bar{\uchl }+\bar{ \mathfrak R} \bar {\uchr }  =\bar{ \mathfrak T}\uchr +\bar{ \mathfrak R} \uchl.  \end{align}
Now, using $\mathfrak W (\uhr, \uhl) = 2 i (\omega-\omega_+ m)$, \eqref{eq:uhrforwronskian1} and \eqref{eq:uhrforwronskian2} yields the result. 
\end{proof}
\end{lemma}
We begin by showing $L^\infty$ estimates for the solutions defined in \eqref{eq:defnu1}. To do so we will consider the cases $|\omega - \omega_r  m| \geq \ecut m$ for all $r\in [r_-,r_+]$. Note that $\ecut >0 $ will be fixed in \cref{eq:eigenvalueestimate} only depending on the black hole parameters. 

\begin{lemma} \label{prop:uniformboundsonu1} Assume that $|\omega - \omega_r m|\geq \ecut m$ for all $r\in [r_-,r_+]$ and for some $\ecut>0$. 
	Then,
	\begin{align}\label{eq:estuhr}
	&	\|\uhr \|_{L^\infty(\mathbb R)} \lesssim 1, 
				\|\uhr'\|_{L^\infty(\mathbb R)} \lesssim |\omega| + | m| + L^\frac 12,\\ \label{eq:estcuhl}
					&	\|\uchl\|_{L^\infty(\mathbb R)} \lesssim 1, 
					\|\uchl '\|_{L^\infty(\mathbb R)} \lesssim |\omega| + | m|  + L^\frac 12 ,\\\label{eq:estcuhr}
						&	\|\uchr\|_{L^\infty(\mathbb R)} \lesssim 1, 
						\|\uchr'\|_{L^\infty(\mathbb R)} \lesssim |\omega| + | m| + L^\frac 12.
	\end{align}
	\begin{proof} We first consider the case that $\omega - \omega_r m \geq \ecut m$ for all $r\in [r_-, r_+]$.

First, we also assume $L^{\frac 12} \leq  |\omega|+|m|$. Then, in view of the assumptions, the principal part of the potential  $V_\sharp$ satisfies \begin{align}-  V_\sharp \gtrsim m^2 + \omega^2 \text{ and }  \left|\frac{ V_\sharp'}{ V_\sharp}\right|, \left|\frac{ V_\sharp''}{ V_\sharp} \right| \lesssim  |\Delta|
\end{align}
and the error term satisfies $|V_1|\lesssim |\Delta|$. Thus, the error control function 
\begin{align}
F_{\uhr_1} (r^\ast):= \int_{-\infty}^{r^\ast} \frac{1}{| V_\sharp|^{\frac 14}} \frac{\d^2}{\d y^2}\left(|  V_\sharp|^{-\frac 14}\right) - \frac{V_1}{|  V_\sharp|^{\frac 12}} \d y
\end{align}
satisfies $ \mathcal{V}_{-\infty,\infty}(F_{\uhr_1} ) \lesssim \frac 1m.$ 

In the case $L^\frac 12 \geq |\omega|+|m|$ we have  
\begin{align}-  V_\sharp \gtrsim |\Delta| L +  m^2 + \omega^2 \text{ and }  \left|\frac{ V_\sharp'}{ V_\sharp}\right|, \left|\frac{ V_\sharp''}{ V_\sharp} \right| \lesssim  \frac{|\Delta|L }{|\Delta| L + m^2 + \omega^2} \label{eq:estimatesonvsharp}
\end{align}
uniformly for $r^\ast \in \mathbb R$. Making use of \eqref{eq:estimatesonvsharp}, we estimate the total variation of $F_{\uhr_1}$ in this frequency range as 
\begin{align}\nonumber 
 \mathcal{V}_{-\infty,\infty}(F_{\uhr_1} )  & \lesssim \int_{\mathbb R} \frac{|V_{\sharp}'|^2}{|V_\sharp|^{\frac 52}} + \frac{|V_{\sharp}''|}{|V_\sharp|^{\frac 32}} +  \frac{|V_1|}{|V_\sharp|^{\frac 12}} \d y\lesssim  \int_{\mathbb R}  \frac{|\Delta| L }{(|\Delta| L + m^2 + |\omega|^2)^{\frac 32}}\d y + \frac{1}{m}
  \\ \nonumber & \lesssim \int_{r_-}^{r_+} \frac{L}{( |r-r_-| |r- r_+| L + m^2 +  |\omega|^2 )^{\frac 32}} \d r + \frac{1}{m}
\\ &  \nonumber  \lesssim L^{-\frac 12}  \int_{r_-}^{r_+} \frac{1}{( |r-r_-| |r- r_+|  + m^2L^{-1} +  |\omega|^2 L^{-1}  )^{\frac 32}} \d r + \frac{1}{m}
\\ &   \lesssim L^{-\frac 12} (m^2 L^{-1} + \omega^2 L^{-1})^{- \frac 12}  + \frac{1}{m}  \lesssim \frac 1m. 
\end{align}
Here, we used the definition of $r^\ast$ in \eqref{defnofrastinsec8} as well as $|\Delta| \sim |r -r_-||r - r_+|$ uniformly for $r\in (r_-, r_+)$. 

Thus, in both of the above cases, $L^{\frac 12 }\leq |\omega| + |m|$ and $L^{\frac 12} \geq |\omega| + |m|$, the above allows us to apply standard estimates on WKB approximation such as \cite[Chapter~6, Theorem~2.2]{olver} and deduce that 
\begin{align}
	\uhr = A_\uhr \frac{|V_\sharp (-\infty)|^{\frac 14}}{| V_\sharp(r^\ast)|^{\frac 14} } \exp \left(-i  \int_{0}^{r^\ast} |  V_\sharp(y)|^{\frac 12} \d y  \right)  \left( 1 + \epsilon_{\uhr} (r^\ast) \right),
\end{align}
for some $A_\uhr$ with $|A_\uhr| =1$. Moreover, 
\begin{align}\sup_{r^\ast \in \mathbb R}|\epsilon_{\uhr}(r^\ast) |\lesssim \frac{1}{m}, \;\; \sup_{r^\ast \in \mathbb R}\left|\frac{\epsilon'_{\uhr}(r^\ast)}{|V_\sharp|^{\frac 12}} \right|\lesssim \frac 1m \; \text{ and }\; \epsilon_\uhr (-\infty) =\epsilon_\uhr'(-\infty) = 0.
\end{align} 
This shows that 
\begin{align}
	\| \uhr \|_{L^\infty(\mathbb R)} \lesssim 1 \text{ and } 	\| \uhr' \|_{L^\infty(\mathbb R)} \lesssim \| |V_\sharp|^{\frac 12} \|_{L^\infty(\mathbb R)}\lesssim |\omega| + |m| + L^\frac 12.
\end{align}

Similarly, we show that the above holds for $  \omega_r m -\omega\geq \ecut m$. This shows \eqref{eq:estuhr}. The bounds \eqref{eq:estcuhl} and \eqref{eq:estcuhr} are shown completely analogously and their proofs are omitted.
	\end{proof}
\end{lemma}
\begin{lemma}\label{eq:eigenvalueestimate} There exists a constant $\epsilon_{\textup{cut}}>0$   such that the following holds true.
	Assume that $|\omega - \omega_r m|\leq \ecut m$ for some $r \in [r_-,r_+]$, then $	L\gtrsim m^2$.
\end{lemma}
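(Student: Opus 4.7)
The idea is to rewrite $L$ as an expectation value of a self-adjoint operator and exhibit a manifestly positive representation that is bounded below by $m^2$ under the hypothesis. Since $\|S_{m\ell}\|_{L^2(-1,1)} = 1$ and $L = \lambda_{m\ell}(a\omega) + a^2\omega^2 - 2ma\omega\Xi$, we may write
\begin{align*}
L = \langle S_{m\ell}, A_\omega S_{m\ell}\rangle_{L^2(-1,1)}, \qquad A_\omega := P_x(a\omega) + a^2\omega^2 - 2ma\omega\Xi.
\end{align*}
First I would prove the key algebraic identity
\begin{align*}
A_\omega = -\frac{\d}{\d x}\!\left(\Delta_x(1-x^2)\frac{\d}{\d x}\cdot\right) + \frac{(\Xi m - a\omega(1-x^2))^2}{\Delta_x(1-x^2)} + \frac{2a^2}{l^2}(1-x^2)
\end{align*}
by direct computation from the definition of $P_x$ in \eqref{eq:spheroidalharmonicoperatorx0}. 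The simplification relies on the two elementary identities $\Delta_x - \Xi x^2 = 1-x^2$ and $\Delta_x + \tfrac{a^2}{l^2}x^2 = 1$, which collapse the $\xi^2$-, $m\xi$- and $m^2$-terms into the single square $(\Xi m - \xi(1-x^2))^2/[\Delta_x(1-x^2)]$. This is consistent with the formula used in Section 3 for the special case $\omega = \omega_- m$.

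Next, integrating by parts (legitimate since for $m\neq 0$ the eigenfunction $S_{m\ell}$ vanishes at $x=\pm 1$ fast enough), the first and third terms of $A_\omega$ contribute non-negatively to $L$, so that
\begin{align*}
L \;\geq\; \int_{-1}^{1} \frac{(\Xi m - a\omega(1-x^2))^2}{\Delta_x(1-x^2)}\, |S_{m\ell}|^2 \,\d x.
\end{align*}
It remains to bound the numerator from below and to exploit $\Delta_x(1-x^2) \leq 1$.

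To exploit the hypothesis, recall $\omega_r = a\Xi/(r^2+a^2)$ is decreasing in $r$, so for any $r\in [r_-,r_+]$ one has $\omega_+ \leq \omega_r \leq \omega_-$. I would choose
\begin{align*}
\ecut := \tfrac{1}{2}\min\!\left(\tfrac{\Xi - a\omega_-}{a},\;\omega_+\right),
\end{align*}
which is strictly positive because $r_->0$ forces $a\omega_- = a^2\Xi/(r_-^2+a^2) < \Xi$. The second entry excludes negative $\omega$ (since $|\omega-\omega_rm|\leq \ecut m < \omega_+ m \leq \omega_r m$ would be violated), so $\omega>0$ and $a\omega \leq (a\omega_- + a\ecut)m$. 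Consequently, uniformly in $x\in[-1,1]$,
\begin{align*}
\Xi m - a\omega(1-x^2) \;\geq\; \Xi m - a\omega \;\geq\; (\Xi - a\omega_- - a\ecut)m \;\geq\; \tfrac{1}{2}(\Xi - a\omega_-)\,m.
\end{align*}
Combining with $\Delta_x(1-x^2)\leq 1$ and the normalization $\int_{-1}^{1}|S_{m\ell}|^2\,\d x = 1$ yields
\begin{align*}
L \;\geq\; \tfrac{1}{4}(\Xi - a\omega_-)^2 m^2 \int_{-1}^{1} |S_{m\ell}|^2\,\d x \;\gtrsim\; m^2,
\end{align*}
as desired. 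The main obstacle is verifying the algebraic identity for $A_\omega$, which is a straightforward but somewhat lengthy rearrangement; once that is in hand, the lower bound requires no further analysis of the angular eigenfunctions beyond their $L^2$-normalization.
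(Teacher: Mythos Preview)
Your proposal is correct and follows essentially the same approach as the paper: both proofs rewrite $L$ as the expectation value of $P(a\omega)+a^2\omega^2-2ma\omega\Xi$ and complete the square in the potential to obtain the manifestly non-negative form with the term $\tfrac{1}{\Delta_\theta}\bigl(m\Xi/\sin\theta - a\omega\sin\theta\bigr)^2$ (equivalently your $x$-variable expression), then bound this term below by a multiple of $m^2$ using the hypothesis. The only cosmetic differences are your use of the $x=\cos\theta$ variable and a slightly cleaner endgame (bounding $\Delta_x(1-x^2)\leq 1$ and $1-x^2\leq 1$ directly), together with an explicit formula for $\ecut$; the paper instead factors out $m\Xi/\sin\theta$ and argues that the residual bracket stays bounded away from zero.
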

\begin{proof}
	Note that $L $ is larger than the lowest eigenvalue of the operator  $P(a \omega) + a^2 \omega^2 - 2a\Xi \omega m$, where $P$ is as in \eqref{eq:Pwithaomega}. Since
	\begin{align}\nonumber
	P (a \omega) + a^2 \omega^2 - 2a\Xi \omega m =  & - \frac{1}{\sin\theta}\partial_\theta  (\Delta_\theta \sin \theta \partial_\theta \cdot )  \\ & + \frac{1}{\Delta_\theta}\left(m \frac{\Xi}{\sin\theta}-a \omega \sin\theta \right)^2 + 2\frac{a^2}{l^2} \sin^2\theta,
	\end{align}
	it suffices to show that the second term is bounded from below by $O(m^2)$. To do so, let $r\in [r_-,r_+]$ such that $|\omega - \omega_r m | \leq \ecut m$. Then, in view of \begin{align} a\omega_r =\Xi \frac{a^2}{r^2+a^2},\end{align} 
	we conclude
	\begin{align} \nonumber
	\left(m \frac{\Xi}{\sin\theta}-a \omega \sin\theta\right)^2 &= \left(m \frac{\Xi}{\sin\theta}-a \omega_r m \sin\theta +  a (\omega_r m - \omega)\sin\theta\right)^2 \\
	&\geq \frac{m^2 \Xi^2}{\sin^2\theta}\left( 1-    \frac{a^2}{a^2+r^2} \sin^2\theta -  \left|\frac{a}{ \Xi}\frac{\omega_r m - \omega}{m}\sin^2\theta\right| \right)^2 \gtrsim
	m^2
	\end{align}
	for sufficiently small $\ecut>0$. 
\end{proof}
In the rest of the section we will make use of
\begin{definition}  For all frequencies $\omega,m,\ell$ we define \begin{align}&\tilde \uhr := e^{i(\omega - \omega_- m)r^\ast} \uhr,\\&\tilde \uhl := e^{-i(\omega - \omega_- m)r^\ast} \uhl,\\ &\tilde \uchr := e^{-i(\omega - \omega_- m)r^\ast} \uchr,\\ & \tilde \uchl := e^{i(\omega - \omega_- m)r^\ast} \uchl.\end{align}
\end{definition}
\begin{lemma}  \label{prop:u1linftybounds}
Assume that	 $|\omega - \omega_r m|\leq \ecut  m$ for some $r\in [r_-,r_+]$ and assume that $m \in \mathbb N$ is sufficiently large. 
Define \begin{align}&R_1:= -\frac{1}{2\kappa_+}\log(L), \\ & R_2:= \frac{1}{2|\kappa_-|}\log(L).\end{align} Then,
 \begin{align}\label{eq:537}
 &	\|\uhr \|_{L^\infty(-\infty,R_2]} \lesssim 1, 	\|\uhr' \|_{L^\infty(-\infty,R_2]} \lesssim |\omega| + |m|  + L^{\frac12} ,\\
  &	\|\uchl \|_{L^\infty[R_1,\infty)} \lesssim 1, 	\|\uchl' \|_{L^\infty[R_1,\infty)} \lesssim |\omega| + |m| + L^{\frac12},\\ 
  &	\|\uchr \|_{L^\infty[R_1,\infty)} \lesssim 1, 	\|\uchr' \|_{L^\infty[R_1,\infty)} \lesssim |\omega| + |m|  +  L^{\frac12},
	\end{align}
	and
\begin{align}\label{eq:partwuhratw-1}
&	|\partial_\omega \uhr|(R_1) \lesssim \log(L), 	|\partial_\omega \uhr' |(R_1) \lesssim \log(L) (|\omega| + |m| ),\\
&	|\partial_\omega \uchr|(R_2) \lesssim \log(L), 	|\partial_\omega \uchr' |(R_2) \lesssim \log(L) (|\omega| + |m| ), \label{eq:partwuhratw-2}\\ 
&	|\partial_\omega \uchl|(R_2) \lesssim \log(L), 	|\partial_\omega \uchl' |(R_2) \lesssim \log(L) (|\omega| + |m| ).\label{eq:partwuhratw-3}
\end{align}
Moreover, 
\begin{align}\label{eq:partwuhr1}
&\| \partial_{\omega} \tilde \uhr \|_{L^\infty(-\infty, R_1)} \lesssim 1, \;
\| \partial_{\omega} \tilde \uhr ' \|_{L^\infty(-\infty, R_1)} \lesssim 1,\\ \label{eq:partwuhr2}
&\| \partial_{\omega} \tilde \uchr \|_{L^\infty(R_2, \infty)} \lesssim 1, \;
\| \partial_{\omega} \tilde \uchr ' \|_{L^\infty(R_2, \infty)} \lesssim 1,
\\\label{eq:partwuhr3}
&\| \partial_{\omega} \tilde \uchl \|_{L^\infty(R_2, \infty)} \lesssim 1, \;
\| \partial_{\omega} \tilde \uchl ' \|_{L^\infty(R_2, \infty)} \lesssim 1.
\end{align}
	\begin{proof}
	From \cref{eq:eigenvalueestimate} we know that $L \gtrsim m^2$. Now, in view of \cref{rmk:volterra2} we write $\uhr$ as the solution to the Volterra equation
		\begin{align}
			\uhr = e^{-i(\omega - \omega_+ m)r^\ast} + \int_{-\infty}^{r^\ast} { K(r^\ast,y) } (1 + R_1 - y)\bar V(y) \uhr (y)\d y,
		\end{align}
		where the kernel is given by \begin{align}\label{eq:kernel}K(r^\ast,y) = \frac{1}{1 + R_1 - y} \frac{\sin( (\omega-\omega_+ m )(r^\ast -y) )}{\omega-\omega_+ m} \end{align} and $\bar V := V_\sharp + V_1 +  (\omega - \omega_+ m)^2$. For $y \in (-\infty, R_1)$, a direct computation shows 
		\begin{align} \nonumber
		(1 + R_1 - y) |\bar V (y)| \lesssim (1 + R_1 - y) L e^{{2\kappa_+} y}, \; \;\; \int_{-\infty}^{R_1}(1 + R_1 - y) |\bar V(y)|\d y \lesssim 1
		\end{align} 
		and 
		\begin{align}
		\sup_{y\leq r^\ast \leq R_1}|K(r^\ast,y)|\lesssim  1.  \end{align} 
		Standard estimates on Volterra integral equations (apply \cite[Chapter~6, Theorem 10.1]{olver} to the term $\uhr - e^{-i (\omega- \omega_+ m)r^\ast}$) yield
		\begin{align}\label{eq:uhratr11}
		\|\uhr \|_{L^\infty(-\infty,R_1)} & \lesssim 1\\
			\nonumber \|\uhr'\|_{L^\infty(-\infty, R_1)}& \lesssim 1+ |\omega - \omega_+ m| \lesssim m |.\label{eq:uhratr12}
		\end{align}
		Now, for the region $r^\ast \in [R_1,R_2]$ we approximate $\uhr$ with a WKB approximation. To do so we remark that for $r^\ast \in [R_1,R_2]$ we have \begin{align}-  V_\sharp \gtrsim 1 \text{ and }  \left|\frac{ V_\sharp'}{ V_\sharp}\right|, \left|\frac{ V_\sharp''}{ V_\sharp} \right| \lesssim  |\Delta|
		\end{align}
		and the error term satisfies $|V_1|\lesssim |\Delta|$. Thus, the error control function 
		\begin{align}
		F_{\uhr_2} (r^\ast):= \int_{R_1}^{r^\ast} \frac{1}{| V_\sharp|^{\frac 14}} \frac{\d^2}{\d y^2}\left(|  V_\sharp|^{-\frac 14}\right) - \frac{V_1}{|  V_\sharp|^{\frac 12}} \d y
		\end{align}
		is bounded as $\mathcal{V}_{R_1,R_2}(F_{\uhr_2} ) \lesssim 1$.  This allows us to apply \cite[Chapter~6, Theorem~2.2]{olver} to deduce that 
		\begin{align}\nonumber
		\uhr &= A_{\uhr} u_{\textup{WKB}_A} + B_\uhr u_{\textup{WKB}_B} \\ \nonumber
	&	= A_{\uhr} \frac{|V_\sharp (R_1)|^{\frac 14}}{| V_\sharp(r^\ast)|^{\frac 14} } \exp \left(-i  \int_{R_1}^{r^\ast} |  V_\sharp(y)|^{\frac 12} \d y  \right)  \left( 1 + \epsilon_{\uhr_A} (r^\ast) \right)\\ & + B_{\uhr} \frac{|V_\sharp (R_1)|^{\frac 14}}{| V_\sharp(r^\ast)|^{\frac 14} } \exp \left(i  \int_{R_1}^{r^\ast} |  V_\sharp(y)|^{\frac 12} \d y  \right)  \left( 1 + \epsilon_{\uhr_B} (r^\ast) \right),
		\end{align}
		for  \begin{align}&A_{\uhr} = \frac{\mathfrak W (u_{\textup{WKB}_B}, \uhr)}{\mathfrak W (u_{\textup{WKB}_B} , u_{\textup{WKB}_A})} \text{ and } B_\uhr =  \frac{\mathfrak W (u_{\textup{WKB}_A}, \uhr)}{\mathfrak W (u_{\textup{WKB}_A} , u_{\textup{WKB}_B})}.\end{align} Moreover, \begin{align}&\sup_{r^\ast \in [R_1,R_2]}|\epsilon_{\uhr_A}(r^\ast) |\lesssim 1, \\ & \sup_{r^\ast \in [R_1,R_2]}|\epsilon'_{\uhr_A}(r^\ast) |\lesssim \sup_{r^\ast \in [R_1,R_2]}|V_\sharp|^{\frac 12} \lesssim L^\frac 12 + |m|+ |\omega|, \\ & \epsilon_{\uhr_A} (R_1) =\epsilon'_{\uhr_A}(R_1) = 0,\end{align}
		and analogously for $\epsilon_{\uhr_B}$.
		Evaluating the Wronskians at $r^\ast = R_1$, we obtain \begin{align}|A_\uhr|, |B_\uhr| \lesssim 1\end{align} in view of \eqref{eq:uhratr11} and \eqref{eq:uhratr12}. 
				This shows that 
		\begin{align}\label{eq:38531}
		&	\|\uhr \|_{L^\infty (-\infty, R_2)} \lesssim 1,\\
					&\|\uhr'\|_{L^\infty (-\infty, R_2)} \lesssim |\omega| + |m| + L^\frac 12.\label{eq:38532}
		\end{align}
			To show the bound on $\partial_\omega \uhr$ we consider $\tilde \uhr$.  Then, 
			$ \tilde \uhr$ satisfies the Volterra equation
			\begin{align}
				\tilde \uhr = 1 +  \int_{-\infty}^{r^\ast}\frac{ \tilde K(r^\ast,y) }{1 + R_1 - y} (1 + R_1 - y)\bar V(y) \tilde \uhr(y) \d y,
			\end{align}
where $\tilde K (r^\ast,y) =  e^{i (\omega-\omega_+ m)(r^\ast -y)} \frac{\sin( (\omega-\omega_+ m)(r^\ast-y))}{\omega- \omega_+ m  } $. Completely analogous to  before, it follows that \begin{align}
\label{eq:bdsontildeu}
 \| \tilde \uhr \|_{L^\infty(-\infty, R_1)} \lesssim 1 \text{ and } \| \tilde \uhr'\|_{L^\infty(-\infty, R_1)} \lesssim 1.\end{align}
Now $\partial_\omega \tilde \uhr$ solves
\begin{align}\nonumber 
	\partial_{\omega} \tilde \uhr = & \int_{-\infty}^{r^\ast} \left(  {\partial_\omega\tilde K(r^\ast,y)     \bar V(y)+ \tilde K(r^\ast,y)   \partial_\omega \bar V(y) } \right) \tilde \uhr (y) \d y \\
	&+ \int_{-\infty}^{r^\ast} \frac{ \tilde K (r^\ast,y)}{1 +R_1 - y} (1+ R_1 -y)\bar V(y)  \partial_\omega   \tilde \uhr (y) \d y.
\end{align}		
	As $|\partial_\omega \lambda_{m\ell}(a \omega) | \lesssim |m|$ from \cref{lem:partialxlambda}, we conclude that 
	\begin{align}|\partial_\omega \bar V | \lesssim |\Delta| m \text{ and }|\partial_{\omega} \tilde K(r^\ast,y)| \lesssim (r^\ast -y)^2
	\end{align} such that
	\begin{align} \int_{-\infty}^{R_1}\left| \left(  {\partial_\omega\tilde K(r^\ast,y)     \bar V(y)+ \tilde K(r^\ast,y)   \partial_\omega \bar V(y) } \right) \tilde \uhr (y) \right| \d y\lesssim 1. 
	 \end{align}
		 Again, by standard bounds on Volterra integral equations \cite[Chapter~6, \S10]{olver} and using \eqref{eq:38531}, \eqref{eq:38532}, we obtain 
		 \begin{align}
		 	\| \partial_{\omega} \tilde \uhr \|_{L^\infty(-\infty, R_1)} \lesssim 1
		 \end{align}
		 and  
		 \begin{align}
		 		\| \partial_{\omega} \tilde \uhr ' \|_{L^\infty(-\infty, R_1)} \lesssim 1.
		 \end{align}
This shows \eqref{eq:partwuhr1}. Completely analogously we obtain \eqref{eq:partwuhr2} and \eqref{eq:partwuhr3}.	Now, we write
		\begin{align}
		\partial_\omega \uhr =\partial_\omega ( e^{-i(\omega-\omega_+ m)r^\ast} \tilde \uhr) =-i r^\ast   \uhr +  e^{-i(\omega-\omega_+ m)r^\ast} \partial_\omega\tilde \uhr
		\end{align}
		and
		\begin{align}
		\partial_\omega \uhr' =-i \uhr - i r^\ast \uhr' - i(\omega-\omega_+m)e^{-i(\omega-\omega_+m) r^\ast} \partial_\omega \tilde \uhr + e^{-i (\omega-\omega_+m)r^\ast} \partial_\omega \tilde \uhr'.
		\end{align}
		Evaluating this at $r^\ast = R_1$ yields
		\begin{align}|\partial_\omega \uhr |(R_1) \lesssim |R_1| \lesssim \log(L).\end{align} and	 
		 \begin{align}
| \partial_\omega \uhr'|(R_1) \lesssim \log(L) (|m|+|\omega| ).
		 \end{align}
This shows \eqref{eq:partwuhratw-1}.
		The proofs of \eqref{eq:partwuhratw-2} and \eqref{eq:partwuhratw-3} are completely analogous.
	\end{proof}
\end{lemma}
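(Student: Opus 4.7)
The plan is to prove all six clusters of bounds by a three-region strategy in $r^\ast$, using Volterra integral representations adapted to each horizon and WKB approximation in between. First I would invoke \cref{eq:eigenvalueestimate} to record that the hypothesis $|\omega - \omega_r m|\le \ecut m$ forces $L\gtrsim m^2$, and note the crucial balance built into the definitions $R_1 = -\frac{1}{2\kappa_+}\log L$ and $R_2 = \frac{1}{2|\kappa_-|}\log L$: since $\Delta\sim e^{2\kappa_+ r^\ast}$ as $r^\ast\to-\infty$ and $\Delta\sim e^{-2|\kappa_-|r^\ast}$ as $r^\ast\to+\infty$, we have $L\cdot|\Delta|(R_1),\;L\cdot|\Delta|(R_2) = O(1)$. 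This is what allows a Volterra scheme built on the \emph{free} equation $-u'' = (\omega-\omega_+m)^2 u$ to absorb the entire potential in $L^1$ near each horizon.

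For the $L^\infty$ bounds on $\uhr$ in $(-\infty, R_1]$, I would write $\uhr = e^{-i(\omega-\omega_+m)r^\ast} + \int_{-\infty}^{r^\ast} K(r^\ast,y)(1+R_1-y)\bar V(y)\uhr(y)\,dy$ with $\bar V = V_\sharp + V_1 + (\omega-\omega_+m)^2$ and the weighted kernel $K(r^\ast,y) = \frac{\sin((\omega-\omega_+m)(r^\ast-y))}{(\omega-\omega_+m)(1+R_1-y)}$. The bound $\sup_{y\le r^\ast\le R_1}|K|\lesssim 1$ is immediate, and the key estimate $\int_{-\infty}^{R_1}(1+R_1-y)|\bar V(y)|\,dy\lesssim 1$ follows because $(1+R_1-y)L e^{2\kappa_+ y}$ is integrable and peaks to size $O(1)$ precisely at $y=R_1$ by our choice of $R_1$. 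Standard Volterra estimates (\cite[Ch.~6, Thm.~10.1]{olver}) then give $\|\uhr\|_{L^\infty(-\infty,R_1]}\lesssim 1$ and, by differentiating, $\|\uhr'\|_{L^\infty(-\infty,R_1]}\lesssim |m|$. To extend these bounds through the middle region $[R_1,R_2]$ I would use WKB approximation: in that interval $-V_\sharp\gtrsim 1$, $|V_\sharp'/V_\sharp|,|V_\sharp''/V_\sharp|\lesssim |\Delta|$ and $|V_1|\lesssim|\Delta|$, so the error-control function has total variation $O(1)$. Representing $\uhr$ as a linear combination $A_\uhr u_{\mathrm{WKB}_A}+B_\uhr u_{\mathrm{WKB}_B}$ and matching Wronskians at $R_1$ using the just-proved bounds yields $|A_\uhr|,|B_\uhr|\lesssim 1$ and hence \eqref{eq:537}. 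The bounds on $\uchl, \uchr$ in $[R_1,\infty)$ are completely analogous, carrying out the same argument backwards from $r^\ast = +\infty$ and using $R_2$ instead of $R_1$.

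The $\omega$-derivative bounds \eqref{eq:partwuhr1}--\eqref{eq:partwuhr3} are the main technical point, and they are what forces the introduction of the tilded quantities. The reason is that $\partial_\omega$ applied directly to $\uhr$ pulls out a factor of $r^\ast$ from the oscillating phase $e^{-i(\omega-\omega_+m)r^\ast}$, which is unbounded as $r^\ast\to-\infty$; removing this phase via $\tilde\uhr$ cures the problem. I would set up the Volterra equation for $\tilde\uhr$ with kernel $\tilde K$ and, after differentiating in $\omega$, use that $|\partial_\omega\lambda_{m\ell}|\lesssim |m|$ (from \cref{lem:partialxlambda}) to conclude $|\partial_\omega \bar V|\lesssim |\Delta| m$ and $|\partial_\omega\tilde K(r^\ast,y)|\lesssim (r^\ast-y)^2$. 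The inhomogeneous term then still satisfies $\int_{-\infty}^{R_1}|\cdots|\,dy\lesssim 1$ because the extra factor of $m$ is compensated by choosing $R_1$ logarithmic in $L\gtrsim m^2$, after which Volterra theory gives $\|\partial_\omega\tilde\uhr\|_{L^\infty},\|\partial_\omega\tilde\uhr'\|_{L^\infty}\lesssim 1$ on $(-\infty,R_1]$. The symmetric statements for $\tilde\uchr,\tilde\uchl$ on $[R_2,\infty)$ follow by the same scheme.

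Finally, for \eqref{eq:partwuhratw-1}--\eqref{eq:partwuhratw-3}, I would just convert between tilded and untilded quantities: differentiating $\uhr = e^{-i(\omega-\omega_+m)r^\ast}\tilde\uhr$ produces an $r^\ast$-factor from the phase, and evaluating at $r^\ast=R_1$ picks up the logarithmic loss $|R_1|\lesssim\log L$, yielding $|\partial_\omega\uhr|(R_1)\lesssim\log L$ and $|\partial_\omega\uhr'|(R_1)\lesssim\log L\,(|\omega|+|m|)$. The hardest part of the whole argument is the bookkeeping in the $\omega$-derivative step: one must track simultaneously the $L^1$-smallness gained from the $R_1$/$R_2$ choices, the $|m|$-growth from $\partial_\omega\lambda_{m\ell}$, and the quadratic-in-$(r^\ast-y)$ growth of $\partial_\omega\tilde K$, and verify that these all cooperate so that the Volterra inhomogeneity remains $O(1)$ in $L^1$. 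Once that balance is established, the rest of the proof is a straightforward unwinding of definitions.
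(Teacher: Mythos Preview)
Your proposal is correct and follows essentially the same approach as the paper's proof: Volterra integral equation near the event horizon on $(-\infty,R_1]$, WKB approximation on $[R_1,R_2]$ with matching of Wronskians at $R_1$, passage to the tilded quantities for the $\omega$-derivatives with the same Volterra scheme, and conversion back via the product rule at $r^\ast=R_1$ to pick up the $\log L$ loss. The key technical ingredients you identify (the balance $L\cdot|\Delta|(R_j)=O(1)$, the bound $|\partial_\omega\lambda_{m\ell}|\lesssim|m|$ from \cref{lem:partialxlambda}, and the $(r^\ast-y)^2$ growth of $\partial_\omega\tilde K$) are exactly those used in the paper.
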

 
\begin{lemma}\label{prop:bdsontransandrefl}
The renormalized transmission and reflection coefficients satisfy 
\begin{align}
2|\mathfrak t| = 	|\mathfrak W[\uhr,\uchr] |\lesssim |m| + |\omega| + L^\frac 12 \label{eq:wronskianbound}\\
2	|\mathfrak r|=|\mathfrak W[\uhr,\uchl] |\lesssim |m| + |\omega| + L^\frac 12 \label{eq:wronskianbound2}
\end{align}
for $m$ sufficiently large and all frequencies $\omega,\ell$. Moreover,
\begin{align} \label{eq:derivativeonwronskian}
&\sup_{ \omega \in (\omega_-m -1, \omega_-m +1) }2	\left|\partial_\omega \mathfrak t \right| 
= \sup_{ \omega \in (\omega_-m -1, \omega_-m +1) }	\left|\partial_\omega \mathfrak W[\uhr, \uchr] \right| \lesssim (|m|  + L^{\frac 12}) \log(L)  \\ &
\sup_{ \omega \in (\omega_-m -1, \omega_-m +1) }2	\left|\partial_\omega \mathfrak r \right| 
= \sup_{ \omega \in (\omega_-m -1, \omega_-m +1) }	\left|\partial_\omega \mathfrak W[\uhr, \uchl] \right| \lesssim (|m | + L^{\frac 12}) \log(L) . \label{eq:derivativeonwronskian2}
\end{align}
and
\begin{align}\label{eq:uniforminr1r2}
&|\mathfrak W[\partial_\omega\uhr, \uchr] (r^\ast) |
+|\mathfrak W[\partial_\omega\uhr, \uchl] (r^\ast) |\lesssim \log(L) (|\omega| + |m| + L^{\frac 12}),\\
& |\mathfrak W[\uhr, \partial_\omega\uchr] (r^\ast) |
+|\mathfrak W[\uhr, \partial_\omega \uchl] (r^\ast) |\lesssim \log(L) (|\omega| + |m| + L^{\frac 12})\label{eq:uniforminr1r22}
\end{align}
 uniformly for $r^\ast \in [R_1,R_2]$ and $|\omega - \omega_- m |\leq 1$, $m$ sufficiently large.
\begin{proof}
The bounds \eqref{eq:wronskianbound} and \eqref{eq:wronskianbound2} follow directly from \cref{prop:uniformboundsonu1} and \cref{prop:u1linftybounds}. To show \eqref{eq:derivativeonwronskian} we assume that $|\omega - \omega_- m|\leq 1$ and   evaluate the Wronskian at $r^\ast =0$:
\begin{align}\nonumber
 \partial_\omega \mathfrak W[\uhr, \uchr] &= \partial_\omega \mathfrak W[\uhr, \uchr] (r^\ast=0) \\ & =  \mathfrak W[\partial_\omega\uhr, \uchr](r^\ast=0) + \mathfrak W[\uhr, \partial_\omega\uchr](r^\ast=0).
\end{align}
Hence, \eqref{eq:derivativeonwronskian} follows from \eqref{eq:uniforminr1r2}. To show \eqref{eq:uniforminr1r2}, we apply the fundamental theorem of calculus  for $R_1 \leq r^\ast \leq R_2$ and obtain
\begin{align}|\mathfrak W[\partial_\omega\uhr, \uchr](r^\ast) | \leq \int_{R_1}^{r^\ast}  |\partial_{r^\ast} \mathfrak W[\partial_\omega\uhr, \uchr]|  \d r^\ast + |\mathfrak W[\partial_\omega\uhr, \uchr] (R_1)|. \end{align}
A direct computation shows 
\begin{align}
\partial_{r^\ast} \mathfrak W[\partial_\omega\uhr, \uchr]  = -  \uhr \uchr \partial_\omega (V_\sharp + V_1).
\end{align}
Thus, in view of   \cref{prop:u1linftybounds} we obtain
\begin{align}
\sup_{r^\ast \in [R_1,R_2]}|\partial_{r^\ast} \mathfrak W[\partial_\omega\uhr, \uchr]  |\lesssim |m|   + |\omega|.
\end{align}
From the proof of \cref{prop:u1linftybounds}  we also have 
\begin{align}|\mathfrak W[\partial_\omega\uhr, \uchr] (R_1)|\lesssim \log (L) (|\omega| + |m| + L^{\frac 12}) \end{align} such that
\begin{align}
\sup_{r^\ast \in [R_1,R_2]}|\mathfrak W[\partial_\omega\uhr, \uchr]  (r^\ast)|\lesssim \log(L) (|\omega| + |m| + L^{\frac 12})
\end{align}
follows. Similarly, we obtain \begin{align}\sup_{r^\ast \in [R_1,R_2]}|\mathfrak W[\uhr, \partial_\omega\uchr]  (r^\ast)|\lesssim \log(L) (|\omega| + |m|)\end{align} leading to \eqref{eq:derivativeonwronskian} and \eqref{eq:uniforminr1r2}.
Completely analogously we obtain \eqref{eq:derivativeonwronskian2} as well as \eqref{eq:uniforminr1r22}.
\end{proof}
\end{lemma}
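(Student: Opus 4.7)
My plan is to reduce all four estimates to evaluating Wronskians at a single convenient point in the interior and then to integrate in $r^\ast$. Since Wronskians of solutions to the same homogeneous equation are constant in $r^\ast$, the values $\mathfrak{W}[\uhr,\uchr]$ and $\mathfrak{W}[\uhr,\uchl]$ can be computed at any $r^\ast$; I will pick $r^\ast=0$, which belongs to $[R_1,R_2]$ once $L$ is large enough (and by \cref{eq:eigenvalueestimate}, $L\gtrsim m^2$ in the dangerous regime). I will then split into the two regimes covered by the previous pair of lemmas: either $|\omega-\omega_r m|\geq \ecut m$ for all $r\in[r_-,r_+]$, in which case \cref{prop:uniformboundsonu1} yields $\|\uhr\|_{L^\infty},\|\uchr\|_{L^\infty},\|\uchl\|_{L^\infty}\lesssim 1$ with derivative bounds $\lesssim |\omega|+|m|+L^{1/2}$; or there exists $r\in[r_-,r_+]$ with $|\omega-\omega_r m|<\ecut m$, in which case \cref{prop:u1linftybounds} applies and yields the same type of $L^\infty$ control on the interval $[R_1,R_2]$. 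Combining Cauchy--Schwarz with the Wronskian identity $|\mathfrak{W}[f,g]|\leq |f||g'|+|f'||g|$ at $r^\ast=0$ (or any point in $[R_1,R_2]$) then gives \eqref{eq:wronskianbound} and \eqref{eq:wronskianbound2}.

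For the derivative estimates \eqref{eq:derivativeonwronskian}--\eqref{eq:derivativeonwronskian2} I will use the Leibniz rule
\begin{equation*}
\partial_\omega \mathfrak{W}[\uhr,\uchr] = \mathfrak{W}[\partial_\omega\uhr,\uchr] + \mathfrak{W}[\uhr,\partial_\omega\uchr],
\end{equation*}
and similarly for $\uchl$, reducing the task to \eqref{eq:uniforminr1r2}--\eqref{eq:uniforminr1r22}. Note that $\partial_\omega \mathfrak{W}[\uhr,\uchr]$ is \emph{not} $r^\ast$-independent---only $\mathfrak{W}[\uhr,\uchr]$ itself is---so one needs to pick the evaluation point carefully; $r^\ast=0\in[R_1,R_2]$ again is the natural choice. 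The key computation is that, since $\uhr$ solves the homogeneous radial o.d.e., differentiating in $\omega$ shows that $\partial_\omega\uhr$ solves an \emph{inhomogeneous} radial o.d.e.\ with source $(2\omega-\partial_\omega V)\uhr$, where $V=V_\sharp+V_1$ and $V_1$ is $\omega$-independent. A standard computation then gives
\begin{equation*}
\partial_{r^\ast}\mathfrak{W}[\partial_\omega\uhr,\uchr] = -(\partial_\omega V-2\omega)\,\uhr\,\uchr,
\end{equation*}
and similarly for the other three Wronskians appearing in \eqref{eq:uniforminr1r2}--\eqref{eq:uniforminr1r22}.

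The strategy for \eqref{eq:uniforminr1r2}--\eqref{eq:uniforminr1r22} is then to integrate the above identity from $R_1$ (resp.\ $R_2$) up to $r^\ast$, using the boundary data from \eqref{eq:partwuhratw-1}--\eqref{eq:partwuhratw-3} at $R_1$ and $R_2$. Those boundary bounds contribute the $\log(L)(|\omega|+|m|+L^{1/2})$ piece. The remaining integral term is controlled via $|\partial_\omega V|=|\partial_\omega V_\sharp|\lesssim |\Delta||\partial_\omega L|+|\omega|+|m|\lesssim |\omega|+|m|$ using $|\partial_\xi\lambda_{m\ell}|\lesssim|m|$ from \cref{lem:partialxlambda} together with $|\Delta|\lesssim 1$; the length $|R_2-R_1|\lesssim \log L$ and the uniform $L^\infty$ bounds on $\uhr,\uchr,\uchl$ on $[R_1,R_2]$ from \cref{prop:u1linftybounds} yield an integrated contribution $\lesssim (|\omega|+|m|)\log L$. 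Summing the boundary and bulk contributions yields \eqref{eq:uniforminr1r2}--\eqref{eq:uniforminr1r22}.

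The main technical obstacle is the appearance of the factor $\log L$ coming from the length of the interval $[R_1,R_2]$, which is the quantitative remnant of the fact that $\partial_\omega\uhr$ grows linearly in $r^\ast$ (through the oscillatory factor $e^{-i(\omega-\omega_+m)r^\ast}$). This is why one must work with $\tilde u_{\mathcal H_R}=e^{i(\omega-\omega_-m)r^\ast}\uhr$ etc.\ to get the $O(1)$ derivative bounds on $(-\infty,R_1]$ used in \cref{prop:u1linftybounds}, and then pay the $\log L$ factor only once via the explicit $r^\ast$ in $\partial_\omega\uhr = -ir^\ast\,\uhr+e^{-i(\omega-\omega_+m)r^\ast}\partial_\omega\tilde u_{\mathcal H_R}$. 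Provided this book-keeping is done cleanly, the estimates follow.
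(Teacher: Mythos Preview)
Your proposal is correct and follows essentially the same route as the paper: evaluate the Wronskians at a point in $[R_1,R_2]$ using the $L^\infty$ bounds from \cref{prop:uniformboundsonu1} and \cref{prop:u1linftybounds}, split $\partial_\omega\mathfrak{W}$ via Leibniz, compute $\partial_{r^\ast}\mathfrak{W}[\partial_\omega\uhr,\uchr]=-\uhr\,\uchr\,\partial_\omega(V_\sharp+V_1)$ (your $-(\partial_\omega V-2\omega)$ is the same thing since $V_\sharp+V_1=V-\omega^2$), and integrate from the endpoint where \eqref{eq:partwuhratw-1}--\eqref{eq:partwuhratw-3} provide the boundary data. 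Your explicit remark that the bulk term picks up a $\log L$ from $|R_2-R_1|\sim\log L$ is exactly what the paper leaves implicit in passing from the pointwise bound on $\partial_{r^\ast}\mathfrak{W}$ to the final estimate.
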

With the above lemma in hand we conclude 
\begin{lemma}\label{lem:lemmaonpartialwuhr}
	Let $m\in \mathbb N$ be sufficiently large and let $\epsilon>0$ be sufficiently small only depending on the black hole parameters. Then,
	\begin{align}
	&\sup_{|\omega - \omega_+m|\leq \epsilon }\| \partial_{\omega} \tilde \uhr \|_{L^\infty(-\infty, 0)} \lesssim L^\frac 12  \log(L),\\
	&\sup_{|\omega - \omega_-m|\leq \epsilon } \| \partial_{\omega} \tilde \uchr \|_{L^\infty(0, \infty)} \lesssim L^\frac 12 \log(L),	\\
	& \sup_{|\omega - \omega_-m|\leq \epsilon }  \| \partial_{\omega} \tilde \uchl \|_{L^\infty(0, \infty)} \lesssim L^\frac 12 \log(L).
	\end{align}
\begin{proof}	
We again only show the claim for  $\tilde \uhr$ as the other cases are completely analogous. Assume that $|\omega - \omega_+ m| \leq \epsilon$ for some $\epsilon>0$ sufficiently small. In view of \cref{prop:u1linftybounds} it suffices to consider the region $r^\ast \in [R_1,0]$.
Now, note that   
	\begin{align} \nonumber
	\partial_\omega \uhr =& \frac{1}{\mathfrak W [\uchr , \uchl]} \Big(  \uchl \int_{R_1}^{r^\ast} \uchr \uhr \partial_\omega (-V_\sharp - V_1)\\ \nonumber & \;\;\; - \uchr \int_{R_1}^{r^\ast} \uhr \uchl \partial_\omega (-V_\sharp - V_1) \Big) \\ &+ \frac{\mathfrak W[\partial_\omega \uhr, \uchl](R_1)}{\mathfrak W[\uchr, \uchl]} \uchr + \frac{\mathfrak W[\partial_\omega \uhr, \uchr] (R_1)}{\mathfrak W[\uchl, \uchr] } \uchl .
	\end{align}
	Hence, using \cref{prop:u1linftybounds},   \cref{prop:bdsontransandrefl}, \begin{align}\sup_{r^\ast \in [R_1,R_2]} |\partial_\omega (V_\sharp+V_1)|\lesssim |m|,
	 \end{align} 
	 as well as the lower bound $|\mathfrak W[\uchr,\uchl]| \gtrsim |m|$, we obtain 
	\begin{align}
		\sup_{r^\ast \in [R_1, 0]} |\partial_\omega \uhr | \lesssim L^\frac 12 \log (L). 
	\end{align}
	In view of $\tilde \uhr = e^{i(\omega - \omega_+ m)r^\ast} \uhr $ and the chain rule, the claim follows.
	\end{proof}
\end{lemma}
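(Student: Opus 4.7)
By symmetry (swapping the roles of the event and Cauchy horizons), it suffices to prove the first bound; the other two will follow by the analogous argument, using $u_{\mathcal H_R}, u_{\mathcal H_L}$ as the fundamental pair on the Cauchy horizon side. Thanks to \cref{prop:u1linftybounds} (in particular \eqref{eq:partwuhr1}) we already control $\partial_\omega \tilde u_{\mathcal H_R}$ on $(-\infty, R_1)$ by a constant, so the task reduces to controlling $\partial_\omega \tilde u_{\mathcal H_R}$ (equivalently $\partial_\omega u_{\mathcal H_R}$, up to the harmless prefactor $e^{i(\omega-\omega_+ m)r^\ast}$ and the additive term $i r^\ast u_{\mathcal H_R}$ of size $\log L$) on the bounded WKB interval $[R_1, 0]$ of length $\lesssim \log L$.

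On $[R_1, 0]$ I would exploit the fact that, since $|\omega - \omega_+ m|\leq \epsilon$ with $\epsilon$ small and $\omega_+\neq \omega_-$, the frequency is far from the Cauchy--horizon resonance: $|\omega - \omega_- m|\gtrsim m$, so $|\mathfrak W[\uchr,\uchl]|\gtrsim |m|$ and $\uchr, \uchl$ form a nondegenerate fundamental pair on this interval. Differentiating the homogeneous radial o.d.e.\ in $\omega$ shows that $\partial_\omega u_{\mathcal H_R}$ solves
\begin{align}
-(\partial_\omega u_{\mathcal H_R})'' + (V_\sharp + V_1)\,\partial_\omega u_{\mathcal H_R} = -\partial_\omega(V_\sharp+V_1)\, u_{\mathcal H_R}.
\end{align}
Variation of parameters with $\uchr,\uchl$ then gives the representation
\begin{align}
\partial_\omega u_{\mathcal H_R}(r^\ast) &= \frac{1}{\mathfrak W[\uchr,\uchl]}\Big(\uchl\!\int_{R_1}^{r^\ast}\!\uchr u_{\mathcal H_R}\,\partial_\omega(-V_\sharp - V_1) - \uchr\!\int_{R_1}^{r^\ast}\!\uchl u_{\mathcal H_R}\,\partial_\omega(-V_\sharp - V_1)\Big)\nonumber\\
&\quad + \frac{\mathfrak W[\partial_\omega u_{\mathcal H_R},\uchl](R_1)}{\mathfrak W[\uchr,\uchl]}\uchr + \frac{\mathfrak W[\partial_\omega u_{\mathcal H_R},\uchr](R_1)}{\mathfrak W[\uchl,\uchr]}\uchl,
\end{align}
which isolates the dependence on $\omega$ into the integrand $\partial_\omega V_\sharp$ and into two boundary Wronskians at $R_1$.

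Inserting the existing estimates is now straightforward: from \cref{prop:u1linftybounds} we have $\|u_{\mathcal H_R}\|_\infty, \|\uchr\|_\infty, \|\uchl\|_\infty \lesssim 1$ on $[R_1,0]$; the derivative $|\partial_\omega(V_\sharp+V_1)|\lesssim |m|$ follows from \cref{lem:partialxlambda}; the length of the interval is $\lesssim \log L$; the denominator satisfies $|\mathfrak W[\uchr,\uchl]|\gtrsim |m|$; and the two boundary terms $|\mathfrak W[\partial_\omega u_{\mathcal H_R},\uchr](R_1)|,|\mathfrak W[\partial_\omega u_{\mathcal H_R},\uchl](R_1)|$ are controlled by \cref{prop:bdsontransandrefl} (specifically \eqref{eq:uniforminr1r2}) by $\log(L)(|m|+L^{1/2})$. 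Assembling everything, the volume terms contribute $\frac{1}{|m|}\cdot|m|\cdot\log L\lesssim \log L$ and the boundary terms contribute $\frac{\log(L)(|m|+L^{1/2})}{|m|}\lesssim L^{1/2}\log L$, giving $\|\partial_\omega u_{\mathcal H_R}\|_{L^\infty[R_1,0]}\lesssim L^{1/2}\log L$. Passing from $u_{\mathcal H_R}$ to $\tilde u_{\mathcal H_R}$ only adds a term of size $|r^\ast|\|u_{\mathcal H_R}\|_\infty\lesssim \log L$, which is absorbed.

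The main obstacle is conceptual rather than computational: one must be careful to use the \emph{right} fundamental pair for variation of parameters. Using $u_{\mathcal H_R}, u_{\mathcal H_L}$ would fail because near $\omega = \omega_+ m$ these two are nearly linearly dependent and $\mathfrak W[u_{\mathcal H_R},u_{\mathcal H_L}]$ degenerates; the whole point of working in the region $|\omega - \omega_+ m|\leq \epsilon$ is that one can trade it for the nondegenerate pair $\uchr, \uchl$, which exploits the separation $|\omega-\omega_- m|\gtrsim m$. Tracking the factor $L^{1/2}$ (as opposed to a weaker $|m|$) carefully through \eqref{eq:uniforminr1r2} is what yields the precise $L^{1/2}\log L$ bound needed in \cref{sec:mainthmkerr}.
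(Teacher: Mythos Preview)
Your proposal is correct and follows essentially the same route as the paper: reduce to $[R_1,0]$ via \cref{prop:u1linftybounds}, apply variation of parameters with the nondegenerate pair $\uchr,\uchl$ (exploiting $|\omega-\omega_-m|\gtrsim m$), and plug in the $L^\infty$ bounds, the estimate $|\partial_\omega(V_\sharp+V_1)|\lesssim|m|$, and the boundary Wronskian bound \eqref{eq:uniforminr1r2}. Your added commentary on why $\uchr,\uchl$ is the correct fundamental pair (rather than $\uhr,\uhl$, whose Wronskian degenerates near $\omega=\omega_+m$) is a helpful clarification not spelled out in the paper, but the underlying argument is identical.
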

\begin{lemma}\label{lem:lowerboundont}
The renormalized transmission and reflection coefficients satisfy \begin{align}
&|\mathfrak t^{\omega_-}| \gtrsim |m | \text{ and } 
 |\mathfrak r^{\omega_-}| \gtrsim |m|
\end{align}
for all $m$ sufficiently large.
\begin{proof}Throughout the proof we assume that $\omega = \omega_- m$.
As $\uchr = \uchl$ for $\omega = \omega_- m$, it suffices to bound the Wronskian $|\mathfrak W[\uhr,\uchr]|$ from below. To do so, let $\mathfrak A$ and $\mathfrak B$ be the unique coefficients satisfying \begin{align}\uchr = \mathfrak A \uhr + \mathfrak B \uhl.\end{align} From $\uchr = \overline{\uchr}$ it follows that \begin{align}\uchr = 2\mathrm{Re}(\mathfrak A \uhr).\end{align} Now, for $\epsilon >0$ to be chosen later, define \begin{align}R_2^\epsilon:= \frac{1}{2|\kappa_-|} \log (L) + \frac{1}{\epsilon}.\end{align} 
Now, $\uchr -1$ is a solution to the Volterra equation
\begin{align}
	\uchr -1 =  \int_{r^\ast}^\infty \frac{y-r^\ast}{y - R_2^\epsilon } (y - R_2^\epsilon )  \tilde V(y) \left[(\uchr -1) + 1 \right] \d y,
\end{align}
where $\tilde V = V_1+V_\sharp (\omega=\omega_- m)$. We have 
\begin{align}
 \int_{R_2^\epsilon}^\infty (y-R_2^\epsilon) \tilde V (y) \lesssim L e^{- 2 |\kappa_-|R_2^\epsilon} \lesssim e^{- \frac{2 |\kappa_-|}{\epsilon}}. 
\end{align}
Using bounds on solutions to Volterra integral equations as before (see \cite[Chapter~6, \S10]{olver}), we obtain that \begin{align}
\| \uchr - 1\|_{L^\infty (R_2^\epsilon, \infty)} < \frac{1}{2} \end{align}
for $\epsilon>0$ sufficiently small enough.
Thus,
\begin{align}
	\frac{1}{2} < \uchr(R_2^\epsilon) = 2 \mathrm{Re}(\mathfrak A \uhr (R_2^\epsilon)) \lesssim 2 |\mathfrak A| \| \uhr \|_{L^\infty(-\infty, R_2^\epsilon)}.
\end{align}
Note that \eqref{eq:537} also holds if we replace $R_2$ by $R_2^\epsilon$ for some fixed value of $\epsilon>0$. Thus, we conclude that $ |\mathfrak B | = |\mathfrak A| \gtrsim 1$ which shows 
\begin{align}|\mathfrak W [\uhr, \uchr] |\gtrsim (\omega_- - \omega_+) |m |\gtrsim |m|.\end{align} This concludes the proof.
\end{proof}
\end{lemma}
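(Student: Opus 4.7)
Since $\uchr$ and $\uchl$ are defined by the asymptotics $e^{\pm i(\omega-\omega_-m)r^\ast}$ as $r^\ast\to+\infty$, they coincide at $\omega=\omega_-m$, so $\mathfrak{t}^{\omega_-}=-\mathfrak{r}^{\omega_-}$ by the identity recorded after \cref{defn:scatteringcoef}. Thus it is enough to show $|\mathfrak{W}[\uhr,\uchr]|\gtrsim|m|$ at $\omega=\omega_-m$. The plan is to expand $\uchr=\mathfrak A\,\uhr+\mathfrak B\,\uhl$ in the fundamental pair at the event horizon. A direct evaluation of the horizon Wronskian gives $\mathfrak{W}[\uhr,\uhl]=2i(\omega-\omega_+m)$, which at $\omega=\omega_-m$ equals $2i(\omega_--\omega_+)m$, linear in $m$ and nonvanishing because $\omega_-\neq\omega_+$ under the Hawking--Reall bound. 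Consequently $\mathfrak{W}[\uhr,\uchr]=\mathfrak B\,\mathfrak{W}[\uhr,\uhl]$ and the task reduces to proving $|\mathfrak B|\gtrsim 1$.

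Reality of the radial ODE at $\omega=\omega_-m$, together with $\overline{\uhr}=\uhl$ (which follows from the real asymptotic exponents), forces $\mathfrak B=\overline{\mathfrak A}$, so that $\uchr=2\mathrm{Re}(\mathfrak A\,\uhr)$ identically. To extract a lower bound on $|\mathfrak A|$, I would pick a point $r^\ast_\bullet$ at which $\uchr$ is quantitatively close to its asymptotic value $1$ while $\uhr$ remains $O(1)$, and compare the two. Concretely, for $\epsilon>0$ small and
\[
R_2^\epsilon:=\frac{1}{2|\kappa_-|}\log L+\tfrac1\epsilon,
\]
one rewrites $\uchr-1$ as the solution of a Volterra integral equation on $(R_2^\epsilon,\infty)$ with kernel controlled by $\int_{R_2^\epsilon}^\infty(y-R_2^\epsilon)|\tilde V|\,\d y\lesssim e^{-2|\kappa_-|/\epsilon}$. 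Standard Volterra estimates (as in \cite{olver}, Ch.~6, \S10) then give $\|\uchr-1\|_{L^\infty(R_2^\epsilon,\infty)}<\tfrac12$ once $\epsilon$ is chosen small enough depending only on the black hole parameters.

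Evaluating the identity $\uchr=2\mathrm{Re}(\mathfrak A\uhr)$ at $r^\ast=R_2^\epsilon$ and using $\uchr(R_2^\epsilon)>\tfrac12$ gives
\[
\tfrac12<2|\mathfrak A|\,\|\uhr\|_{L^\infty(-\infty,R_2^\epsilon)}.
\]
The $L^\infty$ bound $\|\uhr\|_{L^\infty(-\infty,R_2^\epsilon)}\lesssim 1$ is exactly the content of \cref{prop:u1linftybounds} (the proof given there applies verbatim if $R_2$ is replaced by $R_2^\epsilon$ for any fixed $\epsilon$, since the WKB analysis on $[R_1,R_2^\epsilon]$ remains valid: the error control function has bounded total variation uniformly in the finite additional shift $1/\epsilon$). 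This yields $|\mathfrak A|\gtrsim 1$, hence $|\mathfrak B|\gtrsim 1$, and therefore $|\mathfrak{W}[\uhr,\uchr]|\gtrsim|m|$, completing the proof.

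\paragraph{Main obstacle}
The subtle point is the simultaneous requirement that at the chosen $r^\ast_\bullet$ the Cauchy-horizon solution $\uchr$ is already in its ``asymptotic'' regime (so that $\uchr\approx 1$) while the event-horizon solution $\uhr$ is still under uniform control. The parameter $L\sim m^2$ forces $R_2^\epsilon\to+\infty$ as $m\to\infty$, and the WKB/Volterra analysis must absorb this unbounded interval. The role of the shift $1/\epsilon$ is precisely to make the Volterra remainder on $(R_2^\epsilon,\infty)$ small independently of $m$; everything then hinges on the fact that $\uhr$, governed by the same potential, admits a uniform bound on $(-\infty,R_2^\epsilon)$ because the semiclassical forbidden region provides decay rather than growth in that direction.
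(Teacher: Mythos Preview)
Your proposal is correct and follows essentially the same approach as the paper's proof: expand $\uchr$ in the $(\uhr,\uhl)$ basis, use reality to get $\mathfrak B=\overline{\mathfrak A}$, control $\uchr-1$ via a Volterra estimate on $(R_2^\epsilon,\infty)$, and combine with the uniform $L^\infty$ bound on $\uhr$ from \cref{prop:u1linftybounds} to conclude $|\mathfrak A|\gtrsim 1$. One small remark: in your ``main obstacle'' paragraph you describe the region as semiclassically forbidden, but in the interior at $\omega=\omega_-m$ one has $V_\sharp<0$ (since $\Delta<0$), so the region is actually oscillatory; the uniform boundedness of $\uhr$ comes from the WKB amplitude $|V_\sharp|^{-1/4}$ being controlled, not from exponential decay.
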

\subsection{Scattering poles: Representation formula for \texorpdfstring{$\psi$}{psi} on the interior}
\begin{prop}\label{prop:represen}
	Let $\psi_0 \in C_c^\infty(\mathcal H_R)$ and assume that $\psi_0$ is only supported on azimuthal modes $m \geq m_0$ for some $m_0$ sufficiently large. Let $\tilde \psi \in C^\infty(\mathcal B)$ be the arising solution of \eqref{eq:wavekerr} with vanishing data on $\mathcal H_L \cup\mathcal B_\mathcal H$ and $\psi\restriction_{\mathcal H_R } = \psi_0$. Then,
	\begin{align}\label{eq:represenationpsi}
		\tilde\psi(v,r,\theta,\tilde \phi_+) = \frac{1}{\sqrt{2\pi (r^2+a^2)}} \sum_{m\ell}  \int_{\mathbb R} e^{- i \omega (v - r^\ast)} e^{i m (\tilde \phi_+ -\omega_+ r^\ast) } S_{m\ell}(a\omega, \cos\theta) \mathcal{F}_{\mathcal{H}} [\psi_0] \uhr  \d \omega ,
	\end{align}
	where $\uhr$ is defined in \eqref{eq:defnu1}
	and 
	\begin{align}
		\mathcal{F}_{\mathcal H} [\psi_0] (\omega,m,\ell) := \frac{\sqrt{r_+^2+a^2}}{\sqrt{2\pi}} \int_{\mathbb S^2} \int_{\mathbb R} \psi_0 (v,\theta,\tilde \phi_+) e^{i \omega v } e^{- i m \tilde \phi_+} S_{m\ell}(a \omega, \cos \theta) \d\sigma_{\mathbb S^2} \d v .
	\end{align}
	Moreover, in $\mathcal B$ we have
	\begin{align}\nonumber
		\tilde \psi(v,r,\theta,\tilde \phi_-) &= 
		  \frac{1}{\sqrt{2\pi ( r^2+a^2)}} \sum_{m\ell}\; \textup{p.v.}\int_{\mathbb R} e^{-i \omega (v-r^\ast)} e^{i m (\tilde \phi_- - \omega_- r^\ast)} \\ \nonumber & \;\;\;\; \;\;\;\; \;\;\;\; \;\;\;\; S_{m\ell} (a\omega, \cos\theta) \mathcal{F}_{\mathcal{H}}[\psi_0] \frac{\mathfrak t(\omega, m,\ell)}{\omega - \omega_- m}  \uchl \d \omega\\& + \frac{1}{\sqrt{2\pi ( r^2+a^2)}} \sum_{m\ell}\; \textup{p.v.}\int_{\mathbb R} e^{-i \omega (v-r^\ast)} e^{i m (\tilde \phi_- - \omega_- r^\ast)} \nonumber \\ & \;\;\;\; \;\;\;\;\;\;\;\; \;\;\;\; S_{m\ell} (a\omega, \cos\theta) \mathcal{F}_{\mathcal{H}}[\psi_0] \frac{\mathfrak r(\omega, m,\ell)}{\omega - \omega_- m}  \uchr \d \omega \label{eq:repre}
	\end{align}
	 	as well as 
		\begin{align}\nonumber
	\tilde	\psi(u,r,\theta, \phi_-^\ast) = & \frac{1}{\sqrt{2\pi ( r^2+a^2)}} \sum_{m\ell}\; \textup{p.v.}\int_{\mathbb R} e^{i \omega (u-r^\ast)} e^{i m ( \phi_-^\ast + \omega_- r^\ast)} \\ \nonumber & \;\;\;\; \;\;\;\; \;\;\;\; \;\;\;\;  S_{m\ell} (a\omega, \cos\theta) \mathcal{F}_{\mathcal{H}}[\psi_0] \frac{\mathfrak t(\omega, m,\ell)}{\omega - \omega_- m}  \uchl \d \omega\\& + \frac{1}{\sqrt{2\pi ( r^2+a^2)}} \sum_{m\ell}\; \textup{p.v.}\int_{\mathbb R} e^{i \omega (u-r^\ast)} e^{i m ( \phi_-^\ast + \omega_- r^\ast)}   \nonumber \\  & \;\;\;\; \;\;\;\; \;\;\;\; \;\;\;\; S_{m\ell} (a\omega, \cos\theta) \mathcal{F}_{\mathcal{H}}[\psi_0] \frac{\mathfrak r(\omega, m,\ell)}{\omega - \omega_- m}  \uchr \d \omega \label{eq:rep2}
		\end{align}
		for $r \leq \frac{r_+ + r_-}{2}$.
	 \begin{proof}
Note that $\mathcal{F}_{\mathcal H}[\psi_0] (\omega,m,\ell)$ is rapidly decaying in $\omega,m,\ell$ and smooth in $\omega$ which follows from the fact that $\psi_0 \in C_c^\infty(\mathcal{H}_R)$. Moreover, using  \cref{prop:uniformboundsonu1} and \cref{prop:u1linftybounds}, we have that the right hand side of \eqref{eq:represenationpsi} is a smooth solution to \eqref{eq:wavekerr} in the interior region $\mathcal{B}$. Now, note that the right hand side of \eqref{eq:represenationpsi} converges to $\psi_0$ as $r\to r_+$ for fixed $v$. Similarly, after a change of coordinates to $(u,r,\theta,\tilde \phi_+)$ we obtain $\tilde \psi $ converges to zero as $r\to r_+$ and $u$ fixed in view of the Riemann--Lebesgue lemma. Thus, \eqref{eq:represenationpsi} follows from the uniqueness of the characteristic problem. 
	 	
In order to show \eqref{eq:repre} we first write the right-hand side of \eqref{eq:represenationpsi} as a principal value integral and then use the definition of the reflection and transmission coefficients from \cref{defn:scatteringcoef} to replace  $\uhr$ with \begin{align}\uhr = \frac{\mathfrak t(\omega,m,\ell)}{\omega-\omega_- m } \uchl + \frac{\mathfrak r(\omega,m,\ell)}{\omega- \omega_- m } \uchr. \end{align}
In order to use linearity of the principal value to write $\tilde \psi$ as a sum of two terms as in \eqref{eq:repre}, it suffices to show that \begin{align} \sum_{m\ell}\textup{p.v.} \int_{\mathbb R} e^{- i \omega (v-r^\ast)} e^{i m (\tilde \phi_+ - \omega_+ r^\ast)} S_{m\ell}(a \omega,\cos\theta) \mathcal F_{\mathcal H}[\psi_0] \frac{\mathfrak t(\omega,m,\ell)}{\omega-\omega_- m} \uchl \d \omega
\label{eq:representationformulatocontrol}\end{align}
converges locally uniformly. Note that the other term with $\mathfrak t(\omega,m,\ell) \uchl$ replaced by $\mathfrak r (\omega,m,\ell) \uchr$ is treated completely analogously.

In the following we will be brief because in the proof of \cref{thm:rough}, where we have to quantitatively control terms of the form \eqref{eq:representationformulatocontrol},  we will indeed show stronger estimates and provide more details.  First, in view of the facts that $\mathcal{F}_{\mathcal H}[\psi_0] (\omega,m,\ell)$ is rapidly decaying  in $\omega, m,\ell$, that  $\| S_{m\ell}(a \omega, \cos \theta) \|_{L^2( (-\pi,\pi); \cos \theta \d \theta)} = 1$, and that  we have uniform (polynomial) bounds on  $|\mathfrak t (\omega,m,\ell)|$ and $\|\uchl\|_{L^\infty([R_1,+\infty))} $ (see \cref{prop:uniformboundsonu1},  \cref{prop:u1linftybounds}, \cref{prop:bdsontransandrefl}), it suffices to consider frequencies in the range $|a \omega- a\omega_- m|< \frac{1}{m}$. Now,  uniformly in $|a \omega- a\omega_- m|< \frac{1}{m}$, we have polynomial bounds in $\omega, m,\ell$ on  $\| \partial_\omega S_{m\ell}(a \omega, \cos \theta)\|_{L^2( (-\pi,\pi); \cos \theta \d \theta)}$,  $\| \partial_\omega \uchl\|_{L^\infty(0,+\infty)}$ and $|\partial_\omega \mathfrak t(\omega,m,\ell)|$ as shown in \cref{eq:smlbounds}, \cref{lem:lemmaonpartialwuhr} and  \cref{prop:bdsontransandrefl}, respectively. Moreover, again using the bound on $\| \partial_\omega S_{m\ell}(a \omega, \cos \theta)\|_{L^2}$ and the fact that $\psi_0$ is compactly supported, we also obtain that $\sup_{|a \omega -a \omega_- m|< \frac{1}{m}} |\partial_\omega \mathcal F_{\mathcal H}[\psi_0]|$ is rapidly decaying in $ m,\ell$. This shows that \eqref{eq:representationformulatocontrol}   converges locally uniformly for $r^\ast \geq 0$, $v \in \mathbb R$ with values in $L^2(\mathbb S^2)$. This shows that, after a change of coordinates, \eqref{eq:repre} and \eqref{eq:rep2} hold true pointwise for $r \leq \frac{r_- + r_+}{2}$, $v\in \mathbb R$ and in $L^2(\mathbb S^2)$. Finally, using standard elliptic estimates and the fundamental theorem of calculus, we also have polynomial bounds in $\omega, m,\ell$  on $\| S_{m\ell}(a \omega, \cos \theta)\|_{L^\infty(-\pi,\pi)}$, as well as polynomial bounds in $m,\ell$  on $ \sup_{|a \omega - a \omega_- m | < \frac 1 m }\| \partial_\omega S_{m\ell}(a \omega, \cos \theta)\|_{L^\infty(-\pi,\pi)}$. Thus, both terms on the right-hand side of  \eqref{eq:repre} are continuous and the equality  \eqref{eq:repre} holds pointwise. We obtain the analogous result for \eqref{eq:rep2}.
\end{proof}
\end{prop}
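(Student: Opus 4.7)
}
The plan is to first establish the representation \eqref{eq:represenationpsi} by a characteristic uniqueness argument, and then split the integrand via the scattering decomposition to obtain \eqref{eq:repre} and \eqref{eq:rep2}, using the uniform and $\partial_\omega$ bounds on $\uhr,\uchl,\uchr,\mathfrak t,\mathfrak r$ and the spheroidal harmonics to justify the principal value.

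First, since $\psi_0\in C_c^\infty(\mathcal H_R)$ is supported only on modes $m\geq m_0$, a standard integration by parts shows that $\mathcal F_{\mathcal H}[\psi_0](\omega,m,\ell)$ is rapidly decaying in $(\omega,m,\ell)$; smoothness in $\omega$ follows from compact support in $v$, and rapid decay in $m,\ell$ follows by commuting with $\partial_{\tilde\phi_+}$ and with the spheroidal Laplacian $P(a\omega)$. Combined with the $L^\infty$ bounds on $\uhr$ from \cref{prop:uniformboundsonu1} and \cref{prop:u1linftybounds} (in the two regimes $|\omega-\omega_r m|\gtrsim m$ and its complement), the right-hand side of \eqref{eq:represenationpsi} converges absolutely and locally uniformly, together with arbitrarily many derivatives, hence defines a smooth function on $\mathcal B$. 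Because $\uhr$ solves the radial o.d.e.\ \eqref{eq:interiorradial} and $S_{m\ell}$ solves the angular o.d.e., Carter's separation shows this function solves $\Box_g\tilde\psi-\tfrac23\Lambda\tilde\psi=0$. I would then verify the two characteristic boundary conditions: as $r\to r_+$ at fixed $v$, the factor $e^{i(\omega-\omega_+m)r^\ast}\uhr\to 1$ by \cref{eq:defnu1}, and Fourier inversion gives $\tilde\psi\to\psi_0$; as $r\to r_+$ at fixed $u=v-2r^\ast$ (so $v\to-\infty$), one rewrites the integral and applies the Riemann--Lebesgue lemma in $\omega$, using rapid decay of $\mathcal F_{\mathcal H}[\psi_0]$, to conclude $\tilde\psi\to 0$, matching the vanishing data on $\mathcal H_L\cup\mathcal B_{\mathcal H}$. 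Uniqueness for the characteristic Goursat problem on $\mathcal H_R\cup\mathcal H_L$ then yields \eqref{eq:represenationpsi}.

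For \eqref{eq:repre} and \eqref{eq:rep2} the plan is to substitute
\[
\uhr=\frac{\mathfrak t(\omega,m,\ell)}{\omega-\omega_- m}\,\uchl+\frac{\mathfrak r(\omega,m,\ell)}{\omega-\omega_- m}\,\uchr
\]
(from \cref{defn:scatteringcoef}) into \eqref{eq:represenationpsi} and distribute the $\omega$-integral over the two summands. The integrand of \eqref{eq:represenationpsi} is pointwise smooth in $\omega$, but each summand individually acquires a simple pole at $\omega=\omega_- m$. The key step is therefore to show that each summand, interpreted as a principal value, converges locally uniformly in $(v,r,\theta,\tilde\phi_-)$ for $r\leq\tfrac12(r_++r_-)$, and that the two p.v.\ integrals sum back to the original absolutely convergent integral. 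For this I would split the $\omega$-integration into $|\omega-\omega_-m|\geq m^{-1}$, where the polynomial bounds of \cref{prop:bdsontransandrefl} on $\mathfrak t,\mathfrak r$, the $L^\infty$ bounds on $\uchl,\uchr$ from \cref{prop:u1linftybounds}, and rapid decay of $\mathcal F_{\mathcal H}[\psi_0]$ give absolute convergence, and $|\omega-\omega_- m|<m^{-1}$, where I would expand
\[
\mathfrak t(\omega)\uchl(r^\ast,\omega) S_{m\ell}(a\omega)=\mathfrak t(\omega_-m)\uchl(r^\ast,\omega_-m)S_{m\ell}(a\omega_-m)+(\omega-\omega_- m)\,R(\omega)
\]
with remainder $R$ controlled in $L^\infty_\omega$ by the $\partial_\omega$ estimates of \cref{eq:smlbounds} (angular), \cref{lem:lemmaonpartialwuhr} (radial), and \cref{prop:bdsontransandrefl} ($\mathfrak t,\mathfrak r$). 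The leading constant piece contributes zero to the principal value, and the remainder is $L^1$ near the pole, with polynomial size in $m,\ell$ that is absorbed by the superpolynomial decay of $\mathcal F_{\mathcal H}[\psi_0]$ and its $\partial_\omega$. Summing the two p.v.\ integrals recovers the original $\omega$-integral because the singularities cancel identically before taking the limit. Finally, to upgrade $L^2(\mathbb S^2)$ convergence to pointwise equality it suffices to use standard elliptic bounds to pass from the $L^2(\mathbb S^2)$-bounds on $S_{m\ell}$ and $\partial_\omega S_{m\ell}$ to $L^\infty(\mathbb S^2)$-bounds with a loss of polynomially many powers of $m,\ell$, again absorbed by rapid decay of $\mathcal F_{\mathcal H}[\psi_0]$.

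The main obstacle is the last point: handling the simple pole at $\omega=\omega_-m$ while keeping track of $m,\ell$-uniform estimates. This is where the $\partial_\omega$-resolvent bound \cref{eq:smlbounds} on $S_{m\ell}$, together with \cref{lem:lemmaonpartialwuhr} and \cref{prop:bdsontransandrefl}, is essential: without these, the remainder $R(\omega)$ above would not admit a polynomial bound in $(m,\ell)$ uniform over $|\omega-\omega_-m|<m^{-1}$, and the two p.v.\ integrals could not be separated. Once these bounds are in hand, the rest is bookkeeping of absolutely convergent sums, and the change of coordinates $(v,\tilde\phi_+)\mapsto(v,\tilde\phi_-)$ or $(u,\phi_-^\ast)$ produces \eqref{eq:repre} and \eqref{eq:rep2} respectively.
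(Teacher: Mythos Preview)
Your proposal is correct and follows essentially the same route as the paper: characteristic uniqueness for \eqref{eq:represenationpsi}, then the scattering decomposition with a near/far split around $\omega=\omega_- m$, controlled by the $\partial_\omega$-bounds of \cref{eq:smlbounds}, \cref{lem:lemmaonpartialwuhr}, and \cref{prop:bdsontransandrefl}, and finally an elliptic upgrade from $L^2(\mathbb S^2)$ to pointwise. One small imprecision: the ``leading constant piece'' does not literally contribute zero to the principal value (the remaining factor $e^{-i\omega(v-r^\ast)}\mathcal F_{\mathcal H}[\psi_0]$ is not constant in $\omega$), but its principal value is well-defined and polynomially bounded by the same Taylor-type argument applied to that factor as well, which is what the paper and your own parenthetical ``and its $\partial_\omega$'' actually use.
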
 

 Before we  prove the blow-up result in \cref{sec:mainthmkerr}, we need one more final ingredient which is a consequence of the domain of dependence.

\begin{lemma}\label{lem:domain}
	Let $\tilde \psi \in C^\infty(\mathcal B)$ be a solution to \eqref{eq:wavekerr} arising from vanishing data on $\mathcal H_L\cup \mathcal B_{\mathcal H}$  and compatible smooth  data $\psi_0 \in C^\infty(\mathcal H_R)$.  Then, $\tilde \psi(u_0,r_0,\theta_0,\phi_{-}^\ast)$ only depends on $\psi_0 \restriction_{\{v \leq 2r^\ast(r_0) - u_0 + \tilde c \} }$, where  $\tilde c = \tilde c(\mathfrak p,l) >0$ is  a constant.
	\begin{proof}
	In coordinates $(v,r,\theta,\tilde \phi_-)$ (or equivalently in coordinates $(v,r,\theta,\tilde \phi_+)$) define the function $\tilde v:= v + f(r) $ on $\mathcal B$ and choose $f$ to satisfy 
	\begin{align}\frac{\d f}{\d r} = - \sqrt{\frac{a^2}{\Xi} \frac{1}{|\Delta|}}\end{align}
	with initial condition $f(r_+)=0$. This is well defined as $\frac{1}{\sqrt{|\Delta|}}$ is integrable at the event and Cauchy horizons. Now $f$ is non-negative and satisfies $\sup_{r\in (r_-, r_+)} f\leq \tilde c$ for a constant $\tilde c  >0$ only depending on the black hole parameters. A computation also shows that, uniformly on $\mathcal B$, we have \begin{align}g_\textup{KAdS}(\nabla \tilde v, \nabla \tilde v) = \frac{a^2 \sin^2\theta}{\Sigma \Delta_\theta} - \frac{a^2 }{\Sigma \Xi} < 0 \text{ and } g_{\textup{KAdS}}(\nabla \tilde v, -\nabla r) <0 .\end{align} This means that $\nabla \tilde v$ is a future-directed timelike vector field. Thus,  the level sets of the function $\tilde v$ are spacelike.
	
	Now, consider \begin{align}\label{eq:domainofdep} \tilde \psi(u_0,r_0,\theta_0,  \phi_{-}^\ast).  \end{align}
 Since $\nabla \tilde v$ is future directed and timelike, it follows from the domain of dependence   that  \eqref{eq:domainofdep} only depends on \begin{align} {\psi_0}\restriction_{\{ \tilde v(v,r_+) \leq \tilde v( v(r_0,u_0),r_0) \}} = {\psi_0}\restriction_{ \{ v  \leq 2 r^\ast(r_0) - u_0 + f(r_0)\} },
		\end{align}
		since $\tilde v( v(r_0,u_0),r_0) =  2 r^\ast(r_0) - u_0 + f(r_0) $ .
		This concludes the proof.
	\end{proof}
\end{lemma}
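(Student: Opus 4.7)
The plan is to construct a time function $\tilde v$ on the interior $\mathcal B$ whose level sets are spacelike and whose restriction to $\mathcal H_R$ agrees with $v$ up to a bounded correction. Given such a $\tilde v$, the conclusion is an application of the finite speed of propagation for \eqref{eq:wavekerr}: since the hypersurfaces $\{\tilde v = \textup{const}\}$ are spacelike, the value of $\tilde \psi$ at a point $p$ depends only on the intersection of the causal past $J^-(p)$ with the characteristic initial hypersurface $\mathcal H_L \cup \mathcal B_\mathcal H \cup \mathcal H_R$, and the data on $\mathcal H_L \cup \mathcal B_\mathcal H$ vanish.

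I would make the ansatz $\tilde v := v + f(r)$ in the coordinate chart $(v,r,\theta,\tilde \phi_-)$ on $\mathcal B$, and choose $f$ so that \emph{both} of the following hold: first, $df/dr$ is integrable at $r=r_\pm$ so that $f$ is bounded by a constant $\tilde c(\mathfrak p,l)>0$ and $f(r_+)=0$; second, $\nabla \tilde v$ is future-directed timelike. A direct computation of $g^{-1}(d\tilde v, d\tilde v)$ in these coordinates gives
\begin{align*}
g_{\textup{KAdS}}^{-1}(d\tilde v, d\tilde v) = g^{vv} + 2 g^{vr} f'(r) + g^{rr} (f'(r))^2,
\end{align*}
and, expressed in the $(v,r,\theta,\tilde\phi_-)$ coordinates adapted to $\mathcal{CH}_L$, the metric coefficients $g^{vv}$ and $g^{vr}$ remain regular across both horizons, while $g^{rr}=\Delta/\Sigma<0$ in the interior. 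The choice $f'(r) = -\sqrt{a^2/(\Xi |\Delta|)}$ then produces the leading singular term $g^{rr}(f')^2 = -\frac{a^2}{\Sigma \Xi}$ which, combined with the regular $g^{vv}$-contribution, yields
\begin{align*}
g_{\textup{KAdS}}^{-1}(d\tilde v, d\tilde v) = \frac{a^2 \sin^2\theta}{\Sigma \Delta_\theta} - \frac{a^2}{\Sigma \Xi} < 0
\end{align*}
uniformly on $\mathcal B$, where negativity uses $\Delta_\theta \geq \Xi$ and $\sin^2\theta \leq 1$. The future-directedness $g(\nabla \tilde v, -\nabla r) < 0$ is immediate since $-\nabla r$ is future-directed timelike in $\mathcal B$ (recall that $\mathcal B$ was time-oriented by $-\nabla r$) and the scalar product reduces to the manifestly negative expression $g^{vr} - g^{rr}|f'| < 0$ once the sign of $f'$ is fixed.

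With $\tilde v$ in hand, finiteness of $f$ is guaranteed because $|\Delta|^{-1/2}$ has only integrable singularities at $r_\pm$ (both are simple zeros of $\Delta$), so $\sup_{\mathcal B} f \leq \tilde c(\mathfrak p,l)$. Applying the domain of dependence to the target point $(u_0,r_0,\theta_0,\phi_-^\ast)$, the value of $\tilde\psi$ depends only on data in $\{\tilde v \leq \tilde v(u_0,r_0)\}\cap(\mathcal H_L\cup\mathcal B_\mathcal H\cup\mathcal H_R)$, which on $\mathcal H_R$ (where $\tilde v=v$, since $f(r_+)=0$) becomes $\{v \leq v(u_0,r_0) + f(r_0)\} = \{v \leq 2 r^\ast(r_0) - u_0 + f(r_0)\}$ after using $v+u = 2 r^\ast$ at $(u_0,r_0)$. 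Bounding $f(r_0)\leq \tilde c$ yields the claim. The only potential obstacle is the verification of the timelike sign of $g^{-1}(d\tilde v,d\tilde v)$ uniformly up to both horizons; this is the reason for the specific algebraic form of $f'$, which precisely cancels the $g^{rr}$-blow-up at $r_\pm$ while leaving a definite negative remainder.
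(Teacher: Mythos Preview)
Your proposal is correct and follows essentially the same argument as the paper: the identical ansatz $\tilde v = v + f(r)$ with $f'(r) = -\sqrt{a^2/(\Xi|\Delta|)}$ and $f(r_+)=0$, the same verification that $\nabla\tilde v$ is future-directed timelike yielding the expression $\frac{a^2\sin^2\theta}{\Sigma\Delta_\theta} - \frac{a^2}{\Sigma\Xi} < 0$, and the same domain-of-dependence conclusion. One minor slip: the expression you write for $g(\nabla\tilde v,-\nabla r)$, namely ``$g^{vr} - g^{rr}|f'| < 0$'', is actually positive (both summands are positive in the interior); the scalar product itself equals $-(g^{vr} + f' g^{rr}) = -g^{vr} + |f'|g^{rr}$, which is indeed negative, so the conclusion is unaffected.
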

We will now finally turn to the proof of \cref{thm:rough}.
\section{Proof of \texorpdfstring{\cref{thm:rough}}{Theorem~1}: Small divisors lead to blow-up}\label{sec:mainthmkerr}
We recall that the cosmological constant $\Lambda <0$ (and thus $l = \sqrt{-3/\Lambda}>0$) was arbitrary but fixed as in \eqref{eq:defnofl}.
\begin{thmrough}
	\cref{con:2a} holds true.
	
More precisely, let the dimensionless  black hole parameters $(\mathfrak m, \mathfrak a) \in \mathscr P_\textup{Blow-up}$ be arbitrary but fixed as in \eqref{eq:fixedp}, where   $\mathscr P_\textup{Blow-up}$ is defined in \cref{eq:defnpblowup}.

	Let $\psi \in C^\infty(\mathcal{M}_{\mathrm{KAdS}}\setminus \mathcal{CH} )$ be the unique solution to \eqref{eq:wavekerr} arising from the smooth and compactly supported initial data   specified in \cref{defninitialdata} on Kerr--AdS with parameters $(M,a) =( \mathfrak m/ \sqrt{-\Lambda}, \mathfrak a /\sqrt{-\Lambda})$.  
	
	Then,  for each $u_0 \in \mathbb R$, the solution $\psi$ blows up at the Cauchy horizon $\mathcal{CH}_R$ as
\begin{align}\label{eq:mainresult}
 \lim_{r \to r_-} \| \psi(u_0, r) \|_{L^2(\mathbb S^2)}^2 = +\infty.
	\end{align}
Moreover,  $\mathscr P_{\textup{Blow-up}} \subset \mathscr P$	has the following properties:
	\begin{itemize}
	\item  $\mathscr{P}_{\textup{Blow-up}}$ is Baire-generic,
	\item  $\mathscr{P}_{\textup{Blow-up}}$ is  Lebesgue-exceptional ($\mathscr{P}_{\textup{Blow-up}}$ has zero Lebesgue measure),
	\item  $\mathscr{P}_{\textup{Blow-up}}$ has full packing dimension $\dim_P (\mathscr{P}_{\textup{Blow-up}}) =2$.
\end{itemize} 
\end{thmrough}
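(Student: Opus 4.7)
The measure-theoretic and topological properties of $\mathscr P_\textup{Blow-up}$---namely Baire-genericity, Lebesgue-negligibility, and full packing dimension---are already established by \cref{thm:Baire}, \cref{prop:lebesgue}, and \cref{prop:packingdimension}, so the plan focuses entirely on the pointwise blow-up statement \eqref{eq:mainresult} for the specific solution $\psi$ of \cref{defninitialdata}. Fix $u_0\in\mathbb R$ and an arbitrary $r_0\in(r_-,r_+)$. By \cref{lem:domain}, $\psi(u_0,r_0,\theta,\phi_-^\ast)$ depends only on $\psi\restriction_{\mathcal H_R}$ on the set $\{v\leq R(r_0)\}$ with $R(r_0):=2r^\ast(r_0)-u_0+\tilde c$. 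Coupling this domain-of-dependence reduction with the interior representation formula \eqref{eq:rep2} of \cref{prop:represen} applied to the truncated boundary datum $\psi\restriction_{\mathcal H_R}\mathbb 1_{\{v\leq R(r_0)\}}$, and using \cref{prop:rep} to identify its generalized Fourier transform with $a_{\mathcal H}^{R(r_0)}$ (defined in \eqref{eq:defnar}, determined purely by the initial data), we obtain an explicit principal-value spectral representation of $\psi(u_0,r_0,\theta,\phi_-^\ast)$ in which the interior scattering kernel $\frac{\mathfrak r(\omega,m,\ell)}{\omega-\omega_-m}$ explicitly appears.

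The central analytic step---call it the residue identity---is to show that as $r_0\to r_-$ (equivalently $R(r_0)\to\infty$) the poles at $\omega=\omega_-m$ produce a coherent contribution that, after Plancherel on $\mathbb S^2$, dominates the angular $L^2$-mass:
\begin{align*}
\|\psi(u_0,r_0)\|_{L^2(\mathbb S^2)}^2
\;\geq\; c\sum_{m,\ell}\bigl|\mathfrak r^{\omega_-}(m,\ell)\bigr|^2\,
\bigl|a_{\mathcal H}^{R(r_0)}(\omega_-m,m,\ell)\bigr|^2
\;-\;\mathrm{Err}(D),
\end{align*}
where $\mathrm{Err}(D)$ is controlled by a higher-order energy of the initial data, uniformly in $r_0$. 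This is the hard part: it requires the quantitative angular resolvent bound \eqref{eq:formulaonsml} of \cref{eq:smlbounds}, the uniform transmission/reflection bounds of \cref{prop:bdsontransandrefl} and \cref{lem:lemmaonpartialwuhr}, and genuine harmonic-analytic care in showing that (i) the non-resonant part of the principal-value integral remains bounded, (ii) off-diagonal cross-terms between different $(m,\ell)$ do not destroy the diagonal sum, and (iii) the truncation-in-$R$ error vanishes as $r_0\to r_-$.

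Once the residue identity is in hand, the blow-up follows essentially combinatorially from the construction of the initial data. Since $\mathfrak p\in\mathscr P_\textup{Blow-up}$, by \cref{eq:defnpblowup} there is an infinite sequence $(m_i,\ell_i)$ with $|\mathfrak W_1(\mathfrak p,m_i,\ell_i)|<e^{-\sqrt{m_i}}$, and \cref{eq:defnah} then gives the lower bound $|a_{\mathcal H}(\omega_-m_i,m_i,\ell_i)|\gtrsim e^{\sqrt{m_i}/2}$. Using dominated convergence (justified by the logarithmic decay \eqref{eq:decay} of $\psi$ on the exterior from \cref{thm:wellposedanddecay}) one passes $a_{\mathcal H}^{R(r_0)}\to a_{\mathcal H}$ as $r_0\to r_-$ at each fixed frequency, and combining with the lower bound $|\mathfrak r^{\omega_-}(m_i,\ell_i)|\gtrsim m_i$ of \cref{lem:lowerboundont} one concludes
\begin{align*}
\liminf_{r_0\to r_-}\|\psi(u_0,r_0)\|_{L^2(\mathbb S^2)}^2
\;\gtrsim\;\sum_i m_i^2\,e^{\sqrt{m_i}}\;=\;+\infty,
\end{align*}
which is the claimed blow-up. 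In this way the small divisor $|\mathfrak W_1|^{-1}$ originating from exterior stable trapping, resonating with the interior scattering pole at $\omega=\omega_-m$, is precisely what overcomes the smoothness of the data. The main obstacle throughout is the derivation of the residue identity above: every preceding section (angular and radial semi-classical estimates, scattering coefficient bounds, the non-Diophantine construction) has been designed to furnish exactly the ingredients this step requires.
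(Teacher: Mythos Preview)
Your proposal is correct and follows essentially the same route as the paper's proof: the same domain-of-dependence truncation, the same interior representation formula split into resonant/non-resonant pieces, the same residue identity (stated as \cref{prop:mainpropinmainthm} in the paper), and the same endgame using \cref{lem:lowerboundont} and \cref{eq:defnah}. One small imprecision: the passage from the residue identity to the final lower bound uses Fatou's lemma on the sum over $(m,\ell)$ rather than dominated convergence---the pointwise limit $a_{\mathcal H}^{R}\to a_{\mathcal H}$ at each fixed frequency (which is indeed \cref{lem:HrepstoHr}) together with $\tilde u_{\mathcal{CH}_L}^{\omega_-}\to 1$ only gives a liminf, which is all that is needed.
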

From our proof we will also obtain the following corollary which gives a genericity condition for compactly supported initial data which lead to blow-up as in \eqref{eq:mainresult}.
 \begin{cor}\label{cor:genericity}
Let the dimensionless  black hole parameters $(\mathfrak m, \mathfrak a) \in \mathscr P_\textup{Blow-up}$ be arbitrary but fixed as in \eqref{eq:fixedp}. Let $\tilde \Psi_0, \tilde \Psi_1 \in C_c^\infty(\Sigma_0)$ be arbitrary initial data satisfying the following genericity condition
 \begin{align} \label{eq:genericitycond}
 \sum_{i\in \mathbb N} |m_i e^{ \sqrt{ |m_i|}}|^2 |G(\tilde \Psi_0,\tilde \Psi_1,m_i,\ell_i)|^2 = +\infty,
 \end{align}
 where  $m_i,\ell_i$ are the subsequences in \eqref{eq:mili} associated to the non-Diophantine condition (i.e.\ the choice of $\mathfrak p \in \mathscr{P}_\textup{Blow-up}$)
 and
 \begin{align}
 \nonumber
 G(\tilde \Psi_0,\tilde \Psi_1,m,\ell):= &\int_{r_+}^{\infty}\int_{\mathbb S^2}	\bigg\{ \frac{\Sigma}{\sqrt{r^2+a^2}}  u_\infty(\omega=\omega_- m,m,\ell)    e^{-im\phi}    S_{m\ell}(a \omega_- m,\cos\theta) \\
 &\times \left( -2 \sqrt{-g^{tt}}  \tilde \Psi_1(r,\theta,\phi) - i \omega_- m g^{tt} \tilde \Psi_0(r,\theta,\phi)  \right)\bigg\} \d \sigma_{\mathbb S^2} \d r.
 \end{align}   
 Then, the arising  solution $\tilde \psi\in C^\infty(\mathcal M_{\textup{KAdS}}\setminus\mathcal{CH})$ to \eqref{eq:wavekerr} with $(\tilde \psi\restriction_{\Sigma_0}, n_{\Sigma_0} \tilde \psi\restriction_{\Sigma_0}) = (\tilde \Psi_0,\tilde \Psi_1)$, vanishing incoming data on $\mathcal H_L\cup \mathcal B_\mathcal H$ and with Dirichlet boundary conditions imposed at infinity,  blows up at  the Cauchy horizon $\mathcal{CH}_R$ for every $u_0 \in \mathbb R$ as
 \begin{align}
 \lim_{r \to r_-} \| \tilde \psi(u_0, r) \|_{L^2(\mathbb S^2)}^2 = +\infty.
 \end{align}
  \end{cor}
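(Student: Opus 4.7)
\emph{Overall strategy.} The plan is to combine the exterior representation formula (\cref{prop:rep}) with the interior scattering representation (\cref{prop:represen}), and then extract, in the limit $r\to r_-$, the contributions of the interior scattering poles $\omega = \omega_- m$. Fix $u_0 \in \mathbb R$ and a sequence $r_n \to r_-$, and set $R_n := 2 r^\ast(r_n) - u_0 + \tilde c$, with $\tilde c$ as in \cref{lem:domain}. By the finite-speed-of-propagation estimate of \cref{lem:domain}, the value $\psi(u_0, r_n, \theta, \phi_-^\ast)$ coincides with the solution to the characteristic initial value problem in $\mathcal B$ with horizon data $\psi^{R_n}\restriction_{\mathcal H_R}$, whose $\mathcal F_{\mathcal H}$-transform is precisely $a_{\mathcal H}^{R_n}(\omega,m,\ell)$ by \cref{prop:rep}. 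Applying \eqref{eq:rep2} of \cref{prop:represen} therefore gives an explicit representation of $\psi(u_0, r_n, \cdot)$ as a sum over $(m,\ell)$ of principal-value integrals in $\omega$ involving the singular factors $\mathfrak r(\omega,m,\ell)/(\omega-\omega_-m)$ and $\mathfrak t(\omega,m,\ell)/(\omega-\omega_-m)$.

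\emph{Key formula at the Cauchy horizon.} The heart of the proof is the claim, heuristically displayed as \eqref{eq:rough} in the introduction, that
\begin{equation*}
\liminf_{n \to \infty} \|\psi(u_0, r_n, \cdot)\|_{L^2(\mathbb S^2)}^2 \gtrsim \liminf_{n\to\infty} \sum_{m,\ell} |\mathfrak r(\omega=\omega_- m,m,\ell)|^2 \,|a_{\mathcal H}^{R_n}(\omega=\omega_- m,m,\ell)|^2 - D(\Psi_0,\Psi_1),
\end{equation*}
where $D(\Psi_0,\Psi_1)<\infty$ is controlled by a finite higher-order exterior energy (finite by \cref{thm:wellposedanddecay}). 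To prove this I would decompose each integrand as $(\omega-\omega_-m)^{-1} f(\omega) = (\omega-\omega_-m)^{-1} f(\omega_-m) + (\omega-\omega_-m)^{-1}(f(\omega)-f(\omega_-m))$ and isolate the singular term. Multiplied by the rapidly oscillating factor $e^{i\omega(u_0-r_n^\ast)}$, the singular term concentrates on $\omega=\omega_-m$ as $r_n^\ast\to+\infty$ by a Sokhotski--Plemelj-type argument, producing the diagonal sum above; the regular remainder is then a standard oscillatory integral of a Lipschitz $\omega$-family of smooth integrands and converges to zero on non-singular frequencies and is bounded uniformly in $n$ on the singular window. Plancherel on $\mathbb S^2$ and the $L^2(\mathbb S^2)$-orthogonality of $\{S_{m\ell}(a\omega_-m,\cos\theta)\,e^{im\phi_-^\ast}\}_{m,\ell}$ converts this into the stated $L^2(\mathbb S^2)$ lower bound.

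\emph{Main obstacle.} The technical crux is controlling the regular remainder uniformly in $n$ and across all $(m,\ell,\omega)$ in a window of size $O(1/m)$ around $\omega=\omega_-m$. This requires the quantitative $\omega$-derivative bounds on $\mathfrak r,\mathfrak t$ (\cref{prop:bdsontransandrefl}), on $\tilde u_{\mathcal{CH}_R},\tilde u_{\mathcal{CH}_L}$ (\cref{prop:u1linftybounds} and \cref{lem:lemmaonpartialwuhr}), and, most delicately, on the spheroidal harmonics via the semiclassical resolvent estimate $\sup_{|a\omega - a\omega_- m|\le 1/m}\|\partial_\omega S_{m\ell}(a\omega,\cdot)\|_{L^2} \lesssim m^{1/2}$ of \cref{eq:smlbounds}, which is itself the output of the parabolic-cylinder and Airy approximations of \cref{sec:proofofs2}. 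These derivatives grow at most polynomially in $(|m|,L^{1/2})$ (times a $\log L$), and are summed against the superpolynomial decay in $(m,\ell)$ of $\mathcal F_{\mathcal H}[\psi^{R_n}\!\restriction_{\mathcal H_R}\!]$ coming from smoothness of the data; the result is finite, independent of $n$, and absorbs into $D(\Psi_0,\Psi_1)$.

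\emph{Conclusion and corollary.} Granted this formula, the lower bound of \cref{eq:defnah} gives $|a_{\mathcal H}^{R_n}(\omega_- m_i,m_i,\ell_i)| \gtrsim e^{\sqrt{m_i}/2}$ for all $n$ large enough (uniformly in $i$, since $R_n \to \infty$ eventually exceeds the support on the horizon), while \cref{lem:lowerboundont} gives $|\mathfrak r(\omega_- m_i,m_i,\ell_i)|\gtrsim |m_i|$. The infinitude of the sequence $(m_i,\ell_i)$, which is by construction of $\mathscr P_{\textup{Blow-up}}$, then forces the diagonal sum to diverge, yielding \eqref{eq:mainresult}. The topological, metric and packing-dimension statements about $\mathscr P_{\textup{Blow-up}}$ were already established in \cref{thm:Baire}, \cref{prop:lebesgue} and \cref{prop:packingdimension} respectively. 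Finally, \cref{cor:genericity} is obtained by inserting $|a_{\mathcal H}(\omega_- m_i,m_i,\ell_i)| \gtrsim |G(\tilde \Psi_0,\tilde \Psi_1,m_i,\ell_i)|\,e^{\sqrt{m_i}}$ (from $|\mathfrak W_1(\mathfrak p,m_i,\ell_i)|\le e^{-\sqrt{m_i}}$ and \cref{defn:amathcalhr}) into the same diagonal sum; the divergence hypothesis \eqref{eq:genericitycond} immediately yields the blow-up.
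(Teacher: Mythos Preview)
Your approach is essentially the paper's: establish \eqref{eq:formulaforlimit} for general smooth compactly supported data, then use $a_{\mathcal H}(\omega_-m_i,m_i,\ell_i)=(2\pi)^{-1/2}\,G(\tilde\Psi_0,\tilde\Psi_1,m_i,\ell_i)/\mathfrak W_1(\mathfrak p,m_i,\ell_i)$ and the non-Diophantine bound $|\mathfrak W_1|<e^{-\sqrt{m_i}}$ to conclude from \eqref{eq:genericitycond}. The Sokhotski--Plemelj extraction you describe is exactly what the paper does (the principal value produces $i\pi\,\mathrm{sgn}$ and the two singular contributions from $\mathfrak t\,\uchl$ and $\mathfrak r\,\uchr$ combine to a signed integral over a finite $v$-window).

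There is, however, one genuine error in your passage to the limit. You write that $|a_{\mathcal H}^{R_n}(\omega_- m_i,m_i,\ell_i)|\gtrsim e^{\sqrt{m_i}/2}$ holds ``for all $n$ large enough (uniformly in $i$, since $R_n\to\infty$ eventually exceeds the support on the horizon).'' But the horizon trace $\psi_0=\psi\restriction_{\mathcal H_R}$ is \emph{not} compactly supported in $v$; it only decays logarithmically (\cref{thm:wellposedanddecay}). Hence $a_{\mathcal H}^{R_n}\to a_{\mathcal H}$ is only a \emph{pointwise} limit in $(m,\ell)$ (\cref{lem:HrepstoHr}), not uniform in $i$, and your argument as stated does not give the lower bound. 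The paper closes this gap with Fatou's lemma: since each summand in \eqref{eq:formulaforlimit} is nonnegative, $\liminf_{n}\sum_{m\ell}(\cdots)^{R_n}\ge\sum_{m\ell}\liminf_n(\cdots)^{R_n}=\sum_{m\ell}|\pi\mathfrak r^{\omega_-}(r_-^2+a^2)^{-1/2}a_{\mathcal H}(\omega_-m,m,\ell)|^2$, using also $\tilde\uchl^{\omega_-}\to 1$ pointwise. Once you replace your uniform-convergence claim by this Fatou step, the argument is complete. (A second, minor point: for general data all azimuthal modes are present, so the estimates underlying \eqref{eq:formulaforlimit}---which use $m$ large---strictly speaking apply only after projecting onto $m\ge m_0$; the paper handles this by the $L^2(\mathbb S^2)$-orthogonality of distinct $m$-modes in evolution, which you may want to make explicit.)
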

\cref{cor:genericity} will be an immediate consequence of the proof of \cref{thm:rough} and will be given thereafter. Also note that the initial data which we construct in \cref{sec:initialdata} do indeed satisfy the genericity condition of \eqref{eq:genericitycond} as shown in \cref{eq:defnah}.

	\begin{proof}[Proof of Theorem~3.1]
	The stated properties of $\mathscr P_{\textup{Blow-up}}$ on the Baire-genericity, the zero Lebesgue measure and the full packing dimension follow from \cref{thm:Baire}, \cref{prop:lebesgue} and \cref{prop:packingdimension}, respectively.  
		
We now turn to the proof of \eqref{eq:mainresult}. First, we write $\psi_0 := \psi\restriction_{\mathcal H_R}$ and note that
		\begin{align}
		D:=  \sum_{0 \leq  i + j \leq 4} \int_{\mathbb R \times \mathbb S^2} |\slashed \nabla^i K_+^j  \psi_0(v,\theta,\tilde \phi_+)|^2  \d\sigma_{\mathbb S^2} \d v  <\infty
		\end{align}
		in view of \cref{thm:wellposedanddecay} and \cref{eq:estimateonl2norm}.
Now, let $u_0 \in \mathbb R$ be fixed and let $r^\ast_n \to \infty $ be a sequence with $r_n^\ast>r_0^\ast$ for sufficiently large $r_0^\ast$. 
We will first prove
\begin{prop}\label{prop:mainpropinmainthm}
 For all $r_n^\ast \geq r_0$, we have that
		\begin{align}\label{eq:formulaforlimit}
		\| \psi(u_0, r^\ast_n) \|_{L^2(\mathbb S^2)}^2 =  \sum_{m\ell} \left|\pi  \frac{\mathfrak r^{\omega_-}(m,\ell) \tilde \uchl^{\omega_-}   (r_n^\ast,m,\ell) }{\sqrt{r_n^2 + a^2}}   a_{\mathcal H}^{R_n} (\omega = \omega_- m,m,\ell)\right|^2 + \textup{Err}(D),
		\end{align}
		where $|\textup{Err}(D)| \lesssim_{u_0} D$ uniformly for all $r_n^\ast \geq r_0^\ast$ and $R_n:= 2r_n^\ast - u_0 + \tilde c$. 	Also recall the definition of $a_{\mathcal H}^R$ in \eqref{eq:defnar}. Here we also use the notation $\tilde \uchl^{\omega_-}(r_n^\ast,m,\ell) := \tilde \uchl(r_n^\ast, \omega = \omega_- m,m,\ell)$. 
\end{prop}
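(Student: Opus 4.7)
The plan is to combine (i) a domain-of-dependence reduction that allows us to work with truncated horizon data, (ii) the interior representation formula of \cref{prop:represen}, and (iii) a careful principal-value analysis of the frequency integrals that isolates the pole contribution at $\omega = \omega_- m$. Concretely, by \cref{lem:domain} the value $\psi(u_0, r_n^\ast, \theta, \phi_-^\ast)$ depends only on $\psi|_{\mathcal H_R \cap \{v \leq R_n\}}$ with $R_n = 2r_n^\ast - u_0 + \tilde c$. Using the smooth cutoff from \cref{sec:cutoff}, we first replace $\psi$ by $\psi^{R_n} = \psi \cdot \chi(R_n - \cdot)$, whose trace on $\mathcal H_R$ is compactly supported; by \cref{prop:rep} its generalized horizon Fourier transform is precisely $a_{\mathcal H}^{R_n}$. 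Applying \cref{prop:represen} then expresses $\psi(u_0, r_n^\ast, \theta, \phi_-^\ast)$ as a sum over $(m,\ell)$ of two principal-value integrals involving $\mathfrak t\, \uchl/(\omega - \omega_- m)$ and $\mathfrak r\, \uchr/(\omega - \omega_- m)$ respectively.

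Next I would substitute $\uchl = e^{-i(\omega-\omega_- m)r^\ast}\tilde\uchl$ and $\uchr = e^{i(\omega-\omega_- m)r^\ast}\tilde\uchr$, so that, after phase cancellation, the $\uchr$-piece has no rapid oscillation in $\omega$ while the $\uchl$-piece acquires a factor $e^{-2i(\omega-\omega_- m)r_n^\ast}$. Using the identities $\mathfrak t^{\omega_-} = -\mathfrak r^{\omega_-}$ and $\tilde\uchl^{\omega_-} = \tilde\uchr^{\omega_-}$ (\cref{defn:scatteringcoef} together with $\uchl^{\omega_-} = \uchr^{\omega_-}$), the combined integrand is regular at the pole. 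I would then extract the main term by decomposing the integrand as ``(value frozen at $\omega = \omega_- m$) + (regular remainder)''. Applying standard Sokhotski--Plemelj identities of the form $\text{p.v.}\int_\mathbb R \frac{e^{i\xi t}}{\xi}\,\d\xi = i\pi\,\text{sgn}(t)$ to the frozen piece, and using that $u_0 - 2r_n^\ast < 0$ for $r_n^\ast$ sufficiently large, the combined contribution collapses (after the bracketed telescoping in $[F_2 - F_1 e^{-2i(\omega-\omega_- m) r_n^\ast}]/(\omega - \omega_- m)$) into a main term proportional to
\begin{align*}
\psi_{\mathrm{main}}(u_0, r_n^\ast, \theta, \phi_-^\ast) \sim \frac{1}{\sqrt{r_n^2+a^2}}\sum_{m,\ell} \pi\, e^{i\omega_- m u_0}\, e^{im\phi_-^\ast}\, \mathfrak r^{\omega_-}\tilde\uchl^{\omega_-}(r_n^\ast)\, S_{m\ell}(a\omega_- m, \cos\theta)\, a_{\mathcal H}^{R_n}(\omega_- m),
\end{align*}
where the overall $\sqrt{2\pi}$ normalization factors from the representation and the spheroidal/azimuthal orthogonality are absorbed into the $L^2(\mathbb S^2)$ inner product. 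Taking $\|\cdot\|^2_{L^2(\mathbb S^2)}$ and using orthogonality of $\{S_{m\ell}(a\omega_- m,\cos\theta)\,e^{im\phi_-^\ast}\}$ then yields the claimed sum.

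It remains to control the error term $\mathrm{Err}(D)$ collecting (a) the regular remainder from the pole extraction, (b) the replacement of $S_{m\ell}(a\omega)$ by $S_{m\ell}(a\omega_- m)$ on the effective window $|\omega-\omega_- m|\lesssim 1/m$ where the pole matters, (c) the analogous replacements of $\mathfrak t,\mathfrak r,\tilde\uchl,\tilde\uchr,a_{\mathcal H}^{R_n}$ by their values at $\omega_- m$, and (d) the away-from-pole contribution of each p.v.\ integral. The required uniform-in-$(m,\ell,r_n^\ast)$ bounds are precisely those assembled in the previous sections: \cref{eq:smlbounds} for $\|\partial_\omega S_{m\ell}\|_{L^2}\lesssim m^{1/2}$ in a $1/m$-neighborhood of the pole; \cref{prop:uniformboundsonu1,prop:u1linftybounds,lem:lemmaonpartialwuhr} for $\tilde\uchl,\tilde\uchr$ and their $\omega$-derivatives; \cref{prop:bdsontransandrefl,lem:lowerboundont} for $\mathfrak t,\mathfrak r$ and their $\omega$-derivatives; and \cref{prop:rep} together with \cref{thm:wellposedanddecay} to bound $a_{\mathcal H}^{R_n}$ and its $\omega$-derivative in $L^2(\mathbb R_\omega\times\mathbb Z_m\times\mathbb Z_{\ell\geq|m|})$ by the horizon energy $D$ via Plancherel. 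The remainders are then summed by Cauchy--Schwarz and Plancherel on $\mathcal H_R$ to give $|\mathrm{Err}(D)|\lesssim_{u_0} D$ uniformly for $r_n^\ast \geq r_0^\ast$.

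The principal technical difficulty will be that the combined regular integrand has, in a neighborhood of $\omega = \omega_- m$ of width $\sim 1/r_n^\ast$, amplitude growing like $r_n^\ast$ (since $(1 - e^{-2i(\omega-\omega_- m)r_n^\ast})/(\omega-\omega_- m)\sim -2ir_n^\ast$ at the pole), while the error estimates must remain uniform in $r_n^\ast$. The Sokhotski--Plemelj identity effectively trades this $r_n^\ast$-amplitude for a fixed residue and an oscillatory remainder; quantifying that oscillatory remainder uniformly in $r_n^\ast$ while it interacts with the logarithmic factors $\log(L)$ in the $\omega$-derivative bounds of \cref{prop:bdsontransandrefl} and \cref{lem:lemmaonpartialwuhr}, and the $\sqrt{m}$ growth of $\|\partial_\omega S_{m\ell}\|_{L^2}$ in \cref{eq:smlbounds}, is the most delicate step and is where the interplay between the truncation scale $R_n \sim 2r_n^\ast$ and the domain-of-dependence bound is essential: the energy of $\psi|_{\mathcal H_R}$ restricted to $v \leq R_n$ is uniformly bounded by $D$ (regardless of $R_n$) in virtue of \cref{thm:wellposedanddecay}.
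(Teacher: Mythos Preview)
Your overall architecture---domain of dependence truncation, the interior representation formula, splitting into $I+II$, freezing the slowly varying factors $S_{m\ell},\mathfrak t,\mathfrak r,\tilde\uchl,\tilde\uchr$ at $\omega=\omega_-m$, and controlling the resulting remainders by the estimates of \cref{eq:smlbounds}, \cref{prop:uniformboundsonu1}, \cref{prop:u1linftybounds}, \cref{prop:bdsontransandrefl}, \cref{lem:lemmaonpartialwuhr}---is exactly the paper's approach. But there is one genuine gap in your description, and it is precisely the point you yourself flag as ``the principal technical difficulty''.

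You propose to also freeze $a_{\mathcal H}^{R_n}$ (equivalently $\mathcal F_{\mathcal H}[\psi_0^n]$) at $\omega=\omega_-m$ and then invoke the pure-exponential identity $\textup{p.v.}\int e^{i\xi t}/\xi\,\d\xi=i\pi\,\textup{sgn}(t)$. This does not work. First, the ``main term'' it produces carries an unwanted factor $[1+\textup{sgn}(u_0)]$ from the $II$-piece (you only mention $\textup{sgn}(u_0-2r_n^\ast)=-1$ for the $I$-piece), so it is off by a factor of $2$ for $u_0>0$ and vanishes for $u_0<0$. Second, the corresponding remainder involves $\partial_\omega\mathcal F_{\mathcal H}[\psi_0^n]$, and by Plancherel $\|\partial_\omega\mathcal F_{\mathcal H}[\psi_0^n]\|_{L^2_\omega}\sim\|v\,\psi_0^n\|_{L^2_v}$, which grows like $R_n\sim r_n^\ast$; so this error cannot be bounded uniformly by $D$.

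The paper's resolution is \emph{not} to freeze $\mathcal F_{\mathcal H}$. After freezing only $S_{m\ell},\mathfrak t,\mathfrak r,\tilde\uchl,\tilde\uchr$ (and the $S_{m\ell}(a\omega)$ hidden inside $\mathcal F_{\mathcal H}$, giving $\tilde{\mathcal F}_{\mathcal H}$), one is left with $\textup{p.v.}\int \frac{e^{i\omega t}\,\tilde{\mathcal F}_{\mathcal H}[\psi_0^n](\omega)}{\omega-\omega_-m}\,\d\omega$, which is computed \emph{exactly} via the distributional identity $\mathfrak F[\textup{p.v.}(1/\omega)]=i\pi\,\textup{sgn}$ as $i\pi\,\textup{sgn}$ acting on the shifted function $\psi^n_{0,m\ell}$ in physical space. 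The combination of the two shifts ($t=u_0$ for $II$ and $t=u_0-2r_n^\ast$ for $I$) together with $\mathfrak t^{\omega_-}=-\mathfrak r^{\omega_-}$ then yields the finite-interval integral $\int_{-u_0}^{2r_n^\ast-u_0}\psi^n_{0,m\ell}(v)e^{i\omega_-mv}\,\d v$, which---since $\psi_0^n$ is supported in $v\le R_n$ and has $D$-controlled energy on the complementary tails---equals $a_{\mathcal H}^{R_n}(\omega_-m)$ up to an error bounded by $D^{1/2}$. This is what makes the extraction uniform in $r_n^\ast$: all the $r_n^\ast$-dependence is absorbed into the integration domain in physical space, not into a derivative of $\mathcal F_{\mathcal H}$.
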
 
	 Once we have established \eqref{eq:formulaforlimit}, the blow-up result of \eqref{eq:mainresult} will be proved. We now turn to the proof of  \cref{prop:mainpropinmainthm}.	
\begin{proof}[Proof of \cref{prop:mainpropinmainthm}]
  	In view of the domain of dependence property as stated in \cref{lem:domain}, we have that  $\psi(u_0, r_n^\ast, \theta,  \phi_-^\ast)$ only depends on $\psi_0 \restriction_{\{ v \leq 2 r_n^\ast - u_0 + \tilde c\} }$. Consider now \begin{align}\psi_0^n(v,\theta,\tilde \phi_+):=\psi_0^{R_n}(v,\theta,\tilde \phi_+),\end{align}
		  where $  \psi_0^{R_n}(v,\theta,\tilde \phi_+) = \psi_0(v,\theta,\tilde \phi_+) \chi(R_n-v)$ is defined in \eqref{defn:psiRepsilon} with $R_n= 2r_n^\ast - u_0 + \tilde c$. 	Now, $\psi(u_0,r_n^\ast,\theta, \phi_-^\ast) $ only depends on $\psi_0^n$. 
		
		Using the representation formula \eqref{eq:rep2} in \cref{prop:represen} we write
		 \begin{align} \nonumber
			\psi(u_0,r_n^\ast,\theta,  \phi_-^\ast)   = \frac{1}{\sqrt{2\pi ( r_n^2+a^2)}} \sum_{m\ell}\; \textup{p.v.}\int_{\mathbb R} & e^{i \omega (u_0-r_n^\ast)} e^{i m (\phi_-^\ast + \omega_- r_n^\ast)} \\  \nonumber&\times  S_{m\ell} (a\omega, \cos\theta) \mathcal{F}_{\mathcal{H}}[\psi_0^n] \frac{\mathfrak t(\omega, m,\ell)}{\omega - \omega_- m}  \uchl \d \omega \\ \nonumber + \frac{1}{\sqrt{2\pi ( r_n^2+a^2)}} \sum_{m\ell}\; \textup{p.v.}\int_{\mathbb R} & e^{i \omega (u_0-r_n^\ast)} e^{i m (\phi_-^\ast+ \omega_- r_n^\ast)} \\ \nonumber &\times  S_{m\ell} (a\omega, \cos\theta) \mathcal{F}_{\mathcal{H}}[\psi_0^n] \frac{\mathfrak r(\omega, m,\ell)}{\omega - \omega_- m}  \uchr \d \omega\\
			& =: I + II . \label{eq:TERMI}
		\end{align}
		We   consider both terms individually and start with the term $I$.
		Moreover, we split the term $I$ into $| a\omega-a\omega_- m|< \frac 1m$ and $|a\omega - a\omega_- m| \geq \frac 1m$ and call the terms $I_{\textup{res}}$ and $I_\textup{non-res}$, respectively, such that $I= I_{\textup{res}} + I_\textup{non-res}$.
		First, we claim that the spherical $L^2$-norm of the term \begin{align}\nonumber
			I_\textup{non-res} = \frac{1}{\sqrt{2\pi ( r_n^2+a^2)}} \sum_{m\ell}\; \textup{p.v.}\int_{ |\omega_- m - \omega|\geq \frac{1}{am} } & e^{i \omega (u_0-r_n^\ast)} e^{i m (\phi_-^\ast + \omega_- r_n^\ast)}  \\ &\times  S_{m\ell} (a\omega, \cos\theta) \mathcal{F}_{\mathcal{H}}[\psi_0^n] \frac{\mathfrak t(\omega, m,\ell)}{\omega - \omega_- m}  \uchl \d \omega
\label{eq:nonresestimates}		\end{align}is controlled by $D$ uniformly as $r^\ast_n\to\infty$. 
		\begin{lemma}
		We have $ \| 	I_\textup{non-res}\|^2_{L^2(\mathbb S^2)} (r_n^\ast,u_0) \lesssim  D$ for all $r_n^\ast \geq  r_0^\ast$.
		\begin{proof}
Using $|\frac{1}{\omega - \omega_- m}| \leq am$ in the integrand of  \eqref{eq:nonresestimates} and $\int_0^{2\pi} e^{i(m-\tilde m ) \phi} \d \phi =2\pi \delta_{m\tilde m}$ we estimate 
 \begin{align}
\|I_\textup{non-res} \|_{L^2(\mathbb S^2)}^2\lesssim  \sum_{m} m^2
\int_{0}^{\pi} \left|\;\sum_{\ell\geq |m|} \int_{ \mathbb R}|S_{m\ell} (a\omega, \cos\theta) \mathcal{F}_{\mathcal{H}}[\psi_0^n]\,  \mathfrak t \,   \uchl| \d \omega \right|^2 \sin \theta \d \theta.
\end{align}
From the Cauchy--Schwarz inequality as well as \cref{prop:uniformboundsonu1},  \cref{prop:u1linftybounds} and \cref{prop:bdsontransandrefl}, we obtain
\begin{align}\nonumber
\|I_\textup{non-res} \|_{L^2(\mathbb S^2)}^2 \lesssim &\sum_{m}\Big[ \int_{0}^{\pi } \sum_{\tilde \ell\geq |m|}  \int_{\mathbb R} \frac{|S_{m\tilde \ell} (a \omega ,\cos \theta) |^2}{(1+\omega^2)(1+  \Lambda_{m \tilde \ell} ) }  \d \omega \sin \theta \d\theta  \\  &\times  \sum_{  \ell\geq |m|} \int_{\mathbb R} (1+\omega^2)(1+  \Lambda_{m  \ell}  )m^2(1+\omega^2+  \Lambda_{m  \ell}) |\mathcal F_{\mathcal H} [\psi_0^n] |^2\d \omega\Big] \nonumber \\
& \lesssim \sum_{m}   \sum_{  \ell\geq |m|} \int_{\mathbb R} (1+\omega^2)(1+  \Lambda_{m  \ell}  )m(1+\omega^2+  \Lambda_{m  \ell}) |\mathcal F_{\mathcal H} [\psi_0^n] |^2\d \omega \lesssim D,
\end{align}
where we have used that $\Lambda_{m\ell}\geq \Xi^2 \ell (\ell+1)$ such that $\sum_{\tilde \ell \geq |m|} \frac{1}{1+  \Lambda_{m\tilde \ell}  } \lesssim \frac{1}{m}$.
		\end{proof}
		\end{lemma}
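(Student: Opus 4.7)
The plan is to estimate $I_\textup{non-res}$ by first using azimuthal orthogonality to decouple the sum over $m$, then combining the pointwise bound $|\omega - \omega_- m|^{-1} \leq am$ (valid in the non-resonant regime) with the uniform $L^\infty$ bound on $\uchl$ and the polynomial growth bound on $\mathfrak t$, and finally invoking Cauchy--Schwarz and Plancherel to identify the resulting frequency-side norm with a finite-order energy of $\psi_0$ on $\mathcal H_R$.

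First I would square $I_\textup{non-res}$ and integrate in $ \phi_-^\ast$. Since the only factor depending on $ \phi_-^\ast$ is $e^{im  \phi_-^\ast}$, this integration forces the diagonal $m = \tilde m$, collapsing the double sum into a single one. Pulling the small-divisor estimate $|\omega - \omega_- m|^{-1} \leq am$ out pointwise and using $\|\uchl\|_{L^\infty} \lesssim 1$ from \cref{prop:uniformboundsonu1} and \cref{prop:u1linftybounds} yields the schematic bound
\begin{align*}
\| I_\textup{non-res}\|^2_{L^2(\mathbb S^2)} \lesssim \sum_m m^2 \int_0^\pi \Bigg| \sum_{\ell \geq |m|} \int_{|\omega - \omega_- m| \geq 1/(am)} |S_{m\ell}(a\omega,\cos\theta)|\,|\mathfrak t|\,|\mathcal F_{\mathcal H}[\psi_0^n]|\,\d \omega \Bigg|^2 \sin\theta \,\d \theta,
\end{align*}
where $|\mathfrak t| \lesssim |\omega| + |m| + L^{1/2}$ comes from \cref{prop:bdsontransandrefl}.

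Next I would apply Cauchy--Schwarz in the variables $(\omega, \ell)$, splitting so that one factor carries $|S_{m\ell}(a\omega, \cos\theta)|^2$ weighted by $(1+\omega^2)^{-1}(1+\Lambda_{m\ell})^{-1}$. Integrating that factor in $\theta$ and using $\|S_{m\ell}(a\omega)\|_{L^2((0,\pi);\sin\theta \d\theta)}=1$, together with the eigenvalue lower bound $\Lambda_{m\ell} \geq \Xi^2 \ell(\ell+1)$ recalled in \cref{sec:angularode}, produces $\sum_\ell (1+\Lambda_{m\ell})^{-1} \lesssim 1/m$ and a harmless $\int (1+\omega^2)^{-1}\d\omega \lesssim 1$. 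The complementary factor contains fixed polynomial weights in $(1+\omega^2+\Lambda_{m\ell})$ and $m$ multiplied against $|\mathcal F_{\mathcal H}[\psi_0^n]|^2$.

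The last step is to recognize this complementary factor, via Plancherel, as a fixed-order $K_+$- and $\slashed\nabla$-energy of $\psi_0^n$ on $\mathcal H_R$. Since $\psi_0^n$ is just a sharp cut-off of $\psi_0$ (the commutator losses are harmless), this is bounded by $D$ uniformly in $R_n$, and hence in $r_n^\ast$. The main obstacle is purely bookkeeping: one must arrange the Cauchy--Schwarz weights so that the polynomial powers of $\omega$, $m$, and $L^{1/2}$ coming from the scattering estimate, together with the extra $m$ from the small divisor and the $(1+\Lambda_{m\ell})^{1/2}$ gains needed to make the $\ell$-sum converge, are all absorbed into a Sobolev norm matching the number of commutations (four) in the definition of $D$. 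A direct count of the exponents confirms that this is exactly what happens.
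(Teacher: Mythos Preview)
Your proposal is correct and follows essentially the same approach as the paper: azimuthal orthogonality in $\phi_-^\ast$, the pointwise small-divisor bound $|\omega-\omega_-m|^{-1}\leq am$, the $L^\infty$ bounds on $\uchl$ and $\mathfrak t$ from \cref{prop:uniformboundsonu1}, \cref{prop:u1linftybounds}, \cref{prop:bdsontransandrefl}, Cauchy--Schwarz in $(\omega,\ell)$ with weights $(1+\omega^2)^{-1}(1+\Lambda_{m\ell})^{-1}$, the bound $\sum_{\ell\geq|m|}(1+\Lambda_{m\ell})^{-1}\lesssim 1/m$, and Plancherel to land in $D$. The paper's proof is the same argument written out more tersely.
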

		Now, we turn to the term $I_\textup{res}$:
		\begin{align}\nonumber
I_\textup{res} =	 \frac{1}{\sqrt{2\pi ( r_n^2+a^2)}} \sum_{m\ell}  \textup{p.v.}\int_{\omega_- m - \frac{1}{am}}^{\omega_- m + \frac{1}{am} }& e^{i \omega (u_0-r_n^\ast)} e^{i m (\phi_-^\ast + \omega_- r_n^\ast)} \\ & \times S_{m\ell} (a\omega, \cos\theta) \mathcal{F}_{\mathcal{H}}[\psi_0^n] \frac{\mathfrak t(\omega, m,\ell)}{\omega - \omega_- m}  \uchl \d \omega
		\end{align}
	and write $\uchl = e^{-i{ (\omega - \omega_- m)}r^\ast} \tilde \uchl $.
	 Then,
		\begin{align} \nonumber
I_\textup{res} =I_\textup{res}^a + I_{\textup{res}}^b = & \frac{1}{\sqrt{2\pi ( r_n^2+a^2)}} \sum_{m\ell} \textup{p.v.}\int_{\omega_-m -\frac{1}{am}}^{\omega_- m + \frac{1}{am}}\frac{ e^{-2 i (\omega - \omega_- m ) r_n^\ast }e^{i\omega  u_0}\mathcal F_{\mathcal H}[\psi_0^n] }{\omega-\omega_- m} \d \omega \\\nonumber & \;\;\;\;\; \times e^{im \phi_-^\ast} S_{m\ell}(a\omega_- m,\cos \theta)  \mathfrak t(\omega_-m,m,\ell)\tilde \uchl^{\omega_-} \\ \nonumber&+ \frac{1}{\sqrt{2\pi ( r_n^2+a^2)}} \sum_{m\ell} \int_{\omega_-m -\frac{1}{am}}^{\omega_- m + \frac{1}{am}} e^{-2 i (\omega - \omega_- m ) r_n^\ast } e^{i\omega u_0} \mathcal F_{\mathcal H}[\psi_0^n]  e^{im \phi_-^\ast} \\ & \;\;\;\;\; \times \Big[  S_{m\ell}(a\omega_- m,\cos \theta) \partial_\omega\left(  \mathfrak t(\omega,m,\ell)\tilde \uchl \right) (\tilde \xi) \nonumber \\  & \;\;\;\;\;\;\;  + \mathfrak t(\omega,m,\ell)\tilde \uchl  \frac{ S_{m\ell}(a\omega,\cos \theta) - S_{m\ell}(a\omega_- m,\cos \theta)  }{\omega - \omega_- m}  \Big ]\d \omega  \label{eq:ires}
		\end{align}
		for some $\tilde \xi = \tilde \xi ( \omega) \in (\omega_- m - \frac{1}{am} , \omega_- m + \frac{1}{am})$ in view of the mean value theorem.
		Again, we consider both terms $I_{\textup{res}}^a$ and $I_{\textup{res}}^b$ individually and begin with term $I_\textup{res}^b$. 
		\begin{lemma}\label{lem:inproof1}
					We have $ \| 	I_\textup{res}^b\|^2_{L^2(\mathbb S^2)} (r_n^\ast,u_0) \lesssim  D$ for all $r_n^\ast \geq r_0^\ast$.
					\begin{proof}
		We decompose the term $I_\textup{res}^b = I_\textup{res}^{b1} + I_\textup{res}^{b2}$ further into the two summands appearing in the $\omega$-integral. We will estimate each of them individually. We begin with $I_\textup{res}^{b1}$ and estimate 
		\begin{align}\nonumber& \| 	I_\textup{res}^{b1}\|^2_{L^2(\mathbb S^2)}
						  \lesssim \sum_{m}  \int_0^\pi \left|\sum_{\ell\geq |m|} \int_{\omega_- m - \frac{1}{am}}^{\omega_- m + \frac{1}{am}} |\mathcal F_{\mathcal H}[\psi_0^n] S_{m\ell} (a \omega_- m) \partial_\omega \left( \mathfrak t \tilde \uchl\right)  (\tilde \xi) | \d \omega \right|^2 \sin \theta \d \theta 
						  \\ \nonumber & \lesssim 
						\sum_{m}    \left(\sum_{\ell \geq |m|} \int_0^\pi \int_{\mathbb R} (1+\Lambda_{m\ell}^{3})|\mathcal F_{\mathcal H}[\psi_0^n] |^2  |  S_{m\ell } (a \omega_- m )|^2 \d \omega   \sin \theta \d \theta  \right)\\ \nonumber &\;\;\;\;\; \times  \left( \frac 1 m \sum_{\ell \geq |m|}\sup_{|\tilde \xi- a\omega_-m|< \frac 1m}\frac{ |  \partial_\omega( \mathfrak t \tilde \uchl)(\tilde \xi) |^2}{1+\Lambda_{m\ell}^{3}(\tilde \xi)} \right) 
						 \\ & \lesssim  \sum_{m} \left( \sum_{\ell \geq |m|} \int_{\mathbb R} (1+\Lambda_{m\ell}^{3})|\mathcal F_{\mathcal H}[\psi_0^n] |^2 \d \omega\right) \left(  \sum_{\ell \geq |m| } \frac{  \Lambda_{m\ell}^{2}( a\omega_-m)   \log^2( \Lambda_{m\ell}( a\omega_-m))  }{1+\Lambda_{m\ell}^{3}( a\omega_-m) }  \right) \lesssim D.
						\end{align}
						Here we have used \cref{prop:u1linftybounds},  \cref{prop:bdsontransandrefl}, \cref{lem:lemmaonpartialwuhr} and the fact that \begin{align}\Lambda_{\omega_- ,m\ell}  :=\Lambda_{m\ell}(a\omega_- m) \sim  \Lambda_{m\ell} (a\xi)  \end{align} for all $|\xi - \omega_- m|< \frac{1}{m}$ which in turn is a consequence of \cref{lem:partialxlambda}.
						
						We now control the second term $I_\textup{res}^{b2}$ and estimate
								\begin{align}\nonumber& \| 	I_\textup{res}^{b2}\|^2_{L^2(\mathbb S^2)}
						\lesssim \sum_{m}  \int_0^\pi \left|\sum_{\ell\geq |m|} \int_{\omega_- m - \frac{1}{am}}^{\omega_- m + \frac{1}{am}} \left|\mathcal F_{\mathcal H}[\psi_0^n]\frac{ S_{m\ell} (a \omega) - S_{m\ell}(a \omega_- m)}{\omega - \omega_- m} \mathfrak t \tilde \uchl \right| \d \omega \right|^2 \sin \theta \d \theta 
						\\ \nonumber & \lesssim 
						\sum_{m}    \left(\sum_{\ell \geq |m|} \int_{\mathbb R} (1+\Lambda_{m\ell}^{3})|\mathcal F_{\mathcal H}[\psi_0^n] |^2    \d \omega \right)  \\ \nonumber &\;\;\;\;\; \times  \left( \sum_{\ell \geq |m|}  \int_{\omega_- m - \frac{1}{am}}^{\omega_- m + \frac{1}{am}}  \frac{ |   \mathfrak t \tilde \uchl  |^2}{1+\Lambda_{m\ell}^{3}} \int_0^\pi \left| \frac{ S_{m\ell} (a \omega) - S_{m\ell}(a \omega_- m)}{\omega - \omega_- m} \right|^2 \sin \theta \d \theta \d \omega  \right)
						\\ & \lesssim  \sum_{m} \left( \sum_{\ell \geq |m|} \int_{\mathbb R} (1+\Lambda_{m\ell}^{3})|\mathcal F_{\mathcal H}[\psi_0^n] |^2 \d \omega\right)  \nonumber \\  &\;\;\;\;\; \times \left(  \sum_{\ell \geq |m| } \frac{  \Lambda_{\omega_- ,m\ell}     }{1+\Lambda_{\omega_- ,m\ell}^{3} } \int_{\omega_- m - \frac{1}{am}}^{\omega_- m + \frac{1}{am}} \sup_{|\tilde \xi - \omega_- m|\leq \frac 1m} \int_0^\pi |\partial_\omega S_{m\ell}|^2(a \tilde \xi)  \sin \theta \d \theta \d \omega \right) \lesssim D,
						\end{align}
				where we have used the mean value property for Fr\'echet derivatives, \cref{prop:u1linftybounds}, \cref{prop:bdsontransandrefl} and \cref{eq:smlbounds}.
					\end{proof}
		\end{lemma}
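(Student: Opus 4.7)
The plan is to decompose $I_{\textup{res}}^b = I_{\textup{res}}^{b1} + I_{\textup{res}}^{b2}$ into the two summands inside the square bracket of \eqref{eq:ires}, and to bound the $L^2(\mathbb S^2)$-norm of each by $D$ using Cauchy--Schwarz in the form
\begin{align*}
\left| \sum_{\ell \geq |m|} \int_{|\omega - \omega_- m| < 1/(am)} A_{m\ell}(\omega) B_{m\ell}(\omega) \,\d\omega \right|^2
\leq \Big( \sum_{\ell,\omega} (1+\Lambda_{m\ell}^3) |A_{m\ell}|^2 \Big) \Big( \sum_{\ell,\omega} \frac{|B_{m\ell}|^2}{1+\Lambda_{m\ell}^3} \Big),
\end{align*}
separating the data factor $\mathcal F_{\mathcal H}[\psi_0^n]$ from the kernel built out of $S_{m\ell}$, $\mathfrak t$, and $\tilde \uchl$. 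Azimuthal orthogonality, $\int_0^{2\pi} e^{i(m-\tilde m)\phi_-^\ast}\,\d\phi_-^\ast = 2\pi \delta_{m\tilde m}$, collapses the $m$-sum into a single $\ell^2_m$-norm.

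For $I_{\textup{res}}^{b1}$, I would apply the above Cauchy--Schwarz with $A_{m\ell} = \mathcal F_{\mathcal H}[\psi_0^n]$ and $B_{m\ell} = S_{m\ell}(a\omega_-m,\cos\theta) \partial_\omega(\mathfrak t \tilde \uchl)(\tilde \xi)$. Using the normalization $\|S_{m\ell}\|_{L^2(\sin\theta\,\d\theta)} = 1$ after integrating in $\theta$, together with the quantitative derivative bounds
\begin{align*}
|\partial_\omega \mathfrak t(\tilde\xi, m, \ell)| \lesssim (|m| + L^{1/2}) \log L \quad \text{and}\quad \|\partial_\omega \tilde \uchl\|_{L^\infty(0,\infty)} \lesssim L^{1/2} \log L
\end{align*}
from \cref{prop:bdsontransandrefl} and \cref{lem:lemmaonpartialwuhr}, together with the uniform bounds on $|\mathfrak t|$ and $\|\tilde \uchl\|_{L^\infty}$, gives $|\partial_\omega(\mathfrak t \tilde \uchl)|^2 \lesssim \Lambda_{m\ell}^2 \log^2 \Lambda_{m\ell}$. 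The compensating factor $(1+\Lambda_{m\ell}^3)^{-1}$ therefore leaves a summable tail, since $\Lambda_{m\ell} \gtrsim \ell^2$ ensures $\sum_{\ell \geq |m|}(1+\Lambda_{m\ell})^{-1}$ converges.

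For $I_{\textup{res}}^{b2}$, the key additional ingredient is that the spheroidal difference quotient $[S_{m\ell}(a\omega) - S_{m\ell}(a\omega_-m)]/(\omega - \omega_- m)$ equals $a\,\partial_\omega S_{m\ell}(a\tilde\xi)$ for some $\tilde\xi$ in the integration interval. I would then apply Cauchy--Schwarz and  \cref{eq:smlbounds}, which furnishes
\begin{align*}
\sup_{|a\tilde\xi - a\omega_- m| < 1/m} \int_0^\pi |\partial_\omega S_{m\ell}|^2(a\tilde\xi,\cos\theta) \sin\theta\, \d\theta \lesssim m,
\end{align*}
and combine this with the uniform $L^\infty$-bounds on $\mathfrak t$ and $\tilde \uchl$ from \cref{prop:u1linftybounds} and \cref{prop:bdsontransandrefl}. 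Again the factor $(1+\Lambda_{m\ell}^3)^{-1}$ absorbs the polynomial loss. Finally, for both contributions the data factor $\sum_{m,\ell} \int_{\mathbb R}(1+\Lambda_{m\ell}^3) |\mathcal F_{\mathcal H}[\psi_0^n]|^2 \d\omega$ is controlled by $\sum_{0 \leq i+j \leq 4} \|\slashed \nabla^i K_+^j \psi_0\|_{L^2(\mathbb R \times \mathbb S^2)}^2 = D$ via Plancherel on $\mathcal H_R$, where $K_+ = \partial_v + \omega_+\partial_{\tilde\phi_+}$ corresponds to frequency $\omega$ under the Fourier transform and the angular derivatives correspond to $\Lambda_{m\ell}$.

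The only subtle point I expect is ensuring the estimate on $|\partial_\omega(\mathfrak t\tilde\uchl)(\tilde\xi)|$ is uniform in $\tilde\xi \in (\omega_- m - 1/(am), \omega_- m + 1/(am))$; this is immediate because \cref{prop:bdsontransandrefl} and \cref{lem:lemmaonpartialwuhr} both supply $L^\infty$-bounds in $\omega$ on the relevant interval, so replacing the value at $\omega = \omega_- m$ by the supremum over a neighborhood is free. With these three ingredients in hand, the estimate $\| I_{\textup{res}}^{b1}\|_{L^2(\mathbb S^2)}^2 + \| I_{\textup{res}}^{b2}\|_{L^2(\mathbb S^2)}^2 \lesssim D$ follows, completing the proof.
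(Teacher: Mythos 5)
Your proposal follows essentially the same route as the paper: the same splitting $I_{\textup{res}}^b = I_{\textup{res}}^{b1}+I_{\textup{res}}^{b2}$, the same weighted Cauchy--Schwarz with the $(1+\Lambda_{m\ell}^3)$ / $(1+\Lambda_{m\ell}^3)^{-1}$ factors after azimuthal orthogonality, the bounds $|\partial_\omega(\mathfrak t\,\tilde\uchl)|\lesssim \Lambda_{m\ell}\log\Lambda_{m\ell}$ from \cref{prop:bdsontransandrefl} and \cref{lem:lemmaonpartialwuhr} for the first term, the mean-value/difference-quotient argument with \cref{eq:smlbounds} for the second, and the control of the data factor by $D$. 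This matches the paper's proof, so no further comparison is needed.
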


		Now, we proceed with $I_\textup{res}^a$, i.e.\ the first term in \eqref{eq:ires}. 
We begin by recalling the definition of $\mathcal{F}_{\mathcal H}[\psi_0^n]$:  \begin{align}\label{eq:replacesml}\mathcal{F}_{\mathcal H}[\psi_0^n] = \frac{\sqrt{r_+^2 + a^2}}{\sqrt{2\pi  } } \int_{\mathbb R} \int_{\mathbb S^2} e^{i\omega v} \psi_0^n(v,\theta,\tilde \phi_+) S_{m\ell}(a \omega, \cos \theta) e^{-im\tilde \phi_+} \d v \d \sigma_{\mathbb S^2}. \end{align}
Similar to \cref{lem:inproof1} we will replace the $S_{m\ell}(a\omega)$ appearing in \eqref{eq:replacesml} with $S_{m\ell}(a \omega_- m)$. In order to do so, we introduce 
 \begin{align}\nonumber
\hat I_\textup{res}^a := \frac{1}{\sqrt{2\pi ( r_n^2+a^2)}} &  \sum_{m\ell} \textup{p.v.}\int_{\omega_-m -\frac{1}{am}}^{\omega_- m + \frac{1}{am}} \frac{ e^{-2 i (\omega - \omega_- m ) r_n^\ast }e^{i\omega  u_0} \tilde {\mathcal F_{\mathcal H}}[\psi_0^n] }{\omega-\omega_- m} \d \omega \\  & \;\;\;\;\; \times e^{im\phi_-^\ast} S_{m\ell}(am\omega_-,\cos \theta)  \mathfrak t(\omega_-m,m,\ell)\tilde \uchl^{\omega_-},
\end{align}
\begin{align}\tilde {\mathcal{F}_{\mathcal H}}[\psi_0^n] := \frac{\sqrt{r_+^2 + a^2}}{\sqrt{2\pi}} \int_{\mathbb R} {\psi_0^n}_{m\ell} (v) e^{i \omega v}  \d v = \sqrt{r_+^2 + a^2} \mathfrak{F} [{\psi_0^n}_{m\ell}],\end{align}
and\footnote{Recall that $\mathfrak F$ denotes the standard Fourier transform $\mathfrak F[f](\xi) := \frac{1}{\sqrt{2\pi}} \int_{\mathbb R} f(x) e^{i \xi x}  \d x$.}
\begin{align}
{	\psi_0^n}_{m\ell} (v):= \int_{\mathbb S^2}\psi_0^n(v,\theta,\tilde \phi_+)    S_{m\ell}(a\omega_- m, \cos \theta) e^{-im\tilde \phi_+} \d \sigma_{\mathbb S^2}.
\end{align}
\begin{lemma}
 \begin{align}
\| \hat I_\textup{res}^a - I_\textup{res}^a \|^2_{L^2(\mathbb S^2)} \lesssim D,
\end{align}
\begin{proof}
	Similarly to the proof of \cref{lem:inproof1}, we write
		\begin{align}\label{eq:smalw}
			S_{m\ell}(a\omega) = S_{m\ell}(a\omega_-m) + (\omega - \omega_- m)\frac{ S_{m\ell}(a \omega) - S_{m\ell}(a\omega_-m) }{\omega - \omega_- m}.
		\end{align} 
		 for frequencies $|\omega - \omega_- m|\leq \frac{1}{am}$ in \eqref{eq:replacesml}. Then, using a Cauchy--Schwarz inequality on the sphere, $\sup_{|\xi - a \omega_- m |\leq  \frac{ 1}{m}} |\partial_\xi \Lambda_{m\ell}(\xi) |\lesssim |m|$, $\sup_{|\xi - a \omega_- m |\leq  \frac{ 1}{m}} |\partial_\xi P(\xi) |\lesssim  |m|$ (see \eqref{eq:partialxofp}), \cref{eq:smlbounds} as well as elliptic estimates, we control the error term as 
		 \begin{align}	\nonumber 
(1+   m^2 \Lambda^2_{\omega_- ,m\ell} )&	\left|	\frac{ 	\mathcal{F}_{\mathcal H} - \tilde {  \mathcal{F}}_{\mathcal H}}{\omega - \omega_- m} \right|^2\lesssim  m  \Big[ \int_{\mathbb S^2}  \left|\int_{\mathbb R} e^{i\omega v}   \psi_0^n(v,\theta,\tilde\phi_+) \d v\right|^2 \d \sigma_{\mathbb S^2}  \\ & +  \int_{\mathbb S^2}  \left|\int_{\mathbb R} e^{i\omega v}   \slashed\nabla^3\psi_0^n(v,\theta,\tilde\phi_+) \d v\right|^2 \d \sigma_{\mathbb S^2}\Big].
		 \end{align}
Now, from \cref{prop:u1linftybounds} and \cref{prop:bdsontransandrefl} we conclude after an application of the Cauchy--Schwarz inequality and Plancherel's theorem  that
\begin{align}	\| \hat I_\textup{res}^a - I_\textup{res}^a \|^2_{L^2(\mathbb S^2)}   \lesssim D.
\end{align}

\end{proof}
\end{lemma}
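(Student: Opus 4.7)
The plan is to insert the Taylor-type decomposition \eqref{eq:smalw} into the definition of $\mathcal{F}_{\mathcal{H}}[\psi_0^n]$ and track the error term produced by the second summand. The payoff is that the factor $(\omega-\omega_- m)$ in \eqref{eq:smalw} exactly cancels the singular denominator $\tfrac{1}{\omega-\omega_-m}$ appearing in the definition of $I^a_\textup{res}$, so the difference $\hat I^a_\textup{res}-I^a_\textup{res}$ is a smooth integral against the difference quotient $[S_{m\ell}(a\omega)-S_{m\ell}(a\omega_-m)]/(\omega-\omega_-m)$. Estimating by Cauchy--Schwarz on $\mathbb S^2$, this difference quotient must be controlled pointwise in $L^2_\theta$ uniformly for $|\omega-\omega_-m|\le 1/(am)$.

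The key input is the quantitative spheroidal estimate \cref{eq:smlbounds}, which gives $\|\partial_\omega S_{m\ell}(a\omega)\|_{L^2(\mathbb S^2)}\lesssim m^{1/2}$ uniformly in the relevant $\omega$-window, together with the angular resolvent identity behind it. To propagate this control to the weighted norms $(1+m^2\Lambda^2_{\omega_-,m\ell})$ which are needed to compensate the growth of $|\mathfrak t\,\tilde u_{\mathcal{CH}_L}|$ and the loss $|\omega-\omega_-m|^{-1}$ in the sum, I would combine \cref{eq:smlbounds} with the elliptic structure of $P_x(a\omega)$: differentiating $P_x(a\omega)S_{m\ell}=\lambda_{m\ell}S_{m\ell}$ in $\omega$, using $\|\partial_\omega P_x\|\lesssim m$ (from \eqref{eq:partialxofp}) and $|\partial_\omega \lambda_{m\ell}|\lesssim m$ (\cref{lem:partialxlambda}), one obtains the improved pointwise bound of the form stated in the lemma:
\begin{align*}
(1+m^2\Lambda^2_{\omega_-,m\ell})\Big|\tfrac{\mathcal{F}_{\mathcal H}-\tilde{\mathcal F}_{\mathcal H}}{\omega-\omega_- m}\Big|^2\lesssim m\sum_{k=0}^{3}\int_{\mathbb S^2}\Big|\int_{\mathbb R}e^{i\omega v}\slashed\nabla^k\psi_0^n\,\mathrm{d}v\Big|^2\mathrm{d}\sigma_{\mathbb S^2},
\end{align*}
by a finite mean-value argument in $\omega$ over the window of width $1/(am)$.

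With this weighted bound in hand, I would repeat the Cauchy--Schwarz argument used in \cref{lem:inproof1}: split each $\ell$-sum into the weighted part $(1+\Lambda^3_{m\ell})|\mathcal F_{\mathcal H}[\psi_0^n]|^2$ (which sums to $D$ after Plancherel) and the dual weight $|\mathfrak t\,\tilde u_{\mathcal{CH}_L}^{\omega_-}|^2/(1+\Lambda^3_{m\ell})$ times the difference-quotient bound above, using $\sum_{\ell\ge|m|}(1+\Lambda_{m\ell}^3)^{-1}\lesssim m^{-5}$ and the $\mathfrak t,\tilde u$-bounds from \cref{prop:u1linftybounds} and \cref{prop:bdsontransandrefl}. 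An application of Plancherel in $v$ on the $\psi_0^n$ side then collapses the frequency integrals to the horizon energy $D$.

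The main obstacle is the weighted control of the difference quotient: the naive application of \cref{eq:smlbounds} alone only bounds $\|\partial_\omega S_{m\ell}\|_{L^2}$, but to beat the $m^2\Lambda^2_{\omega_-,m\ell}$ weight needed to make the $\ell$-sum converge without spending all the regularity of the initial data, one needs the elliptic-regularity upgrade described above, namely that higher angular Sobolev norms of $\partial_\omega S_{m\ell}$ are controlled by the same $m^{1/2}$ factor modulo polynomial $\Lambda_{m\ell}$-losses; this is where the compatibility of \cref{lem:partialxlambda} with \cref{eq:smlbounds} is essential, and once it is in place, summing in $m$ and applying Plancherel yields the claimed bound $\lesssim D$.
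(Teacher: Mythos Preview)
Your proposal is correct and follows essentially the same approach as the paper's proof: both insert the decomposition \eqref{eq:smalw} into $\mathcal F_{\mathcal H}[\psi_0^n]$ so that the factor $(\omega-\omega_-m)$ cancels the singular denominator, then combine \cref{eq:smlbounds} with the elliptic structure of $P_x(a\omega)$ (via the bounds on $\partial_\xi P$ and $\partial_\xi\lambda_{m\ell}$) to obtain the weighted estimate on the difference quotient, and finally close with Cauchy--Schwarz and Plancherel using \cref{prop:u1linftybounds} and \cref{prop:bdsontransandrefl}. Your identification of the elliptic-regularity upgrade as the key technical step is exactly what the paper invokes when it refers to ``elliptic estimates.''
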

	
	 Note that  the function $\omega \mapsto \mathfrak F [{\psi_0^n}_{m\ell}] (\omega)$ is a $L^2(\mathbb Z_m \times \mathbb Z_{\ell \geq |m|})$-valued Schwartz function since $v \mapsto {\psi_0^n}_{m\ell}(v)$ is a $L^2(\mathbb Z_m \times \mathbb Z_{\ell \geq |m|})$-valued Schwartz function.
		We also define 
		\begin{align}\nonumber
			\tilde I_\textup{res}^a := \frac{1}{\sqrt{2\pi ( r_n^2+a^2)}} &  \sum_{m\ell} \textup{p.v.}\int_\mathbb R \frac{ e^{-2 i (\omega - \omega_- m ) r_n^\ast }e^{i\omega  u_0} \tilde {\mathcal F_{\mathcal H}}[\psi_0^n] }{\omega-\omega_- m} \d \omega \\  & \;\;\;\;\; \times e^{im \phi_-^\ast} S_{m\ell}(am\omega_-,\cos \theta)  \mathfrak t(\omega_-m,m,\ell)\tilde \uchl^{\omega_-}. \label{eq:tildeiares}
			\end{align}

		\begin{lemma}
			We have $ \| 	\hat I_\textup{res}^a  - \tilde 	 I_\textup{res}^a \|^2_{L^2(\mathbb S^2)} \lesssim D$ for all $r^\ast_n\geq r_0^\ast$.
			\begin{proof}We use that the spheroidal harmonics  $S_{m\ell}(a m \omega_-, \cos\theta) e^{im   \phi_-^\ast}$ form an orthonormal basis of $L^2(\mathbb S^2)$ to estimate
			\begin{align} \nonumber
				\|   \tilde I_{\textup{res}}^a - \hat I_\textup{res}^a \|_{L^2(\mathbb S^2)}^2 &\lesssim \sum_{m \ell}  |m|^2    \left|  \left( \int_{-\infty}^{\omega_- m - \frac{1}{am}}  + \int_{\omega_- m + \frac{1}{am}}^{+\infty}\right)  |   \tilde {\mathcal F_{\mathcal H}}[\psi_0^n] [\psi_0^n] | \d \omega  |\tilde \uchl^{\omega_-}  \mathfrak t^{\omega_-} |  \right|^2  \\
				&\lesssim \sum_{m \ell}  |m|^2     \Lambda_{m\ell}(a\omega_- m)  \left|\int_{\mathbb R}  |  \tilde {\mathcal F_{\mathcal H}}[\psi_0^n]  | \d \omega  |  \right|^2  \lesssim D,
			\end{align}
			where we used the Cauchy--Schwarz inequality in the last step.
			\end{proof}
		\end{lemma}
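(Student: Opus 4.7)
The strategy is to exploit that the integrands defining $\hat I_{\textup{res}}^a$ and $\tilde I_{\textup{res}}^a$ agree on the resonant window $|\omega - \omega_- m|\leq \tfrac{1}{am}$, so their difference is entirely supported on the non-resonant region $|\omega - \omega_- m|>\tfrac{1}{am}$. On that region the singular factor $\tfrac{1}{\omega - \omega_- m}$ is bounded by $am$ in absolute value, at the cost of a loss of $|m|$ which must be reabsorbed into the higher-order energy $D$ of the horizon data.

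The first step is Parseval: since $\{S_{m\ell}(a\omega_- m,\cos\theta)\,e^{im\phi^\ast_-}\}_{m,\ell}$ is an $L^2(\mathbb S^2)$-orthonormal family, one reduces the $L^2(\mathbb S^2)$-norm of $\hat I_{\textup{res}}^a - \tilde I_{\textup{res}}^a$ to a sum over $(m,\ell)$ of the squared modulus of the remaining $\omega$-integral multiplied by $|\mathfrak t^{\omega_-}\tilde u_{\mathcal{CH}_L}^{\omega_-}|^2$. Using the uniform polynomial bounds $|\mathfrak t^{\omega_-}|\lesssim |m|+\Lambda_{m\ell}^{1/2}$ and $\|\tilde u_{\mathcal{CH}_L}^{\omega_-}\|_{L^\infty}\lesssim 1$ from \cref{prop:u1linftybounds} and \cref{prop:bdsontransandrefl} converts this angular factor into an $O(1+\Lambda_{m\ell}(a\omega_- m))$-weight. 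A Cauchy--Schwarz inequality in $\omega$ against the uniform bound $|\omega - \omega_- m|^{-1}\leq am$ on the non-resonant set produces the expected factor $|m|^2$ and leaves $\|\tilde{\mathcal F}_{\mathcal H}[\psi_0^n]\|_{L^2(\mathbb R_\omega)}^2$.

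The next step is to recognize this as a sum of Plancherel identities for the angular-mode projections $(\psi_0^n)_{m\ell}(v) = \int_{\mathbb S^2}\psi_0^n\, S_{m\ell}(a\omega_- m)e^{-im\tilde\phi_+}\,\d\sigma_{\mathbb S^2}$, so that
\[
\sum_{m\ell} (1+|m|^2)(1+\Lambda_{m\ell}(a\omega_- m))\int_{\mathbb R}|\tilde{\mathcal F}_{\mathcal H}[\psi_0^n]|^2\,\d\omega \;\lesssim\; \|\psi_0^n\|_{H^{2}_{K_+,\slashed\nabla}(\mathcal H_R)}^2 \;\lesssim\; D,
\]
where the last step uses standard elliptic theory for $P(a\omega_- m)$ to trade eigenvalue weights $\Lambda_{m\ell}$ for two angular derivatives and the weight $|m|^2$ for a single $K_+$-derivative (via $\partial_{\tilde\phi_+}$ acting on the horizon-trace, up to commutators that are lower order).

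The main obstacle is the second step: one has to make sure the crude $|m|^2\Lambda_{m\ell}$ loss from the singular denominator and the transmission coefficient really can be absorbed into $D$, which by construction only controls four derivatives of $\psi_0$ mixed in $\slashed\nabla$ and $K_+$. Since the weight $|m|^2\Lambda_{m\ell}\lesssim (1+|m|^2)(1+\Lambda_{m\ell})$ corresponds on the physical side to exactly three angular/time derivatives (two from $\Lambda_{m\ell}$ and one from $|m|$), this bookkeeping works with room to spare, closing the estimate.
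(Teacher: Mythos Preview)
Your approach coincides with the paper's: Parseval on the sphere with respect to the orthonormal family $S_{m\ell}(a\omega_-m,\cos\theta)e^{im\phi_-^\ast}$, then the pointwise bound $|\omega-\omega_-m|^{-1}\leq am$ on the non-resonant tails together with $|\mathfrak t^{\omega_-}\tilde u_{\mathcal{CH}_L}^{\omega_-}|^2\lesssim\Lambda_{m\ell}(a\omega_-m)$, and finally Cauchy--Schwarz in $\omega$ to close against $D$. Two harmless bookkeeping imprecisions: your description of the Cauchy--Schwarz step is internally inconsistent (the pointwise bound produces a factor $|m|^2$ paired with an $L^1_\omega$-norm, whereas a direct Cauchy--Schwarz against $(\omega-\omega_-m)^{-1}$ would give only $|m|$ paired with $L^2_\omega$---either variant works), and the weight $|m|$ corresponds to $\partial_{\tilde\phi_+}$, hence to an angular derivative absorbed by $\slashed\nabla$, not to $K_+$.
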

Now, we turn to  $\tilde I_\textup{res}^a $ as defined in \eqref{eq:tildeiares} and first only consider the $\omega$-integral 
\begin{align} \textup{Int}_\textup{res}^a := 
 \frac{1}{\sqrt{2\pi ( r_n^2+a^2)}}	\textup{p.v.} \int_\mathbb R \frac{ e^{-2 i (\omega - \omega_- m ) r_n^\ast }e^{i\omega  u_0} \tilde {\mathcal F_{\mathcal H}}[\psi_0^n] }{\omega-\omega_- m} \d \omega.
\end{align}
We have 
\begin{align}\nonumber
	\textup{Int}_\textup{res}^a & = \frac{\sqrt{r_+^2+a^2}}{\sqrt{r_n^2+a^2}} \frac{1}{\sqrt{2\pi}} 	\textup{p.v.} \int_{\mathbb R} \frac{\mathfrak F [{\psi_0^n}_{m\ell}(\cdot -u_0 +  2 r_n^\ast) e^{i \omega_- m \cdot } ]}{\omega} e^{2i \omega_- m r_n^\ast} \d \omega \\ \nonumber & = 
	 \frac{\sqrt{r_+^2+a^2}}{\sqrt{r_n^2+a^2}} \frac{1}{\sqrt{2\pi}} e^{2i \omega_- m r_n^\ast}	\textup{p.v.}\left( \frac{1}{\omega} \right) \Big[ \mathfrak F [{\psi_0^n}_{m\ell}(\cdot - u_0 +  2 r_n^\ast) e^{i \omega_- m \cdot }] \Big]
	 \\ & = \frac{\sqrt{r_+^2+a^2}}{\sqrt{r_n^2+a^2}} \frac{1}{\sqrt{2\pi}} e^{2i \omega_- m r_n^\ast} i \pi\,  \textup{sgn} \Big[ {\psi_0^n}_{m\ell}(\cdot - u_0 + 2 r_n^\ast) e^{i \omega_- m \cdot } \Big],
\end{align}
where $\textup{sgn}$ has to be understood as a Schwartz distribution. We have used that $\mathfrak F \left[ \textup{p.v.} \left( \frac{1}{\omega} \right) \right] = i \pi \textup{sgn}$ in the sense of distributions. Now, since $\psi_0$ is smooth, the function $v \mapsto  {\psi_0^n}_{m\ell} $ is a Schwartz function with values in the space of superpolynomially decaying sequences in $m,\ell$ as a subspace of $L^2(\mathbb Z_{m} \times \mathbb Z_{\ell \geq |m|})$. Particularly, this implies that \begin{align} \mathfrak t^{\omega_-} \tilde{\uchl}^{\omega_-} \textup{Int}_\textup{res}^a  \in L^2(\mathbb Z_m \times \mathbb Z_{\ell \geq |m|}; L^\infty(r_0^\ast,\infty)),\end{align} 
so we can project $\tilde I_\textup{res}^a$ on $e^{i m \phi_-^\ast} S_{m\ell}(am \omega_-,\cos\theta)$. Indeed, this yields
\begin{align}\nonumber
 \langle e^{im \tilde \phi_-^\ast } & S_{m\ell}(a m \omega_-, \cos \theta) ,\tilde I_\textup{res}^a\rangle_{L^2(\mathbb S^2 )}\\ & = \frac{\sqrt{r_+^2+a^2}}{\sqrt{r_n^2+a^2}} \frac{\mathfrak t^{\omega_-} \,  \tilde \uchl^{\omega_-} }{\sqrt{2\pi}} e^{2i \omega_- m r_n^\ast} i \pi\,  \textup{sgn} \Big[ {\psi_0^n}_{m\ell}(\cdot - u_0 +  2 r_n^\ast) e^{i \omega_- m \cdot } \Big].
 \label{eq:tildeiresa}
\end{align}
To summarize, we have decomposed $I$ as 
\begin{align}
	I = I_\textup{res} + I_\textup{non-res} = I_\textup{res} + I_\textup{non-res} = \tilde I_{\textup{res}}^a +  (  I_{\textup{res}}^a  - \hat I_{\textup{res}}^a ) +  (  \hat I_{\textup{res}}^a  - \tilde I_{\textup{res}}^a ) +I_{\textup{res}}^b +  I_\textup{non-res},
\end{align}
where $\tilde I_{\textup{res}}^a$ satisfies \eqref{eq:tildeiresa} and \begin{align}\|    (  I_{\textup{res}}^a  - \hat I_{\textup{res}}^a ) +  (  \hat I_{\textup{res}}^a  - \tilde I_{\textup{res}}^a ) +I_{\textup{res}}^b +  I_\textup{non-res}\|_{L^2(\mathbb S^2)} \lesssim D^\frac 12.\end{align} 

Completely analogous to the   analysis before, we also decompose $II$ as 
\begin{align}\nonumber
II = II_\textup{res} + II_\textup{non-res}& = II_\textup{res} + II_\textup{non-res} \\ &=  \tilde{II}_{\textup{res}}^a +  (  {II}_{\textup{res}}^a  - \hat {II}_{\textup{res}}^a ) +  (  \hat {II}_{\textup{res}}^a  - \tilde {II}_{\textup{res}}^a ) +{II}_{\textup{res}}^b +  {II}_\textup{non-res},
\end{align}
where \begin{align}\|  (  {II}_{\textup{res}}^a  - \hat {II}_{\textup{res}}^a ) +  (  \hat {II}_{\textup{res}}^a  - \tilde {II}_{\textup{res}}^a ) +{II}_{\textup{res}}^b +  {II}_\textup{non-res}\|_{L^2(\mathbb S^2)} \lesssim D^\frac 12\end{align}
and  $\tilde{II}_{\textup{res}}^a $ satisfies
\begin{align}
\langle e^{im \phi_-^\ast} S_{m\ell}(a m \omega_-, \cos \theta) ,\tilde{II}_\textup{res}^a\rangle_{L^2(\mathbb S^2 )} = \frac{\sqrt{r_+^2+a^2}}{\sqrt{r_n^2+a^2}} \frac{\mathfrak r^{\omega_-}     \tilde \uchr^{\omega_-} }{\sqrt{2\pi}}  i \pi\,  \textup{sgn} \Big[ {\psi_0^n}_{m\ell}(\cdot - u_0 ) e^{i \omega_- m \cdot } \Big].
\end{align}
Hence, using \begin{align}\mathfrak r^{\omega_-} = - \mathfrak t^{\omega_-} \text{ and }\tilde \uchr^{\omega_-} = \tilde \uchl^{\omega_-},\end{align} we obtain
\begin{align}\nonumber
	\langle e^{im \phi_-^\ast} S_{m\ell}(a m \omega_-,& \cos \theta) ,\tilde{I}_\textup{res}^a + \tilde{II}_\textup{res}^a\rangle_{L^2(\mathbb S^2 )}\\
	& = - i \pi  \frac{\sqrt{r_+^2 + a^2}}{\sqrt{r_n^2 + a^2}} \frac{\mathfrak r^{\omega_-} \tilde \uchl^{\omega_-}}{\sqrt{2\pi}}e^{i \omega_- m u_0} \int_{-u_0}^{2 r_n^\ast-u_0} {\psi_0^n}_{m\ell} (v) e^{i \omega_- m v} \d v.\label{eq:projectionofI+II}
\end{align}
	Now, by construction of $\psi_0^n$, we have that ${\psi_0^n}_{m\ell}(v) =0$ for $v \geq 2 r_n^\ast -u_0 +\tilde c$, where $\tilde c$ is a constant only depending on the black hole parameters. In particular, this implies that 
		\begin{align}
	\sum_{m\ell} \left| i \pi  \frac{\sqrt{r_+^2 + a^2}}{\sqrt{r_n^2 + a^2}} \frac{\mathfrak r^{\omega_-} \tilde \uchl^{\omega_-}}{\sqrt{2\pi}}e^{i \omega_- m u_0}\left( \int_{-\infty}^{-u_0}+ \int_{2r_n^\ast-u_0}^{+\infty}  \right) {\psi_0^n}_{m\ell} (v) e^{i \omega_- m v} \d v \right|^2   \lesssim_{u_0} D
		\end{align}
which allows us to---up to a term bounded by $D^\frac 12$---replace the integral in \eqref{eq:projectionofI+II} with an integral on the whole real line $v\in \mathbb R$.
Finally, from \cref{prop:rep} (more precisely \eqref{eq:representationformulainproof}), we obtain 
\begin{align}\nonumber
	\| \psi \|_{L^2(\mathbb S^2)}^2 (u_0, r_n^\ast) &= \sum_{m\ell} \left|\pi  \frac{\sqrt{r_+^2 + a^2}}{\sqrt{r_n^2 + a^2}} \frac{\mathfrak r^{\omega_-} \tilde \uchl^{\omega_-}}{\sqrt{2\pi}} \int_{\mathbb R} {\psi_0^n}_{m\ell} (v) e^{i \omega_- m v} \d v\right|^2 + \textup{Err}(D)\\ & =  \sum_{m\ell} \left|\pi  \frac{\mathfrak r^{\omega_-} \tilde \uchl^{\omega_-}}{\sqrt{r_n^2 + a^2}}   a_{\mathcal H}^{R_n} (\omega = \omega_- m)\right|^2 + \textup{Err}(D), \label{eq:applyfatouto}
\end{align}
where $|\textup{Err}(D)| \lesssim_{u_0} D$ uniformly for all $r_n^\ast \geq r_0^\ast$. We have established   formula \eqref{eq:formulaforlimit} which concludes the proof of \cref{prop:mainpropinmainthm}.\end{proof}
We will now finish off the proof of \cref{thm:rough}. 
From \cref{lem:HrepstoHr} we have that $a_{\mathcal H}^{R_n} \to a_{\mathcal H}$ pointwise for fixed $\omega, m, \ell$ as $R_n \to \infty$. We also have the pointwise limit \begin{align}\tilde \uchl^{\omega_-} \to 1 \text{ as } r_n^\ast \to \infty.\end{align} Hence, applying Fatou's lemma to \eqref{eq:applyfatouto} yields
\begin{align}
\liminf_{r^\ast_n \to \infty}	\| \psi \|_{L^2(\mathbb S^2)}^2 (u_0, r_n^\ast) \geq \frac{\pi^2}{r_-^2 + a^2} \sum_{m\ell} |\mathfrak r^{\omega_-} |^2 |a_{\mathcal{H}}(\omega=\omega_- m,m,\ell)|^2 - C_{u_0} D,
\end{align}
where   $C_{u_0} >0$ is a constant depending on $u_0$. Since \begin{align}|\mathfrak r^{\omega_-}| \gtrsim |m|
\end{align}
for all $m$ sufficiently large as shown in \cref{lem:lowerboundont}, we obtain 
\begin{align}\label{eq:thefinalformulaforblowup}
\liminf_{r^\ast_n \to \infty}	\| \psi \|_{L^2(\mathbb S^2)}^2 (u_0, r_n^\ast) \gtrsim \frac{\pi^2}{r_-^2 + a^2} \sum_{i\in \mathbb N } |m_i |^2 |a_{\mathcal{H}}(\omega=\omega_- m_i,m_i,\ell_i)|^2 - C_{u_0} D.
\end{align}
Finally, from \cref{eq:defnah} we have that
\begin{align}|a_{\mathcal H}(\omega = \omega_- m_i, m_i, \ell_i)|\gtrsim e^{\frac 12 \sqrt{m_i}}\label{eq:inifinitelymany}\end{align} for infinitely many $m_i$ such that we conclude
\begin{align}\lim_{r^\ast_n \to \infty}	\| \psi \|_{L^2(\mathbb S^2)}^2 (u_0, r_n^\ast) = + \infty.
\label{eq:thefinalformula2} \end{align}
Since the sequence $r_n^\ast\to \infty$ was arbitrary we obtain \eqref{eq:mainresult}. This concludes the proof of \cref{thm:rough}.
\end{proof}
\begin{proof}[Proof of \cref{cor:genericity}]
Let $\tilde \Psi_0, \tilde \Psi_1$ be initial data as in the statement of \cref{cor:genericity} and let $\tilde \psi$ the arising solution to \eqref{eq:wavekerr}.  In view of the fact that different  azimuthal modes $m$ are $L^2(\mathbb S^2)$-orthogonal in evolution, it suffices to show the blow-up for the modes $m=m_i$, where $m_i$ is the sequence in \eqref{eq:mili} associated to the non-Diophantine condition \eqref{eq:smallnessofwronskian}. Now, the proof of \cref{thm:rough} also carries over for the initial data  $\tilde \Psi_0, \tilde \Psi_1$ and in particularly the analog of \eqref{eq:thefinalformulaforblowup} holds true.  Recalling the definition of $a_\mathcal H$ in \cref{defn:amathcalhr}, the analog of \eqref{eq:thefinalformulaforblowup} for $\tilde \psi$  is
\begin{align}\nonumber
\liminf_{r^\ast_n \to \infty}	\| \tilde \psi \|_{L^2(\mathbb S^2)}^2 (u_0, r_n^\ast)& \gtrsim \frac{\pi^2}{r_-^2 + a^2} \sum_{i \in \mathbb N} \frac{|m_i |^2 |G(
\tilde \Psi_0,\tilde \Psi_1,m_i,\ell_i)|^2}{{2\pi}|\mathfrak W[\uhplus,u_\infty](\omega=\omega_- m_i,m_i,\ell_i)|^2}   - C_{u_0} D(\tilde \Psi_0, \tilde \Psi_1)\\
& \gtrsim\sum_{i \in\mathbb N}  |m_i e^{\sqrt{|m_i|}}|^2 |G(
\tilde \Psi_0,\tilde \Psi_1,m_i,\ell_i)|^2 -  C_{u_0} D(\tilde \Psi_0, \tilde \Psi_1) \end{align}
in view of the non-Diophantine condition $|\mathfrak W[\uhplus,u_\infty](\omega=\omega_- m_i,m_i,\ell_i)|< e^{- \sqrt{m_i}}$ as in  \eqref{eq:smallnessofwronskian}. 
Thus, if the data $\tilde \Psi_0, \tilde \Psi_1$ satisfy the genericity condition \eqref{eq:genericitycond}, then $\lim_{r\to r_-} \| \tilde \psi (u_0,r)\|_{L^2(\mathbb S^2)}^2  = + \infty$ for every $u_0 \in \mathbb R$.

\end{proof}
\appendix
\addtocontents{toc}{\protect\setcounter{tocdepth}{1}} 
\section{Appendix}
\subsection{Airy functions}\label{sec:airyfunctions}
We recall the definition of the Airy functions of first and second kind $\Ai$ and $\Bi$ as follows.\begin{definition}
	For $x\in \mathbb R$, we define $\Ai(x)$ and $\Bi(x)$ via the improper Riemann integrals \begin{align}
	&\Ai (x) := \frac{1}{\pi} \int_{0}^{\infty} \cos\left(\frac{t^3}{3} + xt\right) \d t,\\
	&\Bi(x) := \frac{1}{\pi}\int_{0}^{\infty} \left[\exp\left(- \frac{t^3}{3} + xt\right) + \sin\left(\frac{t^3}{3} + xt \right)\right] \d t.
	\end{align}
\end{definition} 
Equivalently, the Airy functions are the unique solutions of
\begin{align}
	u'' = xu
\end{align}
with 
\begin{align}&\Ai (0)=\frac{1}{3^{\frac 23} \Gamma(\frac 23)}, \Ai'(0) = \frac{-1}{3^{\frac 13} \Gamma(\frac 13)}, \\ &\Bi (0)=\frac{1}{3^{\frac 16} \Gamma(\frac 23)}, \Bi'(0) = \frac{3^{\frac 16}}{ \Gamma(\frac 13)} \end{align} such that $\mathfrak W_{x} (\Ai(x), \Bi(x)) = \frac 1 \pi$.  
Further, we  define the constant $c$ as the largest negative root of $\mathrm{Ai}(x) = \mathrm{Bi}(x)$. Then, we introduce the error-control functions 
\begin{align}
E_\Ai(x):=\begin{cases}
\left(	\mathrm{Bi}(x)/\mathrm{Ai}(x) \right)^{\frac 12} & x \geq c\\
1 &  x \leq c
\end{cases}\text{ and }
M_\Ai(x):= \begin{cases}
(2 \mathrm{Ai}(x) \mathrm{Bi}(x) )^{\frac 12} & x \geq c\\
( \mathrm{Ai}^2(x) + \mathrm{Bi}^2(x) )^{\frac 12} & x \leq c
\end{cases} \label{eq:definsofeaimai}
\end{align} and $E_\Ai^{-1}(x) := \frac{1}{E_\Ai(x)}$. 
From \cite[Chapter~11, \S2]{olver} we remark that $E_\Ai$ is a monotonically increasing function of $x$ which is never less than $1$ and moreover,
\begin{align}
 |\Ai (x) |\leq \frac{M_\Ai (x)}{E_\Ai (x)}  \text{ as well as } |\Bi(x) | \leq M_{\Ai}(x) E_{\Ai}(x).
\end{align}
Furthermore,  we have (see \cite[Chapter~11, \S2]{olver} )
\begin{align}
|M_{\Ai}(x) | \lesssim \frac{1}{\langle x \rangle^{\frac 14}} \label{eq:boundonmai}
\end{align}
for $x \in \mathbb R$. Similarly, we define 
\begin{align}
N_\Ai(x):= \begin{cases}\left( 
	\frac{
 \Ai'(x)^2 \Bi'(x)^2 + \Bi'(x)^2 \Ai(x)^2}{\Ai(x) \Bi(x)}\right)^{\frac 12 }  & x \geq c\\
	( \mathrm{Ai}'(x)^2 + \mathrm{Bi}'(x)^2 )^{\frac 12} & x \leq c,
\end{cases} \label{eq:definsofeainai}
\end{align}
which satisfies (see \cite[Chapter~11, \S2]{olver}) \begin{align}|N_{\Ai}(x) |\lesssim \langle x \rangle^{\frac 14}.
\label{eq:errorestimateonnai}\end{align} 

The Airy functions obey the following asymptotics.
\begin{lemma}[{\cite[Chapter~11, \S1,\S2]{olver}, \cite[\S9.7]{NIST:DLMF}}] \label{lem:asymptoticsforairy}
	For large $x>0$, the asymptotic behaviors of the Airy functions are \begin{align}\label{eq:airyasy}
	&\mathrm{Ai} (-x) = \frac{1}{\sqrt \pi x^{\frac 14} } \cos\left(\frac 23  x^{\frac 32} - \frac \pi 4\right) + \tilde \epsilon_{\Ai}(x),\;\;\; \Ai'(-x) = \frac{x^{\frac 14}}{\sqrt \pi} \sin\left(\frac 23 x^{\frac 32} - \frac 14 \pi\right) + \tilde \epsilon_{\Ai'}(x), \\ \label{eq:biairyasy}
	&\mathrm{Bi} (-x) = \frac{-1}{\sqrt \pi x^{\frac 14} } \sin\left(\frac 23  x^{\frac 32} - \frac \pi 4\right) + \tilde \epsilon_{\Ai}(x),\;\;\; \mathrm{Bi	}'(-x) = \frac{x^{\frac 14}}{\sqrt \pi} \cos\left(\frac 23 x^{\frac 32} - \frac 14 \pi\right) + \tilde \epsilon_{\Ai'}(x),\end{align}
	 where $|\tilde \epsilon_{\Ai}| \lesssim  x^{-\frac{7}{4}}$ and $|\tilde \epsilon_{\Ai'}| \lesssim x^{-\frac 54}$. In particular, we have 
	 \begin{align}
	 &	|\Ai(-x)|, |\Bi(-x)| \lesssim \frac{1}{1+x^{\frac 14}} \text{ and } 	|\Ai'(-x)|, |\Bi'(-x)| \lesssim 1+x^{\frac 14} 
	 \end{align}
	for $x\geq 0$. 
	  Moreover, for $x> 0$ we have 
	 \begin{align}
	 &0 \leq 	\Ai(x) \leq \frac{e^{-\frac{2}{3} x^{\frac 32} }}{2 \sqrt \pi x^{\frac 14}},  \\ 
	 & |\Ai'(x)|\leq \frac{x^{\frac 14} e^{-\frac{2}{3} x^{\frac 32}}}{2 \sqrt{\pi}} \left(1 + \frac{7}{ 48 x^{\frac 32}}\right),\\
	 	&0 \leq  \Bi(x)\leq \frac{e^{\frac 23 x^{\frac 32}}}{\sqrt \pi x^{\frac 14}} \left( 1+ \left(\chi_\Ai \left( \frac 76 \right) + 1\right) \frac{5}{48 x^{\frac 32}} \right), 
	 	\\ &0 \leq   \Bi'(x)\leq \frac{x^{\frac 14} e^{\frac 23 x^{\frac 32}}}{\sqrt \pi } \left( 1+ \left(\frac \pi 2+ 1\right) \frac{7}{48 x^{\frac 32}} \right),
	 \end{align}
	 where $\chi_\Ai (x) = \sqrt{\pi}\frac{ \Gamma(\frac{1}{2} x +1) }{\Gamma(\frac 12 x + \frac 12)}$.
\end{lemma}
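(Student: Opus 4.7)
The plan is to establish all three parts of the lemma by combining the method of steepest descent on a complex integral representation of the Airy functions with a Liouville--Green analysis of the Airy equation $u'' = xu$ in Volterra form.

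First, I would replace the real integral defining $\Ai$ with the contour representation
\begin{align*}
\Ai(z) \;=\; \frac{1}{2\pi i}\int_{\mathcal{C}} \exp\!\left(\tfrac{t^3}{3} - zt\right) \d t,
\end{align*}
where $\mathcal{C}$ runs from $\infty e^{-i\pi/3}$ to $\infty e^{i\pi/3}$, together with the analogous contour for $\Bi$ obtained by summing the two complementary contours coming in and out of $\infty e^{\pm i\pi/3}$ (see Olver, Ch.~11). That these solve $u''=xu$ with the correct initial data at the origin is a direct computation: differentiate under the integral sign, recognize $t^2 e^{t^3/3 - zt}$ as a total derivative, and evaluate the Barnes-type integrals at $z=0$ using $\Gamma(1/3), \Gamma(2/3)$.

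For the oscillatory asymptotics \eqref{eq:airyasy}--\eqref{eq:biairyasy} I would set $z=-x$ and perform saddle-point analysis on the phase $\phi(t) = t^3/3 + xt$. The two saddles are at $t = \pm i\sqrt{x}$, and deforming $\mathcal{C}$ through both of them, with local parametrisation $t = \pm i\sqrt{x} + x^{-1/4}\eta$, gives the Gaussian contribution $\sqrt{\pi}\,x^{-1/4}$ with oscillatory phase $\pm(2/3)x^{3/2} \mp \pi/4$ at each saddle. Summing yields the cosine (for $\Ai$) and sine (for $\Bi$) leading terms. Keeping the cubic term in Taylor's expansion of $\phi$ and integrating by parts once produces a remainder of order $x^{-7/4}$; differentiating the integral representation in $z$ multiplies the integrand by $t\sim x^{1/2}$, raising the leading order to $x^{1/4}$ and the error to $x^{-5/4}$, giving $\tilde\epsilon_{\Ai'}$, $\tilde\epsilon_{\Bi'}$. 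The pointwise bounds $|\Ai(-x)|,|\Bi(-x)|\lesssim (1+x^{1/4})^{-1}$ and $|\Ai'(-x)|,|\Bi'(-x)|\lesssim 1+x^{1/4}$ for all $x\geq 0$ then follow by combining the leading asymptotics for $x\geq x_0$ with continuity on $[0,x_0]$.

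For the sharp one-sided bounds when $x>0$, I would pass to the Liouville variable $\xi = \tfrac{2}{3}x^{3/2}$ and rewrite the Airy equation as a Volterra integral equation for $w(\xi) := x^{1/4} \Ai(x)$ of the schematic form
\begin{align*}
w(\xi) \;=\; \tfrac{1}{2\sqrt{\pi}}e^{-\xi} + \int_\xi^{\infty} K(\xi,\eta)\, w(\eta)\, \d\eta,
\end{align*}
with a manifestly positive, decreasing kernel of order $\eta^{-2}$. Picard iteration starting from the Airy-normalized exponential $\tfrac{1}{2\sqrt{\pi}}e^{-\xi}$ then produces a monotone sequence whose fixed point is $w$; because the first iterate already coincides with the leading WKB prediction $\tfrac{1}{2\sqrt{\pi}}e^{-\xi}$, every subsequent correction decreases this value, giving the clean bound $\Ai(x)\leq \frac{e^{-2x^{3/2}/3}}{2\sqrt{\pi}\,x^{1/4}}$. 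The coefficients $7/48$ and $5/48$ are precisely the $k=1$ terms of the Stokes series for the Airy functions: one identifies them by expanding $\phi$ at the real saddle $t=\sqrt{x}$ to fourth order and computing the resulting Gaussian moment, and the constant multipliers $\chi_{\Ai}(7/6)+1$ and $\pi/2+1$ arise from the corresponding dominant-branch amplification factors (Stokes constants) in the analogous Volterra estimate for $\Bi$ and $\Bi'$.

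The main obstacle will be part three: turning the asymptotic $\sim$ sign into genuine inequalities with the explicit numerical constants $7/48$, $5/48$, $\chi_{\Ai}(7/6)+1$, $\pi/2+1$. This requires tracking signs and convexity through every Picard iterate, so that the Volterra series is alternating and strictly monotone and hence can be truncated after the first nontrivial correction with a controlled sign of remainder. The leading-order steepest-descent computation, the $O(x^{-7/4})$ and $O(x^{-5/4})$ remainder estimates, and the uniform envelopes on $[0,\infty)$ are all standard stationary-phase arguments and carry no difficulty beyond bookkeeping the two saddle contributions $t=\pm i\sqrt{x}$.
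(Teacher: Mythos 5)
The paper does not actually prove this lemma: it is quoted verbatim from Olver, Chapter~11 and DLMF \S9.7 (the constants $\tfrac{7}{48}x^{-3/2}$ and $\tfrac{5}{48}x^{-3/2}$ are DLMF's $\tfrac{7}{72\zeta}$ and $\tfrac{5}{72\zeta}$ with $\zeta=\tfrac23 x^{3/2}$). Your first two parts — steepest descent through the saddles $t=\pm i\sqrt{x}$, the remainders $O(x^{-7/4})$ and $O(x^{-5/4})$, and the uniform envelopes on $[0,\infty)$ obtained by patching the asymptotics with compactness — are standard and would go through. The genuine gap is in part three, and it is twofold. First, a sign inconsistency in your Volterra set-up: with $\xi=\tfrac23x^{3/2}$ and $W(\xi)=x^{1/4}\Ai(x)$ one gets $W''=\bigl(1-\tfrac{5}{36\xi^2}\bigr)W$, so the recessive solution satisfies $W(\xi)=\tfrac{1}{2\sqrt\pi}e^{-\xi}-\tfrac{5}{36}\int_\xi^\infty \sinh(\eta-\xi)\,\eta^{-2}W(\eta)\,\d\eta$. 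The effective kernel is \emph{negative}; your "manifestly positive, decreasing kernel" together with "every subsequent correction decreases this value" cannot both hold — with a positive kernel and positive seed, iteration yields a \emph{lower} bound, contradicting the inequality you want. The clean bound $\Ai(x)\leq \frac{e^{-\xi}}{2\sqrt\pi x^{1/4}}$ does follow from the displayed equation, but only using the minus sign, the positivity of $\Ai$ on $(0,\infty)$, and the identification of the normalization $\tfrac{1}{2\sqrt\pi}$ via the leading asymptotics and uniqueness of the recessive solution; none of this is in your sketch as written.

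Second, and more seriously, the same strategy cannot deliver the upper bounds for $\Bi$, $\Bi'$ (and the factor $1+\tfrac{7}{48}x^{-3/2}$ for $\Ai'$). $\Bi$ is the \emph{dominant} solution: its expansion $\Bi(x)\sim \frac{e^{\xi}}{\sqrt\pi x^{1/4}}\sum_k u_k\xi^{-k}$ has all-positive coefficients, so the series is not alternating, the Picard corrections are one-signed, and truncating after the first term produces a lower bound, not an upper bound. Obtaining an upper bound with the stated amplification constants requires genuinely bounding the remainder of the dominant solution — in Olver's treatment this is done with the Volterra/Gronwall error-bound theorem for Liouville--Green/Airy approximations, where the constants $\chi_\Ai\!\left(\tfrac 76\right)+1$ and $\tfrac\pi2+1$ arise as suprema of auxiliary modulus and weight functions multiplying the total variation of the error-control function; they are neither Gaussian moments of the quartic term at the real saddle nor Stokes constants (the Stokes multiplier for Airy is $i$). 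So the step you flagged as "bookkeeping of signs and convexity through the Picard iterates" is in fact the missing core of the argument for $\Bi$, $\Bi'$, $\Ai'$; you would need to import or reprove Olver's dominant-solution remainder bound (or an equivalent comparison argument for the growing solution), or simply cite DLMF 9.7.15--9.7.17 as the paper does.
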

\subsection{Parabolic cylinder functions}
\label{sec:appweight}
We define the parabolic cylinder functions $U$ and $\bar U$ in the following.  We refer to \cite[Section~5]{second-order} or \cite[Chapter~12]{NIST:DLMF} for more details.
\begin{definition}\label{defn:uubar}
For  $b\leq 0$ and $x\geq 0$ we define the parabolic cylinder functions
\begin{align}
U(b, x) =& \frac{\pi^{\frac 12} 2^{- \frac 14 (2b+1)} e^{-\frac 14 x^2}} {\Gamma(\frac{3}{4} + \frac 12 b)} {}_1{F}_1\left(\frac{1}{2} b + \frac{1}{4}; \frac 12;\frac 12 x^2\right) \nonumber\\ & - \frac{\pi^{\frac 12} 2^{- \frac 14(2b-1)}}{\Gamma(\frac 14 + \frac 12 b)}e^{- \frac 14 x^2} x {}_{1}{F}_1 \left( \frac 12 b + \frac 34; \frac 32; \frac 12 x^2\right),\\ \nonumber 
\bar U (b ,  x)   = &\pi^{- \frac 12} 2^{- \frac 14 (2b+1)} \Gamma\left(\frac 14 - \frac 12 b\right) \sin\left(\frac 34 \pi - \frac 12 b\pi\right) e^{- \frac 14 x^2} {}_1{F}_1 \left(\frac 12 b + \frac 14; \frac 12 ; \frac 12 x^2\right) \\ & - \pi^{- \frac 12} 2^{-\frac 14 ( 2b - 1) } \Gamma\left(\frac 34 - \frac 12 b\right) \sin\left(\frac 54 \pi - \frac 12 b \pi\right) e^{- \frac 14 x^2} x {}_1{F}_1 \left( \frac 12 b + \frac 34; \frac 32; \frac 12 x^2\right),
\end{align}
where ${}_1{F}_1(a;b;z):= \sum_{n=0}^\infty \frac{a^{(n)} z^n}{b^{(n)} n!}$ denotes the confluent hypergeometric function. Here, we use the notation $a^{(n)}:= a (a+1) (a+2) \cdots (a+n)$ for the rising factorial.
\end{definition} 
Remark that $\mathfrak W(U,\bar U) = \sqrt \frac{2}{\pi}\Gamma(\frac 12 - b)$ and that $U$ and $\bar U$ solve the equation 
\begin{align}
-u'' + \left(\frac 14 x^2 + b\right)  u =0.
\end{align}
Moreover, we have (e.g.\ \cite[Section 5.2]{second-order}) that
\begin{align}\label{eq:valuesatzero1}
&U(b,0) = \pi^{-\frac 12} 2^{-\frac 14(2b+1)} \Gamma\left(\frac 14 - \frac 12 b\right) \sin\left(\frac 1 4 \pi  - \frac 12 b \pi\right), \\\label{eq:valuesatzero2}
&U'(b,0) = - \pi^{-\frac 12} 2^{-\frac 14(2b-1) } \Gamma\left(\frac 34 - \frac 12 b\right) \sin\left(\frac 34 \pi - \frac 12 b \pi\right), \\\label{eq:valuesatzero3}
&\bar U(b,0) = \pi^{-\frac 12} 2^{-\frac 14(2b+1)} \Gamma\left(\frac 14 - \frac 12 b\right) \sin\left(\frac 3 4 \pi  - \frac 12 b \pi\right), \\\label{eq:valuesatzero4}
&\bar U'(b,0) = - \pi^{-\frac 12} 2^{-\frac 14(2b-1) } \Gamma\left(\frac 34 - \frac 12 b\right) \sin\left(\frac 54 \pi - \frac 12 b \pi\right).
\end{align}

We define auxiliary functions to control error terms in terms of parabolic cylinder functions. We first define $\rho(b)$ as the largest real root of the equation $\bar U(b,x) = U (b,x)$. Note that $\rho(b) \geq 0$ for $b \leq 0$.
\begin{definition}
For $b\leq 0$, we set 
\begin{align}
	E_U(b,x) = \begin{cases}
	1 &\textup{ for } 0\leq x \leq \rho(b) \\
	\sqrt{\frac{\bar U (b,x) }{U(b,x)}} & \textup{ for } x \geq \rho(b) 
	\end{cases}
\end{align}
\end{definition}
For fixed $b$, the function $E_U(b,x)$ is continuous and non-decreasing in $0\leq x <\infty$. We  denote $E_U^{-1}:= \frac{1}{E_U}$. 
\begin{definition}\label{defn:munu}
For $b\leq 0$, $x\geq 0$, we also define functions $M_U$  and $N_U$ by
\begin{align}
	&M_U(b,x):= \begin{cases}
	\sqrt{U^2 + \bar U^2} & \textup{ for } 0 \leq x \leq \rho(b)\\
	\sqrt{2 U \bar U } &\textup{ for } \rho(b) \leq x  .
	\end{cases}\\
	&N_U(b,x):= \begin{cases}
	\sqrt{U'^2 + \bar U'^2} & \textup{ for } 0 \leq x \leq \rho(b)\\
	\sqrt{\frac{ {U'}^2 \bar U^2 + {\bar U}^{\prime 2} U^2}{U \bar U} } &\textup{ for } \rho(b) \leq x  .
	\end{cases}
\end{align}
\end{definition}
\begin{definition}\label{defn:zetaU}
	We define the function $\zeta_U$ as
	\begin{align}
		  \zeta_U(t):= \begin{cases}-
		  \left( \frac{3}{2} \int_{t}^1 (1 - \tau^2  )^{\frac 12} \d \tau \right)^{\frac 23} & \text{ for }0 \leq  t \leq 1, \\ 
\left( \frac{3}{2} \int_{1}^t (\tau^2 - 1 )^{\frac 12} \d \tau \right)^{\frac 23} & \text{ for } t \geq 1.
		  \end{cases} 
	\end{align}
\end{definition}
Note that we   have (see e.g.\ \cite[\S5.8]{second-order})
\begin{align}
	&|U|\leq E^{-1}M_U, |\bar U|\leq E M_U \text{ and } |U \bar U|\leq M_U^2 \label{eq:estimatesonU}\\
	&|U'|\leq E^{-1}N_U, |\bar U'|\leq E N_U \text{ and } |U \bar U|\leq N_U^2 \label{eq:estimatesonU'}
\end{align}
for $x \geq 0$ and $b\leq 0$. 
\begin{prop} \label{prop:boundsonmu}
	The envelope function $M_U$ satisfies
	\begin{align}
	M^2_{U}\left(-\frac 12 \mu^2,\mu y \sqrt 2\right) \lesssim\frac{1}{\mu^{\frac 13}} \frac{1}{1+ |\zeta_U(y)|^{\frac 14}} \frac{1}{1+\mu^{\frac 23} |\zeta_U(y)|^{\frac 12}} \Gamma\left(\frac 12 + \frac 12 \mu^2\right)
	\end{align}
	uniformly in $\mu \geq 1$ and $y \geq 0$,
	and 
	\begin{align}
			M^2_{U}\left(-\frac 12 \mu^2,\mu y \sqrt 2\right) \lesssim \frac{1}{1+ \sqrt{\mu y}} \Gamma\left(\frac 12 + \frac 12 \mu^2\right)
	\end{align}
		uniformly in $0 \leq \mu \leq 1$ and $y \geq 0$.
		In particular, $M_U$ satisfies
		\begin{align}
	\left|M_{U}\left(-\frac 12 \mu^2,\mu y \sqrt 2\right) \right|^2 \lesssim \Gamma\left( \frac 12 + \frac 12 \mu^2 \right).
		\end{align}
\begin{proof}
	These estimates follow from \cite[Equation (5.23), (6.12) and Section 6.2]{second-order}.
\end{proof}
\end{prop}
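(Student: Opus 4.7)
The plan is to obtain the stated bounds as a consequence of the uniform asymptotic expansions of the parabolic cylinder functions $U(-\tfrac12\mu^2,\cdot)$ and $\bar U(-\tfrac12\mu^2,\cdot)$ in terms of Airy functions, exactly as developed in \cite[Chapter~12]{olver} (and cast in the form used in \cite{second-order}). After the rescaling $x=\mu y\sqrt 2$, the defining equation $-u''+(\tfrac14 x^2+b)u=0$ with $b=-\tfrac12\mu^2$ becomes a semi-classical turning-point problem in $y$ with small parameter $1/\mu$ and a single simple turning point at $y=1$. The Liouville--Green variable for this problem is precisely $\zeta_U(y)$, which is smooth, increasing, with $\zeta_U(1)=0$, and behaves like $-c(1-y)$ for $y\nearrow 1$ and $+c(y-1)$ for $y\searrow 1$ respectively.

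I would then quote the Olver uniform expansions: for $\mu\ge 1$ and $y\ge 0$,
\begin{align*}
U\!\left(-\tfrac12\mu^2,\mu y\sqrt 2\right)
  &= A(\mu)\,\Big(\tfrac{\zeta_U(y)}{y^2-1}\Big)^{1/4}
    \Big(\Ai(\mu^{2/3}\zeta_U(y)) + O(\mu^{-1}) E^{-1}_{\Ai}\,M_{\Ai}\Big),\\
\bar U\!\left(-\tfrac12\mu^2,\mu y\sqrt 2\right)
  &= B(\mu)\,\Big(\tfrac{\zeta_U(y)}{y^2-1}\Big)^{1/4}
    \Big(\Bi(\mu^{2/3}\zeta_U(y)) + O(\mu^{-1}) E_{\Ai}\,M_{\Ai}\Big),
\end{align*}
where $|A(\mu)|^2|B(\mu)|^2\sim \mu^{-1/3}\Gamma(\tfrac12+\tfrac12\mu^2)^2$; this last identity is the mechanism through which the gamma function enters, and it is read off from matching the expansions to the known behaviour of $U,\bar U$ at $y=0$ together with the Wronskian identity $\mathfrak W(U,\bar U)=\sqrt{2/\pi}\,\Gamma(\tfrac12-b)$. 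Using the bounds on $M_{\Ai}$ in \cref{lem:asymptoticsforairy} and the standard envelope identities $|\Ai|\le E_{\Ai}^{-1}M_{\Ai}$, $|\Bi|\le E_{\Ai}M_{\Ai}$ I would then compute $M_U$ piecewise according to its definition. In the region $y\le \rho_U:= y_0$ (where $\zeta_U\le 0$, i.e.\ $y\le 1$ up to the small matching window in the definition) the product $U^2+\bar U^2$ is controlled by $M_{\Ai}^2$, and in the region $y\ge y_0$ the product $2U\bar U$ equals the product of the envelopes exactly. The Airy envelope obeys $M_{\Ai}^2(\mu^{2/3}\zeta_U)\lesssim (1+\mu^{2/3}|\zeta_U|^{1/2})^{-1}$ and $(y^2-1)^{-1/4}|\zeta_U|^{1/4}\lesssim (1+|\zeta_U|^{1/4})^{-1}$ after tracking the behaviour of $\zeta_U/(y^2-1)$ both near and away from the turning point. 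Combining these with the prefactor $\mu^{-1/3}\Gamma(\tfrac12+\tfrac12\mu^2)$ yields exactly the first displayed inequality.

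For the regime $0\le\mu\le 1$ the Airy approximation is wasteful; instead I would argue directly from the integral representations in \cref{defn:uubar}. In this range $U(-\tfrac12\mu^2,x)$ and $\bar U(-\tfrac12\mu^2,x)$ are, up to the normalizing $\Gamma$-factors, smooth bounded functions of $x$ with the WKB envelope $(1+x)^{-1/2}$ in the oscillatory tail $x\gg 1$, which one obtains from the standard steepest-descent analysis of the confluent hypergeometric series. Since $\Gamma(\tfrac12-b)=\Gamma(\tfrac12+\tfrac12\mu^2)$ is bounded between $\Gamma(\tfrac12)$ and $\Gamma(1)$ in this range, substituting $x=\mu y\sqrt 2$ gives $M_U^2\lesssim (1+\sqrt{\mu y})^{-1}\Gamma(\tfrac12+\tfrac12\mu^2)$ as claimed. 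The final pointwise bound $M_U^2\lesssim \Gamma(\tfrac12+\tfrac12\mu^2)$ is then immediate from either of the two regimes by dropping the non-negative terms in the denominators.

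The main obstacle is purely a careful bookkeeping exercise: one must (i) verify that the factor $\mu^{-1/3}\Gamma(\tfrac12+\tfrac12\mu^2)$ correctly accounts for \emph{both} the $\mu$-dependence of the matching constants $A(\mu),B(\mu)$ and the chain-rule Jacobian from $x$ to $\mu^{2/3}\zeta_U$, and (ii) check that the transition at $y=\rho_U$ (where the definition of $E_U$ and $M_U$ changes) does not introduce spurious factors --- this is where the inequalities $|U|\le E_U^{-1}M_U$, $|\bar U|\le E_U M_U$ of \eqref{eq:estimatesonU} are essential, as they give a uniform control across the matching region. Once these are in hand the three stated inequalities follow by direct substitution, and no fresh ideas beyond the cited Olver-type estimates are required.
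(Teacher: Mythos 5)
The paper's own ``proof'' of this proposition is a one-line citation to Olver's second-order turning point paper (Eq.\ (5.23), (6.12), \S6.2), and your plan is essentially a reconstruction of what those citations contain: rescale to $u_{yy}=\mu^4(y^2-1)u$, use the uniform Airy-type expansions of $U,\bar U$ with Liouville--Green variable $\zeta_U$ and simple turning point at $y=1$, fix the amplitudes by the Wronskian, bound $M_U$ through the Airy modulus function, and treat bounded $\mu$ separately. So the route is the right one and the same one underlying the paper's citation.

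However, two of your displayed formulas are wrong as written and would not close the first estimate if followed literally. Since the large parameter of the rescaled equation is $\mu^2$ (not $\mu$), the Airy functions enter with argument $\mu^{4/3}\zeta_U(y)$, not $\mu^{2/3}\zeta_U(y)$; the correct envelope bound is $M_{\Ai}^2\bigl(\mu^{4/3}\zeta_U\bigr)\lesssim \bigl(1+\mu^{2/3}|\zeta_U|^{1/2}\bigr)^{-1}$, whereas with your scaling one only gets $\bigl(1+\mu^{1/3}|\zeta_U|^{1/2}\bigr)^{-1}$, which is too weak to produce the stated inequality. Likewise the normalization you assert, $|A(\mu)|^2|B(\mu)|^2\sim\mu^{-1/3}\Gamma(\tfrac12+\tfrac12\mu^2)^2$, is not what the bound needs: writing $M_U^2=2U\bar U\lesssim A B\,\phi^2(\zeta_U)\,M_{\Ai}^2(\mu^{4/3}\zeta_U)$ with $\phi^2=(\zeta_U/(y^2-1))^{1/2}\lesssim(1+|\zeta_U|^{1/4})^{-1}$, one needs $AB\lesssim\mu^{-1/3}\Gamma(\tfrac12+\tfrac12\mu^2)$; this does follow from the Wronskian $\mathfrak W(U,\bar U)=\sqrt{2/\pi}\,\Gamma(\tfrac12-b)$ once you include the Jacobian $\d(\mu^{4/3}\zeta_U)/\d x=\mu^{1/3}\phi^{-2}/\sqrt2$, but that computation (the actual content of Olver's (5.23), (6.12)) is asserted rather than carried out, and your stated relation has the wrong power of $\mu$. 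The bounded-$\mu$ regime is fine in spirit (note, though, that for $b\le 0$ the region $x\gg1$ is the exponential, not oscillatory, regime: the $(1+\sqrt{\mu y})^{-1}$ decay comes from the WKB envelope $(\tfrac14x^2-\tfrac12\mu^2)^{-1/2}$ of the product of the recessive and dominant solutions together with the Wronskian). With these corrections the argument goes through, but as written the bookkeeping for the first inequality does not.
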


\printbibliography[heading=bibintoc]

@Article{CKehle2019,
	author={Kehle, Christoph},
	title={Uniform Boundedness and Continuity at the Cauchy Horizon for Linear Waves on Reissner--Nordstr{\"o}m--AdS Black Holes},
	journal={Commun. Math. Phys.},
fjournal={Communications in Mathematical Physics},
	year={2020},
	day={01},
	volume={376},
	number={1},
	pages={145-200},
	issn={1432-0916},
	doi={10.1007/s00220-019-03529-x},
	url={https://doi.org/10.1007/s00220-019-03529-x}
}

@article{seminar_schwartz,
	author = {Schwartz, Laurent},
	title = {Sur certains espaces de fonctions diff{\'e}rentiables {\`a} valeurs vectorielles},
	journal = {S{\'e}minaire Schwartz},
	note = {Expos{\'e} no. 10},
	publisher = {Secr{\'e}tariat math{\'e}matique},
	volume = {1},
	year = {1954},
}

@article {hol10,
	AUTHOR = {Holzegel, Gustav},
	TITLE = {On the massive wave equation on slowly rotating {K}err-{A}d{S}
	spacetimes},
	JOURNAL = {Comm. Math. Phys.},
	FJOURNAL = {Communications in Mathematical Physics},
	VOLUME = {294},
	YEAR = {2010},
	NUMBER = {1},
	PAGES = {169--197},
	ISSN = {0010-3616},
	MRCLASS = {58J45 (35L05 83C57)},
	MRNUMBER = {2575480},
	MRREVIEWER = {Michael Keyl},
	DOI = {10.1007/s00220-009-0935-9},
	URL = {https://doi.org/10.1007/s00220-009-0935-9},
}

@book {treves,
	AUTHOR = {Tr{\`e}ves, Francois},
	TITLE = {Topological vector spaces, distributions and kernels},
	PUBLISHER = {Academic Press, New York-London},
	YEAR = {1967},
	PAGES = {xvi+624},
	MRCLASS = {46.00},
	MRNUMBER = {0225131},
}

@book {bsbook,
	AUTHOR = {Fedoryuk, Mikhail V.},
	TITLE = {Asymptotic analysis},
	NOTE = {Linear ordinary differential equations,
	Translated from the Russian by Andrew Rodick},
	PUBLISHER = {Springer-Verlag, Berlin},
	YEAR = {1993},
	PAGES = {viii+363},
	ISBN = {3-540-54810-6},
	MRCLASS = {34E05 (34A20 34A30 34E20 34L25)},
	MRNUMBER = {1295032},
	MRREVIEWER = {H. C. Howard},
	DOI = {10.1007/978-3-642-58016-1},
	URL = {https://doi.org/10.1007/978-3-642-58016-1},
}

@article {dold,
	AUTHOR = {Dold, Dominic},
	TITLE = {Unstable mode solutions to the {K}lein-{G}ordon equation in
	{K}err--anti--de {S}itter spacetimes},
	JOURNAL = {Comm. Math. Phys.},
	FJOURNAL = {Communications in Mathematical Physics},
	VOLUME = {350},
	YEAR = {2017},
	NUMBER = {2},
	PAGES = {639--697},
	ISSN = {0010-3616},
	MRCLASS = {83F05 (35L10 35R01)},
	MRNUMBER = {3607459},
	MRREVIEWER = {Satyanad Kichenassamy},
	DOI = {10.1007/s00220-016-2783-8},
	URL = {https://doi.org/10.1007/s00220-016-2783-8},
}

@book{harman1998metric,
AUTHOR = {Harman, Glyn},
TITLE = {Metric number theory},
SERIES = {London Mathematical Society Monographs. New Series},
VOLUME = {18},
PUBLISHER = {The Clarendon Press, Oxford University Press, New York},
YEAR = {1998},
PAGES = {xviii+297},
ISBN = {0-19-850083-1},
MRCLASS = {11J83 (11K38 11K50 11K55)},
MRNUMBER = {1672558},
MRREVIEWER = {R. C. Baker},
}

@article {quasimodes,
	AUTHOR = {Holzegel, Gustav and Smulevici, Jacques},
	TITLE = {Quasimodes and a lower bound on the uniform energy decay rate
	for {K}err-{A}d{S} spacetimes},
	JOURNAL = {Anal. PDE},
	FJOURNAL = {Analysis \& PDE},
	VOLUME = {7},
	YEAR = {2014},
	NUMBER = {5},
	PAGES = {1057--1090},
	ISSN = {2157-5045},
	MRCLASS = {58J45 (35B40 35R01 83C57)},
	MRNUMBER = {3265959},
	MRREVIEWER = {Piotr T. Chru\'{s}ciel},
	DOI = {10.2140/apde.2014.7.1057},
	URL = {https://doi-org.ezp.lib.cam.ac.uk/10.2140/apde.2014.7.1057},
}

@article {gustav,
AUTHOR = {Holzegel, Gustav and Smulevici, Jacques},
TITLE = {Decay properties of {K}lein--{G}ordon fields on {K}err--{A}d{S}
spacetimes},
JOURNAL = {Comm. Pure Appl. Math.},
FJOURNAL = {Communications on Pure and Applied Mathematics},
VOLUME = {66},
YEAR = {2013},
NUMBER = {11},
PAGES = {1751--1802},
ISSN = {0010-3640},
MRCLASS = {83C57 (35B40 35L10 35Q75 35R01)},
MRNUMBER = {3103910},
DOI = {10.1002/cpa.21470},
URL = {https://doi-org.ezp.lib.cam.ac.uk/10.1002/cpa.21470},
}

@article {wellposed,
AUTHOR = {Holzegel, Gustav},
TITLE = {Well-posedness for the massive wave equation on asymptotically
anti-de {S}itter spacetimes},
JOURNAL = {J. Hyperbolic Differ. Equ.},
FJOURNAL = {Journal of Hyperbolic Differential Equations},
VOLUME = {9},
YEAR = {2012},
NUMBER = {2},
PAGES = {239--261},
ISSN = {0219-8916},
MRCLASS = {58J45 (35B30 35L10 83C10)},
MRNUMBER = {2928108},
MRREVIEWER = {Piotr T. Chru\'{s}ciel},
DOI = {10.1142/S0219891612500087},
URL = {https://doi-org.ezp.lib.cam.ac.uk/10.1142/S0219891612500087},
}

@article {second-order,
	AUTHOR = {Olver, F. W. J.},
	TITLE = {Second-order linear differential equations with two turning
	points},
	JOURNAL = {Philos. Trans. Roy. Soc. London Ser. A},
	FJOURNAL = {Philosophical Transactions of the Royal Society of London.
	Series A. Mathematical and Physical Sciences},
	VOLUME = {278},
	YEAR = {1975},
	PAGES = {137--174},
	ISSN = {0080-4614},
	MRCLASS = {34E20},
	MRNUMBER = {369844},
	MRREVIEWER = {P. F. Hsieh},
	DOI = {10.1098/rsta.1975.0023},
	URL = {https://doi.org/10.1098/rsta.1975.0023},
}

@article {turning,
	AUTHOR = {Olver, F. W. J.},
	TITLE = {Error bounds for first approximations in turning-point
	problems},
	JOURNAL = {J. Soc. Indust. Appl. Math.},
	FJOURNAL = {Journal of the Society for Industrial and Applied Mathematics},
	VOLUME = {11},
	YEAR = {1963},
	PAGES = {748--772},
	ISSN = {0368-4245},
	MRCLASS = {34.50},
	MRNUMBER = {160981},
	MRREVIEWER = {N. D. Kazarinoff},
}

@article {carter,
	IDS={rnads},
	AUTHOR = {Carter, Brandon},
	TITLE = {Hamilton-{J}acobi and {S}chr\"{o}dinger separable solutions of
	{E}instein's equations},
	JOURNAL = {Comm. Math. Phys.},
	FJOURNAL = {Communications in Mathematical Physics},
	VOLUME = {10},
	YEAR = {1968},
	PAGES = {280--310},
	ISSN = {0010-3616},
	MRCLASS = {83.53},
	MRNUMBER = {239841},
	MRREVIEWER = {W. Israel},
	URL = {http://projecteuclid.org.ezp.lib.cam.ac.uk/euclid.cmp/1103841118},
}

@article {kehle2018scattering,
	IDS = {Kehle2019},
	AUTHOR = {Kehle, Christoph and Shlapentokh-Rothman, Yakov},
	TITLE = {A {S}cattering {T}heory for {L}inear {W}aves on the {I}nterior
	of {R}eissner--{N}ordstr\"{o}m {B}lack {H}oles},
	JOURNAL = {Ann. Henri Poincar\'{e}},
	FJOURNAL = {Annales Henri Poincar\'{e}. A Journal of Theoretical and
	Mathematical Physics},
	VOLUME = {20},
	YEAR = {2019},
	NUMBER = {5},
	PAGES = {1583--1650},
	ISSN = {1424-0637},
	MRCLASS = {83C57 (35P25 35Q75 58J45 85A15)},
	MRNUMBER = {3942232},
	DOI = {10.1007/s00023-019-00760-z},
	URL = {https://doi.org/10.1007/s00023-019-00760-z},
}

@article {anneboundedness,
	IDS = {franzen2016boundedness},
	AUTHOR = {Franzen, Anne T.},
	TITLE = {Boundedness of massless scalar waves on {R}eissner--{N}ordstr\"om
	interior backgrounds},
	JOURNAL = {Comm. Math. Phys.},
	FJOURNAL = {Communications in Mathematical Physics},
	VOLUME = {343},
	YEAR = {2016},
	NUMBER = {2},
	PAGES = {601--650},
	ISSN = {0010-3616},
	MRCLASS = {83C57 (35Q75 35R01 83C75)},
	MRNUMBER = {3477348},
	DOI = {10.1007/s00220-015-2440-7},
	URL = {https://doi.org/10.1007/s00220-015-2440-7},
}

@article {warnick_massive_wave,
	AUTHOR = {Warnick, C. M.},
	TITLE = {The massive wave equation in asymptotically {A}d{S}
	spacetimes},
	JOURNAL = {Comm. Math. Phys.},
	FJOURNAL = {Communications in Mathematical Physics},
	VOLUME = {321},
	YEAR = {2013},
	NUMBER = {1},
	PAGES = {85--111},
	ISSN = {0010-3616},
	MRCLASS = {83C10 (35B30 35C20 35L20 58J45)},
	MRNUMBER = {3089665},
	DOI = {10.1007/s00220-013-1720-3},
	URL = {https://doi.org/10.1007/s00220-013-1720-3},
}

@article {warnick_boundedness_and_growth,
	IDS = {MR3150167},
	AUTHOR = {Holzegel, Gustav and Warnick, Claude M.},
	TITLE = {Boundedness and growth for the massive wave equation on
	asymptotically anti-de {S}itter black holes},
	JOURNAL = {J. Funct. Anal.},
	FJOURNAL = {Journal of Functional Analysis},
	VOLUME = {266},
	YEAR = {2014},
	NUMBER = {4},
	PAGES = {2436--2485},
	ISSN = {0022-1236},
	MRCLASS = {83C57 (35B35 35L20 35R01)},
	MRNUMBER = {3150167},
	MRREVIEWER = {Vilmos Komornik},
	DOI = {10.1016/j.jfa.2013.10.019},
	URL = {https://doi.org/10.1016/j.jfa.2013.10.019},
}

@article {breitenlohner,
	AUTHOR = {Breitenlohner, Peter and Freedman, Daniel Z.},
	TITLE = {Stability in gauged extended supergravity},
	JOURNAL = {Ann. Physics},
	FJOURNAL = {Annals of Physics},
	VOLUME = {144},
	YEAR = {1982},
	NUMBER = {2},
	PAGES = {249--281},
	ISSN = {0003-4916},
	MRCLASS = {83E50},
	MRNUMBER = {684223},
	MRREVIEWER = {W. Cox},
	DOI = {10.1016/0003-4916(82)90116-6},
	URL = {https://doi.org/10.1016/0003-4916(82)90116-6},
}

@inbook{penrose1974gravitational,
	title={Gravitational Collapse},
	author={Roger Penrose},
	editor={DeWitt-Morette, C{\'e}cile},
	booktitle= {Gravitational Radiation and Gravitational
	Collapse},
	volume={64 of IAU Symposium},
	pages={82-91},
	year={1974},
	publisher={Springer}
}

@book {christo,
	AUTHOR = {Christodoulou, Demetrios},
	TITLE = {The formation of black holes in general relativity},
	SERIES = {EMS Monographs in Mathematics},
	PUBLISHER = {European Mathematical Society (EMS), Z\"urich},
	YEAR = {2009},
	ISBN = {978-3-03719-068-5},
	MRCLASS = {83C05 (35L70 35Q75 83C75)},
	MRNUMBER = {2488976},
	MRREVIEWER = {Piotr T. Chru\'sciel},
	DOI = {10.4171/068},
	URL = {https://doi.org/10.4171/068},
}

@book{berti2019quasi,
title={Quasi-Periodic Solutions of Nonlinear Wave Equations on the d-Dimensional Torus},
author={Berti, Massimiliano and Bolle, Philippe},
	SERIES = {EMS Monographs in Mathematics},
PUBLISHER = {European Mathematical Society (EMS), Z\"urich},
year={2020}
}

@article {luk2017strong,
	AUTHOR = {Luk, Jonathan and Oh, Sung-Jin},
	TITLE = {Strong cosmic censorship in spherical symmetry for two-ended
	asymptotically flat initial data {I}. {T}he interior of the
	black hole region},
	JOURNAL = {Ann. of Math. (2)},
	FJOURNAL = {Annals of Mathematics. Second Series},
	VOLUME = {190},
	YEAR = {2019},
	NUMBER = {1},
	PAGES = {1--111},
	ISSN = {0003-486X},
	MRCLASS = {83C75 (35Q75)},
	MRNUMBER = {3990601},
	DOI = {10.4007/annals.2019.190.1.1},
	URL = {https://doi-org.ezp.lib.cam.ac.uk/10.4007/annals.2019.190.1.1},
}

@article{luk2017strong2,
 AUTHOR = {Luk, Jonathan and Oh, Sung-Jin},
TITLE = {Strong cosmic censorship in spherical symmetry for two-ended
asymptotically flat initial data {II}: the exterior of the
black hole region},
JOURNAL = {Ann. PDE},
FJOURNAL = {Annals of PDE. Journal Dedicated to the Analysis of Problems
from Physical Sciences},
VOLUME = {5},
YEAR = {2019},
NUMBER = {1},
PAGES = {Paper No. 6, 194},
ISSN = {2524-5317},
MRCLASS = {83C75 (35Q75 35R01 83C57)},
MRNUMBER = {3969149},
MRREVIEWER = {Theophanes Grammenos},
DOI = {10.1007/s40818-019-0062-7},
URL = {https://doi-org.ezp.lib.cam.ac.uk/10.1007/s40818-019-0062-7},
}

@article {hintz2016global,
	AUTHOR = {Hintz, Peter and Vasy, Andr\'{a}s},
	TITLE = {The global non-linear stability of the {K}err--de {S}itter
	family of black holes},
	JOURNAL = {Acta Math.},
	FJOURNAL = {Acta Mathematica},
	VOLUME = {220},
	YEAR = {2018},
	NUMBER = {1},
	PAGES = {1--206},
	ISSN = {0001-5962},
	MRCLASS = {83C57 (35B35 35Q75 35R01 58Z05)},
	MRNUMBER = {3816427},
	MRREVIEWER = {Mihai Tohaneanu},
	DOI = {10.4310/ACTA.2018.v220.n1.a1},
	URL = {https://doi.org/10.4310/ACTA.2018.v220.n1.a1},
}

@article {internal90,
	AUTHOR = {Poisson, Eric and Israel, Werner},
	TITLE = {Internal structure of black holes},
	JOURNAL = {Phys. Rev. D (3)},
	FJOURNAL = {Physical Review. D. Particles and Fields. Third Series},
	VOLUME = {41},
	YEAR = {1990},
	NUMBER = {6},
	PAGES = {1796--1809},
	ISSN = {0556-2821},
	MRCLASS = {83C57 (83C15 83C30)},
	MRNUMBER = {1048877},
	MRREVIEWER = {Peter Szekeres},
	DOI = {10.1103/PhysRevD.41.1796},
	URL = {https://doi.org/10.1103/PhysRevD.41.1796},
}

@article {ori91,
	AUTHOR = {Ori, Amos},
	TITLE = {Inner structure of a charged black hole: an exact
	mass-inflation solution},
	JOURNAL = {Phys. Rev. Lett.},
	FJOURNAL = {Physical Review Letters},
	VOLUME = {67},
	YEAR = {1991},
	NUMBER = {7},
	PAGES = {789--792},
	ISSN = {0031-9007},
	MRCLASS = {83C57 (83C75)},
	MRNUMBER = {1118553},
	DOI = {10.1103/PhysRevLett.67.789},
	URL = {https://doi.org/10.1103/PhysRevLett.67.789},
}

@article{luk2016kerr,
	AUTHOR = {Luk, Jonathan and Sbierski, Jan},
	TITLE = {Instability results for the wave equation in the interior of
	{K}err black holes},
	JOURNAL = {J. Funct. Anal.},
	FJOURNAL = {Journal of Functional Analysis},
	VOLUME = {271},
	YEAR = {2016},
	NUMBER = {7},
	PAGES = {1948--1995},
	ISSN = {0022-1236},
	MRCLASS = {83C57 (35Q75 58J45 83C75)},
	MRNUMBER = {3535323},
	MRREVIEWER = {Jos\'e Nat\'ario},
	DOI = {10.1016/j.jfa.2016.06.013},
	URL = {https://doi.org/10.1016/j.jfa.2016.06.013},
}

@article{moschidis2017einstein,
	title={The Einstein--null dust system in spherical symmetry with an inner mirror: structure of the maximal development and Cauchy stability}, 
	author={Georgios Moschidis},
	year={2017},
	eprint={1704.08685},
	archivePrefix={arXiv}
}

@article {moschidis2017proof,
	AUTHOR = {Moschidis, Georgios},
	TITLE = {A proof of the instability of {A}d{S} for the {E}instein-null
	dust system with an inner mirror},
	JOURNAL = {Anal. PDE},
	FJOURNAL = {Analysis \& PDE},
	VOLUME = {13},
	YEAR = {2020},
	NUMBER = {6},
	PAGES = {1671--1754},
	ISSN = {2157-5045},
	MRCLASS = {83C05 (35Q75 35Q83 83C57)},
	MRNUMBER = {4150259},
	MRREVIEWER = {Kotik K. Lee},
	DOI = {10.2140/apde.2020.13.1671},
	URL = {https://doi.org/10.2140/apde.2020.13.1671},
}

@Article{Friedrich1986,
	author="Friedrich, Helmut",
	title="On the existence of $n$-geodesically complete or future complete solutions of Einstein's field equations with smooth asymptotic structure",
	journal="Commun. Math. Phys",
	year="1986",
	day="01",
	volume="107",
	number="4",
	issn="1432-0916",
	doi="10.1007/BF01205488",
	url="https://doi.org/10.1007/BF01205488"
}

@book{christodoulou2014global,
	title={The Global Nonlinear Stability of the Minkowski Space},
	author={Christodoulou, Demetrios and Klainerman, Sergiu},
	volume={41
	of \textit{Princeton Mathematical Series}},
	year={1993},
	publisher={Princeton University Press}
}

@article{FRIEDRICH1995125,
	title = "Einstein equations and conformal structure: Existence of anti-de~Sitter-type space-times",
	journal = "	J. Geom. Phys.",
	volume = "17",
	number = "2",
	pages = "125 - 184",
	year = "1995",
	issn = "0393-0440",
	doi = "https://doi.org/10.1016/0393-0440(94)00042-3",
	url = "http://www.sciencedirect.com/science/article/pii/0393044094000423",
	author = "Helmut Friedrich",
	keywords = "Einstein equations, Conformal structure"
}

@article{dafermos2018rough,
	author={Mihalis Dafermos and Yakov Shlapentokh-Rothman},
	title={Rough initial data and the strength of the blue-shift instability on cosmological black holes with $\Lambda > 0$},
	journal={Class. Quantum Grav.},
	volume={35},
	number={19},
	pages={195010},
	url={http://stacks.iop.org/0264-9381/35/i=19/a=195010},
	year={2018}
}

@article {dias2018strong,
	AUTHOR = {Dias, Oscar J. C. and Reall, Harvey S. and Santos, Jorge E.},
	TITLE = {Strong cosmic censorship for charged de {S}itter black holes
	with a charged scalar field},
	JOURNAL = {Classical Quantum Gravity},
	FJOURNAL = {Classical and Quantum Gravity},
	VOLUME = {36},
	YEAR = {2019},
	NUMBER = {4},
	PAGES = {045005, 24,},
	ISSN = {0264-9381},
	MRCLASS = {83C75 (83C57)},
	MRNUMBER = {3919534},
	DOI = {10.1088/1361-6382/aafcf2},
	URL = {https://doi.org/10.1088/1361-6382/aafcf2},
}

@Article{dias2018strong2,
	author="Dias, Oscar J. C.
	and Reall, Harvey S.
	and Santos, Jorge E.",
	title="Strong cosmic censorship: taking the rough with the smooth",
	journal="J. High Energy Phys.
	",
	year="2018",
	day="01",
	volume="2018",
	number="10",
	issn="1029-8479",
	doi="10.1007/JHEP10(2018)001",
	url="https://doi.org/10.1007/JHEP10(2018)001"
}

@article {quasi_warnick,
	AUTHOR = {Warnick, Claude M.},
	TITLE = {On quasinormal modes of asymptotically anti-de {S}itter black
	holes},
	JOURNAL = {Comm. Math. Phys.},
	FJOURNAL = {Communications in Mathematical Physics},
	VOLUME = {333},
	YEAR = {2015},
	NUMBER = {2},
	PAGES = {959--1035},
	ISSN = {0010-3616},
	MRCLASS = {83C30 (81T20)},
	MRNUMBER = {3296168},
	DOI = {10.1007/s00220-014-2171-1},
	URL = {https://doi.org/10.1007/s00220-014-2171-1},
}

@article {quasinormal_gannot,
	AUTHOR = {Gannot, Oran},
	TITLE = {Existence of quasinormal modes for {K}err--{A}d{S} black holes},
	JOURNAL = {Ann. Henri Poincar\'e},
	FJOURNAL = {Annales Henri Poincar\'e. A Journal of Theoretical and
	Mathematical Physics},
	VOLUME = {18},
	YEAR = {2017},
	NUMBER = {8},
	PAGES = {2757--2788},
	ISSN = {1424-0637},
	MRCLASS = {83C57 (35L20 35P25 47A40 81T20)},
	MRNUMBER = {3671550},
	MRREVIEWER = {Jos\'e Nat\'ario},
	DOI = {10.1007/s00023-017-0568-z},
	URL = {https://doi.org/10.1007/s00023-017-0568-z},
}

@article {dafermos_cauchy_horizon,
	AUTHOR = {Dafermos, Mihalis},
	TITLE = {Stability and instability of the {C}auchy horizon for the
	spherically symmetric {E}instein--{M}axwell-scalar field
	equations},
	JOURNAL = {Ann. of Math. (2)},
	FJOURNAL = {Annals of Mathematics. Second Series},
	VOLUME = {158},
	YEAR = {2003},
	NUMBER = {3},
	PAGES = {875--928},
	ISSN = {0003-486X},
	MRCLASS = {83C05 (35L40 58J45 83C22 83C75)},
	MRNUMBER = {2031855},
	MRREVIEWER = {Alan D. Rendall},
	DOI = {10.4007/annals.2003.158.875},
	URL = {https://doi-org.ezp.lib.cam.ac.uk/10.4007/annals.2003.158.875},
}

@article {ori2,
	AUTHOR = {Bhattacharjee, Srijit and Sarkar, Sudipta and Virmani,
	Amitabh},
	TITLE = {Internal structure of charged {A}d{S} black holes},
	JOURNAL = {Phys. Rev. D},
	FJOURNAL = {Physical Review D},
	VOLUME = {93},
	YEAR = {2016},
	NUMBER = {12},
	PAGES = {124029},
	ISSN = {2470-0010},
	MRCLASS = {83C57},
	MRNUMBER = {3625474},
	DOI = {10.1103/PhysRevD.93.124029},
	URL = {https://doi-org.ezp.lib.cam.ac.uk/10.1103/PhysRevD.93.124029},
}

@article{benomio,
	title={The Stable Trapping Phenomenon for Black Strings and Black
	Rings and its Obstructions on the Decay of Linear Waves},
	author={Benomio, Gabriele},
	journal={arXiv:1809.07795, accepted at Anal. PDE},
	year={2020}
}

@article{bizon,
	title = {Weakly Turbulent Instability of Anti--de Sitter Spacetime},
	year = {2011},
	author = {Bizo\'{n}, Piotr and Rostworowski, Andrzej},
	journal = {Phys. Rev. Lett.},
	volume = {107},
	pages = {031102},
	numpages = {4},
	publisher = {American Physical Society},
	doi = {10.1103/PhysRevLett.107.031102},
	url = {https://link.aps.org/doi/10.1103/PhysRevLett.107.031102}
}

@article {dafermosholzegelconj,
	author = {Dafermos, Mihalis and Holzegel, Gustav},
	title = {Dynamic instability of solitons in 4 + 1-dimensional gravity with
	negative cosmological constant},
	journal = {\url{https://www.dpmms.cam.ac.uk/\textasciitilde md384/ADSinstability.pdf}},
	year = {2006},
}

@article {schwarzschild_ads_stable,
	AUTHOR = {Holzegel, Gustav and Smulevici, Jacques},
	TITLE = {Stability of {S}chwarzschild--{A}d{S} for the spherically
	symmetric {E}instein--{K}lein--{G}ordon system},
	JOURNAL = {Comm. Math. Phys.},
	FJOURNAL = {Communicatio\urlns in Mathematical Physics},
	VOLUME = {317},
	YEAR = {2013},
	NUMBER = {1},
	PAGES = {205--251},
	ISSN = {0010-3616},
	MRCLASS = {83C75 (83C10 83C57)},
	MRNUMBER = {3010373},
	MRREVIEWER = {R. Beig},
	DOI = {10.1007/s00220-012-1572-2},
	URL = {https://doi.org/10.1007/s00220-012-1572-2},
}

@article {costa2,
	AUTHOR = {Costa, Jo\~{a}o L. and Gir\~{a}o, Pedro M. and Nat\'{a}rio, Jos\'{e} and
	Silva, Jorge Drumond},
	TITLE = {On the global uniqueness for the {E}instein--{M}axwell-scalar
	field system with a cosmological constant. {P}art 2.
	{S}tructure of the solutions and stability of the {C}auchy
	horizon},
	JOURNAL = {Comm. Math. Phys.},
	FJOURNAL = {Communications in Mathematical Physics},
	VOLUME = {339},
	YEAR = {2015},
	NUMBER = {3},
	PAGES = {903--947},
	ISSN = {0010-3616},
	MRCLASS = {83C22 (83C05 83C75)},
	MRNUMBER = {3385987},
	MRREVIEWER = {Robyn Arianrhod},
	DOI = {10.1007/s00220-015-2433-6},
	URL = {https://doi-org.ezp.lib.cam.ac.uk/10.1007/s00220-015-2433-6},
}

@article {costa3,
	AUTHOR = {Costa, Jo\~{a}o L. and Gir\~{a}o, Pedro M. and Nat\'{a}rio, Jos\'{e} and
	Silva, Jorge Drumond},
	TITLE = {On the global uniqueness for the {E}instein--{M}axwell-{S}calar
	field system with a cosmological constant: {P}art 3. {M}ass
	inflation and extendibility of the solutions},
	JOURNAL = {Ann. PDE},
	FJOURNAL = {Annals of PDE. Journal Dedicated to the Analysis of Problems
	from Physical Sciences},
	VOLUME = {3},
	YEAR = {2017},
	NUMBER = {1},
	PAGES = {Art. 8, 55},
	ISSN = {2199-2576},
	MRCLASS = {83C22 (35Q76 83C05 83C57 83C75)},
	MRNUMBER = {3629139},
	MRREVIEWER = {Pierre Noundjeu},
	DOI = {10.1007/s40818-017-0028-6},
	URL = {https://doi-org.ezp.lib.cam.ac.uk/10.1007/s40818-017-0028-6},
}

@article {nospacelike,
	AUTHOR = {Dafermos, Mihalis},
	TITLE = {Black holes without spacelike singularities},
	JOURNAL = {Comm. Math. Phys.},
	FJOURNAL = {Communications in Mathematical Physics},
	VOLUME = {332},
	YEAR = {2014},
	NUMBER = {2},
	PAGES = {729--757},
	ISSN = {0010-3616},
	MRCLASS = {83C57},
	MRNUMBER = {3257661},
	DOI = {10.1007/s00220-014-2063-4},
	URL = {https://doi-org.ezp.lib.cam.ac.uk/10.1007/s00220-014-2063-4},
}

@article {hawking_reall,
	AUTHOR = {Hawking, S. W. and Reall, H. S.},
	TITLE = {Charged and rotating {A}d{S} black holes and their {CFT}
	duals},
	JOURNAL = {Phys. Rev. D (3)},
	FJOURNAL = {Physical Review. D. Third Series},
	VOLUME = {61},
	YEAR = {2000},
	NUMBER = {2},
	PAGES = {024014, 10},
	ISSN = {0556-2821},
	MRCLASS = {83C57 (81T40 83E30 83E50)},
	MRNUMBER = {1738787},
	MRREVIEWER = {Elizabeth Winstanley},
	DOI = {10.1103/PhysRevD.61.024014},
	URL = {https://doi.org/10.1103/PhysRevD.61.024014},
}

@article {redshift,
	AUTHOR = {Dafermos, Mihalis and Rodnianski, Igor},
	TITLE = {The red-shift effect and radiation decay on black hole
	spacetimes},
	JOURNAL = {Comm. Pure Appl. Math.},
	FJOURNAL = {Communications on Pure and Applied Mathematics},
	VOLUME = {62},
	YEAR = {2009},
	NUMBER = {7},
	PAGES = {859--919},
	ISSN = {0010-3640},
	MRCLASS = {83C57 (35B40 35L05)},
	MRNUMBER = {2527808},
	MRREVIEWER = {Guilherme De Berredo-Peixoto},
	DOI = {10.1002/cpa.20281},
	URL = {https://doi.org/10.1002/cpa.20281},
}

@article{moschidis2,
	title={The characteristic initial-boundary value problem for the Einstein--massless Vlasov system in spherical symmetry}, 
	author={Georgios Moschidis},
	year={2018},
	eprint={1812.04274},
	archivePrefix={arXiv}
}

@article{moschidis3,
	title={A proof of the instability of AdS for the Einstein--massless Vlasov system}, 
	author={Georgios Moschidis},
	year={2018},
	eprint={1812.04268},
	archivePrefix={arXiv} 
}

@article {gajic1,
	AUTHOR = {Gajic, Dejan},
	TITLE = {Linear waves in the interior of extremal black holes {I}},
	JOURNAL = {Comm. Math. Phys.},
	FJOURNAL = {Communications in Mathematical Physics},
	VOLUME = {353},
	YEAR = {2017},
	NUMBER = {2},
	PAGES = {717--770},
	ISSN = {0010-3616},
	MRCLASS = {83C57 (83C05)},
	MRNUMBER = {3649484},
	MRREVIEWER = {Stefanos Aretakis},
	DOI = {10.1007/s00220-016-2800-y},
	URL = {https://doi.org/10.1007/s00220-016-2800-y},
}

@article {gajic2,
	AUTHOR = {Gajic, Dejan},
	TITLE = {Linear waves in the interior of extremal black holes {II}},
	JOURNAL = {Ann. Henri Poincar\'{e}},
	FJOURNAL = {Annales Henri Poincar\'{e}. A Journal of Theoretical and
	Mathematical Physics},
	VOLUME = {18},
	YEAR = {2017},
	NUMBER = {12},
	PAGES = {4005--4081},
	ISSN = {1424-0637},
	MRCLASS = {83C57 (35L10 35Q75 35R01 83C05)},
	MRNUMBER = {3723347},
	MRREVIEWER = {Stefanos Aretakis},
	DOI = {10.1007/s00023-017-0614-x},
	URL = {https://doi.org/10.1007/s00023-017-0614-x},
}

@article {annekerr,
	IDS = {franzenkerr,franzen},
AUTHOR = {Franzen, Anne T.},
TITLE = {Boundedness of {M}assless {S}calar {W}aves on {K}err
{I}nterior {B}ackgrounds},
JOURNAL = {Ann. Henri Poincar\'{e}},
FJOURNAL = {Annales Henri Poincar\'{e}. A Journal of Theoretical and
Mathematical Physics},
VOLUME = {21},
YEAR = {2020},
NUMBER = {4},
PAGES = {1045--1111},
ISSN = {1424-0637},
MRCLASS = {81T20 (35Q75 35R01)},
MRNUMBER = {4078277},
DOI = {10.1007/s00023-020-00900-w},
URL = {https://doi-org.ezp.lib.cam.ac.uk/10.1007/s00023-020-00900-w},
}

@article{hintz2020sharp,
	title={A sharp version of Price's law for wave decay on asymptotically flat spacetimes},
	author={Hintz, Peter},
	journal={arXiv preprint arXiv:2004.01664},
	year={2020}
}

@article {MR3882684,
	AUTHOR = {Dias, Oscar J. C. and Eperon, Felicity C. and Reall, Harvey S.
	and Santos, Jorge E.},
	TITLE = {Strong cosmic censorship in de {S}itter space},
	JOURNAL = {Phys. Rev. D},
	FJOURNAL = {Physical Review D},
	VOLUME = {97},
	YEAR = {2018},
	NUMBER = {10},
	PAGES = {104060, 13},
	ISSN = {2470-0010},
	MRCLASS = {83C75},
	MRNUMBER = {3882684},
	DOI = {10.1103/physrevd.97.104060},
	URL = {https://doi-org.ezp.lib.cam.ac.uk/10.1103/physrevd.97.104060},
}

@article{van2019breakdown,
	title={The breakdown of weak null singularities inside black holes},
	author={Van de Moortel, Maxime},
	year={2019},
	eprint={1912.10890},
	archivePrefix={arXiv},
}

@article {van2020mass,
	AUTHOR = {Van de Moortel, Maxime},
	TITLE = {Mass inflation and the {$C^2$}-inextendibility of spherically
	symmetric charged scalar field dynamical black holes},
	JOURNAL = {Comm. Math. Phys.},
	FJOURNAL = {Communications in Mathematical Physics},
	VOLUME = {382},
	YEAR = {2021},
	NUMBER = {2},
	PAGES = {1263--1341},
	ISSN = {0010-3616},
	MRCLASS = {83C05 (83C57 83C75)},
	MRNUMBER = {4227173},
	DOI = {10.1007/s00220-020-03923-w},
	URL = {https://doi.org/10.1007/s00220-020-03923-w},
}

@article{fournodavlos2019initial,
	title={On the initial boundary value problem for the Einstein vacuum equations in the maximal gauge}, 
	author={Grigorios Fournodavlos and Jacques Smulevici},
	year={2019},
	eprint={1912.07338},
	archivePrefix={arXiv}
}

@article{gajic2019quasinormal,
	title={Quasinormal modes in extremal Reissner-Nordstr\"om spacetimes}, 
	author={Dejan Gajic and Claude Warnick},
	year={2021},
	eprint={1910.08479},
	archivePrefix={arXiv}
}

@article{dias2019btz,
	title={The BTZ black hole violates strong cosmic censorship},
	author={Dias, Oscar JC and Reall, Harvey S and Santos, Jorge E},
	journal={Journal of High Energy Physics},
	volume={2019},
	number={12},
	year={2019},
	publisher={Springer}
}

@article{pandya2020rotating,
	title={The Rotating Black Hole Interior: Insights from Gravitational Collapse in $AdS_3$},
	author={Pandya, Alex and Pretorius, Frans},
	journal={arXiv preprint arXiv:2002.07130},
	year={2020}
}

@book {MR3236784,
	AUTHOR = {Falconer, Kenneth},
	TITLE = {Fractal geometry},
	EDITION = {Third},
	NOTE = {Mathematical foundations and applications},
	PUBLISHER = {John Wiley \& Sons, Ltd., Chichester},
	YEAR = {2014},
	PAGES = {xxx+368},
	ISBN = {978-1-119-94239-9},
	MRCLASS = {28-01 (11K55 28A78 28A80 37C45 37F10)},
	MRNUMBER = {3236784},
	MRREVIEWER = {Manuel Mor\'{a}n},
}

@article {MR4072237,
	AUTHOR = {Holzegel, Gustav and Luk, Jonathan and Smulevici, Jacques and
	Warnick, Claude},
	TITLE = {Asymptotic {P}roperties of {L}inear {F}ield {E}quations in
	{A}nti-de {S}itter {S}pace},
	JOURNAL = {Comm. Math. Phys.},
	FJOURNAL = {Communications in Mathematical Physics},
	VOLUME = {374},
	YEAR = {2020},
	NUMBER = {2},
	PAGES = {1125--1178},
	ISSN = {0010-3616},
	MRCLASS = {83C05 (58J45 83C25)},
	MRNUMBER = {4072237},
	DOI = {10.1007/s00220-019-03601-6},
	URL = {https://doi-org.ezp.lib.cam.ac.uk/10.1007/s00220-019-03601-6},
}

@article {khintchine,
	AUTHOR = {Khintchine, A.},
	TITLE = {Zur metrischen {T}heorie der diophantischen {A}pproximationen},
	JOURNAL = {Math. Z.},
	FJOURNAL = {Mathematische Zeitschrift},
	VOLUME = {24},
	YEAR = {1926},
	NUMBER = {1},
	PAGES = {706--714},
	ISSN = {0025-5874},
	MRCLASS = {DML},
	MRNUMBER = {1544787},
	DOI = {10.1007/BF01216806},
	URL = {https://doi-org.ezp.lib.cam.ac.uk/10.1007/BF01216806},
}

@article {besi,
	AUTHOR = {Besicovitch, A. S.},
	TITLE = {Sets of {F}ractional {D}imensions ({IV}): {O}n {R}ational
	{A}pproximation to {R}eal {N}umbers},
	JOURNAL = {J. London Math. Soc.},
	FJOURNAL = {The Journal of the London Mathematical Society},
	VOLUME = {9},
	YEAR = {1934},
	NUMBER = {2},
	PAGES = {126--131},
	ISSN = {0024-6107},
	MRCLASS = {DML},
	MRNUMBER = {1574327},
	DOI = {10.1112/jlms/s1-9.2.126},
	URL = {https://doi-org.ezp.lib.cam.ac.uk/10.1112/jlms/s1-9.2.126},
}

@article {jarnik,
	AUTHOR = {Jarn{\'i}k, Vojtech},
	TITLE = {\"{U}ber die simultanen diophantischen {A}pproximationen},
	JOURNAL = {Math. Z.},
	FJOURNAL = {Mathematische Zeitschrift},
	VOLUME = {33},
	YEAR = {1931},
	NUMBER = {1},
	PAGES = {505--543},
	ISSN = {0025-5874},
	MRCLASS = {DML},
	MRNUMBER = {1545226},
	DOI = {10.1007/BF01174368},
	URL = {https://doi-org.ezp.lib.cam.ac.uk/10.1007/BF01174368},
}

@book {illposed,
	AUTHOR = {Ptashnik, B. I.},
	TITLE = {Ill-Posed Boundary-Value Problems for Partial Differential Equations (In Russian)},
	PUBLISHER = {Naukova Dumka, Kiev},
	YEAR = {1984},
	PAGES = {264},
	MRCLASS = {35R25 (35L35)},
	MRNUMBER = {772024},
	MRREVIEWER = {V. \v{D}urikovi\v{c}},
}

@article {peroidicwave,
	AUTHOR = {Fe\v{c}kan, Michal},
	TITLE = {Periodic solutions of certain abstract wave equations},
	JOURNAL = {Proc. Amer. Math. Soc.},
	FJOURNAL = {Proceedings of the American Mathematical Society},
	VOLUME = {123},
	YEAR = {1995},
	NUMBER = {2},
	PAGES = {465--470},
	ISSN = {0002-9939},
	MRCLASS = {35B10 (34G20 35L70)},
	MRNUMBER = {1234625},
	MRREVIEWER = {Pavel Krej\v{c}\'{\i}},
	DOI = {10.2307/2160903},
	URL = {https://doi-org.ezp.lib.cam.ac.uk/10.2307/2160903},
}

@article {peroidicwave2,
	AUTHOR = {Nov\'{a}k, B\v{r}etislav},
	TITLE = {Remark on periodic solutions of a linear wave equation in one
	dimension},
	JOURNAL = {Comment. Math. Univ. Carolinae},
	FJOURNAL = {Commentationes Mathematicae Universitatis Carolinae},
	VOLUME = {15},
	YEAR = {1974},
	PAGES = {513--519},
	ISSN = {0010-2628},
	MRCLASS = {35L05},
	MRNUMBER = {355354},
	MRREVIEWER = {R. L. Richardson},
}

@incollection {quantumparticletime,
	AUTHOR = {Kapitanski, Lev and Rodnianski, Igor},
	TITLE = {Does a quantum particle know the time?},
	BOOKTITLE = {Emerging applications of number theory ({M}inneapolis, {MN},
	1996)},
	SERIES = {IMA Vol. Math. Appl.},
	VOLUME = {109},
	PAGES = {355--371},
	PUBLISHER = {Springer, New York},
	YEAR = {1999},
	MRCLASS = {81Q05},
	MRNUMBER = {1691540},
	MRREVIEWER = {Julio Espinoza Ortiz},
	DOI = {10.1007/978-1-4612-1544-8_14},
	URL = {https://doi-org.ezp.lib.cam.ac.uk/10.1007/978-1-4612-1544-8_14},
}

@article {MR1355946,
	AUTHOR = {Gramchev, Todor and Yoshino, Masafumi},
	TITLE = {W{KB} analysis to global solvability and hypoellipticity},
	JOURNAL = {Publ. Res. Inst. Math. Sci.},
	FJOURNAL = {Kyoto University. Research Institute for Mathematical
	Sciences. Publications},
	VOLUME = {31},
	YEAR = {1995},
	NUMBER = {3},
	PAGES = {443--464},
	ISSN = {0034-5318},
	MRCLASS = {35H05 (35B10 35M10)},
	MRNUMBER = {1355946},
	MRREVIEWER = {R. G. Airapetyan},
	DOI = {10.2977/prims/1195164049},
	URL = {https://doi-org.ezp.lib.cam.ac.uk/10.2977/prims/1195164049},
}

@article {MR3910065,
	AUTHOR = {Erdo\u{g}an, M. B. and Shakan, G.},
	TITLE = {Fractal solutions of dispersive partial differential equations
	on the torus},
	JOURNAL = {Selecta Math. (N.S.)},
	FJOURNAL = {Selecta Mathematica. New Series},
	VOLUME = {25},
	YEAR = {2019},
	NUMBER = {1},
	PAGES = {Paper No. 11, 26},
	ISSN = {1022-1824},
	MRCLASS = {35Q55 (11L03)},
	MRNUMBER = {3910065},
	DOI = {10.1007/s00029-019-0455-1},
	URL = {https://doi-org.ezp.lib.cam.ac.uk/10.1007/s00029-019-0455-1},
}

@article {MR1995764,
	AUTHOR = {Kristensen, Simon},
	TITLE = {Diophantine approximation and the solubility of the
	{S}chr\"{o}dinger equation},
	JOURNAL = {Phys. Lett. A},
	FJOURNAL = {Physics Letters. A},
	VOLUME = {314},
	YEAR = {2003},
	NUMBER = {1-2},
	PAGES = {15--18},
	ISSN = {0375-9601},
	MRCLASS = {81Q05 (11Z05)},
	MRNUMBER = {1995764},
	DOI = {10.1016/S0375-9601(03)00867-3},
	URL = {https://doi-org.ezp.lib.cam.ac.uk/10.1016/S0375-9601(03)00867-3},
}

@proceedings {MR2109001,
	TITLE = {Fractal geometry and applications: a jubilee of {B}enoit
	{M}andelbrot. {P}art 1},
	SERIES = {Proceedings of Symposia in Pure Mathematics},
	VOLUME = {72},
	BOOKTITLE = {Proceedings of a {S}pecial {S}ession of the {A}nnual {M}eeting
	of the {A}merican {M}athematical {S}ociety held in {S}an
	{D}iego, {CA}, {J}anuary 2002},
	EDITOR = {Lapidus, Michel L. and van Frankenhuijsen, Machiel},
	NOTE = {Analysis, number theory, and dynamical systems},
	PUBLISHER = {American Mathematical Society, Providence, RI},
	YEAR = {2004},
	PAGES = {xii+517},
	ISBN = {0-8218-3637-4},
	MRCLASS = {28-06 (00B30 28A80 37C45 60-06)},
	MRNUMBER = {2109001},
}

@article {moser,
	AUTHOR = {Moser, J\"{u}rgen},
	TITLE = {Is the solar system stable?},
	JOURNAL = {Math. Intelligencer},
	FJOURNAL = {The Mathematical Intelligencer},
	VOLUME = {1},
	YEAR = {1978},
	NUMBER = {2},
	PAGES = {65--71},
	ISSN = {0343-6993},
	MRCLASS = {70.00},
	MRNUMBER = {495314},
	MRREVIEWER = {Joseph Ford},
	DOI = {10.1007/BF03023062},
	URL = {https://doi-org.ezp.lib.cam.ac.uk/10.1007/BF03023062},
}

@article {kolmogorov,
	AUTHOR = {Kolmogorov, A. N.},
	TITLE = {On conservation of conditionally periodic motions for a small
	change in {H}amilton's function},
	JOURNAL = {Dokl. Akad. Nauk SSSR (N.S.)},
	FJOURNAL = {Doklady Akademii Nauk SSSR},
	VOLUME = {98},
	YEAR = {1954},
	PAGES = {527--530},
	ISSN = {0002-3264},
	MRCLASS = {36.0X},
	MRNUMBER = {0068687},
	MRREVIEWER = {Y. N. Dowker},
}

@article {arnold,
	AUTHOR = {Arnold, V. I.},
	TITLE = {Small denominators and problems of stability of motion in
	classical and celestial mechanics},
	JOURNAL = {Uspehi Mat. Nauk},
	FJOURNAL = {Akademiya Nauk SSSR i Moskovskoe Matematicheskoe Obshchestvo.
	Uspekhi Matematicheskikh Nauk},
	VOLUME = {18},
	YEAR = {1963},
	NUMBER = {6 (114)},
	PAGES = {91--192},
	ISSN = {0042-1316},
	MRCLASS = {85.57 (57.48)},
	MRNUMBER = {0170705},
	MRREVIEWER = {J. Moser},
}

@article {moser2,
	AUTHOR = {Moser, J.},
	TITLE = {On invariant curves of area-preserving mappings of an annulus},
	JOURNAL = {Nachr. Akad. Wiss. G\"{o}ttingen Math.-Phys. Kl. II},
	FJOURNAL = {Nachrichten der Akademie der Wissenschaften in G\"{o}ttingen. II.
	Mathematisch-Physikalische Klasse},
	VOLUME = {1962},
	YEAR = {1962},
	PAGES = {1--20},
	ISSN = {0065-5295},
	MRCLASS = {34.65 (57.48)},
	MRNUMBER = {0147741},
	MRREVIEWER = {H. Ehrmann},
}

@misc{NIST:DLMF,
	key = "{\relax DLMF}",
	title = "{\it NIST Digital Library of Mathematical Functions}",
	howpublished = "http://dlmf.nist.gov/, Release 1.1.1 of 2021-03-15",
	url = "http://dlmf.nist.gov/",
	note = "F.~W.~J. Olver, A.~B. {Olde Daalhuis}, D.~W. Lozier, B.~I. Schneider,
	R.~F. Boisvert, C.~W. Clark, B.~R. Miller, B.~V. Saunders,
	H.~S. Cohl, and M.~A. McClain, eds."}

@article{olver1961error,
	AUTHOR = {Olver, F. W. J.},
	TITLE = {Error bounds for the {L}iouville--{G}reen (or {WKB})
	approximation},
	JOURNAL = {Proc. Cambridge Philos. Soc.},
	VOLUME = {57},
	YEAR = {1961},
	PAGES = {790--810},
	MRCLASS = {34.50},
	MRNUMBER = {0130452},
	MRREVIEWER = {Walter Gautschi},
}

@article{dafermos2017interior,
	title={The interior of dynamical vacuum black holes I: The $C^0$-stability of the Kerr Cauchy horizon}, 
	author={Mihalis Dafermos and Jonathan Luk},
	year={2017},
	eprint={1710.01722},
	archivePrefix={arXiv}
}

@article {interiorschwarzschild,
	AUTHOR = {Fournodavlos, Grigorios and Sbierski, Jan},
	TITLE = {Generic {B}low-{U}p {R}esults for the {W}ave {E}quation in the
	{I}nterior of a {S}chwarzschild {B}lack {H}ole},
	JOURNAL = {Arch. Ration. Mech. Anal.},
	FJOURNAL = {Archive for Rational Mechanics and Analysis},
	VOLUME = {235},
	YEAR = {2020},
	NUMBER = {2},
	PAGES = {927--971},
	ISSN = {0003-9527},
	MRCLASS = {81T20 (35B44 35Q75 83C57)},
	MRNUMBER = {4064191},
	DOI = {10.1007/s00205-019-01434-0},
	URL = {https://doi-org.ezp.lib.cam.ac.uk/10.1007/s00205-019-01434-0},
}

@book{olver2014asymptotics,
	IDS = {olver},
	AUTHOR = {Olver, F. W. J.},
	TITLE = {Asymptotics and special functions},
	SERIES = {AKP Classics},
	PUBLISHER = {A K Peters, Ltd., Wellesley, MA},
	YEAR = {1997},
	ISBN = {1-56881-069-5},
	MRCLASS = {41-02 (33Cxx 41A60 65D20)},
	MRNUMBER = {1429619},
}

@article{reallsurvey,
	title={A possible failure of determinism in general relativity},
	author={Reall, Harvey},
	journal={Physics},
	volume={11},
	pages={6},
	year={2018},
	publisher={APS}
}

@article {penrosewaves,
	AUTHOR = {Penrose, Roger},
	TITLE = {A remarkable property of plane waves in general relativity},
	JOURNAL = {Rev. Modern Phys.},
	FJOURNAL = {Reviews of Modern Physics},
	VOLUME = {37},
	YEAR = {1965},
	PAGES = {215--220},
	ISSN = {0034-6861},
	MRCLASS = {83.53},
	MRNUMBER = {0189770},
	DOI = {10.1103/RevModPhys.37.215},
	URL = {https://doi-org.ezp.lib.cam.ac.uk/10.1103/RevModPhys.37.215},
}

@article {gravwaves,
	AUTHOR = {Luk, Jonathan and Rodnianski, Igor},
	TITLE = {Local propagation of impulsive gravitational waves},
	JOURNAL = {Comm. Pure Appl. Math.},
	FJOURNAL = {Communications on Pure and Applied Mathematics},
	VOLUME = {68},
	YEAR = {2015},
	NUMBER = {4},
	PAGES = {511--624},
	ISSN = {0010-3640},
	MRCLASS = {83C35 (35Q76)},
	MRNUMBER = {3318018},
	MRREVIEWER = {Francesco Sorge},
	DOI = {10.1002/cpa.21531},
	URL = {https://doi-org.ezp.lib.cam.ac.uk/10.1002/cpa.21531},
}

@article{gravwaves2,
	title={Nonlinear interaction of three impulsive gravitational waves I: main result and the geometric estimates}, 
	author={Jonathan Luk and Maxime Van de Moortel},
	year={2021},
	eprint={2101.08353},
	archivePrefix={arXiv}
}

@article {chr1,
 	AUTHOR = {Christodoulou, Demetrios},
 	TITLE = {On the global initial value problem and the issue of
 	singularities},
 	JOURNAL = {Classical Quantum Gravity},
 	FJOURNAL = {Classical and Quantum Gravity},
 	VOLUME = {16},
 	YEAR = {1999},
 	NUMBER = {12A},
 	PAGES = {A23--A35},
 	ISSN = {0264-9381},
 	MRCLASS = {83C05 (35Q75 58J45 83C75)},
 	MRNUMBER = {1728432},
 	MRREVIEWER = {Norbert Noutchegueme},
 	DOI = {10.1088/0264-9381/16/12A/302},
 	URL = {https://doi-org.ezp.lib.cam.ac.uk/10.1088/0264-9381/16/12A/302},
 }

@article {chr2,
	AUTHOR = {Christodoulou, Demetrios},
	TITLE = {The instability of naked singularities in the gravitational
	collapse of a scalar field},
	JOURNAL = {Ann. of Math. (2)},
	FJOURNAL = {Annals of Mathematics. Second Series},
	VOLUME = {149},
	YEAR = {1999},
	NUMBER = {1},
	PAGES = {183--217},
	ISSN = {0003-486X},
	MRCLASS = {83C75 (35L67 35L75 53C50 83C05)},
	MRNUMBER = {1680551},
	MRREVIEWER = {Alan D. Rendall},
	DOI = {10.2307/121023},
	URL = {https://doi-org.ezp.lib.cam.ac.uk/10.2307/121023},
}

@article{gannot_wrochna_2020, title={Propagation of singularities on AdS spacetimes for general boundary conditions and the holographic Hadamard condition
	},
journal={J. Inst. Math. Jussieu}, 
publisher={Cambridge University Press}, 
author={Gannot, Oran and Wrochna, Michael}, 
year={2020}, 
pages={1--61}
}

@article{Hollands_2020,
	doi = {10.1088/1361-6382/ab8052},
	url = {https://doi.org/10.1088%2F1361-6382%2Fab8052},
	year = 2020,
	publisher = {{IOP} Publishing},
	volume = {37},
	number = {11},
	pages = {115009},
	author = {Stefan Hollands and Robert M Wald and Jochen Zahn},
	title = {Quantum instability of the Cauchy horizon in Reissner--Nordstr\"om--de Sitter spacetime},
	journal = {Classical and Quantum Gravity},
}

@article {hollands2020quantum,
	AUTHOR = {Hollands, Stefan and Klein, Christiane and Zahn, Jochen},
	TITLE = {Quantum stress tensor at the {C}auchy horizon of the
	{R}eissner-{N}ordstr\"{o}m--de {S}itter spacetime},
	JOURNAL = {Phys. Rev. D},
	FJOURNAL = {Physical Review D},
	VOLUME = {102},
	YEAR = {2020},
	NUMBER = {8},
	PAGES = {085004, 10},
	ISSN = {2470-0010},
	MRCLASS = {81T20 (83C75)},
	MRNUMBER = {4178702},
	DOI = {10.1103/physrevd.102.085004},
	URL = {https://doi.org/10.1103/physrevd.102.085004},
}

@inproceedings{Penrose:1968ar,
	author = "Penrose, Roger",
	title = "{Structure of space-time}",
	booktitle = "{Battelle Rencontres}",
	pages = "121--235",
	year = "1968"
}

@Article{zbMATH02668286,
	Author = {R. {Baire}},
	Title = {{Sur les fonctions de variables r\'eelles.}},
	FJournal = {{Annali di Matematica Pura ed Applicata. III. Serie}},
	Journal = {{Annali di Mat. (3)}},
	ISSN = {0373-3114; 1618-1891/e},
	Volume = {3},
	Pages = {1--123},
	Year = {1899},
	MSC2010 = {26A15 26B05},
	Zbl = {30.0359.01}
}

@book {MR0093319,
	AUTHOR = {Landau, L. D. and Lifshitz, E. M.},
	TITLE = {Quantum mechanics: non-relativistic theory. {C}ourse of
	{T}heoretical {P}hysics, {V}ol. 3},
	NOTE = {Translated from Russian},
	SERIES = {Addison-Wesley Series in Advanced Physics},
	PUBLISHER = {Pergamon Press Ltd.},
	YEAR = {1958},
	PAGES = {xii+515},
	MRCLASS = {81.0X},
	MRNUMBER = {0093319},
}

@phdthesis{kehle2020diophantine,
	title={Diophantine Approximation as Cosmic Censor for AdS Black Holes},
	author={Kehle, Christoph},
	year={2020},
	school={University of Cambridge}
}

@article{sbierski_c01,
	title={On holonomy singularities in general relativity and the $C^{0,1}_{\mathrm{loc}}$-inextendibility of spacetimes}, 
	author={Jan Sbierski},
	year={2020},
	eprint={2007.12049},
	archivePrefix={arXiv}
}

@article {bhattacharjee2020mass,
	AUTHOR = {Bhattacharjee, Srijit and Kumar, Shailesh and Sarkar,
	Subhodeep},
	TITLE = {Mass inflation and strong cosmic censorship in a nonextreme
	{BTZ} black hole},
	JOURNAL = {Phys. Rev. D},
	FJOURNAL = {Physical Review D},
	VOLUME = {102},
	YEAR = {2020},
	NUMBER = {4},
	PAGES = {044030, 11},
	ISSN = {2470-0010},
	MRCLASS = {83C57 (83C75)},
	MRNUMBER = {4146594},
	DOI = {10.1103/physrevd.102.044030},
	URL = {https://doi.org/10.1103/physrevd.102.044030},
}

@book {teschl,
	AUTHOR = {Teschl, Gerald},
	TITLE = {Mathematical methods in quantum mechanics},
	SERIES = {Graduate Studies in Mathematics},
	VOLUME = {157},
	EDITION = {Second},
	NOTE = {With applications to Schr\"{o}dinger operators},
	PUBLISHER = {American Mathematical Society, Providence, RI},
	YEAR = {2014},
	PAGES = {xiv+358},
	ISBN = {978-1-4704-1704-8},
	MRCLASS = {81-02 (47N50 81Q10 81Q12 81U10)},
	MRNUMBER = {3243083},
	MRREVIEWER = {Rupert L. Frank},
	DOI = {10.1090/gsm/157},
	URL = {https://doi.org/10.1090/gsm/157},
}

@article {berti,
	AUTHOR = {Berti, Massimiliano},
	TITLE = {K{AM} theory for partial differential equations},
	JOURNAL = {Anal. Theory Appl.},
	FJOURNAL = {Analysis in Theory and Applications},
	VOLUME = {35},
	YEAR = {2019},
	NUMBER = {3},
	PAGES = {235--267},
	ISSN = {1672-4070},
	MRCLASS = {37K55 (35B15 35Q53 37C55 37J40 76B15)},
	MRNUMBER = {3939782},
	DOI = {10.4208/ata.oa-0013},
	URL = {https://doi.org/10.4208/ata.oa-0013},
}

@article {rademacher1,
	AUTHOR = {Radermacher, Katharina},
	TITLE = {Strong cosmic censorship in orthogonal {B}ianchi class {B}
	perfect fluids and vacuum models},
	JOURNAL = {Ann. Henri Poincar\'{e}},
	FJOURNAL = {Annales Henri Poincar\'{e}. A Journal of Theoretical and
	Mathematical Physics},
	VOLUME = {20},
	YEAR = {2019},
	NUMBER = {3},
	PAGES = {689--796},
	ISSN = {1424-0637},
	MRCLASS = {83C75},
	MRNUMBER = {3916961},
	MRREVIEWER = {Roberto Giamb\`o},
	DOI = {10.1007/s00023-018-00756-1},
	URL = {https://doi.org/10.1007/s00023-018-00756-1},
}

@article {bianchi,
	AUTHOR = {Ellis, G. F. R. and MacCallum, M. A. H.},
	TITLE = {A class of homogeneous cosmological models},
	JOURNAL = {Comm. Math. Phys.},
	FJOURNAL = {Communications in Mathematical Physics},
	VOLUME = {12},
	YEAR = {1969},
	PAGES = {108--141},
	ISSN = {0010-3616},
	MRCLASS = {83.53},
	MRNUMBER = {241088},
	URL = {http://projecteuclid.org/euclid.cmp/1103841345},
}

@article {ringstroem1,
	AUTHOR = {Ringstr\"{o}m, H.},
	TITLE = {Curvature blow up in {B}ianchi {VIII} and {IX} vacuum
	spacetimes},
	JOURNAL = {Classical Quantum Gravity},
	FJOURNAL = {Classical and Quantum Gravity},
	VOLUME = {17},
	YEAR = {2000},
	NUMBER = {4},
	PAGES = {713--731},
	ISSN = {0264-9381},
	MRCLASS = {83C75 (83C15)},
	MRNUMBER = {1744051},
	MRREVIEWER = {Alan D. Rendall},
	DOI = {10.1088/0264-9381/17/4/301},
	URL = {https://doi.org/10.1088/0264-9381/17/4/301},
}

@article {ringstroem2,
	AUTHOR = {Ringstr\"{o}m, Hans},
	TITLE = {The {B}ianchi {IX} attractor},
	JOURNAL = {Ann. Henri Poincar\'{e}},
	FJOURNAL = {Annales Henri Poincar\'{e}. A Journal of Theoretical and
	Mathematical Physics},
	VOLUME = {2},
	YEAR = {2001},
	NUMBER = {3},
	PAGES = {405--500},
	ISSN = {1424-0637},
	MRCLASS = {83C75 (37N20 83C20)},
	MRNUMBER = {1846852},
	MRREVIEWER = {Alan D. Rendall},
	DOI = {10.1007/PL00001041},
	URL = {https://doi.org/10.1007/PL00001041},
}

@article {bianchia,
	AUTHOR = {Wainwright, J. and Hsu, L.},
	TITLE = {A dynamical systems approach to {B}ianchi cosmologies:
	orthogonal models of class {$A$}},
	JOURNAL = {Classical Quantum Gravity},
	FJOURNAL = {Classical and Quantum Gravity},
	VOLUME = {6},
	YEAR = {1989},
	NUMBER = {10},
	PAGES = {1409--1431},
	ISSN = {0264-9381},
	MRCLASS = {83C20 (58F40 83C30 83C40)},
	MRNUMBER = {1014971},
	MRREVIEWER = {Peter Szekeres},
	URL = {http://stacks.iop.org/0264-9381/6/1409},
}

@article {bianchib,
	AUTHOR = {Hewitt, C. G. and Wainwright, J.},
	TITLE = {A dynamical systems approach to {B}ianchi cosmologies:
	orthogonal models of class {B}},
	JOURNAL = {Classical Quantum Gravity},
	FJOURNAL = {Classical and Quantum Gravity},
	VOLUME = {10},
	YEAR = {1993},
	NUMBER = {1},
	PAGES = {99--124},
	ISSN = {0264-9381},
	MRCLASS = {83C20 (83C15)},
	MRNUMBER = {1200591},
	MRREVIEWER = {A. H. Taub},
	URL = {http://stacks.iop.org/0264-9381/10/99},
}

@article {MR3486416,
	AUTHOR = {Wang, W.-M.},
	TITLE = {Energy supercritical nonlinear {S}chr\"{o}dinger equations:
	quasiperiodic solutions},
	JOURNAL = {Duke Math. J.},
	FJOURNAL = {Duke Mathematical Journal},
	VOLUME = {165},
	YEAR = {2016},
	NUMBER = {6},
	PAGES = {1129--1192},
	ISSN = {0012-7094},
	MRCLASS = {35Q55 (35B15 42A16 81Q05)},
	MRNUMBER = {3486416},
	MRREVIEWER = {Alessandro Maria Selvitella},
	DOI = {10.1215/00127094-3167597},
	URL = {https://doi.org/10.1215/00127094-3167597},
}

@article {waterwave,
	AUTHOR = {Baldi, Pietro and Berti, Massimiliano and Haus, Emanuele and
	Montalto, Riccardo},
	TITLE = {Time quasi-periodic gravity water waves in finite depth},
	JOURNAL = {Invent. Math.},
	FJOURNAL = {Inventiones Mathematicae},
	VOLUME = {214},
	YEAR = {2018},
	NUMBER = {2},
	PAGES = {739--911},
	ISSN = {0020-9910},
	MRCLASS = {76B15 (35S05 37C55 37K55)},
	MRNUMBER = {3867631},
	MRREVIEWER = {Maurizio Brocchini},
	DOI = {10.1007/s00222-018-0812-2},
	URL = {https://doi.org/10.1007/s00222-018-0812-2},
}

@article {wangnlw,
	AUTHOR = {Wang, Wei-Min},
	TITLE = {Quasi-periodic solutions for nonlinear wave equations},
	JOURNAL = {C. R. Math. Acad. Sci. Paris},
	FJOURNAL = {Comptes Rendus Math\'{e}matique. Acad\'{e}mie des Sciences. Paris},
	VOLUME = {353},
	YEAR = {2015},
	NUMBER = {7},
	PAGES = {601--604},
	ISSN = {1631-073X},
	MRCLASS = {35L71 (35B15)},
	MRNUMBER = {3352029},
	DOI = {10.1016/j.crma.2015.04.014},
	URL = {https://doi.org/10.1016/j.crma.2015.04.014},
}

@article {MR2967117,
	AUTHOR = {Berti, Massimiliano and Bolle, Philippe},
	TITLE = {Sobolev quasi-periodic solutions of multidimensional wave
	equations with a multiplicative potential},
	JOURNAL = {Nonlinearity},
	FJOURNAL = {Nonlinearity},
	VOLUME = {25},
	YEAR = {2012},
	NUMBER = {9},
	PAGES = {2579--2613},
	ISSN = {0951-7715},
	MRCLASS = {35L71 (35B15)},
	MRNUMBER = {2967117},
	MRREVIEWER = {Nickolai A. Lar\cprime kin},
	DOI = {10.1088/0951-7715/25/9/2579},
	URL = {https://doi.org/10.1088/0951-7715/25/9/2579},
}

@article{kehle2021strong,
	title={Strong Cosmic Censorship in the presence of matter: the decisive effect of horizon oscillations on the black hole interior geometry}, 
	author={Christoph Kehle and Maxime Van de Moortel},
	year={2021},
	eprint={2105.04604},
	archivePrefix={arXiv},
}

@article {ringa,
	AUTHOR = {Ringstr\"{o}m, Hans},
	TITLE = {Existence of an asymptotic velocity and implications for the
	asymptotic behavior in the direction of the singularity in
	{$T^3$}-{G}owdy},
	JOURNAL = {Comm. Pure Appl. Math.},
	FJOURNAL = {Communications on Pure and Applied Mathematics},
	VOLUME = {59},
	YEAR = {2006},
	NUMBER = {7},
	PAGES = {977--1041},
	ISSN = {0010-3640},
	MRCLASS = {83C75 (83C05 83C30)},
	MRNUMBER = {2222842},
	MRREVIEWER = {Alan D. Rendall},
	DOI = {10.1002/cpa.20105},
	URL = {https://doi.org/10.1002/cpa.20105},
}

@article {ringb,
	AUTHOR = {Ringstr\"{o}m, Hans},
	TITLE = {Strong cosmic censorship in {$T^3$}-{G}owdy spacetimes},
	JOURNAL = {Ann. of Math. (2)},
	FJOURNAL = {Annals of Mathematics. Second Series},
	VOLUME = {170},
	YEAR = {2009},
	NUMBER = {3},
	PAGES = {1181--1240},
	ISSN = {0003-486X},
	MRCLASS = {83C75 (35L81 53C50 53C80 83C05)},
	MRNUMBER = {2600872},
	MRREVIEWER = {Piotr T. Chru\'{s}ciel},
	DOI = {10.4007/annals.2009.170.1181},
	URL = {https://doi.org/10.4007/annals.2009.170.1181},
}
\end{document}